\let\origbar\bar 
\let\origdot\dot
\let\bar\origbar
\let\dot\origdot
\newcommand{\beq}{\begin{equation}}
\newcommand{\eeq}{\end{equation}}
\newcommand{\C}{\mathbb{C}}
\renewcommand{\N}{\mathbb{N}}
\renewcommand{\Q}{\mathbb{Q}}
\renewcommand{\R}{\mathbb{R}}
\renewcommand{\H}{\mathcal{H}}
\newcommand{\B}{\mathcal{B}}
\newcommand{\tr}{\mathrm{tr}}
\newcommand{\CE}{\mathcal{CE}}
\newcommand{\eps}{\varepsilon}
\renewcommand{\labelenumi}{(\alph{enumi})} 
\renewcommand{\theenumi}{(\alph{enumi})} 
\newcommand{\defin}{:=}
\newcommand{\disjcup}{\mathbin{\dot{\cup}}} 
\definecolor{myurlcolor}{rgb}{0,0,0.4}
\definecolor{mycitecolor}{rgb}{0,0.5,0}
\definecolor{myrefcolor}{rgb}{0.5,0,0}
\newtheorem{theo}{Theorem}[subsection]
\newtheorem{thm}[theo]{Theorem}
\newtheorem{prop}[theo]{Proposition} 
\newtheorem{lemma}[theo]{Lemma}
\newtheorem{lem}[theo]{Lemma}
\newtheorem{cor}[theo]{Corollary}
\newtheorem{conj}[theo]{Conjecture}
\newtheorem{defn}[theo]{Definition}
\theoremstyle{definition} 
\newtheorem{rem}[theo]{Remark}
\newtheorem{ex}[theo]{Example}
\newtheorem{prob}[theo]{Problem} 
\numberwithin{equation}{section}
\renewcommand{\emph}[1]{\textbf{#1}}    
\newcommand{\emphalt}[1]{\textit{#1}}   
\newcommand{\decprob}[3]{\begin{quote}
\textbf{Problem name:} \texttt{#1}\\
\textbf{Input data:} #2,\\
\textbf{To be decided:} #3?
\end{quote}}
\begin{document}

\title{A Combinatorial Approach to Nonlocality and Contextuality}

\author{Antonio Ac\'in}
\address{ICFO--Institut de Ci\`encies Fot\`oniques, E--08860 Castelldefels, Barcelona, Spain and ICREA--Instituci\'o Catalana de Recerca i Estudis Avan{\c{c}}ats, 08010 Barcelona, Spain}
\email{antonio.acin@icfo.es}

\author{Tobias Fritz}
\address{Perimeter Institute for Theoretical Physics, Waterloo, Ontario, Canada}
\email{tfritz@perimeterinstitute.ca}

\author{Anthony Leverrier}
\address{INRIA Rocquencourt, Domaine de Voluceau, B.P. 105, 78153 Le Chesnay Cedex, France}
\email{anthony.leverrier@inria.fr}

\author{Ana Bel{\'e}n Sainz}
\address{ICFO--Institut de Ci\`encies Fot\`oniques, E--08860 Castelldefels, Barcelona, Spain}
\email{belen.sainz@icfo.es}

\date{\today}	

\thanks{We thank Mateus Ara{\'u}jo, Ad{\'a}n Cabello, Ravi Kunjwal, Simone Severini, Alexander Wilce, Andreas Winter, Elie Wolfe and Gilles Z\'emor for comments and discussion, Andr{\'a}s Salamon for help with a reference, and Will Traves for help on \texttt{MathOverflow}. Research at Perimeter Institute is supported by the Government of Canada through Industry Canada and by the Province of Ontario through the Ministry of Economic Development and Innovation. A.A. was supported by the ERC CoG QITBOX, the Spanish projects FOQUS and DIQIP and the John Templeton Foundation. T.F. was supported by the John Templeton foundation. Part of this work was done while A.L. was at the Institute for Theoretical Physics, ETH Z{\"u}rich. A.L. was supported by the Swiss National Science Foundation though the National Centre of Competence in Research ``Quantum Science and Technology'', by the CNRS through the PEPS ICQ2013 TOCQ, and through the European Research Council (grant No.~258932). A.B.S. was supported by the ERC SG PERCENT and by the Spanish projects FIS2010-14830 and FPU:AP2009-1174 PhD grant.}

\begin{abstract}
So far, most of the literature on (quantum) contextuality and the Kochen--Specker theorem seems either to concern particular examples of contextuality, or be considered as quantum logic. Here, we develop a general formalism for contextuality scenarios based on the combinatorics of hypergraphs which significantly refines a similar recent approach by Cabello, Severini and Winter (CSW). In contrast to CSW, we explicitly include the normalization of probabilities, which gives us a much finer control over the various sets of probabilistic models like classical, quantum and generalized probabilistic. In particular, our framework specializes to (quantum) nonlocality in the case of Bell scenarios, which arise very naturally from a certain product of contextuality scenarios due to Foulis and Randall. In the spirit of CSW, we find close relationships to several graph invariants. The recently proposed Local Orthogonality principle turns out to be a special case of a general principle for contextuality scenarios related to the Shannon capacity of graphs. Our results imply that it is strictly dominated by a low level of the Navascu{\'e}s--Pironio--Ac{\'i}n hierarchy of semidefinite programs, which we also apply to contextuality scenarios.

We derive a wealth of results in our framework, many of these relating to quantum and supraquantum contextuality and nonlocality, and state numerous open problems. For example, we show that the set of quantum models on a contextuality scenario can in general not be characterized in terms of a graph invariant.

In terms of graph theory, our main result is this: there exist two graphs $G_1$ and $G_2$ with the properties
\begin{align*}
\alpha(G_1) &= \Theta(G_1), & \alpha(G_2) &= \vartheta(G_2), \\[6pt]
\Theta(G_1\boxtimes G_2) & > \Theta(G_1)\cdot \Theta(G_2),& \Theta(G_1 + G_2) & > \Theta(G_1) + \Theta(G_2).
\end{align*}
\end{abstract}

\maketitle

\clearpage
\thispagestyle{empty}
\newgeometry{top=1.5cm,bottom=1.5cm,left=2.5cm,right=2.5cm}   
\tableofcontents
\restoregeometry

\newpage
\section{\textbf{Introduction}}

Much effort has been devoted to understanding the manifold counterintuitive aspects of quantum theory. In particular, this applies to the phenomena known as quantum \emph{nonlocality} and quantum \emph{contextuality}. Bell's theorem~\cite{Bell} shows that no theory can make the same predictions as quantum theory, while jointly satisfying the properties of \emphalt{realism}, \emphalt{locality} and \emphalt{free will}. This is often abbreviated to the statement that quantum theory displays \emph{nonlocality}\footnote{This terminology can be confusing, since all known fundamental interactions are of a local nature~\cite{Haag}, in a different sense of the term~\cite{Zeh}.}. Similarly, the Kochen--Specker theorem~\cite{KS} states that quantum theory is at variance with any attempt at assigning deterministic values to all observables in a way which would be consistent with the functional relationships between these observables predicted by quantum theory. This impossibility is generally known as \emph{contextuality}, since it means that any potential `hidden' predetermined value of an observable will necessarily have to depend on the \emphalt{context} in which it is probed.

It is often stated that nonlocality is, at the mathematical level, a particular case of contextuality. However, it is rarely made explicit what exactly this means. Moreover, the study of contextuality so far often seems to have been concerned with particular examples of contextuality and `small' proofs of the Kochen--Specker theorem~\cite{simplest}, while a general theory has hardly been developed. Some notable exceptions are the following:

\begin{enumerate}
\item\label{ts} The study of \emph{test spaces} in quantum logic~\cite{CMW,WilceHB},
\item Spekkens' work on \emph{measurement and preparation contextuality}~\cite{Spek,LSW},
\item\label{gt} The \emph{graph-theoretic} approach of Cabello, Severini and Winter (CSW)~\cite{CSW},
\item\label{st} The \emph{sheaf-theoretic} approach pioneered by Abramsky and Brandenburger~\cite{AB}.
\end{enumerate}

What we set out to do here is to develop a hypergraph-theoretic approach in the spirit of~\ref{ts} which comprises~\ref{gt} and~\ref{st} as special cases (see Section~\ref{CSWtransfer} and Appendix~\ref{reltosheaf}).

Although the test spaces from~\ref{ts} are usually considered in the context of quantum logic and state spaces, they serve equally well for the study of contextuality, which is intimately related. This is our first main theme: a test space can be considered as a \emph{contextuality scenario}, and this is the term we will use. As in~\ref{gt} and similar to~\ref{st}, we take a contextuality scenario to be a specification of a collection of measurements which says how many outcomes each measurement has and which measurements have which outcomes in common. We show how the \emph{Foulis--Randall product} of test spaces is the `correct' product of two or more contextuality scenarios, in the sense that it describes parallel execution of these scenarios and naturally incorporates the no-signaling condition. In particular, the \emph{Bell scenarios} which describe nonlocality turn out to be given by Foulis--Randall products. We define a \emph{probabilistic model} as an assignment of a probability to each outcome in such a way that the probabilities for the outcomes of any given measurement sum up to $1$. In the Bell scenario case, these are precisely the well-known \emph{no-signaling boxes}. One of our main results is a combinatorial characterization of those probabilistic models that are extreme points of the convex polytope of all probabilistic models in a given scenario; see Theorem~\ref{extchar}.

Our second main theme is to relate, again inspired by~\ref{gt}, contextuality scenarios and their probabilistic models to graph theory and invariants of graphs like the \emph{independence number}, \emph{Lov{\'a}sz number} and \emph{fractional packing number}. Our approach differs significantly from CSW's in two important respects. First and most importantly, we explicitly take into account the normalization of probabilities from the very beginning. In contrast to this, CSW were working with \emphalt{subnormalized} probabilities, which seems necessary in order to derive their relations to graph-theoretic invariants, but leads to shortcomings such as upper bounds on quantum nonlocality which are not always tight, exemplified by a higher-than-quantum violation of the $I_{3322}$ inequality~\cite{CSW}. We show that such relations still exist, even if one retains the normalization of probability. This gives us much finer quantitative information and control about contextuality. Second, while CSW study the maximal values of contextuality \emph{inequalities} for classical, quantum, and general probabilistic models, we consider the \emph{sets} of classical, quantum, and general probabilistic models themselves as the primary objects. While these two points of view are equivalent by duality, we believe that the latter is a more natural thing to do, since the actual quantities gathered e.g.~from an experiment are outcome probabilities rather than coefficients of some inequality; satisfaction of a predetermined inequality is sufficient, but not necessary, for the measured statistics to arise from a classical or quantum model. Also, taking this dual approach based on sets of models rather than inequalities is exactly what enables us to derive our relations to graph invariants while retaining the normalization of probabilities---doing this on the level of inequalities does not seem possible. Our dual approach also results in the relations to graph invariants being opposite: classical models are characterized in terms of the fractional packing number, while probabilistic models satisfying consistent exclusivity are characterized by the independence number; see Figure~\ref{chain-inc}. The relations obtained in~\cite{CSW} on the level of inequalities are exactly opposite.

\begin{figure}[t!!!] 
\centerline{\hspace{0cm}$$
\xymatrix{\overset{\substack{\text{classical}\\ \text{models}}}{\mathcal{C}(H)} \ar@{}[r]|(.37){\subseteq}_(.37){\ref{Qprop}\ref{CsubQ}} \ar@{<->}[d]^{\ref{Cvsfpn}} & \overset{\substack{\text{quantum}\\ \text{models}}}{\mathcal{Q}(H)\stackrel[\ref{convNPA}]{=}{}\mathcal{Q}_\infty(H)} \ar@{}[r]|(.63){\subseteq\subseteq}_(.63){\ref{QinQ1}} \ar@{<->}[d]^{\ref{Qinv}}  & \mathcal{Q}_{1}(H) \ar@{}[r]|{\subseteq}_{\ref{miguelito}} \ar@{<->}[d]^{\ref{Q1vsLov}} & \CE^\infty(H) \ar@{}[r]|{\subseteq \subseteq}_{\ref{LOincs}} \ar@{<->}[d]^{\ref{LOchar}} & \CE^1(H) \ar@{}[r]|{\subseteq} \ar@{<->}[d]^{\ref{LOchar}} & \overset{\substack{\text{probabilistic}\\ \text{models}}}{\mathcal{G}(H)} \\
\underset{\substack{\text{fractional}\\ \text{packing number}}}{\alpha^*} \ar@{}[r]|(.37){\geq} & \underset{\substack{\text{no graph}\\ \text{invariant}}}{-}\ar@{}[r]|(.63){\geq} & \underset{\substack{\text{Lov\'asz}\\ \text{number}}}{\vartheta} \ar@{}[r]|{\geq} & \underset{\substack{\text{Shannon}\\ \text{capacity}}}{\Theta} \ar@{}[r]|{\geq} & \underset{\substack{\text{independence}\\ \text{number}}}{\alpha}}
$$}
\caption{Chain of inclusions between our various sets of probabilistic models on a contextuality scenario $H$ (first row) and the corresponding inequalities between graph invariants (second row). The two inclusions marked `$\subseteq\subseteq$' actually each contain an infinite hierarchy of sets $\mathcal{Q}_n(H)$ and $\CE^n(H)$ indexed by $n\in\N$. We suspect that all inclusions in the first row are strict for some $H$, including the ones in both infinite hierarchies. All theorem numbers reference the corresponding statements and proofs in the main text.}\label{chain-inc}
\end{figure}
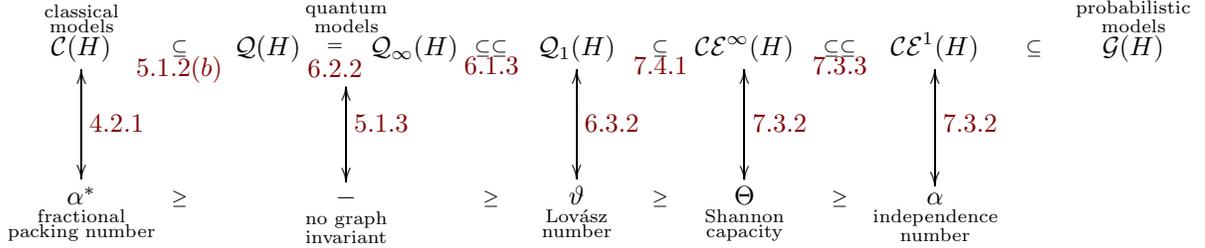


Figure~\ref{chain-inc} summarizes the sets of probabilistic models that we consider together with their relations to invariants of graphs. The classical $\mathcal{C}(H)$ corresponds to models which can be described in terms of noncontextual (deterministic) hidden variables on a contextuality scenario $H$. Similarly, $\mathcal{Q}(H)$ is the quantum set, defined in terms of quantum states on a Hilbert space and projective measurements. The $\mathcal{Q}_n(H)$ family comes from a hierarchy of semidefinite programs; this hierarchy characterizes $\mathcal{Q}$ in the sense that in the limit $n\to\infty$, we have $\mathcal{Q}_\infty(H)=\mathcal{Q}(H)$. The general probabilistic set $\mathcal{G}$ contains all models that are probabilistic models, that is probability assignments satisfying the normalization of probability for all measurements. Finally, the family of sets $\CE^n(H)$ arises from our third main theme.

This third main theme is the concept of \emph{Local Orthogonality (LO)} which was recently introduced in~\cite{LOfp} as an information-theoretic principle delimiting the set of quantum models $\mathcal{Q}(H)$ in Bell scenarios. We show how LO naturally arises in our formalism as a special case of a previously studied concept called \emph{Consistent Exclusivity (CE)}~\cite{Henson} or \emph{Specker's principle}~\cite{CabelloSP}. CE builds on the observation that compatibility of quantum observables is a binary property determined by \emphalt{pairwise} commutativity. It can be applied both on the single-copy level of a scenario, in which case we denote the principle as CE$^1$, and on the many-copy level when the same system is distributed among any number of parties, for which we write CE$^\infty$. This parallels the distinction between LO$^1$ and LO$^\infty$ that we made in~\cite{LOfp,LO2}. While CE$^1$ relates to the \emph{independence number} of a graph, CE$^\infty$ corresponds to the \emph{Shannon capacity} (in the sense of graph theory). This connection allows us to answer some open questions about LO$^\infty$. 
In particular, we show that LO$^\infty$, and more generally CE$^\infty$, does not characterize quantum models. In fact, CE$^\infty$ is satisfied for every probabilistic model which lies in $\mathcal{Q}_1$, a set of probabilistic models that contains the quantum set (usually strictly) and can be decided by means of a semidefinite program. Moreover, at least on some scenarios, there are probabilistic models which satisfy CE$^\infty$, but do not even lie in $\mathcal{Q}_1$. We also prove that the set of probabilistic models satisfying CE$^\infty$ is \emph{not convex}, which also implies that CE$^\infty$ can be \emph{activated}: there are pairs of probabilistic models both of which satisfy CE$^\infty$, although their product does not. These results relate to theorems on the Shannon capacity of graphs, some of which are new to this paper.

\subsection{Structure and contents of this paper}

We begin in Section~\ref{scenarios} by introducing test spaces as our notion of contextuality scenario. Later (in Section~\ref{products}), we will see that every Bell scenario is a contextuality scenario. We continue in Section~\ref{scenarios} by defining probabilistic models on a contextuality scenario; e.g.~for a Bell scenario, these are precisely the no-signaling boxes. Furthermore, we give an abstract characterization of extremal probabilistic models. We also introduce the \emph{non-orthogonality graph} of a contextuality scenario, whose graph-theoretical invariants are related to different sets of probabilistic models studied in the following sections.

In Section~\ref{products}, we consider products of contextuality scenarios corresponding to simultaneous measurements on spatially separated systems. We find the relevant product operation to be the Foulis--Randall product of test spaces. This product guarantees the no-signaling property for probabilistic models on the product scenario by, seemingly paradoxically, incorporating measurements \emphalt{with} communication. Figure~\ref{CHSH_sce} displays the CHSH scenario~\cite{CHSH} as a contextuality scenario.

In Section~\ref{cm}, we study classical models on contextuality scenarios. These are precisely those probabilistic models that can occur in a world described by noncontextual hidden variables. We further show how the weighted fractional packing number of the non-orthogonality graph detects the (non-)classicality of a probabilistic model.

In Section~\ref{qm}, we consider quantum models. We show that these cannot be characterized by a graph invariant of the non-orthogonality graph. On a product of two contextuality scenarios, the typical quantum models are those which arise from commuting observables for each component scenario. We show that all quantum models on a product scenario are in fact of this form, although the definition of product scenario does not directly impose this.

In Section~\ref{npahierarchy}, we show how to formulate a \emph{hierarchy of semidefinite programs characterizing quantum models} for contextuality scenarios. This can be regarded either as a generalization of the original hierarchy for quantum correlations in Bell scenarios~\cite{NPA0,NPA} or as a special case of the general hierarchy for noncommutative polynomial optimization~\cite{NPA2}. We characterize the first level of this hierarchy by the weighted Lov{\'a}sz number of the non-orthogonality graph and find a long list of equivalent reformulations.

In Section~\ref{LOsec}, we consider the principle of Local Orthogonality (LO) introduced in~\cite{LOfp} and show in which sense it arises from Consistent Exclusivity (CE)~\cite{Henson}. We explain how CE can be characterized in terms of the weighted independence number and the weighted Shannon capacity of the non-orthogonality graph. It turns out that the principle, even when applied on the level of distributed copies as CE$^\infty$, is weaker than $\mathcal{Q}_1$, the first level of the semidefinite hierarchy. We show that $\CE^\infty(H)$, the set of probabilistic models on a scenario $H$ satisfying CE$^\infty$, is in general not convex, and also that activation is possible: there are scenarios $H_A$ and $H_B$ and probabilistic models $p_A\in\CE^\infty(H_A)$ and $p_B\in\CE^\infty(H_B)$ such that $p_A\otimes p_B\not\in\CE^\infty(H_A\otimes H_B)$. We also discuss a proposal for an extension of the consistent exclusivity principle based on the ideas of~\cite{Yan} and show that it recovers exactly $\mathcal{Q}_1$. Finally, we observe that if the non-orthogonality graph is a \emph{perfect graph}, which frequently happens, then every probabilistic model satisfying CE$^1$ is classical and no interesting contextuality is possible in the given scenario. The strong perfect graph theorem then implies that a scenario can display (quantum) contextuality only if it has a certain odd cycle or odd anti-cycle structure.

In Section~\ref{complexity}, we study the complexity of various decision problems on contextuality scenarios. Our `inverse sandwich conjecture'~\ref{invsandwich} is an undecidability statement whose proof would have significant repercussions in $C^*$-algebra theory and quantum logic.

In Section~\ref{examples}, we discuss some further examples of contextuality scenarios and the various sets of probabilistic models associated to them, including a prescription for translating graph-based scenarios with subnormalization of probabilities (as in the CSW approach) into our framework.

In Appendix~\ref{appcap}, we discuss the graph theory relevant for the main text. In particular, we introduce graph-theoretical invariants for both unweighted and  weighted graphs and discuss their properties.

In Appendix~\ref{uwmainconjs}, we reformulate the examples of activation of CE$^{\infty}$ in terms of graph theory and show that there is a pair of graphs $G_1$ and $G_2$ with $\alpha(G_1) = \Theta(G_1)$ and $\alpha(G_2) = \vartheta(G_2)$, but $\Theta(G_1\boxtimes G_2) > \Theta(G_1)\cdot \Theta(G_2)$ and $\Theta(G_1 + G_2) > \Theta(G_1) + \Theta(G_2)$.

In Appendix~\ref{multiproducts}, we introduce the notions of virtual edge and completion of a contextuality scenario, and turn to discuss the Foulis--Randall products of three or more contextuality scenarios. 

Finally, in Appendix~\ref{reltosheaf}, we discuss how our approach, based on hypergraphs in which the vertices represent measurement outcomes, relates to the one of Abramsky and Brandenburger~\cite{AB}, which is based on hypergraphs in which vertices represent observables. We explain in which sense the two approaches are equivalent.

\newpage
\section{\textbf{Contextuality scenarios and their probabilistic models}}
\label{scenarios}

\subsection{Motivation: the Kochen--Specker theorem}

Cabello \textit{et al.}~\cite{Cab1,Cab2} showed that one can find $18$ vectors in $\C^4$ labeling the vertices of Figure~\ref{CKSfig} such that the four vectors associated to each one of the $9$ edges form an orthonormal basis. Together with the observation that there is no consistent way to label the vertices by $\{0,1\}$ such that every edge contains exactly one vertex labeled by $1$, this is a proof of the Kochen--Specker theorem for the four-dimensional Hilbert space $\C^4$. 

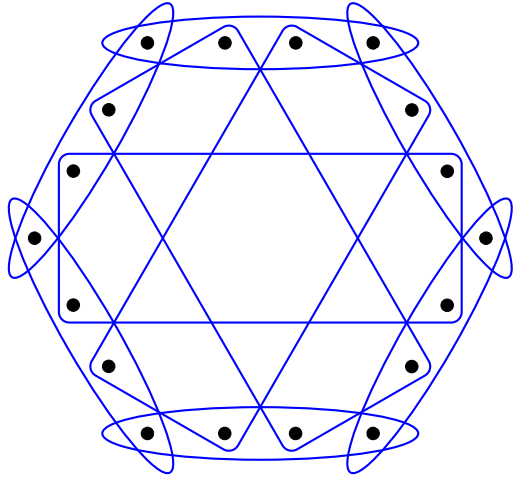
\begin{figure}
\begin{center}
\begin{tikzpicture}
\node[draw,shape=circle,fill,scale=.5] (a) at (0:3) {} ;
\node[draw,shape=circle,fill,scale=.5] (b) at (60:3) {} ;
\node[draw,shape=circle,fill,scale=.5] (c) at (120:3) {} ;
\node[draw,shape=circle,fill,scale=.5] (d) at (180:3) {} ;
\node[draw,shape=circle,fill,scale=.5] (e) at (240:3) {} ;
\node[draw,shape=circle,fill,scale=.5] (f) at (300:3) {} ;
\draw[white] (a) -- (b) node[pos=.333,draw,shape=circle,fill,scale=.5,black] (a1) {} node[pos=.5] (am) {} node[pos=.666,draw,shape=circle,fill,scale=.5,black] (a2) {} ;
\draw[white] (b) -- (c) node[pos=.333,draw,shape=circle,fill,scale=.5,black] (b1) {} node[pos=.5] (bm) {} node[pos=.666,draw,shape=circle,fill,scale=.5,black] (b2) {} ;
\draw[white] (c) -- (d) node[pos=.333,draw,shape=circle,fill,scale=.5,black] (c1) {} node[pos=.5] (cm) {} node[pos=.666,draw,shape=circle,fill,scale=.5,black] (c2) {} ;
\draw[white] (d) -- (e) node[pos=.333,draw,shape=circle,fill,scale=.5,black] (d1) {} node[pos=.5] (dm) {} node[pos=.666,draw,shape=circle,fill,scale=.5,black] (d2) {} ;
\draw[white] (e) -- (f) node[pos=.333,draw,shape=circle,fill,scale=.5,black] (e1) {} node[pos=.5] (em) {} node[pos=.666,draw,shape=circle,fill,scale=.5,black] (e2) {} ;
\draw[white] (f) -- (a) node[pos=.333,draw,shape=circle,fill,scale=.5,black] (f1) {} node[pos=.5] (fm) {} node[pos=.666,draw,shape=circle,fill,scale=.5,black] (f2) {} ;
\draw[thick,blue,rotate=120] (am) ellipse (2.1cm and .35cm) ;
\draw[thick,blue,rotate=180] (bm) ellipse (2.1cm and .35cm) ;
\draw[thick,blue,rotate=240] (cm) ellipse (2.1cm and .35cm) ;
\draw[thick,blue,rotate=300] (dm) ellipse (2.1cm and .35cm) ;
\draw[thick,blue,rotate=0] (em) ellipse (2.1cm and .35cm) ;
\draw[thick,blue,rotate=60] (fm) ellipse (2.1cm and .35cm) ;
\node (a3) at ($ (a2) + (10:.3)$) {} ;
\node (b3) at ($ (b2) + (70:.3)$) {} ;
\node (c3) at ($ (c2) + (130:.3)$) {} ;
\node (d3) at ($ (d2) + (190:.3)$) {} ;
\node (e3) at ($ (e2) + (250:.3)$) {} ;
\node (f3) at ($ (f2) + (310:.3)$) {} ;
\draw[thick,blue,rounded corners,rotate=150] (a3) rectangle (d3) ;
\draw[thick,blue,rounded corners,rotate=210] (b3) rectangle (e3) ;
\draw[thick,blue,rounded corners,rotate=270] (c3) rectangle (f3) ;
\end{tikzpicture}
\end{center}
\caption{The contextuality scenario $H_{\mathrm{KS}}$ proving the Kochen--Specker theorem \cites{Cab1,Cab2}. Each vertex represents a vector in $\C^4$, while each closed curve delimits a set of $4$ vertices. These sets are what we call `edges'.}
\label{CKSfig}
\end{figure}

Now what does the hypergraph of Figure~\ref{CKSfig} represent, operationally? This is what we would like to consider next.

\subsection{General definition}

Since each edge of Figure~\ref{CKSfig} stands for a basis in $\C^4$, we may think of an edge as representing a $4$-outcome measurement. Now every vertex occurs in two different such edges; in other words, some of the measurements share outcomes. The assumption of \emph{measurement noncontextuality}~\cite{Spek} means that any reasonable theory should represent such a shared outcome as a function from states to probabilities which does not depend on the particular measurement in which the outcome occurs.

Abstracting from this particular example to a general definition of \emph{contextuality scenario} means that we need to consider a mathematical structure containing a set of vertices, representing outcomes, and a collection of subsets of the vertices, representing measurements. Mathematically this is a \emph{hypergraph} $H$ with vertices $V(H)$ and edges $E(H)$. We therefore arrive at:

\begin{defn}
\label{defncs}
A \emph{contextuality scenario} is a hypergraph $H$ with set of vertices $V(H)$ and set of edges $E(H)\subseteq 2^{V(H)}$ such that $\bigcup_{e\in E(H)} e = V(H)$.
\end{defn}

In the following, we will use the terms \emph{vertex} and \emph{outcome} interchangeably with \emph{edge} over \emph{measurement}, respectively, while keeping in mind that the latter is the physical interpretation of the former, respectively.

The condition $\bigcup_{e\in E(H)} e=V(H)$ simply states that each outcome should occur in at least one measurement.

Typically, a contextuality scenario will satisfy the condition that if $e_1,e_2\in E(H)$ with $e_1\subseteq e_2$, then $e_1=e_2$, i.e.~there are no different edges one of which is contained in the others. All of our examples will have this property. The reason for this is that if we have different measurements $e_1$ and $e_2$ such that every outcome of $e_1$ is also an outcome of $e_2$, then the additional outcomes of $e_2$ necessarily have probability $0$ and can therefore be disregarded. In the literature on hypergraph theory, hypergraphs satisfying this condition that no edge is a subset of another are known as \emph{clutters}~\cite{EF70} or \emph{Sperner families}~\cite{Engel}. However, notably, the scenarios constructed in the proof of Theorem~\ref{extchar} do not have this property. So although one could include the condition that no edge should be contained in another one as an additional requirement for a hypergraph to be a contextuality scenario, this would complicate the proof of Theorem~\ref{extchar} without yielding simplifications of theorems or their proofs anywhere else. For this reason, we abstain from including this condition in Definition~\ref{defncs}.

In a typical scenario, the hypergraph $H$ is finite, meaning that $V(H)$ is a finite set, and this is the only case that we want to consider. It implies that $E(H)$ is finite as well.

Definition~\ref{defncs} or variants thereof have been considered before in the literature on contextuality and the Kochen--Specker theorem, e.g.~in~\cites{Tkadlec,PMMF}, and coincides with the notion of \emph{test space}~\cite{WilceHB} which had been introduced in~\cites{FR,RF} as \emph{(generalized) sample space}. In particular, the \emph{Greechie diagrams}~\cite{Greechie,ST} of quantum logic can all be regarded as contextuality scenarios.

On the other hand, Definition~\ref{defncs} differs from the formalisms proposed in~\cite{AB} and~\cites{CF,FC}. These works also provide a formalization of contextuality phenomena in terms of hypergraphs, but the vertices of the hypergraph represent observables rather than outcomes, while the edges stand for (maximal) jointly measurable sets of observables. See Appendix~\ref{reltosheaf} for a more detailed discussion.

\subsection{Non-orthogonality graphs} 

One of our main themes will be to relate properties of contextuality scenarios to some graph invariants. For this, we associate a specific graph to each scenario.

In the (hyper-)graph theory literature, one frequently considers the \emph{orthogonality graph} of a hypergraph (also referred to as its \emph{primal} or \emph{Gaifman graph}~\cite{gottlob}). The orthogonality graph of a hypergraph $H$ is obtained by declaring two vertices to be adjacent if and only if there exists an edge containing both. This coincides with the orthogonality relation present in the \emph{generalized sample spaces} of~\cite{FR}. Alternatively speaking, upon thinking of $H$ as an abstract simplicial complex with the edges as its facets, its orthogonality graph is the $1$-skeleton of this simplicial complex.

For the purpose of relating contextuality scenarios to the standard invariants of graph theory discussed in Appendix~\ref{appcap}, it turns out to be more convenient to consider the complement of the orthogonality graph. The drawback of this is that it may make some of our considerations sound more confusing, e.g.~the proof of Lemma~\ref{ortproduct}.

\begin{defn}[Non-orthogonality graph]
\label{defnno}
Let $H$ be a contextuality scenario. The \emph{non-orthogonality graph} $\mathrm{NO}(H)$ is the undirected graph with the same vertices as $H$ and adjacency relation
$$
u\sim v \:\Longleftrightarrow\: \not\exists e\in E(H) \textrm{ with } \{u,v\} \subseteq e .
$$
\end{defn}

We say that two different vertices $u$ and $v$ of $H$ are orthogonal, which we denote by $u \perp v$, if they are not adjacent in $\mathrm{NO}(H)$, i.e.~if they do belong to a common edge in $H$.

A possible interpretation of the non-orthogonality graph of a contextuality scenario is as a \emph{confusability graph} whose vertices correspond to events which can be confused with each other whenever they share an edge~\cite{Shannon}, that is because there is no measurement for which they appear as distinct outputs.

\subsection{Probabilistic models}

The definition of contextuality scenario is inherently operational: we think of the edges as all the possible measurements which can be conducted on a system. A consistent measurement statistic will assign a probability to each outcome, in such a way that the total probability for each measurement is $1$:

\begin{defn}
\label{defnpm}
Let $H$ be a contextuality scenario. A \emph{probabilistic model} on $H$ is an assignment $p:V(H)\to [0,1]$ of a probability $p(v)$ to each vertex $v\in V(H)$ such that
\beq
\label{probnorm}
\sum_{v\in e} p(v) = 1 \quad \forall e\in E(H).
\eeq
\end{defn}

It is important to keep in mind that each $p(v)$ is actually a \emphalt{conditional} probability: it stands for the probability of getting the outcome $v$ \emphalt{given that} a measurement $e$ containing $v$ is being conducted.

The set of all probabilistic models on $H$ is a convex subset of $\R^{V(H)}$, possibly empty, which we denote by $\mathcal{G}(H)$. This notation is supposed to suggest the reading `general probabilistic' in the sense of \emph{general probabilistic theories}~\cite{Barrett}. In the terminology of test spaces~\cite{Wilce2}, $\mathcal{G}(H)$ is the set of states over $H$; unfortunately, the term `probabilistic model' also exists in the test space formalism, but refers to a different concept.

In a concrete experiment, one will want to know whether the given measurement statistic, described by a probabilistic model $p$, is consistent with a certain theoretical framework. This is the main idea behind the various subsets of $\mathcal{G}(H)$ that we will define in the upcoming sections: a set $\mathcal{C}(H)$ of probabilistic models which can be explained in terms of an underlying classical state space, a set $\mathcal{Q}(H)$ of quantum models which can be explained using the mathematical formalism of quantum theory, and so on. Note that we are only concerned with the possibility of such a description: $p$ lies e.g.~in $\mathcal{Q}(H)$ as soon as there is \emphalt{some} way of explaining it using quantum theory. Whether the given measurement statistic is consistent with the \emphalt{particular} quantum-theoretic description predicted by a certain concrete theoretical model is an entirely different matter on which our formalism has no bearing.

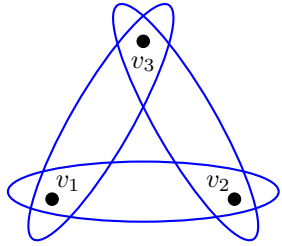
\begin{figure}
\begin{center}
\begin{tikzpicture}
\node[draw,shape=circle,fill,scale=.5] (a) at (90:1.4) {} ;
\node[draw,shape=circle,fill,scale=.5] (c) at (210:1.4) {} ;
\node[draw,shape=circle,fill,scale=.5] (e) at (330:1.4) {} ;
\node[below of=a,node distance=3mm] {$v_3$};
\node[above right of=c,node distance=3mm] {$v_1$};
\node[above left of=e,node distance=3mm] {$v_2$};
\draw[thick,blue,rotate=270] (0:.6) ellipse (.4cm and 1.8cm) ;
\draw[thick,blue,rotate=150] (0:.65) ellipse (.4cm and 1.8cm) ;
\draw[thick,blue,rotate=30] (0:.65) ellipse (.4cm and 1.8cm) ;
\end{tikzpicture}
\end{center}
\caption{The triangle scenario $\Delta$.}
\label{triscen}
\end{figure}

We now turn to some basic examples other than Figure~\ref{CKSfig}. Those mainly interested in nonlocality and Bell scenarios will become satisfied in Section~\ref{products}.

\begin{ex}
Figure~\ref{triscen} displays the triangle scenario $\Delta$. Its only probabilistic model is $p(v_1)=p(v_2)=p(v_3)=\tfrac{1}{2}$, since this is the only solution to the system of normalization equations
$$
p(v_1) + p(v_2) = 1,\qquad p(v_2) + p(v_3) = 1,\qquad p(v_1) + p(v_3) = 1.
$$
See~\cite{LSW} for more on this scenario and its unique probabilistic model.
\end{ex}

Contextuality scenarios having a unique probabilistic model, like $\Delta$ does, will be of particular importance in Theorem~\ref{extchar}.

\begin{ex}
Figure~\ref{empty} displays a contextuality scenario $H_0$ with $\mathcal{G}(H_0)=\emptyset$. Indeed, each of the outer triangles corresponds to a copy of the scenario $\Delta$ of Figure~\ref{triscen} and admits a unique probabilistic model where each vertex is assigned a probability $1/2$. This is incompatible with the central three-outcome measurement depicted in orange, since the existence of this measurement imposes that the probabilities associated with the three corresponding vertices should sum to $1$. 
\end{ex}

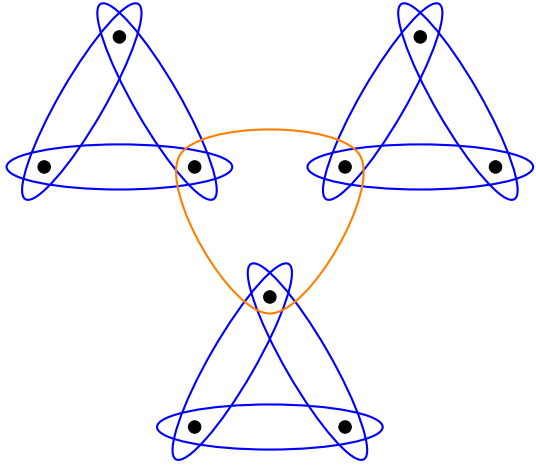
\begin{figure}
\begin{center}
\begin{tikzpicture}
\node[draw,shape=circle,fill,scale=.5] at (0,1.73) {} ;
\node[draw,shape=circle,fill,scale=.5] at (1,0) {} ;
\node[draw,shape=circle,fill,scale=.5] at (-1,0) {} ;
\draw[thick,blue,rotate around={90:(0,0)}] (0,0) ellipse (.3cm and 1.5cm) ;
\draw[thick,blue,rotate around={210:(.5,.87)}] (.5,.87) ellipse (.3cm and 1.5cm) ;
\draw[thick,blue,rotate around={330:(-.5,.87)}] (-.5,.87) ellipse (.3cm and 1.5cm) ;
\begin{scope}[shift={(4,0)}]
\node[draw,shape=circle,fill,scale=.5] at (0,1.73) {} ;
\node[draw,shape=circle,fill,scale=.5] at (1,0) {} ;
\node[draw,shape=circle,fill,scale=.5] at (-1,0) {} ;
\draw[thick,blue,rotate around={90:(0,0)}] (0,0) ellipse (.3cm and 1.5cm) ;
\draw[thick,blue,rotate around={210:(.5,.87)}] (.5,.87) ellipse (.3cm and 1.5cm) ;
\draw[thick,blue,rotate around={330:(-.5,.87)}] (-.5,.87) ellipse (.3cm and 1.5cm) ;
\end{scope}
\begin{scope}[shift={(2,-3.46)}]
\node[draw,shape=circle,fill,scale=.5] at (0,1.73) {} ;
\node[draw,shape=circle,fill,scale=.5] at (1,0) {} ;
\node[draw,shape=circle,fill,scale=.5] at (-1,0) {} ;
\draw[thick,blue,rotate around={90:(0,0)}] (0,0) ellipse (.3cm and 1.5cm) ;
\draw[thick,blue,rotate around={210:(.5,.87)}] (.5,.87) ellipse (.3cm and 1.5cm) ;
\draw[thick,blue,rotate around={330:(-.5,.87)}] (-.5,.87) ellipse (.3cm and 1.5cm) ;
\end{scope}
\draw[thick,orange] plot [smooth cycle,tension=.8] coordinates { (0.8,0.15) (3.2,0.15) (2,-1.95) } ;
\end{tikzpicture}
\end{center}
\caption{Example of a scenario $H_0$ without any probabilistic model: $\mathcal{G}(H_0)=\emptyset$.}
\label{empty}
\end{figure}

\begin{ex}
Figure~\ref{singleparty} displays the contextuality scenario defined by $k$ measurements with $m$ outcomes each, such that no two measurements share any outcome. Such scenarios are particularly relevant for describing `box' experiments where an observer can press one of $k$ buttons and record the corresponding measurement outcome. Since there is only one party, calling this a `Bell scenario' is a bit of a stretch, but it indeed arises as a degenerate example of a Bell scenario in which the number of parties is one.
\end{ex}

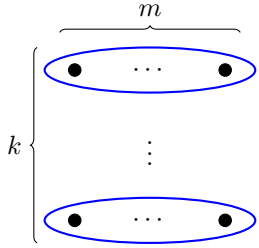
\begin{figure}
\begin{center}
\begin{tikzpicture}
\node[draw,shape=circle,fill,scale=.5] at (0,0) {} ;
\node[draw,shape=circle,fill,scale=.5] at (2,0) {} ;
\node[draw,shape=circle,fill,scale=.5] at (0,2) {} ;
\node[draw,shape=circle,fill,scale=.5] at (2,2) {} ;
\node at (1,0) {$\cdots$} ;
\node at (1,2) {$\cdots$} ;
\draw[thick,blue] (1,0) ellipse (1.4cm and .3cm) ;
\draw[thick,blue] (1,2) ellipse (1.4cm and .3cm) ;
\draw[decoration={brace},decorate] (-.2,2.5) -- (2.2,2.5) ;
\node at (1,2.8) {$m$} ;
\draw[decoration={brace},decorate] (-.5,-.3) -- (-.5,2.3) ;
\node at (-.8,1) {$k$} ;
\node at (1,1) {$\vdots$} ;
\end{tikzpicture}
\end{center}
\caption{The contextuality scenario $B_{1,k,m}$, a `Bell scenario' with only one party.}
\label{singleparty}
\end{figure}

Further examples will be discussed in Section~\ref{examples}.

For fixed $H$, the set $\mathcal{G}(H)\subseteq\R^{V(H)}$ is defined in terms of finitely many linear inequalities with rational coefficients. Therefore, it is a convex polytope with rational vertices. A natural question now is, which polytopes with rational vertices can arise in this way? This has been answered by Shultz:

\begin{thm}[Shultz~\cite{Shultz}]
\label{shultzthm}
Let $P\subseteq\R^d$ be a polytope with vertices in $\Q^d$. Then there exists a contextuality scenario $H_P$ such that $\mathcal{G}(H_P)$ is affinely isomorphic to $P$.
\end{thm}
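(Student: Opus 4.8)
The plan is to realize an arbitrary rational polytope $P\subseteq\R^d$ as $\mathcal{G}(H_P)$ for a suitable contextuality scenario $H_P$, working one defining inequality at a time and using gadget scenarios. First I would normalize the description of $P$. Since $P$ has vertices in $\Q^d$, it is cut out by finitely many affine inequalities $\langle a_i, x\rangle \le b_i$ with \emph{integer} coefficients $a_i\in\Z^d$, $b_i\in\Z$; by translating (replacing $x$ by $x+c$ for a suitable integer vector $c$) and rescaling (replacing $x$ by $\lambda x$ for a suitable positive integer $\lambda$) I may assume $P$ lies in the strictly positive orthant and, more importantly, that each coordinate function $x_j$ takes values in $[0,1]$ on $P$ — indeed after an affine change of coordinates of the ambient space (which does not affect the statement, since we only need affine isomorphism) we can box $P$ inside the unit cube $[0,1]^d$. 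The idea is then that each ambient coordinate $x_j$ will be \emph{implemented} by the probability $p(v_j)$ of a distinguished vertex $v_j$ of $H_P$, while auxiliary vertices carry the `slack'.

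The core building block is a gadget that, given vertices already constrained to $[0,1]$, forces a chosen rational affine inequality among their probabilities. The basic move is the following: if $u$ and $w$ are two vertices which are \emph{not} forced to be complementary but for which we want to impose $p(u)+p(w)\le 1$, we introduce a fresh vertex $s$ (a `slack' outcome) and the edge $\{u,w,s\}$; normalization then reads $p(u)+p(w)+p(s)=1$ with $p(s)\ge 0$, i.e.\ exactly $p(u)+p(w)\le 1$, and $p(s)$ is otherwise free. Dually, to impose an \emph{equality} $p(u)+p(w)=1$ one uses the edge $\{u,w\}$ directly. To get \emph{integer multiples} of a probability, say to realize a quantity $2p(v)$, one attaches to $v$ a fresh edge $\{v, v'\}$ forcing $p(v')=1-p(v)$, and another fresh edge $\{v', v'', v'''\}$, etc.; more slickly, one can first renormalize so that the coefficients $a_i$ have entries in $\{0,1,-1\}$ by introducing, for each coordinate $j$ and each needed `copy', a cloned vertex tied to $v_j$ by the triangle-type relations (using copies of the triangle scenario $\Delta$ from Figure \ref{triscen}, whose unique model pins a probability to $1/2$, together with complementation edges) so that $p$ of the clone equals $p(v_j)$. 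Concatenating such gadgets realizes any single inequality $\sum_j a_{ij} x_j \le b_i$ with integer data as a normalization-plus-nonnegativity constraint among the probabilities of the distinguished vertices and newly added slack/clone vertices. Finally, glue all the per-inequality gadgets along the shared distinguished vertices $v_1,\dots,v_d$ into one hypergraph $H_P$; the affine map $p\mapsto (p(v_1),\dots,p(v_d))$ sends $\mathcal{G}(H_P)$ onto $P$, and one checks it is injective on $\mathcal{G}(H_P)$ by verifying that the slack and clone probabilities are uniquely determined by $p(v_1),\dots,p(v_d)$ via the gadget equations (the clones are forced, each slack in a triangle-gadget is forced, and each inequality-slack is forced once the corresponding affine combination is fixed).

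The main obstacle I anticipate is \textbf{bookkeeping the reduction to $\{0,1,-1\}$ coefficients while keeping everything inside $[0,1]$} — one must be careful that, in the course of forming integer multiples $k\,p(v_j)$ via chains of complementation edges, the intermediate probabilities never need to exceed $1$ or drop below $0$, since a probabilistic model only lives in $[0,1]^{V(H)}$; this forces a judicious choice of the initial rescaling $\lambda$ and possibly splitting a large coefficient $k$ as a sum of $\pm1$'s distributed over several independent slack vertices rather than a single long chain. A secondary subtlety is handling inequalities whose active coefficients are \emph{negative}: $-p(u)\le b-\cdots$ is rewritten using $p(u') = 1-p(u)$ via a complementation edge $\{u,u'\}$, turning it into a positive-coefficient inequality in $p(u')$, which the gadget above then handles. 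I would also double-check the two edge cases: when $P$ is lower-dimensional (some inequalities are tight everywhere on $P$, handled by using $2$-element edges rather than $3$-element slack edges) and when $P$ is a single point or empty — the triangle scenario $\Delta$ and the scenario $H_0$ of Figure \ref{empty} already show these occur, so the construction degrades gracefully. Once the gadget dictionary and the rescaling bound are pinned down, surjectivity and injectivity of the coordinate projection are routine, and affineness is immediate from construction.
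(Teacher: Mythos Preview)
The paper does not actually prove this theorem; it is stated as a result of Shultz with a citation to~\cite{Shultz} and no argument is given. The only hint the paper provides about the nature of Shultz's construction is a later remark (Section~\ref{perfectionsection}) that ``the proof of Shultz's Theorem~\ref{shultzthm} uses similar `forcing' ideas'' to the example of Figure~\ref{stupidexample}. Your gadget-based construction is very much in that spirit, so there is nothing in the paper to compare against beyond this.

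On the proposal itself, one detail is off: the triangle scenario $\Delta$ does not clone a probability---its unique model pins each of its three vertices to $\tfrac12$, so invoking it ``so that $p$ of the clone equals $p(v_j)$'' does not do what you want. Cloning is achieved purely with two-element complementation edges: $\{v_j,w\}$ forces $p(w)=1-p(v_j)$, and then $\{w,v_j'\}$ forces $p(v_j')=p(v_j)$; no triangle is needed. With that correction, your outline (affinely rescale $P$ into the cube, encode coordinates as probabilities of distinguished vertices, add a slack vertex per inequality, clone to replicate coefficients, handle negative coefficients via complements) is a plausible route, and the main technical burden is precisely the $[0,1]$-boundedness bookkeeping you already flag. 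Since the paper defers entirely to Shultz, any correct completion of this sketch would simply be an alternative write-up of a result the paper takes as a black box.
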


Surprisingly, the combinatorial structure of some polytopes (i.e.~their face lattices) is such that they cannot be represented with rational coordinates only~\cite[Ex.~6.21]{Ziegler}. Hence, the requirement `with rational vertices' is a significant restriction on the combinatorial and geometric structure of those polytopes which arise as $\mathcal{G}(H)$ for some $H$.

\subsection{Characterizing extremal probabilistic models}

Since $\mathcal{G}(H)$ is a convex polytope, a natural question is: what are its extreme points? 

For instance, as we will discuss in Section~\ref{products}, for a Bell scenario $(n,k,m)$, the polytope $\mathcal{G}(B_{n,k,m})$ is the standard no-signaling polytope and hence its extreme points are the extremal no-signaling boxes. In the particular case of the CHSH scenario $B_{2,2,2}$, these extreme points are the $16$ deterministic boxes together with the $8$ variants of the PR-box~\cite{Barrett05}.

In this subsection, we would like to give an abstract characterization of these extremal models which applies to every contextuality scenario whatsoever, including all Bell scenarios.

\begin{defn}
Let $H$ be a contextuality scenario and $W\subseteq V(H)$. The \emph{subscenario induced by W} is the hypergraph $H_W$ with
$$
V(H_W) \defin W ,\qquad E(H_W) \defin \left\{\, e\cap W \::\: e\in E(H) \, \right\}.
$$
\end{defn} 

In words: $H_W$ is constructed by dropping all vertices which do not belong to $W$ and restricting all edges accordingly. If there are two edges $e_1,e_2\in E(H)$ with $e_1\cap W = e_2\cap W$, then these define the same edge in $E(H_W)$. For instance, the triangle scenario $\Delta$ of Figure~\ref{triscen} is the subscenario of the 3-circular hypergraph $\Delta_3$ of Figure~\ref{3meas6out} induced by $\{v_1, v_2, v_3\}$.

Intuitively, $H_W$ is the same scenario as $H$, except that all outcomes not in $W$ have been forbidden, i.e.~declared to have probability zero. In particular, every probabilistic model $p_W$ on $H_W$ extends to $H$ by setting
$$
p(v) \defin \begin{cases} p_W(v) & \textrm{if } v\in W, \\ 0 & \textrm{if } v\not\in W. \end{cases} 
$$
We say that $p$ is the \emph{extension} of $p_W$ from $H_W$ to $H$.

We have implicitly used induced subscenarios in~\cite{LOfp,LO2} when considering the (non-)orthogonality graphs of `possible events'.

\begin{lem}
\label{inducedtrans}
If $H_W$ is an induced subscenario of $H$ and $H_{W,W'}$ is an induced subscenario of $H_W$, then $H_{W,W'}$ is also an induced subscenario of $H$.
\end{lem}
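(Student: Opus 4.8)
The plan is to unwind the definitions and check that the two ways of constructing the final vertex/edge set agree. Let me set up notation: suppose $W' \subseteq W \subseteq V(H)$, so that $H_{W,W'}$ makes sense as an induced subscenario of $H_W$. I want to show $H_{W,W'} = H_{W'}$, the subscenario of $H$ induced by $W'$; this equality of hypergraphs is exactly the claim of the lemma, since $H_{W'}$ is by definition an induced subscenario of $H$.

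First I would compare vertex sets: $V(H_{W,W'}) = W' = V(H_{W'})$, which is immediate from the definition of induced subscenario applied twice. The content is entirely in the edge sets. By definition, $E(H_W) = \{\, e \cap W : e \in E(H)\,\}$, and then $E(H_{W,W'}) = \{\, f \cap W' : f \in E(H_W)\,\} = \{\, (e \cap W) \cap W' : e \in E(H)\,\}$. On the other hand, $E(H_{W'}) = \{\, e \cap W' : e \in E(H)\,\}$. So the whole proof reduces to the set-theoretic identity $(e \cap W) \cap W' = e \cap W'$, which holds because $W' \subseteq W$ implies $W \cap W' = W'$. Running this over all $e \in E(H)$ gives equality of the two edge sets (as sets of subsets of $W'$; note that the possible collapsing of distinct edges to the same intersection happens identically on both sides, so there is nothing to worry about there).

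Since there is essentially no obstacle here, the "main obstacle" is really just a matter of being careful about one subtlety: $H_{W,W'}$ being defined requires $W' \subseteq V(H_W) = W$, so one should note at the outset that the hypothesis of the lemma forces $W' \subseteq W \subseteq V(H)$, which is what makes $H_{W'}$ well-defined and the identity $W \cap W' = W'$ valid. With that observed, I would simply write: $V(H_{W,W'}) = W' = V(H_{W'})$, and for any $e \in E(H)$, $(e\cap W)\cap W' = e\cap W'$ since $W'\subseteq W$; hence $E(H_{W,W'}) = E(H_{W'})$, so $H_{W,W'} = H_{W'}$ is an induced subscenario of $H$. This is a two-line verification and I would present it as such rather than belaboring it.
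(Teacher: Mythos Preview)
Your proposal is correct and is essentially the same argument the paper has in mind; the paper's own proof consists of the single word ``Clear.'' Your explicit verification that $(e\cap W)\cap W' = e\cap W'$ for $W'\subseteq W$ is exactly the content behind that word.
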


\begin{proof}
Clear.
\end{proof}

Our main result in this section is this:

\begin{thm}
\label{extchar}
$p\in\mathcal{G}(H)$ is extremal if and only if it is the extension of $p_W\in\mathcal{G}(H_W)$ from some induced subscenario $H_W$ which has $p_W$ as its unique probabilistic model.
\end{thm}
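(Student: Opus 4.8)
The plan is to prove both implications by exploiting the basic fact that a point $p$ of a polytope is extremal if and only if there is no nonzero vector $d$ with both $p+\epsilon d$ and $p-\epsilon d$ in the polytope for small $\epsilon>0$; equivalently, the only way to write $p$ as a convex combination $p=\tfrac12 q + \tfrac12 r$ with $q,r\in\mathcal{G}(H)$ is $q=r=p$. For the easy direction ($\Leftarrow$), suppose $p$ is the extension of the unique probabilistic model $p_W$ on some induced subscenario $H_W$, and write $p=\tfrac12 q+\tfrac12 r$ with $q,r\in\mathcal{G}(H)$. Since $p(v)=0$ for $v\notin W$ and $q,r\geq 0$, we get $q(v)=r(v)=0$ for all $v\notin W$. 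Hence the restrictions $q|_W$ and $r|_W$ satisfy the normalization equations of $H_W$ (the edge $e\cap W$ of $H_W$ gets the same total mass as the edge $e$ of $H$ once the forbidden vertices carry zero probability), so $q|_W, r|_W \in \mathcal{G}(H_W)$. By uniqueness, $q|_W=r|_W=p_W$, and therefore $q=r=p$. So $p$ is extremal.

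For the harder direction ($\Rightarrow$), let $p\in\mathcal{G}(H)$ be extremal and set $W \defin \{v\in V(H): p(v)>0\}$, the support of $p$. Then $p$ is the extension of its restriction $p_W \defin p|_W$, and $p_W\in\mathcal{G}(H_W)$ by the same bookkeeping as above; moreover $p_W$ has full support in $H_W$. I claim $p_W$ is the \emph{unique} probabilistic model on $H_W$. Suppose not: let $q_W\in\mathcal{G}(H_W)$ with $q_W\neq p_W$, and consider $d\defin q_W - p_W$, a nonzero vector on $V(H_W)$ that lies in the kernel of every normalization functional $e\mapsto\sum_{v\in e\cap W}(\cdot)$ (because both $p_W$ and $q_W$ satisfy $\sum=1$ on each edge, their difference satisfies $\sum=0$). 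Since $p_W(v)>0$ for every $v\in W$, there is $\epsilon>0$ with $p_W\pm\epsilon d\geq 0$, and these still satisfy all the normalization equations, so $p_W\pm\epsilon d\in\mathcal{G}(H_W)$. Extending by zero, $p\pm\epsilon d\in\mathcal{G}(H)$, contradicting extremality of $p$. Hence $p_W$ is the unique probabilistic model on $H_W$, as required.

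The main obstacle — and the reason the statement is nontrivial rather than a one-line observation — is the claim that restricting to the support $W$ and then to the induced subscenario $H_W$ produces a \emph{hypergraph that is again a legitimate contextuality scenario} on which the restricted point is a genuine probabilistic model; one must check that the edge set $E(H_W)=\{e\cap W : e\in E(H)\}$ still covers $V(H_W)=W$ (immediate, since every $v\in W$ lies in some $e$ and hence in $e\cap W$), and that collapsing several edges of $H$ to one edge $e\cap W$ of $H_W$ does not create an inconsistency (it does not, precisely because if $e_1\cap W = e_2\cap W$ then the two normalization equations $\sum_{v\in e_1}p(v)=1$ and $\sum_{v\in e_2}p(v)=1$ both reduce to the same equation $\sum_{v\in e_1\cap W}p_W(v)=1$ once the off-support vertices are dropped). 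This is exactly why Definition~\ref{defncs} was deliberately not required to produce clutters: the scenarios $H_W$ arising here may well have one edge contained in another, so admitting such hypergraphs from the start is what makes the argument go through cleanly. A secondary point to be careful about is to phrase the "$p+\epsilon d$" perturbation argument using the support so that positivity is preserved — on coordinates outside $W$ both $p$ and the perturbation vanish, and on coordinates inside $W$ the strict positivity of $p_W$ gives the needed slack.
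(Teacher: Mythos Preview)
Your proof is correct and takes a genuinely different route from the paper's. The paper argues recursively: if $\mathcal{G}(H)$ is not a single point, every extreme point lies on some facet, every facet is cut out by an equation $p(v)=0$ and hence is (the extension of) $\mathcal{G}(H_{V\setminus\{v\}})$, and one then repeats the argument on the smaller scenario, invoking Lemma~\ref{inducedtrans} for transitivity, until the polytope shrinks to a point. Your argument instead jumps directly to the support $W=\{v:p(v)>0\}$ and uses the standard perturbation criterion for extremality: strict positivity on $W$ gives room to move along any nonzero direction in the affine hull of $\mathcal{G}(H_W)$, so uniqueness of $p_W$ is forced.

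What each approach buys: your argument is shorter, handles both directions explicitly (the paper's proof is somewhat terse on the $\Leftarrow$ direction), and makes transparent that the relevant $W$ is exactly the support, which the paper states only as a remark after its proof. The paper's recursive argument, on the other hand, yields as a byproduct the stronger observation (noted just after the proof) that \emph{every} face of $\mathcal{G}(H)$, not only the extreme points, is of the form $\mathcal{G}(H_W)$ for some $W$; your direct argument does not obviously give this.
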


\begin{proof}
If $H$ has a unique probabilistic model, i.e.~if $\mathcal{G}(H)=\{p\}$, then there is nothing to prove, since $H$ is an induced subscenario of itself.

Otherwise, the extreme points of $\mathcal{G}(H)$ are precisely the extreme points of the facets of $\mathcal{G}(H)$. Since $\mathcal{G}(H)$ is defined by
$$
p(v) \geq 0 \;\;\forall v\in V(H) ,\qquad \sum_{v\in e} p(v) = 1\;\;\forall e\in E(H),
$$
for every facet of $\mathcal{G}(H)$ there exists some $v\in V$ such that the facet contains exactly those $p\in\mathcal{G}(H)$ with $p(v)=0$. We fix such a $v$ and set $W \defin V(H)\setminus\{v\}$, obtaining an induced subscenario $H_W$ containing all vertices but $v$. By construction, the extensions of all probabilistic models in $\mathcal{G}(H_W)$ constitute the facet of $\mathcal{G}(H)$ defined by the equation $p(v)=0$.

The assertion then follows by repeatedly applying the same construction to the induced subscenarios obtained in this way; Lemma~\ref{inducedtrans} guarantees that one has an induced subscenario of the original $H$ at each step. This recursion necessarily ends with a scenario which admits a unique probabilistic model, since the dimension of $\mathcal{G}(H)$ decreases by $1$ in each step.
\end{proof}

As the proof shows, a similar statement also holds for all faces of $\mathcal{G}(H)$: they all are of the form $\mathcal{G}(H_W)$ for some induced subscenario $H_W$.

The proof also shows that an extreme point $p\in\mathcal{G}(H)$ is uniquely determined by the set of vertices $W_p \defin \{ v\in V(H) \:|\: p(v)\neq 0\}$, which induces a subscenario $H_W$ with a unique probabilistic model corresponding to forgetting those vertices on which $p$ has zero probability. So it is an important problem to understand which contextuality scenarios, besides Figure~\ref{triscen}, have a unique probabilistic model.

\begin{cor}
Let $H$ be a contextuality scenario with $n=|V(H)|$ many vertices. Then $\mathcal{G}(H)$ has at most $n\choose{\lfloor n/2\rfloor}$ many extremal points.
\end{cor}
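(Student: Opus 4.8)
The plan is to combine Theorem~\ref{extchar} with a counting argument of Sperner type. By Theorem~\ref{extchar}, every extreme point $p$ of $\mathcal{G}(H)$ is uniquely determined by its support $W_p = \{v \in V(H) : p(v) \neq 0\}$, since $p$ is the (unique) probabilistic model on the induced subscenario $H_{W_p}$ and this model extends back to $H$ by zero-padding. Hence the map $p \mapsto W_p$ is an injection from the set of extreme points of $\mathcal{G}(H)$ into the power set $2^{V(H)}$, and it suffices to bound the size of its image.

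The key step is to show that the image forms an \emph{antichain} (Sperner family) under inclusion: if $p$ and $q$ are distinct extreme points with $W_p \subseteq W_q$, we derive a contradiction. First I would observe that $p$, viewed as a probabilistic model on $H_{W_q}$ by zero-padding on $W_q \setminus W_p$, is a legitimate probabilistic model on $H_{W_q}$ — its probabilities still sum to $1$ on every edge $e \cap W_q$, because the added vertices carry probability zero. But $q$ is by hypothesis the \emph{unique} probabilistic model on $H_{W_q}$, so $p = q$ as models on $H_{W_q}$, hence as models on $H$, contradicting $p \neq q$. Therefore $\{W_p : p \text{ extreme}\}$ is an antichain in $2^{V(H)}$.

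Finally, Sperner's theorem states that the largest antichain in the Boolean lattice on an $n$-element set has size $\binom{n}{\lfloor n/2 \rfloor}$. Applying this with $n = |V(H)|$ gives that $\mathcal{G}(H)$ has at most $\binom{n}{\lfloor n/2 \rfloor}$ extreme points, as claimed.

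I expect the only real point requiring care — and the step I would single out as the crux — is verifying the antichain property, specifically the claim that zero-padding a probabilistic model on a smaller support to a larger support stays within $\mathcal{G}$. This is essentially the content of the ``extension'' construction already discussed before Theorem~\ref{extchar} (a probabilistic model on $H_W$ extends to one on $H$ by setting $p(v) = 0$ off $W$), so once one unwinds the definition of induced subscenario it is routine; the rest is a direct invocation of Theorem~\ref{extchar} and Sperner's theorem. One should note in passing that the argument does not need $\mathcal{G}(H) \neq \emptyset$: if $\mathcal{G}(H)$ is empty the bound holds vacuously.
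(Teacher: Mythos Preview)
Your proposal is correct and follows essentially the same approach as the paper: map each extreme point to its support $W_p$, observe that these supports form an antichain, and invoke Sperner's theorem. The paper's proof merely asserts that ``these sets are mutually non-contained'' without justification, whereas you spell out the antichain argument explicitly via the uniqueness of the probabilistic model on $H_{W_q}$; this is exactly the detail the paper omits, and your verification is sound.
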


\begin{proof}
Every extreme point $p\in\mathcal{G}(H)$ is uniquely determined by its associated set $W_p$ from the previous paragraph. These sets are mutually non-contained subsets of $V(H)$, of which there can be at most $n\choose{\lfloor n/2\rfloor}$ many by Sperner's theorem~\cite{Anderson}.
\end{proof}

This is a very crude upper bound and we do not know whether there are arbitrarily large scenarios for which it is tight. Oddly enough, this bound coincides precisely with the maximal number of edges in a contextuality scenario satisfying the previously discussed non-degeneracy requirement $e_1\subseteq e_2 \:\Rightarrow\: e_1=e_2$, since these edges are also mutually non-contained subsets of $V(H)$.

The deterministic models of Definition~\ref{cmdef} are a special class of extremal probabilistic model as follows. Clearly, every deterministic model is an extreme point of $\mathcal{G}(H)$. In terms of Theorem~\ref{extchar}, $p$ is deterministic if and only if each measurement in the associated $H_W$ has only one outcome, i.e.~if every vertex in $H_W$ is its own singleton edge. Those extreme points which are not deterministic are the \emph{maximally contextual} models in the scenario $H$.

\newpage
\section{\textbf{Products of contextuality scenarios and the no-signaling property}}
\label{products}

\subsection{Products of two scenarios}
\label{FRproducts}

Imagine two spatially separated or spacelike separated agents Alice and Bob. Alice is assumed to operate on a contextuality scenario $H_A$, while Bob is taken to operate on a contextuality scenario $H_B$. As usual in the literature on nonlocality, we always refer to these two agents as \emph{parties}. Now the two parties can apply simultaneous measurements on their respective systems and will then obtain simultaneous outcomes. In general, the two systems may be correlated, which can lead to correlations between the outcome of Alice with the outcome of Bob, so that probabilities will have to be assigned to pairs consisting of an outcome for Alice and an outcome for Bob.

The question now is, can this kind of `product' situation itself be described by a single contextuality scenario? How do two contextuality scenarios combine into a joint one? This question was first answered by Foulis and Randall~\cite{FR80}, who noticed that the answer is nontrivial. When combining two scenarios into one, we speak of a \emph{product}.

Clearly, the set of outcomes of a product scenario should be the cartesian product of the sets of outcomes, so that a joint outcome simply is the same thing as an outcome of $H_A$ together with an outcome of $H_B$. Also, every edge on $H_A$ should combine with any edge on $H_B$ into a joint edge. So, naively, one would define the product scenario like this:

\begin{defn}
\label{dirprod}
Let $H_A$ and $H_B$ be contextuality scenarios. The \emph{direct product} is the scenario $H_A\times H_B$ with
$$
V(H_A\times H_B) = V(H_A)\times V(H_B),\qquad E(H_A\times H_B) = E(H_A)\times E(H_B).
$$
\end{defn}

Now, in any actually observed probabilistic model, Bob's outcome probabilities should not depend on Alice's choice of measurement and vice versa, which leads to the requirement that a probabilistic model should have the `no-signaling' property:

\begin{defn}
A probabilistic model $p\in\mathcal{G}(H_A\times H_B)$ is \emph{no-signaling} if
\begin{enumerate}
\item 
$$
\sum_{w\in e} p(v,w)=\sum_{w\in e'} p(v,w) \qquad \forall v\in V(H_A),\: e,e'\in E(H_B).
$$
\item 
$$
\sum_{v\in e} p(v,w)=\sum_{v\in e'} p(v,w) \qquad \forall w\in V(H_B),\: e,e'\in E(H_A).
$$
\end{enumerate}
\end{defn}

This coincides with~\cite[Defn.~8]{BL} and~\cite[Defn.~3.2]{BFRW}, although the terminology is different.

Now the obvious question is, is every $p\in\mathcal{G}(H_A\times H_B)$ no-signaling? Unfortunately, this is not the case; Figure~\ref{signalingex} provides the arguably simplest example. It displays a probabilistic model where all probabilities are $0$ or $1$, and Alice (vertical) learns with certainty which measurement was performed by Bob (horizontal). It is easy to come up with other examples for virtually any non-trivial scenarios $H_A$ and $H_B$. 

\begin{figure}
\begin{center}
\subfigure[The contextuality scenario $\textara{\ea}$.]{
\begin{tikzpicture}[scale=1.5]
\clip (-.5,-.5) rectangle (2.5,.5) ;
\foreach \x in {0,...,2} \node[draw,shape=circle,fill,scale=.5] (e) at (\x,0) {} ;
\draw[thick,blue] (.5,0) ellipse (.8cm and .3cm) ;
\draw[thick,blue] (1.5,0) ellipse (.8cm and .3cm) ;
\end{tikzpicture}}\hspace{.5cm}
\subfigure[The direct product $\textara{\ea}\times\textara{\ea}$ equipped with a (deterministic) probabilistic model.]{
\begin{tikzpicture}
\clip (-1,-.5) rectangle (3,2.5) ;
\foreach \x in {0,...,2} \foreach \y in {0,...,2} \node[draw,shape=circle,fill,scale=.5] (e) at (\x,\y) {} ;
\foreach \x in {0,1} \foreach \y in {0,1} \draw[rounded corners=.65cm,thick,blue] (\x-.3,\y-.4) rectangle (\x+1.5,\y+1.4) ;
\node at (0.25,0) {$0$} ;
\node at (1.25,0) {$0$} ;
\node at (2.25,0) {$1$} ;
\node at (0.25,1) {$1$} ;
\node at (1.25,1) {$0$} ;
\node at (2.25,1) {$0$} ;
\node at (0.25,2) {$0$} ;
\node at (1.25,2) {$0$} ;
\node at (2.25,2) {$1$} ;
\end{tikzpicture}}
\end{center}
\caption{A contextuality scenario $\textara{\ea}$ and a probabilistic model on $\textara{\ea}\times\textara{\ea}$ not satisfying the no-signaling condition.}
\label{signalingex}
\end{figure}
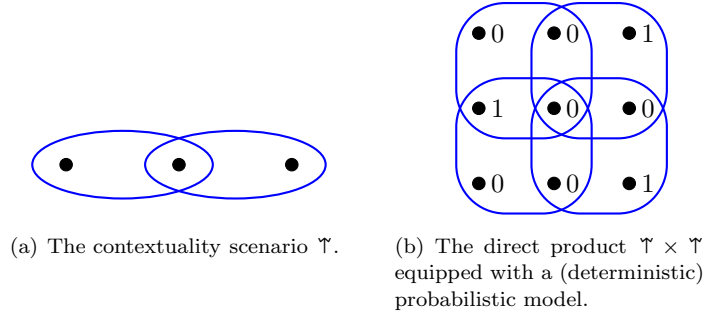

While one solution for this problem is to simply restrict to no-signaling models by fiat~\cite{BL}, a conceptually much more elegant solution is to use a different product of contextuality scenarios, which will have the property that the probabilistic models on this new product scenario will be precisely the no-signaling models on $H_A\times H_B$.

\begin{defn}[\cite{FR80}] \label{FR-prod}
The \emph{Foulis--Randall product} (\emph{FR-product}) is the scenario $H_A\otimes H_B$ with
$$
V(H_A\otimes H_B) = V(H_A)\times V(H_B),\qquad E(H_A\otimes H_B) = E_{A\rightarrow B} \cup E_{A\leftarrow B}
$$
where
\begin{align}
\label{FReq}
\begin{aligned}
E_{A\rightarrow B} & \defin \left\{ \bigcup_{a\in e_A} \{a\} \times f(a) \::\: e_A\in E_A,\: f:e_A\to E_B\right\} ,\\[4pt]
E_{A\leftarrow B} & \defin \left\{ \bigcup_{b\in e_B} f(b) \times \{b\} \::\: e_B\in E_B,\: f:e_B\to E_A\right\} .
\end{aligned}
\end{align}
\end{defn}

Intuitively, an element of $E_{A\rightarrow B}$ is the following: first, an edge $e_A\in E(H_A)$ representing a measurement conducted by Alice; second, a function $f:e_A\to E(H_B)$ which determines the subsequent measurement of Bob as a function of Alice's outcome. This function $f$ maps each vertex $a \in e_A$ to an edge $f(a) \in E_B$. This defines a joint measurement in which we think of Alice measuring first and communicating her outcome to Bob, who then chooses his measurement as a function of Alice's outcome. This is a feasible way to operate on the joint system and therefore should be considered as a measurement conductible on the joint system.\footnote{Whether these measurements should be considered physically realizable or not depends on the concrete physics of the scenario: in the case of space-like separation, they are mathematical idealizations without physical realizability. If Alice and Bob are not space-like separated, then these additional edges describe physically realizable measurements. Our mathematical formalism correctly describes both kinds of situations, although the physical interpretation of the edges in the product scenario is somewhat different.} Its outcomes are pairs $(a,b)$ with $a\in e_A$ and $b\in f(a)$, so that the set of all these outcomes is $\bigcup_{a\in e_A} \{a\} \times f(a)$. Similar remarks apply to the elements of $E_{B\rightarrow A}$, which describe joint measurements in which Bob measures first and then communicates his outcome to Alice, who chooses her measurement as a function of Bob's outcome.

In this way, an edge in $H_A\otimes H_B$ is an element of $E_{A\rightarrow B}$, $E_{A\leftarrow B}$, or of both sets. The latter joint measurements are precisely those of the form $e_A\times e_B$ from Definition~\ref{dirprod}, which can be interpreted as simultaneous measurements.

For example, Figure~\ref{CHSHc} displays the FR-product of~\ref{CHSHa} with~\ref{CHSHb}, which is another copy of~\ref{CHSHa}. $E_{A\rightarrow B}$ contains the edges of Figure~\ref{sim_meas} and~\ref{CHSHab}, while $E_{B\rightarrow A}$ consists of~\ref{sim_meas} and~\ref{CHSHba}.

\afterpage{
\newgeometry{top=1.9cm,bottom=-1.9cm,left=3cm,right=3cm} 
\begin{figure}
\definecolor{darkgreen}{rgb}{0,.5,0}
\begin{centering}
\subfigure[Alice's two binary measurements $B_{1,2,2}$ (see Figure~\ref{singleparty}).]{
\label{CHSHa}
\begin{tikzpicture}
\clip (0,.5) rectangle (5,5);
\foreach \x in {0,1} \foreach \a in {0,1}
{
	\node[draw,shape=circle,fill,scale=.5] (e) at (1,4-\a-2*\x) {} ;
	\node[above=0pt] at (e) {\tiny{$\a|\x$}} ;
}
\draw[thick,blue] (1,1.6) ellipse (.5cm and .9cm) ;
\draw[thick,blue] (1,3.6) ellipse (.5cm and .9cm) ;
\end{tikzpicture}}\hspace{.5cm}
\subfigure[Bob's two binary measurements $B_{1,2,2}$ (see Figure~\ref{singleparty}).]{
\label{CHSHb}
\begin{tikzpicture}
\clip (0,.5) rectangle (5,5);
\foreach \y in {0,1} \foreach \b in {0,1}
{
	\node[draw,shape=circle,fill,scale=.5] (e) at (\b+2*\y+1,4) {} ;
	\node[above=0pt] at (e) {\tiny{$\b|\y$}} ;
}
\draw[thick,blue] (1.5,4.1) ellipse (.9cm and .5cm) ;
\draw[thick,blue] (3.5,4.1) ellipse (.9cm and .5cm) ;
\end{tikzpicture}}\hspace{.5cm}
\subfigure[Simultaneous measurements.]{
\label{sim_meas}
\hspace{.5cm}
\begin{tikzpicture}
\clip (0,.5) rectangle (5,5);
\foreach \x in {0,1} \foreach \y in {0,1} \foreach \a in {0,1} \foreach \b in {0,1}
{
	\node[draw,shape=circle,fill,scale=.5] (e) at (\b+2*\y+1,4-\a-2*\x) {} ;
	\node[above=0pt] at (e) {\tiny{$\a\b|\x\y$}} ;
}
\foreach \x in {0,2} \foreach \y in {0,2} \draw[rounded corners,thick,blue] (\x+0.55,\y+0.68) rectangle (\x+2.45,\y+2.52) ;
\end{tikzpicture}\hspace{.5cm}}\hspace{.5cm}
\subfigure[Bob's measurement choice depends on Alice's outcome.]{\hspace{.5cm}
\label{CHSHab}
\begin{tikzpicture}
\clip (0,.5) rectangle (5,5);
\foreach \x in {0,1} \foreach \y in {0,1} \foreach \a in {0,1} \foreach \b in {0,1}
{
	\node[draw,shape=circle,fill,scale=.5] (e) at (\b+2*\y+1,4-\a-2*\x) {} ;
	\node[above=0pt] at (e) {\tiny{$\a\b|\x\y$}} ;
}
\foreach \y in {0,2} \draw[thick,red,rounded corners] (2.8,\y+0.8) -- (4.33,\y+0.8) -- (4.33,\y+1.4) -- (2.8,\y+1.4) -- (2.25,\y+2.4) -- (0.67,\y+2.4) -- (0.67,\y+1.8) -- (2.2,\y+1.8) -- cycle ;
\foreach \y in {0,2} \draw[thick,red,rounded corners] (2.8,\y+1.8) -- (4.33,\y+1.8) -- (4.33,\y+2.4) -- (2.75,\y+2.4) -- (2.2,\y+1.4) -- (0.67,\y+1.4) -- (0.67,\y+0.8) -- (2.2,\y+0.8) -- cycle ;
\end{tikzpicture}\hspace{.5cm}}\hspace{.5cm}
\subfigure[Alice's measurement choice depends on Bob's outcome.]{\hspace{.5cm}
\label{CHSHba}
\begin{tikzpicture}
\clip (0,.5) rectangle (5,5);
\foreach \x in {0,1} \foreach \y in {0,1} \foreach \a in {0,1} \foreach \b in {0,1}
{
	\node[draw,shape=circle,fill,scale=.5] (e) at (\b+2*\y+1,4-\a-2*\x) {} ;
	\node[above=0pt] at (e) {\tiny{$\a\b|\x\y$}} ;
}
\foreach \x in {0,2} \draw[thick,darkgreen,rounded corners] (\x+0.62,2.4) -- (\x+0.62,0.75) -- (\x+1.3,0.75) -- (\x+1.3,2.1) -- (\x+2.38,2.8) -- (\x+2.38,4.45) -- (\x+1.65,4.45) -- (\x+1.65,3) -- cycle ;
\foreach \x in {0,2} \draw[thick,darkgreen,rounded corners] (5-\x-0.62,2.4) -- (5-\x-0.62,0.75) -- (5-\x-1.3,0.75) -- (5-\x-1.3,2.1) -- (5-\x-2.38,2.8) -- (5-\x-2.38,4.45) -- (5-\x-1.65,4.45) -- (5-\x-1.65,3) -- cycle ;
\end{tikzpicture}\hspace{.5cm}}
\subfigure[Foulis--Randall product: the CHSH scenario $B_{2,2,2}=B_{1,2,2}\otimes B_{1,2,2}$.]{
\label{CHSHc}
\begin{tikzpicture}
\clip (0,.5) rectangle (5,5);
\foreach \x in {0,1} \foreach \y in {0,1} \foreach \a in {0,1} \foreach \b in {0,1}
{
	\node[draw,shape=circle,fill,scale=.5] (e) at (\b+2*\y+1,4-\a-2*\x) {} ;
	\node[above=0pt] at (e) {\tiny{$\a\b|\x\y$}} ;
}
\foreach \x in {0,2} \foreach \y in {0,2} \draw[rounded corners,thick,blue] (\x+0.55,\y+0.68) rectangle (\x+2.45,\y+2.52) ;
\foreach \y in {0,2} \draw[thick,red,rounded corners] (2.8,\y+0.8) -- (4.33,\y+0.8) -- (4.33,\y+1.4) -- (2.8,\y+1.4) -- (2.25,\y+2.4) -- (0.67,\y+2.4) -- (0.67,\y+1.8) -- (2.2,\y+1.8) -- cycle ;
\foreach \y in {0,2} \draw[thick,red,rounded corners] (2.8,\y+1.8) -- (4.33,\y+1.8) -- (4.33,\y+2.4) -- (2.75,\y+2.4) -- (2.2,\y+1.4) -- (0.67,\y+1.4) -- (0.67,\y+0.8) -- (2.2,\y+0.8) -- cycle ;
\foreach \x in {0,2} \draw[thick,darkgreen,rounded corners] (\x+0.62,2.4) -- (\x+0.62,0.75) -- (\x+1.3,0.75) -- (\x+1.3,2.1) -- (\x+2.38,2.8) -- (\x+2.38,4.45) -- (\x+1.65,4.45) -- (\x+1.65,3) -- cycle ;
\foreach \x in {0,2} \draw[thick,darkgreen,rounded corners] (5-\x-0.62,2.4) -- (5-\x-0.62,0.75) -- (5-\x-1.3,0.75) -- (5-\x-1.3,2.1) -- (5-\x-2.38,2.8) -- (5-\x-2.38,4.45) -- (5-\x-1.65,4.45) -- (5-\x-1.65,3) -- cycle ;
\end{tikzpicture}}
\subfigure[Alternative drawing of~\subref{CHSHc} after rearranging the vertices.]{
\label{CHSHd}
\begin{tikzpicture}
\clip (0,.5) rectangle (5,5);
\foreach \x in {0,1} \foreach \a in {0,1} \foreach \b in {0,1}
{
	\node[draw,shape=circle,fill,scale=.5] (e) at (1-\b+2*\x+1,3+\a-2*\x) {} ;
	\node[above=0pt] at (e) {\tiny{$\a\b|\x\x$}} ;
}
\foreach \x in {0,1} \foreach \a in {0,1} \foreach \b in {0,1}
{
	\node[draw,shape=circle,fill,scale=.5] (e) at (\b-2*\x+3,4-\a-2*\x) {} ;
	\pgfmathtruncatemacro{\y}{1 - \x}
	\node[above=0pt] at (e) {\tiny{$\a\b|\x\y$}} ;
}
\foreach \x in {0,2} \foreach \y in {0,2} \draw[rounded corners,thick,blue] (\x+0.55,\y+0.68) rectangle (\x+2.45,\y+2.52) ;
\foreach \y in {0,2} \draw[thick,red,rounded corners] (0.67,\y+0.8) rectangle (4.33,\y+1.4) ;
\foreach \y in {0,2} \draw[thick,red,rounded corners] (0.67,\y+1.8) rectangle (4.33,\y+2.4) ;
\foreach \x in {0,2} \draw[thick,darkgreen,rounded corners] (\x+0.62,0.75) rectangle (\x+1.38,4.45) ;
\foreach \x in {0,2} \draw[thick,darkgreen,rounded corners] (\x+1.62,0.75) rectangle (\x+2.38,4.45) ;
\end{tikzpicture}}
\end{centering}
\caption{Construction of the CHSH scenario $B_{2,2,2}$ as a Foulis--Randall product $B_{2,2,2} = B_{1,2,2}\otimes B_{1,2,2}$.}
\label{CHSH_sce}
\end{figure}
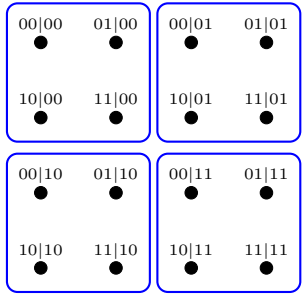
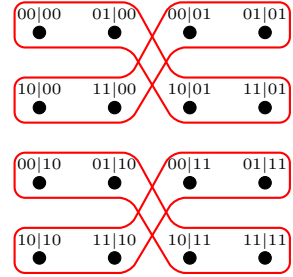
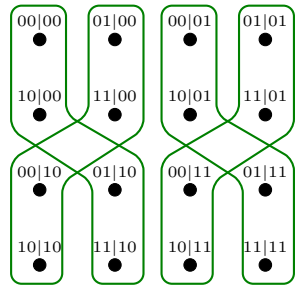
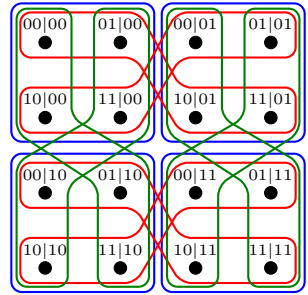
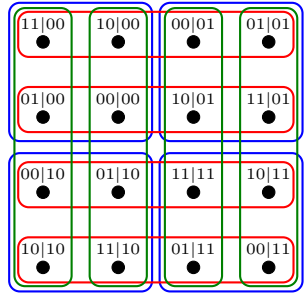
\clearpage
}
\restoregeometry

Since $H_A\otimes H_B$ contains the same vertices as $H_A\times H_B$ but more edges, we have an inclusion $\mathcal{G}(H_A\otimes H_B)\subseteq \mathcal{G}(H_A\times H_B)$. The following observation is due to Barnum, Fuchs, Renes and Wilce:

\begin{prop}[{\cite[Cor.~3.5]{BFRW}}]
\label{nosigprop}
$\mathcal{G}(H_A\otimes H_B)\subseteq \mathcal{G}(H_A\times H_B)$ is exactly the set of no-signaling models.
\end{prop}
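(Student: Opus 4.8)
Since the excerpt already records the inclusion $\mathcal{G}(H_A\otimes H_B)\subseteq\mathcal{G}(H_A\times H_B)$ (same vertices, more edges), it remains to prove two things: (i) every $p\in\mathcal{G}(H_A\otimes H_B)$, viewed as an element of $\mathcal{G}(H_A\times H_B)$, is no-signaling; and (ii) conversely, every no-signaling $p\in\mathcal{G}(H_A\times H_B)$ already satisfies the normalization constraint on all the extra edges of $H_A\otimes H_B$, hence lies in $\mathcal{G}(H_A\otimes H_B)$. I would handle these as the two halves of the proof, exploiting that the defining equations of $\mathcal{G}$ are linear, so that differences of normalization equations are again valid identities.

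For (i): fix $v\in V(H_A)$ and $e,e'\in E(H_B)$; the goal is $\sum_{w\in e}p(v,w)=\sum_{w\in e'}p(v,w)$. Using the covering axiom of Definition~\ref{defncs}, pick an edge $e_A\in E(H_A)$ with $v\in e_A$, and fix any auxiliary edge $e_0\in E(H_B)$. Consider the two functions $f,f':e_A\to E(H_B)$ that both send every $a\neq v$ to $e_0$ but with $f(v)=e$ and $f'(v)=e'$. The corresponding members $\bigcup_{a\in e_A}\{a\}\times f(a)$ and $\bigcup_{a\in e_A}\{a\}\times f'(a)$ of $E_{A\rightarrow B}$ each carry total probability $1$ because $p\in\mathcal{G}(H_A\otimes H_B)$; since the pieces $\{a\}\times f(a)$ are disjoint (distinct first coordinates), subtracting the two normalization equations cancels all contributions with $a\neq v$ and leaves precisely $\sum_{w\in e}p(v,w)-\sum_{w\in e'}p(v,w)=0$. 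The second family of no-signaling equations follows by the mirror-image argument using $E_{A\leftarrow B}$.

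For (ii): given a no-signaling $p\in\mathcal{G}(H_A\times H_B)$, define the marginal $p_A(a):=\sum_{w\in e_B}p(a,w)$; by no-signaling condition (a) this does not depend on the choice of $e_B\in E(H_B)$, and $\sum_{a\in e_A}p_A(a)=\sum_{(a,w)\in e_A\times e_B}p(a,w)=1$ since $e_A\times e_B\in E(H_A\times H_B)$. Now let $\bigcup_{a\in e_A}\{a\}\times f(a)$ be an arbitrary edge of $E_{A\rightarrow B}$: its total probability is $\sum_{a\in e_A}\sum_{w\in f(a)}p(a,w)=\sum_{a\in e_A}p_A(a)=1$. Edges of $E_{A\leftarrow B}$ are treated identically with the roles of $A$ and $B$ interchanged, and positivity $p(v,w)\in[0,1]$ is inherited from $p\in\mathcal{G}(H_A\times H_B)$. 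Hence $p\in\mathcal{G}(H_A\otimes H_B)$.

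The whole argument is bookkeeping: the one and only idea is, in part (i), to compare two edges of $E_{A\rightarrow B}$ that differ only in the $H_B$-leg attached to the vertex $v$. I therefore do not expect a real obstacle; the only points to watch are the harmless degenerate cases (a singleton edge $e_A=\{v\}$ still works, and the covering axiom guarantees $E(H_B)\neq\varnothing$ so that the functions $f$ exist).
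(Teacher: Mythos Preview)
Your proof is correct and follows essentially the same approach as the paper. The paper defers the explicit argument to the multipartite Proposition~\ref{multinosigprop}, whose proof is exactly your two-step bookkeeping: compare two Foulis--Randall edges that differ only in the $H_B$-leg at a single vertex to extract the no-signaling equation, and conversely use no-signaling to collapse the sum over an $E_{A\to B}$-edge to a product-edge normalization (the paper writes this last step as $\sum_{\vec v}\sum_{w\in f(\vec v)}p(\vec v,w)=\sum_{\vec v}\sum_{w\in e_k}p(\vec v,w)=1$ rather than introducing a marginal $p_A$, but that is purely notational).
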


We will give an explicit proof of a more general statement in Corollary~\ref{multinosigcor}.

It is in this sense that $H_A\otimes H_B$, in contrast to $H_A\times H_B$, automatically incorporates the no-signaling requirement. This is the reason why we regard it as the `right' product of contextuality scenarios.

Both the inclusion $\mathcal{G}(H_A\otimes H_B)\subseteq \mathcal{G}(H_A\times H_B)$ and Proposition~\ref{nosigprop} can intuitively be understood in terms of the duality between states and effects~\cite{DAriano}: restricting the probabilistic models to those satisfying no-signaling makes more measurements well-defined and in particular allows measurements in which the parties use signaling; on the other hand, allowing measurements in which the parties use signaling is possible only if the joint system itself, on which the measurements are conducted, does not have internal signaling. Compare Wilce~\cite{Wilce2}, who prefers the term \emphalt{influence-free} over \emphalt{no-signaling}.

One can also do all this for the case of \emphalt{unidirectional} no-signaling: defining a product of $H_A$ and $H_B$ by only using the $E_{A\to B}$ of~(\ref{FReq}) gives probabilistic models which are no-signaling from Bob to Alice. See~\cite{BFRW} for more details. The resulting product contextuality scenario may be interpreted as describing a temporal succession of operating on $H_B$ after having operated on $H_A$.

Given two contextuality scenarios $H_A$ and $H_B$ together with probabilistic models
$$
p_A\in \mathcal{G}(H_A), \qquad p_B\in \mathcal{G}(H_B) ,
$$
there should exist a probabilistic model $p_A\otimes p_B$ on $H_A\otimes H_B$ having the interpretation of placing physical systems behaving as $p_A$ and $p_B$ `side by side' so that measurements can be conducted on both in parallel, revealing no correlations between the two systems, but independent statistics. To this end, one should obviously define $p_A\otimes p_B$ as the mapping
$$
p_A\otimes p_B \::\: V(H_A)\times V(H_B) \longrightarrow [0,1],\qquad (v_A,v_B)\mapsto p_A(v_A) p_B(v_B) .
$$

\begin{prop}\label{PM_prod}
This $p_A\otimes p_B$ is a probabilistic model on $H_A\otimes H_B$.
\end{prop}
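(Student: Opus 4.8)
The claim is that the product assignment $p_A\otimes p_B\colon (v_A,v_B)\mapsto p_A(v_A)p_B(v_B)$ satisfies the normalization condition \eqref{probnorm} for every edge of $H_A\otimes H_B$. Since $p_A$ and $p_B$ take values in $[0,1]$, so does their product, so the only thing to verify is that $\sum_{(a,b)\in e} p_A(a)p_B(b) = 1$ for each $e\in E(H_A\otimes H_B)$. By the definition of the FR-product, $E(H_A\otimes H_B) = E_{A\to B}\cup E_{A\leftarrow B}$, and by the symmetry of the construction under swapping the roles of $A$ and $B$ it suffices to treat an edge $e\in E_{A\to B}$; the argument for $E_{A\leftarrow B}$ is identical with the roles interchanged.

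\textbf{Key step.} So fix $e\in E_{A\to B}$, meaning $e = \bigcup_{a\in e_A}\{a\}\times f(a)$ for some $e_A\in E(H_A)$ and some $f\colon e_A\to E(H_B)$. The crucial observation is that this union is \emph{disjoint}: the fibers $\{a\}\times f(a)$ over distinct $a\in e_A$ have disjoint first coordinates, hence are disjoint as subsets of $V(H_A)\times V(H_B)$. Therefore the sum over $e$ factors as an iterated sum,
\[
\sum_{(v_A,v_B)\in e} p_A(v_A)\,p_B(v_B) \;=\; \sum_{a\in e_A}\ \sum_{b\in f(a)} p_A(a)\,p_B(b) \;=\; \sum_{a\in e_A} p_A(a)\left(\sum_{b\in f(a)} p_B(b)\right).
\]
Now $f(a)$ is an edge of $H_B$, so by normalization of $p_B$ the inner sum equals $1$ for every $a$; what remains is $\sum_{a\in e_A} p_A(a)$, which equals $1$ because $e_A$ is an edge of $H_A$ and $p_A$ is a probabilistic model. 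This gives the desired value $1$.

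\textbf{Main obstacle.} There is no serious obstacle here; the proof is essentially bookkeeping. The only point that requires a moment's care — and the one I would state explicitly — is the disjointness of the union defining an element of $E_{A\to B}$, since it is this disjointness that lets the sum factor cleanly. (If the union were not disjoint, some pairs would be double-counted and the total could exceed $1$.) I would also remark that the analogous argument for $e\in E_{A\leftarrow B}$ uses the disjointness of $\bigcup_{b\in e_B} f(b)\times\{b\}$ over distinct $b\in e_B$, by disjointness of the second coordinates, together with normalization of $p_A$ on each $f(b)$ and of $p_B$ on $e_B$. Since every edge of $H_A\otimes H_B$ lies in $E_{A\to B}$ or $E_{A\leftarrow B}$, this completes the verification.
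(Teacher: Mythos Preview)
Your proof is correct and follows essentially the same approach as the paper's own proof: reduce by symmetry to an edge in $E_{A\to B}$, write it as $\bigcup_{a\in e_A}\{a\}\times f(a)$, and use normalization of $p_B$ on each $f(a)$ followed by normalization of $p_A$ on $e_A$. The only difference is that you make the disjointness of the union explicit, whereas the paper leaves it implicit in the passage to the iterated sum.
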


\begin{proof}
We need to prove that $\sum_{v \in e} p_A\otimes p_B (v) = 1$ for each edge $e \in E(H_A\otimes H_B)$. Without loss of generality, we can assume $e \in E_{A\rightarrow B}$, so that $e = \bigcup_{a\in e_A} \{a\} \times f(a)$ for some $e_A \in E_A$ and some $f: e_A \to E_B$, which maps each vertex in $e_A$ to an edge in $H_B$. Therefore,
\begin{equation*}
\begin{aligned}
\sum_{v \in e} p_A\otimes p_B (v) = \sum_{a \in e_A}  \sum_{b \in f(a)} p_A(a) p_B(b) = \sum_{a \in e_A} p_A(a) \sum_{b \in f(a)}  p_B(b) = \sum_{a\in e_A} p_A(a) \cdot 1 = 1,
\end{aligned}
\end{equation*}
since $p_B$ and $p_A$ are probabilistic models on $H_B$ and $H_A$, respectively. 
\end{proof}

We write $\mathcal{G}(H_A)\otimes \mathcal{G}(H_B)$ for the set of all probabilistic models of the form $p_A\otimes p_B$. We have just shown that $\mathcal{G}(H_A)\otimes \mathcal{G}(H_B)\subseteq\mathcal{G}(H_A\otimes H_B)$.

\begin{rem}
\label{Gproduct}
Often $\mathcal{G}(H_A\otimes H_B)$ is strictly larger than the convex hull of $\mathcal{G}(H_A)\otimes \mathcal{G}(H_B)$. For example for the Bell scenario $B_{2,2,2}=B_{1,2,2}\otimes B_{1,2,2}$ discussed below, the Popescu--Rohrlich box~\cite{PR}, which was originally discovered by Tsirelson~\cite{TsiHS}*{eq.~(1.11)}, is an element of $\mathcal{G}(B_{1,2,2}\otimes B_{1,2,2})$, but does not lie in the convex hull of $\mathcal{G}(B_{1,2,2})\otimes \mathcal{G}(B_{1,2,2})$.
\end{rem}

\subsection{Non-orthogonality graph of a product}

The Foulis--Randall product of contextuality scenario translates into the strong product $\boxtimes$ of the associated non-orthogonality graphs (see Appendix~\ref{appcap} for details on graph theoretical definitions).

\begin{lem}
\label{ortproduct}
Let $H_A$ and $H_B$ be contextuality scenarios. Then,
$$
\mathrm{NO}(H_A\otimes H_B) = \mathrm{NO}(H_A) \boxtimes \mathrm{NO}(H_B) .
$$
\end{lem}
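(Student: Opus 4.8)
The plan is to show directly that the two graphs $\mathrm{NO}(H_A \otimes H_B)$ and $\mathrm{NO}(H_A)\boxtimes \mathrm{NO}(H_B)$ have the same edge set, since they have the same vertex set $V(H_A)\times V(H_B)$ by construction. Recall that in the strong product, distinct vertices $(a,b)$ and $(a',b')$ are adjacent iff ($a=a'$ or $a\sim a'$ in $\mathrm{NO}(H_A)$) and ($b=b'$ or $b\sim b'$ in $\mathrm{NO}(H_B)$). Translating through the definition of the non-orthogonality graph, adjacency in $\mathrm{NO}(H_A)\boxtimes\mathrm{NO}(H_B)$ says: ($a=a'$ or $a$ and $a'$ lie in no common edge of $H_A$) and ($b=b'$ or $b$ and $b'$ lie in no common edge of $H_B$). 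On the other side, adjacency in $\mathrm{NO}(H_A\otimes H_B)$ says that $(a,b)$ and $(a',b')$ lie in no common edge of $H_A\otimes H_B$, i.e.\ in no edge of $E_{A\to B}$ and in no edge of $E_{A\leftarrow B}$.

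First I would prove the easy inclusion: if $(a,b)$ and $(a',b')$ are \emph{orthogonal} in $H_A\otimes H_B$ (share an edge), then they are non-adjacent in the strong product. Suppose they share an edge of the form $e=\bigcup_{x\in e_A}\{x\}\times f(x)\in E_{A\to B}$. Then $a,a'\in e_A$, so either $a=a'$ or $a\perp a'$ in $H_A$; moreover $b\in f(a)$ and $b'\in f(a')$. If $a=a'$ then $b,b'\in f(a)$, so $b=b'$ or $b\perp b'$ in $H_B$, and we are done. If $a\neq a'$, I need $b$ and $b'$ to be orthogonal or equal in $H_B$ — and here is the subtle point: they need \emph{not} be, because $f(a)$ and $f(a')$ can be totally unrelated edges. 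However, in that case I can \emph{modify the edge}: replace $f$ by any $f'$ with $f'(a)=f(a)$ but $f'(a')$ chosen to be an edge containing $b'$ that also... no, this does not obviously help either. The correct move is to observe that if $(a,b)$ and $(a',b')$ share \emph{some} edge with $a\neq a'$, we may instead exhibit an edge of $E_{A\to B}$ witnessing orthogonality only when $b=b'$ or $b\perp b'$; so to get the inclusion $\subseteq$ (in the non-adjacency direction) I argue contrapositively, which is cleaner, as follows.

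The main work is the contrapositive of both inclusions simultaneously: I claim $(a,b)$ and $(a',b')$ \emph{share an edge of $H_A\otimes H_B$} if and only if [$a=a'$ or $a\perp a'$] and [$b=b'$ or $b\perp b'$]. ($\Leftarrow$) Given the right-hand condition, split into cases. If $a\perp a'$, pick an edge $e_A\ni a,a'$ of $H_A$; if also $b=b'$, let $f:e_A\to E_B$ send every vertex of $e_A$ to a fixed edge containing $b$, and then $(a,b),(a',b')\in\bigcup_{x}\{x\}\times f(x)$. If instead $b\perp b'$, pick $e_B\ni b,b'$ and define $f:e_A\to E_B$ by $f(a)=f(a')=e_B$ and $f(x)=$ arbitrary edge of $H_B$ for other $x\in e_A$; again both points lie in the resulting edge of $E_{A\to B}$. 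The remaining subcase $a=a'$ (so we need $b=b'$ or $b\perp b'$): pick $e_B\ni b,b'$ (or any edge containing $b$ if $b=b'$) and use the symmetric construction from $E_{A\leftarrow B}$, taking $g:e_B\to E_A$ with $g(b)=g(b')=$ an edge containing $a$. ($\Rightarrow$) Conversely, suppose the two points share an edge $e\in E_{A\to B}$, say $e=\bigcup_{x\in e_A}\{x\}\times f(x)$. Then $a,a'\in e_A$, giving $a=a'$ or $a\perp a'$; and if $a=a'$ then $b,b'\in f(a)$ gives $b=b'$ or $b\perp b'$, while if $a\neq a'$ we must separately rule out that $b,b'$ are both non-orthogonal and distinct — but this is exactly where I use that \emph{we are not required to witness orthogonality with this particular edge}: the statement to prove is the biconditional about the existence of \emph{some} common edge, and the ($\Leftarrow$) direction already shows that whenever $a\neq a'$ and $a\perp a'$ holds we \emph{can} find a common edge regardless of $b,b'$. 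Wait — that would make the strong product adjacency fail to be necessary, contradicting the claim. So in fact the resolution is that $(a,b),(a',b')$ sharing an edge does \emph{not} force $b=b'$ or $b\perp b'$ when $a\neq a'$, which means the naive biconditional is false and $\mathrm{NO}(H_A\otimes H_B)$ has \emph{fewer} edges than the strong product...

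Let me restate the plan honestly: the lemma as stated is correct, and the resolution of the apparent tension is that one must be careful about which vertices get forced. The clean approach is: (i) show $\mathrm{NO}(H_A\otimes H_B)\supseteq\mathrm{NO}(H_A)\boxtimes\mathrm{NO}(H_B)$ by the explicit edge constructions above, proving every pair orthogonal in the strong-product sense fails to be separated by $H_A\otimes H_B$; and (ii) show the reverse inclusion $\subseteq$ by checking that if $(a,b)$ and $(a',b')$ share an edge $e\in E_{A\to B}\cup E_{A\leftarrow B}$ then they are non-adjacent in the strong product. For (ii), the case $e\in E_{A\to B}$ with $a\neq a'$ does force, via the edge structure $e=\bigcup_x\{x\}\times f(x)$, that $b\in f(a)$ and $b'\in f(a')$ with $f(a),f(a')\in E_B$ — and crucially the strong-product non-adjacency only requires $a\perp a'$ OR $a=a'$ on the first coordinate AND the analogous disjunction on the second; since $a\perp a'$ already holds (both in $e_A$), the first conjunct is satisfied, and for the second I need $b=b'$ or $b\perp b'$, which indeed holds whenever $f(a)=f(a')$ but may require the observation that $b,b'$ both lie in $f(a)$ when... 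Here the genuine content, and the step I expect to be the main obstacle, is correctly handling the mixed case and reconciling it with the symmetric product structure — I expect the paper resolves it by noting that the edge $e$, being in $E_{A\to B}$, places $(a,b)$ and $(a',b')$ in a common edge, and then arguing that the strong-product adjacency condition is met because one does \emph{not} need both coordinates orthogonal, only the $\mathrm{OR}$-with-equality in each, and the first coordinate's disjunct is automatic. I would carry this out by a careful four-way case analysis on whether $a=a'$, $b=b'$, and which of $E_{A\to B}$, $E_{A\leftarrow B}$ the edge lies in, deferring to the fact (Definition~\ref{FR-prod}) that edges lying in both are exactly the direct-product edges $e_A\times e_B$. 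The hard part is purely bookkeeping: keeping the logical polarity straight through the passage to complements (which is precisely the confusion the authors warn about right before stating the lemma), and making sure the constructed witnessing functions $f$ and $g$ are well-defined when the relevant edges of $H_B$ or $H_A$ are not unique.
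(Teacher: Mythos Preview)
Your proposal has a genuine gap, and it is exactly the polarity confusion the authors warn about. You repeatedly write that non-adjacency of distinct $(a,b)$ and $(a',b')$ in $\mathrm{NO}(H_A)\boxtimes\mathrm{NO}(H_B)$ amounts to $[a=a'$ or $a\perp a']$ \emph{and} $[b=b'$ or $b\perp b']$. This is wrong. Adjacency in the strong product is $[a=a'$ or $a\sim a']$ \emph{and} $[b=b'$ or $b\sim b']$; taking the negation gives non-adjacency as $[a\neq a'$ and $a\not\sim a']$ \emph{or} $[b\neq b'$ and $b\not\sim b']$, i.e.\ $a\perp a'$ \emph{or} $b\perp b'$. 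The connective flips from AND to OR under De Morgan, and that single slip is what generates all your back-and-forth: your ``naive biconditional'' is indeed false, but the lemma is not the naive biconditional.

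With the correct OR condition the proof collapses to two easy steps, essentially the paper's. For $(\Rightarrow)$: if $(a,b),(a',b')$ share an edge in $E_{A\to B}$ then $a,a'\in e_A$, so either $a\perp a'$ or $a=a'$ (and in the latter case $b,b'\in f(a)$ gives $b\perp b'$); the $E_{A\leftarrow B}$ case is symmetric. Either way one disjunct holds and you are done---you never needed both. For $(\Leftarrow)$: if, say, $a\perp a'$, pick $e_A\ni a,a'$ and \emph{any} edges $e_B\ni b$, $e'_B\ni b'$ of $H_B$ (no relation between $b,b'$ required), set $f(a)=e_B$, $f(a')=e'_B$, and extend $f$ arbitrarily on $e_A$; the resulting edge $\bigcup_{x\in e_A}\{x\}\times f(x)\in E_{A\to B}$ contains both pairs. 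Your case analysis on $b=b'$ versus $b\perp b'$ was unnecessary because the OR hypothesis places no constraint on the second coordinate.
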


\begin{proof}
Clearly both sides are graphs having $V(H_A)\times V(H_B)$ as their set of vertices, so what needs to be shown is that the adjacency relations coincide.

We first prove that if $(u_A,u_B)\perp(v_A,v_B)$ in $\mathrm{NO}(H_A \otimes H_B)$, then these two vertices are also not adjacent in $\mathrm{NO}(H_A)\boxtimes \mathrm{NO}(H_B)$. The assumption means that there is an edge $e\in E(H_A\otimes H_B)$ which contains both $(u_A,u_B)$ and $(v_A,v_B$); this edge has one of the two forms of~(\ref{FReq}). If it is in $E_{A\to B}$, then $u_A,v_A\in e_A$, meaning that $u_A\perp v_A$. Similarly, if the edge is in $E_{A\leftarrow B}$, then $u_B\perp v_B$. The conclusion follows from either case.

For proving the opposite implication, we show that $(u_A,u_B)\perp(v_A,v_B)$ in $\mathrm{NO}(H_A) \boxtimes \mathrm{NO}(H_B)$ implies the same in $\mathrm{NO}(H_A\otimes H_B)$. The assumption means that $u_A\perp v_A$ or $u_B\perp v_B$; by symmetry, it is enough to consider the case $u_A\perp v_A$. Then, there exists some $e_A\in E(H_A)$ with $u_A,v_A\in e_A$. Now choose $e_B, e'_B\in E_B$ such that $u_B \in e_B$ and $v_B \in e'_B$, and some function $f:e_A\to E_B$ with $f(u_A) = e_B$ and $f(v_A) = e'_B$. Then
$$
\bigcup_{a\in e_A} \{a\} \times f(a)
$$
is an edge in $H_A\otimes H_B$ containing $(u_A,u_B)$ and $(v_A,v_B)$, which proves the claim. 
\end{proof}

\subsection{Products of more than two scenarios}
\label{higherprodsmain}

It is not difficult to check that the Foulis--Randall product `$\otimes$' is a commutative binary operation on contextuality scenarios. But now what about having more than two parties which operate in their respective scenarios simultaneously? How does this binary operation behave when applying it to three or more scenarios?

Given three scenarios $H_A$, $H_B$, $H_C$, we can first form the product $H_A\otimes H_B$, and then the product of this with $H_C$, which gives $(H_A\otimes H_B)\otimes H_C$. Alternatively, we may first form the product $H_B\otimes H_C$, and then the product $H_A\otimes (H_B\otimes H_C)$. One might hope that these two ways of taking the product result in the same scenario, but this is generally not the case:

\begin{prop}
\label{nonass}
There are scenarios for which $(H_A\otimes H_B)\otimes H_C\neq H_A\otimes (H_B\otimes H_C)$.
\end{prop}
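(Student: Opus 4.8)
The plan is to exhibit an explicit triple of small scenarios for which the two bracketings of the Foulis--Randall product differ, and the cleanest way to do this is to compare the edge sets combinatorially. First I would pick each of $H_A$, $H_B$, $H_C$ to be the simplest nontrivial scenario, namely $B_{1,2,2}$: one measurement with two outcomes (or, if an even simpler witness is wanted, two single-outcome measurements — but a two-outcome measurement gives more room for an edge of $(H_A\otimes H_B)\otimes H_C$ to fail to be an edge of $H_A\otimes(H_B\otimes H_C)$). The vertex set of both triple products is $V(H_A)\times V(H_B)\times V(H_C)$, so the statement $(H_A\otimes H_B)\otimes H_C\neq H_A\otimes(H_B\otimes H_C)$ reduces entirely to showing that the two edge sets are different.

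The key step is to unfold Definition~\ref{FR-prod} one level at a time. An edge of $(H_A\otimes H_B)\otimes H_C$ is obtained by taking an edge $E$ of $H_A\otimes H_B$ together with a function $g\colon E\to E_C$ (plus the symmetric family with roles of $C$ and $AB$ swapped, and the analogous further branching), where $E$ itself is built from an edge $e_A\in E_A$ and a function $f\colon e_A\to E_B$, or from an edge $e_B\in E_B$ and a function $e_B\to E_A$. The point is that here $g$ may depend on the \emph{full pair} $(a,b)$, whereas in $H_A\otimes(H_B\otimes H_C)$ the analogous data only lets the $C$-choice depend on a $B$-outcome (through an edge of $H_B\otimes H_C$) or lets the choice of $(B,C)$-edge depend on an $A$-outcome but with the internal structure of that $(B,C)$-edge itself constrained by the Foulis--Randall recipe at the $BC$-level. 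I would write down a concrete edge of $(H_A\otimes H_B)\otimes H_C$ in which Charlie's measurement genuinely depends on \emph{both} Alice's and Bob's outcomes in a way that cannot be rewritten as "Alice's outcome determines a $(B,C)$-edge" — for instance, fix $e_A$, fix $f$ sending both outcomes of $e_A$ to the same $e_B$, and choose $g$ so that $g(a,b)$ depends nontrivially on $b$ for one value of $a$ and is constant for the other — then verify by inspecting the two explicit edge-set descriptions that this set does not appear among the edges of $H_A\otimes(H_B\otimes H_C)$.

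The main obstacle, and the only place where care is really needed, is the bookkeeping: I have to enumerate (at least up to the relevant symmetry) the possible forms of an edge in $H_A\otimes(H_B\otimes H_C)$ and check that none of them equals the chosen witness set. Since $H_B\otimes H_C$ already has edges of two types ($E_{B\to C}$ and $E_{B\leftarrow C}$), an edge of $H_A\otimes(H_B\otimes H_C)$ of type $E_{A\to (BC)}$ is $\bigcup_{a\in e_A}\{a\}\times h(a)$ with each $h(a)$ an edge of $H_B\otimes H_C$, and of type $E_{A\leftarrow(BC)}$ it is $\bigcup_{(b,c)\in e_{BC}} f(b,c)\times\{(b,c)\}$; one then argues that the projection structure of the witness (how its fibers over $V(H_A)$, over $V(H_B)$, and over $V(H_A)\times V(H_B)$ look) is incompatible with every case. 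A useful sanity check is that the two triple products \emph{do} have the same probabilistic models, since by Corollary~\ref{multinosigcor} both equal the set of fully no-signaling models; thus the difference is genuinely at the level of the hypergraph and not visible to $\mathcal{G}(-)$, which is exactly what makes the non-associativity a statement worth isolating. I would also remark that one can keep the example as small as possible by choosing $H_A=H_B=H_C$ equal to the scenario $\textara{\ea}$ of Figure~\ref{signalingex}, mirroring how the signaling counterexample there was built, and that the full combinatorial verification is deferred to Appendix~\ref{multiproducts} where virtual edges and completions give a more systematic language for it.
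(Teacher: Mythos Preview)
Your overall strategy---exhibit a concrete edge in one bracketing and check it is absent from the other---is exactly right, and it is what the paper does. But the specific witness you sketch does not work. You take the $AB$-edge to be a \emph{product} edge (``fix $f$ sending both outcomes of $e_A$ to the same $e_B$''), so your candidate edge in $(H_A\otimes H_B)\otimes H_C$ is
\[
\bigcup_{(a,b)\in e_A\times e_B}\{(a,b)\}\times g(a,b).
\]
This set \emph{is} an edge of $H_A\otimes(H_B\otimes H_C)$: take $h(a)\defin\bigcup_{b\in e_B}\{b\}\times g(a,b)$, which for each $a$ lies in $E_{B\to C}\subseteq E(H_B\otimes H_C)$, and then $\bigcup_{a\in e_A}\{a\}\times h(a)$ is an $E_{A\to(BC)}$ edge reproducing your set exactly. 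The heuristic ``Charlie depends on both Alice's and Bob's outcomes'' is not enough, because in the other bracketing Alice can still pass her outcome to the $(BC)$-block, which then lets Charlie's choice depend on $a$ through the choice of $h(a)$. (Also, a side issue: $B_{1,2,2}$ has \emph{two} measurements, not one; if you literally used a scenario with a single edge, all Foulis--Randall products would collapse to direct products and both bracketings would trivially coincide.)

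What actually produces a discrepancy is a temporal ordering that the other bracketing cannot simulate. The paper takes $H_A$ to be a single binary measurement and $H_B,H_C$ to be the three-vertex scenario $\textara{\ea}$, and writes down an explicit edge in $H_A\otimes(H_B\otimes H_C)$ in which Alice measures, then Charlie chooses his measurement as a function of Alice's outcome, then Bob chooses his measurement as a function of Charlie's outcome. The argument that this edge is not in $(H_A\otimes H_B)\otimes H_C$ is short: for it to arise there, the first measurement must be on $H_A\otimes H_B$ or on $H_C$; the $C$-first option is ruled out because all three $c$-values occur, and the $(AB)$-first option would force $V(H_A)\times V(H_B)$ itself to be an edge of $H_A\otimes H_B$, which it is not. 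Your approach can be salvaged---for instance by starting from a \emph{non-product} edge of $H_A\otimes H_B$ (type $E_{A\leftarrow B}$ with $f$ non-constant) and then letting $g$ depend on $a$---but the witness you wrote down does not do the job.
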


In other words, the Foulis--Randall product `$\otimes$' is not associative! 

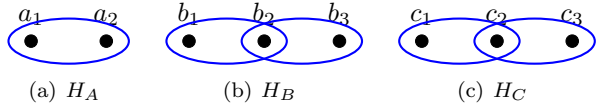
\begin{figure}
\subfigure[$H_A$]{
\begin{centering}
\begin{tikzpicture}
\node[draw,shape=circle,fill,scale=.5,label=$a_1$] at (0,0) {};
\node[draw,shape=circle,fill,scale=.5,label=$a_2$] at (1,0) {};
\draw[thick,blue] (.5,0) ellipse (.8cm and .3cm) ;
\end{tikzpicture}
\end{centering}}
\subfigure[$H_B$]{
\begin{centering}
\begin{tikzpicture}
\foreach \x in {1,...,3} \node[draw,shape=circle,fill,scale=.5,label=$b_\x$] at (\x-1,0) {} ;
\draw[thick,blue] (.5,0) ellipse (.8cm and .3cm) ;
\draw[thick,blue] (1.5,0) ellipse (.8cm and .3cm) ;
\end{tikzpicture}
\end{centering}}
\subfigure[$H_C$]{
\begin{centering}
\begin{tikzpicture}
\foreach \x in {1,...,3} \node[draw,shape=circle,fill,scale=.5,label=$c_\x$] at (\x-1,0) {} ;
\draw[thick,blue] (.5,0) ellipse (.8cm and .3cm) ;
\draw[thick,blue] (1.5,0) ellipse (.8cm and .3cm) ;
\end{tikzpicture}
\end{centering}}
\caption{Contextuality scenarios in the proof of Proposition~\ref{nonass}.}
\label{nonassfig}
\end{figure}

\begin{proof}
With the scenarios of Figure~\ref{nonassfig}, the set of pairs
\[
\left\{ (b_1,c_1), (b_2,c_1), (b_2,c_2), (b_3,c_2) \right\}
\]
is an edge in $H_B\otimes H_C$, representing the joint measurement in which the $H_C$-measurement $\{c_1,c_2\}$ in is followed by the $H_B$-measurement $\{b_1,b_2\}$ if the outcome was $c_1$, and by $\{b_2,b_3\}$ if the outcome was $c_2$. Similarly, the set of pairs
\[
\left\{ (b_1,c_2), (b_2,c_2), (b_2,c_3), (b_3,c_3) \right\}
\]
is also an edge in $H_B\otimes H_C$. Therefore, the set of triples
\begin{align}
\begin{split}
\label{bigmeas}
\big\{ & (a_1,b_1,c_1), (a_1,b_2,c_1), (a_1,b_2,c_2), (a_1,b_3,c_2) \\[4pt]
 & (a_2,b_1,c_2), (a_2,b_2,c_2), (a_2,b_2,c_3), (a_2,b_3,c_3) \big\}
\end{split}
\end{align}
is an edge in $H_A\otimes (H_B\otimes H_C)$.

We now show that this set of vertices is not an edge in $(H_A\otimes H_B)\otimes H_C$. If it is, then it has to arise as a sequence of two measurements, with the first fixed one conducted on $H_A\otimes H_B$ or on $H_C$. This first measurement cannot be on $H_C$, since otherwise~\eqref{bigmeas} could not contain all vertices $c_1,c_2,c_3$ of $H_C$, and therefore it has to be on $H_A\otimes H_B$, meaning that the set of pairs
\[
\{ (a_1,b_1), (a_1,b_2), (a_1,b_3), (a_2,b_1), (a_2,b_2), (a_2,b_3) \}
\]
has to be an edge in $H_A\otimes H_B$. Since this set is all of $V(H_A\otimes H_B)$, this can clearly not be the case; one can see this more formally by noting that any protocol realizing it would have to begin with the measurement $\{a_1,a_2\}$ on $H_A$, but then $\{b_1,b_2,b_3\}$ would have to be an edge in $H_B$, which is not the case.
\end{proof}

Intuitively speaking, the reason for this non-associativity is this: in $H_A\otimes (H_B\otimes H_C)$, it is possible for Alice on $H_A$ to measure first, while Charlie on $H_C$ subsequently conducts a measurement depending on Alice's outcome, and finally Bob on $H_B$ a measurement depending on Charlie's outcome. This kind of protocol is not implementable in the bracketing $(H_A\otimes H_B)\otimes H_C$, however, since Charlie cannot measure in between Alice and Bob.

We resolve this problem by first defining new `minimal' and `maximal' $n$-fold Foulis--Randall products and studying their relationship. We show in Appendix~\ref{multiproducts} that these two products, and all other intermediate ones like the ones obtained from applying binary products with an arbitrary bracketing, are observationally equivalent. A reader not interested in the technical details of how this works may want to skip to Section~\ref{cm}.

Let $H_1,\ldots, H_n$ be the contextuality scenarios of which we want to take the product.

\begin{defn}
The \emph{minimal Foulis--Randall product} $^{\min}\otimes_{i=1}^n H_i$ has vertices
\[
V\left(^{\min}\otimes_{i=1}^n H_i\right) \defin \prod_i V(H_i) = V(H_1)\times\ldots\times V(H_n),
\]
and $\bigcup_k E_k$ as its set of edges, where the elements of $E_k$ indexed by party $k=1,\ldots,n$ are of the form
\beq
\label{nFRedge}
\left\{\: (v_1,\ldots,v_n) \:|\: v_i\in e_i \:\forall i\neq k,\: v_k\in f(\vec{v}) \:\right\}
\eeq
for some edge $e_i\in E(H_i)$ for every party $i\neq k$ and a function $\vec{v}\mapsto f(\vec{v})$ which assigns to every joint outcome $\vec{v}=(v_1,\ldots,\bcancel{v_k},\ldots,v_n)$ of all parties except $k$ an edge $f(\vec{v})\in E(H_k)$.
\end{defn}

The measurement~\eqref{nFRedge} can alternatively be written in the form
\beq
\label{nFRedge2}
\bigcup_{\vec{v}'} \{\vec{v}'\} \times f(\vec{v}'),
\eeq
where $\vec{v}'$ ranges over $\prod_{i\neq k} e_i$ and it is understood that the second factor of the cartesian product has to be inserted into the $k$th position of $\vec{v}'$. This measurement can be interpreted as follows: each party $i\neq k$ starts by conducting their measurement $e_i$. These parties then announce their joint outcome $\vec{v}$ to the remaining party $k$, who conducts a measurement $f(\vec{v})$ chosen as a function of the previous joint outcome. This kind of protocol is a direct generalization of~\eqref{FReq}. If party $k$ conducts a fixed measurement $e_k$ independently of $\vec{v}$, then the resulting edge~\eqref{nFRedge} is just the cartesian product edge $\prod_{i=1}^n e_i$. Therefore, $E(H_1\times\ldots\times H_n)\subseteq E\left(^{\min}\otimes_{i=1}^n H_i\right)$.

\begin{prop}
\label{multinosigprop}
A probabilistic model $p\in \mathcal{G}(H_1\times\ldots\times H_n)$ lies in $\mathcal{G}\left(^{\min}\otimes_{i=1}^n H_i\right)$ if and only if it satisfies the \emph{no-signaling equations}
\beq
\label{multinosig}
\sum_{w\in e_k} p(\vec{v},w) = \sum_{w\in e'_k} p(\vec{v},w)
\eeq
for all parties $k$, joint outcomes $\vec{v}\in\prod_{i\neq k} V(H_i)$ and measurements $e_k,e'_k\in E(H_k)$.
\end{prop}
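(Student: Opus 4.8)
The plan is to mimic the two-party argument underlying Proposition \ref{nosigprop}, only now keeping careful track of the multi-party book-keeping. The statement is an `if and only if', so I would prove the two implications separately.

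\textbf{The easy direction: membership implies no-signaling.} Suppose $p\in\mathcal{G}\left(^{\min}\otimes_{i=1}^n H_i\right)$, and fix a party $k$, a joint outcome $\vec v\in\prod_{i\neq k}V(H_i)$, and two measurements $e_k,e_k'\in E(H_k)$. I want to show $\sum_{w\in e_k}p(\vec v,w)=\sum_{w\in e_k'}p(\vec v,w)$. The idea is to embed $\vec v$ into a single edge of the minimal FR-product by having \emph{one} of the other parties, say some $j\neq k$, play the role of the `last' party whose measurement depends on everyone else's outcomes. Concretely: pick, for each $i\notin\{j,k\}$, an edge $e_i\in E(H_i)$ with $v_i\in e_i$; pick an edge $e_k\in E(H_k)$ with $v_k$-component irrelevant (we take the full $e_k$, resp.\ $e_k'$); and let party $j$ run the function that assigns to a joint outcome of all parties $\neq j$ the edge $e_j$ containing $v_j$ when the $k$-component equals something in $e_k$. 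The cleanest route, though, is to sum the normalization identity \eqref{probnorm} over a well-chosen family of edges. Sum $\sum_{v\in e}p(v)=1$ over the edge \eqref{nFRedge} in which party $k$ always measures $e_k$ (so this edge is the cartesian product $e_1\times\cdots\times e_k\times\cdots\times e_n$, with $e_i$ chosen to contain $v_i$ for $i\neq k$); then do the same with $e_k$ replaced by $e_k'$. Taking the difference and using induction on the number of parties that are not yet `pinned' — i.e.\ summing out the coordinates $i\neq k$ with $v_i$ not fixed and using normalization on those — I can peel away all coordinates except $\vec v$, leaving exactly $\sum_{w\in e_k}p(\vec v,w)-\sum_{w\in e_k'}p(\vec v,w)=0$. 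This is routine once organized correctly.

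\textbf{The hard direction: no-signaling implies membership.} Here I assume $p\in\mathcal{G}(H_1\times\cdots\times H_n)$ satisfies all the equations \eqref{multinosig}, and I must show $\sum_{v\in e}p(v)=1$ for every edge $e$ of the minimal FR-product. Such an $e$ has the form \eqref{nFRedge2}: there is a distinguished party $k$, fixed edges $e_i\in E(H_i)$ for $i\neq k$, and a function $\vec v'\mapsto f(\vec v')$ assigning to each $\vec v'\in\prod_{i\neq k}e_i$ an edge $f(\vec v')\in E(H_k)$. Then
\[
\sum_{v\in e}p(v)=\sum_{\vec v'\in\prod_{i\neq k}e_i}\ \sum_{w\in f(\vec v')}p(\vec v',w).
\]
The crux is that the inner sum $\sum_{w\in f(\vec v')}p(\vec v',w)$ does \emph{not} depend on which edge $f(\vec v')$ is chosen: by hypothesis \eqref{multinosig} (applied with this $\vec v'$ as the joint outcome of the parties $\neq k$), it equals $\sum_{w\in e_k}p(\vec v',w)$ for any fixed reference edge $e_k\in E(H_k)$. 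Substituting, the right-hand side becomes $\sum_{\vec v'\in\prod_{i\neq k}e_i}\sum_{w\in e_k}p(\vec v',w)=\sum_{v\in e_1\times\cdots\times e_n}p(v)=1$, the last equality because the cartesian-product edge $e_1\times\cdots\times e_n$ is an edge of $H_1\times\cdots\times H_n$ and $p$ is a probabilistic model on it. I should also note the degenerate-index caveat, namely that when $E(H_k)$ is nonempty there is always such a reference $e_k$; the case where some $H_i$ has no edges through the relevant vertex does not arise because \eqref{nFRedge} already presupposes the $e_i$ exist.

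\textbf{Where the difficulty lies.} The only genuinely delicate point is notational: carefully inserting the $k$-th coordinate back into tuples $\vec v'$ indexed over $\prod_{i\neq k}e_i$, and making sure the no-signaling equation is invoked with the correct grouping of parties (the single party $k$ versus all the others). There is no conceptual obstacle beyond this — the argument is, as the text promises for the two-party case, essentially the duality between states and effects made concrete. I would present the hard direction first since it is the more informative one, then dispatch the easy direction by the summation/induction argument above, and finally remark that Corollary~\ref{multinosigcor} (hence Proposition~\ref{nosigprop} as the $n=2$ special case) follows immediately.
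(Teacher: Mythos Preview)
Your `if' direction is correct and essentially identical to the paper's: replace $f(\vec v')$ by a fixed $e_k$ using \eqref{multinosig}, then invoke normalization on the product edge.

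Your `only if' direction, however, has a real gap. After abandoning the (confused) idea of making some \emph{other} party $j$ adaptive, you propose to compare the two normalizations of the product edges $e_1\times\cdots\times e_k\times\cdots\times e_n$ and $e_1\times\cdots\times e'_k\times\cdots\times e_n$ and then ``peel away'' the non-$k$ coordinates by induction. But the difference of those two normalizations only gives
\[
\sum_{\vec v'\in\prod_{i\neq k}e_i}\Bigl(\sum_{w\in e_k}p(\vec v',w)-\sum_{w\in e'_k}p(\vec v',w)\Bigr)=0,
\]
a single equation summed over \emph{all} $\vec v'$, and product edges alone cannot isolate a particular $\vec v$: that is exactly the content of Figure~\ref{signalingex}, which exhibits a probabilistic model on a direct product that violates no-signaling. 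No amount of induction using only cartesian-product edges will fix this.

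The missing idea --- and the point of the minimal FR-product --- is to use a genuinely adaptive edge of type \eqref{nFRedge} in which \emph{party $k$} (not some $j$) measures last and its edge depends on the others' joint outcome: set $f(\vec v')=e'_k$ if $\vec v'=\vec v$ and $f(\vec v')=e_k$ otherwise. Comparing the normalization for this edge with that for the product edge $\prod_i e_i$ cancels all terms with $\vec v'\neq\vec v$ and yields \eqref{multinosig} directly. This is what the paper does, and it is not merely a notational nicety but the substantive step.
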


\begin{proof}
We show that the no-signaling equation~\eqref{multinosig} for a given party $k$, outcomes $v_i$ for $i\neq k$, and edges $e_k,e'_k\in E(H_k)$ follows from the normalization constraints on $p\in\mathcal{G}\left(^{\min}\otimes_{i=1}^n H_i\right)$. To this end, we choose an auxiliary edge $e_i\in E(H_i)$ with $v_i\in e_i$ for each party $i\neq k$. We then consider the normalization equation for the independent measurement $\prod_{i=1}^n e_i$, which is
\[
\sum_{\vec{v}'} \sum_{w\in e_k} p(\vec{v}',w) = 1,
\]
where the summation index is now $\vec{v}'=(v'_1,\ldots,\bcancel{v'_k},\ldots,v'_n)$, since $\vec{v}$ already stands for the fixed outcomes of~\eqref{multinosig}. We also write down the normalization equation for the measurement of the form~\eqref{nFRedge} given by
\[
f(\vec{v}') \defin \begin{cases} e'_k & \textrm{ if } \vec{v}'=\vec{v},\\ e_k & \textrm{ otherwise},\end{cases}
\]
for which the normalization equation reads
\[
\sum_{w\in e'_k} p(\vec{v},w) + \sum_{\vec{v}'\neq\vec{v}}\sum_{w\in e_k} p(\vec{v}',w) = 1.
\]
The claim now follows from comparing the two normalization equations.

Conversely, we need to show that the no-signaling equations~\eqref{multinosig} together with normalization for independent measurements imply normalization for an edge of the form~\eqref{nFRedge}. This follows from the computation
\[
\sum_{\vec{v}}\sum_{w\in f(\vec{v})} p(\vec{v},w) \stackrel{\eqref{multinosig}}{=} \sum_{\vec{v}}\sum_{w\in e_k} p(\vec{v},w) = 1,
\]
where $e_k\in E(H_k)$ is arbitrary, and the second equation is the normalization equation with respect to the product measurement $\prod_i e_i$, which already holds on $H_1\times\ldots\times H_n$.
\end{proof}

It may seem odd to single out exactly one party as the one conducting their measurement as a function of the others' outcomes. After all, why not have only one party conduct an initial measurement, a second party conduct a measurement as a function of the first party's outcome, and so on, until the last party conducts their measurement still as a function of all previous outcomes? In principle, it should even be allowed to choose the ordering of the parties in this protocol itself as a function of the previous outcomes. Such a strategy resembles the `dynamic wirings' discussed in~\cite{LO2}. We can make this precise as follows:

\begin{defn}
A \emph{measurement protocol} $\mathcal{P}$ for $S$ consists of the following data:
\begin{enumerate}
\item if $S=\emptyset$, the unique protocol is $\mathcal{P}=\emptyset$,
\item\label{measprotb} otherwise, the protocol is a triple $\mathcal{P}=(k,e,f)$, where $k\in S$ is a party, $e\in E(H_k)$ is an edge, and $f$ is a function assigning to each vertex $v\in e$ a measurement protocol $f(v)$ on $S\setminus\{k\}$.
\end{enumerate}
\end{defn}

This may look like a circular definition, since it uses the concept of measurement protocol for defining what a measurement protocol is. However, since the measurement protocol $\mathcal{P}'$ in~\ref{measprotb} is for a smaller subset of parties, this is a perfectly sensible recursive definition which always reduces to the base case $S=\emptyset$. Mathematically speaking, we are dealing with an inductive definition~\cite{hott}.

In this way, a measurement protocol corresponds to an initial measurement by party $k$, together with a function assigning to every outcome of this initial measurement another (shorter) measurement protocol for the remaining parties $S\setminus\{k\}$. We will use a closely related definition in Appendix~\ref{reltosheaf}.

\begin{defn}
The \emph{set of outcomes} $O(\mathcal{P})$ of a measurement protocol $\mathcal{P}$ for $S$ is
\begin{enumerate}
\item if $S=\emptyset$, a one-element set $O(\mathcal{P})=\{\ast\}$ containing a dummy outcome $\ast$,
\item otherwise, if $\mathcal{P}=(k,e,f)$, then
\[
O(\mathcal{P}) \defin \bigcup_{v\in e} \{v\}\times O(f(v)).
\]
\end{enumerate}
\end{defn}

In other words, the unique measurement protocol $\mathcal{P}=\emptyset$ for no parties has a unique---and hence deterministic---outcome denoted $\ast$. An outcome of a non-trivial protocol $\mathcal{P}=(k,e,f)$ consists of an outcome $v\in e$ of the initial measurement, together with an outcome of the remaining protocol $f(v)$.

In this way, an outcome of a measurement protocol $\mathcal{P}$ for all parties $S$ has exactly one component in each $V(H_i)$ for each $i\in S$, so that it can be regarded as an element of $\prod_{i\in S} V(H_i)$. The set of all outcomes of $\mathcal{P}$ is therefore a subset of the vertices of $^{\max}\otimes_{i\in S} H_i$, which we take to be the edge determined by $\mathcal{P}$. The collection of all these edges defines the scenario $^{\max}\otimes_{i\in S} H_i$.
\begin{defn}
The \emph{maximal Foulis--Randall product} $^{\max}\otimes_{i=1}^n H_i$ has vertices $
V\left(^{\max}\otimes_{i=1}^n H_i\right) \defin \prod_i V(H_i)$ and set of edges $\bigcup_{\mathcal{P}} O(\mathcal{P})$ where $\mathcal{P}$ is a measurement protocol for $\{1, \ldots, n\}$.
\end{defn}

\begin{lem}
\label{allintermed}
Any way of permuting factors and choosing brackets in the expression $H_1\otimes\ldots\otimes H_n$ yields a scenario intermediate between the minimal and the maximal one,
\[
E\left(^{\min}\otimes_{i=1}^n H_i\right) \subseteq E\left(H_1\otimes\ldots\otimes H_n\right) \subseteq E\left(^{\max}\otimes_{i=1}^n H_i\right).
\]
\end{lem}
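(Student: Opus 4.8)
The plan is to prove the two inclusions separately, in each case by exhibiting, for a scenario sitting between the minimal and maximal products, an explicit comparison of edges with either the minimal or the maximal product. The key conceptual observation is that both ends of the chain have a description in terms of measurement protocols: the maximal product is literally defined as the set of edges $O(\mathcal{P})$ over all measurement protocols $\mathcal{P}$ for $\{1,\dots,n\}$, while the minimal product consists of exactly those edges coming from the special class of protocols in which one designated party $k$ measures \emph{last}, after all other parties have measured simultaneously and announced their joint outcome. So the strategy is: (i) show that every binary Foulis-Randall product, with any bracketing and permutation, produces edges that arise as $O(\mathcal{P})$ for some measurement protocol, hence is contained in the maximal product; and (ii) show conversely that the minimal-product edges \eqref{nFRedge} are reproduced by every such bracketed product.

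For the upper inclusion, I would argue by structural induction on the bracketing tree of the expression $H_1\otimes\ldots\otimes H_n$. The base case is a single scenario $H_i$, where an edge is just an edge $e_i\in E(H_i)$, which is $O(\mathcal{P})$ for the one-step protocol $\mathcal{P}=(i,e_i,f)$ with $f$ assigning to each vertex the empty protocol. For the inductive step, a bracketed expression is of the form $L\otimes R$, where $L$ is a bracketed product of the scenarios indexed by some $S_L$ and $R$ of those indexed by $S_R$, with $S_L\disjcup S_R=\{1,\dots,n\}$. By Definition~\ref{FR-prod}, an edge of $L\otimes R$ lies in $E_{L\to R}$ or $E_{L\leftarrow R}$; by symmetry say it is $\bigcup_{a\in e_L}\{a\}\times f(a)$ with $e_L\in E(L)$ and $f:e_L\to E(R)$. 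By the induction hypothesis, $e_L=O(\mathcal{P}_L)$ for some measurement protocol $\mathcal{P}_L$ on $S_L$, and each $f(a)=O(\mathcal{P}_R^{(a)})$ for some measurement protocol $\mathcal{P}_R^{(a)}$ on $S_R$. One then has to \emph{splice} these: build a protocol $\mathcal{P}$ on $S_L\disjcup S_R$ that runs $\mathcal{P}_L$ first and then, having produced an outcome $a$ that records one vertex from each $V(H_i)$, $i\in S_L$, continues with $\mathcal{P}_R^{(a)}$. Formalizing ``run $\mathcal{P}_L$ first, then $\mathcal{P}_R^{(a)}$'' is itself a recursive construction on $\mathcal{P}_L$: at each leaf of $\mathcal{P}_L$ (where $S_L$ has been exhausted and the accumulated outcome $a$ is determined) one grafts on $\mathcal{P}_R^{(a)}$. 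A short induction then shows $O(\mathcal{P})=\bigcup_{a\in e_L}\{a\}\times f(a)$, which is the desired edge, so it belongs to $E({}^{\max}\otimes_{i=1}^n H_i)$.

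For the lower inclusion, fix a minimal-product edge of the form \eqref{nFRedge}: a designated party $k$, edges $e_i\in E(H_i)$ for $i\neq k$, and $f$ assigning an edge of $H_k$ to each joint outcome $\vec v$ of the other parties. Given any bracketing of $H_1\otimes\ldots\otimes H_n$, its bracketing tree has $H_k$ appearing in one leaf; walk up from that leaf to the root, and at each internal node $L'\otimes R'$ where the branch containing $H_k$ is (say) the right factor $R'$, use an edge in $E_{L'\to R'}$ whose $R'$-part is chosen as a function of the $L'$-outcome. Iterating this from the root down to the $H_k$-leaf realizes exactly the protocol ``everyone except $k$ measures (in the order dictated by the bracketing), announces outcomes, and then $k$ measures $f(\vec v)$'' — one checks by induction on the bracketing tree that the resulting edge is precisely \eqref{nFRedge}. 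Hence every minimal-product edge is an edge of $H_1\otimes\ldots\otimes H_n$.

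I expect the main obstacle to be bookkeeping rather than any genuine difficulty: making the recursive ``splicing'' and ``grafting'' of measurement protocols precise, and verifying that the resulting outcome sets $O(\mathcal{P})$ are exactly the claimed edges, requires carefully-stated auxiliary lemmas (e.g., an operation $\mathcal{P}_L \star (\mathcal{P}_R^{(a)})_{a}$ on protocols, together with the identity $O(\mathcal{P}_L \star \mathcal{P}_R) = \bigcup_{a\in O(\mathcal{P}_L)}\{a\}\times O(\mathcal{P}_R^{(a)})$). The inductions themselves are routine once these operations are set up, and the disjointness of $S_L$ and $S_R$ at every node of a bracketing tree — which is what makes a vertex of $\prod_i V(H_i)$ decompose unambiguously — is exactly what keeps the grafting well-defined.
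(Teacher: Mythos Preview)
Your proposal is correct and follows essentially the same approach as the paper: induction on the bracketing structure, with the upper inclusion handled by splicing measurement protocols for the two halves into one protocol for all $n$ parties, and the lower inclusion by realizing the minimal-product edge~\eqref{nFRedge} inside any bracketed product via the party-$k$-measures-last interpretation. The only cosmetic difference is that the paper first uses monotonicity of the binary $\otimes$ in edge sets to reduce to the two clean comparisons $E({}^{\min}\otimes_{i=1}^n H_i)\subseteq E\big(({}^{\min}\otimes_{i=1}^j H_i)\otimes({}^{\min}\otimes_{i=j+1}^n H_i)\big)$ and $E\big(({}^{\max}\otimes_{i=1}^j H_i)\otimes({}^{\max}\otimes_{i=j+1}^n H_i)\big)\subseteq E({}^{\max}\otimes_{i=1}^n H_i)$, whereas you carry out the recursive construction directly; the substance is the same.
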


\begin{proof}
This can be shown by induction on $n$. For $n=1$, there is nothing to prove, since all three scenarios trivially coincide.

Otherwise, for $n>1$, we can permute the scenarios $H_i$ such that the outermost bracketing has the form
\[
(H_1\otimes\ldots\otimes H_j)\otimes (H_{j+1}\otimes\ldots\otimes H_n),
\]
where now the products inside each bracketing are themselves arbitrarily permuted and bracketed. By the induction hypotheses, we know that
\begin{align*}
E\left(^{\min}\otimes_{i=1}^j H_i\right) &\subseteq E\left(H_1\otimes\ldots\otimes H_j\right) \subseteq E\left(^{\max}\otimes_{i=1}^j H_i\right),\\[4pt]
E\left(^{\min}\otimes_{i=j+1}^n H_i\right) &\subseteq E\left(H_{j+1}\otimes\ldots\otimes H_n\right) \subseteq E\left(^{\max}\otimes_{i=j+1}^n H_i\right).
\end{align*}
So for the first inclusion, it is sufficient to show that
\[
E\left(^{\min}\otimes_{i=1}^n H_i\right) \subseteq E\left(\left(^{\min}\otimes_{i=1}^j H_i\right) \otimes \left(^{\min}\otimes_{i=j+1}^n H_i\right) \right),
\]
which is clear, since any element of the left-hand side is of the form~\eqref{nFRedge}, which can in particular be written as a measurement in which either the group $1,\ldots,j$ measures first and communicates their joint outcome to the others, or in which the group $j+1,\ldots,n$ likewise measures first and communicates their joint outcome to $1,\ldots,j$.

For the other inclusion, it remains to prove that
\[
E\left( \left(^{\max}\otimes_{i=1}^j H_i\right)\otimes \left(^{\max}\otimes_{i=j+1}^n H_i\right)\right) \subseteq E\left(^{\max}\otimes_{i=1}^n H_i\right),
\]
which means: given any measurement protocols for parties $1,\ldots,j$ and a subsequent measurement protocol for $j+1,\ldots,n$ given as a function of the outcome of the first, this can be regarded as a measurement protocol for all $n$ parties; and likewise if parties $j+1,\ldots,n$ measure first and the protocol of parties $1,\ldots,j$ is a function of the outcome of the first. But this is also clear.
\end{proof}

In Appendix~\ref{multiproducts}, we study in which sense the different ways of constructing a product of more than two scenarios, meaning the minimal and maximal and all intermediate ones, need to be distinguished for the purposes of this paper. It turns out that as far as general probabilistic models, classical models (Section~\ref{cm}) and quantum models (Section~\ref{qm}) are concerned, these products are equivalent.

We therefore omit this distinction from now on and write $H_1\otimes\ldots\otimes H_n$ when referring to any of these products of scenarios $H_1,\ldots,H_n$.

With these definitions and results, we easily obtain the multipartite generalization of Proposition~\ref{nosigprop}:

\begin{cor}
\label{multinosigcor}
$\mathcal{G}(H_1\otimes\ldots\otimes H_n)\subseteq\mathcal{G}(H_1\times\ldots\times H_n)$ is exactly the subset of models satisfying the no-signaling equations
\beq
\label{multinosigeq}
\sum_{v_i\in e} p(v_1,\ldots,v_n) = \sum_{v_i\in e'} p(v_1,\ldots,v_n)
\eeq
for all parties $i=1,\ldots,n$, all edges $e,e'\in E(H_i)$ and all vertices $v_j\in V(H_j)$ for $j\neq i$.
\end{cor}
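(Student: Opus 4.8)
The plan is to sandwich $\mathcal{G}(H_1\otimes\ldots\otimes H_n)$ between $\mathcal{G}\left(^{\max}\otimes_{i=1}^n H_i\right)$ and $\mathcal{G}\left(^{\min}\otimes_{i=1}^n H_i\right)$ using Lemma~\ref{allintermed} together with the trivial observation that enlarging the edge set of a scenario can only shrink its set of probabilistic models, so that
$\mathcal{G}\left(^{\max}\otimes_i H_i\right)\subseteq\mathcal{G}(H_1\otimes\ldots\otimes H_n)\subseteq\mathcal{G}\left(^{\min}\otimes_i H_i\right)$. The first thing I would point out is that the no-signaling equations~\eqref{multinosigeq} in the statement are, after renaming, \emph{literally} the equations~\eqref{multinosig}: setting $k=i$, $\vec v=(v_j)_{j\neq i}$, $e_k=e$ and $e'_k=e'$ turns one into the other. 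Hence Proposition~\ref{multinosigprop} already tells us that $\mathcal{G}\left(^{\min}\otimes_i H_i\right)$, viewed as a subset of $\mathcal{G}(H_1\times\ldots\times H_n)$, is precisely the set of no-signaling models.

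The inclusion ``$\subseteq$'' is then immediate: since $E\left(^{\min}\otimes_i H_i\right)\subseteq E(H_1\otimes\ldots\otimes H_n)$, every $p\in\mathcal{G}(H_1\otimes\ldots\otimes H_n)$ already lies in $\mathcal{G}\left(^{\min}\otimes_i H_i\right)$ and is therefore no-signaling. For the converse I would take a no-signaling $p\in\mathcal{G}(H_1\times\ldots\times H_n)$ and verify that it satisfies the normalization constraint on every edge of $^{\max}\otimes_i H_i$; since $E(H_1\otimes\ldots\otimes H_n)\subseteq E\left(^{\max}\otimes_i H_i\right)$, this puts $p$ into $\mathcal{G}(H_1\otimes\ldots\otimes H_n)$ and closes the sandwich. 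An edge of $^{\max}\otimes_i H_i$ is $O(\mathcal{P})$ for a measurement protocol $\mathcal{P}=(k,e,f)$, so I would prove $\sum_{o\in O(\mathcal{P})}p(o)=1$ by induction on the number of parties in $\mathcal{P}$. The base case $S=\emptyset$ is trivial, $O(\mathcal{P})$ being a one-point set of total probability $1$. In the inductive step one uses that no-signaling makes the single-party marginal $p_k$ well defined, and that for each $v\in e$ the conditional $o'\mapsto p(v,o')/p_k(v)$ is again a no-signaling model on the product of the remaining scenarios; applying the induction hypothesis to it and summing over $v\in e$ with the normalization of $p_k$ on the edge $e$ yields $\sum_{o\in O(\mathcal{P})}p(o)=\sum_{v\in e}p_k(v)=1$. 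Alternatively, one may simply invoke the equivalence of all Foulis--Randall products for general probabilistic models established in Appendix~\ref{multiproducts}, which collapses the sandwich to equalities and makes the corollary a direct restatement of Proposition~\ref{multinosigprop}.

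The main obstacle, such as it is, lies entirely in the ``$\supseteq$'' direction: Proposition~\ref{multinosigprop} controls only the \emph{minimal} product, and one must still climb up to $\mathcal{G}(H_1\otimes\ldots\otimes H_n)$, i.e.\ rule out that the extra edges of the larger product impose genuinely new constraints. If one does this by hand rather than citing Appendix~\ref{multiproducts}, the two delicate points are (i) the well-definedness of the marginals, which needs a short auxiliary induction peeling off one party at a time from the single-party no-signaling equations~\eqref{multinosigeq}, and (ii) the degenerate case $p_k(v)=0$, where the conditional is undefined but the corresponding summand $\sum_{o'\in O(f(v))}p(v,o')$ vanishes identically and so contributes $0=p_k(v)$ to the total, exactly as needed.
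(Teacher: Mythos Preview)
Your proposal is correct and, in its second alternative (``simply invoke the equivalence of all Foulis--Randall products for general probabilistic models established in Appendix~\ref{multiproducts}''), it coincides exactly with the paper's own proof, which merely cites Proposition~\ref{multinosigprop} and Corollary~\ref{equalcomp}.

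Your first alternative---the direct induction on the number of parties to verify normalization on every edge $O(\mathcal{P})$ of the maximal product---is a genuinely different and more elementary route. The paper reaches the same conclusion via the abstract machinery of virtual edges and completions (Appendix~\ref{multiproducts}): it shows that every edge of $^{\max}\otimes_i H_i$ is a virtual edge of $^{\min}\otimes_i H_i$, whence all products have the same completion and therefore the same $\mathcal{G}$. Your inductive argument bypasses this machinery entirely by working directly with conditionals and marginals; it is shorter and self-contained for the specific purpose at hand, whereas the paper's approach yields the stronger structural statement (equal completions) that is reused elsewhere for $\mathcal{C}$ and $\mathcal{Q}$. Your identification of the two delicate points---well-definedness of marginals via iterated single-party no-signaling, and the $p_k(v)=0$ degeneracy---is accurate and addresses exactly the places where a careless version of the induction would stumble.
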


\begin{proof}
See Proposition~\ref{multinosigprop} and Corollary~\ref{equalcomp}.
\end{proof}

If one understands this product to be $^{\min}\otimes$, then this result generalizes a well-known fact for Bell scenarios: no-signaling along any bipartition follows from the no-signaling equations of the form~\ref{multinosigeq}, where the sums range only over the outcome of one party.

\subsection{Bell scenarios}
\label{bellscen}

We now explain how Bell scenarios~\cite{nonlocalreview} are examples of contextuality scenarios. The Bell scenario $B_{n,k,m}$ consists of $n$ parties having access to $k$ local measurements each, each of which has $m$ possible outcomes. At the single-party level, the outcomes form a contextuality scenario $B_{1,k,m}$ as depicted in Figure~\ref{singleparty}. As contextuality scenarios, we define
\beq
\label{BSdefn}
B_{n,k,m} \defin \underbrace{B_{1,k,m}\otimes \cdots\otimes B_{1,k,m}}_{n\text{ times}},
\eeq
and we will see in the following how this leads to the usual concepts studied as `nonlocality'. The Foulis--Randall product here can be taken to be any of the products of Appendix~\ref{multiproducts}; for $n\geq 3$, these different products give different, but observationally equivalent scenarios. The Bell scenario $B_{n,k,m}$ for $n\geq 3$ is therefore defined only up to this equivalence.

It is straightforward to generalize this definition and all our upcoming results to scenarios where the parties have access to different numbers of measurements and outcomes per measurement, but we will not consider this explicitly.
\begin{ex}[The CHSH scenario]
Figure~\ref{CHSH_sce} illustrate how the CHSH scenario $B_{2,2,2}$~\cite{CHSH} arises as $B_{1,2,2}\otimes B_{1,2,2}$. A vertex $a b | x y $ represents the event where Alice (resp.~Bob) chooses measurement $x$ (resp.~$y$) and obtains output $a$ (resp.~$b$). In this scenario, the edges are as follows:
\begin{itemize}
\item For simultaneous measurements, the $f$ of~(\ref{FReq}) are constant, and the measurements are as in Figure~\ref{sim_meas}:
\begin{align*}
\{00|00,\: 01|00,\: 10|00,\: 11|00\} ,\\ 
\{00|01,\: 01|01,\: 10|01,\: 11|01\} ,\\
\{00|10,\: 01|10,\: 10|10,\: 11|10\} ,\\
\{00|11,\: 01|11,\: 10|11,\: 11|11\} .
\end{align*}
\item If Alice measures first and Bob's choice of setting depends on her outcome, then the events are of the form $ab|xf(a)$, where $f$ is not a constant. Thus we have two possibilities: $f(a)=a$ or $f(a)=1-a$. In the first case we obtain the edges 
\begin{align*}
\{00|00,\: 01|00,\: 10|01,\: 11|01\} ,\\
\{00|10,\: 01|10,\: 10|11,\: 11|11\} ,
\end{align*}
and in the second case,
\begin{align*}
\{00|01,\: 01|01,\: 10|00,\: 11|00\} ,\\
\{00|11,\: 01|11,\: 10|10,\: 11|10\} .
\end{align*}
These are the red edges in Figures~\ref{CHSHab} and~\ref{CHSHc},\ref{CHSHd}. 
\item Similarly, Bob measuring first with Alice's subsequent choice of setting depending on his outcome gives rise to the edges  
\begin{align*}
\{00|00,\: 01|10,\: 10|00,\: 11|10\} ,\\
\{00|01,\: 01|11,\: 10|01,\: 11|11\} ,\\
\{00|10,\: 01|00,\: 10|10,\: 11|00\} ,\\
\{00|11,\: 01|01,\: 10|11,\: 11|01\} .
\end{align*}
These are the green edges in Figures~\ref{CHSHba} and~\ref{CHSHc},\ref{CHSHd}.
\end{itemize}
\end{ex}

\begin{prop}
\label{Bnosig}
Let $B_{n,k,m}$ be a Bell scenario. Then $\mathcal{G}(B_{n,k,m})$ is the standard no-signaling polytope containing all no-signaling boxes of type $(n,k,m)$, i.e.~conditional probability distributions $p(a_1\ldots a_n|x_1\ldots x_n)$ satisfying the no-signaling equations
\beq
\label{boxnosig}
\sum_{a_k} p(a_1\ldots a_k\ldots a_n|x_1\ldots x_k\ldots x_n) = \sum_{a_k} p(a_1\ldots a_k\ldots a_n|x_1\ldots x'_k\ldots x_n).
\eeq
\end{prop}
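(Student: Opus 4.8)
The plan is to obtain this as an immediate consequence of Corollary~\ref{multinosigcor} once the two scenarios appearing there have been identified concretely. First I would unwind the definitions at the single-party level: a probabilistic model on $B_{1,k,m}$ assigns a probability to each vertex $a|x$ (with $a\in\{1,\ldots,m\}$, $x\in\{1,\ldots,k\}$) so that $\sum_a p(a|x)=1$ for each of the $k$ edges, i.e.\ it is exactly a family $\bigl(p(a|x)\bigr)_{a,x}$ of $k$ probability distributions on the outcome set. Next I would describe the ($n$-fold) direct product $B_{1,k,m}\times\cdots\times B_{1,k,m}$ in the sense of Definition~\ref{dirprod}: its vertices are $n$-tuples of single-party vertices, which under the obvious bijection become the symbols $a_1\ldots a_n|x_1\ldots x_n$, and its edges are products $e_1\times\cdots\times e_n$ of single-party edges, hence indexed by the measurement tuples $(x_1,\ldots,x_n)$, with the edge for $(x_1,\ldots,x_n)$ being $\{\,a_1\ldots a_n|x_1\ldots x_n : a_i\in\{1,\ldots,m\}\,\}$. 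Therefore $\mathcal{G}(B_{1,k,m}\times\cdots\times B_{1,k,m})$ is precisely the set of \emph{boxes}: assignments $p(a_1\ldots a_n|x_1\ldots x_n)\in[0,1]$ with $\sum_{a_1\ldots a_n} p(a_1\ldots a_n|x_1\ldots x_n)=1$ for every fixed $(x_1,\ldots,x_n)$, with no no-signaling condition imposed yet.

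Since $B_{n,k,m}$ is by definition the Foulis-Randall product $B_{1,k,m}\otimes\cdots\otimes B_{1,k,m}$, Corollary~\ref{multinosigcor} applies with every $H_i=B_{1,k,m}$ and says that $\mathcal{G}(B_{n,k,m})$ is exactly the set of those boxes satisfying the no-signaling equations~\eqref{multinosigeq}. The remaining step is to rewrite~\eqref{multinosigeq} in box notation: there a ``party'' is an index $i$, an edge $e\in E(B_{1,k,m})$ is a measurement $x_i$, the sum $\sum_{v_i\in e}$ becomes $\sum_{a_i}$, and the fixed vertices $v_j$ for $j\neq i$ become fixed symbols $a_j|x_j$. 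Carrying out these substitutions turns~\eqref{multinosigeq} into exactly~\eqref{boxnosig}. This identifies $\mathcal{G}(B_{n,k,m})$ with the set of boxes obeying~\eqref{boxnosig}, i.e.\ with the standard no-signaling polytope.

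One small point to address is that~\eqref{boxnosig} only constrains one party's marginal at a time, while some authors define the no-signaling polytope by imposing no-signaling along every bipartition of the $n$ parties; these definitions agree because, changing one party's setting at a time and then summing out the remaining outcomes within a block, the single-party equations~\eqref{boxnosig} imply all the bipartition equations --- this is the ``well-known fact'' noted right after Corollary~\ref{multinosigcor}. I do not anticipate any genuine obstacle: Corollary~\ref{multinosigcor} already carries all the combinatorial content, and what is left is the purely notational bookkeeping of matching the Foulis-Randall product's vertex labels to the box notation $a_1\ldots a_n|x_1\ldots x_n$, together with the elementary bipartition observation.
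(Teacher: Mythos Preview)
Your proposal is correct and takes essentially the same approach as the paper: the paper's own proof opens with ``This is an instance of Proposition~\ref{multinosigprop}'' and then, purely for pedagogical purposes, spells out explicitly how the normalization equations for the adaptive measurements in $B_{n,k,m}$ yield~\eqref{boxnosig} and vice versa. Your route via Corollary~\ref{multinosigcor} (which itself rests on Proposition~\ref{multinosigprop}) together with the notational translation is exactly this, minus the explicit unpacking the paper includes as illustration.
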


\begin{proof}
This is an instance of Proposition~\ref{multinosigprop}, Nevertheless, it is instructive to rephrase part of the material in Appendix~\ref{multiproducts} in the present case, since this shows more explicitly how the no-signaling equations are equivalent to normalization equations for certain joint measurements.

We identify the vertices of $B_{n,k,m}$ with the events
$$
a_1\ldots a_n|x_1\ldots x_n,\quad a_i\in \{1,\ldots,m\}, \: x_i\in \{1,\ldots,k\}
$$
in the usual Bell scenario notation. 

We show first that a non-signaling box of type $(n,k,m)$ satisfies the normalization of probabilities with respect to any measurement in which the choice of measurement $x_i$ of each party is a function of the outcomes of the previous parties, $x_i = f_i(a_1,\ldots,a_{i-1})$. To check this normalization, we need to consider
\beq
\label{targetsum}
\sum_{a_1,\ldots,a_n} p(a_1\ldots a_n | f_1() \ldots f_n(a_1,\ldots a_{n-1})) ,
\eeq
where $x_1=f_1()$ is a function without arguments, i.e.~a constant. Since the list of settings does not depend on $a_n$, the no-signaling equations imply that the last function $f_n(a_1,\ldots,a_{n-1})$ can be replaced by an arbitrary constant setting $x_n$ without changing the value of the sum. After applying this modification, the list of settings does not depend on $a_{n-1}$, and then the setting of party $n-1$ can be taken to be some fixed $x_{n-1}$. Applying this procedure repeatedly eventually replaces all functions $f_i(a_1,\ldots,a_{i-1})$ by constant settings $x_i$. Then the normalization equation
\beq
\label{boxnorm}
\sum_{a_1,\ldots,a_n} p(a_1\ldots a_n|x_1\ldots x_n) = 1
\eeq
implies that the sum has the value $1$, as has been claimed.

Conversely, suppose that $p$ is a probabilistic model on $B_{n,k,m}$; by the results of Appendix~\ref{multiproducts}, we can take this to mean that all sums of the form~\eqref{targetsum} are normalized. Then $p$ satisfies the normalization equation since taking all functions $f_i$ to be constants $x_i$ gives precisely~(\ref{boxnorm}). In order to prove the no-signaling equation, we fix arbitrary outputs $b_j$ and choose all functions to be constants $f_j=x_j$, except for
$$
f_n(a_1,\ldots,a_{n-1}) = \left\{ \begin{array}{cl} x_n & \textrm{if } a_j = b_j \textrm{ for all } j < n, \\ x'_n & \textrm{otherwise,} \end{array}\right. 
$$
which gives the equation
$$
\sum_{a_n} p(b_1\ldots b_{n-1} a_n|x_1\ldots x_n) + \sum_{a_n} \sum_{(a_1,\ldots, a_{n-1}) \ne (b_1,\ldots, b_{n-1}) } p(a_1\ldots a_n|x_1\ldots x'_n) = 1 .
$$
Upon combining this with the already proven normalization equation
$$
\sum_{a_n} p(b_1\ldots b_{n-1} a_n|x_1\ldots x_n') + \sum_{a_n} \sum_{(a_1,\ldots, a_{n-1}) \ne (b_1,\ldots, b_{n-1}) } p(a_1\ldots a_n|x_1\ldots x_n') = 1 ,
$$
we obtain~(\ref{boxnosig}) with $i=n$ and $b_1\ldots b_{n-1}$ in place of $a_1\ldots a_{n-1}$. The other no-signaling equations can be obtained in the same way, choosing different orderings of the parties.
\end{proof}

\newpage
\section{\textbf{Classical models}} 
\label{cm}

For each scenario $H$, one can define several important subsets of $\mathcal{G}(H)$, the set of all probabilistic models on $H$. In the following sections, we will define these and study some of their properties in some detail, starting with set of classical models $\mathcal{C}(H)$ to be treated in this section. We will use the Bell scenarios $B_{n,k,m}$ as `running examples' illustrating that our formalism behaves exactly as it should in order to recover the usual notions~\cite{nonlocalreview} known for Bell scenarios.

\subsection{Definition of classical models}

What we mean by \emph{classical} here comprises the idea of noncontextual deterministic hidden variables as they occur in results of Bell~\cite{Bell}, Fine~\cite{Fine} and Kochen--Specker~\cite{KS}.

\begin{defn}
\label{cmdef}
Let $H$ be a contextuality scenario.
\begin{enumerate}
\item A probabilistic model $p\in\mathcal{G}(H)$ is \emph{deterministic} if $p(v)\in\{0,1\}$ for all $v\in V(H)$.
\item A probabilistic model $p\in\mathcal{G}(H)$ is \emph{classical} if it is a convex combination of deterministic ones: there exist weights $q_\lambda\in[0,1]$ indexed by some parameter $\lambda$ such that $\sum_\lambda q_\lambda = 1$ and deterministic models $p_\lambda$ such that
\[
p(v) = \sum_{\lambda} q_\lambda p_\lambda(v) \qquad\forall v\in V(H).
\]
\end{enumerate}
\end{defn}

Following Fine~\cite{Fine} and certain refinements of his results to considerations of contextuality~\cite{LSW}*{Thm.~6},~\cite{AB}*{Thm.~8.1}, we note that classical models are precisely those which can be explained in terms of noncontextual deterministic hidden variables.

Since, for finite $H$, there are only finitely many deterministic models, the set of classical models is a polytope. We denote this polytope by $\mathcal{C}(H)$.

\begin{ex}[Cabello's~\cite{Cab1} proof of the Kochen--Specker theorem]
For $H_{\mathrm{KS}}$ the contextuality scenario of Figure~\ref{CKSfig}, we claim that $\mathcal{C}(H_{\mathrm{KS}})=\emptyset$, since $H_{\mathrm{KS}}$ does not allow any deterministic models at all. To see this, let $V_1$ be the set of vertices to which a given deterministic model assigns a $1$. Since the set $V_1$ is required to intersect every edge in precisely one vertex, and every vertex appears in precisely two edges, $2|V_1|$ has to be equal to the number of edges. Since the latter is odd, we conclude that this is impossible. Therefore, no deterministic model exists, which means that $\mathcal{C}(H_{\mathrm{KS}})=\emptyset$. See~\cite[Sec.~7.1]{AB} for a very general version of this argument.
\end{ex}

\begin{rem}
\label{transversals}
As we just exemplified, a deterministic model $p$ is determined by the set of vertices
\beq
\label{V1}
V_1 = \left\{ v\in V \:|\: p(v) = 1 \right\}.
\eeq
By definition of deterministic model, $V_1$ has the property that it intersects every edge in exactly one vertex: $V_1$ is an \emph{exact transversal}~\cite{Eiter}. Conversely, every exact transversal $V_1$ defines a deterministic model in this way. We have that $\mathcal{C}(H)\neq \emptyset$ if and only if $H$ has an exact transversal.
\end{rem}

We now apply this definition to Bell scenarios. In the same way that probabilistic models on a Bell scenario coincide with the usual no-signaling boxes (Proposition~\ref{Bnosig}), also classical models coincide with those no-signaling boxes which are commonly called `local':

\begin{prop}
\label{stand-bell-poly}
Let $B_{n,k,m}$ be a Bell scenario. Then $\mathcal{C}(B_{n,k,m})$ is the standard Bell polytope.
\end{prop}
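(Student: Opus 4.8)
The plan is to show that the polytope $\mathcal{C}(B_{n,k,m})$, as defined abstractly via Definition~\ref{cmdef}, coincides with the standard Bell (local) polytope, whose extreme points are the \emph{local deterministic boxes} $p(a_1\ldots a_n|x_1\ldots x_n) = \prod_{i=1}^n \delta_{a_i, g_i(x_i)}$ for arbitrary functions $g_i:\{1,\ldots,k\}\to\{1,\ldots,m\}$. Since both sets are polytopes and each is by definition the convex hull of its own set of deterministic/extremal points, it suffices to prove that the deterministic models of $B_{n,k,m}$ in the sense of Definition~\ref{cmdef}(a) are \emph{exactly} the local deterministic boxes just described. This is the one genuine thing to check.

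First I would recall from Remark~\ref{transversals} that a deterministic model on any scenario $H$ is the same thing as an exact transversal of $H$: a set $V_1\subseteq V(H)$ meeting every edge in exactly one vertex, with $p(v)=1$ iff $v\in V_1$. So the task reduces to identifying the exact transversals of $B_{n,k,m} = B_{1,k,m}^{\otimes n}$. For the single-party scenario $B_{1,k,m}$ (Figure~\ref{singleparty}), the edges are the $k$ disjoint $m$-element sets, so an exact transversal picks exactly one outcome per measurement; these are precisely the functions $g:\{1,\ldots,k\}\to\{1,\ldots,m\}$, giving the local deterministic boxes in the $n=1$ case. For general $n$, I would argue in two directions. \textbf{(Local deterministic $\Rightarrow$ deterministic model):} given functions $g_1,\ldots,g_n$, the product box assigns $1$ to the vertex $g_1(x_1)\ldots g_n(x_n)|x_1\ldots x_n$ and $0$ elsewhere; one checks it satisfies normalization on every edge of $B_{n,k,m}$. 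It is easiest to do this using that $p = p_1\otimes\cdots\otimes p_n$ where $p_i$ is the single-party deterministic model for $g_i$, and then invoke Proposition~\ref{PM_prod} (iterated, using that the multipartite product is built from binary ones, Lemma~\ref{allintermed}) to conclude $p\in\mathcal{G}(B_{n,k,m})$; since it is $\{0,1\}$-valued, it is deterministic. \textbf{(Deterministic model $\Rightarrow$ local deterministic):} let $p$ be a deterministic model on $B_{n,k,m}$. By Corollary~\ref{multinosigcor}, $p$ is no-signaling, so the marginal $p(a_i|x_i)$ is well-defined, and being a marginal of a $\{0,1\}$-valued distribution subject to normalization on the simultaneous-measurement edges, it is itself $\{0,1\}$-valued and normalized in $a_i$; hence $p(a_i|x_i)=\delta_{a_i,g_i(x_i)}$ for some function $g_i$. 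It then remains to show $p$ factorizes as the product of these marginals. For this I would use the edges of $B_{n,k,m}$ of ``signaling'' type from~\eqref{FReq}/\eqref{nFRedge}: normalization on a suitable such edge, together with the fact that $p$ is already concentrated at $a_i=g_i(x_i)$ for the measured parties, forces $p(a_1\ldots a_n|x_1\ldots x_n)$ to equal $1$ exactly when $a_i=g_i(x_i)$ for all $i$. Concretely, fix settings $x_1,\ldots,x_n$; pick the edge in which party $1$ measures $x_1$, and each subsequent party $j$ measures the constant $x_j$; normalization and the deterministic marginals pin down the support to the single vertex $g_1(x_1)\ldots g_n(x_n)|x_1\ldots x_n$.

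The main obstacle is the factorization step in the second direction: knowing the single-party marginals are deterministic does not by itself force the joint distribution to be their product (two deterministic single-party marginals could in principle be glued by some correlated $\{0,1\}$ table), so one must genuinely exploit the extra edges of the Foulis-Randall product — this is exactly where $B_{n,k,m}$ being defined as an FR-product rather than a direct product matters. Once the deterministic models are identified with the local deterministic boxes, the equality $\mathcal{C}(B_{n,k,m}) = \text{(standard Bell polytope)}$ is immediate, since both are by definition the convex hull of the same finite point set. I would also remark that $\mathcal{C}(B_{n,k,m})\subseteq\mathcal{G}(B_{n,k,m})$ is automatic and that this, combined with Proposition~\ref{Bnosig}, recovers the textbook picture: local $\subseteq$ no-signaling.
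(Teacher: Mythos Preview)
Your proposal is correct and follows essentially the same route as the paper, which in one line invokes Proposition~\ref{Bnosig} applied to deterministic models: a deterministic model on $B_{n,k,m}$ is a $\{0,1\}$-valued no-signaling box, and such a box is automatically a local deterministic product. Your concern about the factorization step being the ``main obstacle'' is overblown: once normalization on the simultaneous-measurement edges pins the joint for each setting tuple to a single point mass, no-signaling (already established via Proposition~\ref{Bnosig}/Corollary~\ref{multinosigcor}) forces each coordinate of that point to depend only on its own setting, so the product form is immediate --- no further edge-chasing beyond what is already absorbed in the no-signaling argument is required.
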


\begin{proof}
This is clear since one way to define the Bell polytope is as the convex hull of deterministic models~\cite{Fine}, and a deterministic model in the contextuality scenario $B_{n,k,m}$ is the same as a local deterministic model in the Bell sense. (This follows e.g.~from an application of Proposition~\ref{Bnosig} to deterministic models.)
\end{proof}

\subsection{Classicality from the fractional packing number}

We now start to relate contextuality scenarios and probabilistic models to graph theory and show how to detect classicality using the \emph{weighted fractional packing number} $\alpha^*$ of the non-orthogonality graph (see Appendix~\ref{appcap} for the definitions of graph-theoretic invariants).
\begin{prop}
\label{Cvsfpn}
A probabilistic model $p\in\mathcal{G}(H)$ is in $\mathcal{C}(H)$ if and only if 
\[
\alpha^*(\mathrm{NO}(H),p)=1.
\]
\end{prop}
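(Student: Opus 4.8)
The plan is to unwind the definition of the weighted fractional packing number $\alpha^*(\mathrm{NO}(H),p)$ and match it with the definition of classicality. Recall that for a weighted graph $(G,w)$, the fractional packing number $\alpha^*(G,w)$ is the value of the linear program $\max \sum_{v} w(v) x(v)$ subject to $x \geq 0$ and $\sum_{v\in K} x(v) \leq 1$ for every clique $K$ of $G$. Here the weight is $w = p$, and the cliques of $\mathrm{NO}(H)$ are sets of pairwise non-orthogonal vertices. The crucial combinatorial observation is that the edges $e\in E(H)$ are cliques in $\mathrm{NO}(H)$ (all vertices in an edge are pairwise orthogonal, hence non-adjacent in $\mathrm{NO}(H)$, so... wait---they are \emph{non}-adjacent, so an edge is an \emph{independent} set, not a clique). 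I need to be careful: in $\mathrm{NO}(H)$ two vertices are adjacent iff they do \emph{not} share an edge, so a hyperedge $e$ is an independent set of $\mathrm{NO}(H)$, and a clique of $\mathrm{NO}(H)$ is a set of vertices no two of which lie in a common hyperedge. So the LP constraint $\sum_{v\in K}x(v)\le 1$ over cliques $K$ does \emph{not} directly see the hyperedges. The right way to connect things is through the \emph{complement}: $\alpha^*(\mathrm{NO}(H),p) = \chi_f^*(\overline{\mathrm{NO}(H)},p)$-type duality, or more directly, fractional packing of a graph equals fractional covering by cliques, i.e. the LP dual asks for a way to write the all-(weighted) distribution as a fractional combination of cliques. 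A clique of $\mathrm{NO}(H)$ that is moreover the support pattern of a deterministic model $V_1$ (an exact transversal) is exactly what we want, since $V_1$ hits every edge once and distinct vertices of $V_1$ lie in no common edge.

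So the key steps I would carry out are as follows. First, write down the LP defining $\alpha^*(\mathrm{NO}(H),p)$ and its dual. The primal is $\max\{\sum_v p(v)x(v) : x\ge 0,\ \sum_{v\in K}x(v)\le 1\ \forall \text{cliques } K\}$; the dual is $\min\{\sum_K y_K : y\ge 0,\ \sum_{K\ni v} y_K \ge p(v)\ \forall v\}$, a fractional clique cover weighted by $p$. Second, observe that for \emph{every} probabilistic model $p$ one has $\alpha^*(\mathrm{NO}(H),p)\ge 1$: take any edge $e_0\in E(H)$; it is an independent set, so... no, again that's the wrong direction. The clean lower bound uses the primal feasible point obtained from a fixed maximal clique; more robustly, one shows $\alpha^*\ge 1$ by exhibiting, for any edge $e$, a fractional packing summing to $1$ over $e$---but $e$ is independent, so $x=\mathbbm{1}_e$ satisfies all clique constraints trivially (each clique meets $e$ in at most one vertex!), and $\sum_v p(v)x(v) = \sum_{v\in e}p(v) = 1$. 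That is the point: an independent set automatically satisfies every clique inequality with room to spare, and normalization gives value exactly $1$. Hence $\alpha^*(\mathrm{NO}(H),p)\ge 1$ always. Third, for the ``if'' direction, suppose $\alpha^*(\mathrm{NO}(H),p)=1$. By LP duality there is a fractional clique cover $y$ with $\sum_K y_K = 1$ and $\sum_{K\ni v} y_K \ge p(v)$ for all $v$; combined with the fact that summing the covering inequalities over any edge $e$ forces, via $\sum_{v\in e}p(v)=1$ and $\sum_K y_K=1$, that the cover is in fact \emph{tight} on every vertex and that every clique $K$ appearing with $y_K>0$ meets every edge $e$ in exactly one vertex---i.e.\ $K$ is an exact transversal, hence the support of a deterministic model $p_K$. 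Then $p = \sum_K y_K p_K$ exhibits $p$ as a convex combination of deterministic models, so $p\in\mathcal C(H)$. Fourth, for the ``only if'' direction, if $p=\sum_\lambda q_\lambda p_\lambda$ with each $p_\lambda$ deterministic with support (exact transversal) $V_\lambda$, then the $V_\lambda$ are cliques of $\mathrm{NO}(H)$ and $y_{V_\lambda} := q_\lambda$ is a fractional clique cover with $\sum y = 1$ and $\sum_{V_\lambda \ni v} q_\lambda = \sum_\lambda q_\lambda p_\lambda(v) = p(v)\ge p(v)$; hence $\alpha^*(\mathrm{NO}(H),p)\le 1$, and with the universal lower bound we get equality.

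The step I expect to be the main obstacle is the ``if'' direction, specifically extracting from an optimal fractional clique cover the conclusion that the cliques used can be taken to be \emph{exact transversals} rather than arbitrary cliques. A priori a clique $K$ of $\mathrm{NO}(H)$ need only be a partial transversal---a set meeting each edge at most once---and need not meet every edge. The argument must show that the normalization constraints $\sum_{v\in e}p(v)=1$, together with $\sum_K y_K = 1$, force the inequalities $\sum_{K\ni v}y_K\ge p(v)$ to be equalities and force each $K$ in the support of $y$ to meet each edge exactly once; the slack in any one of these would push $\sum_K y_K$ strictly above $1$. One has to run this counting carefully and also handle the degenerate case $\mathcal C(H)=\emptyset$ (no exact transversal), where one should check that then $\alpha^*(\mathrm{NO}(H),p)>1$ for every $p$, consistent with the claim. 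I would also double-check the edge cases where some $p(v)=0$, since then $v$ may be dropped from the support cliques without affecting anything. The remaining steps are routine LP duality and bookkeeping.
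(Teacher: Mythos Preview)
Your proposal is correct and follows essentially the same route as the paper: the lower bound $\alpha^*\ge 1$ from the primal-feasible indicator of an edge, LP duality for the ``if'' direction with the chain of (in)equalities $1=\sum_{v\in e}p(v)\le\sum_{v\in e}\sum_{K\ni v}y_K=\sum_K y_K\,|K\cap e|\le\sum_K y_K=1$ forcing each clique in the support of the optimal cover to meet every edge exactly once (hence to be an exact transversal) and the covering inequalities to be tight. The only cosmetic difference is in the ``only if'' direction, where you exhibit a dual-feasible clique cover of value $1$ from the classical decomposition, whereas the paper argues directly on the primal by noting that for deterministic $p$ the support $V_1$ is itself a clique of $\mathrm{NO}(H)$, so $\sum_v q(v)p(v)=\sum_{v\in V_1}q(v)\le 1$ for any primal-feasible $q$; these are equivalent one-liners.
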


The normalization $\sum_{v\in e} p(v)=1$ for every $e\in E(H)$ implies that $\alpha^*(\mathrm{NO}(H),p)\geq 1$, so that the condition $\alpha^*(\mathrm{NO}(H),p)=1$ is equivalent to the seemingly weaker requirement $\alpha^*(\mathrm{NO}(H),p)\leq 1$, which we use in the first part of the proof.

\begin{proof}
We start by showing that if $p$ is classical, then $\alpha^*(\mathrm{NO}(H),p)\leq 1$. By definition, $\alpha^*(\mathrm{NO}(H),p)\leq 1$ means that if $q:V(H)\to [0,1]$ are vertex weights satisfying $\sum_{v \in C} q(v) \leq 1$ for all cliques $C \subseteq\mathrm{NO}(H)$, then also
\beq
\label{sumqp}
\sum_{v\in V(H)} q(v)\, p(v) \leq 1.
\eeq
In order to prove this for all classical $p$, it is sufficient to consider deterministic $p$. In this case, the associated set $V_1=\{v\in V(H)\:|\: p(v)=1\}$ is itself a clique in ${\mathrm{NO}(H)}$, while all other $p(v)$ vanish, and hence~(\ref{sumqp}) follows from the assumption on $q$.

For the other direction, we use the dual formulation~(\ref{fpndual}) of the weighted fractional packing number. The assumption $\alpha^*(\mathrm{NO}(H),p)=1$ then means that there exists a number $x_C\geq 0$ associated to every clique $C\subseteq\mathrm{NO}(H)$ such that $p(v) \leq \sum_{C\ni v} x_C$ and $\sum_{C} x_C = 1$. We claim that every $C$ for which $x_C\neq 0$ corresponds to a deterministic model via~(\ref{V1}); in other words, if $x_C\neq 0$, then $|e\cap C|=1$ for every $e\in E(H)$. First, $|e\cap C|\leq 1$, since $e$ is an independent set in $\mathrm{NO}(H)$ while $C$ is a clique. Second, the chain of inequalities
$$
1 = \sum_{v\in e} p(v) \leq \sum_{v\in e} \sum_{C\ni v} x_C = \sum_{C \textrm{ with } C\cap e\neq \emptyset} x_C \leq \sum_C x_C = 1
$$
actually has to be a chain of equalities, which proves the claim that if $x_C\neq 0$, then $|e\cap C|=1$ for every $e\in E(H)$. Furthermore, we also conclude that $p(v) = \sum_{C\ni v} x_C$, or $p = \sum_C x_C \mathbbm{1}_C$. This is an explicit decomposition of $p$ as a convex combination of deterministic models.
\end{proof}

\begin{prob}
Can this result be used to derive a combinatorial characterization of the facets of $\mathcal{C}(H)$, similar in spirit to Theorem~\ref{extchar}?
\end{prob}

\subsection{Classical models on products}

In particular for Bell scenarios, which are explicitly defined as products~\eqref{BSdefn}, it is important to understand what a classical model on a product scenario looks like. We start with the case of a product of two scenarios before considering products of more than two scenarios.

\begin{prop}
\beq
\label{Cproduct}
\mathcal{C}(H_A\otimes H_B) = \mathrm{conv}\left( \mathcal{C}(H_A) \otimes \mathcal{C}(H_B) \right),
\eeq
where $\mathrm{conv(S)}$ denotes the convex hull of the elements in $\mathrm{S}$.
\end{prop}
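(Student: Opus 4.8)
The plan is to prove the two inclusions separately, with the nontrivial direction being $\mathcal{C}(H_A\otimes H_B)\subseteq\mathrm{conv}(\mathcal{C}(H_A)\otimes\mathcal{C}(H_B))$. For the easy inclusion $\supseteq$, note first that if $p_A$ and $p_B$ are deterministic models on $H_A$ and $H_B$, then $p_A\otimes p_B$ is a probabilistic model on $H_A\otimes H_B$ by Proposition~\ref{PM_prod}, and it takes only values in $\{0,1\}$, hence is deterministic; so $p_A\otimes p_B\in\mathcal{C}(H_A\otimes H_B)$. Since $\mathcal{C}(H_A\otimes H_B)$ is convex, it contains the convex hull of all such products; by bilinearity of $\otimes$ in the two arguments, the convex hull of $\{p_A\otimes p_B : p_A\text{ det.},\ p_B\text{ det.}\}$ equals $\mathrm{conv}(\mathcal{C}(H_A)\otimes\mathcal{C}(H_B))$, giving $\supseteq$.

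For the hard inclusion $\subseteq$, since both sides are polytopes it suffices to show that every \emph{extreme point} of $\mathcal{C}(H_A\otimes H_B)$ lies in $\mathrm{conv}(\mathcal{C}(H_A)\otimes\mathcal{C}(H_B))$; and the extreme points of $\mathcal{C}(H_A\otimes H_B)$ are exactly the deterministic models on $H_A\otimes H_B$. So let $p$ be a deterministic model on $H_A\otimes H_B$, with exact transversal $V_1=\{(v_A,v_B) : p(v_A,v_B)=1\}$ (Remark~\ref{transversals}). The key step is to show that $V_1$ is a \emph{product} $V_1 = A_1\times B_1$ for exact transversals $A_1\subseteq V(H_A)$, $B_1\subseteq V(H_B)$; then $p = p_{A_1}\otimes p_{B_1}$ with $p_{A_1},p_{B_1}$ the corresponding deterministic models, so $p\in\mathcal{C}(H_A)\otimes\mathcal{C}(H_B)$. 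To see that $V_1$ is a product, observe that $p$ is in particular a no-signaling model (Proposition~\ref{nosigprop}): fixing $v_A\in V(H_A)$, the marginal $w\mapsto\sum_{\{v_A\}\subseteq\ldots} $ — more precisely, for any edge $e_B\in E(H_B)$ we have $\sum_{w\in e_B}p(v_A,w)$ is independent of $e_B$; evaluating this common value, it equals $p_A(v_A)$ for a well-defined deterministic marginal $p_A\in\{0,1\}^{V(H_A)}$ (deterministic because $p$ only takes values $0,1$ and the marginal sum of a $\{0,1\}$-valued function over an exact-transversal-like constraint is again $0$ or $1$ — here one uses that within the simultaneous-measurement edge $e_A\times e_B$ exactly one vertex gets value $1$). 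Symmetrically one gets a deterministic marginal $p_B$ on $H_B$. One then checks $p(v_A,v_B)=p_A(v_A)p_B(v_B)$: the product measurement $e_A\times e_B$ containing $(v_A,v_B)$ forces a single $1$, and a short case analysis using the two marginals pins it down. Setting $A_1=\{v_A:p_A(v_A)=1\}$, $B_1=\{v_B:p_B(v_B)=1\}$ gives $V_1=A_1\times B_1$, and $A_1$, $B_1$ are exact transversals because $p_A$, $p_B$ are deterministic models (the marginals satisfy normalization on every edge by the no-signaling identities).

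The main obstacle is the argument that the marginals $p_A$ and $p_B$ are genuinely \emph{deterministic} and that $p$ factors through them, i.e.~that the $\{0,1\}$-structure is preserved under marginalization and that no "entangled" deterministic transversal of $H_A\otimes H_B$ exists. This is where the specific combinatorics of the Foulis-Randall edges matters: a priori a deterministic model on $H_A\otimes H_B$ need only hit each FR-edge once, and one must use the presence of the simultaneous-measurement edges $e_A\times e_B$ together with no-signaling to rule out transversals that are not rectangular. I would handle this by the explicit computation above — fix $v_A$ with $p_A(v_A)=1$ and show the restriction $v_B\mapsto p(v_A,v_B)$ is a deterministic model on $H_B$, while for $v_A$ with $p_A(v_A)=0$ it is identically $0$ — which is a finite, routine verification once the no-signaling equations from Proposition~\ref{nosigprop} are in hand. (The statement for products of more than two scenarios then follows by iterating, using associativity-up-to-equivalence from Appendix~\ref{multiproducts}, but that is a separate remark.)
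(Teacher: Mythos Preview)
Your proposal is correct and follows essentially the same strategy as the paper: both directions reduce to deterministic models, and for the nontrivial inclusion one shows that any deterministic model on $H_A\otimes H_B$ factors as $p_A\otimes p_B$ with $p_A,p_B$ deterministic. The only cosmetic difference is in how the marginals are extracted: the paper defines $p_A(v_A)=1$ iff $(v_A,v_B)\in V_1$ for some $v_B$ and then uses the orthogonality structure of the FR product (as in the proof of Lemma~\ref{ortproduct}) to verify that $p_A$ is a valid deterministic model, whereas you invoke no-signaling (Proposition~\ref{nosigprop}) to get well-defined marginals and then use the product edges $e_A\times e_B$ to see they are $\{0,1\}$-valued; the two arguments unwind to the same computation.
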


This is supposed to be seen in contrast to Remark~\ref{Gproduct}. 

\begin{proof}
Let $p_A\in\mathcal{C}(H_A)$ and $p_B\in\mathcal{C}(H_B)$ be deterministic models. Then also $p_A\otimes p_B$ is a deterministic model on $H_A\otimes H_B$, which proves $\mathcal{C}(H_A\otimes H_B) \supseteq \mathrm{conv}\left( \mathcal{C}(H_A) \otimes \mathcal{C}(H_B) \right)$ by convexity of $\mathcal{C}(H_A\otimes H_B)$.

Conversely, consider a deterministic model $p_{AB}$ on $H_A\otimes H_B$. Let $V_1$ be the set of vertices in $H_A\otimes H_B$ for which $p_{AB}(v)=1$, and define $p_A\in\mathcal{C}(H_A)$ and $p_B\in\mathcal{C}(H_B)$ as follows: for each $v_A \in V_A$, set $p_A(v_A)=1$ if and only if there exists $v_B \in V_B$ such that  $(v_A, v_B) \in V_1$, and $p_A(v_A)=0$  otherwise. Similarly, define $p_B$. We want to check that these are indeed probabilistic models, i.e.~show that $\sum_{v_A \in e_A} p_A(v_A) =1$ and $\sum_{v_B \in e_B} p_B(v_B) =1$ for every edge $e_A$ of $H_A$ and $e_B$ of $H_B$. As $V_1$ is an exact transversal of $H_A\otimes H_B$, no two elements of $V_1$ belong to the same edge. This implies that if both $(v_A, v_B),(v'_A, v'_B) \in V_1$, then there is no $e_A\in E(H_A)$ with $\{v_A,v'_A\} \subseteq e_A$: for if there was, then we could construct an edge in $H_A\otimes H_B$ as in the proof of Lemma~\ref{ortproduct} which contains both $(u_A,u_B)$ and $(u'_A,u'_B)$. It follows that for each edge $e_A \in E_A$, there is at most one vertex $v_A \in e_A$ with $p_A(v_A)=1$. In fact, there is exactly one such vertex, since $e_A\times e_B$ is an edge on $H_A\otimes H_B$ for any $e_B\in E(H_B)$, and this edge must intersect $V_1$. Hence, $p_A$ is a deterministic probabilistic model on $H_A$. The same applies to $p_B$. Since $p_{AB}=p_A\otimes p_B$ by construction, the claim follows by convexity.
\end{proof}

For more than two contextuality scenarios $H_1,\ldots,H_n$, we write $\otimes_{i=1}^n H_i$ for any of the products discussed in Section~\ref{higherprodsmain}.

\begin{cor}
\label{multiC}
\[
\mathcal{C}\left(\otimes_{i=1}^n H_i\right) = \mathrm{conv}\left(\mathcal{C}(H_1)\otimes\ldots\otimes\mathcal{C}(H_n)\right).
\]
\end{cor}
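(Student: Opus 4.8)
The plan is to derive the multipartite statement from the bipartite case, Proposition~\ref{Cproduct}, by induction on $n$. The base cases $n=1$ and $n=2$ are trivial and Proposition~\ref{Cproduct}, respectively. For the inductive step, I would like to write $\otimes_{i=1}^n H_i$ as a binary product $H_1 \otimes \left(\otimes_{i=2}^n H_i\right)$ and apply Proposition~\ref{Cproduct} together with the induction hypothesis. The only subtlety is that Proposition~\ref{Cproduct} was stated for the binary Foulis-Randall product, whereas $\otimes_{i=1}^n H_i$ stands for any of the minimal, maximal, or intermediate products of Section~\ref{higherprodsmain}; and $H_1 \otimes \left(\otimes_{i=2}^n H_i\right)$ is yet another intermediate product of all $n$ scenarios. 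So I must first invoke the fact (referenced as Corollary~\ref{equalcomp} in the proof of Corollary~\ref{multinosigcor}, to be established in Appendix~\ref{multiproducts}) that all these products are equivalent as far as classical models are concerned, i.e.~$\mathcal{C}$ does not depend on which product one chooses. Granting that, I may freely pick the bracketing $H_1 \otimes \left(\otimes_{i=2}^n H_i\right)$.

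The argument then runs as follows. By the equivalence of products for classical models, $\mathcal{C}\left(\otimes_{i=1}^n H_i\right) = \mathcal{C}\left(H_1 \otimes \left(\otimes_{i=2}^n H_i\right)\right)$. Proposition~\ref{Cproduct} gives
\[
\mathcal{C}\left(H_1 \otimes \left(\otimes_{i=2}^n H_i\right)\right) = \mathrm{conv}\left(\mathcal{C}(H_1) \otimes \mathcal{C}\left(\otimes_{i=2}^n H_i\right)\right).
\]
By the induction hypothesis, $\mathcal{C}\left(\otimes_{i=2}^n H_i\right) = \mathrm{conv}\left(\mathcal{C}(H_2)\otimes\ldots\otimes\mathcal{C}(H_n)\right)$. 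Substituting this in, and using that $p_1 \otimes (\,\cdot\,)$ is an affine map so that it commutes with taking convex hulls, one obtains
\[
\mathrm{conv}\left(\mathcal{C}(H_1) \otimes \mathrm{conv}\left(\mathcal{C}(H_2)\otimes\ldots\otimes\mathcal{C}(H_n)\right)\right) = \mathrm{conv}\left(\mathcal{C}(H_1)\otimes\mathcal{C}(H_2)\otimes\ldots\otimes\mathcal{C}(H_n)\right),
\]
which is the claim. The last equality is the elementary fact that $\mathrm{conv}(A \otimes \mathrm{conv}(B)) = \mathrm{conv}(A\otimes B)$ for the bilinear tensoring of probability vectors, which follows by expanding a convex combination in the second argument and distributing.

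The main obstacle, such as it is, is entirely bookkeeping rather than mathematics: one must be careful that the forward reference to the product-equivalence result (Corollary~\ref{equalcomp} in Appendix~\ref{multiproducts}) genuinely covers classical models and not merely general probabilistic models, and that the inductive hypothesis is applied to a product of $n-1$ scenarios which is itself one of the admissible products. Both points are handled by the same appendix, so no genuinely new ingredient is needed beyond Proposition~\ref{Cproduct}. An alternative, more self-contained route would be to mimic the proof of Proposition~\ref{Cproduct} directly in the $n$-partite setting: given a deterministic model $p$ on $\otimes_{i=1}^n H_i$ with one-set $V_1$ (an exact transversal), define $p_i$ on $H_i$ by $p_i(v_i) = 1$ iff $v_i$ is the $i$-th component of some element of $V_i$, check using the $n$-partite analogue of the edge construction in Lemma~\ref{ortproduct} (any independent measurement $\prod_i e_i$ is an edge of the product) that each $p_i$ is a well-defined deterministic model, and verify $p = p_1\otimes\ldots\otimes p_n$; then conclude by convexity. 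I would present the short inductive proof as the main argument and perhaps remark that the direct argument is also available.
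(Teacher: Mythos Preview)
Your proposal is correct and matches the paper's own proof essentially verbatim: the paper invokes Appendix~\ref{multiproducts} to reduce to an iterated binary product and then says the claim ``follows from repeated application of the previous proposition,'' which is exactly your induction using Proposition~\ref{Cproduct}. Your explicit treatment of the convex-hull step $\mathrm{conv}(A\otimes\mathrm{conv}(B))=\mathrm{conv}(A\otimes B)$ is a detail the paper leaves implicit, and your alternative direct $n$-partite argument is a reasonable aside but not needed.
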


\begin{proof}
As shown in Appendix~\ref{multiproducts}, the left-hand side does not depend on the particular choice of product, so it is enough to prove the claim when $\otimes_{i=1}^n H_i$ stands for an iterated binary product. This follows from repeated application of the previous proposition.
\end{proof}

For Bell scenarios, this indeed recovers the usual Bell polytopes; in terms of the notation of Proposition~\ref{Bnosig}, we have:

\begin{ex}
\label{Bell1CG}
For the `Bell scenario' $B_{1,k,m}$ with one party, $\mathcal{C}(B_{1,k,m}) = \mathcal{G}(B_{1,k,m})$. This corresponds to the known fact that any probabilistic local strategy in a Bell scenario can be rewritten as a convex combination of deterministic local strategies. In our formalism, this can be seen e.g.~as a consequence of the upcoming Theorem~\ref{perfection} together with the fact that $\mathrm{NO}(B_{1,k,m})$ does not contain any independent sets other than the edges, which gives $\mathcal{G}(B_{1,k,m}) = \CE^1(B_{1,k,m})$. We therefore obtain that
\[
\mathcal{C}(B_{n,k,m}) = \mathrm{conv}\left(\mathcal{C}(B_{1,k,m})\otimes\ldots\otimes\mathcal{C}(B_{1,k,m})\right) = \mathrm{conv}\left(\mathcal{G}(B_{1,k,m})\otimes\ldots\otimes\mathcal{G}(B_{1,k,m})\right),
\]
which can be regarded as one version of Fine's theorem~\cite{Fine}.
\end{ex}

\newpage
\section{\textbf{Quantum models}} 
\label{qm}

Quantum models are those probabilistic models which can arise in a world complying with the laws of quantum theory. Understanding the set of quantum models represents one approach for understanding the counterintuitive aspects of quantum theory: if one can find a simple physical or information-theoretic principle which characterizes the set of quantum models, one would have found an indirect explanation for why our world obeys the laws of quantum theory.

In this section then we study how these models may be included in our formalism.We begin with the basic definitions of quantum models, then study quantum models on products, and conclude this section with an explanation of the Kochen--Specker theorem within our framework. 

\subsection{Definition and basic properties}

We denote by $\mathcal{B}(\H)$ the set of all bounded operators on a Hilbert space $\H$. The notation $\mathcal{B}_+(\mathcal{H})$ stands for the subset of positive semi-definite operators. A quantum state $\rho$ is given by a normalized density operator, i.e.~by some $\rho \in \mathcal{B}_{+,1}(\mathcal{H})$, where $\mathcal{B}_{+,1}(\mathcal{H}) \defin \left\{\rho \in \mathcal{B}_{+}(\mathcal{H})\:|\: \mathrm{tr}\, \rho=1 \right\}$.
\begin{defn}
\label{qmdef}
Let $H$ be a contextuality scenario. An assignment of probabilities $p: V(H)\to [0,1]$ is a \emph{quantum model} if there exist a Hilbert space $\mathcal{H}$, a quantum state $\rho\in\B_{+,1}(\H)$ and a projection operator $P_v\in\B(\H)$ associated to every $v\in V$ which constitute projective measurements in the sense that
\beq
\label{qmeas}
\sum_{v\in e} P_v = \mathbbm{1}_{\H} \quad\forall e\in E(H) ,
\eeq
and reproduce the given probabilities,
\beq
\label{qrep}
p(v) = \tr\left( \rho P_v \right) \quad\forall v\in V(H) .
\eeq
\end{defn}

The set of all quantum models is the \emph{quantum set} $\mathcal{Q}(H)$. Thanks to~(\ref{qmeas}), it is clear that $\mathcal{Q}(H)\subseteq\mathcal{G}(H)$, i.e.~every quantum model is a probabilistic model.

\begin{prop}
\label{Qprop}
\begin{enumerate}
\item\label{Qconvex} $\mathcal{Q}(H)$ is convex.
\item\label{CsubQ} Every classical model is a quantum model: $\mathcal{C}(H)\subseteq \mathcal{Q}(H)$.
\end{enumerate}
\end{prop}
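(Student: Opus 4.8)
The plan is to prove part~\ref{Qconvex} by a direct sum construction and then obtain part~\ref{CsubQ} from part~\ref{Qconvex} together with the trivial observation that every deterministic model is quantum. First, for part~\ref{Qconvex}: let $p_0,p_1\in\mathcal{Q}(H)$ be witnessed by Hilbert spaces $\H_0,\H_1$, states $\rho_0\in\B_{+,1}(\H_0)$, $\rho_1\in\B_{+,1}(\H_1)$, and projections $P^0_v,P^1_v$ satisfying~(\ref{qmeas}) and~(\ref{qrep}). Given $\lambda\in[0,1]$, I would take $\H\defin\H_0\oplus\H_1$, the state $\rho\defin\lambda\rho_0\oplus(1-\lambda)\rho_1$, which is positive semidefinite with unit trace, and the block-diagonal operators $P_v\defin P^0_v\oplus P^1_v$, which are again projections. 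For each edge $e$ one has $\sum_{v\in e}P_v=\bigl(\sum_{v\in e}P^0_v\bigr)\oplus\bigl(\sum_{v\in e}P^1_v\bigr)=\mathbbm{1}_{\H_0}\oplus\mathbbm{1}_{\H_1}=\mathbbm{1}_{\H}$, so~(\ref{qmeas}) holds, and $\tr(\rho P_v)=\lambda\tr(\rho_0 P^0_v)+(1-\lambda)\tr(\rho_1 P^1_v)=\lambda p_0(v)+(1-\lambda)p_1(v)$. Hence the convex combination lies in $\mathcal{Q}(H)$.

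For part~\ref{CsubQ}, by Definition~\ref{cmdef} every classical model is a finite convex combination of deterministic models (finite because $H$ is finite), so in view of part~\ref{Qconvex} it suffices to show that an arbitrary deterministic model $p$ is quantum. By Remark~\ref{transversals}, $p$ corresponds to an exact transversal $V_1=\{v\in V(H):p(v)=1\}$, a vertex set meeting every edge in exactly one vertex. I would then take $\H\defin\C$, $\rho\defin 1$, and $P_v\defin 1$ if $v\in V_1$ and $P_v\defin 0$ otherwise; these are scalar projections, $\sum_{v\in e}P_v=|e\cap V_1|=1=\mathbbm{1}_{\C}$ for every edge $e$ by the exact-transversal property, and $\tr(\rho P_v)=P_v=p(v)$. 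So $p\in\mathcal{Q}(H)$, and part~\ref{Qconvex} then yields $\mathcal{C}(H)\subseteq\mathcal{Q}(H)$; if $\mathcal{C}(H)=\emptyset$ the inclusion is vacuous.

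As an alternative that avoids invoking convexity a second time, one can realize a classical model $p=\sum_\lambda q_\lambda p_\lambda$ with deterministic $p_\lambda$ in one stroke: take $\H$ with orthonormal basis $\{\,|\lambda\rangle\,\}$, the state $\rho=\sum_\lambda q_\lambda|\lambda\rangle\langle\lambda|$, and the commuting diagonal projections $P_v=\sum_{\lambda:\,p_\lambda(v)=1}|\lambda\rangle\langle\lambda|$; normalization over each edge holds because each $p_\lambda$ is a probabilistic model, and $\tr(\rho P_v)=\sum_\lambda q_\lambda p_\lambda(v)=p(v)$. I do not expect any genuine obstacle here: both statements are routine once one has the direct-sum/block-diagonal picture in mind. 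The only points deserving a line of care are the verification that the block-diagonal $P^0_v\oplus P^1_v$ still satisfy~(\ref{qmeas}), which reduces to $\mathbbm{1}_{\H_0}\oplus\mathbbm{1}_{\H_1}=\mathbbm{1}_{\H_0\oplus\H_1}$, and, in part~\ref{CsubQ}, the degenerate case where $H$ has no exact transversal and $\mathcal{C}(H)$ is empty.
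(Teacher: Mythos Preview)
Your proof is correct and matches the paper's own argument essentially line for line: the direct-sum construction for convexity and the one-dimensional realization $\H=\C$, $P_v=p(v)\cdot\mathbbm{1}$, $\rho=\mathbbm{1}$ for deterministic models, combined with part~\ref{Qconvex} to obtain $\mathcal{C}(H)\subseteq\mathcal{Q}(H)$. Your additional alternative (the diagonal realization over the hidden-variable basis) is a nice self-contained variant but not needed.
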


\begin{proof}
\begin{enumerate}
\item[\ref{Qconvex}] Let $p_1,p_2\in\mathcal{Q}(H)$ be quantum models described in terms of Hilbert spaces $\H_1$, $\H_2$, projection operators $P_{1,v}$, $P_{2,v}$ and states $\rho_1$, $\rho_2$ on the respective Hilbert space. Then for any coefficient $\lambda\in[0,1]$, we construct a quantum representation of $\lambda p_1 + (1-\lambda) p_2$ by setting
$$
\H \defin \H_1 \oplus \H_2,\qquad P_v\defin P_{1,v} \oplus P_{2,v},\qquad \rho \defin \lambda \rho_1 \oplus (1-\lambda) \rho_2 .
$$
It is immediate to verify that this is indeed a quantum representation of $\lambda p_1 + (1-\lambda) p_2$.
\item[\ref{CsubQ}] This follows from~\ref{Qconvex} upon showing that every deterministic model is quantum. A deterministic model $p$ can be seen to be quantum by setting $\H\defin\C$, $P_v \defin p(v)\cdot\mathbbm{1}$ and $\rho \defin \mathbbm{1}$.\qedhere
\end{enumerate}
\end{proof}

It is important to note that the dimension of $\mathcal{H}$ is not fixed in the definition of quantum model. In general, $\mathcal{H}$ can be infinite-dimensional, and we suspect that in some scenarios, allowing infinite-dimensional $\mathcal{H}$ is necessary for obtaining all quantum models; see Section~\ref{invsandwichsec} for a discussion.

We now prove that there is no quantum analogue of Proposition~\ref{Cvsfpn} relating the property of a probabilistic model $p$ to be quantum to a graph invariant of $\mathrm{NO}(H)$ with weights $p$. Unfortunately, this will require some of the concepts and results from upcoming sections.

\begin{thm}
\label{Qinv}
There exist two contextuality scenarios $H$ and $H'$ with $V(H)=V(H')$ and $\mathrm{NO}(H)=\mathrm{NO}(H')$ together with vertex weights $v\mapsto p(v)$ which define a probabilistic model $p$ both on $H$ and $H'$ such that $p$ is quantum on $H$, but not on $H'$.
\end{thm}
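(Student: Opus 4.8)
The plan is to reduce the statement to the fact (Theorem~\ref{Q1vsLov}) that the first level of the semidefinite hierarchy is characterized by the weighted Lov\'asz number: $p\in\mathcal{Q}_1(H)$ if and only if $\vartheta(\mathrm{NO}(H),p)\le 1$. Since the right-hand side depends on $H$ only through $\mathrm{NO}(H)$ and the weights, we get $\mathcal{Q}_1(H)=\mathcal{Q}_1(H')$ as subsets of $\R^{V(H)}$ whenever $\mathrm{NO}(H)=\mathrm{NO}(H')$. Hence it is enough to produce two scenarios $H,H'$ on the same vertex set with $\mathrm{NO}(H)=\mathrm{NO}(H')$ and a common probabilistic model $p\in\mathcal{G}(H)\cap\mathcal{G}(H')$ such that $p\in\mathcal{Q}(H)$ but $p\notin\mathcal{Q}(H')$. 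Any such $p$ then lies in $\mathcal{Q}_1(H')\setminus\mathcal{Q}(H')$, which already tells us two things: the example has to live in a regime where the first level of the hierarchy is not tight, and the failure $p\notin\mathcal{Q}(H')$ cannot be certified by a Lov\'asz-number (or any $\mathrm{NO}$-only) computation — a genuinely operator-algebraic obstruction is required. This is the reason the proof needs input from later sections, exactly as announced before the statement.

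The mechanism I would use to pass from $H$ to $H'$ is \emph{completion}: adjoin to $H$ one or more \emphalt{virtual edges}, i.e.\ vertex subsets $e$ that are independent sets of $\mathrm{NO}(H)$ (so that no non-orthogonal pair gets merged and $\mathrm{NO}$ is unaffected) and that already satisfy $\sum_{v\in e}p(v)=1$ for every $p\in\mathcal{G}(H)$ (so that $\mathcal{G}$ is unchanged, which supplies the required common probabilistic model). The crucial observation is that such an $e$ adds no constraint at the level of probabilities, but as a genuine edge it imposes the operator constraint $\sum_{v\in e}P_v=\mathbbm{1}$ on quantum models. In a quantum realization on $H$ the projections $P_v$ with $v\in e$ are automatically pairwise orthogonal (each pair already lies in a common edge of $H$) and satisfy $\tr(\rho\sum_{v\in e}P_v)=1$, yet the partial sum $\sum_{v\in e}P_v$ need not be the identity. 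So completion can strictly shrink $\mathcal{Q}$ while fixing both $\mathrm{NO}$ and $\mathcal{G}$, which is precisely the phenomenon we want.

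Concretely, I would search for a scenario $H$ carrying an \emphalt{explicit} quantum model $p$ — a Hilbert space, state and projections realizing it — together with a virtual edge $e$ as above for which one can prove that \emph{no} quantum realization of $p$ exists on the completed scenario $H'$. Proving this non-realizability is the heart of the matter: one must show that the operator constraints of all original edges, \emphalt{plus} the new completeness relation on $e$, \emphalt{plus} the fixed values $p(v)=\tr(\rho P_v)$, are jointly infeasible for every $\mathcal{H},\rho,\{P_v\}$. I expect to do this by choosing $H$ rigid enough that its edge structure pins down the relevant operators up to the point where the extra completeness relation forces a contradiction — e.g.\ two projections compelled to be simultaneously orthogonal and to have overlapping ranges, or a resolution of the identity compelled to be strictly subnormalized — exploiting the small-scenario analyses and rigidity arguments from the sections on $\mathcal{Q}_1$ and on consistent exclusivity. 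As a consistency check, $p$ is guaranteed to stay in $\mathcal{Q}_1(H')$ by the first paragraph, confirming that the obstruction is invisible to graph invariants.

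The main obstacle is therefore not the conceptual reduction but the construction and verification of the concrete example: one needs a scenario whose quantum set is understood well enough that adjoining a single completeness constraint provably empties the fibre over $p$, and this is where the actual work (and the borrowed results from later sections) goes.
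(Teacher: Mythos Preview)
Your reduction in the first paragraph is sound, and the observation that any example must sit in $\mathcal{Q}_1(H')\setminus\mathcal{Q}(H')$ is exactly the right sanity check. The problem is your proposed mechanism. You want to adjoin an independent set $e$ with $\sum_{v\in e}p(v)=1$ for \emph{every} $p\in\mathcal{G}(H)$, and you assert that in a quantum model on $H$ ``the partial sum $\sum_{v\in e}P_v$ need not be the identity''. But it must: given projections $(P_v)$ satisfying $\sum_{v\in f}P_v=\mathbbm{1}$ for all $f\in E(H)$, \emph{any} state $\rho'$ yields a probabilistic model $p'(v)=\tr(\rho'P_v)\in\mathcal{G}(H)$, hence $\tr\!\big(\rho'\sum_{v\in e}P_v\big)=1$ for all $\rho'$, which forces $\sum_{v\in e}P_v=\mathbbm{1}$. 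So adjoining such an $e$ never shrinks $\mathcal{Q}$; this is essentially the content of Proposition~\ref{HHbar}, which says $\mathcal{Q}(\bar H)=\mathcal{Q}(H)$. The requirement ``normalized for all of $\mathcal{G}(H)$'' is too strong to produce any operator-level obstruction.

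The paper's construction works precisely because the extra edge is \emph{not} a virtual edge in your sense: it is normalized for the one specific $p$ of interest but not for all of $\mathcal{G}(H)$. Concretely, one starts from a $p_0\in\mathcal{Q}_1(H_0)\setminus\mathcal{Q}(H_0)$, passes to the no-detection extension $H_1$ (where Proposition~\ref{collapse} gives $\mathcal{Q}(H_1)=\mathcal{Q}_1(H_1)$, so the extension $p_1$ of $p_0$ is quantum on $H_1$ but every realization must have some nonzero no-detection projection $P_{w_e}$), and then attaches a small gadget at each $w_e$. The gadget comes in two versions differing by a single edge that creates no new orthogonalities; with the edge, it forces $P_{w_e}=0$, contradicting the previous sentence, while without it one can explicitly write down a quantum realization. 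The extra edge genuinely changes $\mathcal{G}$---it is normalized only because the specific $p$ has $p(w_e)=0$---and that is exactly what allows it to kill the quantum model on $H'$ while leaving one alive on $H$.
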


\begin{proof}
Our construction appeals to two ingredients: first, any contextuality scenario $H_0$ with the property that $\mathcal{Q}(H_0)\subsetneq\mathcal{Q}_1(H_0)$, where $\mathcal{Q}_1$ is a semidefinite relaxation of $\mathcal{Q}$ which we will introduce in Section~\ref{dfnhierarchy} and discuss in detail in Section~\ref{secQ1lov}. As a concrete example of such an $H_0$, one may take the Bell scenario $B_{2,2,2}$. The second ingredient is a `gadget' depicted in Figure~\ref{gadget} which can, when suitably added to a given contextuality scenario, control whether a certain operator constraint is enforced by including the dashed edge or not including it.

So let us choose some $p_0 \in \mathcal{Q}_1(H_0) \setminus \mathcal{Q}(H_0)$. We will now modify $H_0$ in two different ways, obtaining scenarios $H$ and $H'$ as in the statement of the theorem, such that $p_0$ extends to the desired probabilistic model $p$. These extensions of $H_0$ are built in two steps: the first step is to use the construction of Section~\ref{CSWtransfer} to add `no-detection' events and obtain an extended scenario $H_1$ for which $\mathcal{Q}(H_1)=\mathcal{Q}_1(H_1)$. So the vertices of $H_1$ are those of $H_0$, together with one additional vertex for every edge of $H_0$,
\[
V(H_1)\defin V(H_0) \cup \{ w_e \::\: e\in E(H_0) \} ,
\]
and the edges are the same, except that each edge contains its corresponding no-detection event,
\[
E(H_1) \defin \left\{ e \cup \{w_e\} \::\: e\in E(H_0) \right\} .
\]
The probabilistic model $p_0$ is easily extended to this new scenario by assigning zero weight to the additional vertices:
$$ p_1(v) \defin \left\{
\begin{array}{ll}
p_0(v) &  \text{if} \quad v \in V(H_0),\\
0 & \text{otherwise.} 
\end{array}
\right.
$$
Thanks to Proposition~\ref{collapse}, the quantum set and its semidefinite relaxation coincide for the extended scenario, $\mathcal{Q}(H_1) =  \mathcal{Q}_1(H_1)$. The probabilistic model $p_1$ belongs to both sets, and when representing it as a quantum model by an assignment of projections $v\mapsto P_v$, it has the following crucial property: although $p_1(w_e)=0$ for every no-detection event $w_e$, there exists at least one no-detection event $w_e$ with $P_{w_e} \ne 0$. For if this were not the case, then $p_0$ would be a quantum model for $H_0$, in contrast to the assumption.

All that is left to do is to extend the scenario one more time with the help of the gadget of Figure~\ref{gadget}, which consists of 10 vertices (the vertex $w'$ present in both parts of the figure should be identified) and either 7 our 8 edges, with the dashed edge being either absent (in $H_\mathrm{gad}$) or present (in $H_\mathrm{gad}'$). More precisely, we construct two scenarios $H$ and $H'$ by taking $H_1$ and attaching a copy of the gadget at each no-detection event $w_e$, where the $w$ vertex of the gadget gets identified with $w_e$. Then $V(H)=V(H')$ and also $\mathrm{NO}(H)=\mathrm{NO}(H')$, since $H$ and $H'$ only differ by copies of the dashed edge of Figure~\ref{gadget}, which does not introduce new orthogonality relations.

We also extend the model $p_1\in\mathcal{G}(H_1)$ to $p\in\mathcal{G}(H^{(}{}'^{)})$ by defining
\beq
\label{gadgetprob}
p(v) \defin \left\{
\begin{array}{ll}
p_1(v) &  \text{if} \quad  v \in V(H_1),\\
1 & \text{if} \quad v \in \{t, t', x, x'\},\\
 0 & \text{otherwise,} 
\end{array}
\right.
\eeq
where it is understood that the second condition applies to all the copies of the vertices $t,t',x,x'$ of Figure~\ref{gadget} which we attached to $H_1$. It is easy to show that all normalization equations hold, including all those for the copies of the dashed edge.

Consider the scenario $H'$. Let us show by contradiction that $p\notin\mathcal{Q}(H')$. The gadget can be analyzed as follows. First, the normalization conditions for the left part of the scenarios depicted on Figure~\ref{gadget} ensure that both the weight and any projection corresponding to vertex $w'$ vanish, $P_{w'}=0$. Consider now the vertices on the right for the scenario $H_\mathrm{gad}'$, i.e.~including the dashed edge. An assignment of projections $P_v$ satisfying the normalization requirement must also satisfy $P_w=0$, since the equations
\[
P_{w'} + P_{x'} + P_{y'} = \mathbbm{1},\qquad P_w + P_{x'} + P_{y'} = \mathbbm{1}
\]
imply that $P_w = P_{w'} = 0$. (In terms of the concepts of Section~\ref{virtualcomp}, we could also state this as $\{w'\}\simeq\emptyset$.) Therefore, since we attached the gadget to each no-detection vertex $w_e$ and identified this vertex with $w$, we need to have $P_{w_e}=0$ for any quantum model on $H'$ and any no-detection event $w_e$. Then, if $p$ admitted a quantum model on $H'$, this would imply that the original $p_0$ must already have been quantum, which we assumed not to be the case.

On the other hand, we now show that $p$ is quantum on $H$. To this end, we take projections $P_v$ for the vertices in $H_1$ and a state $\rho$ which witness that $p_1\in\mathcal{Q}(H_1)$; it remains to assign projections to the vertices of each copy of the gadget such that the other probabilities in~\eqref{gadgetprob} are reproduced and normalization holds for all edges except for the dashed one. This can be done by putting
\[
P_t = P_{t'} = P_x \defin \mathbbm{1}, \qquad P_{x'} \defin \mathbbm{1}-P_w, \qquad P_y = P_w,
\]
where $P_w = P_{w_e}$ is part of the given data, and we assign the zero projection to all other vertices. Checking normalization is straightforward, while the probabilities are reproduced thanks to the assumption $\tr(\rho P_{w_e}) = 0$, which implies $\tr(\rho P_{x'}) = 1$. This proves that $p \in \mathcal{Q}(H)$.
\end{proof}

\begin{figure}
\begin{center}
\begin{tikzpicture}[scale=1.4]
\node[draw,shape=circle,fill,scale=.5] (x) at (2.2,0) {} ;
\node[left of=x,node distance=3mm] {$w'$};
\node[draw,shape=circle,fill,scale=.5](t') at (1,-1) {} ;
\node[draw,shape=circle,fill,scale=.5] at (1,1) {} ;
\node[draw,shape=circle,fill,scale=.5] at (0,-1) {} ;
\node[draw,shape=circle,fill,scale=.5](t) at (0,1) {} ;
\draw[thick,blue] (0.5,1) ellipse (1.2cm and .3cm) ;
\draw[thick,blue] (0.5,-1) ellipse (1.2cm and .3cm) ;
\draw[thick,blue] (0,0) ellipse (.3cm and 1.2cm) ;
\draw[thick,blue] plot [smooth cycle,tension=.5] coordinates { (2.4,0) (0.9,1.2) (0.9,-1.2) } ;
\node[draw,shape=circle,fill,scale=.5] (v') at (3,1.73) {} ;
\node[draw,shape=circle,fill,scale=.5] (w) at (5,1.73) {} ;
\node[draw,shape=circle,fill,scale=.5] (u') at (7,1.73) {} ;
\node[draw,shape=circle,fill,scale=.5] (u) at (4,0) {} ;
\node[draw,shape=circle,fill,scale=.5] (v) at (6,0) {} ;
\node[draw,shape=circle,fill,scale=.5] (w') at (5,-1.73) {} ;
\node[left of=t,node distance=4mm] {$t$};
\node[right of=t',node distance=3mm] {$t'$};
\node[above of=u,node distance=3mm] {$w'$};
\node[above of=u',node distance=3mm] {$w$};
\node[above of=v,node distance=3mm] {$x'$};
\node[above of=v',node distance=3mm] {$x$};
\node[above of=w,node distance=3mm] {$y'$};
\node[above of=w',node distance=3mm] {$y$};
\draw[thick,blue] plot [smooth cycle,tension=.7] coordinates { (3-0.28,1.73+0.28) (5.28,1.73+0.28) (4,-0.4)} ;
\draw[thick,blue] plot [smooth cycle,tension=.7] coordinates { (5-0.28,1.73+0.28) (7.28,1.73+0.28) (6,-0.4)} ;
\draw[thick,blue] plot [smooth cycle,tension=.7] coordinates { (4-0.28,0.28) (6.28,0.28) (5,-1.73-0.4)} ;
\draw[thick,red,dashed] plot [smooth cycle,tension=.7] coordinates { (4-0.4,-0.4) (6.4,-0.4) (5,1.73+0.57)} ;
\end{tikzpicture}
\end{center}
\caption{Description of the contextuality scenarios $H_\mathrm{gad}$ and $H_\mathrm{gad}'$. The vertex $w'$ should be identified in both parts of the figure, so that there are $10$ vertices in total. $H_\mathrm{gad}$ is defined to comprise all edges except for the dashed one, while $H_\mathrm{gad}'$ does contain the dashed edge in addition to the other ones. This results in $V(H_\mathrm{gad})=V(H_\mathrm{gad}')$ and $\mathrm{NO}(H_\mathrm{gad})=\mathrm{NO}(H_\mathrm{gad}')$. The left part plays the role of forcing $P_{w'} = 0$ for any quantum model. In $H_\mathrm{gad}$, the dashed edge also forces $P_w=0$, although $P_w$ can be arbitrary in $H_\mathrm{gad}$.}
\label{gadget}
\end{figure}
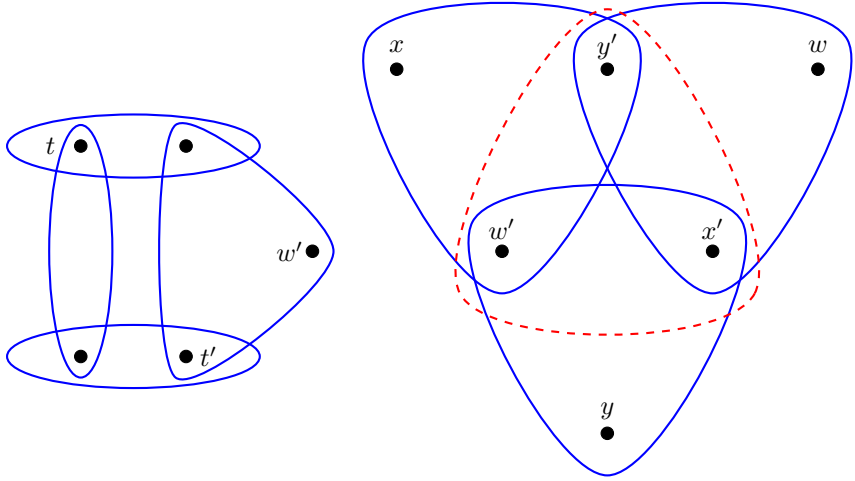

\subsection{Quantum models on products}

What is the set of quantum models on a product scenario $H_A\otimes H_B$? The following characterization generalizes the \emph{commutativity paradigm} of quantum correlations in Bell scenarios~\cite{Jetal,TF}. For a related argument, see~\cite[App.~(iv)]{CSW}.

\begin{prop}
\label{FRquantum}
Let $H_A$ and $H_B$ be two contextuality scenarios and $p\in\mathcal{G}(H_A\otimes H_B)$. Then $p$ is quantum if and only if there is a Hilbert space $\H$, a quantum state $\rho\in\B_{+,1}(\H)$ and projection operators $P_{A,u}\in\B(\H)$ and $P_{B,v}\in\B(\H)$ assigned to every $u\in V(H_A)$ and $v\in V(H_B)$ such that
\begin{align}
\sum_{u\in e_A} P_{A,u} = \mathbbm{1}_\H &= \sum_{v \in e_B} P_{B,v} && \forall e_A \in E(H_A),\: e_B\in E(H_B) , \nonumber\\[.2cm]
 [P_{A,u},P_{B,v}]&= 0 && \forall u\in V(H_A),\: v\in V(H_B) ,\nonumber\\
\intertext{and the given probabilistic model is reproduced,}
\label{usualqcFR}
 p(u,v) &= \tr \left( \rho\, P_{A,u} P_{B,v} \right) && \forall u\in V(H_A),\: v\in V(H_B).
\end{align}
\end{prop}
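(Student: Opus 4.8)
The plan is to prove the two implications separately: the forward (``if'') direction is a direct computation, while the backward (``only if'') direction requires extracting single-party (``marginal'') projections from a given joint quantum representation.

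For the ``if'' direction, suppose we are given $\H$, $\rho$, and commuting families of projections $P_{A,u}$, $P_{B,v}$ with $\sum_{u\in e_A}P_{A,u}=\mathbbm{1}_\H=\sum_{v\in e_B}P_{B,v}$ for all $e_A\in E(H_A)$, $e_B\in E(H_B)$, and $p(u,v)=\tr(\rho\,P_{A,u}P_{B,v})$. I would set $P_{(u,v)}\defin P_{A,u}P_{B,v}$; since $P_{A,u}$ and $P_{B,v}$ commute, this is again a projection. To see that $p$ is a quantum model on $H_A\otimes H_B$ in the sense of Definition~\ref{qmdef}, it suffices to verify~\eqref{qmeas} for each edge. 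By the $A\leftrightarrow B$ symmetry of the Foulis--Randall product we may assume the edge lies in $E_{A\rightarrow B}$, i.e.\ has the form $\bigcup_{a\in e_A}\{a\}\times f(a)$ for some $e_A\in E(H_A)$ and $f\colon e_A\to E(H_B)$, and then
\[
\sum_{a\in e_A}\sum_{b\in f(a)} P_{A,a}P_{B,b} \;=\; \sum_{a\in e_A} P_{A,a}\Big(\sum_{b\in f(a)} P_{B,b}\Big) \;=\; \sum_{a\in e_A} P_{A,a} \;=\; \mathbbm{1}_\H .
\]
Together with $\tr(\rho P_{(u,v)})=p(u,v)$ this gives $p\in\mathcal{Q}(H_A\otimes H_B)$.

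For the ``only if'' direction, suppose $p\in\mathcal{Q}(H_A\otimes H_B)$ is witnessed by $\H$, $\rho$, and projections $P_{(u,v)}$ with $\sum_{(u,v)\in e}P_{(u,v)}=\mathbbm{1}_\H$ for every $e\in E(H_A\otimes H_B)$. The key step is to \emphalt{define} the marginal projections by
\[
P_{A,u} \defin \sum_{v\in e_B} P_{(u,v)}, \qquad P_{B,v} \defin \sum_{u\in e_A} P_{(u',v)},
\]
for arbitrary $e_B\in E(H_B)$ and $e_A\in E(H_A)$, and then to check that $P_{A,u}$ does not depend on the choice of $e_B$ (and symmetrically for $P_{B,v}$). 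This is the only real obstacle, and it is exactly where the signaling edges of the Foulis--Randall product become indispensable: fix some $e_A\in E(H_A)$ with $u\in e_A$ (which exists since $H_A$ is a scenario) and a base edge $e_B^0\in E(H_B)$, and for $e_B,e_B'\in E(H_B)$ consider the two edges of $H_A\otimes H_B$ obtained from the functions $e_A\to E(H_B)$ sending $u\mapsto e_B$, resp.\ $u\mapsto e_B'$, and every $a\neq u$ to $e_B^0$. The difference of the corresponding normalization equations is precisely $\sum_{v\in e_B}P_{(u,v)}=\sum_{v\in e_B'}P_{(u,v)}$, as needed.

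Once well-definedness is established, everything else falls out by picking, for a given pair $(u,v)$, edges $e_A\ni u$ and $e_B\ni v$ and using that $e_A\times e_B$ is an edge of $H_A\otimes H_B$ (take a constant $f$ in Definition~\ref{FR-prod}). The projections $\{P_{(a,b)}:a\in e_A,\,b\in e_B\}$ then sum to $\mathbbm{1}_\H$, hence are mutually orthogonal (a family of projections summing to $\mathbbm{1}_\H$ is automatically pairwise orthogonal). Consequently $P_{A,u}=\sum_{v'\in e_B}P_{(u,v')}$ and $P_{B,v}=\sum_{u'\in e_A}P_{(u',v)}$ are projections, the normalization $\sum_{u\in e_A}P_{A,u}=\sum_{(a,b)\in e_A\times e_B}P_{(a,b)}=\mathbbm{1}_\H$ holds (and symmetrically for $B$), and
\[
P_{A,u}P_{B,v} \;=\; \sum_{v'\in e_B}\sum_{u'\in e_A} P_{(u,v')}P_{(u',v)} \;=\; P_{(u,v)} \;=\; P_{B,v}P_{A,u},
\]
since orthogonality within $e_A\times e_B$ annihilates every term except $u'=u$, $v'=v$. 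In particular $[P_{A,u},P_{B,v}]=0$, and $\tr(\rho\,P_{A,u}P_{B,v})=\tr(\rho\,P_{(u,v)})=p(u,v)$, which completes the proof.
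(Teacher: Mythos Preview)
Your proof is correct and follows essentially the same approach as the paper's: in both directions you use the same constructions (setting $P_{(u,v)}=P_{A,u}P_{B,v}$ for one implication, and extracting marginals $P_{A,u}=\sum_{v\in e_B}P_{(u,v)}$ for the other), and you establish well-definedness of the marginals via the same operator-valued no-signaling trick using the $E_{A\to B}$ edges. The only cosmetic difference is that you compare two edges built over a common base $e_B^0$, whereas the paper compares one such edge against the product edge $e_A\times e_B'$; both yield the same cancellation.
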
 

\begin{proof}
We start from the alternative conditions of the theorem and a probabilistic model $p$ of the form~(\ref{usualqcFR}) and show that it is a quantum model in the original sense. To this end, we assign to every vertex $(u,v)\in V(H_A \otimes H_B)$ the projection 
$$P_{(u,v)} \defin P_{A,u} P_{B,v},$$
so that~(\ref{qrep}) holds by the assumption~(\ref{usualqcFR}). It remains to show~\eqref{qmeas}, i.e.~normalization of the projections for any measurement on $H_A\otimes H_B$. By symmetry, it is sufficient to prove this for an edge $e \in E_{A \rightarrow B}$ given by
$$
e = \bigcup_{a\in e_A} \{a\} \times f(a) \quad\text{ with }\quad e_A\in E_A,\: f:e_A\to E_B.
$$
In this case, 
$$\sum_{w \in e} P_w = \sum_{u \in e_A} P_{A,u} \sum_{v \in f(u)} P_{B,v} = \sum_{u\in e_A} P_{A,u} \cdot \mathbbm{1}_{\mathcal{H}} = \mathbbm{1}_\mathcal{H}, $$
which is analogous to the computation in the proof of Proposition~\ref{PM_prod}. 

Conversely, one can construct the `local' observables $P_{A,u}$ and $P_{B,v}$ from a quantum model on $\mathcal{Q}(H_A \otimes H_B)$ by noting that the operators
\beq
\label{localops}
P_{A,u} \defin \sum_{v \in e_B} P_{(u,v)}, \qquad P_{v} \defin \sum_{u \in e_A} P_{(u,v)}
\eeq
do not depend on the choice of $e_B \in E(H_B)$ or $e_A \in E(H_A)$, respectively. To see this, it is enough to prove the operator-valued no-signaling equations
\beq
\label{NSgoal}
\sum_{v\in e_B} P_{(u,v)} = \sum_{v\in e'_B} P_{(u,v)}
\eeq
for any $u\in V(H_A)$ and $e_B,e'_B\in E(H_B)$, which can be done just as in the proof of Proposition~\ref{Bnosig}: choosing some $e_A\in E(H_A)$ with $u\in e_A$ and considering the function $f:e_A\to E(H_B)$ with
$$
f(u') \defin \begin{cases} e_B & \textrm{if } u'=u, \\ e'_B & \textrm{otherwise}, \end{cases}
$$
results in the normalization equation
\[
\sum_{v\in e_B} P_{(u,v)} + \sum_{u'\in e_A\setminus\{u\}} \sum_{v\in e'_B} P_{(u',v)} = \mathbbm{1}_\H.
\]
Comparing this with the normalization equation for the edge $e_A\times e_B'$ gives
$$
\sum_{v\in e_B} P_{(u,v)} + \sum_{u'\in e_A\setminus\{u\}} \sum_{v\in e'_B} P_{(u',v)} = \sum_{u'\in e_A} \sum_{v\in e'_B} P_{(u',v)} ,
$$
which reduces to~(\ref{NSgoal}) after canceling terms. This shows that the `local' operators~(\ref{localops}) are well-defined.

The normalization condition $\sum_{u\in e_A} P_{A,u} = \mathbbm{1}_\H = \sum_{v\in e_B} P_{B,v}$ for any $e_A\in E(H_A)$ and $e_B\in E(H_B)$ now is an immediate consequence of the normalization $\sum_{u\in e_A,v\in E_B} P_{(u,v)} = \mathbbm{1}_\H$. Finally, the commutativity $[P_{A,u},P_{B,v}]=0$ for given $u\in V(H_A)$ and $v\in V(H_B)$ follows again from the normalization 
$$
\sum_{u' \in e_A} \sum_{v' \in e_B} P_{(u',v')} = \mathbbm{1}_\mathcal{H} ,
$$ 
for any $e_A$ and $e_B$ which contain $u$ and $v$, respectively: the terms in this sum are necessarily mutually orthogonal projections, and hence commute pairwise; but now both $P_{A,u}$ and $P_{B,v}$ are partial sums of this big sum, and therefore these commute as well. Also, mutual orthogonality implies $P_{(u,v)} = P_{A,u} P_{B,v}$, which yields the desired probabilities~(\ref{usualqcFR}).
\end{proof}

Quantum models on product scenarios arise typically as follows: $\H$ itself may be given as a tensor product $\H_A\otimes\H_B$, such that every $P_{A,u}$ operates on the first factor, while every $P_{B,v}$ operates on the second, while $\rho$ is a state on $\H_A\otimes\H_B$, possibly entangled. Quantum models of this form are said to follow the \emph{tensor paradigm}. We do not know whether every quantum model on $\mathcal{Q}(H_A\otimes H_B)$ arises, at least approximately, from the tensor paradigm. This question is a generalization of Tsirelson's problem~\cite{Jetal,TF}, and can be asked more precisely like this:

\begin{prob}
Is the set of all quantum models with the tensor paradigm dense in $\mathcal{Q}(H_A\otimes H_B)$?
\end{prob}

If we have a quantum model on $H_A$ and a quantum model on $H_B$, the tensor product of the underlying Hilbert spaces and projection operators therefore gives a quantum model, with the tensor paradigm, on $H_A\otimes H_B$. In other words, we have:

\begin{cor}
\label{Qtensor}
\beq
\label{eqQtensor}
\mathcal{Q}(H_A) \otimes \mathcal{Q}(H_B) \subseteq \mathcal{Q}(H_A\otimes H_B)
\eeq
\end{cor}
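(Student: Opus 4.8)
The plan is to deduce this directly from Proposition~\ref{FRquantum}, whose ``only if''/``if'' equivalence reduces quantum membership on a Foulis-Randall product to the existence of a commuting family of local projections. So I would start from a pair $p_A\in\mathcal{Q}(H_A)$, $p_B\in\mathcal{Q}(H_B)$; by Definition~\ref{qmdef} there are Hilbert spaces $\H_A,\H_B$, states $\rho_A\in\B_{+,1}(\H_A)$, $\rho_B\in\B_{+,1}(\H_B)$, and projections $Q_u\in\B(\H_A)$, $R_v\in\B(\H_B)$ with $\sum_{u\in e_A}Q_u=\mathbbm{1}_{\H_A}$, $\sum_{v\in e_B}R_v=\mathbbm{1}_{\H_B}$, $p_A(u)=\tr(\rho_A Q_u)$ and $p_B(v)=\tr(\rho_B R_v)$. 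The construction is then the obvious ``tensor paradigm'' one: set $\H\defin\H_A\otimes\H_B$, $\rho\defin\rho_A\otimes\rho_B$, and
$$
P_{A,u}\defin Q_u\otimes\mathbbm{1}_{\H_B},\qquad P_{B,v}\defin\mathbbm{1}_{\H_A}\otimes R_v.
$$

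Next I would verify the three hypotheses of Proposition~\ref{FRquantum}, all of which are routine. Normalization: $\sum_{u\in e_A}P_{A,u}=\bigl(\sum_{u\in e_A}Q_u\bigr)\otimes\mathbbm{1}_{\H_B}=\mathbbm{1}_\H$, and symmetrically for $e_B\in E(H_B)$. Commutativity: $[P_{A,u},P_{B,v}]=0$ because the two operators act nontrivially on complementary tensor factors. Reproduction of the probabilities: $\tr(\rho\,P_{A,u}P_{B,v})=\tr\bigl((\rho_A\otimes\rho_B)(Q_u\otimes R_v)\bigr)=\tr(\rho_A Q_u)\tr(\rho_B R_v)=p_A(u)p_B(v)=(p_A\otimes p_B)(u,v)$. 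Proposition~\ref{FRquantum} then gives $p_A\otimes p_B\in\mathcal{Q}(H_A\otimes H_B)$, and since $\mathcal{Q}(H_A)\otimes\mathcal{Q}(H_B)$ is by definition the set of all products $p_A\otimes p_B$ (in analogy with $\mathcal{G}(H_A)\otimes\mathcal{G}(H_B)$), the claimed inclusion follows.

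I do not expect any genuine obstacle here: Proposition~\ref{FRquantum} was tailored precisely to absorb constructions of this kind, so the only ``work'' is the elementary tensor-product bookkeeping above. If one prefers to bypass Proposition~\ref{FRquantum}, the alternative is to assign directly $P_{(u,v)}\defin Q_u\otimes R_v$ and check $\sum_{w\in e}P_w=\mathbbm{1}_\H$ for each edge $e\in E(H_A\otimes H_B)$; since it suffices to treat $e=\bigcup_{a\in e_A}\{a\}\times f(a)\in E_{A\to B}$, this is the same computation as in the proof of Proposition~\ref{PM_prod} with probabilities replaced by projections, using $\sum_{v\in f(u)}(Q_u\otimes R_v)=Q_u\otimes\mathbbm{1}_{\H_B}$. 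Either route works; routing through Proposition~\ref{FRquantum} is cleaner, and it also makes transparent that the resulting model is of the tensor paradigm, as asserted in the paragraph preceding the corollary.
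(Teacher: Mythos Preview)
Your proposal is correct and matches the paper's approach exactly: the paper does not even give a formal proof, but simply states in the sentence preceding the corollary that ``the tensor product of the underlying Hilbert spaces and projection operators therefore gives a quantum model, with the tensor paradigm, on $H_A\otimes H_B$''. You have spelled out precisely this construction and verified it against Proposition~\ref{FRquantum}, which is the intended route.
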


Again, the Bell scenario $B_{2,2,2} = B_{1,2,2}\otimes B_{1,2,2}$ exemplifies that~(\ref{eqQtensor}) is not an equality in general.

For products of more than two scenarios $H_1,\ldots,H_n$, we write $\otimes_{i=1}^n H_i$ for any of the products discussed in Section~\ref{higherprodsmain}. Then we have a generalization of Proposition~\ref{FRquantum}:

\begin{prop}
\label{multiQFR}
Let $p\in\mathcal{G}(\otimes_{i=1}^n H_i)$. Then $p$ is quantum if and only if there is a Hilbert space $\H$, a quantum state $\rho\in\B_{+,1}(\H)$ and projection operators $P_{i,v}$ assigned to every party $i$ and $v\in V(H_i)$ such that
\begin{align*}
\sum_{v\in e} P_{i,v} & = \mathbbm{1}_\H  && \forall i,\: e\in E(H_i), \\[.2cm]
[P_{i,u},P_{j,v}] &= 0 && \forall i\neq j,\: u\in V(H_i),\: v\in V(H_j),
\end{align*}
and the given probabilistic model is reproduced,
\beq
p(v_1,\ldots,v_n) = \tr\left( \rho P_{1,v_1}\ldots P_{n,v_n}\right) \qquad \forall v_1\in V(H_1),\ldots,v_n\in V(H_n).
\eeq
\end{prop}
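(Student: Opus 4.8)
The plan is to mimic the proof of Proposition~\ref{FRquantum}, organizing the argument as an induction on the number of parties $n$ so that I can invoke the bipartite case rather than redo its combinatorics. Since all the products of Section~\ref{higherprodsmain} are observationally equivalent (Appendix~\ref{multiproducts}), the set $\mathcal{Q}(\otimes_{i=1}^n H_i)$ is independent of the chosen bracketing, so I am free to work with $\left(\otimes_{i=1}^{n-1}H_i\right)\otimes H_n$ throughout; the base case $n=1$ is Definition~\ref{qmdef} itself.

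For the ``if'' direction I would take the given commuting projections $P_{i,v}$ and set $P_{(v_1,\dots,v_n)}\defin P_{1,v_1}\cdots P_{n,v_n}$; being a product of commuting projections, this is again a projection, and it manifestly reproduces $p$. Normalization of the $P_{(v_1,\dots,v_n)}$ over an edge of $\otimes_{i=1}^nH_i$ only needs to be checked for the edges of the minimal product $^{\min}\otimes_{i=1}^nH_i$ by Lemma~\ref{allintermed}; for an edge of the form~\eqref{nFRedge} indexed by party $k$, commutativity lets me pull the factors $P_{i,v_i}$ ($i\neq k$) out of the inner summation over $w\in f(\vec v)$, reduce that inner sum to $\mathbbm{1}_\H$, and then collapse the outer sum using $\sum_{v_i\in e_i}P_{i,v_i}=\mathbbm{1}_\H$ one party at a time, exactly as in the proof of Proposition~\ref{PM_prod}.

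For the ``only if'' direction I would apply Proposition~\ref{FRquantum} to the bipartite product $\left(\otimes_{i=1}^{n-1}H_i\right)\otimes H_n$: this produces commuting families of projections $P_{A,u}$ (for $u\in V(\otimes_{i=1}^{n-1}H_i)$) and $P_{n,v}$ (for $v\in V(H_n)$) with $p(u,v)=\tr(\rho\,P_{A,u}P_{n,v})$, and, reading off the construction in that proof, each $P_{A,u}$ is a \emph{partial sum} $\sum_{v\in e_n}P_{(u,v)}$ of the original witnessing projections $P_{(v_1,\dots,v_n)}$. Hence $q(u)\defin\tr(\rho\,P_{A,u})$ is a quantum model on $\otimes_{i=1}^{n-1}H_i$ realized by the $P_{A,u}$ and $\rho$, so the induction hypothesis supplies pairwise commuting projections $P_{1,v},\dots,P_{n-1,v}$ reproducing $q$; tracing the construction once more, each of these is a partial sum of the $P_{A,u}$, hence a partial sum of the $P_{(v_1,\dots,v_n)}$. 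Since every $P_{(v_1,\dots,v_n)}$ commutes with every $P_{n,w}$, so do all such partial sums, which gives the cross-commutativity $[P_{i,v},P_{n,w}]=0$; the normalization of each family is inherited from the respective application. Finally $p(v_1,\dots,v_n)=\tr(\rho\,P_{1,v_1}\cdots P_{n,v_n})$ follows by fixing edges $e_i\ni v_i$ for all $i$, noting that the projections $P_{(w_1,\dots,w_n)}$ with $\vec w\in\prod_i e_i$ are mutually orthogonal and sum to $\mathbbm{1}_\H$, and observing that each $P_{i,v_i}$ is the partial sum over those $\vec w$ with $w_i=v_i$, so that mutual orthogonality forces $P_{1,v_1}\cdots P_{n,v_n}=P_{(v_1,\dots,v_n)}$.

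The part I expect to need the most care is keeping track, through the induction, of the fact that the ``local'' operators produced at every stage really are genuine partial sums of the projections one level up, since this is what makes the cross-commutativity propagate, together with the well-definedness of those partial sums. The latter rests on the operator-valued no-signaling identities $\sum_{w\in e_k}P_{(\vec v,w)}=\sum_{w\in e'_k}P_{(\vec v,w)}$, which follow verbatim from the argument of Proposition~\ref{multinosigprop} (compare also Proposition~\ref{Bnosig}) by comparing the normalization equations for two edges of the form~\eqref{nFRedge} differing only in the value of the function $f$ at a single joint outcome. Beyond this, the proof is a routine transcription of the bipartite case.
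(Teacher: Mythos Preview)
Your proposal is correct, but it takes a different route from the paper. The paper's proof is essentially one sentence: work with the minimal product $^{\min}\otimes_{i=1}^n H_i$ (justified by Appendix~\ref{multiproducts}) and repeat the bipartite argument of Proposition~\ref{FRquantum} verbatim with $n$ parties in place of two. Concretely, for the ``only if'' direction the paper would define all $n$ families of local operators \emph{directly} as marginals
\[
P_{k,v_k} \defin \sum_{\substack{\vec w\in\prod_i e_i\\ w_k=v_k}} P_{\vec w},
\]
prove well-definedness via the operator-valued analogue of Proposition~\ref{multinosigprop}, and read off commutativity and the product identity $P_{1,v_1}\cdots P_{n,v_n}=P_{\vec v}$ from the fact that every $P_{k,v_k}$ is a partial sum of the mutually orthogonal family $\{P_{\vec w}:\vec w\in\prod_i e_i\}$. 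No induction is needed.

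Your inductive approach also works, and your last paragraph correctly identifies where the effort goes: you are really proving a \emph{strengthened} statement (``the local operators produced are the specific marginal partial sums of the witnessing projections''), not just the existential one in the proposition, and you must carry this through the induction to get the cross-commutativity $[P_{i,v},P_{n,w}]=0$. Once you unwind that strengthened hypothesis---$P_{i,v_i}$ is a partial sum of the $P_{A,u}$, each $P_{A,u}$ is a partial sum of the $P_{\vec v}$, and for $P_{n,v_n}$ you may choose the edge $e_A$ to be a product edge by operator no-signaling---you recover exactly the same marginal formula the direct argument writes down in one line. So the two proofs converge, but yours takes a detour through the bipartite case that the paper's direct generalization avoids.
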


In contrast to the tensor paradigm mentioned above, quantum models of this form satisfy the \emph{commutativity paradigm}. The tensor paradigm is a special case of the commutativity paradigm.

\begin{proof}
As shown in Appendix~\ref{multiproducts}, the quantum set $\mathcal{Q}(\otimes_{i=1}^n H_i)$ does not depend on the particular choice of product, so it is sufficient to prove the statement for the minimal product $^{\min}\otimes_{i=1}^n H_i$. In this case, the proof is analogous to the previous one of Proposition~\ref{FRquantum}.
\end{proof}

Again, one can also introduce the tensor paradigm for products of more than two scenarios, and the question arises whether the set of quantum models with the tensor paradigm is dense in the quantum set. We are far from being able to answer this question.

Upon recalling that the Bell scenario $B_{n,k,m}$ is an $n$-fold product of scenarios $B_{1,k,m}$ which describe $k$ independent $m$-outcome measurements, we immediately deduce that quantum models on $B_{n,k,m}$ correspond exactly to the usual `quantum correlations' in Bell scenarios, at least when considering the commutativity paradigm as in Proposition~\ref{multiQFR}:

\begin{cor}
\label{Bellquantum}
$\mathcal{Q}(B_{n,k,m})$ is the set of quantum correlations in the Bell sense with the commutativity paradigm.
\end{cor}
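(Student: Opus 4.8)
The plan is to unwind the definition of $B_{n,k,m}$ as the $n$-fold Foulis--Randall product $B_{1,k,m}^{\otimes n}$ from~\eqref{BSdefn} and feed it directly into the product characterization of quantum models, Proposition~\ref{multiQFR}. First I would spell out the combinatorics of the single-party scenario $B_{1,k,m}$ of Figure~\ref{singleparty}: its vertices are the pairs $a|x$ with outcome $a\in\{1,\ldots,m\}$ and setting $x\in\{1,\ldots,k\}$, and its edges are exactly the $k$ pairwise disjoint sets $\{1|x,\ldots,m|x\}$, one per setting. Consequently a projective-measurement assignment on $B_{1,k,m}$ in the sense of Definition~\ref{qmdef} --- projections $P_{a|x}$ with $\sum_a P_{a|x}=\mathbbm{1}$ for every edge --- is nothing but a family of $k$ PVMs indexed by the settings, i.e.\ precisely the ``local observable'' data of a party in a Bell scenario.

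Next, I would apply Proposition~\ref{multiQFR} with $H_i=B_{1,k,m}$ for every $i$. It says that $p\in\mathcal{G}(B_{n,k,m})$ is quantum if and only if there is a Hilbert space $\H$, a state $\rho\in\B_{+,1}(\H)$, and projections $P_{i,a|x}$ for each party $i$ and vertex $a|x$ such that the edge-normalization specializes, by the previous paragraph, to $\sum_a P_{i,a|x}=\mathbbm{1}_\H$ for each party $i$ and setting $x$, the cross-party commutations $[P_{i,a|x},P_{j,b|y}]=0$ hold for all $i\neq j$, and
\[
p(a_1\ldots a_n\,|\,x_1\ldots x_n)=\tr\!\left(\rho\,P_{1,a_1|x_1}\cdots P_{n,a_n|x_n}\right).
\]
I would then simply observe that the right-hand side of this equivalence is, essentially verbatim, the definition of the set of quantum correlations of the Bell scenario $(n,k,m)$ in the commutativity paradigm: one Hilbert space carrying $n$ mutually commuting families of PVMs (one family per party, indexed by that party's settings) together with a global state, with joint outcome probabilities given by the Born rule on the product of the relevant projections. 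Finally I would point out the one cosmetic discrepancy worth a remark --- namely that some formulations of Bell-quantum correlations use a pure state $|\psi\rangle$ rather than a general $\rho$, and possibly POVMs rather than PVMs --- and note that neither enlarges the correlation set (purify on an enlarged factor on which all families act trivially; apply Naimark dilation together with the commutation structure), so the two descriptions coincide.

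There is no real obstacle here: the statement is a dictionary translation once Proposition~\ref{multiQFR} is available. The only places requiring a moment's care are the bookkeeping that Proposition~\ref{multiQFR}'s edge-normalization, which ranges over \emph{all} edges of each $H_i$, collapses to the per-setting PVM condition precisely because the edges of $B_{1,k,m}$ are its $k$ settings, and the convention-matching between the ``mixed $\rho$, projective'' setup used in this paper and whatever ``pure state / POVM'' convention one adopts as the reference definition of Bell-quantum correlations. Everything else is immediate from Proposition~\ref{multiQFR} and~\eqref{BSdefn}.
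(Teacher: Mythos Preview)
Your proposal is correct and is exactly the argument the paper intends: the corollary is stated immediately after Proposition~\ref{multiQFR} with the one-line justification that $B_{n,k,m}$ is the $n$-fold product of the single-party scenarios $B_{1,k,m}$, so specializing Proposition~\ref{multiQFR} to $H_i=B_{1,k,m}$ yields precisely the commutativity-paradigm definition of Bell quantum correlations. Your write-up is in fact more explicit than the paper's, which treats this as immediate and omits the bookkeeping and the pure-state/POVM remark.
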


\subsection{The Kochen--Specker theorem and state-independent proofs}
\label{siKS}

We conclude this section by discussing the formulation of `state-independent' proof of the Kochen--Specker theorem like~\cite{Cab1,simplest} in our formalism. A scenario $H_{\mathrm{KS}}$ provides a proof of the Kochen--Specker theorem as soon as $\mathcal{C}=\emptyset$, although $\mathcal{Q}(H_{\mathrm{KS}})\neq\emptyset$; see Figure~\ref{CKSfig} for an example. 

\begin{theo}[Kochen--Specker]
\label{CKSthm}
There exists a contextuality scenario $H_{\mathrm{KS}}$ for which
$$
\mathcal{C}(H_{\mathrm{KS}}) = \emptyset ,\qquad \mathcal{Q}(H_{\mathrm{KS}}) \neq \emptyset .
$$
\end{theo}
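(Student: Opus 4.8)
The plan is to exhibit a single concrete scenario and verify the two claims separately. I would take $H_{\mathrm{KS}}$ to be exactly the hypergraph of Figure~\ref{CKSfig}, built from Cabello's $18$ vectors in $\C^4$ arranged into $9$ orthonormal bases, each base being one of the $9$ edges, with the $18$ vectors as vertices.

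First I would establish $\mathcal{C}(H_{\mathrm{KS}})=\emptyset$. By Remark~\ref{transversals}, it suffices to show that $H_{\mathrm{KS}}$ has no exact transversal, i.e.\ no set $V_1\subseteq V(H_{\mathrm{KS}})$ meeting every edge in exactly one vertex; equivalently no deterministic model exists. The standard parity argument does this: in Figure~\ref{CKSfig} every vertex lies in exactly two of the $9$ edges (this is the defining combinatorial feature of the Cabello configuration, and should be stated as such), so summing the $\{0,1\}$-values of a hypothetical deterministic assignment over all edges counts each chosen vertex twice, giving $2|V_1|$ on one side and the number of edges, namely $9$, on the other. Since $9$ is odd this is impossible, so $H_{\mathrm{KS}}$ admits no exact transversal and hence no deterministic model; as $\mathcal{C}(H_{\mathrm{KS}})$ is the convex hull of the deterministic models, it is empty. (This is precisely the argument run in the Example following Remark~\ref{transversals}.)

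Second I would establish $\mathcal{Q}(H_{\mathrm{KS}})\neq\emptyset$. Here I use the defining property of Cabello's configuration, namely that the vertices can be labeled by unit vectors $|\psi_v\rangle\in\C^4$ such that the four vectors on each edge form an orthonormal basis of $\C^4$. Set $\H\defin\C^4$, let $P_v\defin |\psi_v\rangle\langle\psi_v|$ be the rank-one projection onto $|\psi_v\rangle$ for each vertex $v$, and let $\rho\defin \tfrac14\mathbbm{1}_{\C^4}$ be the maximally mixed state. For every edge $e$, since $\{|\psi_v\rangle : v\in e\}$ is an orthonormal basis, $\sum_{v\in e}P_v = \mathbbm{1}_{\C^4}$, so \eqref{qmeas} holds; and $p(v)\defin \tr(\rho P_v) = \tfrac14$ for all $v$, which is a legitimate assignment into $[0,1]$ satisfying \eqref{qrep}. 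Hence the uniform model $p\equiv\tfrac14$ lies in $\mathcal{Q}(H_{\mathrm{KS}})$, so $\mathcal{Q}(H_{\mathrm{KS}})\neq\emptyset$. (In fact this shows the stronger state-independent statement that \emph{every} quantum state gives a probabilistic model, so $\mathcal{Q}(H_{\mathrm{KS}})\neq\emptyset$ very robustly.)

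The only genuine content beyond bookkeeping is the existence of the $18$-vector realization with the stated orthogonality pattern and the two-edges-per-vertex property; both are due to Cabello \textit{et al.}~\cite{Cab1,Cab2} and are recalled in the Motivation subsection, so I would simply cite them rather than reconstruct the explicit vectors. I do not expect a real obstacle: the parity argument is self-contained and the quantum realization is immediate once the configuration is granted. If one wanted to avoid relying on the picture, the mild care point is making sure the claimed incidence structure (each of the $18$ vertices in exactly two of the $9$ edges, each edge of size $4$, total incidences $9\cdot 4 = 36 = 18\cdot 2$) is stated explicitly so the parity count is unambiguous.
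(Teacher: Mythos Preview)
Your proposal is correct and matches the paper's approach exactly: the paper uses the same Cabello $18$-vector scenario of Figure~\ref{CKSfig}, runs the identical parity argument (each vertex in two edges, $9$ edges odd) in the Example following Remark~\ref{transversals} to get $\mathcal{C}(H_{\mathrm{KS}})=\emptyset$, and appeals to the existence of the $\C^4$-realization stated in the Motivation subsection for $\mathcal{Q}(H_{\mathrm{KS}})\neq\emptyset$. The only cosmetic difference is that you fix the maximally mixed state and compute $p\equiv\tfrac14$ explicitly, whereas the paper leaves the state arbitrary and simply notes that any state yields a quantum model.
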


This is automatically `state-independent' in the following sense: since $\mathcal{Q}(H_{\mathrm{KS}})$ is not empty, there exists an assignment of a projection $P_v$ to each $v\in V(H_{\mathrm{KS}})$ satisfying the normalization~\eqref{qmeas}. Then one can take any state on the same Hilbert space and obtain a quantum model, which is necessarily not classical. In particular, this non-classicality is independent of the particular state that one chooses. One can find candidate scenarios $H_{\mathrm{KS}}$ by searching for configurations of vectors in a finite-dimensional Hilbert space such that each vector occurs in a basis containing only vectors from the configuration~\cite{P3M}. Upon taking the vertices of the scenario to be given by these vectors and the edges by the bases, one automatically has a quantum representation. The problem lies in choosing the configuration such that the resulting scenario satisfies $\mathcal{C}(H_{\mathrm{KS}})=\emptyset$. This is exactly the approach pursued in~\cite{P3M}.

One may wonder whether these scenarios permit probabilistic models which are not quantum. In the particular example of Figure~\ref{CKSfig}, It is not clear to us whether $\mathcal{Q}(H_{\mathrm{KS}}) = \mathcal{G}(H_{\mathrm{KS}})$ holds, but we suspect that this is not the case. A natural question is whether there exists a proof of the Kochen--Specker in which it is the case:

\begin{prob}
\label{strongKS}
Does there exist a contextuality scenario $H$ for which $\mathcal{C}(H)=\emptyset$, but $\mathcal{Q}(H) = \mathcal{G}(H)\neq\emptyset$?
\end{prob}

Some hypergraph $H$ constructed from the GHZ paradox~\cite{GHZ} might be a good candidate for this hypothetical phenomenon. Our earlier results allow us to reformulate this problem:

\begin{prop}
There exists $H$ as in Problem~\ref{strongKS} if and only there exists some $H'$ with a unique probabilistic model which is quantum, but not classical.
\end{prop}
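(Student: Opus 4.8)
I would prove the two implications separately; the reverse one is immediate and the forward one carries all the content, via Theorem~\ref{extchar}.

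For ``$\Leftarrow$'': if $H'$ has a unique probabilistic model $p$ which is quantum but not classical, then $\mathcal{G}(H')=\{p\}$ by hypothesis, $\mathcal{Q}(H')=\{p\}$ since $p\in\mathcal{Q}(H')\subseteq\mathcal{G}(H')$, and $\mathcal{C}(H')\subseteq\mathcal{G}(H')=\{p\}$ together with $p\notin\mathcal{C}(H')$ forces $\mathcal{C}(H')=\emptyset$. So $H'$ itself answers Problem~\ref{strongKS}.

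For ``$\Rightarrow$'': given $H$ with $\mathcal{C}(H)=\emptyset$ and $\mathcal{Q}(H)=\mathcal{G}(H)\neq\emptyset$, pick an extreme point $p$ of the nonempty polytope $\mathcal{G}(H)$; by Theorem~\ref{extchar} it is the extension of the unique probabilistic model $p_W$ of an induced subscenario $H_W$, and I claim $H':=H_W$ works. That $p_W$ is non-classical is the easy half: if $p_W=\sum_\lambda q_\lambda p_\lambda$ with the $p_\lambda$ deterministic on $H_W$, then --- using that every edge of $H_W$ has the form $e\cap W$ for some $e\in E(H)$ --- extending each $p_\lambda$ by $0$ off $W$ gives a deterministic model on $H$, so by linearity of the extension $p=\sum_\lambda q_\lambda\tilde p_\lambda\in\mathcal{C}(H)$, contradicting $\mathcal{C}(H)=\emptyset$. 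Conversely, once $p_W$ is known to be quantum, $\mathcal{Q}(H_W)\subseteq\mathcal{G}(H_W)=\{p_W\}$ forces $\mathcal{Q}(H_W)=\{p_W\}$, which is exactly what ``$H'$ has a unique probabilistic model which is quantum but not classical'' means.

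The crux is therefore $p_W\in\mathcal{Q}(H_W)$, which I would approach by restricting a quantum representation of $p$ (one exists because $\mathcal{Q}(H)=\mathcal{G}(H)$) down to $H_W$. Since $p$ is an extreme point of $\mathcal{Q}(H)$ and any mixed-state representation decomposes $p$ as a convex combination of the pure-state models of its spectral components --- which must then all equal $p$ --- one may fix a representation $(\mathcal{H},|\psi\rangle\langle\psi|,\{P_v\}_{v\in V(H)})$ with $\psi$ a unit vector; then $\|P_v\psi\|^2=p(v)=0$, hence $P_v\psi=0$, for every $v\notin W$. Restricting to $\mathcal{H}_W:=\overline{\mathcal{A}_W\psi}$, where $\mathcal{A}_W$ is the unital $*$-algebra generated by $\{P_w:w\in W\}$, gives a reducing subspace for each such $P_w$, hence projections $Q_w:=P_w|_{\mathcal{H}_W}$, a restricted state $|\psi\rangle\langle\psi|$ with $\tr(|\psi\rangle\langle\psi|\,Q_w)=p_W(w)$, and the one remaining thing to check is the normalization $\sum_{w\in e\cap W}Q_w=\mathbbm{1}_{\mathcal{H}_W}$ for $e\in E(H)$; by mutual orthogonality of the $P_v$, $v\in e$, this amounts to $P_v$ vanishing on all of $\mathcal{H}_W$ --- not merely at $\psi$ --- for each $v\in e\setminus W$. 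This is the main obstacle: the implication ``$\tr(\rho P_v)=0\Rightarrow P_v=0$'' is false in general (that failure is precisely the mechanism behind Theorem~\ref{Qinv}), so here the \emphalt{full} strength of $\mathcal{Q}(H)=\mathcal{G}(H)$, not just $p\in\mathcal{Q}(H)$, must be used. The cleanest sufficient statement is that $p$ admits a quantum representation on $H$ with a \emphalt{faithful} state $\rho$: writing $\rho=\sum_i\lambda_i|\psi_i\rangle\langle\psi_i|$ with $\lambda_i>0$ and the $\psi_i$ spanning, $\tr(\rho P_v)=0$ then forces $P_v\psi_i=0$ for all $i$, hence $P_v=0$ for every $v\notin W$, so that $(\mathcal{H},\rho,\{P_w\}_{w\in W})$ is at once a quantum model of $p_W$ on $H_W$. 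Producing such a faithful representation of the extreme point $p$ out of $\mathcal{Q}(H)=\mathcal{G}(H)$ is the step I expect to demand the most care.
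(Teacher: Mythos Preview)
Your overall architecture mirrors the paper's: both directions are handled as you do, and the forward direction goes through Theorem~\ref{extchar} to produce an induced subscenario $H_W$ with a unique probabilistic model. The paper's own proof is extremely terse here---it simply asserts that the proof of Theorem~\ref{extchar} ``adapts immediately to show that the resulting unique probabilistic model \ldots\ will also be quantum''---and does not spell out the passage from $p\in\mathcal{Q}(H)$ to $p_W\in\mathcal{Q}(H_W)$ at all.

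You have therefore gone further than the paper in isolating the genuine content of the argument. Your observation that ``$\tr(\rho P_v)=0\Rightarrow P_v=0$'' can fail, and that this is precisely what obstructs the naive restriction, is exactly right and is not addressed by the paper's one-line sketch. The inductive structure of Theorem~\ref{extchar} (removing one vertex at a time and hoping that $\mathcal{Q}(H_W)=\mathcal{G}(H_W)$ is preserved at each step) runs into the same obstacle: a quantum representation of the extension with $p(v)=0$ does not automatically compress to $\ker P_v$, since that subspace need not be invariant under the remaining projections.

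Your proposed remedy---finding a quantum representation of the extreme point $p$ with a \emph{faithful} state---would indeed finish the argument cleanly, and you are honest that producing such a representation from the hypothesis $\mathcal{Q}(H)=\mathcal{G}(H)$ is where the work lies. That step is not obvious: extremality of $p$ in $\mathcal{Q}(H)$ lets you reduce to a pure vector state, but purity does not give faithfulness on the operator algebra generated by the $P_v$. One alternative line worth exploring is to phrase the question in terms of the universal $C^*$-algebra $C^*(H)$ from Section~\ref{invsandwichsec}: $\mathcal{Q}(H_W)\neq\emptyset$ is equivalent to the ideal generated by $\{P_v:v\notin W\}$ being proper, and one then wants to show that the state corresponding to $p$ annihilates this ideal. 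In any case, you have correctly located the crux; the paper's ``immediate'' hides a nontrivial step.

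A minor simplification for the non-classicality of $p_W$: since $\mathcal{C}(H)=\emptyset$ means $H$ has no deterministic models at all, and any deterministic model on $H_W$ extends (by zero) to one on $H$, you get $\mathcal{C}(H_W)=\emptyset$ directly, without decomposing $p_W$.
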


\begin{proof}
Clearly if such an $H'$ exists, then we can take $H=H'$ in Problem~\ref{strongKS}. Conversely, such an $H'$ can be constructed as an induced subscenario of any $H$ of Problem~\ref{strongKS} by using Theorem~\ref{extchar}, whose proof adapts immediately to show that the resulting unique probabilistic model $H'$ will also be quantum.
\end{proof}

\newpage
\section{\textbf{A hierarchy of semidefinite programs characterizing quantum models}}
\label{npahierarchy}

In general, it is very difficult to determine whether a given probabilistic model $p\in\mathcal{G}(H)$ is quantum or not. In fact, as we discuss in Section~\ref{invsandwichsec}, it is conceivable that no algorithm exists for determining this! Hence, it is important to have good approximations to $\mathcal{Q}(H)$ for which membership can be algorithmically determined. For Bell scenarios, this is achieved by the hierarchy of semidefinite programs characterizing quantum correlations with the commutativity paradigm due to Navascu{\'e}s, Pironio and Ac{\'i}n~\cite{NPA0,NPA}. Since its discovery, it has found manifold applications in quantum information theory like~\cite{rand,DIQKD}. Here, we extend this hierarchy of semidefinite programs from Bell scenarios to all contextuality scenarios. While this formulation is new at this precise level of generality, it may also be considered a special case of the general hierarchy for noncommutative polynomial optimization~\cite{NPA2}.

\subsection{Definition of the hierarchy}
\label{dfnhierarchy}

We introduce the main idea before getting to the technical details. Given a quantum model as in Definition~\ref{qmdef}, not only can one consider the expectation values $\tr\left(\rho P_v\right)$, but also any expectation value of the form
\beq
\label{longexp}
\tr\left( \rho P_{v_1}\ldots P_{v_n}\right),
\eeq
where $\mathbf{v}=v_1\ldots v_n\in V(H)^n$ is any finite sequence of vertices. The idea is to find properties of these collections of values which characterize quantum models. For $n\geq 2$, these values are typically not determined by the probabilities $p(v) = \tr\left( \rho P_v \right)$ alone; the hierarchy works with these quantities as unknown variables whose values have to be determined in such a way that the whole collection of values becomes consistent with and in fact specifies a quantum model in an essentially unique way.

Now for some notation. We write $P_{\mathbf{v}}$ as shorthand for the operator $P_{v_1}\ldots P_{v_n}$, although this product is in general not a projection. When $V$ is a set, we write $V^{*n}$ for the set of all strings of up to $n$ elements of $V$, i.e.~$V^{*n} = \bigcup_{k\leq n} V^k$, and $V^* = \bigcup_{k\in\N} V^k$ for the set of all strings of any length over $V$. $\emptyset\in V^{*}$ is the empty string of length $0$ and the associated operator is $P_\emptyset \defin \mathbbm{1}$. For $\mathbf{v}=v_1\ldots v_n$ a string, we denote its reverse by $\mathbf{v}^\dag = v_n\ldots v_1$. This notation makes sense in our context since $P_{\mathbf{v}^\dag} = P_{\mathbf{v}}^\dag$. For strings $\mathbf{v}\in V^*$ and $\mathbf{w}\in V^*$, we write their concatenation simply as $\mathbf{vw}\in V^{*}$, so that $P_{\mathbf{vw}}=P_{\mathbf{v}}P_{\mathbf{w}}$. We also use $v_1 \ldots \bcancel{v_i} \ldots v_n$ as a shorthand for $v_1 \ldots v_{i-1} v_{i+1} \ldots v_n$.

So in our new notation,~\eqref{longexp} can be written as $\tr\left(\rho P_{\mathbf{v}}\right)$. We now start by studying the properties of the collection of these values, indexed by $\mathbf{v}$.

\begin{lem}
\label{npan}
Let $p\in\mathcal{Q}(H)$ be a quantum model with projections $v\mapsto P_v$ on a Hilbert space $\H$ and state $\rho\in\B_{+,1}(\H)$. Then the matrix $M$ indexed by $\mathbf{v},\mathbf{w}\in V(H)^{*n}$ with entries
\beq
\label{npamatrix}
M_{\mathbf{v},\mathbf{w}} \defin \tr\left(\rho P_{\mathbf{v}} P_{\mathbf{w}}^\dag \right) = \tr\left(\rho P_{\mathbf{v}\mathbf{w}^\dag}\right)
\eeq
has the following properties:
\begin{enumerate}
\item\label{Mposdef} $M$ is positive semidefinite. 
\item  \beq \label{npanorm} M_{\emptyset,\emptyset} = 1. \eeq 
\item For every $e\in E(H)$,
\beq
\label{npalin}
\sum_{x \in e} M_{\mathbf{v}x,\mathbf{w}} = M_{\mathbf{v},\mathbf{w}} .
\eeq
\item If $\mathbf{v}=v_1\ldots v_k$ and $\mathbf{w}=w_1\ldots w_m$ with $v_k\perp w_m$, then
\beq
\label{npaort}
M_{\mathbf{v}, \mathbf{w}} = 0 .
\eeq
\end{enumerate}
\end{lem}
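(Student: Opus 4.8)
The plan is to verify the four properties directly from the definition~\eqref{npamatrix}, using only the normalization~\eqref{qmeas} of the projective measurements and a couple of elementary operator-algebraic facts; this is essentially the standard Navascu\'es--Pironio--Ac\'in argument, transcribed to the hypergraph setting. Throughout I would use that each $P_v$ is a self-adjoint projection, so that $P_{\mathbf{w}}^\dag = P_{\mathbf{w}^\dag}$ and $P_{\mathbf{v}}P_{\mathbf{w}}^\dag = P_{\mathbf{v}\mathbf{w}^\dag}$; this is precisely why the two expressions on the right-hand side of~\eqref{npamatrix} agree and may be used interchangeably.

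For part~\ref{Mposdef}, given a finitely supported family of scalars $(\lambda_{\mathbf{v}})_{\mathbf{v}\in V(H)^{*n}}$, I would form the operator $X \defin \sum_{\mathbf{v}} \overline{\lambda_{\mathbf{v}}}\, P_{\mathbf{v}}$ and observe that $\sum_{\mathbf{v},\mathbf{w}} \overline{\lambda_{\mathbf{v}}}\,\lambda_{\mathbf{w}}\, M_{\mathbf{v},\mathbf{w}} = \tr(\rho\, X X^\dag) \geq 0$, since $\rho\geq 0$ and $XX^\dag\geq 0$; hence $M\succeq 0$. Equation~\eqref{npanorm} is immediate from $P_\emptyset = \mathbbm{1}_\H$ and $\tr\rho = 1$. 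For~\eqref{npalin} I would pull the sum inside the trace and use~\eqref{qmeas}: $\sum_{x\in e} M_{\mathbf{v}x,\mathbf{w}} = \tr\bigl(\rho\, P_{\mathbf{v}}\,(\textstyle\sum_{x\in e}P_x)\,P_{\mathbf{w}}^\dag\bigr) = \tr(\rho\, P_{\mathbf{v}} P_{\mathbf{w}}^\dag) = M_{\mathbf{v},\mathbf{w}}$.

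The only step requiring a small auxiliary fact is~\eqref{npaort}. If $v_k\perp w_m$, then by Definition~\ref{defnno} there is an edge $e\in E(H)$ containing both $v_k$ and $w_m$. From $\sum_{v\in e}P_v = \mathbbm{1}_\H$ with all $P_v$ projections one gets $P_{v_k}+P_{w_m}\leq \mathbbm{1}_\H$, and a projection $Q$ satisfying $P_{v_k}+Q\leq\mathbbm{1}_\H$ must obey $P_{v_k}Q=0$ (the inequality $Q\leq\mathbbm{1}_\H-P_{v_k}$ forces the range of $Q$ into the kernel of $P_{v_k}$). Hence $P_{v_k}P_{w_m}=0$, and since $P_{\mathbf{v}}P_{\mathbf{w}}^\dag = P_{v_1}\cdots P_{v_{k-1}}\,(P_{v_k}P_{w_m})\,P_{w_{m-1}}\cdots P_{w_1}$, the whole product vanishes, so $M_{\mathbf{v},\mathbf{w}}=0$. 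I do not expect any genuine obstacle: the only points to be careful about are the bookkeeping with string reversal and the passage from ``two outcomes lie in a common measurement'' to ``the corresponding projections are orthogonal''.
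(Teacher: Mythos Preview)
Your proof is correct and follows essentially the same route as the paper's own argument: form an operator from an arbitrary linear combination of the $P_{\mathbf{v}}$'s to get positive semidefiniteness, use $\tr\rho=1$ for~\eqref{npanorm}, pull $\sum_{x\in e}P_x=\mathbbm{1}$ inside the trace for~\eqref{npalin}, and use $P_{v_k}P_{w_m}=0$ for~\eqref{npaort}. The only difference is that you spell out why two projections summing to at most the identity must be orthogonal, whereas the paper simply states $P_{v_k}\perp P_{w_m}$ as a direct consequence.
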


Hermiticity of $M$, which is contained in claim~\ref{Mposdef}, implies that~(\ref{npalin}) also holds with $x$ appended to $\mathbf{w}$ rather than to $\mathbf{v}$.

\begin{proof}
\begin{enumerate}
\item It needs to be shown that for any vector $x\in\C^{V(H)^{*n}}$ with components $x_{\mathbf{v}}\in\C$, $\mathbf{v}\in V(H)^{*n}$, the number 
$$
\sum_{\mathbf{v},\mathbf{w}} x^*_{\mathbf{v}} M_{\mathbf{v},\mathbf{w}} x_{\mathbf{w}} .
$$
is nonnegative. By the definition~(\ref{npamatrix}), this is equal to
$$
\sum_{\mathbf{v},\mathbf{w}} \tr\left(\rho\: x^*_{\mathbf{v}} P_{\mathbf{v}} P_{\mathbf{w}}^\dag x_{\mathbf{w}} \right) .
$$
With $Q \defin \sum_{\mathbf{v}} x_{\mathbf{v}} P_{\mathbf{v}}^\dag$, this is of the form $\tr\left(\rho Q^\dag Q\right)$, and therefore indeed nonnegative.

\item Since $\rho$ is a normalized state, $M_{\emptyset,\emptyset} = \tr(\rho)=1$.

\item The normalization requirement~(\ref{qmeas}) implies that
$$
\sum_{x\in e} P_{\mathbf{v}x} = P_{\mathbf{v}},
$$
from which~(\ref{npalin}) directly follows.

\item This is a direct consequence of $P_{v_k} \perp P_{w_m}$ for $v_k\perp w_m$, which implies that $P_{\mathbf{v}}P_{\mathbf{w}}^\dag=0$.\qedhere
\end{enumerate}
\end{proof}

The diagonal entries $M_{\mathbf{v},\mathbf{v}}$ represent the expectation values $\tr\left( \rho P_{v_1}\ldots P_{v_n} P_{v_n}\ldots P_{v_1} \right)$ which can be interpreted as the probability to obtain the sequence of outcomes $v_1,\ldots,v_n$, given that a sequence of measurement $e_1,\ldots,e_n$ is being conducted with $v_i\in e_i\:\forall i$ and the state collapses as usual for a projective measurement. We suspect that this interpretation can be used to find an interpretation of the `higher' levels of the hierarchy in terms of the lowest level of a temporally extended scenario, but we have not been able to get this idea to work. 

We now define a hierarchy of probabilistic models and its levels, based on the properties of the matrix $M$ discovered in Lemma~\ref{npan}.

\begin{defn}
\label{npadefn}
Let $H$ be a contextuality scenario. We say that $p:V(H)\to [0,1]$ is a \emph{$\mathcal{Q}_n$-model} if there exists a positive semidefinite matrix $M$, with entries $M_{\mathbf{v},\mathbf{w}}$ indexed by $\mathbf{v},\mathbf{w}\in V(H)^{*n}$, such that~(\ref{npanorm}),~(\ref{npalin}),~(\ref{npaort}) hold and the given probabilities are recovered,
\beq
\label{nparep}
p(v) = M_{v,\emptyset} .
\eeq
\end{defn} 

It is an easy consequence of~\eqref{npanorm} and~\eqref{npalin} that every $\mathcal{Q}_n$-model is a probabilistic model. By definition, testing whether a given probabilistic model lies in $\mathcal{Q}_n$ is a semidefinite programming problem of size roughly $|V(H)|^n\times |V(H)|^n$. By making judicious use of the equations~(\ref{npalin}) and the upcoming~(\ref{npaorth2}), this size can be significantly reduced if $H$ has many edges; any practical computation should take this into account. Furthermore, it can be assumed that all matrix entries are actually in $\R$, i.e.~no imaginary components are needed: if a certain complex matrix $M$ satisfies all the given requirements, then so does its complex conjugate $\bar{M}$, and therefore also the real matrix $\tfrac{1}{2}(M+\bar{M})$.

Definition~\ref{npadefn} is our semidefinite hierarchy for contextuality scenarios. In the special case of a bipartite Bell scenario $B_{2,k,m}$, our hierarchy is equivalent to the original one~\cite{NPA0,NPA}, although our level `$n$' is somewhat different from the hierarchy level `$n$' used in~\cite{NPA}. In particular, our set $\mathcal{Q}_1(B_{2,k,m})$ is the set $Q^{1+AB}$ of~\cite{NPA}; see Section~\ref{Q1products}.

\begin{prop}\label{QinQ1}
$$
\mathcal{Q}(H) \subseteq \ldots \subseteq \mathcal{Q}_n(H) \subseteq \ldots \subseteq \mathcal{Q}_1(H) .
$$
\end{prop}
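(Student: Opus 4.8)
The plan is to establish the two kinds of inclusions in the chain separately: the leftmost family $\mathcal{Q}(H) \subseteq \mathcal{Q}_n(H)$ for every $n$, and the nesting $\mathcal{Q}_{n+1}(H) \subseteq \mathcal{Q}_n(H)$; the general chain then follows by transitivity of $\subseteq$.

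For $\mathcal{Q}(H) \subseteq \mathcal{Q}_n(H)$, essentially all the work has already been done in Lemma~\ref{npan}. Given $p \in \mathcal{Q}(H)$ witnessed by a Hilbert space $\H$, a state $\rho\in\B_{+,1}(\H)$ and projections $P_v$, I would take the matrix $M$ with entries $M_{\mathbf{v},\mathbf{w}} = \tr(\rho P_{\mathbf{v}\mathbf{w}^\dag})$ indexed by $\mathbf{v},\mathbf{w}\in V(H)^{*n}$ furnished by that lemma. Lemma~\ref{npan} tells us that $M$ is positive semidefinite and satisfies~\eqref{npanorm},~\eqref{npalin} and~\eqref{npaort}; and taking $\mathbf{v}$ to be the one-letter string $v$ and $\mathbf{w}=\emptyset$ gives $M_{v,\emptyset} = \tr(\rho P_v P_\emptyset^\dag)=\tr(\rho P_v) = p(v)$ by~\eqref{qrep}. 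Hence $M$ witnesses $p\in\mathcal{Q}_n(H)$ in the sense of Definition~\ref{npadefn}.

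For $\mathcal{Q}_{n+1}(H) \subseteq \mathcal{Q}_n(H)$, the idea is restriction to a principal submatrix. Given $p\in\mathcal{Q}_{n+1}(H)$ with witnessing matrix $M$ indexed by $V(H)^{*(n+1)}$, I would let $M'$ be the principal submatrix of $M$ obtained by keeping only the rows and columns indexed by $V(H)^{*n}\subseteq V(H)^{*(n+1)}$. A principal submatrix of a positive semidefinite matrix is positive semidefinite, so $M'\succeq 0$; the normalization~\eqref{npanorm} survives since $\emptyset\in V(H)^{*n}$; the orthogonality conditions~\eqref{npaort} survive trivially; the probability recovery $M'_{v,\emptyset} = M_{v,\emptyset} = p(v)$ holds since $v,\emptyset\in V(H)^{*n}$; and the linear relations~\eqref{npalin}, which in the $\mathcal{Q}_n$ definition are imposed only when every index involved lies in $V(H)^{*n}$ (i.e.\ only when $\mathbf{v}$ has length at most $n-1$), form a subset of the relations already satisfied by $M$. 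Therefore $M'$ witnesses $p\in\mathcal{Q}_n(H)$.

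I do not anticipate a serious obstacle. The only point requiring a moment of care is the bookkeeping about the index ranges in~\eqref{npalin}: when passing from $\mathcal{Q}_{n+1}$ to $\mathcal{Q}_n$, the extra linear constraints relating length-$n$ strings to length-$(n+1)$ strings are simply dropped, and the constraints that survive within $V(H)^{*n}$ are automatically among those satisfied by the larger matrix. Everything else is immediate from the definitions together with the fact that positive semidefiniteness and the stated linear and orthogonality conditions are all inherited by principal submatrices.
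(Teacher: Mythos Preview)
Your proposal is correct and follows exactly the same approach as the paper: invoke Lemma~\ref{npan} for $\mathcal{Q}(H)\subseteq\mathcal{Q}_n(H)$, and restrict the witnessing matrix to a principal submatrix for $\mathcal{Q}_{n+1}(H)\subseteq\mathcal{Q}_n(H)$. The paper's proof is a two-sentence version of yours; your additional care about the index ranges in~\eqref{npalin} is correct and simply makes explicit what the paper leaves implicit.
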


\begin{proof}
Every matrix $M$ showing that $p$ is a $\mathcal{Q}_{n+1}$-model can be restricted to a matrix showing that $p$ is a $\mathcal{Q}_n$-model, so that $\mathcal{Q}_{n+1}(H)\subseteq \mathcal{Q}_n(H)$. Furthermore, Lemma~\ref{npan} shows that every quantum model is a $\mathcal{Q}_n$-model, which means that $\mathcal{Q}(H)\subseteq\mathcal{Q}_n(H)$.
\end{proof}

\begin{rem}
\label{npaproplist}
Besides those of Lemma~\ref{npan}, there are other properties satisfied by the matrices $M$ which follow from~(\ref{npanorm})--(\ref{npaort}), and are satisfied in particular by those $M$ of the form~(\ref{npamatrix}). In the following list, it is understood that all relevant strings $\mathbf{v},\mathbf{w},\ldots$ are assumed to be of a length which guarantees that all matrix entries are defined at the hierarchy level $n$ that is being considered.
\begin{enumerate}
\item\label{npahankelitem} If $\mathbf{v} \mathbf{w}^\dag = \mathbf{v}' \mathbf{w}'^\dag$, then
\beq
\label{npahankel}
M_{\mathbf{v},\mathbf{w}} = M_{\mathbf{v}',\mathbf{w}'} .
\eeq 
This follows by induction from $M_{v_1\ldots v_m,\mathbf{w}} = M_{v_1\ldots v_{m-1},\mathbf{w}v_m}$, which in turn can be shown as follows. Upon choosing some $e\in E(H)$ with $v_m\in e$, properties~(\ref{npalin}) and~(\ref{npaort}) yield
$$
M_{v_1\ldots v_m,\mathbf{w}} \stackrel{(\ref{npalin})}{=} \sum_{x\in e} M_{v_1\ldots v_m,\mathbf{w}x} \stackrel{(\ref{npaort})}{=} M_{v_1\ldots v_m,\mathbf{w}v_m} .
$$
Applying the same trick on the other side shows that this also equals $M_{v_1\ldots v_{m-1},\mathbf{w}v_m}$, as claimed.

Property~(\ref{npahankel}) implies in particular that all matrix entries $M_{\mathbf{v},\mathbf{w}}$ are determined by those of the first row, i.e.~those of the form $M_{\emptyset,\mathbf{v}}$, although this requires $\mathbf{v}\in V(H)^{*2n}$.
\item Every matrix entry can be bounded by diagonal ones,
\beq
\label{bounddiag}
|M_{\mathbf{v},\mathbf{w}}|^2\leq M_{\mathbf{v},\mathbf{v}}\cdot M_{\mathbf{w},\mathbf{w}}.
\eeq
This follows from positive semidefiniteness of the $2\times 2$-submatrix
\[
\left(\begin{matrix} M_{\mathbf{v},\mathbf{v}} & M_{\mathbf{v},\mathbf{w}} \\ M_{\mathbf{w},\mathbf{v}} & M_{\mathbf{w},\mathbf{w}} \end{matrix}\right)
\]
by taking the determinant.
\item Choosing some $e\in E(H)$ with $v\in e$ and applying~(\ref{npalin}) and~(\ref{npaort}) also shows that
\beq
\label{diagonal}
M_{v,\emptyset} = M_{v,v} .
\eeq
In particular, $p(v) = M_{v,v}$ by~(\ref{nparep}).
\item 
A diagonal entry can be bounded by ``shorter'' diagonal ones: if $j\leq m$, then
\beq
\label{temporal}
M_{v_1\ldots v_m,v_1\ldots v_m} \leq M_{v_1\ldots v_j,v_1\ldots v_j}.
\eeq
It is sufficient to show this when $m=j+1$; the general case then follows by induction. In this case, we choose $e\in E(H)$ with $v_{j+1}\in e$ and obtain
\[
M_{v_1\ldots v_j,v_1\ldots v_j} \stackrel{\eqref{npalin}}{=} \sum_{x,y\in e} M_{v_1\ldots v_j x,v_1\ldots v_j y} \stackrel{\eqref{npaort}}{=} \sum_{x\in e} M_{v_1\ldots v_j x, v_1\ldots v_j x}.
\]
Since each summand on the right-hand side is a diagonal matrix element, all these summands are non-negative and the claim follows.
\end{enumerate}
The following properties hold in addition if the length of $\mathbf{v}\mathbf{w}^\dag$ is at most $n$:
\begin{enumerate}[resume]
\item For every $e\in E(H)$,
\beq
\label{npalin2}
\sum_{v_i \in e} M_{\mathbf{v},\mathbf{w}} = M_{v_1\ldots \bcancel{v_i}\ldots v_m,\mathbf{w}} .
\eeq
This is a consequence of~(\ref{npalin}) and~(\ref{npahankel}).
\item Erasing a repetition $v_{j+1}=v_j$ from the index string gives the same matrix entry,
\beq
\label{npaip}
 M_{v_1\ldots v_j v_{j+1} \ldots v_m,\mathbf{w}} = M_{v_1\ldots v_j \bcancel{v_{j+1}} \ldots v_m,\mathbf{w}} .
\eeq
Upon using~(\ref{npahankel}), this follows from a very similar argument.
\item Having subsequent orthogonal indices makes the matrix entry vanish,
\beq
\label{npaorth2}
v_j\perp v_{j+1} \quad\Longrightarrow\quad M_{v_1\ldots v_j v_{j+1} \ldots v_m,\mathbf{w}} = 0 .
\eeq
This follows from~(\ref{npalin2}) together with~(\ref{npaip}).
\end{enumerate}
\end{rem}

\subsection{Convergence of the hierarchy}

As we now know, the $\mathcal{Q}_n$-family of sets constitutes a sequence of outer approximations to the quantum set. But does this sequence converge to the quantum set? In other words, if $p\in\mathcal{Q}_n$ for all $n\in\N$, does this imply that $p\in\mathcal{Q}$? We will now see that the answer to this is positive, so that the hierarchy in fact \emphalt{characterizes} quantum models.

We might also consider infinite matrices $M$ with entries $M_{\mathbf{v},\mathbf{w}}$ indexed by strings of arbitrary length $\mathbf{v},\mathbf{w}\in V(H)^*$; starting from a quantum model and considering~(\ref{npamatrix}) as the resulting definition of the matrix, the same proof as before shows that the properties of Lemma~\ref{npan} still hold, if we take positive semidefiniteness to mean that
$$
\sum_{\mathbf{v},\mathbf{w}\in V(H)^*} x_{\mathbf{v}}^* M_{\mathbf{v},\mathbf{w}} x_{\mathbf{w}} \geq 0
$$
for all finitely supported $(x_{\mathbf{v}})_{\mathbf{v}\in V(H)^*}$.

\begin{prop}
\label{GNS}
If such an infinite matrix exists, then $p\in\mathcal{Q}$.
\end{prop}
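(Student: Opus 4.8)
The plan is to carry out the Gelfand--Naimark--Segal (GNS) construction on the infinite matrix $M$, interpreting it as a positive linear functional on a suitable $*$-algebra of formal words in the vertices. First I would form the free complex $*$-algebra $\mathcal{A}$ generated by symbols $\{P_v : v\in V(H)\}$, subject to the relations $P_v = P_v^\dag = P_v^2$ for each $v$ (so each generator is a projection), $P_uP_v = 0$ whenever $u\perp v$ (matching~\eqref{npaort} and~\eqref{npaorth2}), and $\sum_{v\in e}P_v = \mathbbm{1}$ for every $e\in E(H)$ (matching~\eqref{npalin}). Every element of $\mathcal{A}$ is a linear combination of words $P_{\mathbf{v}}$, $\mathbf{v}\in V(H)^*$. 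Define $\omega:\mathcal{A}\to\C$ on the word basis by $\omega(P_{\mathbf{v}}) := M_{\emptyset,\mathbf{v}^\dag}$; one must check this is well-defined on $\mathcal{A}$, i.e.\ that the defining relations of $\mathcal{A}$ are respected by $M$ — but this is exactly the content of the Hankel property~\eqref{npahankel}, idempotency~\eqref{npaip}, and the linear/orthogonality relations~\eqref{npalin}, \eqref{npaorth2}, all of which hold for the infinite $M$ by the same arguments as in Remark~\ref{npaproplist} (those derivations only used~\eqref{npanorm}--\eqref{npaort}, which we are assuming). Then $\omega$ is a positive functional on $\mathcal{A}$: $\omega(a^\dag a) = \sum_{\mathbf{v},\mathbf{w}} x_{\mathbf{v}}^* M_{\mathbf{w}^\dag,\mathbf{v}^\dag} x_{\mathbf{w}}\ge 0$ for $a = \sum_{\mathbf{v}} x_{\mathbf{v}} P_{\mathbf{v}}$, which is precisely the infinite positive-semidefiniteness hypothesis (after matching up the index conventions via~\eqref{npahankel}), and $\omega(\mathbbm{1}) = M_{\emptyset,\emptyset} = 1$.

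Next I would run the standard GNS machinery: let $N := \{a\in\mathcal{A} : \omega(a^\dag a) = 0\}$, which is a left ideal by Cauchy--Schwarz for the semi-inner-product $\langle a,b\rangle := \omega(a^\dag b)$, form the quotient $\mathcal{A}/N$, and complete it to a Hilbert space $\H$. Left multiplication gives a $*$-representation $\pi:\mathcal{A}\to\B(\H)$ — here one needs that each $\pi(P_v)$ is a bounded operator, which follows because $P_v$ is a projection in $\mathcal{A}$ so $0\le\pi(P_v)\le\mathbbm{1}$ (boundedness of the generators then extends to all of $\mathcal{A}$). Set $P_v := \pi(P_v)\in\B(\H)$; these are genuine projections, they satisfy $\sum_{v\in e}P_v = \mathbbm{1}_\H$ because that relation holds in $\mathcal{A}$. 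Let $\xi\in\H$ be the image of $\mathbbm{1}\in\mathcal{A}$ and put $\rho := |\xi\rangle\langle\xi|$, a rank-one state with $\tr\rho = \langle\xi,\xi\rangle = \omega(\mathbbm{1}) = 1$. Then $\tr(\rho P_v) = \langle\xi, \pi(P_v)\xi\rangle = \omega(P_v) = M_{\emptyset,v} = p(v)$ by~\eqref{nparep} (using $M_{v,\emptyset} = M_{\emptyset,v}$ from Hermiticity), so $(\H,\rho,\{P_v\})$ witnesses $p\in\mathcal{Q}(H)$.

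The step I expect to be the main obstacle is verifying that $\omega$ is well-defined on $\mathcal{A}$, i.e.\ reconciling the quotient relations defining $\mathcal{A}$ with the matrix identities — in particular making sure that the Hankel-type identity~\eqref{npahankel} propagates correctly to all reduced words and that no inconsistency arises from combining the $\sum_{v\in e}P_v=\mathbbm{1}$ relations with orthogonality. This is where one must be careful that the index conventions ($M_{\mathbf{v},\mathbf{w}}$ vs.\ $P_{\mathbf{v}}P_{\mathbf{w}^\dag}$) line up, but no genuinely new idea is needed beyond the computations already assembled in Remark~\ref{npaproplist} applied to the infinite matrix. A secondary technical point is the boundedness of $\pi(P_v)$, which is immediate from the projection relation; everything else is the textbook GNS argument. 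Finally, one should note the construction automatically produces a quantum model in the precise sense of Definition~\ref{qmdef}, completing the proof; combined with Proposition~\ref{QinQ1} this also yields $\mathcal{Q}(H) = \mathcal{Q}_\infty(H)$ where $\mathcal{Q}_\infty := \bigcap_n \mathcal{Q}_n$, since any $p$ in every $\mathcal{Q}_n$ gives a consistent family of finite matrices whose inductive limit is such an infinite $M$.
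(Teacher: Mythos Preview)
Your argument is correct and is precisely the GNS construction the paper carries out; the paper's explicit version works with the free vector space on $V(H)^*$ directly rather than first passing to the quotient $*$-algebra $\mathcal{A}$, thereby deferring the well-definedness check you flag to the verification that the algebra relations hold modulo the null space $\mathcal{N}$---a cosmetic rather than substantive difference.

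One caveat on your closing sentence, which strays beyond the present Proposition: the matrices $M^n$ witnessing $p\in\mathcal{Q}_n$ for different $n$ need not be mutually compatible (the restriction of $M^{n+1}$ to $V(H)^{*n}$ is \emph{a} certificate for $\mathcal{Q}_n$, not necessarily the given $M^n$), so there is no ``inductive limit'' of them. The paper's Theorem~\ref{convNPA} instead bounds all entries uniformly in $[-1,1]$ and invokes compactness (Tychonoff) to extract a convergent subsequence whose limit is the required infinite $M$; this is a genuine additional step, not a formality.
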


\begin{proof}
We prove this first by giving the short high-level explanation, and then provide more details on what this means explicitly.

Abstractly, such an infinite matrix $M$ can be understood to be a ($*$-algebraic) state $\phi$ on the $*$-algebra with generators $\{P_v,\: v\in V(H)\}$ and relations
\beq
\label{algpres}
P_v = P_v^2 = P_v^* ,\qquad \sum_{v\in e} P_v = \mathbbm{1} \quad \forall e\in E(H)
\eeq
via the assignment
$$
\phi\left(P_{v_1}\ldots P_{v_n}\right) \defin M_{v_1\ldots v_n,\emptyset} .
$$
and extending by linearity. Then, the GNS construction (see e.g.~\cite{KR}) turns this into a quantum representation recovering the given probabilities~(\ref{nparep}). For this reason, a probabilistic model is quantum if and only if there exists such an infinite matrix $M$ having the properties of Lemma~\ref{npan}.

If one turns this prescription into an explicit construction, one obtains the following. First, we claim that
\beq
\label{projwelldef}
\sum_{\mathbf{v},\mathbf{w}\in V(H)^*} x_{\mathbf{v}}^* M_{\mathbf{v}u,\mathbf{w}u} x_{\mathbf{w}} \leq
\sum_{\mathbf{v},\mathbf{w}\in V(H)^*} x_{\mathbf{v}}^* M_{\mathbf{v},\mathbf{w}} x_{\mathbf{w}} 
\eeq
for any fixed $u\in V(H)$ and finitely supported $(x_{\mathbf{v}})_{\mathbf{v}\in V(H)^*}$. To see this, choose any $e\in E(H)$ with $u\in e$ and write
\begin{align*}
\sum_{\mathbf{v},\mathbf{w}\in V(H)^*} x_{\mathbf{v}}^* \left( M_{\mathbf{v},\mathbf{w}} - M_{\mathbf{v}u,\mathbf{w}u} \right) x_{\mathbf{w}} & \stackrel{\eqref{npalin2}}{=} \sum_{\mathbf{v},\mathbf{w}\in V(H)^*} x_{\mathbf{v}}^* \left( \sum_{u'\in e,\: u'\neq u} M_{\mathbf{v}u',\mathbf{w}u'} \right) x_{\mathbf{w}} \\[4pt]
&\;\;= \sum_{u'\in e,\: u'\neq u} \; \sum_{\mathbf{v},\mathbf{w}\in V(H)^*} x_{\mathbf{v}}^* M_{\mathbf{v}u',\mathbf{w}u'} x_{\mathbf{w}} \geq 0,
\end{align*}
where the last inequality is due to positive semidefiniteness of $M$. This proves~(\ref{projwelldef}).

We now start the construction by taking the infinite-dimensional vector space spanned by all strings, $\H_0\defin\mathrm{lin}_{\C}\left( V(H)^* \right)$. The formula
$$
\left\langle \sum_{\mathbf{v}\in V(H)^*} x_{\mathbf{v}}\mathbf{v}, \sum_{\mathbf{w}\in V(H)^*} y_{\mathbf{w}}\mathbf{w} \right\rangle \defin \sum_{\mathbf{v},\mathbf{w}\in V(H)^*} x_{\mathbf{v}}^* M_{\mathbf{v},\mathbf{w}} y_{\mathbf{w}} .
$$
defines a positive semidefinite inner product on $\H_0$ in terms of the matrix $M$. The Cauchy--Schwarz inequality shows that
$$
\mathcal{N} \defin \left\{ \sum_{\mathbf{v}\in V(H)^*} x_{\mathbf{v}} \mathbf{v} \in\H_0 \:\Bigg|\: \left\langle \sum_{\mathbf{v}} x_{\mathbf{v}}\mathbf{v}, \sum_{\mathbf{v}} x_{\mathbf{v}}\mathbf{v} \right\rangle = 0 \right\}
$$
is a linear subspace of $\H_0$. The induced inner product on the quotient space $\H_0/\mathcal{N}$ is therefore positive definite by definition. We take the Hilbert space $\H$ to be the completion of $\H_0/\mathcal{N}$ with respect to the norm coming from this inner product.

Now for $u\in V(H)$, the operator $P_u$ is defined to act on $\H_0$ as 
$$
P_u\left(\sum_{\mathbf{v}\in V(H)^*} x_{\mathbf{v}} \mathbf{v} \right) \defin \sum_{\mathbf{v}\in V(H)^*} x_{\mathbf{v}} \mathbf{v}u .
$$
Thanks to~(\ref{projwelldef}), this maps $\mathcal{N}$ to itself, and therefore descends to a well-defined operator on $\B(\H)$, which we also denote by $P_u$. The equation $M_{\mathbf{v}u,\mathbf{w}}=M_{\mathbf{v},\mathbf{w}u}$ guarantees that $P_u$ is self-adjoint, while $M_{\mathbf{v}uu,\mathbf{w}}=M_{\mathbf{v}u,\mathbf{w}}$ shows that $P_u^2=P_u$ since
$$
\sum_{\mathbf{v}\in V(H)^*} x_{\mathbf{v}} \left(\mathbf{v}uu - \mathbf{v}u\right) \:\in\mathcal{N} ,
$$
which follows from~\eqref{npaip}. The equation $\sum_{u\in e} P_u = \mathbbm{1}_\H$ holds since
$$
\sum_{\mathbf{v}\in V(H)^*} x_{\mathbf{v}} \left( \mathbf{v} - \sum_{u\in e} \mathbf{v} u \right) \:\in\mathcal{N},
$$
thanks to~(\ref{npalin}). Finally, the rank-one density operator associated to the empty string $\emptyset\in\H$ is the desired quantum state, since
$$
\langle \emptyset, P_u \emptyset\rangle = M_{\emptyset,u} = p(u) .
$$
This ends our explicit description of the GNS construction.
\end{proof}

From this reasoning, we find that the sequence of sets $(\mathcal{Q}_n)_{n\in\N}$ converges in the following sense:

\begin{thm}
\label{convNPA}
For every contextuality scenario $H$,
$$
\mathcal{Q}(H) = \bigcap_{n\in \N} \mathcal{Q}_n(H) .
$$
\end{thm}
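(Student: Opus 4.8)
The inclusion $\mathcal{Q}(H)\subseteq\bigcap_{n\in\N}\mathcal{Q}_n(H)$ is already contained in Proposition~\ref{QinQ1}, so the task is to prove $\bigcap_{n\in\N}\mathcal{Q}_n(H)\subseteq\mathcal{Q}(H)$. Fix a probabilistic model $p$ with $p\in\mathcal{Q}_n(H)$ for every $n\in\N$. By Proposition~\ref{GNS} it suffices to construct a single infinite matrix $M=(M_{\mathbf{v},\mathbf{w}})_{\mathbf{v},\mathbf{w}\in V(H)^*}$ which is positive semidefinite in the finitely-supported sense and which satisfies~\eqref{npanorm},~\eqref{npalin},~\eqref{npaort} and~\eqref{nparep}. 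The matrices witnessing the various memberships $p\in\mathcal{Q}_n(H)$ need not be restrictions of one another, so $M$ will be produced by a compactness argument over the countable index set $V(H)^*$ (recall that $V(H)$, and hence every $V(H)^{*n}$, is finite).

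First I would record a uniform bound on matrix entries. For any real matrix $M^{(n)}$ witnessing $p\in\mathcal{Q}_n(H)$ (reality is harmless by the remark following Definition~\ref{npadefn}), properties~\eqref{temporal},~\eqref{npanorm} and~\eqref{bounddiag} from Remark~\ref{npaproplist} give $0\le M^{(n)}_{\mathbf{v},\mathbf{v}}\le M^{(n)}_{\emptyset,\emptyset}=1$ and then $|M^{(n)}_{\mathbf{v},\mathbf{w}}|^2\le M^{(n)}_{\mathbf{v},\mathbf{v}}M^{(n)}_{\mathbf{w},\mathbf{w}}\le 1$, so every entry lies in $D\defin[-1,1]$. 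Thus each such witness, extended by zeros on all remaining coordinates, sits inside the compact product space $X\defin D^{V(H)^*\times V(H)^*}$ (compact by Tychonoff, or by a diagonal argument since $V(H)^*$ is countable).

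Next I would define, for each $n\in\N$, the set $C_n\subseteq X$ of those $M\in X$ whose restriction to $V(H)^{*n}\times V(H)^{*n}$ is positive semidefinite and satisfies~\eqref{npanorm},~\eqref{npalin},~\eqref{npaort} (for all strings short enough that the relevant entries are defined at level $n$) and~\eqref{nparep}. Each $C_n$ is closed, since positive semidefiniteness of a fixed finite principal submatrix is a closed condition and the remaining requirements are equalities of continuous (indeed linear) functions of finitely many coordinates; each $C_n$ is nonempty because $p\in\mathcal{Q}_n(H)$ supplies a witness; and $C_{n+1}\subseteq C_n$, since the level-$(n+1)$ constraints literally contain the level-$n$ ones (a principal submatrix of a positive semidefinite matrix is positive semidefinite, and the linear equalities imposed at level $n$ form a subset of those imposed at level $n+1$). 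Hence $\{C_n\}_{n\in\N}$ is a decreasing chain of nonempty closed subsets of the compact space $X$, so $\bigcap_{n\in\N}C_n\neq\emptyset$.

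Finally, pick any $M\in\bigcap_{n\in\N}C_n$. This $M$ meets the hypotheses of Proposition~\ref{GNS}: any finitely supported family $(x_{\mathbf{v}})_{\mathbf{v}\in V(H)^*}$ is supported on some $V(H)^{*n}$, so $\sum_{\mathbf{v},\mathbf{w}}x_{\mathbf{v}}^*M_{\mathbf{v},\mathbf{w}}x_{\mathbf{w}}\ge 0$ by positive semidefiniteness of the level-$n$ submatrix, and each instance of~\eqref{npanorm},~\eqref{npalin},~\eqref{npaort},~\eqref{nparep} involves only finitely many entries of $M$, all constrained correctly at any sufficiently large level. Proposition~\ref{GNS} then yields $p\in\mathcal{Q}(H)$. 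The only point requiring care is the verification that the level-$(n+1)$ constraint set refines the level-$n$ one, so that the $C_n$ are genuinely nested and plain compactness of $X$ applies; this is routine bookkeeping with the definitions and Remark~\ref{npaproplist}, and I do not expect a real obstacle anywhere in the argument.
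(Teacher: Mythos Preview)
Your proof is correct and follows essentially the same compactness strategy as the paper: both arguments place the level-$n$ witnesses (padded with zeros) inside the compact product space $[-1,1]^{V(H)^*\times V(H)^*}$ and extract a limiting infinite matrix to which Proposition~\ref{GNS} applies. The only cosmetic differences are that you use the finite intersection property via nested closed sets $C_n$ where the paper extracts a convergent subsequence, and you obtain the uniform entry bound via~\eqref{temporal} rather than the paper's route through~\eqref{npahankel}; neither changes the substance of the argument.
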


\begin{proof}[Proof~(\cite{NPA})]
Since we already know that $\mathcal{Q}(H)\subseteq\mathcal{Q}_n(H)$, it remains to be shown that if $p\in\mathcal{Q}_n(H)$ for all $n\in\N$, then $p\in\mathcal{Q}(H)$. To this end, we show that if a matrix $(M^n_{\mathbf{v},\mathbf{w}})_{\mathbf{v},\mathbf{w}\in V(H)^{*n}}$ exists with the required properties for every $n$, then there also exists a corresponding infinite matrix $(M^{\infty}_{\mathbf{v},\mathbf{w}})_{\mathbf{v},\mathbf{w}\in V(H)^*}$.

For $\mathbf{v}\in V(H)^{*n}$, positive semidefiniteness gives the estimate
$$
\left(M^{2n}_{\mathbf{v},\mathbf{v}}\right)^2 \stackrel{(\ref{npahankel})}{=} \left(M^{2n}_{\mathbf{v}\mathbf{v}^\dag,\emptyset}\right)^2 \leq M^{2n}_{\mathbf{v}\mathbf{v}^\dag,\emptyset} \cdot M^{2n}_{\emptyset,\emptyset} = M^{2n}_{\mathbf{v},\mathbf{v}} ,
$$
which implies $0\leq M^{2n}_{\mathbf{v},\mathbf{v}}\leq 1$, and hence
$$
|M^{2n}_{\mathbf{v},\mathbf{w}}|^2\stackrel{\eqref{bounddiag}}{\leq} M^{2n}_{\mathbf{v},\mathbf{v}} M^{2n}_{\mathbf{w},\mathbf{w}} \leq 1 .
$$
We therefore have $M^k_{\mathbf{v},\mathbf{w}}\in[-1,+1]$ for all $\mathbf{v},\mathbf{w}\in V(H)^{*n}$ with $n\leq 2k$.

Now consider the truncation of any $M^{2n}$ to a matrix indexed by $\mathbf{v},\mathbf{w}\in V(H)^{*n}$. Upon filling this truncation up with $0$'s, we obtain an infinite matrix $M'^{2n}$ indexed by $\mathbf{v},\mathbf{w}\in V(H)^*$ with all elements in $[-1,+1]$. In this way, every matrix $M'^{2n}$ becomes an element of $[-1,+1]^{V(H)^*\times V(H)^*}$. The space $[-1,+1]^{V(H)^*\times V(H)^*}$, equipped with the product topology, is second countable, and also compact thanks to Tychonoff's theorem. Hence, the sequence $(M'^n)_{n\in\N}$ has a convergent subsequence, and we write $M^\infty$ for its limit. By construction, this $M^\infty$ is an infinite matrix indexed by $\mathbf{v},\mathbf{w}\in V(H)^*$ having all the desired properties. The claim now follows from Proposition~\ref{GNS}.
\end{proof}

Since each $\mathcal{Q}_n(H)$ is defined in terms of a semidefinite program, we say that this represents a \emph{hierarchy of semidefinite programs} characterizing $\mathcal{Q}(H)$. It is a subfamily of the hierarchies of semidefinite programs in noncommutative optimization introduced in~\cite{NPA2}, which generalize the `commutative' hierarchies originally discovered in the context of convex optimization~\cite{Lasserre}.

\subsection{Equivalent characterizations of $\mathcal{Q}_1$ and the Lov{\'a}sz number}
\label{secQ1lov}

In this and the following subsection, we take a closer look at $\mathcal{Q}_1$, the first level of our semidefinite hierarchy, starting with a long list of equivalent characterizations:

\begin{prop}
\label{CSWeq}
For $p\in\mathcal{G}(H)$, the following are equivalent:
\begin{enumerate}
\item\label{pQ1} $p\in\mathcal{Q}_1(H)$;
\item\label{pM} There exist a Hilbert space $\mathcal{H}$, a unit vector $|\Psi\rangle \in \mathcal{H}$ and a vector $|\phi_v\rangle$ for every $v\in V(H)$ such that
\begin{enumerate}
\item \label{ortQ1} $u\perp v \quad \Longrightarrow \quad \langle \phi_u | \phi_v \rangle = 0$,
\item \label{sumQ1} $\sum_{v\in e} |\phi_v\rangle = |\Psi\rangle \quad \forall e\in E(H)$,
\item \label{repQ1} $p(v) = \langle \phi_v|\phi_v\rangle$;
\end{enumerate}
\item\label{pCSW} There exist a Hilbert space $\mathcal{H}$, a unit vector $|\Psi\rangle \in \mathcal{H}$ and a unit vector $|\psi_v\rangle$ for every $v\in V(H)$ such that
\begin{enumerate}
\item $u\perp v \quad \Longrightarrow \quad \langle \psi_u | \psi_v \rangle = 0$,
\item $p(v) = |\langle \psi_v|\Psi\rangle |^2$;
\end{enumerate}
\item\label{port} There exist a Hilbert space $\mathcal{H}$, a unit vector $|\Psi\rangle \in \mathcal{H}$ and a projection $P_v$ for every $v\in V(H)$ such that
\begin{enumerate}
\item $u\perp v \quad \Longrightarrow \quad P_u \perp P_v$,
\item $p(v) = \langle \Psi|P_v |\Psi \rangle \quad \forall v\in V(H)$;
\end{enumerate}
\item\label{psum} There exist a Hilbert space $\mathcal{H}$, a unit vector $|\Psi\rangle \in \mathcal{H}$ and a projection $P_v$ for every $v\in V(H)$ such that
\begin{enumerate}
\item $\sum_{v \in e} P_v \leq \mathbbm{1}_\mathcal{H} \quad \forall e \in E(H)$,
\item $p(v) = \langle \Psi|P_v |\Psi \rangle \quad \forall v\in V(H)$;
\end{enumerate}
\end{enumerate}
In all cases, $\H$ can also be taken to be the real Hilbert space $\R^{|V(H)|}$.
\end{prop}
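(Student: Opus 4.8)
The plan is to prove all five conditions equivalent by establishing the two-sided implication \ref{pQ1}$\Leftrightarrow$\ref{pM} directly and then closing the cycle \ref{pM}$\Rightarrow$\ref{pCSW}$\Rightarrow$\ref{port}$\Rightarrow$\ref{psum}$\Rightarrow$\ref{pM}; the real-Hilbert-space addendum I would handle at the very end by a dimension count.

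For \ref{pQ1}$\Leftrightarrow$\ref{pM} the idea is just ``Gram matrix''. Unwinding Definition~\ref{npadefn}, membership in $\mathcal{Q}_1(H)$ means there is a positive semidefinite matrix $M$ indexed by $V(H)^{*1}=\{\emptyset\}\cup V(H)$ satisfying \eqref{npanorm}, \eqref{npalin}, \eqref{npaort} and $p(v)=M_{v,\emptyset}$. Writing $M_{\mathbf v,\mathbf w}=\langle\xi_{\mathbf v}\mid\xi_{\mathbf w}\rangle$ for vectors $\xi$ that span their ambient space, I would set $|\Psi\rangle:=\xi_\emptyset$ and $|\phi_v\rangle:=\xi_v$. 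Then \eqref{npanorm} gives $\|\Psi\|=1$, \eqref{npaort} gives \ref{ortQ1}, and \eqref{npalin} with empty left index gives $\sum_{v\in e}M_{v,\mathbf w}=M_{\emptyset,\mathbf w}$ for all $\mathbf w$; since the $\xi_{\mathbf w}$ span, this lifts to the vector identity $\sum_{v\in e}|\phi_v\rangle=|\Psi\rangle$, which is \ref{sumQ1}, whence $\langle\phi_v|\phi_v\rangle=\langle\phi_v|\Psi\rangle=M_{v,\emptyset}=p(v)$, which is \ref{repQ1}. The converse is the same computation read backwards: a family of vectors as in \ref{pM} defines $M_{\mathbf v,\mathbf w}$ by the corresponding inner products, and \eqref{npanorm}--\eqref{npaort} together with $p(v)=M_{v,\emptyset}$ are immediate, using \ref{sumQ1} and \ref{ortQ1} to see $\langle\phi_v|\Psi\rangle=\langle\phi_v|\phi_v\rangle$.

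Each step of the cycle is a short construction. For \ref{pM}$\Rightarrow$\ref{pCSW}: put $|\psi_v\rangle:=|\phi_v\rangle/\sqrt{p(v)}$ whenever $p(v)>0$, and for each remaining vertex enlarge $\H$ by a fresh orthonormal direction used as $|\psi_v\rangle$; the relation $\langle\phi_v|\Psi\rangle=\langle\phi_v|\phi_v\rangle=p(v)$ yields $|\langle\psi_v|\Psi\rangle|^2=p(v)$ and orthogonality is preserved. For \ref{pCSW}$\Rightarrow$\ref{port}: take the rank-one projections $P_v:=|\psi_v\rangle\langle\psi_v|$. For \ref{port}$\Rightarrow$\ref{psum}: for a fixed edge $e$ the projections $P_v$, $v\in e$, are pairwise orthogonal (distinct vertices of an edge are orthogonal), so $\sum_{v\in e}P_v$ is itself a projection, hence $\le\mathbbm 1_\H$. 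For \ref{psum}$\Rightarrow$\ref{pM} I would use two facts: first, $P+Q\le\mathbbm 1$ forces $PQ=0$ for projections (as $Q\le\mathbbm 1-P$ means $\mathrm{ran}\,Q\subseteq\ker P$), so $\sum_{v\in e}P_v\le\mathbbm 1$ makes the $P_v$, $v\in e$, pairwise orthogonal; second, since $p\in\mathcal G(H)$, $\sum_{v\in e}\langle\Psi|P_v|\Psi\rangle=1=\|\Psi\|^2$ together with $\mathbbm 1-\sum_{v\in e}P_v\ge0$ forces $\sum_{v\in e}P_v|\Psi\rangle=|\Psi\rangle$. Then $|\phi_v\rangle:=P_v|\Psi\rangle$ satisfies \ref{sumQ1}, \ref{repQ1} (since $\langle\Psi|P_v^2|\Psi\rangle=p(v)$) and \ref{ortQ1} (since $P_uP_v=0$ for $u\perp v$); this closes the cycle and incidentally supplies the trivial converse \ref{psum}$\Rightarrow$\ref{port} as well.

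For the final sentence: symmetrizing $M\mapsto\tfrac12(M+\bar M)$ lets us take the witnessing matrix in \ref{pQ1} real, so the Gram vectors above lie in a real inner-product space; and since $H$ has at least one nonempty edge $e$---otherwise $\mathcal G(H)=\emptyset$ and there is nothing to prove---the identity $|\Psi\rangle=\sum_{v\in e}|\phi_v\rangle$ places all of $|\Psi\rangle$ and the $|\phi_v\rangle$ in $\mathrm{span}\{|\phi_v\rangle:v\in V(H)\}$, a space of dimension at most $|V(H)|$, which we identify with $\R^{|V(H)|}$. The cycle constructions stay over $\R$; the only point to check is that $\R^{|V(H)|}$ has room for the fresh vectors added in \ref{pM}$\Rightarrow$\ref{pCSW}, which it does because $\mathrm{span}\{|\phi_v\rangle:p(v)>0\}$ already contains $|\Psi\rangle$ and has dimension at most $|\{v:p(v)>0\}|$, leaving at least $|\{v:p(v)=0\}|$ orthogonal dimensions free. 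I expect this bookkeeping to be the only part requiring genuine care; everything else reduces to the Gram-matrix correspondence and the elementary projection identity $P+Q\le\mathbbm 1\iff PQ=0$.
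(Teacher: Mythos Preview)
Your proposal is correct and follows essentially the same approach as the paper: a cycle of implications built on the Gram-matrix correspondence, with the same constructions at each step (normalize for \ref{pM}$\Rightarrow$\ref{pCSW}, rank-one projections for \ref{pCSW}$\Rightarrow$\ref{port}, pairwise orthogonality for \ref{port}$\Rightarrow$\ref{psum}, and the key identity $\sum_{v\in e}P_v|\Psi\rangle=|\Psi\rangle$ coming from $p\in\mathcal G(H)$ to close the loop). Your \ref{pQ1}$\Rightarrow$\ref{pM} is actually a bit cleaner than the paper's---by choosing the Gram vectors to span their ambient space you get $\sum_{v\in e}|\phi_v\rangle=|\Psi\rangle$ directly, whereas the paper instead decomposes $|\Psi\rangle=|\Psi^\parallel\rangle+|\Psi^\perp\rangle$ relative to $\mathrm{span}\{|\phi_v\rangle:v\in e\}$ and computes $\|\Psi^\perp\|=0$; you are also more explicit than the paper about the $p(v)=0$ case in \ref{pM}$\Rightarrow$\ref{pCSW} and about the dimension bookkeeping for the real-Hilbert-space claim.
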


In terms of the terminology of Appendix~\ref{appcap}, the vectors $|\psi_v\rangle$ in~\ref{pCSW} form an orthonormal labeling of the non-orthogonality graph $\mathrm{NO}(H)$. Characterizations~\ref{pCSW} and~\ref{psum} also show that our $\mathcal{Q}_1(H)$ coincides with the set `$\mathcal{E}_{\mathrm{QM}}^1$' considered in~\cite{CSW}.

\begin{proof}
\begin{itemize}
\item[\underline{\ref{pQ1}$\Rightarrow$\ref{pM}:}] The assumption is that there exists a positive semidefinite matrix $M$ with rows and columns indexed by $V(H)$ together with $\emptyset$ such that $M_{\emptyset,\emptyset}=1$, for any $e\in E(H)$ and any $v\in V(H)$ we have $\sum_{u\in e} M_{u,v} = M_{\emptyset,v}$ as well as $\sum_{u\in e} M_{u,\emptyset} = M_{\emptyset,\emptyset}$, and finally $M_{u,v}=0$ for $u\perp v$, such that $p(v) = M_{v,v}$.

By positive semidefiniteness, we can write this $M$ as a Gram matrix, meaning that there exist vectors $|\Psi\rangle,|\phi_v\rangle$ in $\H=\C^{|V(H)|}$ such that
$$
M_{\emptyset,\emptyset} = \langle\Psi|\Psi\rangle,\qquad M_{\emptyset,v} = \langle \Psi|\phi_v\rangle,\qquad M_{u,v} = \langle \phi_u|\phi_v\rangle,
$$
from which~\ref{ortQ1} and~\ref{repQ1} follow by the assumptions.

Now we fix $e\in E(H)$ and show~\ref{sumQ1}. We decompose $|\Psi\rangle$ into orthogonal components $|\Psi\rangle = |\Psi^{\parallel}\rangle + |\Psi^{\perp}\rangle$, where $|\Psi^{\parallel}\rangle \in \mathrm{lin}_{\C}\{|\phi_v\rangle \::\: v\in e\}$. Due to the orthogonality of the $\{|\phi_v\rangle\}_{v\in e}$, the equation
\[
\langle\phi_v|\phi_v\rangle = M_{v,v} = M_{v,\emptyset} = \langle\phi_v|\Psi\rangle 
\]
implies that $|\Psi^\parallel\rangle = \sum_{v\in e} |\phi_v\rangle$. Moreover, the computation
$$
\langle \Psi^{\parallel} | \Psi^{\parallel}\rangle + \langle \Psi^{\perp} | \Psi^{\perp} \rangle = M_{\emptyset,\emptyset} = \sum_{v\in e} M_{\emptyset,v} = \sum_{v,u\in e} M_{u,v} = \sum_{v,u\in e} \langle \phi_u |\phi_v \rangle = \langle \Psi^{\parallel} | \Psi^{\parallel} \rangle
$$
shows that $|\Psi^{\perp}\rangle = 0$, so that $\sum_{v\in e} |\phi_v\rangle = |\Psi\rangle$, as desired.

\item[\underline{\ref{pM}$\Rightarrow$\ref{pCSW}:}] 
Normalizing the $|\phi_v\rangle$ to $|\psi_v\rangle\defin\tfrac{1}{\sqrt{\langle \phi_v|\phi_v\rangle}}|\phi_v\rangle$ guarantees the orthogonality relations, and choosing some edge $e\in E(H)$ with $v\in e$ gives, upon plugging in $|\Psi\rangle = \sum_{u\in e}|\phi_u\rangle$,
$$
|\langle \psi_v|\Psi\rangle|^2 = \frac{1}{\langle \phi_v|\phi_v\rangle} \left|\left\langle \phi_v\Bigg| \sum_{u\in e} \phi_u\right\rangle\right|^2 = \frac{1}{\langle \phi_v|\phi_v\rangle} \langle \phi_v|\phi_v\rangle^2 = \langle \phi_v|\phi_v\rangle ,
$$
due to the orthogonality relations.
\item[\underline{\ref{pCSW}$\Rightarrow$\ref{port}:}] Define $P_v \defin |\psi_v\rangle\langle\psi_v|$.
\item[\underline{\ref{port}$\Rightarrow$\ref{psum}:}] This is clear since for fixed $e\in E(H)$, all projections $P_v$ for $v\in e$ are mutually orthogonal, which implies $\sum_{v\in e} P_v\leq \mathbbm{1}_\mathcal{H}$.
\item[\underline{\ref{psum}$\Rightarrow$\ref{pQ1}:}] Define $M_{v,w}\defin\langle\Psi|P_vP_w^\dagger|\Psi\rangle$, and similarly without one or both of the projections when $v$ or $w$ is replaced by $\emptyset$. We check that $M$ satisfies conditions~(\ref{npanorm}) to~(\ref{npaort}) and is positive semidefinite:
\begin{itemize}
\item[(\ref{npanorm})] $M_{\emptyset,\emptyset}=\langle\Psi|\Psi\rangle=1$, since $|\Psi\rangle$ is a unit vector. 
\item[(\ref{npalin})] Consider an edge $e\in E$. Since $p(v)$ is a probabilistic model,
$$
\langle\Psi|\Psi\rangle=1=\sum_{v\in e} p(v)=\langle \Psi | \sum_{v\in e} P_v |\Psi\rangle,
$$
which implies $\sum_{v\in e} P_v |\Psi\rangle = |\Psi\rangle$. Then, 
$$
\sum_{v\in e} M_{v,w}=\langle \Psi | \sum_{v\in e} P_v P_w | \Psi \rangle = \langle \Psi |P_w |\Psi \rangle = M_{\emptyset,w}.
$$
\item[(\ref{npaort})] If $v\perp w$, then there is an edge $e\in E(H)$ with $v,w\in e$. Hence, $P_v \perp P_w$, so that $M_{v,w}=\langle\Psi|P_vP_w|\Psi\rangle=0$. 
\end{itemize}
Positive semidefiniteness of $M$ can be shown as in the proof of Lemma~\ref{npan}. 
\end{itemize}
Finally, the proof of~\ref{pQ1}$\Rightarrow$\ref{pM} also shows that $\mathcal{H}$ can be taken to be the real Hilbert space $\R^{|V(H)|}$: this is what one gets upon starting with real $M$, applying the construction of the proof, and then restricting $\mathcal{H}$ to the real linear span of the $|\phi_v\rangle$'s. In all other implications, the Hilbert space does not change, and hence the same applies to all other characterizations.
\end{proof}

We can now relate the set $\mathcal{Q}_1$ to the Lov\'asz number $\vartheta$ of the non-orthogonality graph. This graph invariant is defined in Appendix~\ref{appcap}.

\begin{prop}
\label{Q1vsLov}
A probabilistic model $p\in\mathcal{G}(H)$ is in $\mathcal{Q}_1$ if and only if $\vartheta(\mathrm{NO}(H),p)=1$. 
\end{prop}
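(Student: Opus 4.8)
The plan is to read the statement off of the equivalent descriptions of $\mathcal{Q}_1(H)$ already established in Proposition~\ref{CSWeq} together with the orthonormal‑labeling characterization of the weighted Lov\'asz number from Appendix~\ref{appcap}; concretely I will use that $\vartheta(G,w)\le 1$ holds if and only if there is an orthonormal labeling $\{|u_v\rangle\}$ of $G$ and a unit vector $|c\rangle$ with $|\langle u_v|c\rangle|^2\ge w_v$ for all $v$. The first step is to observe that $\vartheta(\mathrm{NO}(H),p)\ge 1$ for \emph{every} probabilistic model $p$: each edge $e\in E(H)$ is an independent set of $\mathrm{NO}(H)$ (any two distinct vertices of $e$ lie in the common edge $e$, hence are non‑adjacent), and $\sum_{v\in e}p(v)=1$ by normalization, so $\alpha(\mathrm{NO}(H),p)\ge 1$ and therefore $\vartheta(\mathrm{NO}(H),p)\ge 1$. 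Hence it suffices to prove that $p\in\mathcal{Q}_1(H)$ is equivalent to $\vartheta(\mathrm{NO}(H),p)\le 1$.

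For the implication $p\in\mathcal{Q}_1(H)\Rightarrow\vartheta(\mathrm{NO}(H),p)\le 1$, I would invoke Proposition~\ref{CSWeq}\ref{pCSW}: it supplies a unit vector $|\Psi\rangle$ and an orthonormal labeling $\{|\psi_v\rangle\}$ of $\mathrm{NO}(H)$ (condition (a) there is precisely the orthonormal‑labeling requirement, since $u\perp v$ means $u,v$ are non‑adjacent in $\mathrm{NO}(H)$) with $p(v)=|\langle\psi_v|\Psi\rangle|^2$. Since trivially $|\langle\psi_v|\Psi\rangle|^2\ge p(v)$, the pair $(|\Psi\rangle,\{|\psi_v\rangle\})$ is feasible for the weighted Lov\'asz number with weights $p$, so $\vartheta(\mathrm{NO}(H),p)\le 1$; combined with $\vartheta\ge 1$ this already gives $\vartheta(\mathrm{NO}(H),p)=1$.

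For the converse, assume $\vartheta(\mathrm{NO}(H),p)\le 1$, so there are a unit vector $|c\rangle$ and an orthonormal labeling $\{|u_v\rangle\}$ of $\mathrm{NO}(H)$ with $|\langle u_v|c\rangle|^2\ge p(v)$ for every $v$. The crux is to upgrade these inequalities to equalities: fix an edge $e\in E(H)$; the vectors $\{|u_v\rangle: v\in e\}$ are pairwise orthonormal (distinct vertices of $e$ are non‑adjacent in $\mathrm{NO}(H)$), so $\sum_{v\in e}|\langle u_v|c\rangle|^2\le 1=\sum_{v\in e}p(v)\le\sum_{v\in e}|\langle u_v|c\rangle|^2$, which forces $|\langle u_v|c\rangle|^2=p(v)$ for all $v\in e$, hence for all $v\in V(H)$ as the edges cover $V(H)$. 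Then $(|c\rangle,\{|u_v\rangle\})$ witnesses $p\in\mathcal{Q}_1(H)$ through Proposition~\ref{CSWeq}\ref{pCSW}, completing the equivalence.

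The main obstacle — indeed essentially the only content beyond bookkeeping — is exactly that upgrade step: it is the normalization $\sum_{v\in e}p(v)=1$ (the very thing CSW discard by working with subnormalized probabilities) that converts the one‑sided Lov\'asz inequality into the exact reproduction of $p$ demanded by \ref{pCSW}. A secondary point one must handle carefully is the complement bookkeeping: one should state explicitly, citing Definition~\ref{defnno}, that an orthonormal labeling of $\mathrm{NO}(H)$ in the sense of Appendix~\ref{appcap} (orthogonal vectors for non‑adjacent vertices) is the same as condition \ref{pCSW}(a), namely orthogonality for all $\perp$‑pairs of $H$, and likewise that every edge of $H$ is an independent set of $\mathrm{NO}(H)$.
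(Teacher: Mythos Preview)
Your proof is correct. The forward direction and the observation $\vartheta\ge 1$ are essentially identical to the paper's argument.

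For the converse, however, you take a genuinely different and in fact more elegant route. The paper, starting from an orthonormal labeling with $|\langle\Psi|\psi_v\rangle|^2\ge p(v)$, passes to the enlarged space $\R^{|V(H)|}\oplus\R^{|V(H)|}$ and defines
\[
|\psi'_v\rangle \defin \frac{\sqrt{p(v)}}{|\langle\Psi|\psi_v\rangle|}\,|\psi_v\rangle \;\oplus\; \sqrt{1-\frac{p(v)}{|\langle\Psi|\psi_v\rangle|^2}}\,|e_v\rangle,
\]
so as to \emph{force} $|\langle\Psi|\psi'_v\rangle|^2=p(v)$ by construction. You instead observe that the normalization $\sum_{v\in e}p(v)=1$, together with Bessel's inequality $\sum_{v\in e}|\langle u_v|c\rangle|^2\le 1$ (the $|u_v\rangle$ for $v\in e$ being orthonormal), already forces equality $|\langle u_v|c\rangle|^2=p(v)$ with no modification of the labeling needed. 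This is cleaner and, as you point out yourself, highlights precisely the role of the normalization condition that separates the present framework from CSW's subnormalized one. The paper's dilation trick, on the other hand, is a more generic maneuver that would apply even without normalization (turning any feasible labeling with $|\langle\Psi|\psi_v\rangle|^2\ge w(v)$ into one with equality), at the cost of an extra construction.
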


The corresponding result was already noticed in~\cite{CSW}, where the CSW approach based on the subnormalization of probability has been developed first (see also Section~\ref{CSWtransfer}).

\begin{proof}
We use the characterization of $\mathcal{Q}_1(H)$ given in Proposition~\ref{CSWeq}\ref{pCSW}. Assuming $p\in\mathcal{Q}_1(H)$, we choose corresponding vectors $|\psi_v\rangle,|\Psi\rangle\in\R^{|V(H)|}$; then, by Definition~\ref{wdefs},
$$
\vartheta(\mathrm{NO}(H),p) \leq \max_{v\in V} \frac{p(v)}{|\langle \Psi|\psi_v\rangle|^2} = \frac{p(v)}{p(v)} = 1 .
$$
On the other hand, the inequality $\vartheta(\mathrm{NO}(H),p)\geq 1$ follows from $\alpha(\mathrm{NO}(H),p)\geq 1$, which holds true because any $e\in E(H)$ defines an independent set in $\mathrm{NO}(H)$ and $\sum_{v\in e}p(v) = 1$.

Conversely, if $\vartheta(\mathrm{NO}(H),p)=1$, then there is an orthonormal labeling $(|\psi_v\rangle)_{v\in V(H)}$ and a vector $|\Psi\rangle\in\R^{|V(H)|}$ such that $|\langle\Psi|\psi_v\rangle|^2\geq p(v)$ for all $v$. By choosing $\H = \R^{|V(H)|} \oplus \R^{|V(H)|}$ and setting
$$
|\psi'_v\rangle \defin \frac{\sqrt{p(v)}}{|\langle\Psi|\psi_v\rangle|}|\psi_v\rangle \oplus \sqrt{1 - \frac{p(v)}{|\langle\Psi|\psi_v\rangle|^2}} |e_v\rangle \:\in \H
$$
where the $|e_v\rangle$ form the standard basis of $\mathbbm{R}^{|V(H)|}$, one obtains $|\langle\Psi|\psi'_v\rangle|^2 = p(v)$ with suitably orthogonal unit vectors $|\psi'_v\rangle$, as desired.
\end{proof}

This relation to graph theory has a simple first application:

\begin{prop}
\label{Q1tensor}
\begin{enumerate}
\item $\mathcal{Q}_1$ is closed under $\otimes$:
\beq
\label{eqQ1tensor}
\mathcal{Q}_1(H_A) \otimes \mathcal{Q}_1(H_B) \subseteq \mathcal{Q}_1(H_A\otimes H_B).
\eeq
\item $\mathcal{Q}_1(H)$ is convex.
\end{enumerate}
\end{prop}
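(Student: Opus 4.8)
The plan is to handle the two parts separately: part~(a) via the equivalent characterizations of $\mathcal{Q}_1$ collected in Proposition~\ref{CSWeq} (or, alternatively, via the Lov\'asz number and Lemma~\ref{ortproduct}), and part~(b) directly from the semidefinite-program definition of $\mathcal{Q}_1$ in Definition~\ref{npadefn}. I would present part~(a) first.

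For part~(a), I would start from $p_A\in\mathcal{Q}_1(H_A)$ and $p_B\in\mathcal{Q}_1(H_B)$ and invoke characterization~\ref{pCSW} of Proposition~\ref{CSWeq}: there are a unit vector $|\Psi_A\rangle$ and unit vectors $\{|\psi^A_u\rangle\}_{u\in V(H_A)}$ with $\langle\psi^A_u|\psi^A_{u'}\rangle=0$ whenever $u\perp u'$ in $H_A$ and $p_A(u)=|\langle\psi^A_u|\Psi_A\rangle|^2$, and likewise data $|\Psi_B\rangle$, $\{|\psi^B_v\rangle\}$ for $B$. The natural candidate witness for $p_A\otimes p_B$ is the tensor-product data $|\Psi\rangle\defin|\Psi_A\rangle\otimes|\Psi_B\rangle$ and $|\psi_{(u,v)}\rangle\defin|\psi^A_u\rangle\otimes|\psi^B_v\rangle$, which are again unit vectors and satisfy $|\langle\psi_{(u,v)}|\Psi\rangle|^2=|\langle\psi^A_u|\Psi_A\rangle|^2\,|\langle\psi^B_v|\Psi_B\rangle|^2=p_A(u)p_B(v)=(p_A\otimes p_B)(u,v)$. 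The only nontrivial point is the orthogonality requirement: by the description of the adjacency relation of the strong product worked out in the proof of Lemma~\ref{ortproduct}, if $(u,v)\perp(u',v')$ in $H_A\otimes H_B$ then $u\perp u'$ in $H_A$ or $v\perp v'$ in $H_B$, and in either case the corresponding tensor factor of $\langle\psi_{(u,v)}|\psi_{(u',v')}\rangle$ vanishes. To apply characterization~\ref{pCSW} I also need $p_A\otimes p_B\in\mathcal{G}(H_A\otimes H_B)$, which is exactly Proposition~\ref{PM_prod}. This gives $p_A\otimes p_B\in\mathcal{Q}_1(H_A\otimes H_B)$, hence the inclusion~\eqref{eqQ1tensor}. (Equivalently, one can route the argument through the Lov\'asz number: Proposition~\ref{Q1vsLov}, Lemma~\ref{ortproduct}, and multiplicativity of the weighted Lov\'asz number under the strong product from Appendix~\ref{appcap} give $\vartheta(\mathrm{NO}(H_A\otimes H_B),p_A\otimes p_B)=\vartheta(\mathrm{NO}(H_A),p_A)\cdot\vartheta(\mathrm{NO}(H_B),p_B)=1$.)

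For part~(b), I would argue directly from Definition~\ref{npadefn} with $n=1$. Given $p_0,p_1\in\mathcal{Q}_1(H)$ with certifying positive semidefinite matrices $M^{(0)},M^{(1)}$ indexed by $V(H)$ together with $\emptyset$, and $\lambda\in[0,1]$, set $M\defin\lambda M^{(0)}+(1-\lambda)M^{(1)}$. Then $M$ is positive semidefinite as a convex combination of positive semidefinite matrices; $M_{\emptyset,\emptyset}=\lambda+(1-\lambda)=1$; the linear constraints~\eqref{npalin} are preserved because they are linear in the matrix entries; $M_{u,v}=0$ whenever $u\perp v$ since both $M^{(0)}_{u,v}$ and $M^{(1)}_{u,v}$ vanish; and $M_{v,\emptyset}=\lambda p_0(v)+(1-\lambda)p_1(v)$. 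Hence $M$ certifies $\lambda p_0+(1-\lambda)p_1\in\mathcal{Q}_1(H)$, so $\mathcal{Q}_1(H)$ is convex. I would also remark that the identical argument shows each $\mathcal{Q}_n(H)$ is convex.

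Neither step is genuinely hard. The only place that demands care is the bookkeeping of the orthogonality relations of the strong product in part~(a) — ensuring that the vanishing of one tensor factor is guaranteed whenever two product vertices are co-measurable — and this is precisely what Lemma~\ref{ortproduct} (together with Lemma~\ref{PM_prod} for the membership $p_A\otimes p_B\in\mathcal{G}$) is designed to supply.
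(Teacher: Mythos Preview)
Your proof is correct. For part~(a), your primary argument via Proposition~\ref{CSWeq}\ref{pCSW} (tensoring the orthonormal labelings and the handles $|\Psi_A\rangle,|\Psi_B\rangle$) is a direct, constructive route; the paper instead takes the graph-invariant route you mention parenthetically, invoking Proposition~\ref{Q1vsLov} together with multiplicativity of the weighted Lov\'asz number (Corollary~\ref{lovmult}) and Lemma~\ref{ortproduct}. Your version is more hands-on and avoids the machinery of Appendix~\ref{appcap}, at the cost of redoing the orthogonality bookkeeping that Lemma~\ref{ortproduct} already encapsulates; the paper's version is a one-liner once that framework is in place. For part~(b), the paper explicitly notes that convexity ``can be derived directly from the definition of $\mathcal{Q}_1$''---which is exactly your argument---but chooses instead to deduce it from the subadditivity of $\vartheta$ (Lemma~\ref{thetaconv}) together with homogeneity, again to keep the narrative tied to the graph invariants. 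Both approaches are equally valid; yours is the more elementary and self-contained of the two.
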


\begin{proof}
\begin{enumerate}
\item Combine Proposition~\ref{Q1vsLov} with multiplicativity of $\vartheta$ (Proposition~\ref{lovmult}).
\item While this can be derived directly from the definition of $\mathcal{Q}_1$, it also follows from the subadditivity of $\vartheta$ in Lemma~\ref{thetaconv}.\qedhere
\end{enumerate}
\end{proof}

Again, the CHSH scenario $B_{2,2,2} = B_{1,2,2}\otimes B_{1,2,2}$ exemplifies that~(\ref{eqQ1tensor}) is not an equality in general, even after taking the convex hull on the left-hand side. The reason is that $\mathcal{Q}_1(B_{1,2,2})=\mathcal{C}(B_{1,2,2})$ by Example~\ref{Bell1CG}, but 
\[
\mathrm{conv}\left(\mathcal{Q}_1(B_{1,2,2}) \otimes \mathcal{Q}_1(B_{1,2,2})\right) = \mathrm{conv}\left(\mathcal{C}(B_{1,2,2}) \otimes \mathcal{C}(B_{1,2,2})\right) \stackrel{\eqref{Cproduct}}{=} \mathcal{C}(B_{2,2,2}) \subsetneq \mathcal{Q}_1(B_{2,2,2}).
\]

\subsection{$\mathcal{Q}_1$ on product scenarios}
\label{Q1products}

Naturally, there is the question of what one obtains when applying our hierarchy to Bell scenarios. Does it coincide with the original semidefinite hierarchy of~\cite{NPA}? Since the Bell scenario $B_{n,k,m}$ equals the product $B_{1,k,m}\otimes\ldots\otimes B_{1,k,m}$, we may as well ask the more general question: how can our hierarchy be analyzed on a product scenario? We will answer this question now for the case of $\mathcal{Q}_1$.

\begin{thm}
\label{AQthm}
Let $H_1\otimes\ldots\otimes H_n$ stand for an iterated binary product or for $^{\max}\otimes_{i=1}^n H_i$. Then a probabilistic model $p\in\mathcal{G}(H_1\otimes\ldots\otimes H_n)$ lies in $\mathcal{Q}_1(H_1\otimes\ldots\otimes H_n)$ if and only if there exist a Hilbert space $\H$, a state $|\Psi\rangle\in\H$ and an assignment of projections $E_k^v$ to every party $k=1,\ldots,n$ and vertex $v\in V(H_k)$ such that
\renewcommand{\labelenumi}{(\roman{enumi})}
\renewcommand{\theenumi}{(\roman{enumi})} 
\begin{enumerate}
\item\label{AQsubnorm} $\sum_{v\in e} E_k^v \leq \mathbbm{1}$ for all $k$ and $e\in E(H_k)$,
\item\label{AQperminv} $E_1^{v_1}\ldots E_n^{v_n}|\Psi\rangle = E_{\pi(1)}^{v_{\pi(1)}}\ldots E_{\pi(n)}^{v_{\pi(n)}}|\Psi\rangle$ for all permutations $\pi$ of the parties and sequences of vertices $v_1,\ldots,v_n$,
\item $p(v_1,\ldots,v_n) = \langle\Psi|E_1^{v_1}\ldots E_n^{v_n}|\Psi\rangle$.
\end{enumerate}
\renewcommand{\labelenumi}{(\alph{enumi})}
\renewcommand{\theenumi}{(\alph{enumi})}
\end{thm}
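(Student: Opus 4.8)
\emph{Proof idea.} The plan is to prove both directions by invoking the equivalent descriptions of $\mathcal{Q}_1$ from Proposition~\ref{CSWeq}, together with two facts about the products of Section~\ref{higherprodsmain}: every such product contains all the ``simultaneous'' edges $e_1\times\cdots\times e_n$, and (by the $n$-fold version of Lemma~\ref{ortproduct}, which for an iterated binary product is just associativity of $\boxtimes$ and for $^{\max}\otimes$ is a direct check) two distinct vertices $(u_1,\ldots,u_n)$ and $(v_1,\ldots,v_n)$ of the product are orthogonal exactly when $u_i\perp v_i$ in $\mathrm{NO}(H_i)$ for some $i$; the forward direction of this equivalence uses a simultaneous edge, while the backward direction uses a signaling edge in which party $i$'s outcome $u_i$ versus $v_i$ triggers suitable measurements for the remaining parties.

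For the ``if'' direction, suppose we are given $\H$, $|\Psi\rangle$ and projections $E_k^v$ with (i)--(iii). From (i) the projections $\{E_k^v : v\in e\}$ are mutually orthogonal for each $e\in E(H_k)$, since $P+Q\leq\mathbbm{1}$ forces $PQ=0$ for projections $P,Q$. Condition (ii) makes $|\eta_{\vec v}\rangle\defin E_1^{v_1}\cdots E_n^{v_n}|\Psi\rangle$ independent of the order of the factors, and reapplying (ii) with $E_k^{v_k}$ brought to the front gives $E_k^{v_k}|\eta_{\vec v}\rangle=|\eta_{\vec v}\rangle$ for each $k$. I would take the $\mathcal{Q}_1$-certificate to be the Gram matrix $M_{\vec u,\vec v}\defin\langle\eta_{\vec u}|\eta_{\vec v}\rangle$, with $|\eta_\emptyset\rangle\defin|\Psi\rangle$: it is positive semidefinite, $M_{\emptyset,\emptyset}=1$, and the fixed-point property together with idempotence gives $M_{\vec v,\vec v}=\langle\Psi|\eta_{\vec v}\rangle=p(\vec v)$, i.e.~\eqref{nparep}. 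If $\vec u\perp\vec v$, pick $i$ with $u_i\perp v_i$; then $E_i^{u_i}E_i^{v_i}=0$, and since $E_i^{u_i}$ fixes $|\eta_{\vec u}\rangle$ and $E_i^{v_i}$ fixes $|\eta_{\vec v}\rangle$ one gets $M_{\vec u,\vec v}=\langle\eta_{\vec u}|E_i^{u_i}E_i^{v_i}|\eta_{\vec v}\rangle=0$, which is \eqref{npaort}. For \eqref{npalin}, fix a product edge $e'$: its distinct vertices are orthogonal, so $\sum_{\vec u,\vec w\in e'}M_{\vec u,\vec w}=\sum_{\vec u\in e'}p(\vec u)=1$ because $p$ is a probabilistic model, whence $\big\|\sum_{\vec u\in e'}|\eta_{\vec u}\rangle-|\Psi\rangle\big\|^2=1-2\cdot1+1=0$ and therefore $\sum_{\vec u\in e'}|\eta_{\vec u}\rangle=|\Psi\rangle$; summing the defining formula for $M$ over $\vec u\in e'$ then yields $\sum_{\vec u\in e'}M_{\vec u,\vec v}=M_{\emptyset,\vec v}$. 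This is exactly the Gram-matrix argument from the proof of \ref{psum}$\Rightarrow$\ref{pQ1} in Proposition~\ref{CSWeq}, and it shows $p\in\mathcal{Q}_1(H_1\otimes\cdots\otimes H_n)$.

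For the ``only if'' direction, start from the description in Proposition~\ref{CSWeq}\ref{pM}: on $\H=\R^{|V(H_1\otimes\cdots\otimes H_n)|}$ there are a unit vector $|\Psi\rangle$ and vectors $|\phi_{\vec v}\rangle$ with $\sum_{\vec v\in e'}|\phi_{\vec v}\rangle=|\Psi\rangle$ for every product edge $e'$, with $\langle\phi_{\vec u}|\phi_{\vec v}\rangle=0$ whenever $\vec u\perp\vec v$, and with $\langle\phi_{\vec v}|\phi_{\vec v}\rangle=p(\vec v)$; orthogonality within an edge also gives $\langle\Psi|\phi_{\vec v}\rangle=p(\vec v)$. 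I would define $E_k^v$ to be the orthogonal projection onto $\mathrm{span}\{|\phi_{\vec w}\rangle : w_k=v\}$. Then (i) holds: for distinct $v,v'$ in a common edge of $H_k$, any $\vec u$ with $u_k=v$ and $\vec u'$ with $u'_k=v'$ satisfy $\vec u\perp\vec u'$, so these spans are orthogonal and $\{E_k^v : v\in e\}$ are mutually orthogonal projections. For (ii) and (iii) it suffices to show $E_{\pi(1)}^{v_{\pi(1)}}\cdots E_{\pi(n)}^{v_{\pi(n)}}|\Psi\rangle=|\phi_{\vec v}\rangle$ for every permutation $\pi$: writing $|\Psi\rangle=\sum_{w_1\in e_1,\ldots,w_n\in e_n}|\phi_{(w_1,\ldots,w_n)}\rangle$ for a simultaneous edge $e_1\times\cdots\times e_n$ with $v_i\in e_i$, each application of $E_k^{v_k}$ retains the terms whose $k$-th coordinate equals $v_k$ and annihilates those whose $k$-th coordinate is another element of $e_k$ (these lie in an orthogonal subspace), so after all $n$ applications only $|\phi_{\vec v}\rangle$ survives, irrespective of order. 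Then $\langle\Psi|E_1^{v_1}\cdots E_n^{v_n}|\Psi\rangle=\langle\Psi|\phi_{\vec v}\rangle=p(\vec v)$, which is (iii). The main obstacle is precisely this ``only if'' direction: one must manufacture the local projections from a single global orthogonal representation, and the subnormalization (i) has to be extracted purely from the orthogonality relations of $\mathrm{NO}(H_k)$ — the one place where the non-orthogonality (confusability) structure, rather than the edge structure of $H_k$, does the work. Everything else is bookkeeping: the ``if'' direction reuses a known Gram-matrix argument once (ii) turns strings of non-commuting projections into order-independent vectors, and matching the orthogonality relation of the product with the componentwise one is just the $n$-fold analogue of Lemma~\ref{ortproduct}.
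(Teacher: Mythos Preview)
Your argument is correct in both directions, but the ``only if'' half takes a genuinely different route from the paper's, and this is worth noting.

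For the ``if'' direction, the paper goes through Proposition~\ref{CSWeq}\ref{pCSW} with normalized vectors, whereas you build the $\mathcal{Q}_1$ matrix directly from the unnormalized $|\eta_{\vec v}\rangle$'s. Your fixed-point observation $E_k^{v_k}|\eta_{\vec v}\rangle=|\eta_{\vec v}\rangle$ is a clean substitute for the paper's cascading computation~\eqref{AQE}: it gives $\langle\eta_{\vec v}|\eta_{\vec v}\rangle=\langle\Psi|\eta_{\vec v}\rangle$ in one line. One small gap: you verify~\eqref{npalin} only for product edges, but $\mathcal{Q}_1$ requires it for every edge of the product scenario. Fortunately your argument goes through verbatim for an arbitrary edge $e$, since distinct vertices of $e$ are orthogonal (they share that edge) and $p$ is normalized on $e$; you should simply drop the word ``product''.

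For the ``only if'' direction, the paper defines \emph{rank-one} projections $E_k^{v_k}=|\phi_k^{v_k}\rangle\langle\phi_k^{v_k}|/\langle\phi_k^{v_k}|\phi_k^{v_k}\rangle$ onto the marginal vectors $|\phi_k^{v_k}\rangle=\sum_{v_i\in e_i,\,i\neq k}|\phi_{\vec v}\rangle$, and must first prove a no-signaling identity to show these are independent of the auxiliary edges $e_i$. You instead take $E_k^v$ to be the projection onto the full span $\mathrm{span}\{|\phi_{\vec w}\rangle : w_k=v\}$. This is intrinsically defined, so the no-signaling step disappears entirely, and your selection/annihilation argument for $E_{\pi(1)}^{v_{\pi(1)}}\cdots E_{\pi(n)}^{v_{\pi(n)}}|\Psi\rangle=|\phi_{\vec v}\rangle$ is both correct and shorter than the paper's iterated computation. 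The price is that your $E_k^v$ may have larger rank, but the theorem does not constrain the rank, so nothing is lost.
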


In the special case $n=1$, this recovers the characterization of Proposition~\ref{CSWeq}\ref{psum}. In general, it relates to the definition of `almost quantum' correlations in Bell scenarios~\cite{Mattyprep}, and we will see that these coincide indeed with our $\mathcal{Q}_1$.

\begin{proof}
We begin with the `if' direction and use Proposition~\ref{CSWeq}\ref{pCSW} as the relevant characterization of $\mathcal{Q}_1$. Upon writing $\vec{v}=(v_1,\ldots,v_n)$ for any vertex of the product $H_1\otimes\ldots\otimes H_n$, we define the vectors $|\psi_{\vec{v}}\rangle$ as
\beq
\label{defpsi}
|\psi_{\vec{v}}\rangle \defin \frac{E_1^{v_1}\ldots E_n^{v_n} |\Psi\rangle}{\sqrt{\langle\Psi| E_1^{v_1}\ldots E_n^{v_n}|\Psi\rangle}}.
\eeq
In order to check that this is normalized, we need to show that
\beq
\label{AQE}
\langle\Psi|E_n^{v_n}\ldots E_1^{v_1}E_1^{v_1}\ldots E_n^{v_n}|\Psi\rangle = \langle\Psi|E_1^{v_1}\ldots E_n^{v_n}|\Psi\rangle.
\eeq
This follows from the assumptions since we can repeatedly apply the computation
\begin{align*}
\phantom{=} \langle\Psi|E_n^{v_n}\ldots E_k^{v_k} E_1^{v_1}\ldots E_n^{v_n}|\Psi\rangle & = \langle\Psi|\underbrace{E_n^{v_n}\ldots E_k^{v_k} E_1^{v_1}\ldots E_{k-1}^{v_{k-1}}}_{\textrm{permutation of the $E_i^{v_i}$'s}} E_k^{v_k} \ldots E_n^{v_n}|\Psi\rangle \\
& \stackrel{\ref{AQperminv}}{=} \langle\Psi|\rlap{$\overbrace{\phantom{E_n^{v_n}\ldots E_{k-1}^{v_{k-1}} E_1^{v_1}\ldots E_k^{v_k}}}$}E_n^{v_n}\ldots E_{k-1}^{v_{k-1}} E_1^{v_1}\ldots \underbrace{E_k^{v_k} E_k^{v_k}}_{||}\ldots E_n^{v_n}|\Psi\rangle  \\[-3pt]
& \hspace{.43cm} = \langle\Psi|E_n^{v_n}\ldots E_{k-1}^{v_{k-1}} E_1^{v_1}\ldots \overbrace{E_k^{v_k}} \ldots E_n^{v_n}|\Psi\rangle \\[4pt]
& = \langle\Psi|E_n^{v_n}\ldots E_{k-1}^{v_{k-1}} E_1^{v_1}\ldots \ldots E_n^{v_n}|\Psi\rangle
\end{align*}
which works for any $k=1,\ldots,n$ and reduces this $k$ by one until one ends up with the desired expression, corresponding to $k=0$. Equation~\eqref{AQE} also shows that if the denominator in~\eqref{defpsi} vanishes, then so does the numerator. When this happens, we take $|\psi_{\vec{v}}\rangle$ to be any unit vector orthogonal to $|\Psi\rangle$ and all other $|\psi_{\vec{v}}\rangle$'s, which may require an enlargement of $\H$. It is then straightforward to see that the required properties also hold in this case.

We need to check that $\langle\psi_{\vec{u}}|\psi_{\vec{v}}\rangle=0$ for $\vec{u}\perp\vec{v}$. By the local orthogonality result of Proposition~\ref{NOnoLO}, the latter assumption means that $u_i\perp v_i$ for some party $i$. Assumption~\ref{AQsubnorm} then implies that $E_i^{u_i}\perp E_i^{v_i}$, and the permutation invariance~\ref{AQperminv} then again gives the conclusion,
\begin{align*}
\langle\psi_{\vec{u}}|\psi_{\vec{v}}\rangle & = \langle\Psi|E_n^{u_n}\ldots E_1^{u_1} E_1^{v_1}\ldots E_n^{v_n}|\Psi\rangle \\[4pt]
& = \langle\Psi|E_n^{u_n}\ldots\bcancel{E_i^{u_i}}\ldots E_1^{u_1} \underbrace{E_i^{u_i} E_i^{v_i}}_{=0} E_1^{v_1} \ldots \bcancel{E_i^{v_i}}\ldots E_n^{v_n}|\Psi\rangle = 0.
\end{align*}
Finally, we need to check that this data indeed recovers the given probabilities via the Born rule,
\[
|\langle\psi_{\vec{v}}|\Psi\rangle|^2 = \frac{|\langle\Psi|E_n^{v_n}\ldots E_1^{v_1}|\Psi\rangle|^2}{\langle\Psi| E_1^{v_1}\ldots E_n^{v_n}|\Psi\rangle} = \langle\Psi| E_1^{v_1}\ldots E_n^{v_n}|\Psi\rangle = p(v).
\]

Concerning the `only if' direction, we use the characterization of Proposition~\ref{CSWeq}\ref{pM} involving vectors $|\phi_{\vec{v}}\rangle$ associated to the vertices $\vec{v}$. We first note that for any party $j$ and any two $e,e'\in E(H_j)$, we have
\[
\sum_{v_j\in e} |\phi_{(v_1,\ldots,v_n)}\rangle = \sum_{v_j\in e'} |\phi_{(v_1,\ldots,v_n)}\rangle.
\]
This is a no-signaling-type equation which can be proven as in the proof of Proposition~\ref{multinosigprop}. (See also the concepts of Appendix~\ref{virtualcomp}.) This result shows that
\beq
\label{defnphi}
|\phi_k^{v_k}\rangle \defin \sum_{v_1,\ldots,\bcancel{v_k},\ldots,v_n} |\phi_{v_1,\ldots,v_n}\rangle,\qquad |\Psi\rangle \defin \sum_{v_1,\ldots,v_n} |\phi_{(v_1,\ldots,v_n)}\rangle,
\eeq
where each sum over $v_j$ for $j\neq k$ ranges over $v_j\in e_j$ for some $e_j\in E(H_j)$ does not depend on the particular choice of $e_j$. We now put 
\[
E_k^{v_k}\defin \frac{|\phi_k^{v_k}\rangle\langle\phi_k^{v_k}|}{\langle\phi_k^{v_k}|\phi_k^{v_k}\rangle}
\]
and claim that these projections have the required properties, in combination with the above $|\Psi\rangle$. First, $\sum_{v_k\in e} E_k^{v_k}\leq\mathbbm{1}$ follows from $E_k^{u_k}\perp E_k^{v_k}$ for $u_k\perp v_k$, which is a consequence of $\langle\phi_k^{u_k}|\phi_k^{v_k}\rangle = 0$. This in turn is a consequence of the definition~\eqref{defnphi}, since $u_k\perp v_k$ implies $\vec{u}\perp\vec{v}$, and therefore any summand $|\phi_{(u_1,\ldots,u_n)}\rangle$ in the sum for $|\phi_k^{u_k}\rangle$ is necessarily orthogonal to any $|\phi_{(v_1,\ldots,v_k)}\rangle$ in the sum for $|\phi_k^{v_k}\rangle$. 

For the permutation invariance condition~\ref{AQperminv}, we evaluate the action of some $E_k^{v_k}$ on $|\Psi\rangle$. To this end, we choose the $e_j$'s occurring in the sums to be the same in $|\phi_k^{v_k}\rangle$ as in $|\Psi\rangle$. With this, we obtain
\begin{align*}
E_k^{v_k} |\Psi\rangle & = \langle\phi_k^{v_k}|\phi_k^{v_k}\rangle^{-1} \sum_{v_1,\ldots,\bcancel{v_k},\ldots,v_n} \sum_{v'_1,\ldots,\bcancel{v'_k},\ldots,v'_n} \sum_{(u_1,\ldots,u_n)} |\phi_{(v_1,\ldots,v_n)}\rangle \langle\phi_{(v'_1,\ldots,v_k,\ldots,v'_n)}|\phi_{(u_1,\ldots,u_n)}\rangle \\[4pt]
 & = \langle\phi_k^{v_k}|\phi_k^{v_k}\rangle^{-1} \sum_{v_1,\ldots,\bcancel{v_k},\ldots,v_n} \sum_{v'_1,\ldots,\bcancel{v'_k},\ldots,v'_n} \sum_{(u_1,\ldots,u_n)} |\phi_{(v_1,\ldots,v_n)}\rangle \cdot p(v'_1,\ldots,v_k,\ldots,v'_n)\cdot \delta_{u_1,v'_1}\ldots\delta_{u_k,v_k}\ldots\delta_{u_n,v'_n} \\[4pt]
 & = \langle\phi_k^{v_k}|\phi_k^{v_k}\rangle^{-1} \sum_{v_1,\ldots,\bcancel{v_k},\ldots,v_n}  |\phi_{(v_1,\ldots,v_n)}\rangle \sum_{v'_1,\ldots,\bcancel{v'_k},\ldots,v'_n} p(v'_1,\ldots,v_k,\ldots,v'_n)\\[4pt]
 & = \langle\phi_k^{v_k}|\phi_k^{v_k}\rangle^{-1} \cdot p(v_k) \sum_{v_1,\ldots,\bcancel{v_k},\ldots,v_n} |\phi_{v_1,\ldots,v_n}\rangle = \sum_{v_1,\ldots,\bcancel{v_k},\ldots,v_n} |\phi_{v_1,\ldots,v_n}\rangle.
\end{align*}
One can apply the same reasoning to compute $E_j^{v_j} E_k^{v_k} |\Psi\rangle = \sum_{v_1,\ldots,\bcancel{v_j},\ldots,\bcancel{v_k},\ldots,v_n} |\phi_{v_1,\ldots,v_n}\rangle$, and so on. So eventually one will end up with
\beq
\label{AQEiterated}
E_{\pi(1)}^{v_{\pi(1)}}\ldots E_{\pi(n)}^{v_{\pi(n)}}|\Psi\rangle = \sum_{\bcancel{v_1},\ldots,\bcancel{v_n}} |\phi_{(v_1,\ldots,v_n)}\rangle = |\phi_{(v_1,\ldots,v_n)}\rangle
\eeq
for any permutation $\pi$. Since the right-hand side does not depend on $\pi$, the claim~\ref{AQperminv} follows. 

Finally, we have
\[
\langle\Psi|E_1^{v_1}\ldots E_n^{v_n}|\Psi\rangle = \langle\phi_{\vec{v}}|\phi_{\vec{v}}\rangle = p(v),
\]
again by~\eqref{AQEiterated} and the assumed orthogonality relations.
\end{proof}

For Bell scenarios $B_{n,k,m}$, this result can be strengthened as follows:

\begin{cor}
A no-signaling box $p(a_1\ldots a_n|x_1\ldots x_n)$ is in $\mathcal{Q}_1(B_{n,k,m})$ if and only if there exist a Hilbert space $\H$, a state $|\Psi\rangle\in\H$ and an assignment of projections $E_k^{a,x}$ to every party $k=1,\ldots,n$ and event $a|x$ such that
\renewcommand{\labelenumi}{(\roman{enumi})}
\renewcommand{\theenumi}{(\roman{enumi})} 
\begin{enumerate}
\item\label{BAQsubnorm} $\sum_a E_k^{a,x} = \mathbbm{1}$ for all $k$ and $x$,
\item\label{BAQperminv} $E_1^{a_1,x_1}\ldots E_n^{a_n,x_n}|\Psi\rangle = E_{\pi(1)}^{a_{\pi(1)},x_{\pi(1)}}\ldots E_{\pi(n)}^{a_{\pi(n)},x_{\pi(n)}}|\Psi\rangle$ for all permutations $\pi$ of the parties and sequences of events $a_1|x_1,\ldots,a_n|x_n$.
\item $p(a_1\ldots a_n|x_1\ldots x_n) = \langle\Psi|E_1^{a_1,x_n}\ldots E_n^{a_n,x_n}|\Psi\rangle$.
\end{enumerate}
\renewcommand{\labelenumi}{(\alph{enumi})}
\renewcommand{\theenumi}{(\alph{enumi})}
\end{cor}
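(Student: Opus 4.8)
The plan is to obtain this as a specialization of Theorem~\ref{AQthm} to $B_{n,k,m}=B_{1,k,m}\otimes\cdots\otimes B_{1,k,m}$ (with $B_{n,k,m}$ read as an iterated binary product, so that Theorem~\ref{AQthm} applies), after translating its hypotheses into Bell-scenario language. A vertex of $B_{1,k,m}$ is an event $a|x$ and an edge of $B_{1,k,m}$ is exactly one of the sets $\{1|x,\ldots,m|x\}$, one per setting $x$; hence condition \ref{AQsubnorm} of Theorem~\ref{AQthm} reads $\sum_a E_k^{a,x}\leq\mathbbm 1$, condition \ref{AQperminv} is \ref{BAQperminv}, and the Born rule is unchanged. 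So the only thing beyond Theorem~\ref{AQthm} is upgrading $\sum_a E_k^{a,x}\leq\mathbbm 1$ to $\sum_a E_k^{a,x}=\mathbbm 1$, i.e.\ completing the subnormalized measurements.

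The \textbf{``if'' direction} is then immediate: complete projective measurements $\{E_k^{a,x}\}_a$ satisfy $\sum_a E_k^{a,x}=\mathbbm 1\leq\mathbbm 1$, and \ref{BAQperminv} together with the Born rule are literally conditions \ref{AQperminv} and the probability condition of Theorem~\ref{AQthm}; hence $P\in\mathcal Q_1(B_{n,k,m})$.

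For the \textbf{``only if'' direction} I would re-run the ``only if'' part of the proof of Theorem~\ref{AQthm}, specialized to $B_{n,k,m}$. Proposition~\ref{CSWeq}\ref{pM} supplies vectors $|\phi_{\vec v}\rangle$ associated to the vertices $\vec v=(v_1,\ldots,v_n)$, and one forms the local vectors $|\phi_k^{a,x}\rangle=\sum_{v_j,\,j\neq k}|\phi_{(v_1,\ldots,v_n)}\rangle$ as there, the sum ranging over a product edge which assigns setting $x$ to party $k$. The Bell-specific simplification is that, for fixed $k$ and $x$, the family $\{|\phi_k^{a,x}\rangle\}_a$ is pairwise orthogonal --- since $a|x\perp a'|x$ forces every summand of $|\phi_k^{a,x}\rangle$ to be orthogonal to every summand of $|\phi_k^{a',x}\rangle$ --- and satisfies $\sum_a|\phi_k^{a,x}\rangle=|\Psi\rangle$ by the sum rule $\sum_{v\in e}|\phi_v\rangle=|\Psi\rangle$ of Proposition~\ref{CSWeq}\ref{pM} applied to that product edge. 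I would therefore take $E_k^{a,x}$ to be the projection onto $\C\,|\phi_k^{a,x}\rangle$ for $a<m$ and set $E_k^{m,x}\defin\mathbbm 1-\sum_{a<m}E_k^{a,x}$; these are mutually orthogonal projections summing to $\mathbbm 1$, so \ref{BAQsubnorm} holds. Because $\langle\phi_k^{a,x}|\Psi\rangle=\langle\phi_k^{a,x}|\sum_b\phi_k^{b,x}\rangle=\langle\phi_k^{a,x}|\phi_k^{a,x}\rangle$ by orthogonality, one still has $E_k^{a,x}|\Psi\rangle=|\phi_k^{a,x}\rangle$ for \emphalt{every} $a$, including $a=m$; since this is the only property of the $E$'s used in the remainder of that proof, the iterated identity~\eqref{AQEiterated} carries over, and with it \ref{BAQperminv} and the Born rule. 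As in Theorem~\ref{AQthm}, everything can be done over $\R$.

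The point requiring the most care --- and essentially the only nontrivial one --- is checking that enlarging the last projection to $E_k^{m,x}=\mathbbm 1-\sum_{a<m}E_k^{a,x}$ does not disturb the step-by-step computation leading to~\eqref{AQEiterated}: one must verify that $E_k^{m,x}$ acts on the vectors $|\phi_j^{v_j}\rangle$ and their iterated analogues by picking out precisely the summand in which party $k$'s value is $m|x$. This follows by expanding $E_k^{m,x}$ and using that each rank-one $E_k^{a,x}$ ($a<m$) picks out its own summand, the product edges in the various sums being chosen consistently (harmless by no-signaling). It is a routine recomputation parallel to the displayed calculation preceding~\eqref{AQEiterated} in the proof of Theorem~\ref{AQthm}.
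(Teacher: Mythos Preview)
Your argument is correct and shares the paper's core idea---complete the subnormalized local measurements by absorbing the slack into one outcome---but the two executions differ in where the work is done.

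The paper treats Theorem~\ref{AQthm} as a black box: it takes the projections with $\sum_a E_k^{a,x}\leq\mathbbm 1$ produced there, redefines $E_k^{0,x}\defin\mathbbm 1-\sum_{a\neq 0}E_k^{a,x}$, asserts that permutation invariance is preserved, and then invokes the standard fact (citing Pironio) that a no-signaling box is determined by the probabilities with all $a_k\neq 0$ to conclude that the Born rule still reproduces $p$. You instead re-open the proof of Theorem~\ref{AQthm}, observe the Bell-specific identity $\sum_a|\phi_k^{a,x}\rangle=|\Psi\rangle$ (coming from \ref{sumQ1} of Proposition~\ref{CSWeq}\ref{pM} on a product edge), and verify directly that the completed projection satisfies $E_k^{m,x}|\Psi\rangle=|\phi_k^{m,x}\rangle$, so that the computation leading to~\eqref{AQEiterated} goes through for \emph{all} outcomes. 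This buys you a self-contained argument that avoids the external no-signaling-determinacy fact, at the price of reworking the iterated calculation; the paper's route is terser but leans on that reference and on the reader filling in why permutation invariance survives the redefinition (which, as you correctly note in your final paragraph, ultimately comes down to the stronger partial-product form implicit in~\eqref{AQEiterated}).
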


This is the exact definition of the `almost quantum' set of no-signaling boxes from~\cite{Mattyprep}, which also coincides with the `$Q^{1+AB}$' set of~\cite{NPA}.

\begin{proof}
This is exactly the statement of the previous Theorem~\ref{AQthm} when specialized to Bell scenarios, except for the equality in~\ref{BAQsubnorm} which previously was an inequality. We can turn it into an equality by redefining
\[
E_k^{0,x} \defin \mathbbm{1} - \sum_{a\neq 0} E_k^{a,x}
\]
for all parties $k$. The permutation invariance~\ref{BAQperminv} then follows from the one for the original projections. Since the no-signaling box is determined by all these probabilities $p(a_1\ldots a_n|x_1\ldots a_n)$ in which $a_k\neq 0$ for all $k$~\cite{Pironio05}, the resulting new no-signaling box $p'$ obtained by putting $p'(a_1\ldots a_n|x_1\ldots x_n)\defin \langle\Psi|E_1^{a_1,x_n}\ldots E_n^{a_n,x_n}|\Psi\rangle$ in terms of the new projections must coincide with the original one.
\end{proof}

\newpage
\section{\textbf{Consistent Exclusivity and Local Orthogonality}}
\label{LOsec}

\subsection{Introducing Consistent Exclusivity}

It is a fundamental property of quantum theory that the compatibility of observables is a \emphalt{binary} relation: if a collection of quantum observables is such that they commute pairwise, then it follows that there is a basis in which all of them are diagonal, so that a measurement in that basis can be coarse-grained into a measurement of each observable. Paraphrasing Specker~\cite{Specker},
\begin{quotation}
A collection of propositions about a quantum mechanical system is precisely then
simultaneously decidable, when they are pairwise simultaneously decidable.
\end{quotation}

For us, this means the following: suppose that $I\subseteq V(H)$ is a set of vertices in a contextuality scenario $H$ such that every two of them belong to a common edge; by definition of $\mathrm{NO}(H)$, this means precisely that $I$ is an independent set in $\mathrm{NO}(H)$. Then the associated projections $(P_v)_{v\in I}$ for any quantum model $p\in\mathcal{Q}(H)$ have the property of being pairwise orthogonal, and hence $\sum_{v\in I} P_v\leq \mathbbm{1}_\H$. This implies
$$
\sum_{v\in I} p(v) = \sum_{v\in I} \mathrm{tr}( \rho P_v ) \leq 1.
$$
We now abstract from the quantum case to a general definition.
\begin{defn}[\cite{Henson}]
\label{defnCE}
A probabilistic model $p\in\mathcal{G}(H)$ satisfies \emph{Consistent Exclusivity} if
\beq
\label{CE}
\sum_{v\in I} p(v) \leq 1
\eeq
holds for any independent set $I\subseteq V(\mathrm{NO}(H))$. We write $\CE^1(H)\subseteq\mathcal{G}(H)$ for the set of probabilistic models satisfying Consistent Exclusivity.
\end{defn}

We also write CE$^1$ for this version of Consistent Exclusivity in order to distinguish it from the upcoming refinement termed CE$^\infty$. We refer to~\cite{CabelloSP} for an exposition of the history of principle and in which contexts it has been applied. 

Intuitively, CE$^1$ is saying that the total probability of any collection of pairwise exclusive outcomes is $\leq 1$. In this formulation, Consistent Exclusivity may almost sound like a trivial consequence of the laws of probability; however, this is not the case, since the probabilities $p(v)$ of a probabilistic model are \emphalt{conditional} probabilities representing the probability that outcome $v$ occurs \emphalt{given that} a measurement $e$ with $v\in e$ has been performed. 

The following result relates the $\CE^1$ set with quantum models and general probabilistic models.

\begin{prop}
\begin{enumerate}
\item $\mathcal{Q}(H)\subseteq\CE^1(H)$ for every $H$.
\item There exists a scenario $H$ with $\CE^1(H)\subsetneq \mathcal{G}(H)$.
\end{enumerate}
\end{prop}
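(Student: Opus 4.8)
\emph{Proposal.} For part (1) the plan is simply to make explicit the computation in the paragraph preceding the statement. Let $p\in\mathcal{Q}(H)$ be realized by projections $(P_v)_{v\in V(H)}$ and a state $\rho\in\B_{+,1}(\H)$ as in Definition~\ref{qmdef}, and let $I$ be an independent set of $\mathrm{NO}(H)$. By definition of the non-orthogonality graph, any two distinct $u,v\in I$ are non-adjacent in $\mathrm{NO}(H)$, hence $u\perp v$, so there is an edge $e\in E(H)$ with $\{u,v\}\subseteq e$. Since $\sum_{w\in e}P_w=\mathbbm{1}_\H$ with all $P_w$ projections, a short argument (from $P_u=P_u\mathbbm{1}_\H P_u=P_u+\sum_{w\in e,\,w\neq u}P_uP_wP_u$ one gets $P_uP_wP_u=0$, hence $P_wP_u=0$) shows that these projections are pairwise orthogonal; in particular $P_uP_v=0$. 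Thus $(P_v)_{v\in I}$ are pairwise orthogonal projections, so $\sum_{v\in I}P_v\leq\mathbbm{1}_\H$, and therefore
\[
\sum_{v\in I}p(v)=\sum_{v\in I}\tr(\rho P_v)=\tr\Bigl(\rho\sum_{v\in I}P_v\Bigr)\leq\tr(\rho)=1,
\]
which is exactly~\eqref{CE}. Hence $p\in\CE^1(H)$.

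For part (2) the plan is to exhibit a single explicit scenario, namely the triangle scenario $\Delta$ of Figure~\ref{triscen}. As recalled there, $\Delta$ has vertices $v_1,v_2,v_3$, edges $\{v_1,v_2\}$, $\{v_2,v_3\}$, $\{v_1,v_3\}$, and a \emph{unique} probabilistic model, $p(v_1)=p(v_2)=p(v_3)=\tfrac12$; in particular $\mathcal{G}(\Delta)=\{p\}\neq\emptyset$. Now observe that $\mathrm{NO}(\Delta)$ has no edges at all, because every pair of vertices of $\Delta$ already lies in a common edge; consequently the full vertex set $I=\{v_1,v_2,v_3\}$ is an independent set of $\mathrm{NO}(\Delta)$. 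But $\sum_{v\in I}p(v)=\tfrac32>1$, so $p$ violates~\eqref{CE} and $\CE^1(\Delta)=\emptyset\subsetneq\mathcal{G}(\Delta)$, as required. (More generally, any scenario whose non-orthogonality graph has an independent set that is not contained in a single edge, together with a probabilistic model giving that set total weight $>1$, would do; $\Delta$ is the minimal such instance.)

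There is no genuine obstacle here, but the one point worth flagging in the writeup is the subtlety underlying part (2): an independent set of $\mathrm{NO}(H)$ is a set of \emph{pairwise} orthogonal outcomes, which need not belong to a common edge of $H$. This is precisely Specker's principle, and it is what makes $\CE^1$ a nontrivial constraint rather than a tautology; the triangle scenario is the textbook example where three pairwise exclusive outcomes fail to be jointly exclusive, which is exactly why it separates $\CE^1$ from $\mathcal{G}$.
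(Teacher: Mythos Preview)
Your proof is correct and follows essentially the same approach as the paper: part~(a) spells out the orthogonality-of-projections computation that the paper alludes to just before the definition of $\CE^1$, and part~(b) uses exactly the triangle scenario $\Delta$ with the same observation that $V(\Delta)$ is independent in $\mathrm{NO}(\Delta)$ and carries total weight $\tfrac{3}{2}$. The only (harmless) difference is that you include an explicit derivation of $P_uP_v=0$ from $\sum_{w\in e}P_w=\mathbbm{1}_\H$, which the paper leaves implicit.
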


In the graph-theoretic approach of~\cite{CSW}, analogous results have been obtained.

\begin{proof}
\begin{enumerate}
\item Above. 
\item For the triangle scenario $\Delta$ of Figure~\ref{triscen}, $V(\Delta)$ is itself an independent set in $\mathrm{NO}(\Delta)$. Since $\sum_{v\in V(\Delta)} p(v)=\tfrac{3}{2}$ for the unique probabilistic model $p$, this $p$ violates CE$^1$. We conclude that $\CE^1(\Delta)=\emptyset$, although $\mathcal{G}(\Delta)=\{p\}$.\qedhere
\end{enumerate}
\end{proof}

See~\cite{LSW} for further discussion of the triangle scenario and~\cite{LOfp,LO2} for examples in multipartite Bell scenarios.

In~\cite{CSW}, Consistent Exclusivity was imposed in the very \emphalt{definition} of probabilistic models. The problem with this is that the collection of models satisfying it is not closed under $\otimes$, as we will see in the following. Aside from the unclear physical meaning of CE, this is the main reason why we prefer our Definition~\ref{defnpm}: it guarantees that if $p_A$ and $p_B$ are probabilistic models on $H_A$ and $H_B$, respectively, then $p_A\otimes p_B$ is also a probabilistic model on $H_A\otimes H_B$; see Section~\ref{FRproducts}.

We now relate probabilistic models in $\CE^1(H)$ to the weighted independence number $\alpha$ of the non-orthogonality graph (see Definition~\ref{wdefs}).

\begin{prop}
\label{SPvsFIN}
A probabilistic model $p \in \mathcal{G}(H)$ belongs to $\CE^1(H)$ if and only if
\[
\alpha(\mathrm{NO}(H),p)\leq 1.
\]
\end{prop}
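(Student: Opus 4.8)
The plan is to unwind definitions, together with the relevant graph-theoretic definition from Appendix~\ref{appcap}; I expect essentially no analytic difficulty here, so most of the work is bookkeeping and making sure the correct notion of weighted independence number is being used.

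First I would recall that, for a finite graph $G$ with nonnegative vertex weights $w$, the weighted independence number (Definition~\ref{wdefs}) is
\[
\alpha(G,w) \;=\; \max\left\{ \sum_{v\in I} w(v) \;:\; I\subseteq V(G)\text{ independent} \right\},
\]
a maximum over a finite, nonempty family (it contains the singletons and the empty set), hence attained. Specializing to $G=\mathrm{NO}(H)$ and $w=p$, which is legitimate since $p(v)\in[0,1]\subseteq\R_{\geq 0}$, gives $\alpha(\mathrm{NO}(H),p)=\max_I\sum_{v\in I}p(v)$, the maximum ranging over independent sets $I$ of $\mathrm{NO}(H)$.

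Next I would chain equivalences. By Definition~\ref{defnCE}, $p\in\CE^1(H)$ holds exactly when $\sum_{v\in I}p(v)\leq 1$ for every independent set $I\subseteq V(\mathrm{NO}(H))$. Since this is a finite conjunction of inequalities, it is equivalent to the single statement that the largest of the quantities $\sum_{v\in I}p(v)$ is at most $1$, i.e. to $\alpha(\mathrm{NO}(H),p)\leq 1$. This proves the claim in both directions. Paralleling the remark after Proposition~\ref{Cvsfpn}, I would also note that the reverse inequality $\alpha(\mathrm{NO}(H),p)\geq 1$ always holds, because every edge $e\in E(H)$ is an independent set in $\mathrm{NO}(H)$ and $\sum_{v\in e}p(v)=1$ by normalization; hence the condition in the statement may equivalently be written $\alpha(\mathrm{NO}(H),p)=1$.

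The only point requiring care — and hence the ``main obstacle'', such as it is — is that the statement genuinely needs the integral independence number $\alpha$ rather than a relaxation: replacing $\alpha$ by the fractional packing number $\alpha^*$ or by the Lov\'asz number $\vartheta$ would instead characterize the strictly smaller sets $\mathcal{C}(H)$ (Proposition~\ref{Cvsfpn}) or $\mathcal{Q}_1(H)$ (Proposition~\ref{Q1vsLov}). So the substantive content is precisely the identification of $\CE^1$ with the $\alpha$-condition, and there is no further difficulty to overcome.
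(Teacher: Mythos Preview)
Your proposal is correct and matches the paper's treatment: the paper in fact gives no proof at all for this proposition, evidently regarding it as immediate from Definition~\ref{defnCE} and Definition~\ref{wdefs}, and follows it only with the same remark you make about $\alpha(\mathrm{NO}(H),p)=1$ via normalization. Your unwinding of definitions is exactly the intended (trivial) argument.
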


Again, due to the normalization equations $\sum_{v\in e} p(v)=1$, the statement $\alpha(\mathrm{NO}(H),p)\leq 1$ is actually equivalent to $\alpha(\mathrm{NO}(H),p)=1$.

\subsection{Consistent Exclusivity in Bell scenarios: Local Orthogonality} The concept of Local Orthogonality (LO) was recently introduced in~\cite{LOfp,LO2} as an information-theoretic principle satisfied by all quantum correlations in Bell scenarios, but violated by many non-quantum no-signaling boxes. The main reason for considering LO is the search for `physical' principles characterizing quantum correlations. It seems intuitively related to Consistent Exclusivity; here we would like to explain in which sense it is indeed a special case of CE when using our definition~(\ref{BSdefn}) of Bell scenario.

Recall~\cite{LOfp} that we call two events $u=a_1\ldots a_n|x_1\ldots x_n$ and $v=a'_1\ldots a'_n|x'_1\ldots x'_n$ in a Bell scenario \emph{locally orthogonal} if there is a party $i$ with $a_i\neq a'_i$, but $x_i=x'_i$. We now show that two events are locally orthogonal if and only if they are different vertices belonging to a common edge in the hypergraph $B_{n,k,m}$:

\begin{lem}
\label{LOtransfer}
The events $u,v\in V(B_{n,k,m})$ are locally orthogonal if and only if $u\perp v$.
\end{lem}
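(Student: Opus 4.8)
The plan is to reduce the statement to an unwinding of the strong-product formula for non-orthogonality graphs. First I would describe $\mathrm{NO}(B_{1,k,m})$ explicitly: the single-party scenario of Figure~\ref{singleparty} has as vertices the events $a|x$ with $a\in\{1,\dots,m\}$, $x\in\{1,\dots,k\}$, and its edges are precisely the $k$ sets $\{1|x,\dots,m|x\}$. Hence two distinct vertices $a|x$ and $a'|x'$ lie in a common edge — i.e.\ $a|x\perp a'|x'$ — if and only if $x=x'$, in which case automatically $a\neq a'$; otherwise they are adjacent in $\mathrm{NO}(B_{1,k,m})$.

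Next I would lift this to the $n$-party level. Since $B_{n,k,m}$ is by~\eqref{BSdefn} an $n$-fold Foulis-Randall product of copies of $B_{1,k,m}$, iterating Lemma~\ref{ortproduct} together with associativity of the strong product $\boxtimes$ gives $\mathrm{NO}(B_{n,k,m})=\mathrm{NO}(B_{1,k,m})^{\boxtimes n}$ (for $n\ge 3$ this fixes the relevant non-orthogonality graph; I would take $B_{n,k,m}$ to be an iterated binary product). Then I would just invoke the definition of the strong product. Writing $u=a_1\dots a_n|x_1\dots x_n$ and $v=a'_1\dots a'_n|x'_1\dots x'_n$ with single-party component vertices $u_i=a_i|x_i$ and $v_i=a'_i|x'_i$, two distinct tuples are non-adjacent in $\mathrm{NO}(B_{1,k,m})^{\boxtimes n}$ — i.e.\ $u\perp v$ — exactly when there is a coordinate $i$ with $u_i\neq v_i$ and $u_i$ non-adjacent to $v_i$ in $\mathrm{NO}(B_{1,k,m})$, that is $u_i\perp v_i$. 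Combining with the first step, this happens iff there is a party $i$ with $x_i=x'_i$ and $a_i\neq a'_i$, which is precisely the definition of $u$ and $v$ being locally orthogonal.

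I do not expect a real obstacle; the only point needing a touch of care is the reflexive clause in the definition of $\boxtimes$ (adjacency requires, in each factor, either equality or adjacency), which is why the characterization of $\perp$ in the $n$-fold strong product reads ``there exists $i$ with $u_i\neq v_i$ and $u_i\perp v_i$'' rather than a ``for all $i$'' condition; getting that quantifier right is the whole content. A self-contained alternative, bypassing Lemma~\ref{ortproduct}, would be to argue directly on the product scenario: for the ``if'' direction, given local orthogonality at party $i$, construct a measurement protocol in which party $i$ measures $x_i$ first and the remaining parties then measure the settings of $u$ (resp.\ of $v$) according to whether the outcome was $a_i$ or $a'_i$, so the resulting edge contains both $u$ and $v$; for the ``only if'' direction, observe that any edge of the product scenario is the outcome set of a measurement protocol and induct on $n$, tracing the protocol through its first branching to locate a party at which $u$ and $v$ share a setting but differ in outcome.
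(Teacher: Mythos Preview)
Your main argument is correct and takes a genuinely different route from the paper. The paper proves the lemma directly and self-containedly: for the ``locally orthogonal $\Rightarrow u\perp v$'' direction it writes down an explicit edge (party with the orthogonal coordinate measures first, then broadcasts its outcome so that the remaining parties choose the settings of $u$ or of $v$ accordingly); for the converse it takes an arbitrary edge of $B_{n,k,m}$ in its measurement-protocol form, finds the first party in the protocol ordering at which the outcomes of $u$ and $v$ differ, and notes that their settings must agree there. Your approach instead reduces the whole statement to Lemma~\ref{ortproduct}: once $\mathrm{NO}(B_{n,k,m})=\mathrm{NO}(B_{1,k,m})^{\boxtimes n}$, the result is a one-line unwinding of the strong-product adjacency rule together with the trivial description of $\mathrm{NO}(B_{1,k,m})$. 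This is slicker and makes the graph-theoretic structure do the work; the paper's argument buys independence from Lemma~\ref{ortproduct} and transparently covers all variants of the $n$-fold product simultaneously (the ``if'' edge lives already in the minimal product, the ``only if'' step handles the maximal one). Your closing ``self-contained alternative'' is essentially the paper's proof.
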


\begin{proof}
Suppose that $u=a_1\ldots a_n|x_1\ldots x_n$ and $v=a'_1\ldots a'_n|x'_1\ldots x'_n$ are locally orthogonal. By relabeling the parties, we can arrange for $a_1\neq a'_1$ and $x_1=x'_1$. Now choose any functions $f_2,\ldots,f_n$ with $f_i(a_1)=x_i$ and $f_i(a'_1)=x'_i$. Then the set of events of the form
$$
b_1\ldots b_n | x_1 f_2(b_1) \ldots f_n(b_1)
$$
defines an edge in $B_{n,k,m}$ containing both $u$ and $v$. Intuitively, Alice communicates her outcome to the other parties who then choose their measurement settings as a function of that outcome.

Conversely, $u\perp v$ means that there is an edge $e\in E(B_{n,k,m})$ with $u,v\in e$. More concretely, this states that there is an ordering of the parties $\sigma(1),\ldots,\sigma(n)$ and functions $f_{\sigma(i)}(b_{\sigma(1)},\ldots,b_{\sigma(i-1)})$ such that $e$ contains exactly those events which have the form
$$
b_{\sigma(1)}\ldots b_{\sigma(n)} | f_{\sigma(1)}() \ldots f_{\sigma(n)}(b_{\sigma(1)},\ldots,b_{\sigma(n-1)}) 
$$
where we have now written the parties in the order given by the permutation $\sigma$. Since both given events $u=a_1\ldots a_n|x_1\ldots x_n$ and $v=a'_1\ldots a'_n|x'_1\ldots x'_n$ are assumed to be of this form, we know that $x_{\sigma(i)}=f_{\sigma(i)}(a_{\sigma(1)},\ldots,a_{\sigma(i-1)})$ and $x'_{\sigma(i)}=f_{\sigma(i)}(a'_{\sigma(1)},\ldots,a'_{\sigma(i-1)})$. Now let $\sigma(j)$ be the smallest index with $a_{\sigma(j)}\neq a'_{\sigma(j)}$. Then, since $x_{\sigma(j)}$ and $x'_{\sigma(j)}$ only depend on $a_{\sigma(i)}$ and $a'_{\sigma(i)}$ with $i<j$, we conclude that $x_{\sigma(j)}=x'_{\sigma(j)}$, which proves the claim.
\end{proof}
Hence, when working within our framework for contextuality scenarios, the LO$^1$ principle studied in~\cite{LOfp} becomes a special case of CE$^1$ of Definition~\ref{defnCE}; the orthogonality between two events naturally arises from the FR product. Those readers not familiar with~\cite{LOfp} may regard this as the definition of LO$^1$. In~\cite{Cabello}, this relation between LO$^1$ and CE$^1$ was already implicitly used.

\begin{prob}
In~\cite{LOfp}, we have introduced LO$^1$ as a limitation for winning maximally difficult guessing problems using nonlocality as a resource. Since LO$^1$ coincides with $\CE^1(B_{n,k,m})$, it would be good to know whether this characterization of $\CE^1(B_{n,k,m})$ can be generalized to all contextuality scenarios.
\end{prob}

\begin{prob}
In~\cite{LOfp}, we also showed that LO$^1$ is equivalent to the no-signaling principle in bipartite Bell scenarios, i.e.~$\CE^1(B_{2,k,m})=\mathcal{G}(B_{2,k,m})$. More generally, under which conditions on $H$ does $\CE^1(H)=\mathcal{G}(H)$ hold?
\end{prob}

\subsection{Consistent Exclusivity and the Shannon capacity of graphs}

If $p\in\mathcal{G}(H)$ is a probabilistic model which is realizable in a world obeying certain physical laws, then it is reasonable to assume that any $p^{\otimes n}\in\mathcal{G}(H^{\otimes n})$ is realizable as well, since it simply corresponds to conducting $n$ copies of the same experiment in parallel. If we regard CE as delimiting the set of physically realizable probabilistic models, then this means that if $p^{\otimes n}\not\in \CE^1(H^{\otimes n})$, then we already know that $p$ itself is not physically realizable. This naturally gives a hierarchy of subsets of $\CE^1(H)$.

\begin{defn}[CE hierarchy of sets]
Let $H$ be a contextuality scenario and $p\in\mathcal{G}(H)$. We write $p\in\CE^n(H)$ if and only if $p^{\otimes n}\in\CE^1(H^{\otimes n})$. Furthermore,
$$
\CE^{\infty}(H) \defin \bigcap_{n\in\N} \CE^n(H) .
$$
\end{defn}

This is indeed relevant since, as we saw in~\cite{LOfp}, for example $\CE^2(B_{2,2,2})\neq\CE^1(B_{2,2,2})$. See~\cite{Cabello} for another example showing that violations of CE can be `activated' by considering copies $p^{\otimes n}$ of the same model $p$. If $p\in\CE^n(H)$, then we also say that $p$ satisfies CE$^n$. In particular, $p\in\CE^\infty(H)$ if and only if $p\in\CE^n(H)$ for all $n\in\N$, in which case we say that $p$ satisfies CE$^\infty$. In the special case of Bell scenarios, our previous results imply that CE$^\infty$ is precisely LO$^\infty$ of~\cite{LOfp}.

We now relate the $\CE^\ast$ family of sets to the weighted independence number $\alpha$ and Shannon capacity $\Theta$ (see Appendix~\ref{appcap} for definitions).

\begin{lem}
\label{LOchar} For a probabilistic model $p in \mathcal{G}(H)$,
\begin{enumerate}
\item $p\in\CE^n(H)$ if and only if 
\[
\alpha(\mathrm{NO}(H)^{\boxtimes n},p^{\otimes n}) \leq 1.
\]
\item $p\in\CE^\infty(H)$ if and only if
\[
\Theta(\mathrm{NO}(H),p)\leq 1,
\]
or, equivalently, if $\alpha(\mathrm{NO}(H),p)=\Theta(\mathrm{NO}(H),p)=1$.
\end{enumerate}
\end{lem}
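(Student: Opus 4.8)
The plan is to reduce both parts to results already in hand, principally Proposition~\ref{SPvsFIN} (the characterization of $\CE^1$ via the weighted independence number) and Lemma~\ref{ortproduct} (the non-orthogonality graph of a Foulis--Randall product is the strong product of the non-orthogonality graphs). For part (a), I would simply unfold the definition: $p\in\CE^n(H)$ means $p^{\otimes n}\in\CE^1(H^{\otimes n})$, and since $p^{\otimes n}$ is a probabilistic model on $H^{\otimes n}$ by Proposition~\ref{PM_prod} applied iteratively, Proposition~\ref{SPvsFIN} rephrases this as $\alpha(\mathrm{NO}(H^{\otimes n}),p^{\otimes n})\leq 1$. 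It then remains to identify the graph: by Lemma~\ref{ortproduct}, induction on $n$, and associativity of the strong product $\boxtimes$, one has $\mathrm{NO}(H^{\otimes n})=\mathrm{NO}(H)^{\boxtimes n}$ for the iterated binary product, and the same holds for the other $n$-fold Foulis--Randall products by the analysis of Appendix~\ref{multiproducts}. Substituting this identity, and observing that the weight $p^{\otimes n}$ on $V(H)^n$ is precisely the product weight on $\mathrm{NO}(H)^{\boxtimes n}$, yields part (a).

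For part (b), I would chain the equivalences: $p\in\CE^\infty(H)$ means $p\in\CE^n(H)$ for every $n$, which by part (a) means $\alpha(\mathrm{NO}(H)^{\boxtimes n},p^{\otimes n})\leq 1$ for every $n$, equivalently $\alpha(\mathrm{NO}(H)^{\boxtimes n},p^{\otimes n})^{1/n}\leq 1$ for every $n$. By the definition of the weighted Shannon capacity in Appendix~\ref{appcap} — where $\Theta(\mathrm{NO}(H),p)=\sup_n\alpha(\mathrm{NO}(H)^{\boxtimes n},p^{\otimes n})^{1/n}$, the supremum agreeing with the limit by Fekete's lemma and super-multiplicativity of $\alpha$ under $\boxtimes$ — this last condition is exactly $\Theta(\mathrm{NO}(H),p)\leq 1$. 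For the final reformulation, I would note that every edge $e\in E(H)$ is an independent set of $\mathrm{NO}(H)$ with $\sum_{v\in e}p(v)=1$, so $\alpha(\mathrm{NO}(H),p)\geq 1$; together with the general inequality $\alpha\leq\Theta$ from Appendix~\ref{appcap} this gives $1\leq\alpha(\mathrm{NO}(H),p)\leq\Theta(\mathrm{NO}(H),p)$, so $\Theta(\mathrm{NO}(H),p)\leq 1$ is equivalent to $\alpha(\mathrm{NO}(H),p)=\Theta(\mathrm{NO}(H),p)=1$.

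The only genuinely delicate point is the identity $\mathrm{NO}(H^{\otimes n})=\mathrm{NO}(H)^{\boxtimes n}$ and, in particular, its insensitivity to the non-associativity of the Foulis--Randall product: different bracketings, as well as the minimal and maximal products, have different edge sets, yet must all induce the same orthogonality relation for $\CE^n$ to be well defined. For the iterated binary product this is immediate from Lemma~\ref{ortproduct}; in general one uses Lemma~\ref{allintermed} together with monotonicity of the orthogonality relation under enlarging the edge set, plus the fact (from Appendix~\ref{multiproducts}) that the minimal and maximal products already share the same non-orthogonality graph, so that all intermediate ones are squeezed in between. Everything else in the proof is bookkeeping around already-proven statements.
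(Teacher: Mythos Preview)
Your core argument is correct and matches the paper's own proof: part (a) via Proposition~\ref{SPvsFIN} and Lemma~\ref{ortproduct}, part (b) via the definition of $\Theta$ and the lower bound $\alpha(\mathrm{NO}(H),p)\geq 1$ coming from any edge.

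However, your final paragraph contains a factual error. You assert that ``the minimal and maximal products already share the same non-orthogonality graph'' and cite Appendix~\ref{multiproducts} for this. In fact, Proposition~\ref{NOnoLO}(b) explicitly shows the opposite: there exist scenarios for which $\mathrm{NO}({}^{\min}\otimes_{i=1}^3 H_i)\neq\mathrm{NO}({}^{\max}\otimes_{i=1}^3 H_i)$. What Proposition~\ref{NOnoLO}(a) does establish is that all \emph{iterated binary} products and the \emph{maximal} product share the same non-orthogonality graph (given by local orthogonality), and this is all you need for the identity $\mathrm{NO}(H^{\otimes n})=\mathrm{NO}(H)^{\boxtimes n}$ once $H^{\otimes n}$ is read as such a product. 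Your squeezing argument via Lemma~\ref{allintermed} would work between any iterated binary product and the maximal one, but the minimal product genuinely falls outside this, so you should not invoke it. The paper itself simply sidesteps the issue by quoting Lemma~\ref{ortproduct} and leaving the iteration implicit.
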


\begin{proof}
\begin{enumerate}
\item By definition, $p\in\CE^n(H)$ if and only if $\alpha(\mathrm{NO}(H^{\otimes n}),p^{\otimes n})\leq 1$. The claim now follows from Lemma~\ref{ortproduct}.
\item The first statement holds by the definition of $\Theta$~(\ref{wcapdef}). For the second statement, $p\in\CE^\infty(H)$ implies that $\Theta(\mathrm{NO}(H),p)\leq 1$. But since $\alpha(\mathrm{NO}(H),p)=1$ due to $p\in\CE^1(H)$, we find $\Theta(\mathrm{NO}(H),p)=1=\alpha(\mathrm{NO}(H),p)$. The converse is clear.\qedhere
\end{enumerate}
\end{proof}

It follows from Corollary~\ref{Qtensor} that $\mathcal{Q}(H)\subseteq\CE^\infty(H)$.

\begin{lem} For every $k,n\in\N$, the following inclusions hold:
\label{LOincs}
$$
\CE^\infty(H) \subseteq \ldots \subseteq \ldots \CE^{n}(H) \subseteq \ldots \subseteq \CE^1(H) .
$$
\end{lem}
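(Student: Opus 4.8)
The plan is to reduce everything to the single one-step inclusion $\CE^{n+1}(H) \subseteq \CE^n(H)$; once this is in hand, the full chain follows by iteration, and $\CE^\infty(H) = \bigcap_{m\in\N} \CE^m(H) \subseteq \CE^n(H)$ for every $n$ is then immediate from the definition of $\CE^\infty$.

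For the one-step inclusion I would work on the graph-theoretic side via Lemma~\ref{LOchar}\,(a), which says that $p \in \CE^n(H)$ is equivalent to $\alpha(\mathrm{NO}(H)^{\boxtimes n}, p^{\otimes n}) \leq 1$. The key input is supermultiplicativity of the weighted independence number under the strong product: for vertex-weighted graphs $(G,w)$ and $(G',w')$ one has
$$
\alpha(G \boxtimes G',\, w\otimes w') \;\geq\; \alpha(G,w)\cdot\alpha(G',w').
$$
This is elementary: if $I\subseteq V(G)$ and $J\subseteq V(G')$ are independent, then $I\times J$ is independent in $G\boxtimes G'$ — two distinct elements $(u,u'),(v,v')$ of $I\times J$ either differ in the first coordinate, where $u\not\sim v$ prevents adjacency, or agree there and differ in the second, where $u'\not\sim v'$ does — and the total weight of $I\times J$ is $\bigl(\sum_{v\in I} w(v)\bigr)\bigl(\sum_{v'\in J} w'(v')\bigr)$; taking $I,J$ of maximum weight gives the claimed bound. (If this fact is already recorded among the graph-theoretic preliminaries of Appendix~\ref{appcap}, I would simply cite it there rather than reprove it.)

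Applying this with $(G,w)=(\mathrm{NO}(H)^{\boxtimes n}, p^{\otimes n})$ and $(G',w')=(\mathrm{NO}(H),p)$, and using $\mathrm{NO}(H)^{\boxtimes(n+1)} = \mathrm{NO}(H)^{\boxtimes n}\boxtimes\mathrm{NO}(H)$ together with $p^{\otimes(n+1)} = p^{\otimes n}\otimes p$, I obtain
$$
\alpha\bigl(\mathrm{NO}(H)^{\boxtimes(n+1)},\, p^{\otimes(n+1)}\bigr) \;\geq\; \alpha\bigl(\mathrm{NO}(H)^{\boxtimes n},\, p^{\otimes n}\bigr)\cdot \alpha(\mathrm{NO}(H),p).
$$
Since every edge $e\in E(H)$ is an independent set of $\mathrm{NO}(H)$ with $\sum_{v\in e}p(v)=1$, we always have $\alpha(\mathrm{NO}(H),p)\geq 1$ (as already noted in the excerpt). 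Hence $\alpha(\mathrm{NO}(H)^{\boxtimes(n+1)}, p^{\otimes(n+1)}) \geq \alpha(\mathrm{NO}(H)^{\boxtimes n}, p^{\otimes n})$, so if the left-hand side is $\leq 1$ then so is the right-hand side; by Lemma~\ref{LOchar}\,(a) this says exactly $p\in\CE^{n+1}(H)\Rightarrow p\in\CE^n(H)$, completing the chain.

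There is no real obstacle here: the only thing one might need to supply, if it is not already in Appendix~\ref{appcap}, is the one-line proof of supermultiplicativity of $\alpha$ under $\boxtimes$ for weighted graphs. As an alternative to the last paragraph one could argue directly with the weighted Shannon capacity, using $\Theta(\mathrm{NO}(H),p) = \lim_{n} \alpha(\mathrm{NO}(H)^{\boxtimes n}, p^{\otimes n})^{1/n} \geq \alpha(\mathrm{NO}(H)^{\boxtimes n}, p^{\otimes n})^{1/n}$ and Lemma~\ref{LOchar}\,(b) to see $\CE^\infty(H)\subseteq\CE^n(H)$; but the monotonicity argument above is the most transparent and also yields the intermediate inclusions $\CE^{n+1}(H)\subseteq\CE^n(H)$ simultaneously.
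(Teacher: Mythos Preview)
Your proof is correct and follows essentially the same route as the paper: both use supermultiplicativity of the weighted independence number under the strong product (your inline argument is exactly Corollary~\ref{win-ineq} in Appendix~\ref{appcap}) together with $\alpha(\mathrm{NO}(H),p)\geq 1$ to get monotonicity of $n\mapsto\alpha(\mathrm{NO}(H)^{\boxtimes n},p^{\otimes n})$, and then invoke Lemma~\ref{LOchar}. If anything, your version is slightly cleaner in using $\alpha(\mathrm{NO}(H),p)\geq 1$ directly rather than first assuming $p\in\CE^1(H)$ to get equality.
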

This should be seen in contrast to Remark~\ref{notmonotone}.
\begin{proof}
We choose any $p\in\CE^1(H)$. Thanks to Corollary~\ref{win-ineq}, we know that
$$
\alpha(\mathrm{NO}(H)^{\boxtimes n},p^{\otimes n}) \geq \alpha(\mathrm{NO}(H)^{\boxtimes (n-1)},p^{\otimes (n-1)}) \cdot \alpha(\mathrm{NO}(H),p) .
$$
Now since $\alpha(\mathrm{NO}(H),p)=1$, the sequence $\left( \alpha(\mathrm{NO}(H)^{\boxtimes n},p^{\otimes n}) \right)_{n\in\N}$ is monotonically nondecreasing. The claim now follows from Lemma~\ref{LOchar}.
\end{proof}

\subsection{Does Consistent Exclusivity characterize the quantum set?}

In~\cite{LOfp}, we considered $\CE^\infty(B_{n,k,m})$ for Bell scenarios $B_{n,k,m}$ and asked whether it coincides with $\mathcal{Q}(B_{n,k,m})$. We will answer this question now.

\begin{prop}[Navascu{\'e}s]
\label{miguelito}
For every $H$,
\beq
\label{migueleq}
\mathcal{Q}_1(H) \subseteq \CE^{\infty}(H) .
\eeq
\end{prop}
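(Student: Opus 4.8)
The plan is to reduce the statement to invariants of the non-orthogonality graph. By Proposition~\ref{Q1vsLov}, the hypothesis $p\in\mathcal{Q}_1(H)$ is equivalent to $\vartheta(\mathrm{NO}(H),p)=1$, while by Lemma~\ref{LOchar} the conclusion $p\in\CE^\infty(H)$ is equivalent to $\Theta(\mathrm{NO}(H),p)\leq 1$. So the proposition is essentially nothing but one half of the weighted Lov\'asz sandwich, $\Theta(\mathrm{NO}(H),p)\leq\vartheta(\mathrm{NO}(H),p)$, which I would quote from Appendix~\ref{appcap}; combined with $\vartheta(\mathrm{NO}(H),p)=1$ this gives $\Theta(\mathrm{NO}(H),p)\leq 1$ and hence the claim.

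To keep the argument at the elementary level of the finite stages $\CE^n$ and to make the role of multiplicativity explicit, I would instead proceed as follows. Fix $n\in\N$. The first part of Proposition~\ref{Q1tensor}, applied $n-1$ times, gives $p^{\otimes n}\in\mathcal{Q}_1(H^{\otimes n})$; equivalently, by multiplicativity of the weighted Lov\'asz number (Proposition~\ref{lovmult}) together with $\mathrm{NO}(H^{\otimes n})=\mathrm{NO}(H)^{\boxtimes n}$ (Lemma~\ref{ortproduct}), one has $\vartheta(\mathrm{NO}(H)^{\boxtimes n},p^{\otimes n})=\vartheta(\mathrm{NO}(H),p)^{\,n}=1$. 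The weighted bound $\alpha\leq\vartheta$ then yields $\alpha(\mathrm{NO}(H)^{\boxtimes n},p^{\otimes n})\leq 1$, which by Lemma~\ref{LOchar}(a) is exactly the assertion $p\in\CE^n(H)$. Since $n$ was arbitrary, $p\in\bigcap_{n}\CE^n(H)=\CE^\infty(H)$.

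The genuine mathematical content sits entirely in Appendix~\ref{appcap}: the weighted generalizations of multiplicativity of $\vartheta$ under the strong product and of the sandwich $\alpha\leq\vartheta$ (equivalently $\Theta\leq\vartheta$). Granting those, there is no substantial obstacle and the proof is bookkeeping; the one point that needs care is the weighting conventions — one must check that the weighting on $\mathrm{NO}(H)^{\boxtimes n}$ coming from the product structure is precisely $p^{\otimes n}$, so that Lemma~\ref{ortproduct}, Proposition~\ref{lovmult} and Lemma~\ref{LOchar} all line up on the same object. I would also remark, as the authors do after Lemma~\ref{LOchar}, that this already subsumes the easy containment $\mathcal{Q}(H)\subseteq\CE^\infty(H)$, and that the point of the present sharpening is that $\mathcal{Q}_1$ can be strictly larger than $\mathcal{Q}$.
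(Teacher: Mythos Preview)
Your proposal is correct and your first paragraph is essentially verbatim the paper's own proof: combine Proposition~\ref{Q1vsLov}, Lemma~\ref{LOchar}, and the weighted inequality $\Theta\leq\vartheta$ from Corollary~\ref{monotone}. Your second paragraph is a harmless unwinding of the same argument through the finite levels $\CE^n$; it adds no new ingredient beyond making the role of multiplicativity of $\vartheta$ explicit, which in the paper's version is absorbed into the definition of $\Theta$ and Corollary~\ref{monotone}.
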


This observation was first made by Miguel Navascu{\'e}s (and proved in~\cite{Mattyprep}), before this whole formalism had been set up. Using our results on the relationships to invariants of graphs, we are now in a position to give an essentially trivial proof. See~\cite{Mattyprep} for a direct and almost as simple proof in the Bell scenario case.

\begin{proof}
Combine Propositions~\ref{Q1vsLov} and Lemma~\ref{LOchar} together with the fact that $\Theta(G,p) \leq \vartheta(G,p)$ for any weighted graph (see Corollary~\ref{monotone}).
\end{proof}

In particular, together with $\mathcal{Q}(H)\subseteq\mathcal{Q}_1(H)$, this gives another proof of $\mathcal{Q}(H)\subseteq\CE^\infty(H)$, even if an excessively more convoluted one. This completes our exposition of Figure~\ref{chain-inc}.

\begin{cor}
In the CHSH scenario $B_{2,2,2}$, the LO principle does not characterize quantum models: $\mathcal{Q}(B_{2,2,2})\subsetneq\CE^\infty(B_{2,2,2})$.
\end{cor}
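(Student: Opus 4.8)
The plan is to deduce the strict inclusion from two facts already available in the text. First, Proposition~\ref{miguelito} gives $\mathcal{Q}_1(B_{2,2,2})\subseteq\CE^{\infty}(B_{2,2,2})$, and together with $\mathcal{Q}(B_{2,2,2})\subseteq\mathcal{Q}_1(B_{2,2,2})$ (Proposition~\ref{QinQ1}) and $\mathcal{Q}(B_{2,2,2})\subseteq\CE^{\infty}(B_{2,2,2})$ (Corollary~\ref{Qtensor}), the inclusion $\mathcal{Q}(B_{2,2,2})\subseteq\CE^{\infty}(B_{2,2,2})$ is established; only strictness remains. Second, $\mathcal{Q}(B_{2,2,2})\subsetneq\mathcal{Q}_1(B_{2,2,2})$ — the very property for which $B_{2,2,2}$ is singled out in the proof of Theorem~\ref{Qinv}. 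Granting this, one picks any $p_0\in\mathcal{Q}_1(B_{2,2,2})\setminus\mathcal{Q}(B_{2,2,2})$; then $p_0\in\CE^{\infty}(B_{2,2,2})$ by Proposition~\ref{miguelito}, while $p_0\notin\mathcal{Q}(B_{2,2,2})$ by choice, so $\mathcal{Q}(B_{2,2,2})\subsetneq\CE^{\infty}(B_{2,2,2})$.

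Thus the entire content reduces to exhibiting a witness $p_0\in\mathcal{Q}_1(B_{2,2,2})\setminus\mathcal{Q}(B_{2,2,2})$. The cleanest route I would take is to invoke the identification of $\mathcal{Q}_1(B_{2,2,2})$ with the ``$Q^{1+AB}$'' / ``almost quantum'' set of CHSH correlations (Theorem~\ref{AQthm} together with its corollary), and the known strict separation between the quantum set of CHSH correlations and $Q^{1+AB}$; see~\cite{NPA,Mattyprep}. Concretely, one can take a correlation whose CHSH value lies strictly between Tsirelson's bound $2\sqrt{2}$~\cite{TsiHS} and the $\mathcal{Q}_1$-maximal value, and verify membership in $\mathcal{Q}_1$ using the orthonormal-labelling characterization of Proposition~\ref{CSWeq}\ref{pCSW} (equivalently the semidefinite condition of Definition~\ref{npadefn}); such a correlation violates Tsirelson's bound and hence is not quantum.

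The main obstacle is therefore not the logical chase — which is two lines — but supplying, or citing cleanly, the explicit witness and checking its $\mathcal{Q}_1$-membership. I expect the most economical presentation to simply quote the established separation $\mathcal{Q}(B_{2,2,2})\subsetneq Q^{1+AB}=\mathcal{Q}_1(B_{2,2,2})$ rather than reprove it, so that the corollary follows immediately from Proposition~\ref{miguelito}.
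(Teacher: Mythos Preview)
Your overall logic is exactly the paper's: combine Proposition~\ref{miguelito} with the known strict inclusion $\mathcal{Q}(B_{2,2,2})\subsetneq\mathcal{Q}_1(B_{2,2,2})$ from~\cite{NPA}, and the corollary follows in one line. That part is correct and matches the paper's proof verbatim.

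One caveat on your proposed concrete witness: the suggestion to take a correlation whose CHSH value lies strictly between Tsirelson's bound $2\sqrt{2}$ and the $\mathcal{Q}_1$-maximal CHSH value does not work, because the $\mathcal{Q}_1$ (i.e.\ $Q^{1+AB}$) maximum of the CHSH expression is already exactly $2\sqrt{2}$---the first NPA level recovers Tsirelson's bound for CHSH. The genuine separation $\mathcal{Q}(B_{2,2,2})\subsetneq\mathcal{Q}_1(B_{2,2,2})$ is witnessed by correlations that are not detected by the CHSH value alone. Since you correctly anticipate that the cleanest route is simply to cite the separation from~\cite{NPA} rather than construct a witness, this does not affect the validity of your argument, but the parenthetical ``concrete'' route should be dropped.
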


\begin{proof}
From~\ref{miguelito}, since $\mathcal{Q}(B_{2,2,2})\subsetneq\mathcal{Q}_1(B_{2,2,2})$~\cite{NPA}.
\end{proof}

Hence, the Consistent Exclusivity principle can at best characterize $\mathcal{Q}_1$, the first level of the hierarchy of semidefinite programs. Alas, even this is not the case:

\begin{thm}\label{ejemplito}
There are contextuality scenarios $H$ for which $\mathcal{Q}_1(H)\subsetneq\CE^\infty(H)$.
\end{thm}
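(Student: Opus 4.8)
The plan is to reduce the statement to graph theory and then invoke the existence of a suitable graph. By Lemma~\ref{LOchar}, a probabilistic model $p\in\mathcal{G}(H)$ lies in $\CE^\infty(H)$ precisely when $\Theta(\mathrm{NO}(H),p)=1$, while by Proposition~\ref{Q1vsLov} it lies in $\mathcal{Q}_1(H)$ precisely when $\vartheta(\mathrm{NO}(H),p)=1$. Since $\alpha\leq\Theta\leq\vartheta$ for every weighted graph (the first inequality is immediate, the second is Corollary~\ref{monotone}), and since the normalization equations force $\alpha(\mathrm{NO}(H),p)\geq 1$, it suffices to exhibit a single scenario $H$ and a model $p\in\mathcal{G}(H)$ with
\[
\Theta(\mathrm{NO}(H),p)=1<\vartheta(\mathrm{NO}(H),p).
\]
Because the weighted invariants $\alpha,\Theta,\vartheta$ are homogeneous of degree one in the weights, this in turn reduces to producing an unweighted graph $G$ with
\[
\alpha(G)=\Theta(G)<\vartheta(G)
\]
together with a realization of $G$ as $\mathrm{NO}(H)$ for some $H$ on which the uniform assignment $v\mapsto 1/\alpha(G)$ is a probabilistic model: then $\Theta(\mathrm{NO}(H),p)=\Theta(G)/\alpha(G)=1$ while $\vartheta(\mathrm{NO}(H),p)=\vartheta(G)/\alpha(G)>1$.

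For the graph $G$ itself I would use the construction of Appendix~\ref{uwmainconjs}: the graph called $G_1$ there satisfies $\alpha(G_1)=\Theta(G_1)$ (the Shannon capacity is pinned to the independence number, e.g.\ by a Haemers-type minimum-rank bound that happens to be tight at $\alpha$), and it has $\vartheta(G_1)>\alpha(G_1)$ --- indeed this strict gap is exactly what makes the activation examples of that appendix possible. (Any graph with $\alpha(G)=\Theta(G)<\vartheta(G)$ would serve.) To turn $G$ into a contextuality scenario I would apply the graph-to-scenario translation of Section~\ref{CSWtransfer}: viewing $G$ as a CSW-style graph-based scenario equipped with the uniform subnormalized weighting $1/\alpha(G)$, this yields a contextuality scenario $H$ and a model $p\in\mathcal{G}(H)$ whose non-orthogonality graph and weights reproduce $(G,\mathbf{1}/\alpha(G))$ up to bookkeeping (``no-detection'') vertices, which carry weight zero and leave $\alpha$, $\Theta$, $\vartheta$ unchanged. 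A more hands-on alternative, valid when $G$ has the property that every non-adjacent pair of vertices lies in a common maximum independent set, is simply to take $V(H)=V(G)$ and let $E(H)$ be the set of maximum independent sets of $G$: then every edge of $H$ is an independent set of $\mathrm{NO}(H)$, every $G$-non-edge is covered by hypothesis, hence $\mathrm{NO}(H)=G$; each edge has exactly $\alpha(G)$ vertices, so $p(v)=1/\alpha(G)$ is the required probabilistic model.

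Granting this, the conclusion is immediate: $p\in\CE^\infty(H)$ because $\Theta(\mathrm{NO}(H),p)=1$, whereas $p\notin\mathcal{Q}_1(H)$ because $\vartheta(\mathrm{NO}(H),p)=\vartheta(G)/\alpha(G)>1$; hence $\mathcal{Q}_1(H)\subsetneq\CE^\infty(H)$.

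The main obstacle lies entirely on the graph-theoretic side, and is precisely what Appendix~\ref{uwmainconjs} is devoted to: exhibiting an explicit $G$ with $\alpha(G)=\Theta(G)$ yet $\vartheta(G)>\alpha(G)$. The equality $\alpha=\Theta$ is the delicate point --- it cannot be certified through $\vartheta$ (which only gives $\Theta\le\vartheta$) and needs a genuinely sharper upper bound on the Shannon capacity, such as Haemers' rank bound, to be tight at $\alpha$, while one simultaneously needs the Lov\'asz bound to stay strictly above $\alpha$. A secondary, essentially routine point is to verify that the scenario realizing $G$ supports a genuine normalized probabilistic model inducing the correct weighted invariants --- handled by the covering-by-maximum-independent-sets remark above, or, in full generality, by checking that the padding vertices of the Section~\ref{CSWtransfer} transfer do not perturb $\alpha$, $\Theta$ and $\vartheta$.
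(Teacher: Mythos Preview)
Your strategy is the paper's: exhibit a graph $G$ with $\alpha(G)=\Theta(G)<\vartheta(G)$, realize $G$ as $\mathrm{NO}(H)$ for a scenario $H$ with uniform edges of size $\alpha(G)$, take $p\equiv 1/\alpha(G)$, and invoke Lemma~\ref{LOchar} and Proposition~\ref{Q1vsLov}. The paper carries this out with Haemers' graph~\cite{Haemers2} on the $3$-subsets of $\{1,\ldots,n\}$ (for $n\geq 12$ divisible by $4$), taking the edges of the scenario $J_n$ to be the size-$n$ independent sets arising from partitions of $\{1,\ldots,n\}$ into $4$-element blocks; one checks directly that these edges cover every non-adjacent pair and no adjacent pair, so that $\mathrm{NO}(J_n)$ is exactly Haemers' graph. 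This is your ``hands-on alternative'' made concrete, with the covering hypothesis actually verified rather than assumed.

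Two caveats. First, pointing to Appendix~\ref{uwmainconjs} for the graph is circular: the graphs $G_1,G_2$ there are constructed \emph{from} Theorem~\ref{CEactivate}, whose proof in turn presupposes the present theorem; you should cite Haemers directly. Second, your CSW route via Section~\ref{CSWtransfer} is either mis-stated or incomplete. If the hyperedges you feed in are the $2$-element non-edges of $G$ (the natural reading of ``viewing $G$ as a CSW-style graph-based scenario''), then the no-detection vertices carry weight $1-2/\alpha(G)$, not zero, and the claim that they ``leave $\alpha,\Theta,\vartheta$ unchanged'' is no longer obvious --- in particular bounding $\Theta(\mathrm{NO}(H'),p)\leq 1$ with those heavily weighted extra vertices present would need a separate argument. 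The paper avoids this entirely by building $J_n$ without no-detection events.
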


\begin{proof}
Our Proposition~\ref{Q1vsLov} and Lemma~\ref{LOchar} suggest that this is related to the existence of graphs $G$ for which $\alpha(G)=\Theta(G)<\vartheta(G)$. Indeed, we will turn Haemers' example~\cite{Haemers2} of this phenomenon into an example of a contextuality scenario $J_n$ with a probabilistic model $p_J\in\CE^\infty(J_n)$ with $p_J\not\in\mathcal{Q}_1(J_n)$.

Let $n\geq 12$ be an integer divisible by $4$. Let $J_n$ have vertices $V(J_n)$ being all $3$-element subsets of $\{1,\ldots,n\}$. Following~\cite{Haemers2}, an edge of $J_n$ is given in terms of a partition of $\{1,\ldots,n\}$ into $4$-element subsets; a vertex ($3$-element subset) belongs to the edge if and only if it is contained in one of the subsets of the partition. We call this scenario $J_n$ due to the relation to Johnson schemes~\cite{Haemers2}.

By construction, every edge $e\in E(J_n)$ has cardinality $|e|=n$, since every partition consists of $n/4$ subsets and each subset hosts $4$ vertices. Therefore, assigning a weight of $\tfrac{1}{n}$ to each vertex defines a probabilistic model $p_J$. Now the non-orthogonality graph $\mathrm{NO}(J_n)$ consists of the $3$-element subsets of $\{1,\ldots,n\}$ two of which are adjacent if and only if they have exactly one element in common. This is the graph that was considered by Haemers~\cite{Haemers2}, who showed that
$$
\alpha(\mathrm{NO}(J_n))=\Theta(\mathrm{NO}(J_n))=n < \vartheta(\mathrm{NO}(J_n)) .
$$
Since the probabilistic model $p_J$ has constant weights $\tfrac{1}{n}$, this means that
$$
\alpha(\mathrm{NO}(J_n),p_J) = \Theta(\mathrm{NO}(J_n),p_J) = 1 < \vartheta(\mathrm{NO}(J_n),p_J) ,
$$
and hence $p_J\in\CE^\infty(J_n)$, but $p_J\not\in\mathcal{Q}_1(J_n)$.
\end{proof}

In fact, we can easily turn this proof into a stronger result:

\begin{cor}
For the contextuality scenario $J_n$ from the previous proof, we have $\mathcal{Q}_1(J_n) = \emptyset$, although $\CE^\infty(J_n)\neq\emptyset$.
\end{cor}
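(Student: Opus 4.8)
The plan is to note first that the assertion $\CE^\infty(J_n)\neq\emptyset$ requires no new work: the proof of Theorem~\ref{ejemplito} already exhibits the uniform model $p_J$ (constant weight $\tfrac1n$ on every vertex) as an element of $\CE^\infty(J_n)$. So the whole task is to show $\mathcal{Q}_1(J_n)=\emptyset$, and the strategy I would use is a symmetrization argument: exploit the fact that $J_n$ has a very large automorphism group, that $\mathcal{Q}_1$ is invariant under hypergraph automorphisms, and that $\mathcal{Q}_1$ is convex, to reduce the question to the single model $p_J$ — which we already know is \textit{not} in $\mathcal{Q}_1(J_n)$ because $\vartheta(\mathrm{NO}(J_n),p_J)>1$.

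Concretely, I would first record that the symmetric group $S_n$ acts on $J_n$ by hypergraph automorphisms: a permutation $\sigma$ sends the vertex $\{i,j,k\}$ to $\{\sigma(i),\sigma(j),\sigma(k)\}$ and carries any partition of $\{1,\dots,n\}$ into $4$-element blocks to another such partition, hence permutes $E(J_n)$. This induces an action on $\mathcal{G}(J_n)$ by $(\sigma\cdot p)(v)\defin p(\sigma^{-1}v)$, and this action restricts to $\mathcal{Q}_1(J_n)$: if $M$ is a matrix witnessing $p\in\mathcal{Q}_1(J_n)$ in the sense of Definition~\ref{npadefn}, then relabelling its rows and columns by $\sigma$ (and fixing the index $\emptyset$) yields a matrix witnessing $\sigma\cdot p$, because $\sigma$ preserves the orthogonality relation of $\mathrm{NO}(J_n)$ and merely permutes the edge constraints~\eqref{npalin} among themselves while leaving positivity and the normalization $M_{\emptyset,\emptyset}=1$ untouched.

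Next I would run the averaging argument. Suppose for contradiction that $\mathcal{Q}_1(J_n)\neq\emptyset$, pick $p\in\mathcal{Q}_1(J_n)$, and set $\bar p\defin\frac{1}{|S_n|}\sum_{\sigma\in S_n}\sigma\cdot p$. By the previous paragraph each $\sigma\cdot p$ lies in $\mathcal{Q}_1(J_n)$, and by convexity of $\mathcal{Q}_1(J_n)$ (Proposition~\ref{Q1tensor}) so does $\bar p$. By construction $\bar p$ is $S_n$-invariant, hence constant on vertices since $S_n$ acts transitively on the $3$-element subsets of $\{1,\dots,n\}$; since every edge of $J_n$ has cardinality $n$, the normalization equation forces $\bar p=p_J$. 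But the proof of Theorem~\ref{ejemplito} shows $\vartheta(\mathrm{NO}(J_n),p_J)>1$, so $p_J\notin\mathcal{Q}_1(J_n)$ by Proposition~\ref{Q1vsLov} — a contradiction. Hence $\mathcal{Q}_1(J_n)=\emptyset$.

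I do not expect any genuine obstacle here: the only point needing care is verifying that the $S_n$-action truly preserves $\mathcal{Q}_1$ and not merely $\mathcal{G}$, which is the small bookkeeping remark above that all of the defining semidefinite constraints are invariant under simultaneously relabelling the index set by a hypergraph automorphism. The one substantive input, the strict inequality $\vartheta(\mathrm{NO}(J_n),p_J)>1$, is inherited directly from Haemers' computation~\cite{Haemers2} as already invoked in Theorem~\ref{ejemplito}, so nothing further is needed there.
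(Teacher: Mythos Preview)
Your proposal is correct and follows essentially the same approach as the paper: both use the vertex-transitive $S_n$-action on $J_n$ to symmetrize an arbitrary hypothetical element of $\mathcal{Q}_1(J_n)$, invoke convexity of $\mathcal{Q}_1$ to conclude that the average $p_J$ would lie in $\mathcal{Q}_1(J_n)$, and then derive a contradiction from the fact (already established in Theorem~\ref{ejemplito}) that $p_J\notin\mathcal{Q}_1(J_n)$. Your version is slightly more explicit about why hypergraph automorphisms preserve $\mathcal{Q}_1$, but the argument is otherwise identical.
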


\begin{proof}
The scenario $J_n$ is vertex-transitive: for any two vertices $v,w\in  V(J_n)$, there exists a symmetry transformation which takes $v$ into $w$ given by simply permuting the elements of the ground set $\{1,\ldots,n\}$. Starting with any probabilistic model $p\in\mathcal{G}(J_n)$, we can obtain the above $p_J$ by taking a convex combination of $p$ and all its images obtained by applying symmetry transformations $\pi\in S_n$, where $S_n$ is the permutation group of $\{1,\ldots,n\}$,
\[
p_J = \frac{1}{n!}\sum_{\pi\in S_n} \pi(p).
\]
Therefore, if there existed a model $p\in\mathcal{Q}_1(J_n)$, then we would obtain $p_J\in\mathcal{Q}_1(J_n)$ by invariance under symmetries and convexity of $\mathcal{Q}_1(J_n)$, but this we already know to be false.

This shows that $\mathcal{Q}_1(J_n)=\emptyset$. Since $p_J\in\CE^\infty(J_n)$, we also already know that $\CE^\infty(J_n)$ is not empty.
\end{proof}

\begin{rem}
\label{sizerem}
For instance, for $n=12$ which gives the smallest example, $J_{12}$ is a scenario with $\binom{12}{3}=220$ many vertices and $\tfrac{1}{3!}\binom{12}{4\:4\:4}=5775$ many edges. Two vertices $v,w\in V(J_{12})$ are adjacent in $\mathrm{NO}(J_{12})$ if and only if $|v\cap w| = 1$; and for any given $v$ there are $3\cdot\binom{9}{2}=108$ different $w$'s satisfying this condition. Hence the graph $\mathrm{NO}(J_{12})$ has $\tfrac{1}{2}\cdot 220\cdot 108 = \numprint{11880}$ many edges.
\end{rem}

\subsection{Activation and non-convexity of Consistent Exclusivity}\label{se:act}

In this section, we address the problem of whether violations of Consistent Exclusivity can be obtained by \emph{activation}: are there contextuality scenarios $H_A$ and $H_B$ together with probabilistic models $p_A$ and $p_B$ such that $p_A\otimes p_B\not\in\CE^\infty(H_A\otimes H_B)$, although $p_A\in\CE^\infty(H_A)$ and $p_B\in\CE^\infty(H_B)$? Or is Consistent Exclusivity closed under taking tensor products?

What we will find is that such activation is indeed possible. 

We start the construction by taking any contextuality scenario $H_A$ which has a probabilistic model with $p_A\in \CE^\infty(H_A)$, but $p_A\not\in\mathcal{Q}_1(H_A)$; the proof of Theorem~\ref{ejemplito} provides a concrete example, but any other one will do just as fine. From Lemma~\ref{LOchar} and Proposition~\ref{Q1vsLov}, we obtain that 
\[
\Theta\left(\mathrm{NO}\left(H_A\right),p_A\right) = 1,\qquad \vartheta\left(\mathrm{NO}\left(H_A\right),p_A \right) > 1,
\]
noting that if we use the example of the proof of Theorem~\ref{ejemplito}, then these properties were really what enabled us to show that $p_A\in\CE^\infty(H_A)\setminus\mathcal{Q}_1(H_A)$ in the first place.

Then by Proposition~\ref{thetaalt}, we know that there exists an orthonormal labeling $v\mapsto |\phi_v\rangle$ of the complementary graph $\overline{\mathrm{NO}(H_A)}$ and another unit vector $|\Psi\rangle \in \mathbbm{R}^{|V(H_A)|}$ such that
\beq
\label{yanidea}
\sum_{v \in V_A} p_A(v) \: |\langle \Psi | \phi_v\rangle|^2 >1.
\eeq
Following an idea of Yan~\cite{Yan}, we will turn the inner products $|\langle\Psi|\phi_v\rangle|^2$ into the probabilities of a quantum model on a certain scenario $H_B$ in such a way that this precise inequality witnesses a violation of Consistent Exclusivity.

To define this scenario $H_B$, we start with the non-orthogonality graph $\mathrm{NO}(H_A)$ and apply a construction which we will meet again in Section~\ref{CSWtransfer}: we would like each edge of $\mathrm{NO}(H_A)$ to represent a \emph{subnormalized} measurement. This means that for each edge the vertices of $H_B$ are the vertices of $H_A$ together with one additional `no-detection event' for each edge of $\mathrm{NO}(H_A)$,
\[
V(H_B) \defin V(H_A) \cup E(\mathrm{NO}(H_A)),
\]
where the no-detection event for edge $e\in E(\mathrm{NO}(H_A))$ is denoted by $w_e$, and its r{\^o}le is to turn the subnormalized edges into normalized measurements. So for every edge $e=\{u,v\}\in E(\mathrm{NO}(H_A))$, there is a measurement given by
\[
\{ u,v,w_e \} \subseteq V(H_B),
\]
and these sets constitute the set of new edges $E(H_B)$.

\begin{lem}
\label{yanlemma}
The assignment
\[
p_B (v):= \begin{cases}
 \qquad\qquad|\langle \Psi | \phi_v\rangle|^2 & \textrm{if } v \in V(H_A),\\
 1 - |\langle \Psi | \phi_u\rangle|^2 -|\langle \Psi | \phi_{u'}\rangle|^2 & \textrm{if }  v = w_e \textrm{ for } e=\{u,u'\}\in E(\mathrm{NO}(H_A)),
\end{cases}
\]
defines a quantum model on $H_B$.
\end{lem}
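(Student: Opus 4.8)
The plan is to write down an explicit quantum representation of $p_B$ directly from the orthonormal labeling $v\mapsto|\phi_v\rangle$ of $\overline{\mathrm{NO}(H_A)}$ and the unit vector $|\Psi\rangle$ supplied by Proposition~\ref{thetaalt}. Take $\H\defin\R^{|V(H_A)|}$, $\rho\defin|\Psi\rangle\langle\Psi|$, and assign to every $v\in V(H_A)\subseteq V(H_B)$ the rank-one projection $P_v\defin|\phi_v\rangle\langle\phi_v|$, while to every no-detection event $w_e$ with $e=\{u,u'\}\in E(\mathrm{NO}(H_A))$ we assign $P_{w_e}\defin\mathbbm{1}_\H-P_u-P_{u'}$. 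Everything then reduces to four short verifications, carried out in the following order.

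First, $P_{w_e}$ is a projection. This holds because $P_uP_{u'}=|\phi_u\rangle\langle\phi_u|\phi_{u'}\rangle\langle\phi_{u'}|=0$, i.e.\ because $\langle\phi_u|\phi_{u'}\rangle=0$. This is precisely where the choice of an orthonormal labeling of $\overline{\mathrm{NO}(H_A)}$ is used: the pair $\{u,u'\}$ is an edge of the graph $\mathrm{NO}(H_A)$, hence $u$ and $u'$ are non-adjacent in $\overline{\mathrm{NO}(H_A)}$, and an orthonormal labeling assigns orthogonal vectors to non-adjacent vertices. Consequently $P_u+P_{u'}$ is the orthogonal projection onto $\mathrm{lin}\{|\phi_u\rangle,|\phi_{u'}\rangle\}$ and $P_{w_e}$ is its orthocomplement.

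Second, normalization~\eqref{qmeas}: by construction the edges of $H_B$ are exactly the triples $\{u,u',w_e\}$ with $e=\{u,u'\}\in E(\mathrm{NO}(H_A))$, and $P_u+P_{u'}+P_{w_e}=\mathbbm{1}_\H$ for each of them. Third, the probabilities come out right: for $v\in V(H_A)$ one gets $\tr(\rho P_v)=|\langle\Psi|\phi_v\rangle|^2=p_B(v)$, and for $w_e$ with $e=\{u,u'\}$, using $\tr\rho=1$ and linearity, $\tr(\rho P_{w_e})=1-|\langle\Psi|\phi_u\rangle|^2-|\langle\Psi|\phi_{u'}\rangle|^2=p_B(w_e)$, which also matches~\eqref{qrep}. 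Fourth, since $\rho$ is a state and each $P_{w_e}$ is a positive operator bounded by $\mathbbm{1}_\H$, the number $p_B(w_e)$ lies in $[0,1]$ automatically, so the displayed formula indeed defines a probabilistic model (and in particular $|\langle\Psi|\phi_u\rangle|^2+|\langle\Psi|\phi_{u'}\rangle|^2\le 1$ along every orthogonal pair); this completes the verification that $p_B\in\mathcal{Q}(H_B)$.

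There is no genuinely hard step here. The only place requiring care is the bookkeeping of orthogonality conventions --- checking that ``orthonormal labeling of $\overline{\mathrm{NO}(H_A)}$'' yields orthogonality of $|\phi_u\rangle$ and $|\phi_{u'}\rangle$ on exactly those pairs $\{u,u'\}$ that become edges of $H_B$. One may also wish to note in passing that $H_B$ is a legitimate contextuality scenario, i.e.\ that every vertex of $H_A$ occurs in some edge of $H_B$; this holds whenever $\mathrm{NO}(H_A)$ has no isolated vertex, which is the case for the scenarios produced in the proof of Theorem~\ref{ejemplito}, but it is not part of the lemma itself.
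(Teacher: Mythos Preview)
Your proof is correct and essentially identical to the paper's own argument: the same Hilbert space, the same rank-one projections $|\phi_v\rangle\langle\phi_v|$, the same complementary projection $\mathbbm{1}-P_u-P_{u'}$ for each no-detection event, and the same use of the orthonormal-labeling condition on $\overline{\mathrm{NO}(H_A)}$ to ensure $\langle\phi_u|\phi_{u'}\rangle=0$. Your write-up is slightly more detailed (you spell out why $p_B(w_e)\in[0,1]$ and flag the bookkeeping of orthogonality conventions), but there is no substantive difference.
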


\begin{proof}
$p_B$ is represented by the family of projections
\[
P_v = \begin{cases}
\qquad\qquad|\phi_v \rangle \langle \phi_v| & \textrm{if } \, v \in V_A,\\
\mathbbm{1} - |\phi_u \rangle \langle \phi_u| - |\phi_{u'} \rangle \langle \phi_{u'}| & \textrm{if } v = w_e \textrm{ for } e=\{u,u'\}\in E(\mathrm{NO}(H_A))
\end{cases}
\]
on the Hilbert space $\C^{|V(H_A)|}$ together with the state $|\Psi\rangle\in\C^{|V(H_A)|}$. That an operator of the second kind, $\mathbbm{1}_\mathcal{H} - |\phi_u \rangle \langle \phi_u| - |\phi_{u'} \rangle \langle \phi_{u'}|$, is indeed a projection follows from the orthogonality relation $\langle\phi_u|\phi_{u'}\rangle = 0$, which is guaranteed by the assumption that the family $(|\phi_u\rangle)_{u\in V(H_A)}$ is an orthonormal labeling of $\overline{\mathrm{NO}(H_A)}$. The normalization condition $\sum_{v\in e}P_v = \mathbbm{1}$ holds for any $e\in E(H_B)$ by definition.
\end{proof}

To summarize, we have probabilistic models $p_A\in\CE^\infty(H_A)$ and $p_B\in \mathcal{Q}(H_B)$, so that in particular $p_B\in\CE^\infty(H_B)$. We now consider the probabilistic model $p_A \otimes p_B$ on $H_A \otimes H_B$:

\begin{lem}
$p_A \otimes p_B \notin  \CE^1 \left(H_A \otimes H_B \right)$.
\end{lem}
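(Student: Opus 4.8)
The plan is to exhibit an explicit independent set of $\mathrm{NO}(H_A\otimes H_B)$ on which $p_A\otimes p_B$ has total weight strictly larger than $1$; by Definition~\ref{defnCE} (equivalently by Proposition~\ref{SPvsFIN}) this is precisely what it means for $p_A\otimes p_B$ to violate $\CE^1$. By Lemma~\ref{ortproduct} we have $\mathrm{NO}(H_A\otimes H_B)=\mathrm{NO}(H_A)\boxtimes\mathrm{NO}(H_B)$, so it suffices to work inside this strong product. Since $V(H_A)\subseteq V(H_B)$ by construction of $H_B$, the ``diagonal'' set
\[
I\defin\bigl\{\,(v,v)\;:\;v\in V(H_A)\,\bigr\}\subseteq V(H_A)\times V(H_B)
\]
is well defined, and it is my candidate.

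First I would check that $I$ is an independent set in $\mathrm{NO}(H_A)\boxtimes\mathrm{NO}(H_B)$. Take two distinct elements $(v,v),(v',v')\in I$, so $v\neq v'$. By definition of the strong product they are adjacent if and only if ($v=v'$ or $v\sim v'$ in $\mathrm{NO}(H_A)$) and ($v=v'$ or $v\sim v'$ in $\mathrm{NO}(H_B)$), which since $v\neq v'$ amounts to $v\sim v'$ in $\mathrm{NO}(H_A)$ and $v\sim v'$ in $\mathrm{NO}(H_B)$ simultaneously. Here is the key observation: for $v,v'\in V(H_A)$, saying $v\sim v'$ in $\mathrm{NO}(H_A)$ means exactly that $\{v,v'\}$ is an edge of the graph $\mathrm{NO}(H_A)$; but then, by the very definition of $H_B$, the set $\{v,v',w_{\{v,v'\}}\}$ is an edge of $H_B$ containing both $v$ and $v'$, so $v\perp v'$ in $H_B$, i.e.\ $v$ and $v'$ are \emph{not} adjacent in $\mathrm{NO}(H_B)$. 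Thus the two conditions are mutually contradictory, no two distinct elements of $I$ are adjacent, and $I$ is independent. (Alternatively, one could bypass Lemma~\ref{ortproduct} and, for each pair $(v,v),(v',v')$, directly write down an FR-edge of $H_A\otimes H_B$ containing both, as in the proof of Lemma~\ref{ortproduct}; going through the strong product is cleaner.)

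It then remains to compute the total weight, which is immediate:
\[
\sum_{(v,v)\in I}(p_A\otimes p_B)(v,v)=\sum_{v\in V(H_A)}p_A(v)\,p_B(v)=\sum_{v\in V(H_A)}p_A(v)\,|\langle\Psi|\phi_v\rangle|^2>1,
\]
where the middle equality is the definition of $p_B$ on $V(H_A)$ from Lemma~\ref{yanlemma} and the strict inequality is exactly~\eqref{yanidea}. Hence $I$ is an independent set of $\mathrm{NO}(H_A\otimes H_B)$ with $p_A\otimes p_B$-weight exceeding $1$, so $\alpha(\mathrm{NO}(H_A\otimes H_B),p_A\otimes p_B)>1$ and therefore $p_A\otimes p_B\notin\CE^1(H_A\otimes H_B)$ by Proposition~\ref{SPvsFIN}. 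The only genuinely nontrivial step is the independence verification, whose crux is the anti-correlation between adjacency in $\mathrm{NO}(H_A)$ and adjacency in $\mathrm{NO}(H_B)$ on the shared vertex set $V(H_A)$; this is built into the way the edges $\{u,u',w_e\}$ of $H_B$ were defined from the edges of the graph $\mathrm{NO}(H_A)$, so no real difficulty arises.
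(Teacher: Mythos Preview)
Your proof is correct and follows essentially the same approach as the paper: both use the diagonal $\{(v,v):v\in V(H_A)\}$ as the witnessing independent set, verify independence via the observation that adjacency in $\mathrm{NO}(H_A)$ forces orthogonality in $H_B$ by construction, and conclude by invoking~\eqref{yanidea}. The only cosmetic difference is that you phrase the independence check through Lemma~\ref{ortproduct} and the strong product, whereas the paper phrases it through Proposition~\ref{NOnoLO}; these are equivalent routes to the same fact.
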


\begin{proof}(Yan~\cite{Yan})
For any two vertices $u,v\in V_A$, we claim that $(u,u)$ and $(v,v)$ are orthogonal as vertices in $H_A\otimes H_B$. By Proposition~\ref{NOnoLO}, this is clear if $u\perp v$ in $H_A$; otherwise, we have $u\sim v$ in $\mathrm{NO}(H_A)$, and therefore $u\perp v$ in $H_B$ by definition of $H_B$, which also implies the claim by Proposition~\ref{NOnoLO}.

In particular, the diagonal\footnote{Yan's idea of looking at this diagonal is not new in the context of the Lov{\'a}sz number. In fact, it is already contained in Lov{\'a}sz's original paper on the subject~\cite{Lovasz}.} $D := \left\{ (v, v) \: | \: v \in V_A \right\}$ forms an independent set in $\mathrm{NO}(H_A \otimes H_B)$. Therefore, a necessary condition for $p_A \otimes p_B$ to belong to $\CE^1(H_A \otimes H_B)$ is that
$$
\sum_{v \in V(H_A)} (p_A \otimes p_B)(v,v) \stackrel{!}{\leq} 1.
$$
However, evaluating the left-hand side results in~\eqref{yanidea},
$$
\sum_{v \in V(H_A)} (p_A\otimes p_B)(v,v) =  \sum_{v\in V(H_A)} p_A(v) p_B(v) = \sum_{v \in V_A} p_A(v) \: |\langle \Psi | \phi_v\rangle|^2 >1,
$$
which completes the proof.
\end{proof}

What we have thereby shown in particular is that violations of Consistent Exclusivity can be activated. In other words,

\begin{thm}
\label{CEactivate}
There are contextuality scenarios $H_A$ and $H_B$ for which
\[
\CE^\infty(H_A)\otimes\CE^\infty(H_B)\not\subseteq\CE^1(H_A\otimes H_B).
\]
\end{thm}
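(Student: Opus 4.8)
The plan is essentially to assemble the pieces that the preceding three lemmas have already laid out, so the "proof" of Theorem~\ref{CEactivate} is really just a matter of pointing to the chain of constructions and verifying that the hypotheses needed at each stage are met. First I would fix a contextuality scenario $H_A$ together with a probabilistic model $p_A$ satisfying $p_A\in\CE^\infty(H_A)$ but $p_A\notin\mathcal{Q}_1(H_A)$; Theorem~\ref{ejemplito} supplies the explicit example $J_n$ with $p_J$, so such $(H_A,p_A)$ exists. Next I would invoke Proposition~\ref{Q1vsLov} and Lemma~\ref{LOchar} to translate this into the graph-theoretic statements $\Theta(\mathrm{NO}(H_A),p_A)=1$ and $\vartheta(\mathrm{NO}(H_A),p_A)>1$, and then apply Proposition~\ref{thetaalt} to extract an orthonormal labeling $v\mapsto|\phi_v\rangle$ of $\overline{\mathrm{NO}(H_A)}$ and a unit vector $|\Psi\rangle\in\mathbb{R}^{|V(H_A)|}$ witnessing the strict inequality $\sum_{v\in V_A}p_A(v)\,|\langle\Psi|\phi_v\rangle|^2>1$, which is exactly~\eqref{yanidea}.

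With that data in hand, the second step is to produce $H_B$ and $p_B$. I would take $H_B$ to be the scenario built from $\mathrm{NO}(H_A)$ by adjoining a no-detection vertex $w_e$ to each edge $e\in E(\mathrm{NO}(H_A))$, so that $V(H_B)=V(H_A)\cup E(\mathrm{NO}(H_A))$ and each edge of $H_B$ has the form $\{u,v,w_e\}$ for $e=\{u,v\}$. Lemma~\ref{yanlemma} then shows that the assignment $p_B(v)=|\langle\Psi|\phi_v\rangle|^2$ on the old vertices, extended in the obvious complementary way on the $w_e$, is a genuine quantum model on $H_B$; in particular $p_B\in\mathcal{Q}(H_B)\subseteq\CE^\infty(H_B)$. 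At this point we have $p_A\in\CE^\infty(H_A)$ and $p_B\in\CE^\infty(H_B)$, so $p_A\otimes p_B\in\CE^\infty(H_A)\otimes\CE^\infty(H_B)$.

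The third and final step is to show $p_A\otimes p_B\notin\CE^1(H_A\otimes H_B)$, which already establishes the (a priori stronger, since $\CE^\infty\subseteq\CE^1$) failure of inclusion claimed in the theorem. Here I would use the last lemma's argument: by Proposition~\ref{NOnoLO} and Lemma~\ref{ortproduct}, for any $u\neq v$ in $V_A$ the pair $(u,u),(v,v)$ is orthogonal in $\mathrm{NO}(H_A\otimes H_B)$ — either $u\perp v$ in $H_A$, or else $u\sim v$ in $\mathrm{NO}(H_A)$ which forces $u\perp v$ in $H_B$ by construction — so the diagonal $D=\{(v,v):v\in V_A\}$ is an independent set in $\mathrm{NO}(H_A\otimes H_B)$. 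Consistent Exclusivity applied to $D$ would require $\sum_{v\in V_A}(p_A\otimes p_B)(v,v)\leq 1$, but $(p_A\otimes p_B)(v,v)=p_A(v)p_B(v)=p_A(v)|\langle\Psi|\phi_v\rangle|^2$, whose sum is $>1$ by~\eqref{yanidea}. This contradiction completes the proof.

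Since all of Lemmas~\ref{yanlemma} and the two surrounding lemmas, together with Propositions~\ref{Q1vsLov},~\ref{thetaalt} and~\ref{NOnoLO}, are already established above, there is no substantial obstacle remaining — the only thing to be careful about is bookkeeping: making sure the orthonormal-labeling convention in Proposition~\ref{thetaalt} matches what Lemma~\ref{yanlemma} expects (labeling of $\overline{\mathrm{NO}(H_A)}$, giving $\langle\phi_u|\phi_{u'}\rangle=0$ precisely when $u\sim u'$ in $\mathrm{NO}(H_A)$), and confirming that the strict inequality in~\eqref{yanidea} is genuinely available, i.e.\ that the witness vectors can be chosen in the real Hilbert space so that Lemma~\ref{yanlemma}'s projections make sense. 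Given the explicit $J_n$ example this is automatic, so the main "work" is simply citing the right statements in the right order.
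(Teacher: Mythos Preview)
Your proposal is correct and follows essentially the same route as the paper: the paper's Section~\ref{se:act} builds exactly this chain --- take $(H_A,p_A)$ from Theorem~\ref{ejemplito}, extract the orthonormal labeling via Proposition~\ref{thetaalt} to get~\eqref{yanidea}, construct $H_B$ with no-detection events and apply Lemma~\ref{yanlemma} to get $p_B\in\mathcal{Q}(H_B)$, then use the diagonal independent set $D$ (via Proposition~\ref{NOnoLO}) to violate $\CE^1$ on the product. Your bookkeeping caveats about the labeling convention and the real Hilbert space are well-placed but, as you note, automatically satisfied by the explicit construction.
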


In fact, we have seen that we can even put $\mathcal{Q}(H_B)$ in place of $\CE^\infty(H_B)$ on the left-hand side and the statement remains valid.

The proof of this result was relatively abstract in the sense that we have not exhibited a concrete example. We now explain how to do this in terms of the scenario $J_{12}$ from the proof of Theorem~\ref{ejemplito} and Remark~\ref{sizerem} equipped with the probabilistic model $p_A:=p_J$, which assigns a uniform weight of $\tfrac{1}{12}$ to each vertex. The reader not interested in such an explicit construction may move on to Theorem~\ref{nonconvexthm}.

The most difficult step consists in finding a suitable  orthonormal labeling of $\overline{\mathrm{NO}(J_{12})}$, i.e.~an assignment $v \mapsto |\phi_{v}\rangle$ of a unit vector $|\phi_{v}\rangle \in \R^{220}$ to every triplet $v \in V(J_{12})$ such that $|v \cap w|=1$ implies that $|\phi_{v}\rangle \perp |\phi_{w}\rangle$. Let us denote by $|v\rangle$ the elements of the canonical basis of $\mathbbm{R}^{220}$. We will construct the $|\phi_{v}\rangle$ in terms of this basis.

To each vertex $v$, we associate the sets of vertices $D_i(v)$ for $i \in \{0, 1, 2, 3\}$,
\[
D_i(v) = \left\{ w\in V(J_{12}) \:\big|\: |v\cap w| = 3 - i \right\}.
\]
The subscript $i$ indicates in how many elements a $w\in D_i(v)$ \emphalt{differs} from $v$. We have 
\[
|D_0(v)| = 1,\qquad |D_1(v)| = 3\cdot 9 = 27,\qquad |D_2(v)|= 3 \cdot \tbinom{9}{2} = 108,\qquad |D_3(v)| = \tbinom{9}{3}=84
\]
for any $v$. For the vectors $|\phi_{v}\rangle$, we make the ansatz
\[
|\phi_{v}\rangle \defin \sum_{i=0}^3 \frac{\alpha_i}{\sqrt{|D_i(v)|}} \sum_{v'\in D_i(v)} |v'\rangle 
\]
for $\alpha_i\in\R$ and the denominators have been chosen such that the normalization condition for this vector simply reads 
 \beq
 \label{constraint-norm}
 \alpha_0^2 + \alpha_1^2 + \alpha_2^2 + \alpha_3^2=1.
 \eeq
We need to ensure that $\langle\phi_v|\phi_w\rangle = 0$ for $|v\cap w|=1$. With our ansatz for the vectors, this means that
\[
\sum_{i=0}^3 \alpha_i^2 \cdot  \frac{|D_i(v) \cap D_i(w)|}{|D_i(v)|} + 2 \sum_{i < j} \alpha_i \alpha_j \cdot \frac{|D_i(v) \cap D_j(w)|}{\sqrt{|D_i(v)\cdot D_j(w)|}} =0.
\]
It is clear that $|D_0(v)\cap D_i(w)|$ is $1$ for $i=2$ and $0$ otherwise; for the cardinalities of the other intersections, see Figure~\ref{intersects}.

\begin{figure}
\subfigure[$|D_1(v) \cap D_1(w)| = 4$.]{
\hspace{1cm}
$\begin{array}{c}
\{1,3,4\}\\
\{1,3,5\}\\
\{2,3,4\}\\
\{2,3,5\}\\\\
\end{array}$
\hspace{1cm}}
\hspace{1cm}
\subfigure[$|D_1(v) \cap D_2(w)| = 16$.]{
\hspace{1cm}
$\begin{array}{c}
\{1,2,4\}\\
\{1,2,5\}\\
\{1,3,x\}\\
\{2,3,x\}\\\\
\end{array}$
\hspace{1cm}}
\hspace{1cm}
\subfigure[$|D_1(v) \cap D_3(w)| = 7$.]{
\hspace{1cm}
$\begin{array}{c}
\\\\\\\{1,2,x\}\\\\
\end{array}$
\hspace{1cm}}
\subfigure[$|D_2(v) \cap D_2(w)| = 49$.]{
\hspace{1cm}
$\begin{array}{c}
\{1,4,x\}\\
\{1,5,x\}\\
\{2,4,x\}\\
\{2,5,x\}\\
\{3,x,y\}\\\\
\end{array}$
\hspace{1cm}}
\hspace{1cm}
\subfigure[$|D_2(v) \cap D_3(w)| = 42$.]{
\hspace{1cm}
$\begin{array}{c}
\\\\\\\{1,x,y\}\\
\{2,x,y\}\\\\
\end{array}$
\hspace{1cm}}
\hspace{1cm}
\subfigure[$|D_3(v) \cap D_3(w)| = 35$.]{
\hspace{1cm}
$\begin{array}{c}
\\\\\\\\\{x,y,z\}\\\\
\end{array}$
\hspace{1cm}}
\caption{The various intersections for $v=\{1,2,3\}$ and $w=\{3,4,5\}$. Here, $x$, $y$ and $z$ stand for arbitrary elements of $\{5,\ldots,12\}$, so that entries containing one, two or all three of these have to be counted with multiplicity $7$, $\binom{7}{2}$ or $\binom{7}{3}$, respectively.}
\label{intersects}
\end{figure}

In terms of the explicit numbers and upon reducing fractions, this equation can be written in matrix form as
\beq
\label{matrixeq}
\left(\begin{matrix}\alpha_0\\\alpha_1\\\alpha_2\\\alpha_3\end{matrix}\right)^T
\left[
\begin{matrix}
0 & 0 & \frac{1}{6\sqrt{3}} & 0 \\
0 & \frac{4}{27} & \frac{8}{27} & \frac{\sqrt{7}}{18} \\
\frac{1}{6\sqrt{3}} & \frac{8}{27} & \frac{49}{108} & \frac{\sqrt{7}}{6} \\
0 & \frac{\sqrt{7}}{18} & \frac{\sqrt{7}}{6}  & \frac{5}{12} 
\end{matrix}
\right]
\left(\begin{matrix}\alpha_0\\\alpha_1\\\alpha_2\\\alpha_3\end{matrix}\right)
= 0 .
\eeq
This matrix has two normalized eigenvectors given by
\[
|\vec{b} \rangle = \frac{1}{2 \sqrt{55}} \left( 
\begin{matrix}
1 \\ 3 \sqrt{3} \\ 6 \sqrt{3} \\ 2\sqrt{21}
\end{matrix} 
\right) = \frac{1}{2\sqrt{55}} \left(
\begin{matrix}
\sqrt{|D_0|} \\ \sqrt{|D_1|} \\ \sqrt{|D_2|} \\ \sqrt{|D_3|}
\end{matrix} \right) ,
\qquad
|\vec{c} \rangle = \frac{1}{2 \sqrt{30}} \left( 
\begin{matrix}
-2 \sqrt{21} \\ 2 \sqrt{7} \\ - \sqrt{7} \\ 1
\end{matrix} 
\right),
\]
with eigenvalues $1$ and $-\tfrac{13}{108}$, respectively. Therefore with
\[
|\vec{\alpha}\rangle = \left(\begin{matrix}\alpha_0\\\alpha_1\\\alpha_2\\\alpha_3\end{matrix}\right) := \frac{\sqrt{13}}{11}|\vec{b}\rangle + \frac{6\sqrt{3}}{11}|\vec{c}\rangle,
\]
both the normalization constraint~\eqref{constraint-norm} and orthogonality~\eqref{matrixeq} are satisfied. Using these values for the $\alpha_i$'s therefore defines an orthonormal labeling of $\overline{\mathrm{NO}(J_{12})}$. We now need to find a unit vector $|\Psi\rangle$ such that $\sum_v \left|\langle\Psi|\phi_v\rangle\right|^2>12$; and indeed, with $|\Psi\rangle\defin \frac{1}{\sqrt{220}} \sum_{v} |v\rangle$ we obtain
\[
\sum_{v} \left| \langle \Psi|\phi_{v}\rangle\right|^2 = 220 \left(\sum_{i=0}^3 \sqrt{\frac{|D_i|}{220}} \, \alpha_i\right)^2 = 220\, |\langle \vec{b}|\vec{\alpha}\rangle|^2 = \frac{260}{11}.
\]
This coincides with the Lov\'asz number of $\mathrm{NO}(J_{12})$~\cite[p.~46]{spectra}\footnote{Though it differs from the formula in~\cite[p.~271]{Haemers2}, which would give a Lov\'asz number of $\approx 42$. This formula contains a typo: the `$1$' in the numerator should be an `$11$'.} and is therefore the maximally possible value. For each individual $v$, we have $\left| \langle \Psi|\phi_{v}\rangle\right|^2 = \tfrac{13}{121}$. This concludes our presentation of the scenario $H_A = J_{12}$, and we now turn to $H_B$.

The scenario $H_B$ has two kinds of vertices: first, again the $3$-element subsets of $\{1,\ldots,12\}$, of which there are $220$; second, (unordered) pairs of $3$-element subsets of $\{1,\ldots,12\}$ having one element in common, of which there are $\numprint{11880}$. In total, there are $220+\numprint{11880}=\numprint{12100}$ vertices. The second kind of vertices also define the edges of $H_B$: an edge consists of such a vertex, i.e.~an unordered pair of $3$-element sets, together with the two vertices defined by these $3$-element sets. The probabilistic model $p_B$ assigns a probability of $\tfrac{13}{121}$ to each vertex of the first kind and $\tfrac{95}{121}$ to each vertex of the second kind.

Then as the above considerations show, we have $p_A\in\CE^\infty(H_A)$ and $p_B\in\mathcal{Q}(H_B)$, but $p_A\otimes p_B\not\in\CE^\infty(H_A\otimes H_B)$. This ends our explicit description of our example for Theorem~\ref{CEactivate}. 

Another question---seemingly unrelated---is whether $\CE^\infty(H)$ is convex for every scenario $H$. We now use the results of the previous subsection to show that this is also not always the case. The scenarios $H_A$ and $H_B$ and probabilistic models $p_A$ and $p_B$ are the same as before, or more generally as in Theorem~\ref{CEactivate}.

\begin{thm}
\label{nonconvexthm}
There are contextuality scenarios for which $\CE^\infty(H)$ is not convex.
\end{thm}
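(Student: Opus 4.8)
The plan is to build the non-convex example directly out of the activation construction just completed. Recall that we have scenarios $H_A$ and $H_B$ and probabilistic models $p_A\in\CE^\infty(H_A)$, $p_B\in\mathcal{Q}(H_B)\subseteq\CE^\infty(H_B)$ with $p_A\otimes p_B\notin\CE^1(H_A\otimes H_B)$. The key idea is to form a single scenario in which both $p_A$ and $p_B$ live side by side, so that their \emph{disjoint union} (mixture) is a probabilistic model on that scenario, and then to exhibit two points in $\CE^\infty$ whose convex combination fails $\CE^1$ because taking a convex combination of two probabilistic models on a fixed scenario produces, in the ``squared'' copy, cross terms that reconstruct the forbidden product $p_A\otimes p_B$.

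Concretely, first I would recall (or introduce) the disjoint union $H_A\disjcup H_B$ of contextuality scenarios: $V(H_A\disjcup H_B)=V(H_A)\disjcup V(H_B)$ and $E(H_A\disjcup H_B)=E(H_A)\disjcup E(H_B)$. A probabilistic model on $H_A\disjcup H_B$ is precisely a pair consisting of a probabilistic model on $H_A$ and one on $H_B$; in particular $\mathrm{NO}(H_A\disjcup H_B)=\mathrm{NO}(H_A)+\mathrm{NO}(H_B)$, the graph join's complement, i.e.\ the disjoint union of graphs (no edges between the two parts). Next I would take the scenario $H\defin (H_A\disjcup H_B)\otimes(H_A\disjcup H_B)$ and consider the two probabilistic models $q_1$, which is $p_A$ on the first copy of $H_A$, $0$ elsewhere appropriately extended, placed in the ``$(H_A,H_A)$''-block, and $q_2$, which lives in the ``$(H_B,H_B)$''-block as $p_B\otimes p_B$; more precisely one wants $q_1$ and $q_2$ to be honest probabilistic models on $H$, which forces us to be slightly more careful — a cleaner route is to let $q_1$ and $q_2$ each be a genuine probabilistic model on $(H_A\disjcup H_B)^{\otimes 2}$ obtained by putting $p_A\otimes p_A$ in one corner and $p_B\otimes p_B$ in the other, toggling which summand is ``selected'' in each factor. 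Then $\tfrac12 q_1+\tfrac12 q_2$ has, in its expansion over $\mathrm{NO}(H)^{\boxtimes n}$, a component supported on the product block $V(H_A)\times V(H_B)$ which is a positive multiple of $p_A\otimes p_B$; since $\alpha(\mathrm{NO}(H_A)\boxtimes\mathrm{NO}(H_B),p_A\otimes p_B)>1$ by the activation lemma and Lemma~\ref{LOchar}, monotonicity of $\alpha$ under passing to larger weighted graphs gives $\alpha(\mathrm{NO}(H),\tfrac12 q_1+\tfrac12 q_2)>1$, hence $\tfrac12 q_1+\tfrac12 q_2\notin\CE^1(H)\supseteq\CE^\infty(H)$.

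It remains to check that $q_1,q_2\in\CE^\infty(H)$. Here I would use that $q_1$ is, up to relabeling, the model $p_A\otimes p_A$ on $H_A\otimes H_A$ sitting inside $H=(H_A\disjcup H_B)^{\otimes 2}$ as an induced subscenario, together with the fact that $\CE^\infty$ is preserved under the relevant operations: $p_A\in\CE^\infty(H_A)$ gives $p_A\otimes p_A\in\CE^\infty(H_A\otimes H_A)$ by Lemma~\ref{LOincs} applied to $H_A\otimes H_A$ (equivalently multiplicativity of $\Theta$ in the sense $\Theta(G\boxtimes G,p\otimes p)\le\Theta(G,p)^2$), and extending by zero to the ambient scenario does not create new independent sets violating $\CE$ since the extension of a $\CE^\infty$ model from an induced subscenario stays $\CE^\infty$ — the independence number of the non-orthogonality graph with zero weights on the added vertices is unchanged. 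The same argument handles $q_2$ with $p_B$ in place of $p_A$, using $p_B\in\mathcal{Q}(H_B)\subseteq\CE^\infty(H_B)$.

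The main obstacle I anticipate is bookkeeping: making the two models $q_1$, $q_2$ into \emph{bona fide} probabilistic models on the Foulis--Randall product $(H_A\disjcup H_B)^{\otimes 2}$ — every edge, including the ``signaling-type'' FR edges mixing the two tensor factors, must have total weight $1$ — and verifying that the convex combination's restriction to the $V(H_A)\times V(H_B)$ block really is proportional to $p_A\otimes p_B$ on a genuine independent set of $\mathrm{NO}(H)$, namely the image of Yan's diagonal $D$. I would prove the latter by the same orthogonality argument as in the preceding lemma (Proposition~\ref{NOnoLO}): a pair $((u_A),(v_B))$ and $((u_A'),(v_B'))$ with $u_A\perp u_A'$ in $H_A$ or $v_B\perp v_B'$ in $H_B$ is orthogonal in the product, which is exactly what is needed once one notes that on the disjoint union the relevant orthogonalities are inherited from the two pieces. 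Everything else is routine, so the write-up reduces to fixing notation for $\disjcup$, stating the extension lemma for $\CE^\infty$, and quoting Theorem~\ref{CEactivate} together with Lemma~\ref{LOchar} and monotonicity of $\alpha$.
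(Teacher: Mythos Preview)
Your overall strategy matches the paper's --- combine $H_A$ and $H_B$ into a single scenario so that $p_A$ and $p_B$ live side by side, and then exploit the cross term $p_A\otimes p_B$ --- but the execution has a genuine gap, and the paper resolves it with a different (and simpler) construction.

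The problem is your definition of the disjoint union. With $E(H_A\disjcup H_B)=E(H_A)\disjcup E(H_B)$, a probabilistic model is, as you correctly note, exactly a \emph{pair} of probabilistic models, one on each piece. This kills the argument: you cannot extend $p_A$ by zero on $V(H_B)$ and still get a probabilistic model, because every edge $e_B\in E(H_B)$ would then have total weight $0$. Consequently your $q_1$ and $q_2$, which you want supported on a single ``corner'' of $(H_A\disjcup H_B)^{\otimes 2}$, cannot be probabilistic models either, for the same reason applied to the product edges $e_A\times e_B$. You anticipate this obstacle but do not overcome it; the sketch of ``toggling which summand is selected'' does not yield two distinct models whose mixture contains a nonzero multiple of $p_A\otimes p_B$ on the $V(H_A)\times V(H_B)$ block.

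The paper's fix is to use a different join: take $H$ with $V(H)=V(H_A)\cup V(H_B)$ and declare $e\subseteq V(H)$ to be an edge precisely when $e=e_A\cup e_B$ for some $e_A\in E(H_A)$, $e_B\in E(H_B)$. Now extending $p_A$ by zero on $V(H_B)$ \emph{does} give a probabilistic model $p_A'$ on $H$, since $\sum_{v\in e_A\cup e_B} p_A'(v)=\sum_{v\in e_A} p_A(v)=1$; likewise for $p_B'$. One still has $\mathrm{NO}(H)=\mathrm{NO}(H_A)+\mathrm{NO}(H_B)$, so $p_A',p_B'\in\CE^\infty(H)$ follows immediately from $p_A\in\CE^\infty(H_A)$, $p_B\in\CE^\infty(H_B)$. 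The paper then works on $H$ itself rather than on a squared scenario: set $p:=\tfrac12(p_A'+p_B')$ and show $p\notin\CE^\infty(H)$ by computing
\[
p^{\otimes 2}=\tfrac14\left(p_A'^{\,\otimes 2}+p_B'^{\,\otimes 2}+p_A'\otimes p_B'+p_B'\otimes p_A'\right),
\]
observing that $\mathrm{NO}(H^{\otimes 2})$ decomposes as the disjoint union of the four subgraphs $\mathrm{NO}(H_A\otimes H_A)$, $\mathrm{NO}(H_B\otimes H_B)$, $\mathrm{NO}(H_A\otimes H_B)$, $\mathrm{NO}(H_B\otimes H_A)$, and applying the superadditivity Lemma~\ref{proptheta} together with $\Theta(\mathrm{NO}(H_A\otimes H_B),p_A\otimes p_B)>1$.

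In short: your plan is essentially the right one, but it cannot be carried out with the edge set $E(H_A)\disjcup E(H_B)$; the crucial missing idea is to take edges of the form $e_A\cup e_B$ so that the zero-extensions $p_A'$, $p_B'$ are honest probabilistic models on the same scenario $H$.
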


\begin{proof}
Define the contextuality scenario $H$ to be the disjoint union of $H_A$ and $H_B$ in the sense that $V = V_A \cup V_B$ and $e\subseteq V$ is an edge if there exist $e_A \in E_A$ and $e_B \in E_B$ such that $e = e_A \cup e_B$. Since every vertex of either graph is contained in at least one edge, the corresponding non-orthogonality graphs decompose as $\mathrm{NO}(H) = \mathrm{NO}(H_A) + \mathrm{NO}(H_B)$, where `$+$' stands for the disjoint union of graphs as in Appendix~\ref{appcap}.

With this definition, the probabilistic models $p_A$ and $p_B$ can easily be extended to $H$,
\[
p_{A}'(v) := \begin{cases}
p_A(v) & \textrm{if } v \in V_A,\\
0 & \textrm{if } v \in V_B,\\
\end{cases}
\quad \textrm{ and } \quad
p_{B}'(v) := \begin{cases}
0 & \textrm{if } v \in V_A,\\
p_B(v) & \textrm{if } v \in V_B.\\
\end{cases}
\]
The assumptions $p_A\in\CE^\infty(H_A)$ and $p_B\in\CE^\infty(H_B)$ imply that
\[
p_A' \in \CE^\infty(H), \qquad p_B' \in \CE^\infty(H)
\]
while the assumption $p_A\otimes p_B\not\in\CE^\infty(H_A\otimes H_B)$ means that
\beq
\label{productnotCE}
\quad p_A' \otimes p_B' \notin \CE^\infty(H \otimes H).
\eeq
In particular, from the characterization of $\CE^\infty$ given
by Lemma \ref{LOchar}, it follows that
\[
\Theta(\mathrm{NO}(H\otimes H, p_A'\otimes p_B') >1.
\]
We finally define the probabilistic model $p$ on $H$ obtained as a convex mixture of $p_A'$ and $p_B'$:
$$
p := \frac{1}{2}(p_A' + p_B').
$$
We now proceed to show that $p^{\otimes 2}  \notin \CE^\infty\left(H^{\otimes 2}\right)$, which implies that $ p \notin  \CE^\infty\left(H\right)$ and hence that the set $\CE^\infty(H)$ is not convex.
The probabilistic model $p^{\otimes 2}$  can be written as a convex combination,
\beq
\label{p2cc}
p^{\otimes 2} = \frac{1}{4}  \left( p_A'^{\otimes 2} + p_B'^{\otimes 2} + p_A' \otimes p_B' + p_B' \otimes p_A' \right).
\eeq
As a vertex weighing on $\mathrm{NO}(H^{\otimes 2})$, the four summands of this convex combination are weight functions supported on the four disjoint subgraphs
\[
\mathrm{NO}(H_A\otimes H_A),\quad \mathrm{NO}(H_B\otimes H_B),\quad \mathrm{NO}(H_A\otimes H_B),\quad \mathrm{NO}(H_B\otimes H_A),
\]
in this order. Furthermore, there are no edges between these four subgraphs, so that
\[
\mathrm{NO}(H^{\otimes 2}) = \mathrm{NO}(H_A\otimes H_A) + \mathrm{NO}(H_B\otimes H_B) + \mathrm{NO}(H_A\otimes H_B) + \mathrm{NO}(H_B\otimes H_A).
\]
In order to lower bound the Shannon capacity of~\eqref{p2cc}, we can therefore apply Lemma~\ref{proptheta}, which gives
\[
\Theta\left(\mathrm{NO}(H^{\otimes 2}), p^{\otimes 2}\right) \geq \frac{1}{4} + \frac{1}{4} + \frac{1}{2}\Theta(\mathrm{NO}(H_A\otimes H_B), p_A'\otimes p_B'),
\]
where the first two terms correspond to $\Theta(\mathrm{NO}(H_A^{\otimes 2}),p_A^{\otimes 2})=\Theta(\mathrm{NO}(H_B^{\otimes 2}),p_B^{\otimes 2})=1$ and the last two terms coincide and have been summed up. Our assumption~\eqref{productnotCE} together with the Lemma~\ref{LOchar}, the characterization of $\CE^\infty$ in terms of $\Theta$, we obtain $\Theta(\mathrm{NO}(H^{\otimes 2}),p^{\otimes 2})>1$, from which $p^{\otimes 2}\not\in\CE^\infty(H)$ immediately follows.
\end{proof}

Given the previous explicit construction, it is now very easy to write down an explicit example of this phenomenon. The resulting scenario turns out to have $220+\numprint{12100} = \numprint{12320}$ vertices.

\subsection{\textit{Extended} Consistent Exclusivity
principle}\label{se:ECE}

In the previous subsection, we showed that the set of probabilistic models $\CE^\infty$ is neither convex nor closed under $\otimes$. However, it is natural to believe that the collection of physically realizable probabilistic models should be both convex and closed under $\otimes$. Therefore, if some physically realistic $q\in\CE^\infty(H)$ can be combined with some $p\in\CE^\infty(H)$ by using convex combinations and $\otimes$-products such that the combination is not in $\CE^\infty$, then $p$ itself should be considered to violate the CE principle in a certain extended form. In this section, we propose one way of extending the CE principle, such that the set of probabilistic models that satisfies it is convex and closed under $\otimes$. This extension was somehow already implicit in the work of Yan~\cite{Yan}, who showed that the maximum violation of a noncontextuality inequality given by models that satisfy Extended Consistent Exclusivity (ECE) is the same as the maximum `quantum' violation in the CSW formalism~\cite{CSW}; see also the independent work~\cite{ECE}, where this has been made explicit in a way similar to here.

The natural choice for the `physically realistic' models $q$ is to assume them to be the quantum models, so that we arrive at:

\begin{defn}
\label{ECE}
A probabilistic model $p$ on a contextuality scenario $H$ satisfies the \textit{Extended Consistent Exclusivity} principle (at level $n$) if for all contextuality scenarios $H'$ and $q \in \mathcal{Q}(H')$,
$$ p \otimes q \in \CE^n (H \otimes H').$$
We write $\widetilde{\CE}^n(H)$ for the set of probabilistic models satisfying the Extended Consistent Exclusivity principle at level $n$.
\end{defn}


While this may seem like a reasonable proposal for strengthening the Consistent Exclusivity principle, it is at the same time also a considerable weakening: instead of trying to find one single principle which would single out the quantum models as the physically realistic ones, we have already \emphalt{assumed} quantum models to be physically realistic and propose a principle in order to explain why \emphalt{no other} probabilistic models are physically realistic as well.

So how much does this extension of the Consistent Exclusivity principle help us in detecting non-quantum models as physically unrealistic? In particular, is $\widetilde{\CE}^\infty(H)$ convex and closed under $\otimes$? This result provides the answer:

\begin{thm}
\label{ECEQ1}
All $\widetilde{\CE}^n(H)$, and in particular $\widetilde{\CE}^1(H)$ and $\widetilde{\CE}^\infty(H)$, are equal to $\mathcal{Q}_1(H)$.
\end{thm}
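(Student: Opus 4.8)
The plan is to prove the two inclusions $\mathcal{Q}_1(H)\subseteq\widetilde{\CE}^\infty(H)\subseteq\widetilde{\CE}^n(H)\subseteq\widetilde{\CE}^1(H)$ and $\widetilde{\CE}^1(H)\subseteq\mathcal{Q}_1(H)$, which together force all the sets to coincide (the chain $\widetilde{\CE}^\infty\subseteq\widetilde{\CE}^n\subseteq\widetilde{\CE}^1$ being immediate from the definition, since $\CE^m\subseteq\CE^n$ for $m\geq n$ by Lemma~\ref{LOincs} applied to $H\otimes H'$). First I would establish $\mathcal{Q}_1(H)\subseteq\widetilde{\CE}^\infty(H)$. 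Take $p\in\mathcal{Q}_1(H)$ and any $q\in\mathcal{Q}(H')\subseteq\mathcal{Q}_1(H')$. By Proposition~\ref{Q1tensor}\ref{eqQ1tensor}, $p\otimes q\in\mathcal{Q}_1(H\otimes H')$, and by Proposition~\ref{miguelito} we have $\mathcal{Q}_1(H\otimes H')\subseteq\CE^\infty(H\otimes H')\subseteq\CE^n(H\otimes H')$ for every $n$. Hence $p\otimes q\in\CE^n(H\otimes H')$ for all $n$ and all such $H',q$, so $p\in\widetilde{\CE}^\infty(H)$.

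The substantive direction is $\widetilde{\CE}^1(H)\subseteq\mathcal{Q}_1(H)$, and here the strategy is precisely the Yan-type activation construction used in the proof of Theorem~\ref{CEactivate}, read in reverse. Suppose $p\notin\mathcal{Q}_1(H)$; by Proposition~\ref{Q1vsLov} this means $\vartheta(\mathrm{NO}(H),p)>1$ (we always have $\vartheta(\mathrm{NO}(H),p)\geq\alpha(\mathrm{NO}(H),p)\geq 1$). By Proposition~\ref{thetaalt} there is an orthonormal labeling $v\mapsto|\phi_v\rangle$ of $\overline{\mathrm{NO}(H)}$ and a unit vector $|\Psi\rangle$ with $\sum_{v\in V(H)} p(v)\,|\langle\Psi|\phi_v\rangle|^2>1$. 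Now build $H'$ exactly as the scenario $H_B$ in Section~\ref{se:act}: vertices $V(H)\cup E(\mathrm{NO}(H))$, with an edge $\{u,v,w_e\}$ for each $e=\{u,v\}\in E(\mathrm{NO}(H))$. Lemma~\ref{yanlemma} (applied with this labeling) shows that the assignment $q(v)=|\langle\Psi|\phi_v\rangle|^2$ for $v\in V(H)$, together with the complementary weights on the no-detection vertices, is a quantum model $q\in\mathcal{Q}(H')$. Then the lemma preceding Theorem~\ref{CEactivate} shows $p\otimes q\notin\CE^1(H\otimes H')$: the diagonal $D=\{(v,v):v\in V(H)\}$ is an independent set in $\mathrm{NO}(H\otimes H')$ — for $u\sim v$ in $\mathrm{NO}(H)$ we have $u\perp v$ in $H'$, while for $u\perp v$ in $H$ the pair is orthogonal in the product by Proposition~\ref{NOnoLO} — and $\sum_{v}(p\otimes q)(v,v)=\sum_v p(v)|\langle\Psi|\phi_v\rangle|^2>1$. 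Since we have produced $H'$ and $q\in\mathcal{Q}(H')$ with $p\otimes q\notin\CE^1(H\otimes H')$, we conclude $p\notin\widetilde{\CE}^1(H)$.

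Combining, $\mathcal{Q}_1(H)\subseteq\widetilde{\CE}^\infty(H)\subseteq\widetilde{\CE}^n(H)\subseteq\widetilde{\CE}^1(H)\subseteq\mathcal{Q}_1(H)$, so all these sets are equal. The one point requiring a little care — and the main obstacle I anticipate — is making sure the construction of $H'$ and $q$ is carried out for the given $p$ with the orthonormal labeling associated to $p$ (rather than quoting the example $J_{12}$ verbatim), and checking that Lemma~\ref{yanlemma}'s projection identities and the independence of $D$ go through for an arbitrary base scenario $H$; but all the needed ingredients — Propositions~\ref{Q1vsLov}, \ref{thetaalt}, \ref{NOnoLO}, Lemma~\ref{yanlemma}, and the diagonal argument — are exactly those already assembled in Section~\ref{se:act}, so this is essentially a repackaging rather than new work. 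It is worth remarking, as a sanity check, that the theorem makes the introduction of $\widetilde{\CE}$ somewhat deflationary: ``extended consistent exclusivity'' is just $\mathcal{Q}_1$, the first semidefinite level, in disguise — which is the moral the section is driving at.
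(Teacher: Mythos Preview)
Your proposal is correct and follows essentially the same approach as the paper's proof: the cyclic chain $\mathcal{Q}_1(H)\subseteq\widetilde{\CE}^\infty(H)\subseteq\widetilde{\CE}^n(H)\subseteq\widetilde{\CE}^1(H)\subseteq\mathcal{Q}_1(H)$, with the easy inclusion coming from Propositions~\ref{Q1tensor} and~\ref{miguelito} and the substantive one from the Yan-type activation construction of Section~\ref{se:act}. If anything, you spell out the construction of $H'$ and $q$ more explicitly than the paper does (the paper simply refers back to the proof of Theorem~\ref{CEactivate}), and you correctly note that only the condition $\vartheta(\mathrm{NO}(H),p)>1$ is needed for that direction, not any assumption on $\Theta$.
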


In particular, the properties of $\mathcal{Q}_1$ stated in Proposition~\ref{Q1tensor} imply that all $\widetilde{\CE}^n$ and $\widetilde{\CE}^\infty$ are convex and closed under $\otimes$.

\begin{proof}
The construction presented in the proof of Theorem~\ref{CEactivate} about activation of violations of Consistent Exclusivity shows that $\widetilde{\CE}^1(H) \subseteq  \mathcal{Q}_1(H)$, since any probabilistic model $p\not\in\mathcal{Q}_1(H)$ displays a gap between the Lov\'asz number $\vartheta(\mathrm{NO}(H),p)$ and the Shannon capacity $\Theta(\mathrm{NO}(H),p)$, and any such gap can be exploited to find a scenario $H'$ together with a quantum model $q\in\mathcal{Q}(H')$ such that $p\otimes q\not\in \CE^1(H\otimes H')$.

The inclusion $\mathcal{Q}_1(H)\subseteq\widetilde{\CE}^\infty(H)$ is a consequence of Propositions~\ref{Q1tensor} and~\ref{miguelito}, which show in particular that $\mathcal{Q}_1(H) \otimes \mathcal{Q}(H') \subseteq \mathcal{Q}_1(H\otimes H')\subseteq \CE^\infty(H\otimes H')$ for any $H'$. Together with the trivial inclusion $\widetilde{\CE}^\infty(H)\subseteq\widetilde{\CE}^n(H)\subseteq\widetilde{\CE}^1(H)$, we have therefore shown a cyclic sequence of inclusions,
\[
\mathcal{Q}_1(H)\subseteq\widetilde{\CE}^\infty(H)\subseteq\widetilde{\CE}^n(H)\subseteq\widetilde{\CE}^1(H)\subseteq\mathcal{Q}_1(H),
\]
which forces all these inclusions to be equalities.
\end{proof}

We find it remarkable that as far as the Extended Consistent Exclusivity principle is concerned, it does not matter how many copies of a probabilistic model $p$ we consider. While the original Consistent Exclusivity principle is most powerful when applied to $p^{\otimes n}$ for all $n\in\N$, the extended principle unfolds its full power already at the very first level, and there is no need to consider any further levels.

Theorem~\ref{ECEQ1} makes it very easy to find violations of the ECE principle, since testing membership in $\mathcal{Q}_1(H)$ is a semidefinite program. If a given probabilistic model $p$ is not in $\mathcal{Q}_1(H)$, then the proof of the theorem even provides an explicit $q\in\mathcal{Q}(H')$ for an explicit $H'$ for which $p\otimes q\not\in\CE^1(H\otimes H')$.

These results conclusively delineate the extent to which the ECE principle characterizes the quantum set. Since $\mathcal{Q}_1(H)$ is often quite close to $\mathcal{Q}(H)$, the ECE principle goes a long way in achieving a characterization of the quantum set, but there still remains a gap. Such a gap arises for example in all Bell scenarios, since $\mathcal{Q}_1(B_{n,k,m})$ is strictly larger than $\mathcal{Q}(B_{n,k,m})$ for any non-trivial Bell scenario $B_{n,k,m}$~\cite{NPA}. This is completely opposite to the kind of conclusion one might gather from a cursory reading of papers like~\cite{ECE}, where it is claimed that the ECE principle ``singles out the entire set of quantum correlations''. Again, this different conclusion is due to the different definition of `quantum set' in the CSW formalism~\cite{CSW}, in which the `quantum set' corresponds to our $\mathcal{Q}_1$.

\subsection{Contextuality and perfection}
\label{perfectionsection}

We now study under which conditions on $H$ the classical set $\mathcal{C}(H)$ coincides with $\CE^1(H)$ and is therefore characterized by Consistent Exclusivity. When this is the case, no quantum contextuality is possible in particular.

A graph $G$ is called \emph{perfect} if the chromatic number of any induced subgraph is equal to the clique number of this subgraph \cite{Berge61}. The property of a graph to be perfect is among the most important concepts studied in graph theory.

\begin{thm}
\label{perfection}
If $\mathrm{NO}(H)$ is perfect, then $\mathcal{C}(H)=\CE^1(H)$, although $\mathcal{G}(H)$ can still be bigger.
\end{thm}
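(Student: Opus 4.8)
The plan is to prove the two inclusions $\mathcal{C}(H)\subseteq\CE^1(H)$ and $\CE^1(H)\subseteq\mathcal{C}(H)$ separately; only the second uses perfection. The first inclusion is immediate: a classical model lies in $\mathcal{Q}(H)\subseteq\CE^1(H)$ by Propositions~\ref{Qprop}\ref{CsubQ} and the inclusion $\mathcal{Q}(H)\subseteq\CE^1(H)$ already established. So the work is all in showing that if $\mathrm{NO}(H)$ is perfect and $p\in\CE^1(H)$, then $p\in\mathcal{C}(H)$.

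First I would translate the hypothesis $p\in\CE^1(H)$ into the graph-invariant statement $\alpha(\mathrm{NO}(H),p)\le 1$ using Proposition~\ref{SPvsFIN}, and recall from Proposition~\ref{Cvsfpn} that $p\in\mathcal{C}(H)$ is equivalent to $\alpha^*(\mathrm{NO}(H),p)=1$ (equivalently $\le 1$). Since in general $\alpha(G,p)\le\alpha^*(G,p)$ (the fractional packing number is a relaxation of the weighted independence number), it therefore suffices to show that for a perfect graph $G$ one has $\alpha^*(G,p)=\alpha(G,p)$ for every vertex weighting $p$, i.e.\ the weighted fractional packing LP has no integrality gap. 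This is precisely the content of the weighted perfect-graph duality theorem: for a perfect graph $G$, the clique-cover and fractional clique-cover (equivalently, by LP duality, the independence and fractional packing) numbers agree for all nonnegative integer weights, and then by homogeneity and continuity of both sides in $p$, for all nonnegative real weights. I would invoke this via the graph-theory appendix (Appendix~\ref{appcap}), where the weighted invariants and the consequences of perfection should be recorded; if the statement there is only for integer weights I would add the scaling/density argument explicitly. Chaining these equivalences gives $\alpha^*(\mathrm{NO}(H),p)=\alpha(\mathrm{NO}(H),p)\le 1$, hence $p\in\mathcal{C}(H)$, completing the nontrivial inclusion.

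Finally, for the clause ``although $\mathcal{G}(H)$ can still be bigger'' I would exhibit a single scenario with $\mathrm{NO}(H)$ perfect but $\mathcal{C}(H)\subsetneq\mathcal{G}(H)$: the scenario $B_{1,k,m}$ of Figure~\ref{singleparty} with $k\ge 2$ is not adequate since there $\mathcal{C}=\mathcal{G}$; instead a scenario whose non-orthogonality graph is a disjoint union of cliques but which has a nontrivial polytope of probabilistic models works — for instance take two disjoint copies of $B_{1,1,2}$ glued so that $\mathrm{NO}(H)$ is $2K_2$ (bipartite, hence perfect) while $\mathcal{G}(H)$ is a genuine interval of models and every extreme point is deterministic, so $\mathcal{C}(H)=\mathcal{G}(H)$ — wait, that again collapses. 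The cleanest choice is the scenario $H_0$-type or any scenario with a perfect $\mathrm{NO}(H)$ admitting a non-deterministic probabilistic model that is nonetheless a convex combination of deterministic ones but with $\mathcal{G}(H)$ strictly containing more; concretely, the single edge with $m\ge 2$ outcomes has $\mathrm{NO}=\overline{K_m}$ (edgeless, perfect), $\mathcal{C}(H)=\mathcal{G}(H)=\Delta^{m-1}$, so still no gap. Since producing a valid witness here requires a scenario where $\mathcal{G}(H)\ne\CE^1(H)$, I expect the main obstacle to be locating a perfect-$\mathrm{NO}$ scenario with $\CE^1(H)=\mathcal{C}(H)\subsetneq\mathcal{G}(H)$: the natural candidates are scenarios containing an ``odd-free'' but non-exact-transversal structure, and I would search among small Greechie-type diagrams or among $B_{1,k,m}$-like scenarios with an extra overlapping edge arranged to keep $\mathrm{NO}(H)$ chordal; I would verify perfection via chordality and the gap $\mathcal{C}(H)\subsetneq\mathcal{G}(H)$ by exhibiting one explicit non-classical probabilistic model, and cite this as a forward reference to Section~\ref{examples} if a clean small example is deferred there.
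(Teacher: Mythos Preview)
Your argument for the main equality $\mathcal{C}(H)=\CE^1(H)$ is correct and takes a somewhat different route from the paper. The paper argues directly: it passes to the complement $\overline{\mathrm{NO}(H)}$ via the weak perfect graph theorem, interprets $p\in\CE^1(H)$ as a point of the clique-constrained polytope of $\overline{\mathrm{NO}(H)}$, and invokes the polytope characterization of perfection (\cite[Thm.~31]{Knuth}) to write $p$ as a convex combination of indicator functions $\mathbbm{1}_{U_i}$ of cliques in $\mathrm{NO}(H)$; it then checks explicitly that each $\mathbbm{1}_{U_i}$ is a deterministic model. Your approach is more modular: you use the already established characterizations $p\in\CE^1(H)\Leftrightarrow\alpha(\mathrm{NO}(H),p)\le 1$ (Proposition~\ref{SPvsFIN}) and $p\in\mathcal{C}(H)\Leftrightarrow\alpha^*(\mathrm{NO}(H),p)\le 1$ (Proposition~\ref{Cvsfpn}), and then only need the graph-theoretic fact that $\alpha(G,p)=\alpha^*(G,p)$ for perfect $G$. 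This is clean and economical; the paper's route is slightly more self-contained in that it reproves part of Proposition~\ref{Cvsfpn} inside the argument.

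There is, however, a genuine gap in the second part. Your search for a witness of $\mathcal{C}(H)=\CE^1(H)\subsetneq\mathcal{G}(H)$ goes in circles, and none of the candidates you list actually works. The example you are missing is the triangle scenario $\Delta$ of Figure~\ref{triscen}: every pair of vertices shares an edge, so $\mathrm{NO}(\Delta)$ is the edgeless graph on three vertices, which is trivially perfect; the unique probabilistic model assigns $\tfrac{1}{2}$ to each vertex, so $\mathcal{G}(\Delta)\neq\emptyset$; but the full vertex set is an independent set in $\mathrm{NO}(\Delta)$ with total weight $\tfrac{3}{2}>1$, so $\CE^1(\Delta)=\emptyset=\mathcal{C}(\Delta)$. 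The key point you overlooked is that the gap between $\CE^1(H)$ and $\mathcal{G}(H)$ can be exhibited most simply by making $\mathcal{C}(H)$ (and hence $\CE^1(H)$) \emph{empty}, rather than by trying to find a scenario with a rich classical polytope sitting strictly inside a richer $\mathcal{G}(H)$.
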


\begin{proof}
The weak perfect graph theorem of Lov{\'a}sz~\cite{LovPG} states that a graph is perfect if and only if its complement is. Therefore we can as well assume the complement $\overline{\mathrm{NO}(H)}$ to be perfect. A probabilistic model $p\in \CE^1(H)$ can be interpreted as vertex weights $p(v)$ for $v\in V(H)$ with $\sum_{v\in C} p(v)\leq 1$ for every clique $C$ in $\overline{\mathrm{NO}(H)}$. Then, perfection of this complement guarantees~\cite[Thm.~31]{Knuth} that $p$ is a convex combination of indicator functions of independent sets in $\overline{\mathrm{NO}(H)}$, i.e.~there are cliques $U_1,\ldots,U_k$ in $\mathrm{NO}(H)$ and coefficients $\lambda_i\in[0,1]$ with $\sum_i \lambda_i=1$ such that 
\beq
\label{pconvdecomp}
p = \sum_{i=1}^k \lambda_i \mathbbm{1}_{U_i}.
\eeq
We now claim that every $\mathbbm{1}_{U_i}$ is a deterministic model. Since its weights clearly take values in $\{0,1\}$, it is enough to verify the normalization condition $\sum_{v\in e} \mathbbm{1}_{U_i}(v) = 1$ for all $e\in E(H)$. But this follows from~(\ref{pconvdecomp}) together with $\sum_{v\in e} p(v)=1$.

In order to see that $\mathcal{G}(H)$ can still be bigger, consider again the triangle scenario $\Delta$ depicted in Figure~\ref{triscen}. There, $\mathcal{C}(\Delta) = \CE^1(\Delta) = \emptyset$, although $\Delta$ allows a probabilistic model. The graph $\mathrm{NO}(\Delta)$ is the graph on three vertices with no edges, and therefore trivially perfect.
\end{proof}

The converse to Theorem~\ref{perfection} is not true:

\begin{figure}
\begin{center}
\begin{tikzpicture}[scale=1.4]
\node[draw,shape=circle,fill=gray,scale=.5] (v1) at (0:.07) {} ;
\node[left of=v1,node distance=2mm] {$v$};
\foreach \n in {0,.2,...,1}
{
	\node[draw,shape=circle,fill,scale=.5] at (\n*360+18:1.4) {} ;
	\definecolor{currentcolor}{hsb}{\n,1,1} ;
	\draw[thick,rotate=\n*360+18,currentcolor] plot [smooth cycle,tension=.3] coordinates { (36:-.2) (-7:1.5) (5:1.6) (67:1.6) (79:1.5) } ;
}
\node[draw,shape=circle,fill=gray,scale=.5] (v2) at (3,0) {} ;
\node[draw,shape=circle,fill,scale=.5] (x) at (4,0) {} ;
\node[left of=v2,node distance=2mm] {$v$};
\node[draw,shape=circle,fill,scale=.5] at (5,-1) {} ;
\node[draw,shape=circle,fill,scale=.5] at (5,1) {} ;
\node[draw,shape=circle,fill,scale=.5] at (6,-1) {} ;
\node[draw,shape=circle,fill,scale=.5] at (6,1) {} ;
\draw[thick,brown] (3.5,0) ellipse (1cm and .3cm) ;
\draw[thick,brown] (5.5,1) ellipse (1cm and .3cm) ;
\draw[thick,brown] (5.5,-1) ellipse (1cm and .3cm) ;
\draw[thick,brown] (6,0) ellipse (.3cm and 1.2cm) ;
\draw[thick,brown] plot [smooth cycle,tension=.3] coordinates { (3.6,0) (5.1,1.2) (5.1,-1.2) } ;
\end{tikzpicture}
\end{center}
\caption{A scenario $H_0$ with $\mathcal{G}(H_0) = \mathcal{C}(H_0)$, although $\mathrm{NO}(H_0)$ is not perfect. The two nodes labelled $v$ represent the same vertex.}
\label{stupidexample}
\end{figure}
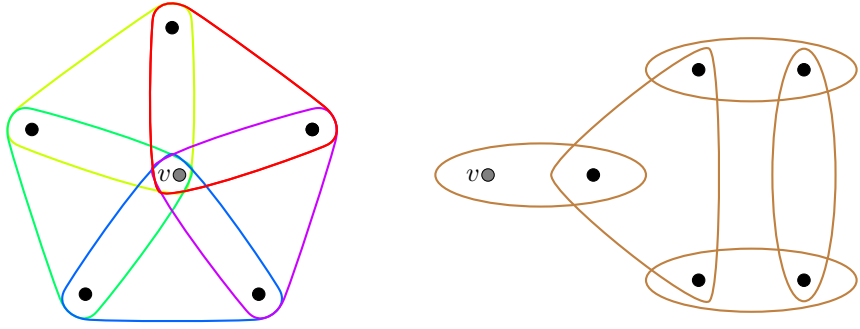

\begin{prop}
For the scenario depicted in Figure~\ref{stupidexample}, $\mathcal{G}(H_0) = \mathcal{C}(H_0)$. However, $\mathrm{NO}(H_0)$ is not perfect.
\end{prop}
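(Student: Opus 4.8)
The plan is to compute the polytope $\mathcal{G}(H_0)$ outright, observe that it is a segment both of whose endpoints are deterministic models, and then exhibit an induced $C_5$ inside $\mathrm{NO}(H_0)$ to rule out perfection. First I would read off the hypergraph from Figure~\ref{stupidexample}: after identifying the two vertices labelled $v$, the scenario $H_0$ is the union of a left piece $H_L$ with vertices $v,p_1,\dots,p_5$ and edges $e_i=\{v,p_i,p_{i+1}\}$ (indices mod $5$) and a right piece $H_R$ with vertices $v,x,a,b,c,d$ and edges $\{v,x\}$, $\{x,a,b\}$, $\{a,c\}$, $\{b,d\}$, $\{c,d\}$, the two pieces sharing only the vertex $v$.

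Next I would pin down $\mathcal{G}(H_0)$. Subtracting the normalization equations of $e_i$ and $e_{i+1}$ gives $p(p_i)=p(p_{i+2})$ for all $i$; since doubling generates $\Z/5$, all five $p(p_i)$ equal a common value $q$, and then $p(v)=1-2q$. On $H_R$, the edge $\{v,x\}$ forces $p(x)=2q$; the edge $\{x,a,b\}$ then forces $p(a)+p(b)=1-2q$; the edges $\{a,c\}$ and $\{b,d\}$ give $p(c)=1-p(a)$ and $p(d)=1-p(b)$, hence $p(c)+p(d)=2-\bigl(p(a)+p(b)\bigr)=1+2q$, and comparing with the normalization of $\{c,d\}$ yields $q=0$. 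So every probabilistic model has $p(v)=1$, $p(x)=p(p_1)=\dots=p(p_5)=0$, $p(a)=p(d)=t$, $p(b)=p(c)=1-t$, and conversely any such assignment with $t\in[0,1]$ satisfies all normalization equations and all non-negativity constraints. Thus $\mathcal{G}(H_0)$ is the segment with endpoints $t=0$ and $t=1$; both endpoints take only values in $\{0,1\}$ and so are deterministic models, whence $\mathcal{G}(H_0)=\mathcal{C}(H_0)$ by convexity of $\mathcal{C}(H_0)$.

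For non-perfection I would note that the only edges of $H_0$ meeting $\{p_1,\dots,p_5\}$ in two vertices are the $e_i$, so for $i\neq j$ one has $p_i\perp p_j$ exactly when $p_i$ and $p_j$ are consecutive on the pentagon; therefore the subgraph of $\mathrm{NO}(H_0)$ induced on $\{p_1,\dots,p_5\}$ joins each $p_i$ to its two non-neighbours on the pentagon, which is again a $5$-cycle $C_5$. Since $C_5$ has clique number $2$ and chromatic number $3$ it is not perfect, and perfection passes to induced subgraphs, so $\mathrm{NO}(H_0)$ is not perfect. There is no genuine obstacle here: the only work is the bookkeeping of the linear system, and the content of the example is entirely in its \emph{design} --- the left piece is by itself contextual (its sole deterministic model is $p(v)=1$, $p(p_i)=0$, whereas $\mathcal{G}(H_L)$ is the whole segment $p(v)=1-2q$, $p(p_i)=q$ with $q\in[0,\tfrac12]$), and the right piece is rigged solely to force $q=0$, thereby killing the contextual directions while leaving the non-perfect induced $C_5$ in $\mathrm{NO}(H_0)$ untouched.
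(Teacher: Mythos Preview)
Your argument is correct. The non-perfection part is essentially identical to the paper's (the paper phrases it as an induced $C_5$ in the complement $\overline{\mathrm{NO}(H_0)}$, but since $C_5$ is self-complementary this is the same pentagon you found in $\mathrm{NO}(H_0)$). For the equality $\mathcal{G}(H_0)=\mathcal{C}(H_0)$, the two proofs share the crucial ingredient---the right-hand gadget forces $p(v)=1$ and hence kills the pentagon weights---but then diverge: you simply solve the remaining linear system, obtain the one-parameter segment, and observe that its two endpoints are deterministic. The paper instead reduces to the right subscenario $H_R$ and invokes the machinery of the preceding theorem: it notes that every maximal independent set in $\mathrm{NO}(H_R)$ is already an edge (so $\CE^1(H_R)=\mathcal{G}(H_R)$) and that $\mathrm{NO}(H_R)$ is perfect (so $\mathcal{C}(H_R)=\CE^1(H_R)$). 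Your route is more self-contained and arguably cleaner for a polytope that is visibly one-dimensional; the paper's route has the virtue of making explicit why this example sits exactly at the boundary of Theorem~\ref{perfection}---the right half is perfect and does all the work, while the imperfect left half is rendered probabilistically inert. A minor streamlining: you derive $q=0$ by threading through both halves, but in fact the right half alone already forces $p(x)=0$ (from $p(a)+p(b)=1$ via the last three edges) and hence $p(v)=1$, after which the left equations immediately give $q=0$.
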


\begin{proof}
$\mathrm{NO}(H_0)$ is not perfect since its complement $\overline{\mathrm{NO}(H_0)}$ contains the pentagon $\pentagon$ as an induced subgraph in the left part. The pentagon has clique number $2$, but chromatic number $3$.

On the other hand, every probabilistic model $p$ on $H_0$ is guaranteed to satisfy $p(v)=1$ due to the structure on the right. Hence, $p(u) = 0$ for all $u$ in the pentagon. Therefore, both $\mathcal{G}(H_0)$ and $\mathcal{C}(H_0)$ can be identified with their counterparts for the right part $H_R$ of Figure~\ref{stupidexample}. Since every maximal independent set in $\mathrm{NO}(H_R)$ is itself an edge, we get $\CE^1(H_R)=\mathcal{G}(H_R)$, and since $\mathrm{NO}(H_R)$ is perfect, we have $\mathcal{C}(H_R)=\CE^1(H_R)$.
\end{proof}

Forcing the vanishing of the weights in the pentagon may seem like a cheap trick. However, we do not know of any natural combinatorial condition which one could impose on a contextuality scenario in order to exclude such pathological behavior of $\mathcal{G}(H)$. In particular, the proof of Shultz's Theorem~\ref{shultzthm} uses similar `forcing' ideas~\cite{Shultz}. See Proposition~\ref{APQ1} for a slightly less artificial example of a scenario $\mathrm{AP}_4$ with $\mathcal{Q}_1(\mathrm{AP}_4) = \CE^1(\mathrm{AP}_4)$, although $\mathrm{NO}(\mathrm{AP}_4)$ is not perfect.

There is a deep result from graph theory which we can use to deduce further results on the conditions which a scenario has to satisfy in order for (quantum) contextuality to exist:

\begin{thm}[Strong perfect graph theorem~\cite{CRST}]
A graph $G$ is perfect if and only if neither $G$ nor $\overline{G}$ contains an induced subgraph which is a cycle of odd length $\geq 5$.
\end{thm}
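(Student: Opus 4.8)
The plan splits along the two implications. The \emph{necessity} direction I would dispatch first, and it needs nothing from~\cite{CRST}: an odd cycle $C_{2k+1}$ with $k\geq 2$ has clique number $2$ and chromatic number $3$, hence is not perfect, and by the weak perfect graph theorem of Lov\'asz~\cite{LovPG} its complement $\overline{C_{2k+1}}$ is not perfect either. Since perfection is hereditary---any induced subgraph of an induced subgraph of $G$ is itself an induced subgraph of $G$---a perfect graph can contain no induced odd hole $C_{2k+1}$ and no induced odd antihole $\overline{C_{2k+1}}$. That settles one direction with essentially no work.

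For the \emph{sufficiency} direction, the statement to prove is that every \emphalt{Berge graph}---a graph with no induced odd hole and no induced odd antihole---is perfect, and here I would follow the strategy of Chudnovsky, Robertson, Seymour and Thomas~\cite{CRST}. First I would establish a structural decomposition theorem: every Berge graph $G$ is either \emphalt{basic}, meaning it lies in one of the classes consisting of bipartite graphs, line graphs of bipartite graphs, double split graphs, or the complements of these, or else $G$ admits one of a short list of decompositions, namely a $2$-join, the complement of a $2$-join, or a balanced skew partition. (One could try to carry along homogeneous-pair decompositions as in the earlier approaches of Chv\'atal, Sbihi, Conforti--Cornu\'ejols--Vu\v{s}kovi\'c and others, but these turn out to be redundant.) Second, I would verify directly that each of the basic classes is perfect. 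Third, I would argue that a minimal imperfect Berge graph---a Berge graph that is not perfect but all of whose proper induced subgraphs are---cannot admit any of the listed decompositions, since in each case perfection of the two smaller pieces can be glued back to give perfection of the whole. Combining the three steps, a minimal imperfect Berge graph would have to be basic, contradicting the second step; hence no imperfect Berge graph exists.

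The main obstacle is unambiguously the decomposition theorem of the first step: its proof in~\cite{CRST} is a long and intricate induction on the configurations of holes and antiholes in $G$, occupying well over a hundred pages, and it is what makes the strong perfect graph theorem a landmark rather than a routine exercise. By comparison, the second and third steps---perfection of the basic classes, and stability of perfection under $2$-joins and balanced skew partitions---are comparatively short and were understood well before the full theorem was available. For the purposes of this paper we therefore simply cite~\cite{CRST} and use the theorem as a black box, exactly as in Section~\ref{perfectionsection}; what matters here is the resulting dichotomy, namely that $\mathrm{NO}(H)$ can fail to be perfect---and hence $H$ can support (quantum) contextuality with $\mathcal{C}(H)\subsetneq\CE^1(H)$---only when $\mathrm{NO}(H)$ or its complement contains an induced odd cycle of length at least $5$.
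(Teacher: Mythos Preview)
Your proposal is correct and matches the paper's treatment: the paper does not prove this theorem at all but simply cites~\cite{CRST} as a black box, exactly as you note in your final paragraph. Your sketch of the necessity direction and of the CRST decomposition strategy is accurate supplementary context, but none of it appears in the paper itself.
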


In combination with Theorem~\ref{perfection}, we obtain:

\begin{cor}
\label{contextualoddcycle}
If neither $\mathrm{NO}(H)$ nor $\overline{\mathrm{NO}(H)}$ contains an odd cycle of length $\geq 5$ as an induced subgraph, then $\mathcal{C}(H)=\mathcal{Q}(H)=\CE^1(H)$.
\end{cor}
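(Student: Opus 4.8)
The plan is to combine the Strong Perfect Graph Theorem with Theorem~\ref{perfection} in the most direct way possible. First I would observe that the hypothesis---that neither $\mathrm{NO}(H)$ nor $\overline{\mathrm{NO}(H)}$ contains an induced odd cycle of length $\geq 5$---is precisely the characterization of perfect graphs given by the Strong Perfect Graph Theorem, so $\mathrm{NO}(H)$ is a perfect graph. (One should be slightly careful about terminology: an ``odd hole'' in the SPGT is an induced odd cycle of length $\geq 5$, and an ``odd antihole'' is the complement of such, which is the same as an induced odd cycle of length $\geq 5$ in the complement graph; thus the hypothesis stated here is literally ``$\mathrm{NO}(H)$ has no odd hole and no odd antihole'', which is exactly ``$\mathrm{NO}(H)$ is perfect''.)

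Once $\mathrm{NO}(H)$ is known to be perfect, Theorem~\ref{perfection} immediately gives $\mathcal{C}(H) = \CE^1(H)$. To close the chain of equalities, I would invoke the inclusions already established in Figure~\ref{chain-inc}, namely $\mathcal{C}(H) \subseteq \mathcal{Q}(H) \subseteq \mathcal{Q}_1(H) \subseteq \CE^\infty(H) \subseteq \CE^1(H)$ (the relevant pieces being Proposition~\ref{Qprop}\ref{CsubQ}, Proposition~\ref{QinQ1} combined with $\mathcal{Q}\subseteq\mathcal{Q}_1$, and Lemma~\ref{LOincs}). Since the two endpoints $\mathcal{C}(H)$ and $\CE^1(H)$ of this chain coincide by Theorem~\ref{perfection}, every set sandwiched in between must coincide with them as well; in particular $\mathcal{C}(H)=\mathcal{Q}(H)=\CE^1(H)$, which is the assertion. (One could even add $\mathcal{Q}_1(H)$ and $\CE^\infty(H)$ to the list of equal sets, but the corollary only claims the three.)

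There is essentially no obstacle here: the corollary is a formal consequence of two previously proven results, and the only thing requiring a moment's thought is matching the graph-theoretic vocabulary (``odd hole''/``odd antihole'' versus ``induced odd cycle of length $\geq 5$ in $G$ or in $\overline{G}$'') so that the hypothesis is correctly recognized as perfection of $\mathrm{NO}(H)$. I would write the proof in two sentences: one applying the Strong Perfect Graph Theorem to conclude $\mathrm{NO}(H)$ is perfect, and one applying Theorem~\ref{perfection} together with the inclusion chain of Figure~\ref{chain-inc} to squeeze $\mathcal{Q}(H)$ between $\mathcal{C}(H)$ and $\CE^1(H)$.

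Concretely, the proof reads: By the Strong Perfect Graph Theorem, the hypothesis says exactly that $\mathrm{NO}(H)$ is perfect. Theorem~\ref{perfection} then gives $\mathcal{C}(H)=\CE^1(H)$, and since $\mathcal{C}(H)\subseteq\mathcal{Q}(H)\subseteq\CE^1(H)$ by Proposition~\ref{Qprop}\ref{CsubQ} and the inclusions of Figure~\ref{chain-inc}, all three sets coincide.
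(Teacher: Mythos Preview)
Your proposal is correct and matches the paper's approach exactly: the paper simply states the corollary as an immediate combination of the Strong Perfect Graph Theorem with Theorem~\ref{perfection}, and your write-up spells out precisely this, including the sandwiching of $\mathcal{Q}(H)$ via the inclusion chain.
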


In this sense, every (quantum) contextuality proof must rely on a `cycle-like' contradiction as it appears in the Klyachko-Can-Binicio{\v{g}}lu-Shumovsky scenario (see~\cite{KCBS} and Section~\ref{circularex}), or on an `anti-cycle-like' contradiction. Within the CSW framework~\cite{CSW}, this observation is due to~\cite{CDLP}, where the anti-cycle case has been studied in a bit more detail.

\newpage
\section{\textbf{Complexity of various decision problems}}
\label{complexity}

We now study the computational complexity of various decision problems associated to contextuality scenarios. 

\subsection{Deciding existence of probabilistic models and classical models}

The most basic decision problem about contextuality scenarios is asking whether a given $H$ admits a probabilistic model or not:

\decprob{ALLOWS\_GENERAL}{A contextuality scenario $H$}{$\mathcal{G}(H)\neq\emptyset$}

Recall that there are indeed contextuality scenarios without any probabilistic models, for example the one depicted in Figure~\ref{empty}. Determining the complexity of \texttt{ALLOWS\_GENERAL} is quite simple:

\begin{prop}
$\mathtt{ALLOWS\_GENERAL}$ is in $\mathbf{P}$.
\end{prop}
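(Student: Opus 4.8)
The plan is to show that deciding whether $\mathcal{G}(H)\neq\emptyset$ amounts to checking feasibility of a system of linear equations and inequalities with rational coefficients, which is a linear programming feasibility problem and hence solvable in polynomial time.

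First I would observe that by Definition~\ref{defnpm}, $\mathcal{G}(H)$ is precisely the solution set of the finite system consisting of the $|E(H)|$ equations $\sum_{v\in e}p(v)=1$ for $e\in E(H)$ together with the $|V(H)|$ inequalities $p(v)\geq 0$ for $v\in V(H)$ (the upper bounds $p(v)\le 1$ are then automatic from normalization and nonnegativity, so they need not even be included). All coefficients here are $0$ or $1$, so the input size of this system is polynomial in the size of the encoding of $H$. Thus \texttt{ALLOWS\_GENERAL} reduces to: \emph{is the polyhedron $\{p\in\R^{V(H)} : p\ge 0,\ \sum_{v\in e}p(v)=1\ \forall e\}$ nonempty?}

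Next I would invoke the classical fact that linear programming feasibility over the rationals is in $\mathbf{P}$: by the ellipsoid method~\cite{GLS} (or interior-point methods), one can decide in time polynomial in the bit-length of the input whether a system $Ax=b$, $x\ge 0$ with rational $A,b$ has a solution. Applying this with $A$ the incidence matrix of $H$ (rows indexed by edges, columns by vertices) and $b$ the all-ones vector gives the result. One could alternatively set up an auxiliary linear program minimizing, say, $\sum_v p(v)$ subject to the same constraints, or introducing slack variables, but the bare feasibility formulation already suffices.

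I do not anticipate any serious obstacle here; the only point requiring a modicum of care is confirming that the reduction is genuinely polynomial, i.e.\ that the incidence matrix and right-hand side have size polynomial in the natural encoding of $H$ (this holds since $H$ is finite and each edge is a subset of $V(H)$), and citing an appropriate source for the polynomial-time solvability of rational LP feasibility. Everything else is immediate from Definition~\ref{defnpm}.
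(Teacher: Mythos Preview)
Your proposal is correct and takes essentially the same approach as the paper, which simply states in one line that determining whether $\mathcal{G}(H)\neq\emptyset$ is a linear program. Your version is just a more explicit elaboration of this, spelling out the system of constraints and invoking the polynomial-time solvability of rational LP feasibility.
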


\begin{proof}
Determining whether $\mathcal{G}(H)\neq\emptyset$ is a linear program.
\end{proof}

Now we move on to the analogous question about classical models:

\decprob{ALLOWS\_CLASSICAL}{A contextuality scenario $H$}{$\mathcal{C}(H)\neq\emptyset$}

A positive answer to an instance of this problem obviously requires a positive answer to $\mathtt{ALLOWS\_GENERAL}$, since any classical model is in particular a probabilistic model.

\begin{prop}
$\mathtt{ALLOWS\_CLASSICAL}$ is $\mathbf{NP}$-complete.
\end{prop}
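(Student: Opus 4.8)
The statement is that \texttt{ALLOWS\_CLASSICAL} is $\mathbf{NP}$-complete, so the proof splits into two parts: membership in $\mathbf{NP}$, and $\mathbf{NP}$-hardness. Membership in $\mathbf{NP}$ is the easy direction. By Remark~\ref{transversals}, $\mathcal{C}(H)\neq\emptyset$ if and only if $H$ has an exact transversal, i.e.\ a set $V_1\subseteq V(H)$ meeting every edge in exactly one vertex; equivalently, $H$ admits a deterministic model. So a certificate is simply the subset $V_1$ (equivalently, the deterministic model $p$), which has size polynomial in $|V(H)|$, and verifying that $|V_1\cap e|=1$ for every $e\in E(H)$ takes time polynomial in the size of $H$. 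Hence \texttt{ALLOWS\_CLASSICAL} $\in\mathbf{NP}$.

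**The hard direction: $\mathbf{NP}$-hardness.** Here I would reduce from a known $\mathbf{NP}$-complete problem. The natural choice is \texttt{EXACT COVER BY 3-SETS} (\texttt{X3C}), or more directly the \texttt{EXACT HITTING SET} / \texttt{EXACT TRANSVERSAL} problem, which asks exactly: given a hypergraph, does it have a set of vertices meeting every hyperedge in exactly one vertex? By Remark~\ref{transversals} this is \emph{verbatim} the question $\mathcal{C}(H)\neq\emptyset$, so if \texttt{EXACT TRANSVERSAL} is already known to be $\mathbf{NP}$-complete (it is --- it is essentially a reformulation of \texttt{X3C} and is listed as such in the literature, e.g.\ in work on exact transversal hypergraphs such as that of Eiter cited in the paper), the reduction is the identity map and we are done. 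To be safe and self-contained, though, I would spell out a reduction from \texttt{X3C}: given a ground set $U=\{u_1,\ldots,u_{3q}\}$ and a family $S_1,\ldots,S_m$ of $3$-element subsets of $U$, build a contextuality scenario whose deterministic models correspond to exact covers. One clean way: take $V(H)$ to be the set $\{S_1,\ldots,S_m\}$ of available triples, and for each element $u_j\in U$ introduce an edge $e_j \defin \{\,S_i : u_j\in S_i\,\}$. An exact transversal of this hypergraph picks, for each $j$, exactly one triple containing $u_j$; the normalization $|V_1\cap e_j|=1$ for all $j$ says precisely that the chosen triples cover each element exactly once, i.e.\ form an exact cover. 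Thus $\mathcal{C}(H)\neq\emptyset$ iff the \texttt{X3C} instance is a yes-instance, and $H$ is clearly computable from the instance in polynomial time. Note the union condition $\bigcup_e e = V(H)$ of Definition~\ref{defncs} holds as long as every triple contains some element (true), and we do not need the clutter property, which the paper explicitly allows us to drop.

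**Expected main obstacle.** There is essentially no deep obstacle; the only thing that requires care is making sure the chosen reduction target is genuinely $\mathbf{NP}$-complete and that the gadget construction faithfully encodes \emph{exact} (not merely \emph{some}) transversals. The subtlety, if any, is that in the construction above a triple $S_i$ that is \emph{not} chosen sits in edges $e_j$ with $|V_1\cap e_j|=1$ already, so there is no hidden constraint forcing it in; and a triple that \emph{is} chosen lies in three edges $e_j$, each of which must then contain no other chosen triple --- which is exactly disjointness of the cover. One should double-check degenerate cases (an element $u_j$ in no triple makes $e_j=\emptyset$, forcing $\mathcal{G}(H)=\emptyset$ and hence a no-instance, which is consistent since no exact cover exists either). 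I would also remark, for the reader, that this fits the general pattern: \texttt{ALLOWS\_GENERAL} is a linear feasibility problem and lies in $\mathbf{P}$, whereas demanding integral (deterministic) solutions makes the problem combinatorially hard, exactly as one expects from the LP-versus-ILP dichotomy.
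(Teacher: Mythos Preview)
Your proof is correct. Both directions are sound: membership in $\mathbf{NP}$ via the exact-transversal certificate is exactly right, and your reduction from \texttt{X3C} is clean and correct (the verification that exact transversals of your hypergraph correspond bijectively to exact covers is straightforward, and you have handled the corner cases).

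The paper takes a different route for hardness. It first remarks that the problem is an instance of a Boolean satisfiability problem of a specific syntactic form and invokes Schaefer's dichotomy theorem; it then gives an explicit reduction from \texttt{3SAT}, building a scenario with literal vertices $v_{x_j},v_{\lnot x_j}$ together with seven auxiliary vertices per clause (one for each satisfying assignment of that clause), and three families of edges enforcing respectively that each variable gets a unique truth value, that each clause selects one satisfying assignment, and that the selected assignment is consistent with the literal values. Your approach is more direct: by leaning on Remark~\ref{transversals} you recognize that $\mathcal{C}(H)\neq\emptyset$ \emph{is} the exact-transversal problem, so reducing from \texttt{X3C} amounts to little more than dualizing the hypergraph. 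The paper's construction, in exchange, is self-contained from \texttt{3SAT} and illustrates explicitly how arbitrary Boolean constraints can be encoded as edge collections---a point it alludes to again later when discussing semidefinite relaxations of combinatorial problems.
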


\begin{proof}
\texttt{ALLOWS\_CLASSICAL} can be identified with the class of Boolean satisfiability problems which are disjunctions of clauses, where each clause states that exactly one variable in a certain subset of all variables needs to have the value \texttt{TRUE}. Given this, $\mathbf{NP}$-completeness follows from Schaefer's dichotomy theorem~\cite{Schaefer}. Notwithstanding this argument, we now offer an explicit proof.

First, \texttt{ALLOWS\_CLASSICAL} is clearly in $\mathbf{NP}$: any explicit deterministic model $p:V(H)\to\{0,1\}$ witnesses $\mathcal{C}(H)\neq\emptyset$, and verifying that $p$ is a deterministic model can be done in linear time.

To show $\mathbf{NP}$-hardness, let $x_1,\ldots,x_n$ be Boolean variables and
\beq
\label{logform}
B\defin ( y_{11} \lor y_{12} \lor y_{13} ) \land \ldots \land ( y_{m1} \lor y_{m2} \lor y_{m3} )
\eeq
be a logical formula in which each literal $y_{ij}$ stands for some variable $x_l$ or for its negation $\lnot x_l$. The Boolean satisfiability problem \texttt{3SAT} is the following decision problem:
\decprob{3SAT}{a logical formula $B$ in the form~(\ref{logform})}{Is $B$ satisfiable}

This is well-known to be $\mathbf{NP}$-complete~\cite{Karp}. We now prove $\mathbf{NP}$-hardness of \texttt{ALLOWS\_CLASSICAL} by polynomially reducing \texttt{3SAT} to \texttt{ALLOWS\_CLASSICAL}. Denote the clauses in $B$ by
$$
C_i := y_{i1} \lor y_{i2} \lor y_{i3}
$$
and construct a contextuality scenario $H_B$ as follows. We would like the set of vertices to correspond to the set of literals together with $7$ auxiliary variables for each clause in the sense that
$$
V(H_B) \defin \{v_{x_1},\ldots, v_{x_n}, v_{\lnot x_1},\ldots, v_{\lnot x_n} \} \cup \{v_{i,s}\} ,
$$
where $i=1,\ldots,m$ enumerates the clauses and $s\in\{001,010,011,100,101,110,111\}$ runs over the feasible truth value assignments to the literals in a clause. There are three kinds of edges,
\begin{align*}
E(H_B) \defin & \hspace{1.2mm} \phantom{\cup} \left\{ \left\{v_{x_j},v_{\lnot x_j} \right\} \::\: j=1,\ldots,n\right\} \\
& \cup \left\{ \{v_{i,001},\ldots,v_{i,111}\} \::\: i=1,\ldots,m \right\} \\
& \cup \left\{ \{ v_{i,s}, v_{ y_{i1}}, v_{ y_{i2}}, v_{ y_{i3}} \} \::\: i=1,\ldots,m;\: s=001,\ldots,111 \right\}
\end{align*}
where in the third type of edge, the negation $\lnot$ appears if and only if $s$ has a $1$ at the corresponding position. The first type of edge guarantees that in any deterministic model, either $v_{x_j}$ or $v_{\lnot x_j}$ gets the value $1$, but not both; the second kind of edge guarantees that for every $i$, exactly one of the $v_{i,s}$'s is $1$, so that the clause $C_i$ has a unique feasible assignment of truth values $s$; finally, the third type of edge ensures that if $p(v_{i,s})=1$, then the literals of $C_i$ have precisely the values given by $s$. Therefore, the deterministic models on $H_B$ correspond bijectively to the satisfying variable assignments of $B$. So we have that $B$ is satisfiable if and only if $\mathcal{C}(H_B)\neq\emptyset$.
\end{proof}

\subsection{A semidefinite hierarchy converging to $\mathcal{C}(H)$}

For some combinatorial optimization problems, one can construct a contextuality scenario $H$ with polynomially many vertices and edges whose classical set $\mathcal{C}(H)$ coincides with the usual polytope associated to the combinatorial optimization problem~\cite{Schrijver}. We have illustrated how to do this for the case of \texttt{3SAT} above, but similar reductions can be found also e.g.~for coloring problems on graphs. The main idea is that the vertices of $H$ are interpreted as boolean variables, and any formula of propositional logic can be encoded in terms of a collection of edges, possibly using some auxiliary variables. Then, our machinery \emphalt{automatically} produces an associated linear as well a semidefinite relaxation of $\mathcal{C}(H)$: namely $\mathcal{G}(H)$ and $\mathcal{Q}_1(H)$, respectively.

Furthermore, if one takes Definition~\ref{npadefn} and additionally imposes that $M_{\mathbf{v},\mathbf{w}} = M_{\pi(\mathbf{v}),\mathbf{w}}$ for any permutation $\pi$, one obtains a \emph{hierarchy of semidefinite relaxations} $(\mathcal{C}_n(H))_{n\in\N}$ converging to $\mathcal{C}(H)$ \cite{Lasserre,FLS2}; at the first level, the additional constraints do not arise, and therefore we have $\mathcal{C}_1(H)=\mathcal{Q}_1(H)$. In this way, one can efficiently approximate the target set $\mathcal{C}_n(H)$ from the outside. Due to the high number of constraints, this hierarchy converges even after a finite number of steps: we have $\mathcal{C}_{|V(H)|}(H) = \mathcal{C}(H)$ since any matrix entry $M_{\mathbf{v},\mathbf{w}}$ with $\mathbf{v}$ or $\mathbf{w}$ longer than $V(H)$ is already determined by the other matrix entries. However, while every $\mathcal{C}_n$ for fixed $n$ is defined by a semidefinite program of polynomial size, the semidefinite program defining $\mathcal{C}_{|V(H)|}$ is of exponential size. We have not implemented any of this since for any particular class of problems, one can construct specialized (hierarchies of) semidefinite relaxations of smaller size~\cite{Anjos,Laurent}.

\subsection{Towards an inverse sandwich theorem?}
\label{invsandwichsec}

Now that we know the complexity of \texttt{ALLOWS\_GENERAL} and \texttt{ALLOWS\_CLASSICAL}, we move on to consider the analogous question for the quantum case, which may have some surprises to offer.

\decprob{ALLOWS\_QUANTUM}{A contextuality scenario $H$}{$\mathcal{Q}(H)\neq\emptyset$}

This is equivalent to asking whether there exists an assignment of projections $P_v\in\mathcal{B}(\H)$ to each $v\in V(H)$ such that $\sum_{v\in e} P_v=\mathbbm{1}$ for all $e\in E(H)$, since any quantum model requires such an assignment by definition, and conversely any such assignment can be turned into a quantum model by choosing an arbitrary state. The Hilbert space $\H$ can be taken to be separable infinite-dimensional without loss of generality, i.e.~$\H=\ell^2(\N)$: if one starts with a finite-dimensional $\H$ with a given assignments of projections $P_v$, one can simply replace $\H$ by the infinite-dimensional $\H\otimes\ell^2(\N)$ and each $P_v$ by $P_v\otimes\mathbbm{1}$. On the other hand, if the given $\H$ is infinite-dimensional but not separable, then one can consider the $C^*$-algebra generated by all the $P_v$, which is separable, and apply the GNS construction with respect to any state in order to obtain a new representation on a separable Hilbert space.

Is it possible to solve $\mathtt{ALLOWS\_QUANTUM}$ by using the semidefinite hierarchy from Section~\ref{npahierarchy}? After all, by definition of the hierarchy, every set $\mathcal{Q}_n(H)$ is given by a semidefinite program of polynomial size, so that determining whether $\mathcal{Q}_n(H)\neq \emptyset$ can be done efficiently. One might suspect that this should give an algorithm for \texttt{ALLOWS\_QUANTUM} thanks to the following observation:

\begin{lem}
$\mathcal{Q}(H)=\emptyset$ if and only if $\mathcal{Q}_n(H)=\emptyset$ for some $n\in\N$.
\end{lem}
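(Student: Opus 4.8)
The plan is to deduce this from the convergence theorem~\ref{convNPA} together with a compactness argument, exactly as in the proof of Theorem~\ref{convNPA}. The statement to prove is a biconditional; one direction is immediate and the other is the substantial one.

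First I would dispose of the easy direction: if $\mathcal{Q}_n(H)=\emptyset$ for some $n$, then since $\mathcal{Q}(H)\subseteq\mathcal{Q}_n(H)$ by Proposition~\ref{QinQ1}, we immediately get $\mathcal{Q}(H)=\emptyset$. For the converse, I would argue by contraposition: suppose $\mathcal{Q}_n(H)\neq\emptyset$ for every $n\in\N$; the goal is to produce a single $p\in\mathcal{Q}(H)$. Pick for each $n$ some $p^{(n)}\in\mathcal{Q}_n(H)$, witnessed by a positive semidefinite matrix $M^{(n)}$ indexed by $V(H)^{*n}$ satisfying~(\ref{npanorm}),~(\ref{npalin}),~(\ref{npaort}). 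As in the proof of Theorem~\ref{convNPA}, the estimates~(\ref{npahankel}) and~(\ref{bounddiag}) force all entries of $M^{(n)}$ (restricted to sufficiently short strings) to lie in $[-1,+1]$. Truncating $M^{(2n)}$ to strings in $V(H)^{*n}$, padding with zeros, and viewing the result as a point of the compact, second-countable space $[-1,+1]^{V(H)^*\times V(H)^*}$, Tychonoff's theorem yields a convergent subsequence whose limit $M^\infty$ is an infinite matrix indexed by $V(H)^*$ with all the properties of Lemma~\ref{npan}. Proposition~\ref{GNS} then converts $M^\infty$ into an honest quantum model $p\in\mathcal{Q}(H)$; in particular $\mathcal{Q}(H)\neq\emptyset$.

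In other words, the argument is essentially identical to that of Theorem~\ref{convNPA}, the only difference being that there one fixes a common $p$ living in all $\mathcal{Q}_n(H)$, whereas here one allows the $p^{(n)}$ to vary with $n$ and only concludes \emph{nonemptiness} of $\mathcal{Q}(H)$, not that any particular $p^{(n)}$ is quantum. One should note that the compactness/diagonalization step goes through verbatim because it never used any relationship between the $M^{(n)}$ for different $n$ beyond the uniform bound on entries; the limit matrix satisfies each finite constraint because it is a limit of matrices satisfying that constraint from some point on.

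The main obstacle, such as it is, is purely bookkeeping: one must check that the limit matrix $M^\infty$ genuinely inherits~(\ref{npanorm})--(\ref{npaort}) and the positive-semidefiniteness in the infinite sense required by Proposition~\ref{GNS}. Each of these is a closed condition: normalization and the linear relations~(\ref{npalin}) are continuous (indeed affine) in the finitely many relevant entries, the orthogonality relations~(\ref{npaort}) likewise, and positive semidefiniteness against any fixed finitely supported vector $(x_{\mathbf v})$ is a closed condition preserved under pointwise limits. So every finite-dimensional principal submatrix of $M^\infty$ is a limit of positive semidefinite submatrices of the $M^{(n)}$'s, hence positive semidefinite, which is exactly what ``positive semidefinite'' means for the infinite matrix in the sense of Proposition~\ref{GNS}. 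Once that is in place, Proposition~\ref{GNS} finishes the proof, so the whole thing is a short corollary of machinery already developed.
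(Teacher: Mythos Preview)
Your proof is correct, but it does more work than necessary and differs from the paper's argument. The paper does not re-run the Tychonoff/diagonalization machinery from the proof of Theorem~\ref{convNPA}; instead it simply \emph{applies} Theorem~\ref{convNPA} as a black box: if $\mathcal{Q}(H)=\emptyset$, then $\bigcap_n \mathcal{Q}_n(H)=\emptyset$, and since each $\mathcal{Q}_n(H)$ is a closed subset of the compact set $\mathcal{Q}_1(H)\subseteq[0,1]^{V(H)}$, the finite intersection property gives that some finite subfamily already has empty intersection; nestedness then yields a single $n$ with $\mathcal{Q}_n(H)=\emptyset$.

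So the paper's compactness lives in the finite-dimensional space of probabilistic models, while yours lives in the infinite product $[-1,1]^{V(H)^*\times V(H)^*}$ of moment matrices. Your route has the minor advantage that it does not rely on knowing in advance that each $\mathcal{Q}_n(H)$ is closed (which is true but unstated), and it shows directly that a limit point of witnesses is a witness; the paper's route is considerably shorter and avoids rechecking that the limit matrix inherits~(\ref{npanorm})--(\ref{npaort}) and positive semidefiniteness, since that verification is already absorbed into Theorem~\ref{convNPA}.
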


\begin{proof}
If $\mathcal{Q}_n(H)=\emptyset$ for some $n$, then clearly $\mathcal{Q}(H)=\emptyset$ as well. To show the converse, assume $\mathcal{Q}(H)=\emptyset$, so that $\bigcap_n \mathcal{Q}_n(H)=\emptyset$. Since this is an intersection of closed subspaces of the compact space $\mathcal{Q}_1(H)$, we conclude by compactness that already finitely many of the $\mathcal{Q}_n(H)$ have empty intersection. Because the $\mathcal{Q}_n(H)$ form a decreasing sequence of sets, there has to be some $n\in\N$ with $\mathcal{Q}_n(H)=\emptyset$.
\end{proof}

The problem with this is that checking whether $\mathcal{Q}_n(H)=\emptyset$ for each $n$ at a time is a procedure that never terminates in case that $\mathcal{Q}(H)\neq \emptyset$. Hence, in order to find an algorithm for \texttt{ALLOWS\_QUANTUM}, we also need a procedure for witnessing that $\mathcal{Q}(H)\neq\emptyset$ if this happens to be the case!

One way to go about this is to try and look in every finite Hilbert space dimension $\H:=\C^d$ at a time and see if there exists a quantum model in this dimension. For each given $d$, this boils down to determining whether a certain system of polynomial equations and inequalities has a solution in $\R$. Thanks to real quantifier elimination~\cite{Tarski}, there are known algorithms for doing this. Therefore, if a quantum model over some finite-dimensional Hilbert space exists, this procedure will eventually find it---even if this may take an exceedingly long time.

By running these two procedures in parallel, we have an algorithm for deciding \texttt{ALLOWS\_QUANTUM} that works in all cases---except when $H$ allows quantum models, but only on infinite-dimensional Hilbert spaces! In this case, both procedures will keep running forever: the semidefinite hierarchy, which tries to show that $\mathcal{Q}_n=\emptyset$ for some $n$, will not terminate since $\mathcal{Q}(H)\neq\emptyset$, but also the look-in-all-finite-dimensions procedure will not be successful since there is no quantum model in finite dimension. Thus we are faced with a mathematical problem:

\begin{prob}
\label{Hinfdim}
Are there contextuality scenarios $H$ which allow quantum models, but only in infinite dimensions?
\end{prob}

We now explain why this is an important problem. In the language of~\cite{FNT}, it can be rephrased as follows: we construct the universal unital $C^*$-algebra associated to a contextuality scenario $H$ in terms of generators and relations,
$$
C^*(H) \defin \left\langle \left\{ P_v \::\: v\in V(H) \right\} \:\bigg|\: P_v = P_v^2 = P_v^*\quad\forall v\in V(H),\quad \sum_{v\in e} P_v = \mathbbm{1} \quad\forall e\in E(H) \right\rangle.
$$
If this $C^*$-algebra is residually finite-dimensional\footnote{Residual finite-dimensionality means that for any nonzero element $x\in C^*(H)$, there is a finite-dimensional representation $\pi$ of $C^*(H)$ with $\pi(x) \ne 0$.} for any $H$, then Problem~\ref{Hinfdim} has a negative answer and the above algorithm solves \texttt{ALLOWS\_QUANTUM}, even if with very high complexity.

Now it is known that Kirchberg's QWEP conjecture and Connes' embedding problem are equivalent to the residual finite-dimensionality of some of these $C^*$-algebras, for example for $C^*(B_{2,3,2})$~\cite{TF,FNT}. Since these are notoriously difficult open problems in the theory of operator algebras, and moreover are generally expected to have a negative answer, we suspect that it is too much to hope for that all $C^*(H)$ are residually finite-dimensional. One way to show this---and thereby also make considerable progress on Connes' embedding problem---would be to solve Problem~\ref{Hinfdim} in the positive.

To conclude, our attempt at constructing an algorithm for deciding $\mathtt{ALLOWS\_QUANTUM}$ has not succeeded, but we have found that $\mathtt{ALLOWS\_QUANTUM}$ is related to Connes' embedding problem and posed the interesting Problem~\ref{Hinfdim}. But may an algorithm for $\mathtt{ALLOWS\_QUANTUM}$ be constructed in a different way? A different approach to $\mathtt{ALLOWS\_QUANTUM}$ lies in recognizing that any instance of it can be reformulated as an $\exists_1$ formula in quantum logic with signature $(\lor,\perp,\mathbbm{1}_{\H})$ on an infinite-dimensional separable Hilbert space. However, since the decidability status of quantum logic is also not known~\cite[p.69]{Svozil}, this approach does not produce a terminating algorithm either and we will not discuss it further.

In conclusion, we do not know of \emphalt{any} terminating algorithm that would solve \texttt{ALLOWS\_QUANTUM}. In fact, we suspect the following:

\begin{conj}
\label{invsandwich}
$\mathtt{ALLOWS\_QUANTUM}$ is undecidable.
\end{conj}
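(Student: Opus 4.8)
The plan is to prove undecidability by exhibiting a computable many-one reduction from a known undecidable problem to \texttt{ALLOWS\_QUANTUM}. As already observed, $\mathcal{Q}(H)\neq\emptyset$ is equivalent to the universal $C^*$-algebra $C^*(H)$ being nonzero (any nonzero representation yields projections with $\sum_{v\in e}P_v=\mathbbm{1}$, and conversely), so the task reduces to building, from an instance $x$ of some undecidable problem, a contextuality scenario $H_x$ of size computable in $x$ such that the partition-of-unity relations of $H_x$ are satisfiable by projections on \emphalt{some} Hilbert space precisely when $x$ is a positive instance. The most promising source of such instances is the class of binary constraint system games ($\Z_2$-linear systems), or equivalently their \emphalt{solution groups}: deciding whether such a system admits a perfect commuting-operator strategy is a natural candidate for undecidability, reachable in principle from the word problem for finitely presented groups. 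A proof would therefore first have to pin down a concrete undecidable problem $P$ about the existence of operator solutions to a finite list of relations among projections (equivalently, involutions).

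The combinatorial heart of the reduction is a gadget toolkit for contextuality scenarios. The gadget of Figure~\ref{gadget} already shows how to force $P_w=0$ for every quantum model, and small variants of the same idea force $P_u=P_v$, $P_uP_v=0$, or $P_u+P_v=\mathbbm{1}$; chaining such gadgets lets one impose an arbitrary finite system of polynomial relations among a chosen set of projections. Concretely, from a $\Z_2$-linear system one introduces for each variable $x_i$ a vertex $v_i$ together with a complement $\bar v_i$ and the edge $\{v_i,\bar v_i\}$, so that $P_{v_i}$ is a projection, $P_{\bar v_i}=\mathbbm{1}-P_{v_i}$, and $2P_{v_i}-\mathbbm{1}$ is an involution; for each parity constraint one then attaches edges (plus gadget copies) forcing the relevant involutions to commute and to have the prescribed product. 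This closely parallels the encoding of \texttt{3SAT} used to show \texttt{ALLOWS\_CLASSICAL} is $\mathbf{NP}$-complete, but now at the level of operators rather than of $\{0,1\}$-assignments. One then verifies that $\mathcal{Q}(H_x)\neq\emptyset$ holds if and only if the constraint system has a perfect commuting-operator strategy if and only if $P$ answers positively on $x$, which completes the reduction.

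The hard part is the first step. No clean undecidability statement of the form needed seems to follow from the currently available literature, and — as the discussion surrounding Problem~\ref{Hinfdim} and the algebra $C^*(H)$ makes explicit — establishing one would decide whether all the $C^*(H)$ are residually finite-dimensional, and hence bear directly on Kirchberg's QWEP conjecture and Connes' embedding problem. Indeed, any proof of Conjecture~\ref{invsandwich} must en route produce a scenario witnessing a negative answer to Problem~\ref{Hinfdim} (a quantum model existing only in infinite dimensions), since otherwise the dovetailing of the semidefinite hierarchy with an exhaustive search over finite Hilbert space dimensions would already decide \texttt{ALLOWS\_QUANTUM}. Thus the expected route is: (i) prove an independently significant undecidability result about operator solutions of finitely presented projection relations, presumably by encoding the word problem of a suitable finitely presented group, and only then (ii) perform the gadget-based translation into contextuality scenarios, which should be essentially routine given the toolkit above.
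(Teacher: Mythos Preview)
The statement in question is a \emph{conjecture} in the paper, not a theorem: the paper offers no proof, only motivation (the sandwich $\mathcal{C}(H)\subseteq\mathcal{Q}(H)\subseteq\mathcal{G}(H)$, the link to Problem~\ref{Hinfdim}, and the connection to Connes' embedding problem). So there is no ``paper's own proof'' to compare against.

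Your proposal is not a proof either, and you are explicit about this: you outline a reduction strategy (encode an undecidable operator-relations problem into partition-of-unity constraints via gadgets) and then correctly identify that step~(i) --- the existence of a concrete undecidable problem about operator solutions to finite projection relations --- is the missing ingredient, one whose resolution would bear on Kirchberg's QWEP conjecture and Connes' embedding problem. This diagnosis matches the paper's own discussion exactly. Your gadget toolkit for step~(ii) is plausible and in the spirit of Figure~\ref{gadget} and the \texttt{3SAT} encoding used for \texttt{ALLOWS\_CLASSICAL}, though you have not actually constructed gadgets forcing commutation of two projections or prescribed products of involutions; these are more delicate than forcing $P_w=0$ and would need to be written down carefully. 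In short: your write-up is a reasonable roadmap that agrees with the paper's framing of the conjecture, but it is not a proof, and the gap you name is genuine and was open at the time of the paper.
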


Here is how we think of this conjecture. Recall that if one writes $\chi$ for the chromatic number of a graph, then Lov{\'a}sz's \emph{sandwich theorem}~\cite{Knuth} consists of the inequality 
\beq
\label{lovsandwich}
\alpha(G) \leq \vartheta(G) \leq \chi(\overline{G}),
\eeq
together with the observation that the outer two quantities, the independence and the chromatic number, are $\mathbf{NP}$-hard to compute, while $\vartheta(G)$ can be computed in polynomial time to arbitrary precision. The simple-to-compute quantity $\vartheta$ is `sandwiched' between two hard-to-compute graph invariants.

In analogy with this, we call~\ref{invsandwich} the \emph{inverse sandwich conjecture} since the hypothetically uncomputable $\mathcal{Q}(H)$ is sandwiched between two computable sets,
$$
\mathcal{C}(H) \subseteq \mathcal{Q}(H) \subseteq \mathcal{G}(H) .
$$
So in contrast to the case of~(\ref{lovsandwich}), here the real meat indeed lies in the middle of the sandwich! See also~\cite{WCP}, where it has previously been hypothesized that the set of quantum models in a Bell scenario cannot be characterized algorithmically.

A proof of Conjecture~\ref{invsandwich} would also yield a positive answer to Problem~\ref{Hinfdim}, since undecidability means that our above algorithm cannot terminate on all $H$. In this way, a proof of Conjecture~\ref{invsandwich} would have some interesting consequences for $C^*$-algebra theory. Moreover, it would also prove the undecidability of quantum logic\footnote{More precisely, it would imply that the theory of Hilbert lattices in the signature $(\lor,\perp,\mathbbm{1})$ is not decidable.}. Since these are all very difficult problems in themselves, proving Conjecture~\ref{invsandwich}---if it is correct---will also be very challenging.

\subsection{Other decision problems}

There is a myriad of other interesting decision problems on contextuality scenarios that one can come up with. We now briefly mention several further ones.

\decprob{IS\_CLASSICAL}{A contextuality scenario $H$ and $p\in\mathcal{G}(H)$ with $p(v)\in\Q$}{$p\in\mathcal{C}(H)$}

It is not difficult to see that this is in $\mathbf{NP}$. Furthermore, it is actually $\mathbf{NP}$-complete, since this is the case already for Bell scenarios~\cite{AII}.

Similarly, one can consider decision problems like $\mathtt{IS\_QUANTUM}$ and $\mathtt{IS\_CE}^\infty$. So far, we have not considered any of these any further. Another natural decision problem is the question whether a given scenario allows nonclassical models or not:

\decprob{NONCONTEXTUAL}{A contextuality scenario $H$}{$\mathcal{C}(H)=\mathcal{G}(H)$} 

We also do not know what the complexity of this problem is. We suspect that Theorem~\ref{extchar} together with the techniques of~\cite{Eiter} will be helpful for answering this question.

\newpage
\section{\textbf{Examples}}
\label{examples}

In the previous sections, we have developed the general theory of contextuality scenarios in quite some detail. We have exemplified some of the concepts and results for the case of Bell scenarios. In particular, this illustrates how our formalism makes precise the intuition that nonlocality is a special case of contextuality. 

Now we would like to present some other more or less concrete examples of contextuality scenarios and show how our methods can be applied to these. We note that compiling a detailed list of the examples that have already been considered in the quantum foundations literature would be a gargantuan task beyond the reach of this paper.

\subsection{Modeling subnormalization by no-detection events}\label{CSWtransfer}

We start by discussing the relationship between our approach and that of Cabello, Severini and Winter~\cite{CSW}. CSW base their approach also on hypergraphs $H$ in a very similar spirit as we have done, and this is where we drew our inspiration from. The main difference between the CSW approach and ours is that CSW do not require measurements to be normalized,
\beq
\label{norm}
\sum_{v\in e} p(v) = 1 \quad\forall e\in E(H) ,
\eeq
but only impose the subnormalization constraint
\beq
\label{subnorm}
\sum_{v\in e} p(v) \leq 1 \quad \forall e\in E(H) .
\eeq
A similar requirement $\sum_{v\in e} P_v\leq \mathbbm{1}_\H$ is applied for the projections giving rise to quantum models. We now explain how our approach comprises CSW's. To this end, we construct a new contextuality scenario $H'$ which contains an additional \emph{no-detection event} $w_e$ for each $e\in E(H)$,
\[
V(H')\defin V(H) \cup \{ w_e \::\: e\in E(H) \} 
\]
which turns the `old' edge $e$ into the `new' edge $e\cup\{w_e\}$, so that the new set of edges is given by
\[
\qquad E(H') \defin \left\{ e \cup \{w_e\} \::\: e\in E(H) \right\} .
\]
The r{\^o}le of these no-detection events is to absorb the `missing probability' in the subnormalization equation~\eqref{subnorm}, which also explains the term `no-detection event'. In fact, the normalization~\eqref{norm} for the new edge $e\cup\{w_e\}$ can be rewritten as
\[
p(w_e) = 1 - \sum_{v\in e} p(v).
\]
The non-negativity of this probability is precisely equivalent to the subnormalization~\eqref{subnorm}. It is now straightforward to show:

\begin{prop}
\label{toCSW}
Under this correspondence,
\begin{enumerate}
\item $\mathcal{C}(H')$ equals the set of `classical noncontextual models',
\item $\mathcal{Q}_1(H')$ equals the set of `quantum models', and
\item $\CE^1(H')$ equals the set of `generalized models'
\end{enumerate}
of CSW~\cite{CSW}.
\end{prop}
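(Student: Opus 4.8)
The plan is to unwind the three claims by carefully matching the CSW subnormalized setup with the normalized scenario $H'$ obtained by adding the no-detection events $w_e$. The key observation throughout is the bijective correspondence between subnormalized assignments $p : V(H) \to [0,1]$ with $\sum_{v\in e} p(v) \le 1$ for all $e$, and probabilistic models $p' \in \mathcal{G}(H')$: given $p$, set $p'(v) = p(v)$ for $v \in V(H)$ and $p'(w_e) = 1 - \sum_{v\in e} p(v)$; conversely, restrict $p'$ to $V(H)$. I would first establish this bijection at the level of bare probability assignments, noting that non-negativity of $p'(w_e)$ is exactly the subnormalization constraint and that $\sum_{v\in e\cup\{w_e\}} p'(v) = 1$ holds by construction. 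I would also record the relevant structure of the non-orthogonality graph: since each $w_e$ lies in exactly one edge $e\cup\{w_e\}$, it is orthogonal in $H'$ precisely to the old vertices of $e$ (and to no other $w_{e'}$), so $\mathrm{NO}(H')$ restricted to $V(H)$ agrees with $\mathrm{NO}(H)$, with the $w_e$'s attached as ``pendant-like'' vertices.

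For part (a), I would show that $p'$ is a classical model on $H'$ iff $p$ is a classical noncontextual model in the CSW sense. A deterministic model on $H'$ is given by an exact transversal $V_1 \subseteq V(H')$; intersecting $V_1$ with $V(H)$ gives a partial assignment hitting each old edge at most once, i.e.\ exactly the CSW notion of a deterministic noncontextual assignment, and conversely every such assignment extends uniquely by assigning $w_e$ the value $1$ precisely when $V_1$ misses $e$. Convex combinations then correspond on both sides, so $\mathcal{C}(H')$ restricts to the CSW classical set. For part (b), I would use the characterization of $\mathcal{Q}_1$ in Proposition~\ref{CSWeq}, specifically item~\ref{psum}: $p'\in\mathcal{Q}_1(H')$ iff there is a unit vector $|\Psi\rangle$ and projections $P_v$ ($v\in V(H')$) with $\sum_{v\in e'} P_v \le \mathbbm{1}$ for every $e'\in E(H')$ and $p'(v) = \langle\Psi|P_v|\Psi\rangle$. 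Restricting to $v\in V(H)$, the constraint $\sum_{v\in e\cup\{w_e\}} P_v \le \mathbbm 1$ implies $\sum_{v\in e} P_v \le \mathbbm 1$, which is exactly CSW's subnormalization condition on projections; conversely, given CSW projections $P_v$ with $\sum_{v\in e} P_v \le \mathbbm 1$, one defines $P_{w_e} \defin \mathbbm 1 - \sum_{v\in e} P_v \ge 0$ — but this is only positive semidefinite, not a projection, so one cannot directly apply item~\ref{psum}. The cleaner route is to note that CSW's ``quantum set'' is defined via orthonormal labelings of the orthogonality graph (or equivalently unit vectors $|\psi_v\rangle$ with $u\perp v \Rightarrow \langle\psi_u|\psi_v\rangle = 0$ and $p(v) = |\langle\psi_v|\Psi\rangle|^2$), which is literally item~\ref{pCSW} of Proposition~\ref{CSWeq} applied to $\mathrm{NO}(H)$; and since $\mathrm{NO}(H)$ is the induced subgraph of $\mathrm{NO}(H')$ on $V(H)$, one checks that an orthonormal labeling of $\mathrm{NO}(H')$ restricts to one of $\mathrm{NO}(H)$, and that a labeling of $\mathrm{NO}(H)$ extends to $\mathrm{NO}(H')$ (enlarging $\H$ if necessary to accommodate the $w_e$'s, which only need mutual orthogonality with the vertices of their own edge and can be placed in fresh dimensions while still being expressible as $|\langle\psi_{w_e}|\Psi\rangle|^2 = p'(w_e)$). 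I would spell this extension out, since it is the one genuinely non-formal point.

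For part (c), I would invoke Proposition~\ref{SPvsFIN}: $p'\in\CE^1(H')$ iff $\sum_{v\in I} p'(v) \le 1$ for every independent set $I$ of $\mathrm{NO}(H')$, equivalently $\alpha(\mathrm{NO}(H'),p')\le 1$. CSW's ``generalized models'' are exactly the subnormalized assignments satisfying $\sum_{v\in I} p(v)\le 1$ for every independent set $I$ of $\mathrm{NO}(H)$ (their ``$E_1$'' set). Since every independent set of $\mathrm{NO}(H)$ is an independent set of $\mathrm{NO}(H')$ and contributes $\sum_{v\in I} p'(v) = \sum_{v\in I} p(v)$, one direction is immediate. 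For the converse one must check that adjoining the $w_e$'s creates no new binding constraints: an independent set $I'$ of $\mathrm{NO}(H')$ containing some $w_e$ cannot also contain any vertex of $e$, so $I' \cap e = \emptyset$; using $p'(w_e) = 1 - \sum_{v\in e} p(v)$ one bounds $\sum_{v\in I'} p'(v)$ by grouping the $w_e$-terms with the edges $e$ and the remaining old-vertex terms, and checks the total is $\le 1$ given the CSW constraints on $H$ — this is a short but slightly fiddly combinatorial estimate, and I expect it, together with the Hilbert-space extension argument in part (b), to be the main obstacle; everything else is bookkeeping. Finally I would remark that the hypergraph $H'$ here is literally the construction already used in the proof of Theorem~\ref{Qinv} and in Section~\ref{se:act}, so no new machinery is needed.
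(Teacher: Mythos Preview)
Your proposal is correct and broadly in the spirit of what the paper intends; note that the paper itself offers no proof beyond the phrase ``it is now straightforward to show''. Two points are worth flagging.

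For part (b) you abandon the direct route of setting $P_{w_e}\defin\mathbbm{1}-\sum_{v\in e}P_v$ because you worry this is merely positive semidefinite rather than a projection. In fact it \emph{is} a projection: if projections $P_v$ satisfy $\sum_{v\in e}P_v\le\mathbbm{1}$, then they are automatically pairwise orthogonal (a short operator-algebra argument), so the sum is itself a projection. The paper uses exactly this construction in the proof of the very next result, Proposition~\ref{collapse}, so the intended route for (b) is presumably just this: restrict the projections to $V(H)$ for one direction, and extend by the displayed formula for the other. Your alternative via the orthonormal-labeling characterization~\ref{pCSW} also works, but is a detour.

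For part (c) your concern about a ``fiddly combinatorial estimate'' is overblown. Since each $w_e$ lies in exactly one edge of $H'$, any two distinct $w_e,w_{e'}$ are \emph{adjacent} in $\mathrm{NO}(H')$; hence an independent set $I'$ contains at most one such vertex. If $I'\subseteq V(H)$ the bound is immediate from the CSW hypothesis. If $w_e\in I'$, then every $v\in I_0:=I'\cap V(H)$ must be non-adjacent to $w_e$ in $\mathrm{NO}(H')$, which forces $v\in e$; thus $I_0\subseteq e$ and
\[
\sum_{v\in I'}p'(v)=\sum_{v\in I_0}p(v)+\Bigl(1-\sum_{v\in e}p(v)\Bigr)\le 1.
\]
So the estimate is a two-line case distinction, not a genuine obstacle.
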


CSW have also considered `quantum models' on $H$ which do satisfy the normalization of probabilities~\eqref{norm} and therefore lie in our $\mathcal{G}(H)$, without adding no-detection events. Again, it is straightforward to show that their definition of `quantum model' is equivalent to $p\in\mathcal{Q}_1(H)$, which means that $p$ does \emphalt{not} have to be a quantum model in our sense.

Proposition~\ref{toCSW} shows how our approach comprises the one of CSW. However, the converse is not true: it was already noticed in~\cite{CSW} that upon applying the CSW approach to a Bell scenario, the resulting set of `quantum models' in the CSW sense is usually strictly greater than the set of quantum correlations in the Bell scenario, in the standard meaning of the term~\cite{nonlocalreview}. This should be seen in contrast to our Corollary~\ref{Bellquantum}, which shows that our quantum set does indeed recover the usual quantum correlations (with the commutativity paradigm for composite systems). This is why we consider the scenarios $H'$ as nothing but one particular class of examples for our approach.

In fact, the CSW approach completely fails to see the distinction between the different levels of the semidefinite hierarchy developed in Section~\ref{npahierarchy}:

\begin{prop}
\label{collapse}
Let $H'$ be a contextuality scenario with no-detection events as above. Then $\mathcal{Q}(H')=\mathcal{Q}_1(H')$.
\end{prop}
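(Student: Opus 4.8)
The plan is to prove the nontrivial inclusion $\mathcal{Q}_1(H') \subseteq \mathcal{Q}(H')$; the reverse inclusion $\mathcal{Q}(H') \subseteq \mathcal{Q}_1(H')$ is a special case of Proposition~\ref{QinQ1}. The structural feature of $H'$ that makes this work is that each no-detection vertex $w_e$ lies in exactly one edge of $H'$, namely $e\cup\{w_e\}$. Consequently the projection assigned to $w_e$ may be modified without affecting the normalization constraint attached to any other edge, which is precisely the freedom one needs to upgrade a $\mathcal{Q}_1$-representation (with $\sum_{v\in e'}P_v \leq \mathbbm{1}$) into a genuine quantum representation (with $\sum_{v\in e'}P_v = \mathbbm{1}$).

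So suppose $p\in\mathcal{Q}_1(H')$. By the characterization in Proposition~\ref{CSWeq}\ref{port}, there are a Hilbert space $\mathcal{H}$, a unit vector $|\Psi\rangle\in\mathcal{H}$, and projections $P_v\in\mathcal{B}(\mathcal{H})$ for $v\in V(H')$ such that $P_u\perp P_v$ whenever $u\perp v$ in $\mathrm{NO}(H')$, and $p(v)=\langle\Psi|P_v|\Psi\rangle$. Fix $e\in E(H)$. All vertices of $e\cup\{w_e\}$ are pairwise orthogonal in $\mathrm{NO}(H')$, since they all lie in the common edge $e\cup\{w_e\}$; hence the projections $\{P_v : v\in e\}\cup\{P_{w_e}\}$ are pairwise orthogonal. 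In particular $\sum_{v\in e}P_v$ is a projection orthogonal to $P_{w_e}$, so
\[
\widetilde{P}_{w_e} \defin \mathbbm{1}_{\mathcal{H}} - \sum_{v\in e} P_v
\]
is a projection with $\widetilde{P}_{w_e}\geq P_{w_e}$. I would then define a new assignment by keeping $P_v$ for every $v\in V(H)$ and replacing $P_{w_e}$ by $\widetilde{P}_{w_e}$ for every $e\in E(H)$; since each $w_e$ occurs in no other edge, this modification is globally consistent.

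It remains to check that this new data is a quantum representation of $p$ on $H'$. Normalization~\eqref{qmeas} holds for every edge of $H'$ because $\sum_{v\in e}P_v + \widetilde{P}_{w_e} = \mathbbm{1}_{\mathcal{H}}$. Equation~\eqref{qrep} is unchanged for $v\in V(H)$, while for a no-detection vertex, using that $p$ is a probabilistic model (so $\sum_{v\in e\cup\{w_e\}}p(v)=1$),
\[
\langle\Psi|\widetilde{P}_{w_e}|\Psi\rangle = 1 - \sum_{v\in e}\langle\Psi|P_v|\Psi\rangle = 1 - \sum_{v\in e}p(v) = p(w_e).
\]
Taking $\rho=|\Psi\rangle\langle\Psi|$ then shows $p\in\mathcal{Q}(H')$. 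I do not anticipate any real obstacle: the only points needing care are the observation that the $P_v$ for $v$ in a fixed edge of $H'$ are mutually orthogonal (which is exactly what the non-orthogonality graph records, so it follows from Proposition~\ref{CSWeq}\ref{port}), and the remark that redefining $P_{w_e}$ for all $e$ at once does not create conflicts, which is the ``$w_e$ in a unique edge'' property noted above. One could equally start from Proposition~\ref{CSWeq}\ref{psum}, where the inequality $\widetilde{P}_{w_e}\geq P_{w_e}$ is immediate.
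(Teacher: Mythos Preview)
Your proof is correct and follows essentially the same route as the paper's: both start from Proposition~\ref{CSWeq}\ref{port}, redefine the no-detection projection as $\mathbbm{1}_{\mathcal{H}}-\sum_{v\in e}P_v$, and verify normalization and the Born-rule probabilities. Your write-up is in fact slightly more explicit than the paper's, spelling out why the redefinition of all the $P_{w_e}$ simultaneously causes no conflicts (each $w_e$ lies in a unique edge) and recording the inequality $\widetilde{P}_{w_e}\geq P_{w_e}$.
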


\begin{proof}
Starting from $p\in\mathcal{Q}_1(H')$, we need to show that $p\in\mathcal{Q}(H')$; the converse inclusion is trivial. To do so, we use Proposition~\ref{CSWeq}\ref{port} as a criterion for membership in $\mathcal{Q}_1(H')$. By definition of $H'$, this means that we have a projection $P_v$ for every $v\in V(H)$ and $P_{w_e}$ for every $e\in E(H)$ such that
$$
u\perp v \:\implies \: P_u \perp P_v,\qquad v\in e \: \implies \: P_v\perp P_{w_e} ,
$$
and $p(v) = \langle\Psi|P_v|\Psi\rangle$ as well as $p(w_e) = \langle\Psi|P_{w_e}|\Psi\rangle$. We now define
$$
P'_{w_e} \defin \mathbbm{1}_\H - \sum_{v\in e} P_v
$$
and claim that these, together with the $P_v$ and the state $|\Psi\rangle$, realize the given $p$ as a quantum model. First, due to $\sum_{v\in e} P_v\leq \mathbbm{1}_\H$, the operator $P'_{w_e}$ is also a projection. Second, the projection-valued normalization relation for edges in $E(H')$ holds by definition of $P'_{w_e}$. Third,
$$
\langle\Psi|P'_{w_e}|\Psi\rangle = \langle\Psi|\Psi\rangle - \sum_{v\in e} \langle\Psi|P_v|\Psi\rangle = 1 - \sum_{v\in e} p(v) = p(w_e) .
$$
as claimed. Hence, $p\in\mathcal{Q}(H')$.
\end{proof}

In this sense, the set of quantum models of a scenario which arises in this way is particularly simple: the whole semidefinite hierarchy collapses to the first level! So, scenarios constructed in this way form a very special and well-behaved subclass of all contextuality scenarios. The $n$-circular hypergraphs that we consider next arise in this way. However, many of the more interesting contextuality scenarios---like Bell scenarios---are not of this form and therefore cannot be treated correctly in the CSW approach, as already noticed by CSW~\cite{CSW}.

\subsection{$n$-circular hypergraphs} The $n$-circular hypergraphs generalize the `pentagon' idea of Klyachko-Can-Binicio{\v{g}}lu-Shumovsky (KCBS)~\cite{KCBS}.
\label{circularex}

\begin{defn}
For $n\geq 3$, the \emph{$n$-circular hypergraph} $\Delta_{n}$ is given by
\begin{align*}
V(\Delta_n) & \defin \{v_1,\ldots,v_n,w_1,\ldots,w_n\} , \\
E(\Delta_n) & \defin \left\{ \{v_1,w_1,v_2\},\ldots,\{v_n,w_n,v_1\} \right\} .
\end{align*}
\end{defn}

So, $\Delta_n$ has $2n$ vertices and $n$ edges as follows: if one draws all vertices on a circle in the order $v_1,w_1,\ldots,v_n,w_n,v_1$, then every second triple of adjacent vertices, namely those of the form $\{v_j,w_j,v_{j+1}\}$, is an edge (we write $v_{n+1}=v_1$). The $w_i$ can be interpreted as no-detection events as explained in the previous subsection. In particular, Proposition~\ref{collapse} applies, and we see that $\mathcal{Q}(\Delta_n)=\mathcal{Q}_1(\Delta_n)$.

Figure~\ref{3meas6out} displays $\Delta_3$, which can be metaphorically illustrated as a \emph{firefly box}~\cite{Wilce2}. It corresponds to the \emph{Wright triangle} of~\cite[Ex.~2.13]{FGR} under the relabeling
$$
v_1\mapsto a,\quad w_1\mapsto b,\quad v_2\mapsto c,\quad w_2\mapsto d,\quad v_3\mapsto e,\quad w_3\mapsto f .
$$

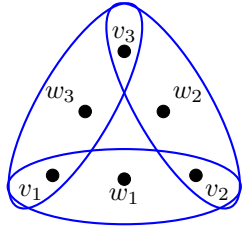
\begin{figure}
\begin{center}
\begin{tikzpicture}
\node[draw,shape=circle,fill,scale=.5] (a) at (30:.6) {} ;
\node [above right] at (a) {$w_2$};
\node[draw,shape=circle,fill,scale=.5] (b) at (90:1.1) {} ;
\node [above] at (b) {$v_3$};
\node[draw,shape=circle,fill,scale=.5] (c) at (150:.6) {} ;
\node [above left] at (c) {$w_3$};
\node[draw,shape=circle,fill,scale=.5] (d) at (210:1.1) {} ;
\node [below left] at (d) {$v_1$};
\node[draw,shape=circle,fill,scale=.5] (e) at (270:.6) {} ;
\node [below] at (e) {$w_1$};
\node[draw,shape=circle,fill,scale=.5] (f) at (330:1.1) {} ;
\node [below right] at (f) {$v_2$};
\draw[thick,blue,rotate=270] (0:.7) ellipse (.5cm and 1.55cm) ;
\draw[thick,blue,rotate=150] (0:.75) ellipse (.5cm and 1.55cm) ;
\draw[thick,blue,rotate=30] (0:.75) ellipse (.5cm and 1.55cm) ;
\end{tikzpicture}
\end{center}
\caption{The $3$-circular hypergraph $\Delta_3$. The labeling of the vertices corresponds to~\cite[Ex.~2.13]{FGR}.}
\label{3meas6out}
\end{figure}

$\Delta_5$ is the `pentagon' scenario on which the KCBS inequality~\cite{KCBS} is defined. It was first considered by Wright in 1978~\cite{Wright}. We now extend some of these results to arbitrary $n$.

\begin{prop}
\label{ch}
Let $n\geq 3$.
\begin{enumerate}
\item\label{chdim} $\dim(\mathcal{C}(\Delta_n))=\dim(\mathcal{G}(\Delta_n))=n$.
\item\label{chparta} If $n$ is even, then $\mathcal{C}(\Delta_n) = \mathcal{G}(\Delta_n)$.
\item\label{chpartb} If $n$ is odd, then $\mathcal{C}(\Delta_n) \subsetneq \mathcal{G}(\Delta_n)$ is determined by the inequality
\beq
\label{kcbsineq}
\sum_i p(v_i) \leq \frac{n-1}{2} .
\eeq
which, for $n=5$, is the KCBS inequality. There is one extreme point of $\mathcal{G}(\Delta_n)$ which violates this inequality. It is the probabilistic model $p_x\in\mathcal{G}(\Delta_n)$ with 
\beq
\label{contextualbox}
p_x(v_i) = \tfrac{1}{2} \quad\forall i,\qquad p_x(w_i) = 0 \quad\forall i.
\eeq
In particular, $\mathcal{G}(\Delta_n)$ has one vertex more than $\mathcal{C}(\Delta_n)$.
\end{enumerate}
\end{prop}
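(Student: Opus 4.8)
The plan is to work directly with the normalization equations defining $\mathcal{G}(\Delta_n)$ and exploit the cyclic combinatorial structure. The $n$ edges $\{v_j,w_j,v_{j+1}\}$ (indices mod $n$) give the system $p(v_j)+p(w_j)+p(v_{j+1})=1$ for $j=1,\ldots,n$. For part~\ref{chdim}, I would note that once the values $p(v_1),\ldots,p(v_n)$ are chosen, each $p(w_j)=1-p(v_j)-p(v_{j+1})$ is forced, so $\mathcal{G}(\Delta_n)$ is cut out inside the $v$-coordinate space $\R^n$ by the inequalities $p(v_j)\geq 0$ and $p(v_j)+p(v_{j+1})\leq 1$; since $p(v_j)=\tfrac13$ for all $j$ is an interior point of this region, $\dim\mathcal{G}(\Delta_n)=n$. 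For $\dim\mathcal{C}(\Delta_n)=n$ it suffices to exhibit $n+1$ affinely independent deterministic models; the $n$ models with a single $v_j$ set to $1$ (and $p(w_{j-1})=p(w_j)=1$, all else $0$, which one checks satisfies every edge equation) together with, say, the all-$w$ model $p(w_j)=1\,\forall j$ do the job — here I should double-check that $p(v_j)=0,\,p(w_j)=1$ is indeed a valid deterministic model (every edge sum is $0+1+0=1$, so yes).

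For part~\ref{chparta}, when $n$ is even I would use the $2$-coloring of the cycle: the map $\sigma\in\{\pm\}^n$ obtained by reducing $\mathbbm{1}_{v_j \text{ set to }1}$ along the bipartition. Concretely, I want to show every vertex of $\mathcal{G}(\Delta_n)$ is deterministic. By Theorem~\ref{extchar}, an extreme point $p$ is the extension of the unique probabilistic model on an induced subscenario $H_W$. Alternatively, and more cleanly here, I would argue that the polytope $\{x\in\R^n : 0\leq x_j,\ x_j+x_{j+1}\leq 1\}$ has only $0/1$ vertices when $n$ is even: its constraint matrix is the incidence structure of an even cycle, which is balanced/totally unimodular, so all vertices are integral, hence in $\{0,1\}^n$; translating back, every extreme $p$ has $p(v_j)\in\{0,1\}$ and then $p(w_j)\in\{0,1\}$ too, so $p$ is deterministic, giving $\mathcal{C}=\mathcal{G}$. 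The odd case fails precisely because the odd cycle is not totally unimodular, which is where the fractional vertex appears.

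For part~\ref{chpartb} with $n$ odd: first verify $p_x$ of~\eqref{contextualbox} lies in $\mathcal{G}(\Delta_n)$ (each edge: $\tfrac12+0+\tfrac12=1$) and is an extreme point — e.g.\ by Theorem~\ref{extchar}, restricting to $W=\{v_1,\ldots,v_n\}$ gives the induced subscenario whose edges are the pairs $\{v_j,v_{j+1}\}$, an odd cycle of $2$-element edges, whose only probabilistic model is the uniform $\tfrac12$ (solve the cyclic system $x_j+x_{j+1}=1$; for $n$ odd the unique solution is $x_j=\tfrac12$). Next I would establish the inequality~\eqref{kcbsineq}: for any deterministic model with set $V_1$, no two of $v_1,\ldots,v_n$ that are cyclically adjacent can both be in $V_1$ (they share the edge $\{v_j,w_j,v_{j+1}\}$), so $V_1\cap\{v_i\}$ is an independent set in the odd cycle $C_n$, of size $\leq\lfloor n/2\rfloor=\tfrac{n-1}2$; averaging gives $\sum_i p(v_i)\leq\tfrac{n-1}2$ for all classical $p$. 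Finally I must show this single inequality, together with the facets $p(v)\geq 0$ and the normalization already built in, suffices to describe $\mathcal{C}(\Delta_n)$ — equivalently that $\mathcal{G}(\Delta_n)$ has exactly one vertex ($p_x$) violating~\eqref{kcbsineq} and removing it yields $\mathcal{C}(\Delta_n)$. I expect the main obstacle to be this last step: proving that cutting $\mathcal{G}(\Delta_n)$ with $\sum_i p(v_i)\leq\tfrac{n-1}2$ gives a polytope all of whose vertices are deterministic. I would approach it via the vertex description of $\mathcal{G}(\Delta_n)$ in $v$-coordinates: show that every vertex of $\{0\leq x_j,\ x_j+x_{j+1}\leq 1\}$ other than the all-$\tfrac12$ point is $0/1$ (a vertex is determined by $n$ tight constraints; if not all tight constraints are of type $x_j=0$, a counting/parity argument on the odd cycle forces the tight "$x_j+x_{j+1}=1$" constraints to form paths whose endpoints are pinned to $0$, propagating integrality — the all-$\tfrac12$ point being the sole exception where the tight constraints form the full odd cycle), and that $p_x$ is the unique such vertex with $\sum x_j>\tfrac{n-1}2$. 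Once that combinatorial lemma is in hand, the convex hull of the remaining ($0/1$, hence deterministic) vertices is $\mathcal{C}(\Delta_n)$ and is cut out by~\eqref{kcbsineq}, and $\mathcal{G}(\Delta_n)$ has exactly one extra vertex.
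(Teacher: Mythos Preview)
Your approach to~\ref{chdim} is essentially the paper's: parametrize by the $p(v_i)$, exhibit $n+1$ independent deterministic models. (Minor slip: for the model with $p(v_j)=1$ you have $p(w_{j-1})=p(w_j)=0$ and all other $p(w_i)=1$, not the other way round.)

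For~\ref{chparta} and the vertex part of~\ref{chpartb} you take a genuinely different route. The paper proves~\ref{chparta} by observing that $\mathrm{NO}(\Delta_n)$ is perfect for even $n$ and invoking Corollary~\ref{contextualoddcycle} to get $\mathcal{C}=\CE^1$, then noting that maximal independent sets in $\mathrm{NO}(\Delta_n)$ are exactly the edges so $\CE^1=\mathcal{G}$. Your total-unimodularity argument for the even-cycle constraint matrix is a clean, self-contained alternative that avoids the perfect-graph machinery. Likewise, for identifying $p_x$ as the unique non-deterministic extreme point of $\mathcal{G}(\Delta_n)$, the paper combines Theorem~\ref{extchar} with Corollary~\ref{contextualoddcycle} (any induced subscenario $H_W$ with $\mathcal{C}(H_W)\neq\mathcal{G}(H_W)$ must contain an odd (anti-)cycle, forcing $W\supseteq\{v_1,\dots,v_n\}$), whereas you propose a direct tight-constraint analysis of the fractional stable-set polytope of $C_n$. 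Both work; yours is more elementary, the paper's reuses its general structural results.

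There is, however, a genuine gap in your final step for~\ref{chpartb}. You assert that once you know $V(\mathcal{G}(\Delta_n))=\{0/1\text{ points}\}\cup\{p_x\}$, it follows that $\mathcal{C}(\Delta_n)$ is cut out from $\mathcal{G}(\Delta_n)$ by the single inequality~\eqref{kcbsineq}. This does not follow: removing one vertex from a polytope and cutting with one halfspace are different operations in general, and $\mathcal{G}\cap\{\text{KCBS}\}$ could a priori acquire new non-integer vertices along edges $[p_x,d]$ of $\mathcal{G}$. To close the gap you would need to show that every edge of $\mathcal{G}(\Delta_n)$ incident to $p_x$ goes to a deterministic $d$ with $\sum_i d(v_i)=\tfrac{n-1}{2}$ (so the KCBS hyperplane meets that edge at $d$ itself); this is true---$p_x$ lies on exactly the $n$ facets $x_i+x_{i+1}=1$, so any edge-neighbor must satisfy $n-1$ of these, forcing it to be a maximum stable set---but you have not supplied it. The paper bypasses the issue entirely by identifying the $p(v_i)$ with edge weights of a matching on $C_n$ and citing the known facet description of the matching polytope of $C_n$, which directly gives $\{x_i\geq 0,\ x_i+x_{i+1}\leq 1,\ \sum x_i\leq\tfrac{n-1}{2}\}$ as an $H$-description of $\mathcal{C}(\Delta_n)$.
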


\begin{proof}
We consider all vertex indices modulo $n$, so that $v_{n+1}=v_1$ etc.
\begin{enumerate}
\item[\ref{chdim}] The equations imposed on the probabilities $p(v_i)$ and $p(w_i)$ by the normalization constraints are just
\beq
\label{nodecprob}
p(w_i) = 1 - p(v_i) - p(v_{i+1}) ,
\eeq
which implies $\dim(\mathcal{G}(\Delta_n))\leq n$. The conclusion for both $\mathcal{C}(\Delta_n)$ and $\mathcal{G}(\Delta_n)$ follows from this if we can produce $n+1$ linearly independent deterministic models, which together with normalization would imply that $\dim(\mathcal{C}(\Delta_n))\geq n$. This is simple: the set of models
$$
p_j(v_i) \defin \begin{cases} 1 & \textrm{if } i=j, \\ 0 & \textrm{otherwise}, \end{cases} 
$$
where $j\in\{1,\ldots,n\}$ and the $p_j(w_i)$ are uniquely determined thanks to~(\ref{nodecprob}), is linearly independent. Furthermore, adding to this set the model $p_0$ with $p_0(v_i)=0$ for all $i$ preserves linear independence. This is the desired collection of $n+1$ linearly independent deterministic models.

\item[\ref{chparta}] $\mathcal{C}(\Delta_n) =\CE^1(\Delta_n)$ follows from Corollary~\ref{contextualoddcycle}, and $\CE^1(\Delta_n) = \mathcal{G}(\Delta_n)$ holds because the maximal independent sets of $\mathrm{NO}(\Delta_n)$ are precisely the edges of $\Delta_n$. We obtain the claim by combining these two statements.

In particular, while~(\ref{contextualbox}) is also a probabilistic model for even $n$, in this case it has to be a convex combination of deterministic models. Also, note that our reasoning has not made use of~\ref{chdim}.
\item[\ref{chpartb}] 
We apply Theorem~\ref{extchar} in combination with Corollary~\ref{contextualoddcycle}. Any induced subscenario $H_W$ with $\mathcal{C}(H_W)\neq\mathcal{G}(H_W)$ needs to contain an induced (anti-)cycle of length $\geq 5$ in $\mathrm{NO}(H_W)$. This is possible only if $W$ contains all $v_i$. If $W$ also contains one or more of the $w_i$'s, then $H_W$ does not have a unique probabilistic model. Therefore, there can be at most one nonclassical extreme point of $\mathcal{G}(H)$, namely the one associated to the induced subscenario on $W\defin\{v_1,\ldots,v_n\}$. Now this $H_W$ does indeed have a unique probabilistic model given by $p_x(v_i) = \tfrac{1}{2}$, which yields~(\ref{contextualbox}) upon extension to $\Delta_n$. This proves that $\mathcal{G}(\Delta_n)$ has $p_x$ as its sole nonclassical extreme point without ever using any inequalities.

We now give an independent proof showing that~(\ref{kcbsineq}) characterizes $\mathcal{C}(\Delta_n)$. Thanks to~(\ref{nodecprob}), it is enough to consider the values $p(v_i)$ only. Now the deterministic models correspond to the independent sets in the $n$-cycle graph $C_n$; upon identifying each vertex with the edge adjacent on its left, an independent set in $C_n$ gets identified with a set of edges in $C_n$ no two of which are adjacent at the same vertex, i.e.~with a \emph{matching} on $C_n$. Now it is known~\cite{Schrijver} that the polytope of all matchings on $C_n$ is given by
$$
p(v_i) \geq 0,\qquad p(v_i) + p(v_{i+1}) \leq 1,\qquad \sum_{i=1}^n p(v_i) \leq \frac{n-1}{2} .
$$
This is precisely the description of $\mathcal{C}(\Delta_n)$ that was to be proven.\qedhere
\end{enumerate}
\end{proof}

Compare~\cite{AQTC} for the characterization of classical models in a related scenario.

Concerning the Consistent Exclusivity principle on $\Delta_n$, we have found:

\begin{prop}
$\mathcal{C}(\Delta_{3}) = \CE^1(\Delta_{3}) \subsetneq \mathcal{G}(\Delta_{3})$. For all other $n$, $\CE^1(\Delta_n) = \mathcal{G}(\Delta_n)$.
\end{prop}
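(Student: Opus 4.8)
The engine of the proof is a description of the maximal independent sets of $\mathrm{NO}(\Delta_n)$. First I would record that for every $n\geq 3$ each edge $e_i=\{v_i,w_i,v_{i+1}\}$ is a maximal independent set of $\mathrm{NO}(\Delta_n)$: it is independent since $e_i$ is an edge of $\Delta_n$, and it is maximal because $w_i$ lies in no edge of $\Delta_n$ other than $e_i$, so any vertex orthogonal to $w_i$ must already belong to $e_i$. The whole statement then reduces to deciding whether $\mathrm{NO}(\Delta_n)$ has any \emph{further} maximal independent sets, and this is exactly where the case $n=3$ versus $n\geq 4$ enters.

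For $n\geq 4$ I would prove the key lemma: every independent set $I$ of $\mathrm{NO}(\Delta_n)$ is contained in a single edge of $\Delta_n$. The argument splits into two cases. If $I$ contains some $w_i$, then every other element of $I$ is orthogonal to $w_i$ and hence, by the previous observation, lies in $e_i$; thus $I\subseteq e_i$. If $I$ contains no $w$-vertex, then $I\subseteq\{v_1,\dots,v_n\}$, and two vertices $v_j,v_k$ are orthogonal precisely when $j$ and $k$ are cyclically consecutive --- here it is essential that $n\geq 4$, since for $n=3$ all three $v$-vertices are pairwise orthogonal. A subset of the cycle $v_1,\dots,v_n$ in which any two vertices are consecutive has at most two elements, so $I$ is empty, a singleton, or of the form $\{v_i,v_{i+1}\}\subseteq e_i$. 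Since every independent set thus sits inside some edge $e$ and every $p\in\mathcal{G}(\Delta_n)$ satisfies $\sum_{v\in e}p(v)=1$ with $p\geq 0$, every Consistent Exclusivity inequality is implied by a normalization equation, and therefore $\CE^1(\Delta_n)=\mathcal{G}(\Delta_n)$ for all $n\geq 4$. (For even $n$ this is essentially the argument already used in the proof of Proposition~\ref{ch}\ref{chparta}; the lemma extends it to odd $n\geq 5$.)

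It remains to treat $n=3$. Running the same case analysis, but now using that all three $v$-vertices are pairwise orthogonal, one finds that the maximal independent sets of $\mathrm{NO}(\Delta_3)$ are the three edges $e_1,e_2,e_3$ together with $\{v_1,v_2,v_3\}$; the latter is independent because each pair $v_j,v_k$ lies in a common edge, and maximal because $w_i$ is not orthogonal to $v_{i-1}$. The edge inequalities coincide with the normalization equations, so $\CE^1(\Delta_3)=\{p\in\mathcal{G}(\Delta_3):p(v_1)+p(v_2)+p(v_3)\leq 1\}$, which by Proposition~\ref{ch}\ref{chpartb} with $n=3$ (so that $\tfrac{n-1}{2}=1$) is precisely $\mathcal{C}(\Delta_3)$; and $\mathcal{C}(\Delta_3)\subsetneq\mathcal{G}(\Delta_3)$ since the probabilistic model $p_x$ of~\eqref{contextualbox} has $\sum_i p_x(v_i)=\tfrac{3}{2}>1$.

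The only step requiring genuine care is the case analysis in the lemma, and in particular the clean separation between the ``$w$-vertex present'' and ``$v$-vertices only'' cases; the substantive point is simply the observation that $v_j\perp v_k$ for nonconsecutive indices holds exactly when $n=3$, which is what makes $\Delta_3$ the sole exception. I expect no real obstacle beyond checking that the two cases are exhaustive and that the extra maximal independent set $\{v_1,v_2,v_3\}$ for $n=3$ matches the inequality in Proposition~\ref{ch}\ref{chpartb}.
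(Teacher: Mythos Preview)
Your proposal is correct and follows essentially the same approach as the paper: both arguments hinge on the observation that for $n\geq 4$ every (maximal) independent set of $\mathrm{NO}(\Delta_n)$ is contained in an edge, while for $n=3$ the additional independent set $\{v_1,v_2,v_3\}$ yields the single inequality~\eqref{kcbsineq} characterizing $\mathcal{C}(\Delta_3)$. The paper's proof simply asserts these facts about independent sets without justification, whereas you supply the case analysis (presence or absence of a $w$-vertex) that actually verifies them.
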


\begin{proof}
Since $\{v_1,v_2,v_3\}$ is the only independent set in $\mathrm{NO}(\Delta_3)$ which is not an edge of $\Delta_3$, we find that $\CE^1(\Delta_{3})$ as a subset of $\mathcal{G}(\Delta_3)$ is given by imposing the inequality $p(v_1) + p(v_2) + p(v_3) \leq 1$. This is precisely the inequality~\eqref{kcbsineq} that determines $\mathcal{C}(\Delta_{3})$. For $n\geq 4$, however, every independent set in $\mathrm{NO}(\Delta_n)$ is of the form $\{v_i,w_i,v_{i+1}\}$, i.e.~is itself an edge.
\end{proof}

\subsection{Antiprism scenarios}
\label{antiprism}

\begin{figure}
\begin{center}
\begin{tikzpicture}[scale=0.9]
\definecolor{darkgreen}{rgb}{0,.5,0}
\definecolor{darkyellow}{rgb}{.8,.75,0}
\node[draw,shape=circle,fill,scale=.5] (a1) at (0:2) {} ;
\node at (0:2.8) {$w_2$};
\node[draw,shape=circle,fill,scale=.5] (b1) at (90:2) {} ;
\node at (90:2.8) {$w_1$};
\node[draw,shape=circle,fill,scale=.5] (c1) at (180:2) {} ;
\node at (180:2.8) {$w_4$};
\node[draw,shape=circle,fill,scale=.5] (d1) at (270:2) {} ;
\node at (270:2.8) {$w_3$};
\node[draw,shape=circle,fill,scale=.5] (a2) at (45:5) {} ;
\node at (45:5.8) {$v_2$};
\node[draw,shape=circle,fill,scale=.5] (b2) at (135:5) {} ;
\node at (135:5.8) {$v_1$};
\node[draw,shape=circle,fill,scale=.5] (c2) at (225:5) {} ;
\node at (225:5.8) {$v_4$};
\node[draw,shape=circle,fill,scale=.5] (d2) at (315:5) {} ;
\node at (315:5.8) {$v_3$};
\draw[thick,blue] plot [smooth cycle,tension=.3] coordinates { (-10:2.1) (45:5.5) (100:2.1) } ;
\draw[thick,darkyellow,rotate=90] plot [smooth cycle,tension=.3] coordinates { (-10:2.1) (45:5.5) (100:2.1) } ;
\draw[thick,blue,rotate=180] plot [smooth cycle,tension=.3] coordinates { (-10:2.1) (45:5.5) (100:2.1) } ;
\draw[thick,darkyellow,rotate=270] plot [smooth cycle,tension=.3] coordinates { (-10:2.1) (45:5.5) (100:2.1) } ;
\draw[thick,red] plot [smooth cycle,tension=.3] coordinates { (0:1.7) (312:4.5) (318:5.15) (42:5.15) (48:4.5) } ;
\draw[thick,darkgreen,rotate=90] plot [smooth cycle,tension=.3] coordinates { (0:1.7) (312:4.5) (318:5.15) (42:5.15) (48:4.5) } ;
\draw[thick,red,rotate=180] plot [smooth cycle,tension=.3] coordinates { (0:1.7) (312:4.5) (318:5.15) (42:5.15) (48:4.5) } ;
\draw[thick,darkgreen,rotate=270] plot [smooth cycle,tension=.3] coordinates { (0:1.7) (312:4.5) (318:5.15) (42:5.15) (48:4.5) } ;
\end{tikzpicture}
\end{center}
\caption{The contextuality scenario $AP_4$.}
\label{AP4}
\end{figure}
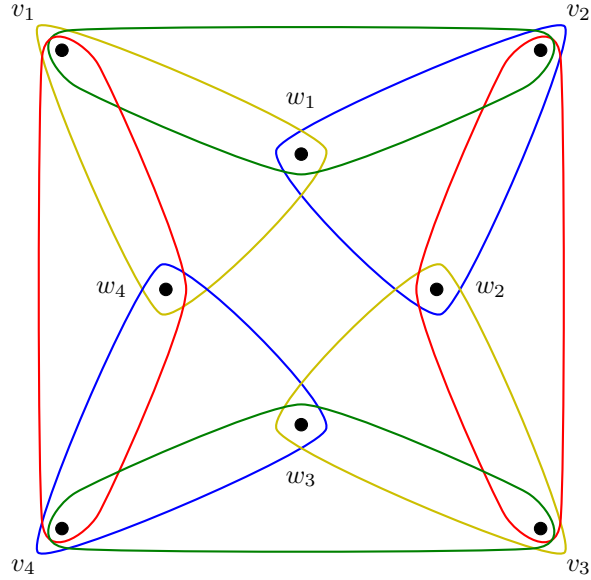

The antiprism scenarios are a variant of the circular hypergraph scenarios with some additional edges thrown in such that there is a symmetry exchanging the $v_i$ with the $w_i$. Again, we consider all vertex indices modulo $n$. The antiprism scenarios are supposed to illustrate that an interesting looking hypergraph is not necessarily an interesting contextuality scenario.

\begin{defn}
Let $n\geq 3$. The \emph{$n$-antiprism scenario} $AP_n$ is
\begin{align*}
V(AP_n) \defin & \hspace{1.0mm} \{v_1,\ldots,v_n,w_1,\ldots,w_n\} , \\
E(AP_n) \defin & \hspace{3.2mm} \left\{ \{v_1,w_1,v_2\},\ldots,\{v_n,w_n,v_1\} \right\}  \\
  & \cup \left\{ \{w_1,v_2,w_2\},\ldots,\{w_n,v_1,w_1\} \right\} .
\end{align*}
\end{defn}

The idea behind the term `antiprism' is that one gets $AP_n$ by considering the antiprism polytope over an $n$-gon and defines a hypergraph $AP_n$ as given by the band of triangles winding itself around the polytope.

\begin{prop}
If $n$ is divisible by $3$, then $\mathcal{C}(AP_n)=\mathcal{G}(AP_n)$ is a $2$-dimensional triangle. Otherwise, $AP_n$ has a unique probabilistic model which is not classical.
\end{prop}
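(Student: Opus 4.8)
The plan is to solve the normalization equations of $AP_n$ explicitly. Writing the two families of edges as $\{v_j,w_j,v_{j+1}\}$ and $\{w_j,v_{j+1},w_{j+1}\}$ with all indices taken modulo $n$, any $p\in\mathcal{G}(AP_n)$ must satisfy
\[
p(v_j)+p(w_j)+p(v_{j+1})=1 \quad\text{and}\quad p(w_j)+p(v_{j+1})+p(w_{j+1})=1
\]
for every $j$. Subtracting these gives $p(v_j)=p(w_{j+1})$, i.e.\ $p(w_j)=p(v_{j-1})$, and substituting back into the first family yields the three-term recurrence $x_{j-1}+x_j+x_{j+1}=1$ for $x_j:=p(v_j)$. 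The first key observation is that this recurrence forces period $3$: from $x_{j+1}=1-x_j-x_{j-1}$ one computes $x_{j+3}=1-x_{j+2}-x_{j+1}=1-(1-x_{j+1}-x_j)-x_{j+1}=x_j$.

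Next I would split on $n\bmod 3$. If $3\mid n$, the forced period-$3$ structure is automatically consistent with periodicity modulo $n$, and $p$ is determined by $(x_0,x_1,x_2)$ via the period-$3$ pattern together with $p(w_j)=p(v_{j-1})$; the only remaining constraints are $x_0+x_1+x_2=1$ and $x_i\ge 0$ (the conditions $0\le p(w_j)\le 1$ follow automatically). Hence, in the coordinates $(p(v_1),p(v_2),p(v_3))$, the polytope $\mathcal{G}(AP_n)$ is the standard $2$-simplex, a triangle, so $\dim\mathcal{G}(AP_n)\le 2$. Its three vertices $(1,0,0),(0,1,0),(0,0,1)$ are $\{0,1\}$-valued models, hence deterministic; they lie in $\mathcal{C}(AP_n)$, are affinely independent, and their convex hull is all of $\mathcal{G}(AP_n)$. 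Together with $\mathcal{C}(AP_n)\subseteq\mathcal{G}(AP_n)$ and convexity of $\mathcal{C}(AP_n)$ this gives $\mathcal{C}(AP_n)=\mathcal{G}(AP_n)$, a $2$-dimensional triangle.

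If $3\nmid n$, then writing $x_j=x_{j\bmod n}$ and using period $3$ shows that the pattern $(x_0,x_1,x_2)$ is invariant under a cyclic shift by $n\bmod 3\in\{1,2\}$, which forces $x_0=x_1=x_2$; with $x_0+x_1+x_2=1$ this yields the single solution $p(v_j)=p(w_j)=\tfrac13$ for all $j$. So $AP_n$ has a unique probabilistic model. It is not $\{0,1\}$-valued, so $AP_n$ has no deterministic model at all and $\mathcal{C}(AP_n)=\emptyset$; in particular the unique probabilistic model is not classical.

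The argument is mostly bookkeeping; the one place to be careful is fixing the index conventions in the two edge families so that the subtraction cleanly produces $p(v_j)=p(w_{j+1})$ and then the three-term recurrence. Once that is in place, the period-$3$ collapse and the case split are immediate, and the only thing left to verify carefully in the $3\mid n$ case is the affine independence of the three deterministic models, which is clear since they already differ on $\{v_1,v_2,v_3\}$.
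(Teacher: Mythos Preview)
Your proof is correct and follows essentially the same approach as the paper: both solve the linear system of normalization equations to uncover the period-$3$ structure and then split on $n\bmod 3$, checking in the divisible case that the three extreme points are deterministic. Your version is arguably a bit cleaner in that you first eliminate the $w$-variables via $p(w_j)=p(v_{j-1})$ to obtain the single recurrence $x_{j-1}+x_j+x_{j+1}=1$, whereas the paper keeps both families and parametrizes by $(p(v_1),p(w_1))$, but the content is identical.
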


\begin{proof}
We show that $p(v_1)$ and $p(w_1)$ determine all other probabilities $p(v_i)$ and $p(w_i)$ by induction on $i$:
$$
p(v_{i+1}) = 1 - p(v_i) - p(w_i) ,\qquad p(w_{i+1}) = 1 - p(w_i) - p(v_{i+1}) .
$$
In fact, this shows that for all $j$,
$$
p(v_{3j+1}) = p(w_{3j+2}) = p(v_1),\quad p(v_{3j+2}) = p(w_{3j}) = 1-p(v_1)-p(w_1), \quad p(v_{3j}) = p(w_{3j+1}) = p(w_1) .
$$
Now if $n$ is divisible by $3$, then this is consistent upon `going around the cycle', so that $\mathcal{G}(AP_n)$ can be identified with the triangle
$$
p(v_1) \geq 0,\qquad p(v_2) \geq 0,\qquad p(v_1) + p(v_2) \leq 1.
$$
Clearly, the extreme points of this triangle are deterministic, and therefore $\mathcal{C}(AP_n)=\mathcal{G}(AP_n)$.

If $n$ is not divisible by $3$, then the above recurrence relations imply that $p(v_1) = p(v_2) = \tfrac{1}{3}$, so that $\mathcal{G}(AP_n)$ degenerates to a single point. $\mathcal{C}(AP_n)=\emptyset$ holds since there is no deterministic model.
\end{proof}

This may make clear that as a contextuality scenario, $AP_n$ is not very interesting. Nevertheless, it serves well for illustrating our methods once more:

\begin{prop}
\label{APQ1}
$\mathcal{Q}_1(AP_4) = \emptyset$, although $\CE^1(AP_4) = \mathcal{G}(AP_4)$.
\end{prop}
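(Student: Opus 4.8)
The plan is to use the fact, already established above, that $\mathcal{G}(AP_4)$ consists of a single probabilistic model $p$, the one with $p(v_i)=p(w_i)=\tfrac13$ for all $i$; both halves of the statement then reduce to assertions about this one $p$. The organizing observation I would make first is that, if we list the vertices of $AP_4$ around a circle in the cyclic order $v_1,w_1,v_2,w_2,v_3,w_3,v_4,w_4$ and identify them with $\mathbb{Z}_8$, then the eight edges of $AP_4$ are exactly the eight triples $\{i,i+1,i+2\}$ of three cyclically consecutive vertices. In particular, two distinct vertices lie in a common edge (i.e.\ are orthogonal in $AP_4$) precisely when their cyclic distance is $1$ or $2$.

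For the identity $\CE^1(AP_4)=\mathcal{G}(AP_4)$ I would invoke Proposition~\ref{SPvsFIN}: since $\mathcal{G}(AP_4)=\{p\}$, it suffices to check $\alpha(\mathrm{NO}(AP_4),p)\le 1$, i.e.\ that $\sum_{v\in I}p(v)\le 1$ for every independent set $I$ of $\mathrm{NO}(AP_4)$. An independent set of $\mathrm{NO}(AP_4)$ is a set of pairwise orthogonal vertices, hence by the remark above a subset of $\mathbb{Z}_8$ that is pairwise at cyclic distance $\le 2$. The combinatorial core is then the elementary claim that no such set has four elements: listing four of them in cyclic order with consecutive gaps $g_1,g_2,g_3,g_4\ge 1$ summing to $8$, the ``long'' pair $(a_1,a_3)$ forces $g_1+g_2\le 2$ or $g_3+g_4\le 2$, hence after a rotation of indices $g_1=g_2=1$ and $g_3+g_4=6$, which is incompatible with the remaining distance-$\le 2$ constraints on $(a_3,a_4)$ and $(a_4,a_1)$ (both of which would force $g_3,g_4\le 2$). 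Therefore every independent set has at most $3$ vertices and $\sum_{v\in I}p(v)\le 3\cdot\tfrac13=1$, so $p\in\CE^1(AP_4)$ and equality holds.

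For $\mathcal{Q}_1(AP_4)=\emptyset$ I would again use $\mathcal{Q}_1(AP_4)\subseteq\mathcal{G}(AP_4)=\{p\}$ and show $p\notin\mathcal{Q}_1(AP_4)$, working from the characterization in Proposition~\ref{CSWeq}\ref{psum}. Assume for contradiction that there are projections $P_v$ on a Hilbert space with $\sum_{v\in e}P_v\le\mathbbm{1}$ for every edge $e$ and a unit vector $|\Psi\rangle$ with $\langle\Psi|P_v|\Psi\rangle=\tfrac13$. For each edge $e$ one has $\langle\Psi|\sum_{v\in e}P_v|\Psi\rangle=\sum_{v\in e}p(v)=1=\langle\Psi|\mathbbm{1}|\Psi\rangle$ together with $\mathbbm{1}-\sum_{v\in e}P_v\ge 0$, which forces $\sum_{v\in e}P_v|\Psi\rangle=|\Psi\rangle$. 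Applying this to the two edges $\{i,i+1,i+2\}$ and $\{i+1,i+2,i+3\}$ and subtracting yields $P_i|\Psi\rangle=P_{i+3}|\Psi\rangle$ for every $i\in\mathbb{Z}_8$; since $\gcd(3,8)=1$, all the vectors $P_v|\Psi\rangle$ coincide, say with $|\xi\rangle$, and then the edge $\{0,1,2\}$ gives $3|\xi\rangle=|\Psi\rangle$, i.e.\ $|\xi\rangle=\tfrac13|\Psi\rangle$. But then $p(v)=\langle\Psi|P_v|\Psi\rangle=\bigl\|P_v|\Psi\rangle\bigr\|^2=\bigl\||\xi\rangle\bigr\|^2=\tfrac19\neq\tfrac13$, a contradiction.

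I expect the only genuinely delicate point to be the first step: one must set up the ``edges $=$ triples of consecutive vertices'' picture carefully, since it is exactly what supplies the no-signalling-type subtraction $P_i|\Psi\rangle=P_{i+3}|\Psi\rangle$ and also what drives the size bound on independent sets. I would deliberately avoid the alternative route to $\mathcal{Q}_1(AP_4)=\emptyset$ via Proposition~\ref{Q1vsLov} (showing $\vartheta(\mathrm{NO}(AP_4),p)>1$, equivalently $\vartheta(\mathrm{NO}(AP_4))>3$), because $\mathrm{NO}(AP_4)$ is isomorphic to the $8$-vertex Wagner graph / M\"obius ladder $C_8(1,4)$, which is vertex-transitive but \emph{not} edge-transitive, so the eigenvalue formula only yields an upper bound on $\vartheta$ and a clean lower bound of $3$ is awkward to extract; the direct operator argument above sidesteps this entirely.
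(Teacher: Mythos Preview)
Your argument is correct. The $\CE^1(AP_4)=\mathcal{G}(AP_4)$ half is essentially identical to the paper's: the paper simply asserts by ``direct inspection'' that every maximal independent set in $\mathrm{NO}(AP_4)$ is an edge, and your cyclic-gap count is exactly that inspection written out. The $\mathcal{Q}_1(AP_4)=\emptyset$ half, however, takes a genuinely different route. The paper goes via Proposition~\ref{Q1vsLov}: it identifies $\mathrm{NO}(AP_4)$ as the complement of the $4$-antiprism graph, invokes vertex-transitivity to get $\vartheta(\textarc{m}_4)\,\vartheta(\mathrm{NO}(AP_4))=8$, and then cites the known value $\vartheta(\textarc{m}_4)=8-4\sqrt{2}$ from~\cite{BPT} to conclude $\vartheta(\mathrm{NO}(AP_4))=2+\sqrt{2}>3$. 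Your argument instead stays inside the characterization~\ref{CSWeq}\ref{psum}: from $\sum_{v\in e}P_v\le\mathbbm{1}$ and $\sum_{v\in e}p(v)=1$ you force $\sum_{v\in e}P_v|\Psi\rangle=|\Psi\rangle$, and the overlapping structure of consecutive edges then gives $P_i|\Psi\rangle=P_{i+3}|\Psi\rangle$ for all $i\in\mathbb{Z}_8$; since $3$ generates $\mathbb{Z}_8$, all $P_v|\Psi\rangle$ coincide, and the resulting identity $\|P_v|\Psi\rangle\|^2=\tfrac19\neq\tfrac13$ finishes. This is more elementary and entirely self-contained (no external Lov\'asz number computation), at the cost of not yielding the exact value $\vartheta(\mathrm{NO}(AP_4))=2+\sqrt{2}$. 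Your closing remark slightly undersells the paper's route, though: the product identity $\vartheta(G)\vartheta(\overline{G})=|V|$ for vertex-transitive $G$ sidesteps the edge-transitivity issue you raise, so the paper's computation is cleaner than one might expect.
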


\begin{proof}
Direct inspection shows that every maximal independent set in $\mathrm{NO}(AP_4)$ is an edge, so that the unique probabilistic model given by $p(v_i)=p(w_i)=\tfrac{1}{3}$ is in $\CE^1(AP_4)$. 

It remains to show that this unique probabilistic model is not in $\mathcal{Q}_1(AP_4)$. By Proposition~\ref{Q1vsLov}, this boils down to showing that $\tfrac{1}{3}\vartheta(\mathrm{NO}(AP_4))>1$. Now $\mathrm{NO}(AP_n)$ is the complement of the $4$-antiprism graph $\textarc{m}_4$. Since $\textarc{m}_4$ is vertex-transitive, we deduce~\cite[Thm.~25]{Knuth} that $\vartheta(\textarc{m}_4)\vartheta(\mathrm{NO}(AP_4))=8$. Now $\vartheta(\textarc{m}_4)$ is known~\cite{BPT} to equal $8-4\sqrt{2}$, so that
$$
\vartheta(\mathrm{NO}(AP_4))=\frac{8}{8-4\sqrt{2}} = \frac{2}{2-\sqrt{2}} = 2+\sqrt{2} > 3,
$$
as was to be shown.
\end{proof}

Also, note that the antiprism graph $\textarc{m}_4$ which appears in this proof has also arisen as the non-orthogonality graph of possible events for the PR-box~\cite{Cabello,LOfp}.

\subsection{Matching scenarios}

We now study briefly a very interesting and relevant family of contextuality scenarios. Let $K_m$ be the complete graph on $m$ vertices. In order not to confuse the vertices and edges of $K_m$ with the vertices and edges of the contextuality scenario that we will construct, we will talk about \emphalt{nodes} and \emphalt{arcs} when referring to $K_m$.

We define a contextuality scenario $\mathrm{Mat}_m$ as follows. $V(\mathrm{Mat}_m)$ is defined to be the set of arcs of $K_m$, so that $|V(\mathrm{Mat}_m)|=\tfrac{m(m-1)}{2}$. The set of edges of $\mathrm{Mat}_m$ is $E(\mathrm{Mat}_m)=\{e_1,\ldots,e_m\}$, where $e_j$ is indexed by a node $j\in K_m$ and is defined to be the set of all arcs in $K_m$ adjacent to the node $j$. In the language of hypergraph theory~\cite{Vol}, $\mathrm{Mat}_m$ is the \emph{dual} of $K_m$. For reasons that will become clear, we call it a \emph{matching scenario}.

Matching scenarios have been studied previously: for example, $\mathrm{Mat}_5$ coincides with Figure~2(b) from~\cite{P3M}. Moreover, using the CSW formalism~\cite{CSW}, it has also been studied in~\cite{Cabtwin}. These latter results can be transferred to our setting using the construction of Section~\ref{CSWtransfer}, but they will live in the contextuality scenario $\mathrm{Mat}'_5$ which contains additional vertices representing no-detection events. Studying the scenario $\mathrm{Mat}_m$ itself is more interesting than that; and in fact, after a first version of this paper was made public, it was found~\cite{simplest} that there are quantum models of $\mathrm{Mat}_7$ in $\mathcal{H}=\C^6$. Since we show below that $\mathcal{C}(\mathrm{Mat}_7)=\emptyset$, this constitutes a new state-independent proof of the Kochen--Specker theorem in the sense of Section~\ref{siKS}.

There are certain probabilistic models on $\mathrm{Mat_m}$ which have a special form. By a \emph{half-integer matching}, we mean a probabilistic model on $\mathrm{Mat}_m$ in which each probability lies in $\{0,\tfrac{1}{2},1\}$ in such a way that the arcs with positive probability define a decomposition of $K_m$ into cycles of odd length, where we regard an arc of probability $1$ as a cycle of length $1$. In particular, every \emph{perfect matching} on $K_m$ can be regarded as a half-integer matching.

\begin{prop}
\label{Mat}
\begin{enumerate}
\item\label{MatD} The deterministic models on $\mathrm{Mat}_m$ are precisely the perfect matchings on $K_m$.
\item\label{MatC} $\mathcal{C}(\mathrm{Mat}_m)$ is the perfect matching polytope~\cite{Schrijver} on $K_m$. In particular, $\mathcal{C}(\mathrm{Mat_m})\neq \emptyset$ if and only if $m$ is even.
\item\label{MatG} $\mathcal{G}(\mathrm{Mat}_m)$ is the fractional matching polytope. Its extreme points are precisely the half-integer matchings.
\item\label{MatLO} $\CE^1(\mathrm{Mat}_m)$ is a polytope strictly intermediate between $\mathcal{C}(\mathrm{Mat_m})$ and $\mathcal{G}(\mathrm{Mat_m})$ for $m\geq 5$.
\end{enumerate}
\end{prop}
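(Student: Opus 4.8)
The plan is to translate everything into the classical language of matchings in $K_m$ and then invoke known polyhedral facts, concentrating the real work on part~\ref{MatLO}.

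Parts~\ref{MatD}, \ref{MatC}, \ref{MatG} are essentially bookkeeping. A deterministic model assigns values in $\{0,1\}$ to the arcs of $K_m$ so that exactly one value-$1$ arc is incident to each node (normalization at $e_j$); this is precisely a perfect matching, and conversely, giving~\ref{MatD}. Hence $\mathcal C(\mathrm{Mat}_m)$, being by definition the convex hull of the deterministic models, is the convex hull of the perfect matchings of $K_m$, which is the perfect matching polytope \cite{Schrijver}; it is nonempty iff $K_m$ has a perfect matching iff $m$ is even, proving~\ref{MatC}. A probabilistic model is exactly an assignment $p\colon E(K_m)\to[0,1]$ with $\sum_{a\ni j}p(a)=1$ for every node $j$, i.e.\ a fractional perfect matching, so $\mathcal G(\mathrm{Mat}_m)$ is the fractional perfect matching polytope; its extreme points are, by a classical theorem, the half-integral vectors supported on a vertex-disjoint union of edges and odd cycles \cite{Schrijver}, which for $K_m$ are exactly the half-integer matchings, giving~\ref{MatG}. (An alternative, self-contained route uses Theorem~\ref{extchar}: an extreme point of $\mathcal G(\mathrm{Mat}_m)$ is the extension of the unique probabilistic model of $\mathrm{Mat}_W$ for some set of arcs $W$ spanning all nodes, and $\mathrm{Mat}_W$ has a unique fractional perfect matching precisely when the subgraph with edge set $W$ is a disjoint union of $K_2$'s and odd cycles.)

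For~\ref{MatLO} I would first identify $\mathrm{NO}(\mathrm{Mat}_m)$: two arcs are orthogonal iff they lie in a common edge $e_j$, i.e.\ iff they share a node, so $\mathrm{NO}(\mathrm{Mat}_m)=\overline{L(K_m)}$, the complement of the line graph of $K_m$. Thus the independent sets of $\mathrm{NO}(\mathrm{Mat}_m)$ are exactly the pairwise-intersecting families of arcs of $K_m$, and an elementary lemma shows that every maximal such family is either a full star $e_j$ or a triangle $\{ij,jk,ki\}$. In Definition~\ref{defnCE} the star inequalities are already the normalization equalities, so the only new constraints are the triangle inequalities, and therefore
\[
\CE^1(\mathrm{Mat}_m)=\bigl\{\,p\in\mathcal G(\mathrm{Mat}_m) : p(ij)+p(jk)+p(ki)\le 1 \text{ for all triples } \{i,j,k\}\,\bigr\}.
\]
Comparing with Edmonds' description of the perfect matching polytope ($\mathcal G(\mathrm{Mat}_m)$ cut by the odd-set inequalities $p(E[S])\le\lfloor|S|/2\rfloor$, whose $|S|=3$ instances are the triangle inequalities) gives $\mathcal C(\mathrm{Mat}_m)\subseteq\CE^1(\mathrm{Mat}_m)\subseteq\mathcal G(\mathrm{Mat}_m)$ immediately. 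For strictness I would produce explicit half-integer matchings: $\CE^1\subsetneq\mathcal G$ follows for all $m\ge5$ by taking a $3$-cycle at weight $\tfrac12$ together with a perfect matching (if $m$ is odd) or one further odd cycle of length $m-3$ (if $m$ is even) on the remaining nodes, which violates the triangle inequality on that $3$-cycle; and $\mathcal C\subsetneq\CE^1$ follows for odd $m\ge5$ since then $\mathcal C(\mathrm{Mat}_m)=\emptyset$ while the spanning odd cycle $C_m$ at weight $\tfrac12$ is triangle-free, and for even $m\ge10$ by the half-integer matching $C_5\cup C_{m-5}$ (with $m-5\ge5$ odd), which obeys every triangle inequality but violates $p(E[S])\le 2$ for $S=V(C_5)$.

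The delicate point, which is really where the main effort goes, is this last inclusion for small even $m$: for $m=6$ and $m=8$, since the odd-set inequality for $S$ and for its complement are equivalent modulo the degree equalities, Edmonds' description uses only $|S|\le m/2$, i.e.\ only the triangle inequalities, and hence $\mathcal C(\mathrm{Mat}_m)=\CE^1(\mathrm{Mat}_m)$ exactly. So the statement as given should be read with $m=6,8$ excluded (it holds for $m=5$ and for all $m\ge7$ with $m\ne8$), and the proof must argue precisely for which $m$ the triangle inequalities already cut out the whole perfect matching polytope — pinning down, in each remaining case, a concrete odd-cycle decomposition witnessing the strictness.
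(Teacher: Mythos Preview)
Your treatment of~\ref{MatD}--\ref{MatG} is essentially identical to the paper's: deterministic models are exact transversals, hence perfect matchings; $\mathcal{C}$ is their convex hull; $\mathcal{G}$ is the fractional perfect matching polytope with half-integral extreme points (the paper likewise cites~\cite{Schrijver} and remarks that Theorem~\ref{extchar} gives an alternative route).

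For~\ref{MatLO} your strategy is also the paper's---identify the maximal independent sets of $\mathrm{NO}(\mathrm{Mat}_m)$ as stars and triangles, so that $\CE^1$ is $\mathcal{G}$ cut by the triangle inequalities, and then exhibit half-integer matchings witnessing each strict inclusion---but you carry it out more carefully and in doing so have caught a genuine error in the paper. The paper argues that ``$\CE^1(\mathrm{Mat}_m)$ still contains half-integer matchings with odd cycles of length $\geq 5$, which are not in $\mathcal{C}(\mathrm{Mat}_m)$''; this is exactly the step that fails for $m\in\{6,8\}$, since no decomposition of $6$ or $8$ nodes into single arcs and odd cycles avoids a $3$-cycle. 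Your complementation argument makes this precise: the odd-set inequality for $S$ reads $p(\delta(S))\geq 1$, which is identical for $S$ and $V\setminus S$, so for $m=6,8$ only the $|S|=3$ inequalities are needed in Edmonds' description, and hence $\mathcal{C}(\mathrm{Mat}_m)=\CE^1(\mathrm{Mat}_m)$ there. The correct range for the ``strictly intermediate'' claim is thus $m=5$, $m=7$, and $m\geq 9$, with your explicit witnesses ($C_m$ for odd $m$, and $C_5\cup C_{m-5}$ for even $m\geq 10$) doing the job. This is a sharper analysis than the paper gives, and it exposes that the paper's statement as written is false at $m=6,8$.
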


\begin{proof}
\begin{enumerate}
\item Using Remark~\ref{V1}, a deterministic model corresponds to a collection of arcs in $K_m$ such that there is exactly one arc incident to each node. This is the definition of perfect matching.
\item $\mathcal{C}(\mathrm{Mat}_m)$ is defined to be the convex hull of the deterministic models, and likewise the perfect matching polytope is defined to be the convex hull of the perfect matchings, in the same ambient space. Therefore this follows immediately from~\ref{MatD}.
\item The inequalities defining $\mathcal{G}(\mathrm{Mat}_m)$ are precisely those defining the standard linear relaxation of the perfect matching polytope. Its extreme points are known to be the half-integer matchings~\cite{Schrijver}. This can also be proven using Theorem~\ref{extchar}.
\item For $m\geq 5$, there are two kinds of maximal independent sets in $\mathrm{NO}(\mathrm{Mat}_m)$: first, the edges of $\mathrm{Mat}_m$ themselves; second, all triples of arcs in $K_m$ that form a triangle. In $\CE^1(\mathrm{Mat}_m)$, the latter impose the additional constraint that the sum of the edge weights in a triangle should not exceed $1$. Hence the half-integer matchings with cycles of length $3$ do not belong to $\CE^1(\mathrm{Mat}_m)$, which is therefore a polytope strictly contained in $\mathcal{G}(\mathrm{Mat}_m)$. On the other hand, $\CE^1(\mathrm{Mat}_m)$ still contains half-integer matchings with odd cycles of length $\geq 5$, which are not in $\mathcal{C}(\mathrm{Mat}_m)$.\qedhere
\end{enumerate}
\end{proof}

By definition, $\mathrm{NO}(\mathrm{Mat}_m)$ is the Kneser graph $KG_{m,2}$~\cite{LovKneser}. In particular, $\mathrm{NO}(\mathrm{Mat}_5)$ is the well-known Petersen graph, one of the most widely studied graphs in graph theory. For those interested in combinatorial optimization, the curiosities do not end here:

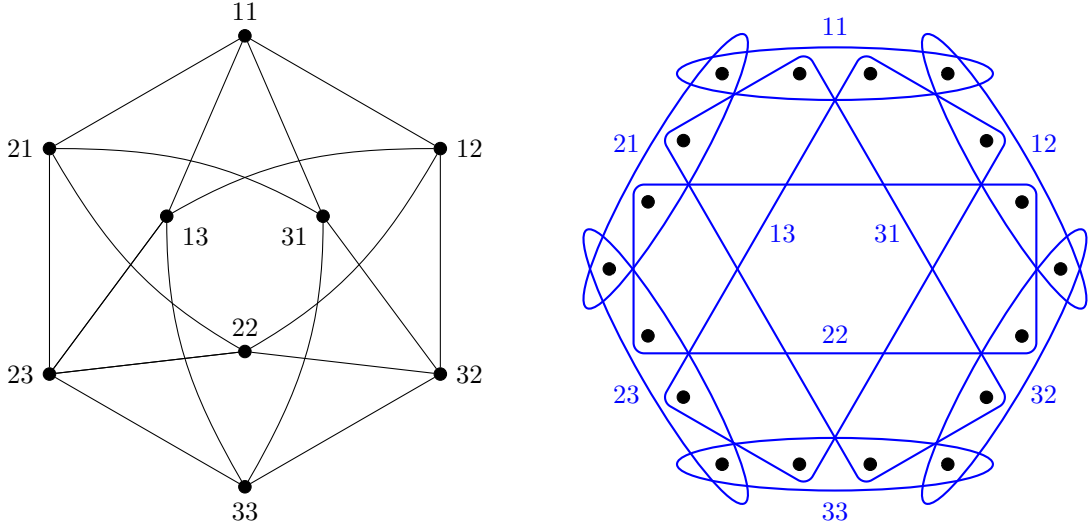
\begin{figure}
\subfigure[$GQ(2,1)$ with nodes labeled such that two nodes share an arc if and only if their labels differ in one position.]{
\label{GQ}
\begin{centering}
\begin{tikzpicture}
\node[draw,shape=circle,fill,scale=.5,label=right:$12$] (12) at (30:3) {} ;
\node[draw,shape=circle,fill,scale=.5,label=above:$11$] (11) at (90:3) {} ;
\node[draw,shape=circle,fill,scale=.5,label=left:$21$] (21) at (150:3) {} ;
\node[draw,shape=circle,fill,scale=.5,label=left:$23$] (23) at (210:3) {} ;
\node[draw,shape=circle,fill,scale=.5,label=below:$33$] (33) at (270:3) {} ;
\node[draw,shape=circle,fill,scale=.5,label=right:$32$] (32) at (330:3) {} ;
\node[draw,shape=circle,fill,scale=.5,label=200:$31$] (31) at (30:1.2) {} ;
\node[draw,shape=circle,fill,scale=.5,label=-20:$13$] (13) at (150:1.2) {} ;
\node[draw,shape=circle,fill,scale=.5,label=90:$22$] (22) at (270:1.2) {} ;
\draw (11) -- (12) -- (32) -- (33) -- (23) -- (21) -- (11) ;
\draw (13) -- (11) -- (31) ;
\draw (31) -- (32) -- (22) ;
\draw (22) -- (23) -- (13) ;
\draw (22) -- (23) -- (13) ;
\draw (33) edge[out=120,in=270] (13) edge[out=60,in=270] (31) ;
\draw (21) edge[out=0,in=150] (31) edge[out=300,in=150] (22) ;
\draw (12) edge[out=240,in=30] (22) edge[out=180,in=30] (13) ;
\end{tikzpicture}
\end{centering}}
\hspace{.5cm}
\subfigure[A redrawing of Figure~\ref{CKSfig} with edge labels corresponding to the node labels of~\subref{GQ}.]{
\label{CKSfig2}
\begin{centering}
\begin{tikzpicture}
\node[draw,shape=circle,fill,scale=.5] (a) at (0:3) {} ;
\node[draw,shape=circle,fill,scale=.5] (b) at (60:3) {} ;
\node[draw,shape=circle,fill,scale=.5] (c) at (120:3) {} ;
\node[draw,shape=circle,fill,scale=.5] (d) at (180:3) {} ;
\node[draw,shape=circle,fill,scale=.5] (e) at (240:3) {} ;
\node[draw,shape=circle,fill,scale=.5] (f) at (300:3) {} ;
\draw[white] (a) -- (b) node[pos=.333,draw,shape=circle,fill,scale=.5,black] (a1) {} node[pos=.5] (am) {} node[pos=.666,draw,shape=circle,fill,scale=.5,black] (a2) {} ;
\draw[white] (b) -- (c) node[pos=.333,draw,shape=circle,fill,scale=.5,black] (b1) {} node[pos=.5] (bm) {} node[pos=.666,draw,shape=circle,fill,scale=.5,black] (b2) {} ;
\draw[white] (c) -- (d) node[pos=.333,draw,shape=circle,fill,scale=.5,black] (c1) {} node[pos=.5] (cm) {} node[pos=.666,draw,shape=circle,fill,scale=.5,black] (c2) {} ;
\draw[white] (d) -- (e) node[pos=.333,draw,shape=circle,fill,scale=.5,black] (d1) {} node[pos=.5] (dm) {} node[pos=.666,draw,shape=circle,fill,scale=.5,black] (d2) {} ;
\draw[white] (e) -- (f) node[pos=.333,draw,shape=circle,fill,scale=.5,black] (e1) {} node[pos=.5] (em) {} node[pos=.666,draw,shape=circle,fill,scale=.5,black] (e2) {} ;
\draw[white] (f) -- (a) node[pos=.333,draw,shape=circle,fill,scale=.5,black] (f1) {} node[pos=.5] (fm) {} node[pos=.666,draw,shape=circle,fill,scale=.5,black] (f2) {} ;
\draw[thick,blue,rotate=120] (am) ellipse (2.1cm and .35cm) node [label={[label distance=.1cm]30:$12$}] {} ;
\draw[thick,blue,rotate=180] (bm) ellipse (2.1cm and .35cm) node [label={[label distance=.25cm]90:$11$}] {} ;
\draw[thick,blue,rotate=240] (cm) ellipse (2.1cm and .35cm) node [label={[label distance=.1cm]150:$21$}] {} ;
\draw[thick,blue,rotate=300] (dm) ellipse (2.1cm and .35cm) node [label={[label distance=.1cm]210:$23$}] {} ;
\draw[thick,blue,rotate=0] (em) ellipse (2.1cm and .35cm) node [label={[label distance=.25cm]270:$33$}] {} ;
\draw[thick,blue,rotate=60] (fm) ellipse (2.1cm and .35cm) node [label={[label distance=.1cm]330:$32$}] {} ;
\node (a3) at ($ (a2) + (10:.3)$) {} ;
\node (b3) at ($ (b2) + (70:.3)$) {} ;
\node (c3) at ($ (c2) + (130:.3)$) {} ;
\node (d3) at ($ (d2) + (190:.3)$) {} ;
\node (e3) at ($ (e2) + (250:.3)$) {} ;
\node (f3) at ($ (f2) + (310:.3)$) {} ;
\draw[thick,blue,rounded corners,rotate=150] (a3) rectangle (d3) node [pos=.5,label={[label distance=.3cm]150:$13$}] {} ;
\draw[thick,blue,rounded corners,rotate=30] (e3) rectangle (b3) node [pos=.5,label={[label distance=.3cm]30:$31$}] {} ;
\draw[thick,blue,rounded corners,rotate=270] (c3) rectangle (f3) node [pos=.5,label={[label distance=.5cm]270:$22$}] {} ;
\end{tikzpicture}
\end{centering}}
\caption{The generalized quadrangle graph $GQ(2,1)$ together with an illustration of the correspondence to Figure~\ref{CKSfig}: the arcs of~\subref{GQ} represent the vertices of~\subref{CKSfig2}, while all the arcs adjacent to a given node in~\subref{GQ} determine an edge in~\subref{CKSfig2}.}
\label{CKSGQ}
\end{figure}

\begin{cor}
$\CE^1(\mathrm{Mat}_5)$, when scaled by a factor of $2$, is the symmetric traveling salesman polytope $\mathrm{STSP}(5)$~\cite{GP,NP}.
\end{cor}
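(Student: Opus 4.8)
The plan is to compare inequality descriptions. By the proof of Proposition~\ref{Mat}\ref{MatLO}, identifying $V(\mathrm{Mat}_5)$ with the arc set of $K_5$, the polytope $\CE^1(\mathrm{Mat}_5)$ consists of those $p\colon E(K_5)\to[0,1]$ satisfying the normalization equations $\sum_{a\ni j}p(a)=1$ for every node $j$ of $K_5$ together with $\sum_{a\in T}p(a)\le 1$ for every triangle $T$ of $K_5$ — the triangle constraints and the edges of $\mathrm{Mat}_5$ themselves (which just reproduce the normalization equations) exhaust the maximal independent sets of $\mathrm{NO}(\mathrm{Mat}_5)$, the Petersen graph. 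First I would apply the affine bijection $p\mapsto x\defin 2p$; it carries $\CE^1(\mathrm{Mat}_5)$ onto the polytope $P\subseteq\R^{E(K_5)}$ cut out by
\[
x(a)\ge 0\ \ \forall a,\qquad \sum_{a\ni j}x(a)=2\ \ \forall j,\qquad \sum_{a\in T}x(a)\le 2\ \ \forall\text{ triangles }T .
\]

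Next I would recognize $P$ as the subtour-elimination relaxation of the symmetric travelling salesman problem on $K_5$. The equations $\sum_{a\ni j}x(a)=2$ are the usual degree constraints; the triangle inequalities are exactly the subtour-elimination constraints $\sum_{a\subseteq S}x(a)\le|S|-1$ for $|S|=3$; the constraints for $|S|=4=n-1$ hold automatically with equality, since a degree count shows $\sum_{a\subseteq S}x(a)=n-2$ whenever $|S|=n-1$ and $x$ obeys the degree equations; and the constraints for $|S|=2$, i.e.\ $x(a)\le 1$, are implied, because for $a=\{i,j\}$ and the complementary triangle $T$ on the other three nodes the degree equations give $\sum_{b\in T}x(b)=1+x(a)$, so that $\sum_{b\in T}x(b)\le 2$ forces $x(a)\le 1$. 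Hence $P$ equals the subtour-elimination polytope of $K_5$. It is a classical polyhedral fact \cite{GP,NP} that for $n=5$ this relaxation is tight, i.e.\ it coincides with $\mathrm{STSP}(5)=\mathrm{conv}\{\,\chi_C : C\text{ a Hamiltonian cycle of }K_5\,\}$; therefore $P=\mathrm{STSP}(5)$, which is the assertion.

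Alternatively one can argue at the level of vertices: by Proposition~\ref{Mat}\ref{MatG} the extreme points of $\mathcal{G}(\mathrm{Mat}_5)$ are the half-integer matchings of $K_5$, and imposing the triangle inequalities removes precisely those containing a $3$-cycle, leaving exactly the single $5$-cycles with all weights $\tfrac12$; scaling by $2$ turns these into the incidence vectors of the Hamiltonian cycles of $K_5$, whose convex hull is $\mathrm{STSP}(5)$ by definition (one can alternatively reach the list of extreme points via Theorem~\ref{extchar}). This route still needs the $\mathrm{STSP}(5)$ polyhedral theorem in order to conclude that $\CE^1(\mathrm{Mat}_5)$ has no extreme points beyond these (equivalently, that $P$ has no further vertices), so the main obstacle in either approach is the same: one must invoke — or reprove — the known description of $\mathrm{STSP}(5)$ by subtour-elimination constraints. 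Everything else, namely the hypergraph bookkeeping, the scaling, and the redundancy of the $|S|\in\{2,4\}$ constraints, is routine.
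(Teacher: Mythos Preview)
Your proposal is correct. The paper's own proof takes essentially your second (vertex-based) route: it observes that since $5$ is odd the half-integer matchings on $K_5$ surviving the triangle constraints are precisely the weight-$\tfrac12$ Hamiltonian cycles, and then declares these to be the extreme points of $\CE^1(\mathrm{Mat}_5)$. You are right that this last step is not automatic---intersecting $\mathcal{G}(\mathrm{Mat}_5)$ with the triangle half-spaces could in principle create new vertices---and that closing it amounts exactly to the classical fact (cited as \cite{GP,NP}) that the subtour-elimination relaxation is tight for $n=5$. The paper leaves this implicit; you make it explicit.

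Your first argument, identifying $2\cdot\CE^1(\mathrm{Mat}_5)$ directly with the subtour-elimination polytope via the inequality description (and checking that the $|S|=2$ and $|S|=4$ subtour constraints are redundant under the degree equations), is a genuinely different packaging. It has the advantage of isolating cleanly the single non-trivial input---the polyhedral description of $\mathrm{STSP}(5)$---and making the rest of the argument purely linear-algebraic bookkeeping, whereas the paper's vertex argument buries the same input inside an unproven assertion about extreme points. Both routes are fine; yours is the more transparent one.
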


\begin{proof}
Since $5$ is odd, $K_5$ has no perfect matchings. Therefore, every half-integer matching on $K_m$ is a disjoint union of cycles of edges with weight $\tfrac{1}{2}$. Now it follows from~\ref{MatLO} that every extremal vertex of $\CE^1(\mathrm{Mat}_m)$ is a cycle of length $5$ with weight $\tfrac{1}{2}$ on each edge, or in other words the incidence vector of a traveling salesman tour scaled by a factor of $\tfrac{1}{2}$.
\end{proof}

Nothing in the definition of $\mathrm{Mat}_m$ is specific to $K_m$ and can likewise be done starting with any other graph. For example, if one starts with the generalized quadrangle graph $GQ(2,1)$~\cite{quadrangle} depicted in Figure~\ref{GQ}, then one obtains the Cabello--Kochen--Specker scenario shown in Figures~\ref{CKSfig} and~\ref{CKSfig2} by defining a vertex to be an arcs in Figure~\ref{GQ} and an edge to be determined by the four arcs incident at one node.

\appendix
\newpage

\newpage
\section{\textbf{Background on graph theory}}
\label{appcap}

\setcounter{theo}{0}

This section starts by reviewing standard material on the invariants of graphs which are of relevance to the main text, first for unweighted and then for weighted graphs, mostly without proofs. 

For us, a \emph{graph} is an undirected simple graph without isolated vertices. When $G$ is a graph, we denote its set of vertices by $V(G)$. For $u,v\in V(G)$, we write $u\sim_G v$ whenever $u$ and $v$ share an edge (are \emph{adjacent}) in $G$. Usually the graph $G$ is clear from the context, and then we simply write $u\sim v$.

There are many ways to take products of graphs~\cite{IK}. For us, the relevant one is this:

\begin{defn}
Let $G_1$ and $G_2$ be graphs. Their \emph{strong product} is the graph $G_1\boxtimes G_2$ with
$$
V(G_1\boxtimes G_2) \defin V(G_1) \times V(G_2) 
$$
and $(u_1,u_2)\sim (v_1,v_2)$ whenever 
$$
\left( u_1\sim v_1 \land u_2\sim v_2 \right) \lor \left( u_1\sim v_1 \land u_2=v_2 \right) \lor \left( u_1=v_1 \land u_2\sim v_2 \right) .
$$
\end{defn}

This rule for when $(u_1,u_2)\sim (v_2,v_2)$ can be intuitively understood if one thinks of $G_1$ and $G_2$ as \emph{confusability graphs} whose vertices describe items which can be confused with each other whenever they share an edge~\cite{Shannon}. Then, a pair of items $(u_1,u_2)$ can be confused with a pair of items $(v_1,v_2)$ if $u_1$ can be confused with $v_1$ or $u_1=v_1$, and $u_2$ can be confused with $v_2$ or $u_2=v_2$.

For $n\in\N$, we write $G^{\boxtimes n}$ for the $n$-fold strong product of $G$ with itself. Just as in Shannon's application to error-free communication over noisy communication channels~\cite{Shannon}, these strong powers are also of importance to us.

\subsection{Relevant invariants of unweighted graphs}\label{ap:invunw} Since we will later consider graphs equipped with vertex weights, we also use the term `unweighted graph' when working with ordinary graphs in order to emphasize the distinction.

An \emph{independent set} in a graph $G$ is a subset $I\subseteq V(G)$ such that no two vertices in $I$ share an edge. $I$ is an independent set in $G$ if and only if it is a \emph{clique} in the complement graph $\overline{G}$. An independent set $I$ is \emph{maximal} if there is no other independent set $I'$ with $I\subsetneq I'$. The \emph{independence number} $\alpha(G)$ is the largest number of elements in any independent set of $G$; while the independence number can be attained only by a maximal independent set, there may also exist maximal independent sets of smaller cardinality. The independence number is sometimes also called the \emph{stability number}.

\begin{lem}
\label{prodin}
Let $I_1\subseteq G_1$ and $I_2\subseteq G_2$ be maximal independent sets. Then $I_1\times I_2\subseteq G_1\boxtimes G_2$ is also a maximal independent set.
\end{lem}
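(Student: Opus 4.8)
The plan is to verify the two defining properties of a maximal independent set separately: first that $I_1\times I_2$ is independent in $G_1\boxtimes G_2$, and then that it cannot be enlarged. Throughout I would use the elementary characterization of maximality: an independent set $I\subseteq V(G)$ is maximal if and only if every vertex $v\notin I$ is adjacent to some vertex of $I$ (if $I\cup\{v\}$ fails to be independent, the violating edge must, by independence of $I$, be incident to $v$).

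For independence, take two distinct vertices $(u_1,u_2),(v_1,v_2)\in I_1\times I_2$. Since $u_1,v_1\in I_1$ we have $u_1\not\sim v_1$ in $G_1$, and since $u_2,v_2\in I_2$ we have $u_2\not\sim v_2$ in $G_2$. Each of the three disjuncts in the definition of adjacency in $G_1\boxtimes G_2$ requires at least one of these adjacencies to hold, so none of them does, and $(u_1,u_2)\not\sim(v_1,v_2)$. Hence $I_1\times I_2$ is independent.

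For maximality, let $(v_1,v_2)\notin I_1\times I_2$, so $v_1\notin I_1$ or $v_2\notin I_2$; by symmetry assume $v_1\notin I_1$. By maximality of $I_1$ there is $u_1\in I_1$ with $u_1\sim v_1$ (and $u_1\neq v_1$ since $v_1\notin I_1$). I would then split into two sub-cases. If $v_2\in I_2$, put $u_2:=v_2$, so $(u_1,u_2)\in I_1\times I_2$ and $(u_1,u_2)\sim(v_1,v_2)$ via the disjunct $u_1\sim v_1\wedge u_2=v_2$. If $v_2\notin I_2$, use maximality of $I_2$ to pick $u_2\in I_2$ with $u_2\sim v_2$; then $(u_1,u_2)\in I_1\times I_2$ and $(u_1,u_2)\sim(v_1,v_2)$ via the disjunct $u_1\sim v_1\wedge u_2\sim v_2$. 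In either sub-case $(v_1,v_2)$ is adjacent to a vertex of $I_1\times I_2$, so $I_1\times I_2$ is maximal.

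This is a routine verification and I do not anticipate a genuine obstacle; the only point requiring a little care is the sub-case split in the maximality argument, where one must not blindly try to realize adjacency in both coordinates simultaneously (that fails precisely when $v_2\in I_2$ is already covered by $I_2$), and instead fall back on the "equal coordinate" disjunct of the strong product. I would present the whole thing in a few lines.
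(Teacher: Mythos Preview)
Your proof is correct and follows essentially the same approach as the paper's: independence is checked directly from the definition of the strong product, and maximality is shown by case analysis on which coordinate(s) of an outside vertex fail to lie in the corresponding $I_j$. The only cosmetic difference is that you invoke symmetry to assume $v_1\notin I_1$ and then split into two sub-cases, whereas the paper writes out all three cases explicitly.
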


\begin{proof}
The definition of adjacency in $G_1\boxtimes G_2$ implies immediately that $I_1 \times I_2$ is also an independent set in $G_1\boxtimes G_2$.

We now show maximality of $I=I_1\boxtimes I_2$. For any $v=(v_1,v_2)\in V(G_1\boxtimes G_2)\setminus I$, the following cases are possible:
\begin{enumerate}
\item Case $v_1 \notin I_1$ and $v_2\notin I_2$: by maximality of $I_1$ and $I_2$, there are $u_1\in I_1$ with $u_1\sim v_1$ and $u_2\in I_2$ with $u_2\sim v_2$. Hence $(u_1,u_2)\sim(v_1,v_2)$.
\item Case $v_1 \notin I_1$ and $v_2 \in I_2$: by maximality of $I_1$, there is $u_1 \in I_1$ with $u_1\sim v_1$. Hence $(u_1,v_2)\in I$ and $(u_1,v_2)\sim (v_1,v_2)$.
\item Case $v_1 \in I_1$ and $v_2 \notin I_2$: Similar to the previous case.
\end{enumerate}
In either case, the conclusion is that $v$ is adjacent to some vertex in $I$. Since $v$ was arbitrary, this means that $I$ is a maximal independent set.
\end{proof}

Concerning the independence number, the fact that a product of independent sets is again an independent set immediately shows:

\begin{lemma}
\label{in-ineq}
$$
\alpha(G_1\boxtimes G_2) \geq \alpha(G_1) \alpha(G_2)
$$
\end{lemma}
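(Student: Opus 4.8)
The final statement to prove is Lemma~\ref{in-ineq}: $\alpha(G_1\boxtimes G_2)\geq \alpha(G_1)\alpha(G_2)$.

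The plan is to reduce this immediately to Lemma~\ref{prodin}, which has just been established. First I would pick a maximum independent set $I_1\subseteq V(G_1)$ with $|I_1|=\alpha(G_1)$, and likewise a maximum independent set $I_2\subseteq V(G_2)$ with $|I_2|=\alpha(G_2)$. A maximum independent set is in particular a maximal one, so Lemma~\ref{prodin} applies and tells us that $I_1\times I_2$ is a (maximal) independent set in $G_1\boxtimes G_2$. Actually, for this inequality we do not even need maximality: the very first sentence of the proof of Lemma~\ref{prodin} already observes that the definition of adjacency in the strong product forces $I_1\times I_2$ to be an independent set whenever $I_1$ and $I_2$ are independent. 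Either way, $I_1\times I_2$ is an independent set.

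The second and final step is a cardinality count: since $V(G_1\boxtimes G_2)=V(G_1)\times V(G_2)$ as a set, we have $|I_1\times I_2| = |I_1|\cdot|I_2| = \alpha(G_1)\,\alpha(G_2)$. By definition, $\alpha(G_1\boxtimes G_2)$ is the maximum size of an independent set in $G_1\boxtimes G_2$, so it is at least the size of the particular independent set $I_1\times I_2$ we have exhibited. Hence $\alpha(G_1\boxtimes G_2)\geq \alpha(G_1)\,\alpha(G_2)$, which is the claim.

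There is essentially no obstacle here — the content of the statement is entirely carried by Lemma~\ref{prodin}, and what remains is the trivial observation that the product of two sets has cardinality equal to the product of cardinalities together with the definition of the independence number. The only thing to be mildly careful about is not to overclaim: the reverse inequality (equality) is false in general — this is exactly the Shannon-capacity phenomenon that motivates the whole paper — so the proof should stop at the inequality and not attempt more. I would write this up as a two- or three-line proof: choose maximum independent sets in each factor, invoke (the easy direction of) Lemma~\ref{prodin} to see their product is independent, and conclude by counting.
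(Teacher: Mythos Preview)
Your proposal is correct and matches the paper's own justification exactly: the paper simply remarks that ``the fact that a product of independent sets is again an independent set immediately shows'' the inequality, which is precisely your argument of taking maximum independent sets $I_1,I_2$, observing $I_1\times I_2$ is independent in $G_1\boxtimes G_2$, and counting. Your additional observation that only the easy direction of Lemma~\ref{prodin} (independence, not maximality) is needed is also spot on.
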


In particular, this implies for the independence number of the strong powers of a graph,
\beq
\label{fekete}
\alpha(G^{\boxtimes(n+m)}) \geq \alpha(G^{\boxtimes n}) \alpha(G^{\boxtimes m}) \qquad \forall m,n\in\N .
\eeq

\begin{rem}
\label{notmonotone}
Despite this inequality, the sequence $\left(\sqrt[n]{\alpha(G^{\boxtimes n})}\right)_{n\in \N}$ is not monotonically increasing in general; this happens, for example, for the pentagon graph (or $5$-cycle) $\pentagon$, for which~\cite{Shannon}
$$
\alpha(\pentagon) = 2,\qquad \alpha(\pentagon^{\boxtimes 2}) = 5,\qquad \alpha(\pentagon^{\boxtimes 3}) = 10 . 
$$
See~\cite{AL} for more results on the sometimes counterintuitive behavior of $\left(\sqrt[n]{\alpha(G^{\boxtimes n})}\right)_{n\in \N}$. 
\end{rem}

In combination with Fekete's Lemma~\cite{Fekete}, the inequality~(\ref{fekete}) guarantees the existence of the following limit:

\begin{defn}[\cite{Shannon}]
The \emph{(unweighted) Shannon capacity} $\Theta(G)$ is
\beq
\label{uncapdef}
\Theta(G) \defin \lim_{n\to\infty} \sqrt[n]{\alpha(G^{\boxtimes n})} .
\eeq
\end{defn}

Intuitively, $\Theta(G)$ is an asymptotic version of the independence number $\alpha(G)$. It was originally introduced by Shannon in an information-theoretic context as an effective number of symbols which can be transmitted across a noisy communication channel without the possibility of error, using the confusability graph interpretation explained above.

Not much is known about the values of $\Theta$ for particular graphs, not even $\Theta(C_7)$, where $C_7$ is the $7$-cycle~\cite{CGR}.

For graphs $G_1$ and $G_2$, we write $G_1+G_2$ for their disjoint union, which is again a graph.

\begin{lem}[\cite{Shannon}]
\label{thetasumprod}
\begin{enumerate}
\item $\Theta(G_1\boxtimes G_2) \geq \Theta(G_1) \Theta(G_2)$.
\item $\Theta(G_1 + G_2) \geq \Theta(G_1) + \Theta(G_2)$.
\end{enumerate}
\end{lem}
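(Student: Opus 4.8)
The plan is to prove the two inequalities separately; part~(1) is essentially immediate from the earlier results, whereas part~(2) requires an honest combinatorial estimate and is where the real work lies.

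For part~(1), I would use that the strong product is associative and commutative, so that $(G_1\boxtimes G_2)^{\boxtimes n}$ is isomorphic to $G_1^{\boxtimes n}\boxtimes G_2^{\boxtimes n}$. Lemma~\ref{in-ineq} then gives
\[
\alpha\big((G_1\boxtimes G_2)^{\boxtimes n}\big) \;=\; \alpha\big(G_1^{\boxtimes n}\boxtimes G_2^{\boxtimes n}\big) \;\geq\; \alpha(G_1^{\boxtimes n})\,\alpha(G_2^{\boxtimes n}).
\]
Taking $n$-th roots and passing to the limit $n\to\infty$ — all three limits exist by Fekete's Lemma as recalled after~\eqref{fekete} — yields $\Theta(G_1\boxtimes G_2)\geq\Theta(G_1)\,\Theta(G_2)$.

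For part~(2), the first step is to describe the structure of $(G_1+G_2)^{\boxtimes n}$. A vertex of this graph is an $n$-tuple each of whose coordinates lies in $G_1$ or in $G_2$; recording the sequence $\iota\in\{1,2\}^n$ of which component each coordinate comes from, one checks straight from the definition of $\boxtimes$ that two vertices with $\iota\neq\iota'$ are never adjacent, since in any coordinate where $\iota$ and $\iota'$ disagree the two entries lie in different components of the disjoint union, hence are unequal and non-adjacent. Thus $(G_1+G_2)^{\boxtimes n}$ is the disjoint union of $2^n$ induced subgraphs, the one labelled by $\iota$ being $G_{\iota_1}\boxtimes\cdots\boxtimes G_{\iota_n}\cong G_1^{\boxtimes a}\boxtimes G_2^{\boxtimes(n-a)}$ where $a$ is the number of $1$'s in $\iota$. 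Since an independent set in a disjoint union is a union of independent sets in the pieces, and applying Lemma~\ref{in-ineq} (with $G_1^{\boxtimes a}$ and $G_2^{\boxtimes b}$ in place of $G_1,G_2$),
\[
\alpha\big((G_1+G_2)^{\boxtimes n}\big) \;=\; \sum_{a=0}^n \binom{n}{a}\,\alpha\big(G_1^{\boxtimes a}\boxtimes G_2^{\boxtimes(n-a)}\big) \;\geq\; \sum_{a=0}^n \binom{n}{a}\,\alpha(G_1^{\boxtimes a})\,\alpha(G_2^{\boxtimes(n-a)}).
\]

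The final step — and the main obstacle — is extracting $\Theta(G_1+G_2)\geq\Theta(G_1)+\Theta(G_2)$ from this binomial sum. The naive hope that $\alpha(G_i^{\boxtimes a})\approx\Theta(G_i)^a$, which would make the sum equal $(\Theta(G_1)+\Theta(G_2))^n$ by the binomial theorem, is not literally available because $\alpha(G_i^{\boxtimes a})^{1/a}$ only converges to $\Theta(G_i)$ from below. I would fix $\eps>0$, choose a common $m$ with $\alpha(G_i^{\boxtimes m})\geq(\Theta(G_i)-\eps)^m=:\beta_i^m$ for $i=1,2$ — possible since $\Theta(G_i)=\sup_k\alpha(G_i^{\boxtimes k})^{1/k}$ by Fekete's Lemma, taking a common $m$ by replacing it with a product — noting $\beta_i>0$ because $\Theta(G_i)\geq\alpha(G_i)\geq1$; then restrict the sum above to indices $a$ that are multiples of $m$ and put $n=km$, using $\alpha(G_1^{\boxtimes jm})\geq\alpha(G_1^{\boxtimes m})^j\geq\beta_1^{jm}$, to obtain
\[
\alpha\big((G_1+G_2)^{\boxtimes km}\big) \;\geq\; \sum_{j=0}^{k}\binom{km}{jm}\,\beta_1^{jm}\,\beta_2^{(k-j)m}.
\]
It then suffices to show this right-hand side grows like $(\beta_1+\beta_2)^{km}$ up to a subexponential factor; this follows either from a roots-of-unity filter on $(\beta_1+\beta_2)^{km}=\sum_c\binom{km}{c}\beta_1^c\beta_2^{km-c}$ (the $m$-th roots of unity isolate a term $\tfrac1m(\beta_1+\beta_2)^{km}$, the rest being exponentially smaller since $|\omega\beta_1+\beta_2|<\beta_1+\beta_2$ for $\omega$ a nontrivial root of unity), or, more elementarily, from keeping the single term $\binom{km}{c'}\beta_1^{c'}\beta_2^{km-c'}$ with $c'$ the multiple of $m$ nearest to $\tfrac{\beta_1}{\beta_1+\beta_2}km$, which is within a $k$-independent factor of $(\beta_1+\beta_2)^{km}/(km+1)$. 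Either way, taking $km$-th roots and letting $k\to\infty$ gives $\Theta(G_1+G_2)\geq\beta_1+\beta_2=\Theta(G_1)+\Theta(G_2)-2\eps$, and $\eps\to0$ concludes.
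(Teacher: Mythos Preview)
Your proof is correct. Note that the paper does not supply its own argument for this lemma—it is stated with a citation to Shannon and left unproven (the weighted analogue, Lemma~\ref{proptheta}, likewise just refers back to ``the unweighted case''). Your treatment of part~(1) via associativity and commutativity of~$\boxtimes$ together with Lemma~\ref{in-ineq} is the standard route. For part~(2), the decomposition of $(G_1+G_2)^{\boxtimes n}$ into $2^n$ pairwise non-adjacent components indexed by $\{1,2\}^n$, the resulting binomial-type lower bound, and the $\eps$-approximation via a fixed block length $m$ with $\alpha(G_i^{\boxtimes m})\geq(\Theta(G_i)-\eps)^m$ constitute the classical argument; both the roots-of-unity filter and the single-dominant-term estimate you sketch are valid ways to conclude that the restricted sum $\sum_j\binom{km}{jm}\beta_1^{jm}\beta_2^{(k-j)m}$ grows like $(\beta_1+\beta_2)^{km}$ up to a subexponential factor.
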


Finding examples in which these inequalities are not tight is surprisingly difficult. The following results are due to Haemers and Alon.

\begin{thm}[\cites{Haemers,Alon}]
\label{haemersalon}
There exist graphs $G_1$ and $G_2$ such that
\begin{enumerate}
\item $\Theta(G_1 \boxtimes G_2) > \Theta(G_1) \Theta(G_2)$,
\item $\Theta(G_1 + G_2) > \Theta(G_1) + \Theta(G_2)$.
\end{enumerate}
\end{thm}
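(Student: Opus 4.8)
The plan is not to reprove these classical facts from scratch --- part (1) goes back to Haemers~\cite{Haemers} and part (2) to Alon~\cite{Alon} --- but to recall the two constructions in the form most convenient here and that we will sharpen in Appendix~\ref{uwmainconjs}, where in addition the same pair $(G_1,G_2)$ will be arranged to satisfy $\alpha(G_1)=\Theta(G_1)$ and $\alpha(G_2)=\vartheta(G_2)$. Both parts rest on one elementary remark: if $G$ is any graph and $G_2=\overline{G_1}$ with $G_1=G$, then the \emph{diagonal} $D\defin\{(v,v)\::\:v\in V(G)\}$ is an independent set in $G\boxtimes\overline{G}$, since for distinct $u,v$ the relation $(u,u)\sim(v,v)$ would force $u$ and $v$ to be adjacent simultaneously in $G$ and in $\overline{G}$. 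Hence $\alpha(G\boxtimes\overline{G})\geq|V(G)|$, and since $\Theta$ dominates $\alpha$ (by~\eqref{fekete}), also $\Theta(G\boxtimes\overline{G})\geq|V(G)|$.

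For part (1) I would take $G_1$ to be a \emph{vertex-transitive} graph with a strict gap $\alpha(G_1)=\Theta(G_1)<\vartheta(G_1)$, and $G_2\defin\overline{G_1}$. Such a graph is already at our disposal: by the proof of Theorem~\ref{ejemplito}, the non-orthogonality graph $G_1\defin\mathrm{NO}(J_n)$ of the Johnson-scheme scenario $J_n$ is vertex-transitive, and Haemers~\cite{Haemers2} showed $\alpha(\mathrm{NO}(J_n))=\Theta(\mathrm{NO}(J_n))=n<\vartheta(\mathrm{NO}(J_n))$. Put $N\defin|V(G_1)|=\binom{n}{3}$. Lov\'asz's identity for vertex-transitive graphs~\cite[Thm.~25]{Knuth} gives $\vartheta(G_1)\,\vartheta(G_2)=N$, so $\vartheta(G_2)=N/\vartheta(G_1)$; combining with $\Theta(G_1)=\alpha(G_1)<\vartheta(G_1)$ and $\Theta(G_2)\leq\vartheta(G_2)$ (Corollary~\ref{monotone}) yields
\[
\Theta(G_1)\,\Theta(G_2)\;\leq\;\alpha(G_1)\cdot\frac{N}{\vartheta(G_1)}\;<\;N\;\leq\;\alpha(G_1\boxtimes G_2)\;\leq\;\Theta(G_1\boxtimes G_2),
\]
which is exactly claim (1) for this pair.

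For part (2) this same pair does not suffice --- with $G_1=\mathrm{NO}(J_n)$ one only gets $\Theta(G_1+G_2)\geq\sqrt{\alpha(G_1\boxtimes G_2)}\geq\sqrt{N}$, of order $n^{3/2}$, which is swamped by $\Theta(G_1)+\Theta(G_2)\leq n+\vartheta(\overline{\mathrm{NO}(J_n)})$. Instead I would invoke Alon's ``Shannon capacity of a union'' construction~\cite{Alon}: for a prime power $q$ one defines, using a carefully chosen multiplicative/additive structure on $\F_q$, two graphs $G$ and $H$ on $O(q^2)$ vertices with $\Theta(G),\Theta(H)=q^{O(1)}$ while $\Theta(G+H)\geq q^{\,\Omega(\log q/\log\log q)}$; for $q$ large the latter is superpolynomial and dwarfs $\Theta(G)+\Theta(H)$. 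The upper bounds on $\Theta(G),\Theta(H)$ come from a Haemers-type rank (or $\vartheta$) bound applied to the algebraically structured adjacency pattern, while the lower bound on $\Theta(G+H)$ comes from exhibiting, inside $(G+H)^{\boxtimes k}$, large independent sets that deliberately mix $G$- and $H$-components, and estimating their size by a character-sum argument over $\F_q$.

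The genuinely hard part is this last balancing act in~\cite{Alon}: $G$ and $H$ must be simultaneously ``capacity-poor'' and yet have a ``capacity-rich'' disjoint union, and I see no way around the algebraic input used there; part (1), by contrast, is essentially free once Theorem~\ref{ejemplito} is in hand. In Appendix~\ref{uwmainconjs} I would then re-assemble these ideas into a single pair $(G_1,G_2)$ witnessing both inequalities at once, and check the extra bookkeeping conditions $\alpha(G_1)=\Theta(G_1)$ (immediate from~\cite{Haemers2}) and $\alpha(G_2)=\vartheta(G_2)$ required for the refined statement quoted in the abstract.
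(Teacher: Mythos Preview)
The paper does not prove Theorem~\ref{haemersalon}: it is stated as a known result with citations to Haemers and Alon, and your sketch of those two classical arguments is correct as far as it goes.

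Where your proposal diverges from the paper is in the plan for Appendix~\ref{uwmainconjs}. You take $G_2=\overline{G_1}$ with $G_1=\mathrm{NO}(J_n)$ and use the diagonal-in-$G\boxtimes\overline{G}$ trick for part~(1), then invoke Alon's separate algebraic construction for part~(2). The paper does something quite different: both inequalities, and both extra conditions, come from a \emph{single} construction that does not use the complement at all. The weighted graph $(G_2,p_2)$ is the non-orthogonality graph of the scenario $H_B$ built in Section~\ref{se:act} (the Yan-type construction: edges of $\mathrm{NO}(H_A)$ with no-detection vertices appended), equipped with the weights of the quantum model $p_B$ of Lemma~\ref{yanlemma}. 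Because $p_B\in\mathcal{Q}(H_B)\subseteq\mathcal{Q}_1(H_B)$, Proposition~\ref{Q1vsLov} gives $\vartheta(G_2,p_2)=1=\alpha(G_2,p_2)$ directly. Theorem~\ref{CEactivate} then yields the strict product inequality, and Theorem~\ref{nonconvexthm} the strict sum inequality, for the \emph{same} pair; the unweighted statement follows by rationalizing (Lemma~\ref{rationalize}) and blowing up (Lemma~\ref{blup}). No appeal to Alon's finite-field machinery is needed.

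Your ``extra bookkeeping'' for $\alpha(G_2)=\vartheta(G_2)$ would actually fail with $G_2=\overline{G_1}$: by the vertex-transitive identity you quote, $\vartheta(\overline{\mathrm{NO}(J_{12})})=220\cdot 11/260=121/13$, which is not an integer and hence cannot equal any independence number. So the complement route does not deliver the refined statement in the abstract; the paper's quantum-model construction of $G_2$ is what makes that condition automatic.
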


This ends our short summary of the Shannon capacity of graphs, and we now move on to another intimately related graph invariant.

\begin{defn}[\cite{Lovasz}]
\begin{enumerate}
\item\label{orthlab} An \emph{orthonormal labeling} of $G$ is an assignment $v\mapsto |\psi_v\rangle$ of a unit vector $|\psi_v\rangle\in\R^{|V(G)|}$ to every $v\in V(G)$ such that $u\not\sim v$ and $u\neq v$ implies $|\psi_u\rangle\perp|\psi_v\rangle$.
\item The \emph{Lov{\'a}sz number} $\vartheta(G)$ is
$$
\vartheta(G) \defin \min_{|\Psi\rangle,\:|\psi_v\rangle} \max_{v\in V} \frac{1}{|\langle\Psi|\psi_v\rangle|^2}
$$
where $|\Psi\rangle\in\R^{|V(G)|}$ ranges over all unit vectors and $(|\psi_v\rangle)_{v\in V(G)}$ over all orthonormal labelings.
\end{enumerate}
\end{defn}

There are several other equivalent definitions of $\vartheta(G)$ commonly used~\cite{Lovasz}, one of which we will meet in Proposition~\ref{thetaalt} for the weighted Lov{\'a}sz number. 

Multiplicativity of $\vartheta$ is one of its many useful properties:

\begin{prop}[\cite{Lovasz}]
\label{lovaszmult}
$$
\vartheta(G_1\boxtimes G_2) = \vartheta(G_1)\vartheta(G_2) .
$$
\end{prop}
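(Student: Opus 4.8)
The plan is to prove the two inequalities $\vartheta(G_1\boxtimes G_2)\le \vartheta(G_1)\vartheta(G_2)$ and $\vartheta(G_1\boxtimes G_2)\ge \vartheta(G_1)\vartheta(G_2)$ separately: the first directly from the min–max definition by tensoring orthonormal labelings, and the second via the dual semidefinite formulation of the Lov\'asz number.

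For the upper bound, I would take optimal orthonormal labelings $u\mapsto|\psi^{(1)}_u\rangle$ of $G_1$ with handle $|\Psi_1\rangle$ and $v\mapsto|\psi^{(2)}_v\rangle$ of $G_2$ with handle $|\Psi_2\rangle$, and define on $G_1\boxtimes G_2$ the assignment $(u,v)\mapsto|\psi^{(1)}_u\rangle\otimes|\psi^{(2)}_v\rangle$ with handle $|\Psi_1\rangle\otimes|\Psi_2\rangle$; these are unit vectors in $\mathbb{R}^{|V(G_1)|}\otimes\mathbb{R}^{|V(G_2)|}\cong\mathbb{R}^{|V(G_1\boxtimes G_2)|}$, as required. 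The key step is to verify this is again an orthonormal labeling: if $(u,v)$ and $(u',v')$ are distinct and non-adjacent in the strong product, then by the definition of $\boxtimes$ either ($u\neq u'$ and $u\not\sim_{G_1}u'$) or ($v\neq v'$ and $v\not\sim_{G_2}v'$), and in either case one of the factors $\langle\psi^{(1)}_u|\psi^{(1)}_{u'}\rangle$, $\langle\psi^{(2)}_v|\psi^{(2)}_{v'}\rangle$ vanishes, hence so does $\langle\psi^{(1)}_u\otimes\psi^{(2)}_v|\psi^{(1)}_{u'}\otimes\psi^{(2)}_{v'}\rangle$. Since $|\langle\Psi_1\otimes\Psi_2|\psi^{(1)}_u\otimes\psi^{(2)}_v\rangle|^2=|\langle\Psi_1|\psi^{(1)}_u\rangle|^2\,|\langle\Psi_2|\psi^{(2)}_v\rangle|^2$, the maximum over $(u,v)$ factorizes into the product of the two individual maxima, giving $\vartheta(G_1\boxtimes G_2)\le\vartheta(G_1)\vartheta(G_2)$.

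For the lower bound I would use the dual characterization $\vartheta(G)=\max\{\,\langle J,B\rangle : B\succeq 0,\ \tr B=1,\ B_{ij}=0 \text{ whenever } i\sim_G j\,\}$, with $J$ the all-ones matrix on $V(G)$; its equivalence with the min–max definition follows from semidefinite programming duality, strong duality being guaranteed by Slater's condition (the primal SDP $\min\{t: tI-J-M\succeq 0\}$ over symmetric $M$ supported on the edges is strictly feasible at $t$ large, $M=0$, and $B=I/|V(G)|$ is strictly feasible for the dual). Given optimal dual solutions $B_1$ for $G_1$ and $B_2$ for $G_2$, I would set $B:=B_1\otimes B_2$ on $V(G_1)\times V(G_2)$: then $B\succeq 0$, $\tr B=1$, and the support condition holds, since if $(u,v)\sim(u',v')$ in $G_1\boxtimes G_2$ then either $u\sim_{G_1}u'$, forcing $(B_1)_{uu'}=0$, or $u=u'$ with $v\sim_{G_2}v'$, forcing $(B_2)_{vv'}=0$; in both cases $B_{(u,v),(u',v')}=0$. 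As $\langle J,B_1\otimes B_2\rangle=\langle J,B_1\rangle\,\langle J,B_2\rangle=\vartheta(G_1)\vartheta(G_2)$, this exhibits a dual-feasible point of that value, so $\vartheta(G_1\boxtimes G_2)\ge\vartheta(G_1)\vartheta(G_2)$, and combining the two bounds gives the claim.

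The main obstacle is the $\ge$ direction: one needs a genuine dual object for $\vartheta$, since the primal min–max definition only yields the $\le$ inequality. Once the dual SDP characterization and strong duality are available, the tensor-product argument is routine. An alternative to invoking SDP duality would be to use one of the other equivalent formulas for $\vartheta$ from \cite{Lovasz} (for instance an orthonormal-labeling description of the complementary graph, in the spirit of the forthcoming Proposition~\ref{thetaalt}), but some such dual characterization seems indispensable.
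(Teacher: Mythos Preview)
Your proof is correct and follows the standard argument from Lov\'asz's original paper~\cite{Lovasz}. Note, however, that the present paper does not give its own proof of this proposition at all: it simply cites~\cite{Lovasz} and moves on. So there is nothing to compare your argument against here beyond saying that you have supplied precisely the classical proof the citation points to---tensoring orthonormal labelings for the upper bound and tensoring optimal dual SDP solutions for the lower bound.
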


As the fourth and final graph invariant of relevance to us, we now introduce:

\begin{defn}
The \emph{fractional packing number} $\alpha^*(G)$ is
$$
\alpha^*(G) \defin \max_q \sum_v q(v)
$$
where $q : V(G) \to [0,1]$ ranges over all vertex weighings satisfying $\sum_{v\in C} q(v) \leq 1$ for all cliques $C\subseteq V(G)$.
\end{defn}

The fractional packing number can be regarded as the linear relaxation of the independence number. For this reason, it is sometimes also called \emph{fractional independence number}.

\begin{prop}[\cite{Lovasz}]
\label{sandwich}
$$
\alpha(G) \leq \Theta(G) \leq \vartheta(G) \leq \alpha^*(G) .
$$
\end{prop}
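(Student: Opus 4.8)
The plan is to prove the chain $\alpha(G)\le\Theta(G)\le\vartheta(G)\le\alpha^*(G)$ one link at a time, left to right; the first two links follow almost for free from results already recorded in this appendix, and the genuine work is the inequality $\vartheta(G)\le\alpha^*(G)$.

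For $\alpha(G)\le\Theta(G)$: iterating Lemma~\ref{in-ineq} gives $\alpha(G^{\boxtimes n})\ge\alpha(G)^n$, hence $\sqrt[n]{\alpha(G^{\boxtimes n})}\ge\alpha(G)$ for every $n$, and letting $n\to\infty$ in~\eqref{uncapdef} yields the claim. (Alternatively, Fekete's lemma~\cite{Fekete} together with~\eqref{fekete} identifies $\Theta(G)$ with $\sup_n\sqrt[n]{\alpha(G^{\boxtimes n})}$, which dominates its $n=1$ term.) For $\Theta(G)\le\vartheta(G)$ I would first isolate the auxiliary fact that $\alpha(H)\le\vartheta(H)$ for \emphalt{every} graph $H$: fix an orthonormal labeling $w\mapsto|\psi_w\rangle$ and unit vector $|\Psi\rangle$ attaining $\vartheta(H)$, so $|\langle\Psi|\psi_w\rangle|^2\ge 1/\vartheta(H)$ for all $w$; if $I$ is a maximum independent set, then the $(|\psi_w\rangle)_{w\in I}$ are pairwise orthogonal (distinct vertices of $I$ are non-adjacent, and the labeling is orthonormal), so Bessel's inequality gives $1=\langle\Psi|\Psi\rangle\ge\sum_{w\in I}|\langle\Psi|\psi_w\rangle|^2\ge|I|/\vartheta(H)=\alpha(H)/\vartheta(H)$. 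Applying this with $H=G^{\boxtimes n}$ and using multiplicativity of the Lov{\'a}sz number (Proposition~\ref{lovaszmult}) gives $\alpha(G^{\boxtimes n})\le\vartheta(G^{\boxtimes n})=\vartheta(G)^n$; taking $n$-th roots and $n\to\infty$ produces $\Theta(G)\le\vartheta(G)$.

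For $\vartheta(G)\le\alpha^*(G)$: since $G$ is finite, $\alpha^*(G)$ is the value of a finite linear program, and strong LP duality rewrites it as $\alpha^*(G)=\min\sum_C y_C$ over \emph{fractional clique covers}, i.e.\ nonnegative weights $y_C$ on the cliques of $G$ with $\sum_{C\ni v}y_C\ge 1$ for each vertex $v$. Take a basic optimal solution, supported on at most $|V(G)|$ cliques $C_1,\dots,C_k$ (one per vertex-constraint), with weights $y_1,\dots,y_k\ge 0$, $\sum_j y_j=\alpha^*(G)=:t$, and $s_v:=\sum_{j:\,v\in C_j}y_j\ge 1$. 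Working inside $\mathbb{R}^k\subseteq\mathbb{R}^{|V(G)|}$ with orthonormal basis $|e_1\rangle,\dots,|e_k\rangle$, set
$$
|\psi_v\rangle\defin\frac{1}{\sqrt{s_v}}\sum_{j:\,v\in C_j}\sqrt{y_j}\,|e_j\rangle,\qquad |\Psi\rangle\defin\frac{1}{\sqrt t}\sum_{j=1}^k\sqrt{y_j}\,|e_j\rangle,
$$
both unit vectors. Whenever $u\not\sim v$, no clique contains both, so the index sets $\{j:u\in C_j\}$ and $\{j:v\in C_j\}$ are disjoint, giving $\langle\psi_u|\psi_v\rangle=0$; thus $v\mapsto|\psi_v\rangle$ is a valid orthonormal labeling of $G$ (Definition~\ref{orthlab}). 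A one-line computation gives $\langle\Psi|\psi_v\rangle=s_v/\sqrt{t\,s_v}=\sqrt{s_v/t}\ge 1/\sqrt t$, so $\max_v|\langle\Psi|\psi_v\rangle|^{-2}\le t=\alpha^*(G)$, and the definition of $\vartheta$ as a minimum yields $\vartheta(G)\le\alpha^*(G)$.

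The main obstacle, everything else being routine bookkeeping, is this last construction: recognizing that the object to plug into the variational definition of $\vartheta$ is a \emphalt{dual-optimal fractional clique cover} (and hence that a clique cover, not a coloring, is the relevant certificate — which is forced by the excerpt's convention that orthonormal labelings require orthogonality of \emphalt{non-adjacent} vertices), and getting its size down to $|V(G)|$ via a basic LP solution so that the labeling genuinely lands in $\mathbb{R}^{|V(G)|}$ as Definition~\ref{orthlab} demands.
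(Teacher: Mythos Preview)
Your proof is correct. Note, however, that the paper does not actually give its own proof of this proposition: it is simply stated with a citation to Lov{\'a}sz~\cite{Lovasz} as a known result from the literature, so there is no ``paper's proof'' to compare against.

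That said, your argument is essentially the classical one. The first inequality is immediate from supermultiplicativity, and your derivation of $\Theta(G)\le\vartheta(G)$ via $\alpha(H)\le\vartheta(H)$ plus multiplicativity (Proposition~\ref{lovaszmult}) is exactly Lov{\'a}sz's original route. For $\vartheta(G)\le\alpha^*(G)$, your dual-LP construction of an orthonormal labeling from a fractional clique cover is the standard proof (see e.g.~\cite{Knuth}); the observation that a basic optimal dual solution has support of size at most $|V(G)|$, so that the labeling fits into $\R^{|V(G)|}$ as the definition requires, is a nice touch that is sometimes glossed over. All steps are sound.
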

In general, none of these inequalities is an equality. This is most difficult to see for $\Theta(G)\leq\vartheta(G)$, for which it was shown by Haemers~\cite{Haemers} after having been posed as an open problem by Lov{\'a}sz~\cite{Lovasz}.

\subsection{Relevant invariants of weighted graphs}\label{ap:invw}

We now generalize these definitions to graphs equipped with vertex weights, i.e.~to graphs $G$ equipped with a \emph{weight function} $p:V(G)\to \R_+$. We omit a proof whenever it is completely analogous to the unweighted case. Two weight functions $p_1:V(G_1)\to \R_+$ and $p_2:V(G_2)\to \R_+$ can be tensored to a new weight function on the strong product graph,
$$
p_1\otimes p_2 : V(G_1\boxtimes G_2)\to \R_+,\quad (v_1,v_2)\mapsto p_1(v_1)p_2(v_2).
$$
In this way, the $n$th power $p^{\otimes n}$ is a weight function on $G^{\boxtimes n}$. Similarly, there is an obvious weight function $p_1+p_2$ defined on the disjoint union $G_1+G_2$. When $p_1$ and $p_2$ are defined on the same graph, we use the same notation $p_1+p_2$ for the pointwise sum; despite this ambiguous notation, the meaning will always be clear from the context.

For each item in the following definition, it should be clear how it generalizes the concepts from the unweighted setting in the sense that the previous definitions are recovered if the weights are $p(v)=1$ for all vertices $v$.

\begin{defn}
\label{wdefs}
Let $G$ be a graph equipped with vertex weights $p$.
\begin{enumerate}
\item The \emph{weighted independence number} $\alpha(G,p)$ is the largest total weight of an independent set in $G$, that is the largest sum of weights of elements of an independent set.  
\item The \emph{weighted Lov{\'a}sz number} $\vartheta(G,p)$ is
\beq
\label{wlovdef}
\vartheta(G,p) \defin \min_{|\Psi\rangle,\:|\psi_v\rangle} \max_{v\in V} \frac{p(v)}{|\langle\Psi|\psi_v\rangle|^2}
\eeq
where $|\Psi\rangle\in\R^{|V(G)|}$ ranges over all unit vectors and $(|\psi_v\rangle)_{v\in V(G)}$ over all orthonormal labelings.
\item The \emph{weighted Shannon capacity} $\Theta(G,p)$ is
\beq
\label{wcapdef}
\Theta(G,p) = \lim_{n\to\infty} \sqrt[n]{\alpha(G^{\boxtimes n},p^{\otimes n})} .
\eeq
\item The \emph{weighted fractional packing number} $\alpha^\ast(G,p)$ is
$$
\alpha^*(G,p) \defin \max_q \sum_{v\in V} p(v)\, q(v) .
$$
where $q:V(G)\to\R_+$ ranges over all vertex weights satisfying $\sum_{v\in C} q(v) \leq 1$ for all cliques $C\subseteq V(G)$.
\end{enumerate}
\end{defn}

The fraction in~(\ref{wlovdef}) uses the convention $\tfrac{0}{0}=0$. See~\cite{Knuth} for several equivalent definitions of $\vartheta(G,p)$, in particular this one:

\begin{prop}[{\cite[Sec.~10]{Knuth}}]
\label{thetaalt}
The Lov{\'a}sz number is also given by
\[
\vartheta(G,p) = \max_{|\Psi\rangle,\:|\psi_v\rangle} \sum_{v\in V(G)} p(v)\:|\langle\Psi|\psi_v\rangle|^2
\]
where $|\Psi\rangle\in\R^{|V(G)|}$ still ranges over all unit vectors, but $(|\psi_v\rangle)_{v\in V(G)}$ ranges over all orthonormal labelings of the complementary graph $\overline{G}$.
\end{prop}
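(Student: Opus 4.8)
Write $\vartheta_{\min}(G,p)$ for the quantity defined in Definition~\ref{wdefs}, i.e.\ the minimum over unit vectors $|\Psi\rangle$ and orthonormal labelings $v\mapsto|\psi_v\rangle$ of $G$ of $\max_v p(v)/|\langle\Psi|\psi_v\rangle|^2$, and write $\vartheta_{\max}(G,p)$ for the right-hand side of the statement, i.e.\ the maximum over unit vectors $|\Psi\rangle$ and orthonormal labelings $v\mapsto|\psi_v\rangle$ of $\overline{G}$ of $\sum_v p(v)\,|\langle\Psi|\psi_v\rangle|^2$. The plan is to prove $\vartheta_{\max}(G,p)\le\vartheta_{\min}(G,p)$ directly, and the reverse inequality by semidefinite-programming duality (as in Knuth's treatment).

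\textbf{The easy inequality.} To show $\vartheta_{\max}(G,p)\le\vartheta_{\min}(G,p)$, fix, for any $\varepsilon>0$, an orthonormal labeling $v\mapsto|\psi_v\rangle$ of $G$ with unit vector $|\Psi\rangle$ that is $\varepsilon$-optimal for $\vartheta_{\min}$, so $|\langle\Psi|\psi_v\rangle|^2\ge p(v)/(\vartheta_{\min}(G,p)+\varepsilon)$ for every $v$ with $p(v)>0$; and fix any orthonormal labeling $v\mapsto|\phi_v\rangle$ of $\overline{G}$ with unit vector $|\Phi\rangle$. The key point is that $\{|\psi_v\rangle\otimes|\phi_v\rangle\}_{v\in V(G)}$ is an orthonormal system: for $u\ne v$, either $u\sim_G v$, in which case $\langle\phi_u|\phi_v\rangle=0$ because $u,v$ are non-adjacent in $\overline{G}$, or $u\not\sim_G v$, in which case $\langle\psi_u|\psi_v\rangle=0$; either way $\langle\psi_u\otimes\phi_u\,|\,\psi_v\otimes\phi_v\rangle=0$. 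Bessel's inequality applied to the unit vector $|\Psi\rangle\otimes|\Phi\rangle$ then gives $\sum_v|\langle\Psi|\psi_v\rangle|^2\,|\langle\Phi|\phi_v\rangle|^2\le 1$, and inserting the lower bound on $|\langle\Psi|\psi_v\rangle|^2$ yields $\sum_v p(v)\,|\langle\Phi|\phi_v\rangle|^2\le\vartheta_{\min}(G,p)+\varepsilon$. Letting $\varepsilon\to0$ and taking the supremum over $(|\Phi\rangle,(|\phi_v\rangle))$ finishes this direction; zero-weight vertices and the convention $0/0=0$ are harmless since such vertices contribute nothing to either sum.

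\textbf{The reverse inequality.} Here I would pass to semidefinite programming. One rewrites $\vartheta_{\min}(G,p)$ as an SDP whose variable is a positive semidefinite matrix indexed by $V(G)$, subject to a trace normalization and to vanishing on the edges of $\overline{G}$, with objective scaled by $\sqrt{p}$; one forms the conic (Lagrangian) dual and identifies it with the maximization over orthonormal labelings of $\overline{G}$: a dual-feasible positive semidefinite matrix is written as a Gram matrix, its vectors (after normalization) form the required labeling of $\overline{G}$, the unit vector $|\Psi\rangle$ is read off from the same factorization, and the dual objective becomes $\sum_v p(v)\,|\langle\Psi|\psi_v\rangle|^2$. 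Equality of the two optimal values, hence $\vartheta_{\min}(G,p)\le\vartheta_{\max}(G,p)$, then follows from strong duality, which applies because both programs are strictly feasible (Slater's condition: take a suitable multiple of the identity on one side and a generic orthonormal labeling on the other). Alternatively, one may simply cite~\cite[Sec.~10]{Knuth}, where this equivalence of formulations of the weighted Lov\'asz number is carried out in detail.

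\textbf{Main obstacle.} The substantive work is entirely in the reverse direction, and more precisely in the bookkeeping of the duality: matching primal and dual variables correctly (which graph each orthonormal labeling lives on, and the role of $|\Psi\rangle$ versus the Gram vectors), handling the rescalings that turn a positive semidefinite matrix into a genuine orthonormal representation, disposing of the degenerate cases (zero weights, and possibly enlarging the Hilbert space to accommodate factors of the wrong norm), and checking Slater's condition so that strong duality genuinely applies rather than merely weak duality.
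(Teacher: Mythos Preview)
The paper does not give its own proof of this proposition; it simply cites \cite[Sec.~10]{Knuth} and moves on. Your proposal is correct and is exactly the standard route: the tensor-product/Bessel argument for $\vartheta_{\max}\le\vartheta_{\min}$ is Lov\'asz's original trick, and the reverse inequality via SDP duality is precisely what Knuth does in the cited section, so there is nothing to compare.
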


Among the useful consequences of this result is the following:

\begin{lem}
\label{thetaconv}
$\vartheta(G,p_1 + p_2) \leq \vartheta(G,p_1) + \vartheta(G,p_2).$
\end{lem}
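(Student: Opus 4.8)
The plan is to deduce this immediately from the ``max'' characterization of the weighted Lov\'asz number in Proposition~\ref{thetaalt}. The crucial observation is that in that formula the optimization variables---a unit vector $|\Psi\rangle\in\R^{|V(G)|}$ together with an orthonormal labeling $(|\psi_v\rangle)_{v\in V(G)}$ of the complementary graph $\overline{G}$---range over a feasible set that does \emph{not} depend on the weight function $p$; only the objective $\sum_{v\in V(G)} p(v)\,|\langle\Psi|\psi_v\rangle|^2$ depends on $p$, and it does so linearly in $p$. So the statement is an instance of the elementary fact that the maximum of a sum of two functions over a common domain is at most the sum of the two maxima.

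Concretely, I would first write, using Proposition~\ref{thetaalt},
\[
\vartheta(G,p_1+p_2) = \max_{|\Psi\rangle,\:|\psi_v\rangle}\; \sum_{v\in V(G)} \big(p_1(v)+p_2(v)\big)\,|\langle\Psi|\psi_v\rangle|^2,
\]
the maximum being over all unit vectors $|\Psi\rangle$ and all orthonormal labelings of $\overline{G}$. Next I would fix a pair $(|\Psi\rangle,(|\psi_v\rangle)_v)$ attaining this maximum and split the sum as
\[
\sum_{v} p_1(v)\,|\langle\Psi|\psi_v\rangle|^2 \;+\; \sum_{v} p_2(v)\,|\langle\Psi|\psi_v\rangle|^2 .
\]
Since the same pair $(|\Psi\rangle,(|\psi_v\rangle)_v)$ is a feasible point for the maximization problems defining $\vartheta(G,p_1)$ and $\vartheta(G,p_2)$ (the feasibility conditions are identical), each of these two sums is bounded above by $\vartheta(G,p_1)$ and $\vartheta(G,p_2)$ respectively. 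Adding the two bounds gives $\vartheta(G,p_1+p_2)\le \vartheta(G,p_1)+\vartheta(G,p_2)$.

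There is essentially no obstacle here; the only point worth a sentence is that the maximum in Proposition~\ref{thetaalt} is indeed attained, which holds because the set of orthonormal labelings of $\overline{G}$ together with the unit sphere in $\R^{|V(G)|}$ is compact and the objective is continuous---but since the excerpt already states the characterization with a $\max$, I would simply cite it and skip this remark. If one preferred to avoid Proposition~\ref{thetaalt} and argue directly from Definition~\ref{wdefs}, one would instead combine near-optimal orthonormal labelings for $p_1$ and $p_2$ on a direct-sum Hilbert space with suitably chosen weights; this also works but is messier because of cross terms in $|\langle\Psi|\psi_v\rangle|^2$, so the route through Proposition~\ref{thetaalt} is clearly preferable.
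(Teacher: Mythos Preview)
Your proof is correct and matches the paper's approach exactly: the paper places this lemma immediately after Proposition~\ref{thetaalt} and states it as one of the ``useful consequences of this result,'' i.e., it is meant to follow in precisely the way you describe---linearity of the objective in $p$ over a feasible set independent of $p$ gives subadditivity of the maximum.
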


The fractional packing number can alternatively be characterized as follows:

\begin{prop}
\label{fpndualprop}
Let $\mathrm{Cl}(G)$ denote the set of all cliques on $G$. Then
\beq
\label{fpndual}
\alpha^*(G,p) = \min_{x} \sum_{C\in\mathrm{Cl}(G)} x(C)
\eeq
where $x$ ranges over all functions $x:\mathrm{Cl}(G)\to\R_+$ with $p(v)\leq \sum_{C\ni v} x(C) \:\forall v$.
\end{prop}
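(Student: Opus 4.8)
The plan is to recognize \eqref{fpndual} as an instance of strong linear programming duality. First I would rewrite the definition of $\alpha^*(G,p)$ from Definition~\ref{wdefs} as an explicit linear program. Let $A$ denote the clique--vertex incidence matrix of $G$: its rows are indexed by the cliques $C\in\mathrm{Cl}(G)$, its columns by the vertices $v\in V(G)$, and $A_{C,v}=1$ if $v\in C$ and $0$ otherwise. Then $\alpha^*(G,p)$ is the optimal value of the program
\[
\text{maximize } \sum_{v\in V(G)} p(v)\,q(v) \qquad\text{subject to}\qquad Aq\leq \mathbbm{1},\quad q\geq 0,
\]
with $q$ ranging over $\R^{V(G)}$ (the constraint $Aq\leq\mathbbm{1}$ is exactly $\sum_{v\in C}q(v)\leq 1$ for every clique $C$). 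Since $G$ is finite, both $\mathrm{Cl}(G)$ and $V(G)$ are finite, so this is a genuine finite-dimensional LP.

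Next I would verify that this primal program has a finite optimum, so that strong duality applies with no gap. It is feasible, since $q=0$ meets the constraints; and it is bounded, since every singleton $\{v\}$ is a clique, so that feasibility forces $q(v)\leq 1$ for all $v$ and hence $\sum_v p(v)q(v)\leq\sum_v p(v)<\infty$. The dual LP, with a variable $x(C)\geq 0$ attached to each inequality of $Aq\leq\mathbbm{1}$, is
\[
\text{minimize } \sum_{C\in\mathrm{Cl}(G)} x(C) \qquad\text{subject to}\qquad A^{\mathsf T}x\geq p,\quad x\geq 0,
\]
and the constraint $A^{\mathsf T}x\geq p$ unwinds to $\sum_{C\ni v} x(C)\geq p(v)$ for every $v\in V(G)$ — precisely the feasibility condition on $x$ in the statement. (This dual is itself feasible, e.g.\ via $x(\{v\})=p(v)$ on singletons and $0$ elsewhere.) By the strong duality theorem for linear programming, since the primal is feasible and bounded, both programs attain their optima and these optima coincide, which is exactly \eqref{fpndual}.

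The argument is entirely routine and I expect no real obstacle; the only point needing a moment's care is the bookkeeping that guarantees the primal is bounded (hence that strong duality applies with no gap), which is what the singleton-clique remark supplies. Should one prefer not to invoke the LP duality theorem as a black box, the inequality ``$\leq$'' in \eqref{fpndual} is immediate weak duality: for any feasible $q$ and any feasible $x$,
\[
\sum_{v} p(v)\,q(v) \;\leq\; \sum_{v} q(v)\sum_{C\ni v} x(C) \;=\; \sum_{C} x(C)\sum_{v\in C} q(v) \;\leq\; \sum_{C} x(C),
\]
and the reverse inequality can then be obtained by applying Farkas' lemma (or complementary slackness) at an optimal vertex of the primal polytope.
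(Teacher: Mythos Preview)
Your proposal is correct and is exactly the approach the paper takes: the paper's entire proof consists of the words ``Linear programming duality,'' and you have simply unpacked that invocation in full detail. The feasibility and boundedness checks you supply are the right justifications for strong duality, so there is nothing to add.
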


\begin{proof}
Linear programming duality.
\end{proof}

We also have a generalization of Lemma~\ref{thetasumprod}:

\begin{lem}\label{proptheta}
\begin{enumerate}
\item 
\beq
\label{superadd}
\Theta(G_1 + G_2, p_1 + p_2) \geq \Theta(G_1,p_1) + \Theta(G_2,p_2) .
\eeq
\item 
\beq
\Theta(G_1\boxtimes G_2,p_1\otimes p_2) \geq \Theta(G_1,p_1) \Theta(G_2,p_2) .
\eeq
\end{enumerate}
\end{lem}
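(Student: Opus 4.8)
The plan is to imitate Shannon's original argument for the unweighted case (Lemma~\ref{thetasumprod}), carrying the vertex weights along. The crucial structural observation is a decomposition of strong powers of a disjoint union. In $G_1+G_2$ no vertex of $G_1$ is adjacent to any vertex of $G_2$; hence, for a string $s\in\{1,2\}^n$, the induced subgraph of $(G_1+G_2)^{\boxtimes n}$ on those vertices whose $i$-th coordinate lies in $V(G_{s_i})$ is isomorphic, after reordering coordinates, to $G_1^{\boxtimes k}\boxtimes G_2^{\boxtimes(n-k)}$ with $k=|s^{-1}(1)|$, and there are no edges between the subgraphs belonging to different strings $s$. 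Thus
\[
(G_1+G_2)^{\boxtimes n} \;=\; \sum_{k=0}^n \; \binom{n}{k}\cdot\Big( G_1^{\boxtimes k} \boxtimes G_2^{\boxtimes(n-k)} \Big),
\]
a disjoint union of $2^n$ graphs, and under this identification $(p_1+p_2)^{\otimes n}$ restricts on each piece to $p_1^{\otimes k}\otimes p_2^{\otimes(n-k)}$. Since the weighted independence number of a disjoint union is the sum of the weighted independence numbers of its pieces, and since a product of independent sets is an independent set of the strong product carrying the product weight (the weighted analogue of Lemma~\ref{in-ineq}; cf.\ Corollary~\ref{win-ineq}), I would obtain
\[
\alpha\big((G_1+G_2)^{\boxtimes n},\,(p_1+p_2)^{\otimes n}\big) \;\geq\; \sum_{k=0}^n \binom{n}{k}\,\alpha\big(G_1^{\boxtimes k},p_1^{\otimes k}\big)\,\alpha\big(G_2^{\boxtimes(n-k)},p_2^{\otimes(n-k)}\big).
\]

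Next I would control the two factors on the right. Fix $\varepsilon>0$ and write $a=\Theta(G_1,p_1)$, $b=\Theta(G_2,p_2)$. By definition of $\Theta$ and super-multiplicativity of $\alpha$ under $\boxtimes$ (Fekete's lemma, as in Corollary~\ref{win-ineq}) there is an $m$ with $\alpha(G_1^{\boxtimes m},p_1^{\otimes m})\geq(a-\varepsilon)^m$ and $\alpha(G_2^{\boxtimes m},p_2^{\otimes m})\geq(b-\varepsilon)^m$ simultaneously (take a common multiple of the two individually good exponents). Writing $k=qm+r$ with $0\leq r<m$ and applying super-multiplicativity once more, $\alpha(G_1^{\boxtimes k},p_1^{\otimes k})\geq(a-\varepsilon)^{qm}\,\alpha(G_1^{\boxtimes r},p_1^{\otimes r})\geq C_1\,(a-\varepsilon)^{k}$ for some constant $C_1>0$ independent of $k$, absorbing the bounded contribution of the remainder term and of replacing $qm$ by $k$; similarly $\alpha(G_2^{\boxtimes l},p_2^{\otimes l})\geq C_2\,(b-\varepsilon)^{l}$. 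Substituting into the binomial sum,
\[
\alpha\big((G_1+G_2)^{\boxtimes n},(p_1+p_2)^{\otimes n}\big) \;\geq\; C_1C_2 \sum_{k=0}^n \binom{n}{k}(a-\varepsilon)^k(b-\varepsilon)^{n-k} \;=\; C_1C_2\,\big((a-\varepsilon)+(b-\varepsilon)\big)^n .
\]
Taking $n$-th roots and letting $n\to\infty$ gives $\Theta(G_1+G_2,p_1+p_2)\geq a+b-2\varepsilon$, and since $\varepsilon$ was arbitrary this proves part~(1). For part~(2) I would reorder the $2n$ coordinates so that all $G_1$-coordinates precede all $G_2$-coordinates, giving $(G_1\boxtimes G_2)^{\boxtimes n}\cong G_1^{\boxtimes n}\boxtimes G_2^{\boxtimes n}$ with $(p_1\otimes p_2)^{\otimes n}$ corresponding to $p_1^{\otimes n}\otimes p_2^{\otimes n}$; the weighted product inequality for $\alpha$ then yields
\[
\alpha\big((G_1\boxtimes G_2)^{\boxtimes n},(p_1\otimes p_2)^{\otimes n}\big) \;\geq\; \alpha\big(G_1^{\boxtimes n},p_1^{\otimes n}\big)\cdot\alpha\big(G_2^{\boxtimes n},p_2^{\otimes n}\big),
\]
and taking $n$-th roots and the limit gives $\Theta(G_1\boxtimes G_2,p_1\otimes p_2)\geq\Theta(G_1,p_1)\Theta(G_2,p_2)$.

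The genuinely content-bearing ingredient is the disjoint-union decomposition of $(G_1+G_2)^{\boxtimes n}$ together with the binomial identity, which is Shannon's idea; the rest is bookkeeping. The one routine-but-fiddly point I expect to have to be slightly careful about is the passage from a single good scale $m$ to a lower bound $\alpha(G_1^{\boxtimes k},p_1^{\otimes k})\geq C_1(a-\varepsilon)^k$ at all scales $k$: one must handle the remainder $r$ and note that the constant direction depends on whether $a-\varepsilon$ is larger or smaller than $1$. This is exactly the standard Fekete-type estimate and presents no real obstacle.
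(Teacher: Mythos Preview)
Your proposal is correct and follows exactly the approach the paper intends: the paper's proof is the single line ``As in the unweighted case~\cite{Shannon}'', and what you have written out is precisely Shannon's binomial decomposition of $(G_1+G_2)^{\boxtimes n}$ and the coordinate-reordering for $(G_1\boxtimes G_2)^{\boxtimes n}$, carried over verbatim to the weighted setting. Your cautionary remark about the Fekete remainder is apt but, as you say, routine; the only cosmetic omission is that one should take $\varepsilon<\min(a,b)$ and treat the degenerate case $a=0$ or $b=0$ separately (where the inequality is trivial).
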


\begin{proof}
As in the unweighted case~\cite{Shannon}.
\end{proof}

Since, as remarked earlier, these inequalities are not tight in general even in the unweighted case, neither can they be tight in the weighted case. One might expect simpler counterexamples to exist in the weighted case, but we have not been successful in finding any.

When $p_1$, $p_2$ are weight functions on the same graph $G$, a superadditivity inequality no longer holds for trivial reasons. For example for $G=K_2$, the graph on two adjacent vertices $\{u,v\}$, equipped with indicator functions $p_1 = \mathbbm{1}_{u}$ and $p_2 = \mathbbm{1}_{v}$, we have
$$
1 = \Theta(G,p_1+p_2) < \Theta(G,p_1) + \Theta(G,p_2) = 2 .
$$

Many statements about these weighted invariants can be reduced to statements about their unweighted counterparts using a technique which we call \emph{blow-up}. Applying this technique requires the vertex weights to be rational. Therefore, we begin by proving a continuity result which allows us to reduce many problems to the case of rational weights.

\begin{lem}
\label{addempty}
Let $(G,p)$ be a weighted graph and $\overline{K}_m$ the empty graph on $m$ vertices with arbitrary weights $q$. Then,
\beq
X(G + \overline{K}_m, p + q) = X(G,p) + \sum_{v\in V(\overline{K}_m)} q(v)
\eeq
for all four invariants $X\in\{\alpha,\Theta,\vartheta,\alpha^*\}$.
\end{lem}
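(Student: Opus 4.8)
The plan is to dispose of the four invariants in turn, throughout writing $Q\defin\sum_{v\in V(\overline{K}_m)}q(v)$ and exploiting two elementary features of the disjoint union $G+\overline{K}_m$: it has no edges between $V(G)$ and $V(\overline{K}_m)$, and every vertex of $\overline{K}_m$ is isolated in it. For $X=\alpha$ the statement is immediate, since an independent set of $G+\overline{K}_m$ is the disjoint union of an independent set of $G$ with an arbitrary subset of $V(\overline{K}_m)$, and its weight is maximized by combining a maximum-weight independent set of $(G,p)$ with all of $V(\overline{K}_m)$. For $X=\alpha^*$ one observes that the cliques of $G+\overline{K}_m$ are exactly the cliques of $G$ together with the one-element subsets of $V(\overline{K}_m)$; hence the linear program of Definition~\ref{wdefs} defining $\alpha^*(G+\overline{K}_m,p+q)$ splits into the one for $\alpha^*(G,p)$ and the independent constraints $x(v)\le 1$ for $v\in V(\overline{K}_m)$, whose optimal contribution is $Q$, attained at $x\equiv 1$.

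For $X=\vartheta$ I would use the characterization in Proposition~\ref{thetaalt}. Since the $m$ vertices of $\overline{K}_m$ are isolated in $G+\overline{K}_m$, they are universal in $\overline{G+\overline{K}_m}$, so an orthonormal labeling of $\overline{G+\overline{K}_m}$ is exactly an orthonormal labeling $(|\psi_v\rangle)_{v\in V(G)}$ of $\overline{G}$ together with completely unconstrained unit vectors $|\psi_w\rangle$ for $w\in V(\overline{K}_m)$. Extending an optimal labeling for $\vartheta(G,p)$ and setting every $|\psi_w\rangle\defin|\Psi\rangle$ gives $\vartheta(G+\overline{K}_m,p+q)\ge\vartheta(G,p)+Q$. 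Conversely, in any labeling the $\overline{K}_m$-part contributes at most $Q$ because $|\langle\Psi|\psi_w\rangle|^2\le 1$, while the $G$-part contributes at most $\vartheta(G,p)$: replacing $|\Psi\rangle$ by its normalized orthogonal projection onto $\mathrm{span}\{|\psi_v\rangle:v\in V(G)\}$ only increases $\sum_{v\in V(G)}p(v)|\langle\Psi|\psi_v\rangle|^2$, and afterwards all relevant vectors lie in a subspace of dimension $\le|V(G)|$, where the value is bounded by $\vartheta(G,p)$ by definition. (Here I use the standard dimension-independence of the optimization defining $\vartheta$, which the main text already uses implicitly.)

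The case $X=\Theta$ is the substantive one, and is where I expect the main work. The key structural observation is that a coordinate of an $n$-tuple taking a value in $V(\overline{K}_m)$ is ``frozen'': in the strong power $(G+\overline{K}_m)^{\boxtimes n}$, two equal or adjacent tuples must agree on all $\overline{K}_m$-valued coordinates, and in particular share the same set $S\subseteq\{1,\dots,n\}$ of $G$-valued coordinates. Consequently
\[
(G+\overline{K}_m)^{\boxtimes n}\;=\;\bigsqcup_{S\subseteq\{1,\dots,n\}}\ \ \bigsqcup_{f\colon\{1,\dots,n\}\setminus S\to V(\overline{K}_m)} G^{\boxtimes|S|},
\]
with no edges between the components, and on the component indexed by $(S,f)$ the weight $(p+q)^{\otimes n}$ restricts to $p^{\otimes|S|}$ scaled by the constant $\prod_{i\notin S}q(f(i))$. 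Since the weighted independence number of a disjoint union is the sum over components, summing over $f$ first yields $\prod_{i\notin S}\sum_{w}q(w)=Q^{\,n-|S|}$, whence
\[
\alpha\big((G+\overline{K}_m)^{\boxtimes n},(p+q)^{\otimes n}\big)\;=\;\sum_{k=0}^{n}\binom{n}{k}\,Q^{\,n-k}\,\alpha\big(G^{\boxtimes k},p^{\otimes k}\big).
\]
Bounding $\alpha(G^{\boxtimes k},p^{\otimes k})\le\Theta(G,p)^{k}$ for all $k$ (the weighted analogue of~\eqref{fekete}, by Fekete's lemma) makes the right-hand side $\le(Q+\Theta(G,p))^{n}$, so taking $n$-th roots and $n\to\infty$ gives $\Theta(G+\overline{K}_m,p+q)\le\Theta(G,p)+Q$. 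The matching lower bound follows from the superadditivity of the weighted Shannon capacity in Lemma~\ref{proptheta} together with $\Theta(\overline{K}_m,q)=Q$, the latter because $\overline{K}_m^{\boxtimes n}$ is again edgeless and hence has weighted independence number $Q^{n}$. The only genuine obstacle is carefully justifying the freezing of $\overline{K}_m$-valued coordinates, so that the displayed disjoint-union decomposition and hence the binomial identity are correct, and then controlling the limit defining $\Theta$; the $\alpha$, $\alpha^*$ cases and the construction half of the $\vartheta$ case are routine.
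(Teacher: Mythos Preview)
Your proposal is correct and follows essentially the same approach as the paper: the cases $X=\alpha$ and $X=\alpha^*$ are handled via the clique structure of $G+\overline{K}_m$, and for $X=\Theta$ both you and the paper decompose $(G+\overline{K}_m)^{\boxtimes n}$ according to which coordinates lie in $\overline{K}_m$, obtain the binomial sum $\sum_k\binom{n}{k}Q^{n-k}\alpha(G^{\boxtimes k},p^{\otimes k})$, bound it by $(\Theta(G,p)+Q)^n$, and pair this with superadditivity for the lower bound. The only real difference is that for $X=\vartheta$ the paper simply cites Knuth, whereas you supply a direct argument via Proposition~\ref{thetaalt}; your ``freezing'' of $\overline{K}_m$-coordinates is also made more explicit than in the paper, yielding an equality for the binomial identity where the paper records only the inequality it needs.
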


\begin{proof}
This is trivial for $X=\alpha$, since the maximal independent sets in $G+\overline{K}_m$ are those of the form $I+\overline{K}_m$, where $I$ is a maximal independent set in $G$. For $X=\vartheta$, the claim is a special case of~\cite[eq.~(18.2)]{Knuth}. For $X=\alpha^*$, it follows from an application of Proposition~\ref{fpndualprop}, since a clique in $G+\overline{K}_m$ is a clique in $G$ or a single vertex in $\overline{K}_m$. So it remains to treat the case $X=\Theta$.

Since $\Theta(\overline{K}_m,q) = \sum_v q(v)$, the inequality `$\geq$' is an instance of superadditivity~(\ref{superadd}) of $\Theta$. To also show `$\leq$', we choose any independent set $I$ in $(G+\overline{K}_m)^{\boxtimes n}$ and partition it into a disjoint union
$$
I = \bigcup_{\vec{s}\in\{0,1\}^n} I_{\vec{s}}
$$
where each $I_{\vec{s}}$ contains only vertices $(v_1,\ldots,v_n)$ with $v_i\in V(G)$ if $s_i=0$ and $v_i\in V(\overline{K}_m)$ if $s_i=1$. Then upon dropping all components $i$ with $s_i=1$, such an $I_{\vec{s}}$ becomes an independent set in some $G^{\boxtimes k}$. In this way, we obtain the estimate
\begin{align*}
\alpha\left( (G+\overline{K}_m)^{\boxtimes n}, (p+q)^{\otimes n}\right) & \leq \sum_{k=0}^n \binom{n}{k} \alpha(G^{\boxtimes k},p^{\otimes k}) \left(\sum_i q_i\right)^{n-k} \\
& \leq \sum_{k=0}^n \binom{n}{k} \Theta(G,p)^k \left(\sum_i q_i\right)^{n-k} = \left(\Theta(G,p) + \sum_i q_i\right)^n ,
\end{align*}
which implies the desired inequality upon taking the $n$-th root and then $n\to\infty$.
\end{proof}

\begin{lem}
\label{incrweight}
Let $(G,p)$ be a weighted graph, $v\in G$ a vertex, $\mu\in\R_+$ and $X\in\{\alpha,\Theta,\vartheta,\alpha^*\}$. Then 
\beq
X(G,p) \leq X(G,p + \mu\mathbbm{1}_v) \leq X(G,p) + \mu .
\eeq
\end{lem}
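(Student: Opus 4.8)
The plan is to split the claimed chain into its two inequalities. The left-hand inequality $X(G,p)\le X(G,p+\mu\mathbbm{1}_v)$ is pure monotonicity of each invariant in the vertex weights: replacing $p$ by $p+\mu\mathbbm{1}_v$ only increases the total weight of every independent set, hence increases $\alpha$; it satisfies $(p+\mu\mathbbm{1}_v)^{\otimes n}\ge p^{\otimes n}$ pointwise on $G^{\boxtimes n}$, so $\Theta$ increases as well; by the max-characterization of Proposition~\ref{thetaalt} it only increases $\sum_w p(w)\,|\langle\Psi|\psi_w\rangle|^2$ for every fixed orthonormal labelling, so $\vartheta$ increases; and it only increases $\sum_w p(w)\,q(w)$ for every fixed feasible $q$, so $\alpha^*$ increases. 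None of this requires any work.

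For the right-hand inequality $X(G,p+\mu\mathbbm{1}_v)\le X(G,p)+\mu$ I would treat each invariant with its own characterization. For $X=\alpha$ it is trivial: any independent set $I$ has $\sum_{w\in I}(p+\mu\mathbbm{1}_v)(w)=\sum_{w\in I}p(w)+\mu\,[v\in I]\le\alpha(G,p)+\mu$. For $X=\alpha^*$ one uses the dual form~\eqref{fpndual} of Proposition~\ref{fpndualprop}: starting from an optimal $x:\mathrm{Cl}(G)\to\R_+$ for $(G,p)$, add $\mu$ to the value $x(\{v\})$ (a single vertex is a clique); the modified $x'$ is feasible for $p+\mu\mathbbm{1}_v$ and $\sum_C x'(C)=\sum_C x(C)+\mu=\alpha^*(G,p)+\mu$. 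For $X=\vartheta$ one uses the max-form of Proposition~\ref{thetaalt}: for any unit vector $|\Psi\rangle$ and orthonormal labelling $(|\psi_w\rangle)$ of $\overline{G}$, $\sum_w(p+\mu\mathbbm{1}_v)(w)\,|\langle\Psi|\psi_w\rangle|^2=\sum_w p(w)\,|\langle\Psi|\psi_w\rangle|^2+\mu\,|\langle\Psi|\psi_v\rangle|^2\le\vartheta(G,p)+\mu$ because $|\langle\Psi|\psi_v\rangle|^2\le\|\Psi\|^2\|\psi_v\|^2=1$; taking the maximum gives the bound.

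The only case needing a genuine argument is $X=\Theta$, and here the proof mimics that of Lemma~\ref{addempty}. Expand the product weight function as $(p+\mu\mathbbm{1}_v)^{\otimes n}=\sum_{S\subseteq[n]}\mu^{|S|}q_S$ with $q_S(x_1,\dots,x_n)=\bigl(\prod_{i\in S}\mathbbm{1}_v(x_i)\bigr)\bigl(\prod_{i\notin S}p(x_i)\bigr)$. For any independent set $I$ in $G^{\boxtimes n}$, the part of $I$ on which $q_S$ is supported consists of vertices with $x_i=v$ for all $i\in S$; deleting those coordinates gives an independent set in $G^{\boxtimes(n-|S|)}$ (two such vertices of $I$ that agree on $S$ must be witnessed distinct and non-adjacent by a coordinate outside $S$, and the deletion map is injective on this subset) of total $p^{\otimes(n-|S|)}$-weight equal to $\sum_{x\in I}q_S(x)$, so $\sum_{x\in I}q_S(x)\le\alpha\bigl(G^{\boxtimes(n-|S|)},p^{\otimes(n-|S|)}\bigr)\le\Theta(G,p)^{\,n-|S|}$, the last inequality being Fekete's lemma applied to the superadditive sequence $\log\alpha(G^{\boxtimes k},p^{\otimes k})$ (cf.~\eqref{fekete} and the proof of Lemma~\ref{addempty}). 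Summing over $S$,
\[
\alpha\bigl(G^{\boxtimes n},(p+\mu\mathbbm{1}_v)^{\otimes n}\bigr)\le\sum_{k=0}^n\binom{n}{k}\mu^k\,\Theta(G,p)^{\,n-k}=\bigl(\Theta(G,p)+\mu\bigr)^n,
\]
and taking $n$-th roots and letting $n\to\infty$ yields $\Theta(G,p+\mu\mathbbm{1}_v)\le\Theta(G,p)+\mu$ by~\eqref{wcapdef}. The main obstacle is thus entirely contained in this last case, specifically in the bookkeeping that deleting the forced coordinates turns the relevant part of $I$ into an independent set of the same total weight — which is exactly the combinatorial step already performed for $X=\Theta$ in Lemma~\ref{addempty}; all other cases are one-line verifications.
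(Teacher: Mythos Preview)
Your proof is correct, but it takes a different route from the paper's. The paper's argument for the upper bound is a uniform two-line reduction to Lemma~\ref{addempty}: observe that $X(G,p+\mu\mathbbm{1}_v)\le X(G+\overline{K}_1,p+\mu)$, where the right-hand side places the extra weight $\mu$ on a new isolated vertex rather than on $v$ (this inequality holds because adding edges to make the new vertex look like a copy of $v$ cannot increase $X$, and two vertices with identical neighbourhoods can be merged with their weights added); then Lemma~\ref{addempty} with $m=1$ gives $X(G+\overline{K}_1,p+\mu)=X(G,p)+\mu$.

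By contrast, you verify the upper bound for each invariant separately from its defining or dual characterization, and for $X=\Theta$ you essentially reprove the binomial-expansion estimate from Lemma~\ref{addempty} in situ rather than invoking it. Your approach has the virtue of being self-contained and of not appealing to the vertex-merging and edge-monotonicity facts (which, for $\vartheta$, the paper cites from Knuth); the paper's approach has the virtue of treating all four invariants at once and of reusing Lemma~\ref{addempty} rather than repeating its proof.
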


\begin{proof}
The first inequality is clear since $X(G,p)$ is a non-decreasing function of $p$.

Since adding additional edges cannot increase the value of $X$ and two vertices with exactly the same neighbors can be identified to one vertex by adding the weights (for $X=\vartheta$, see~\cite[Lemma~16]{Knuth}), we have $X(G,p+q\mathbbm{1}_v) \leq  X(G + \overline{K}_1,p+q)$. Now the second inequality follows from the previous lemma with $m=1$.
\end{proof}

This lemma directly gives the desired continuity result:

\begin{cor}
\label{thetacont}
For any graph $G$ and any $X\in\{\alpha,\Theta,\vartheta,\alpha^*\}$, the function $p\mapsto X(G,p)$ is continuous.
\end{cor}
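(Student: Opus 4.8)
The plan is to bootstrap Lemma~\ref{incrweight}, which controls a single weight increment at one vertex, into a statement about arbitrary perturbations of the weight function; this will in fact yield that $p\mapsto X(G,p)$ is Lipschitz continuous with constant $|V(G)|$, for each of the four invariants $X\in\{\alpha,\Theta,\vartheta,\alpha^*\}$ simultaneously. First I would record the monotonicity already invoked in the proof of Lemma~\ref{incrweight}: if $p\le p'$ pointwise, then $X(G,p)\le X(G,p')$. Then, given two weight functions $p,p':V(G)\to\R_+$, I would introduce their pointwise maximum $p''\defin\max(p,p')$, which is again a nonnegative weight function, and observe that $p'' = p + \sum_{v\in V(G)}\mu_v\mathbbm{1}_v$ with $\mu_v\defin\max(p'(v)-p(v),0)\ge 0$, and symmetrically $p'' = p' + \sum_{v\in V(G)}\nu_v\mathbbm{1}_v$ with $\nu_v\defin\max(p(v)-p'(v),0)\ge 0$.

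Next I would add these increments one vertex at a time. Enumerating $V(G)=\{v_1,\dots,v_N\}$ and setting $q_0\defin p$, $q_k\defin q_{k-1}+\mu_{v_k}\mathbbm{1}_{v_k}$, each step is an instance of Lemma~\ref{incrweight} (applied with base weighting $q_{k-1}$), so telescoping over $k=1,\dots,N$ gives $X(G,p)\le X(G,p'')\le X(G,p)+\sum_k\mu_{v_k}$. Running the same argument starting from $p'$ yields $X(G,p')\le X(G,p'')\le X(G,p')+\sum_k\nu_{v_k}$. Combining the two sandwiches, and using $\sum_v\mu_v,\sum_v\nu_v\le\sum_v|p(v)-p'(v)|$, gives
\[
|X(G,p)-X(G,p')|\;\le\;\max\Bigl(\sum_{v}\mu_v,\ \sum_{v}\nu_v\Bigr)\;\le\;\sum_{v}|p(v)-p'(v)|\;\le\;|V(G)|\cdot\|p-p'\|_\infty,
\]
which proves continuity (indeed Lipschitz continuity).

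I do not expect a genuine obstacle here: the whole argument is elementary bookkeeping on top of Lemma~\ref{incrweight}. The only points needing a little care are that the intermediate weightings $q_k$ stay valued in $\R_+$ so that Lemma~\ref{incrweight} is applicable at every step, and that at step $k$ one applies the lemma with base weighting $q_{k-1}$ (not $p$). Accordingly I would keep the actual write-up to just a few lines.
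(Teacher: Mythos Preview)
Your argument is correct and is precisely the natural way to flesh out the paper's one-line proof, which merely says that Lemma~\ref{incrweight} ``directly gives'' the continuity; you have supplied exactly the bookkeeping that makes this direct (iterating the single-vertex bound and passing through the pointwise maximum), and in fact obtain the slightly sharper Lipschitz estimate $|X(G,p)-X(G,p')|\le\sum_v|p(v)-p'(v)|$.
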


We can now introduce the blow-up technique which can be used to translate problems from the weighted case to the unweighted setting. 

\begin{defn}
Let $(G,p)$ be a weighted graph with $p(v)\in\N$ for all $v$. Then the \emph{blow-up} $\mathrm{Blup}(G,p)$ is the unweighted graph with vertex set
$$
\big\{(v,k) \::\: v\in G,\, k\in \{1,\ldots,p(v)\} \, \big\} ,
$$
where we take $(v,k)$ and $(v',k')$ to be adjacent if and only if $v\sim v'$ in $G$.
\end{defn}

Intuitively speaking, $\mathrm{Blup}(G,p)$ is constructed by replacing every vertex $v$ in $G$ by $p(v)$ many non-adjacent vertices. In particular, if $p(v)=0$, the vertex $v$ simply gets removed from the graph. Blow-ups have also been considered in~\cite[Sec.~16]{Knuth}, although not under that name.

\begin{lem}
For vertex weights in $\N$,
\label{blup}
\begin{enumerate}
\item $\mathrm{Blup}(G_1 + G_2, p_1 + p_2) = \mathrm{Blup}(G_1,p_1) + \mathrm{Blup}(G_2,p_2)$.
\item $\mathrm{Blup}(G_1\boxtimes G_2,p_1\otimes p_2) = \mathrm{Blup}(G_1,p_1)\boxtimes \mathrm{Blup}(G_2,p_2)$;
\item $X(\mathrm{Blup}(G,p)) = X(G,p)$ for every $X\in\{\alpha,\Theta,\vartheta,\alpha^*\}$.
\end{enumerate}
\end{lem}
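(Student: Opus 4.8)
The plan is to deduce part (c) from parts (a) and (b), and to prove (a) and (b) by exhibiting explicit bijections of vertex sets and checking that they preserve adjacency; throughout one keeps in mind that a vertex of weight $0$ simply disappears under $\mathrm{Blup}$, so all sums below range over vertices of positive weight. Part (a) is immediate: the vertex set of $\mathrm{Blup}(G_1+G_2,p_1+p_2)$ is $\{(v,k) : v\in V(G_1)\sqcup V(G_2),\ k\le (p_1+p_2)(v)\}$, and since $p_1+p_2$ restricts to $p_i$ on $V(G_i)$ this is canonically the disjoint union of the vertex sets of the $\mathrm{Blup}(G_i,p_i)$; two such vertices whose base vertices lie in different summands are never adjacent, because their base vertices are never adjacent in $G_1+G_2$, and within a single summand the adjacency rule of $\mathrm{Blup}(G_1+G_2,p_1+p_2)$ is verbatim that of $\mathrm{Blup}(G_i,p_i)$.

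For part (b), I would first fix, for each pair $(v_1,v_2)$, a bijection $\{1,\dots,p_1(v_1)p_2(v_2)\}\to\{1,\dots,p_1(v_1)\}\times\{1,\dots,p_2(v_2)\}$, written $k\mapsto(k_1,k_2)$; since $(p_1\otimes p_2)(v_1,v_2)=p_1(v_1)p_2(v_2)$, this induces a bijection $((v_1,v_2),k)\mapsto((v_1,k_1),(v_2,k_2))$ between the two vertex sets. It then remains to match the two adjacency relations: on the left, two distinct vertices are adjacent iff the base vertices $(v_1,v_2)$ and $(v_1',v_2')$ are adjacent in $G_1\boxtimes G_2$; on the right, one expands the definition of the strong product and uses that in $\mathrm{Blup}(G_i,p_i)$ two vertices are ``equal or adjacent'' precisely when their base vertices are equal or adjacent in $G_i$. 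One then checks that the two resulting Boolean conditions agree by working through the cases ``$v_1=v_1'$ versus $v_1\sim v_1'$'' and likewise for the second coordinate. This case analysis — in particular distinguishing coincidence of base vertices from mere adjacency — is the step most prone to slips.

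For part (c) I would treat the four invariants in turn. For $\alpha$: since distinct copies of a vertex are non-adjacent while all copies of two adjacent vertices are mutually adjacent, an independent set of $\mathrm{Blup}(G,p)$ is exactly the union of all copies of the vertices in some independent set $I$ of $G$, of size $\sum_{v\in I}p(v)$; maximising over $I$ gives $\alpha(\mathrm{Blup}(G,p))=\alpha(G,p)$. For $\alpha^*$: the cliques of $\mathrm{Blup}(G,p)$ are precisely the sets obtained by choosing a clique $C$ of $G$ and one copy of each of its vertices, so a feasible clique-covering weighting of $G$ lifts by assigning the same value to all copies of a vertex, and conversely an arbitrary feasible weighting of the blow-up can be averaged over the copies of each vertex (using linearity of the objective and convexity of the feasible set) without changing the objective, whence $\alpha^*(\mathrm{Blup}(G,p))=\alpha^*(G,p)$. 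For $\vartheta$: given an orthonormal labelling $(|\psi_v\rangle)$ of $G$ and a handle $|\Psi\rangle$ realising $\vartheta(G,p)$, the vectors $|\psi_v\rangle\otimes|e_k\rangle$ — with $(|e_k\rangle)$ an auxiliary orthonormal family indexing the copies, so that distinct copies of a vertex become orthogonal — form an orthonormal labelling of $\mathrm{Blup}(G,p)$, and conversely one restricts/averages a labelling of the blow-up; checking that the optimum is preserved gives $\vartheta(\mathrm{Blup}(G,p))=\vartheta(G,p)$ (cf.~\cite{Knuth}). Finally for $\Theta$: iterating (b) gives $\mathrm{Blup}(G,p)^{\boxtimes n}=\mathrm{Blup}(G^{\boxtimes n},p^{\otimes n})$, so $\alpha(\mathrm{Blup}(G,p)^{\boxtimes n})=\alpha(G^{\boxtimes n},p^{\otimes n})$ by the $\alpha$ case applied to $(G^{\boxtimes n},p^{\otimes n})$; taking $n$-th roots and letting $n\to\infty$ yields $\Theta(\mathrm{Blup}(G,p))=\Theta(G,p)$. (Alternatively, $\Theta$ can be done by a direct double inequality: ``$\ge$'' from supermultiplicativity, ``$\le$'' by a partitioning argument of the kind used in the proof of Lemma~\ref{addempty}.)

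The hard part will be the $\Theta$ case: it is the only one that genuinely uses compatibility of $\mathrm{Blup}$ with strong products, i.e.\ part (b), rather than with a single graph, and the only one lacking a finite combinatorial-optimisation description, so it must route through the asymptotic definition. Consequently, getting part (b) exactly right — the case-by-case matching of the strong-product adjacency of the blown-up factors against the blow-up of the strong product — is the linchpin on which the rest of the argument rests.
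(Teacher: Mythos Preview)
Your proposal is correct and in fact considerably more detailed than the paper's own proof, which consists of the single word ``Straightforward.'' Your route through part (b) to handle $\Theta$ via $\mathrm{Blup}(G,p)^{\boxtimes n}=\mathrm{Blup}(G^{\boxtimes n},p^{\otimes n})$ and then the $\alpha$ case is exactly the natural argument the authors have in mind.

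One small remark on the $\vartheta$ case: the tensor construction $|\psi_v\rangle\otimes|e_k\rangle$ does give an orthonormal labelling of the blow-up, but pairing it with a tensor handle does not directly hit the weighted optimum in the primal formulation~\eqref{wlovdef}, since the factor $p(v)$ varies with $v$ while $|\langle\phi|e_k\rangle|^2$ does not. The cleanest way is via the dual formulation of Proposition~\ref{thetaalt}: there one can simply assign the \emph{same} vector $|\psi_v\rangle$ to all $p(v)$ copies of $v$ (this is a valid labelling of $\overline{\mathrm{Blup}(G,p)}$), and the sum $\sum_{(v,k)}|\langle\Psi|\psi_{(v,k)}\rangle|^2$ collapses immediately to $\sum_v p(v)|\langle\Psi|\psi_v\rangle|^2$. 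Your deferral to Knuth is entirely legitimate here; this is precisely the ``twin-vertex'' reduction of~\cite[Lemma~16]{Knuth} already invoked in the proof of Lemma~\ref{incrweight}.
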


\begin{proof}
Straightforward.
\end{proof}

We can now already reap some of the simpler benefits of these results. By the continuity statement of Corollary~\ref{thetacont}, it is sufficient for the proof of many statements to consider rational weights. In this case, one can often restrict to natural number weights without loss of generality by rescaling all weights by the smallest common denominator, and then the blow-up technique applies. This yields the following list of consequences of results already derived for unweighted graphs:

\begin{cor}
\label{monotone}
$$
\alpha(G,p) \leq \Theta(G,p) \leq \vartheta(G,p) \leq \alpha^*(G,p) .
$$
\end{cor}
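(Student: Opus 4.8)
The plan is to reduce the weighted sandwich inequality to its unweighted counterpart, Proposition~\ref{sandwich}, via the blow-up construction of this appendix, using the continuity statement of Corollary~\ref{thetacont} to pass from arbitrary real weights to rational ones, and homogeneity to pass from rational weights to natural-number ones.

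First I would invoke Corollary~\ref{thetacont}: each of the four functions $p \mapsto X(G,p)$, $X \in \{\alpha,\Theta,\vartheta,\alpha^*\}$, is continuous on $\R_+^{V(G)}$, and since the weight functions with values in $\Q_+$ are dense while the asserted inequalities are closed conditions, it suffices to prove the chain
\[
\alpha(G,p) \leq \Theta(G,p) \leq \vartheta(G,p) \leq \alpha^*(G,p)
\]
for rational $p$. Next I would observe that each invariant is homogeneous of degree one, i.e.\ $X(G,\lambda p) = \lambda\,X(G,p)$ for all $\lambda \in \R_+$: for $\alpha$ and $\alpha^*$ the defining optimization is linear in $p$; for $\vartheta$ this is visible from the formula of Proposition~\ref{thetaalt}; and for $\Theta$ it follows from $\alpha(G^{\boxtimes n},(\lambda p)^{\otimes n}) = \lambda^n\,\alpha(G^{\boxtimes n},p^{\otimes n})$ together with~\eqref{wcapdef}. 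Hence, multiplying a given rational $p$ by a common denominator $N\in\N$, the desired chain for $p$ is equivalent (after dividing through by $N$) to the same chain for the $\N$-valued weight function $Np$, so we may assume that $p$ takes values in $\N$.

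For $\N$-valued weights the blow-up applies: Lemma~\ref{blup}\,(3) gives $X(\mathrm{Blup}(G,p)) = X(G,p)$ for all four invariants, and applying the unweighted sandwich theorem Proposition~\ref{sandwich} to the ordinary graph $\mathrm{Blup}(G,p)$ yields
\[
\alpha(\mathrm{Blup}(G,p)) \leq \Theta(\mathrm{Blup}(G,p)) \leq \vartheta(\mathrm{Blup}(G,p)) \leq \alpha^*(\mathrm{Blup}(G,p)),
\]
which by these identifications is precisely $\alpha(G,p) \leq \Theta(G,p) \leq \vartheta(G,p) \leq \alpha^*(G,p)$. Chaining the three reductions finishes the proof. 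The argument is essentially mechanical given the tools already assembled here, and I do not expect a genuine obstacle; the one place needing a little care is the homogeneity claim, which has to be checked separately for each of $\alpha$, $\Theta$, $\vartheta$, $\alpha^*$. (One could sidestep the use of Proposition~\ref{thetaalt} by instead blowing up $\lambda p$ for integer $\lambda$ and comparing directly, but verifying homogeneity is the tidier route.)
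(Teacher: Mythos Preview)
Your proposal is correct and follows essentially the same route as the paper: the preamble to Corollary~\ref{monotone} already outlines exactly this reduction (continuity via Corollary~\ref{thetacont}, rescaling to natural-number weights, then the blow-up Lemma~\ref{blup} and the unweighted Proposition~\ref{sandwich}). Your write-up is in fact more detailed than the paper's, since you make the homogeneity step explicit for each invariant, whereas the paper treats the rescaling as self-evident.
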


\begin{proof}
Proposition~\ref{sandwich}.
\end{proof}

\begin{cor}[{\cite[(20.5)]{Knuth}}]
\label{lovmult}
$$
\vartheta(G_1\boxtimes G_2,p_1\otimes p_2) = \vartheta(G_1,p_1)\vartheta(G_2,p_2)
$$
\end{cor}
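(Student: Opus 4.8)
The plan is to reduce the weighted statement to the already-known unweighted multiplicativity $\vartheta(G_1\boxtimes G_2) = \vartheta(G_1)\vartheta(G_2)$ of Proposition~\ref{lovaszmult} by means of the blow-up technique of Lemma~\ref{blup}. Blow-up only makes sense for weights in $\N$, so the argument proceeds in three stages: first reduce from arbitrary real weights to rational weights by a density/continuity argument, then from rational to natural-number weights by rescaling, and finally apply blow-up together with Proposition~\ref{lovaszmult}.

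\textbf{Step 1: reduce to rational weights.} By Corollary~\ref{thetacont}, for any fixed graph $G$ the map $p\mapsto\vartheta(G,p)$ is continuous. Since $(p_1,p_2)\mapsto p_1\otimes p_2$ is continuous, both sides of the claimed identity in Corollary~\ref{lovmult} are continuous functions of the pair $(p_1,p_2)\in\R_+^{V(G_1)}\times\R_+^{V(G_2)}$. As the pairs of rational weight functions are dense, it suffices to prove the identity when $p_1$ and $p_2$ take only rational values.

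\textbf{Step 2: reduce to natural-number weights.} Observe that $\vartheta(G,\cdot)$ is positively homogeneous: from Definition~\ref{wdefs} (or from the dual formula of Proposition~\ref{thetaalt}) one reads off $\vartheta(G,cp)=c\,\vartheta(G,p)$ for $c>0$. Given rational $p_1,p_2$, pick positive integers $d_1,d_2$ with $d_1p_1$ and $d_2p_2$ integer-valued; then $(d_1p_1)\otimes(d_2p_2)=d_1d_2\,(p_1\otimes p_2)$ is integer-valued as well, and by homogeneity the desired identity for $(p_1,p_2)$ is equivalent to the one for $(d_1p_1,d_2p_2)$. So we may assume $p_1,p_2$ take values in $\N$. (If some weight is $0$ the corresponding vertex is simply deleted by the blow-up; one either allows isolated vertices in the invariants or notes they contribute only additively, so this causes no trouble.) Now apply Lemma~\ref{blup}: using $\mathrm{Blup}(G_1\boxtimes G_2,p_1\otimes p_2)=\mathrm{Blup}(G_1,p_1)\boxtimes\mathrm{Blup}(G_2,p_2)$ and $\vartheta(\mathrm{Blup}(G,p))=\vartheta(G,p)$ together with Proposition~\ref{lovaszmult},
\[
\vartheta(G_1\boxtimes G_2,p_1\otimes p_2)
=\vartheta\bigl(\mathrm{Blup}(G_1,p_1)\boxtimes\mathrm{Blup}(G_2,p_2)\bigr)
=\vartheta(\mathrm{Blup}(G_1,p_1))\,\vartheta(\mathrm{Blup}(G_2,p_2))
=\vartheta(G_1,p_1)\,\vartheta(G_2,p_2),
\]
which is the claim.

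\textbf{Main obstacle.} The mathematical content is entirely carried by Proposition~\ref{lovaszmult} and Lemma~\ref{blup}, which are assumed; the only real work is organizing the reductions cleanly. The one place to be careful is making sure that the continuity reduction (Step 1) and the homogeneity/rescaling reduction (Step 2) are genuinely compatible — in particular that rescaling each factor separately only rescales both sides by the same constant $d_1d_2$ — and handling the degenerate case of zero weights (hence possibly isolated vertices after blow-up) without contradicting the standing "no isolated vertices" convention. Neither point is deep, so the proof should be short.
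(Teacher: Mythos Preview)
Your proof is correct and follows essentially the same approach as the paper. The paper's argument, stated in the paragraph preceding the corollary, is precisely: use continuity (Corollary~\ref{thetacont}) to reduce to rational weights, rescale to integer weights, then apply the blow-up Lemma~\ref{blup} to reduce to the unweighted Proposition~\ref{lovaszmult}; your write-up simply makes these steps more explicit.
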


\begin{proof}
Proposition~\ref{lovaszmult}.
\end{proof}

\begin{cor}
\label{win-ineq}
$$
\alpha(G_1\boxtimes G_2,p_1\otimes p_2) \geq \alpha(G_1,p_1) \alpha(G_2,p_2)
$$
\end{cor}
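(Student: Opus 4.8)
The plan is to prove this exactly as in the unweighted Lemma~\ref{in-ineq}: a product of independent sets is again an independent set, and under the tensor weight function the total weight of such a product factorizes. Concretely, pick independent sets $I_1\subseteq V(G_1)$ and $I_2\subseteq V(G_2)$ attaining the weighted independence numbers, i.e.\ with $\sum_{v\in I_1}p_1(v)=\alpha(G_1,p_1)$ and $\sum_{v\in I_2}p_2(v)=\alpha(G_2,p_2)$. The first line of the proof of Lemma~\ref{prodin} already records that $I_1\times I_2$ is then an independent set in $G_1\boxtimes G_2$ (this is immediate from the definition of the strong product, and needs no maximality). Hence
\[
\alpha(G_1\boxtimes G_2,\,p_1\otimes p_2)\;\geq\;\sum_{(v_1,v_2)\in I_1\times I_2}(p_1\otimes p_2)(v_1,v_2)
\;=\;\Bigg(\sum_{v_1\in I_1}p_1(v_1)\Bigg)\Bigg(\sum_{v_2\in I_2}p_2(v_2)\Bigg)
\;=\;\alpha(G_1,p_1)\,\alpha(G_2,p_2),
\]
which is the claim. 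The middle equality is just the distributive law, using $(p_1\otimes p_2)(v_1,v_2)=p_1(v_1)p_2(v_2)$.

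There is essentially no obstacle here; the only point worth a sentence is that the weight‑maximizing independent sets $I_1,I_2$ need not be inclusion‑maximal, but this is harmless: since the weights are non‑negative, passing from $I_1\times I_2$ to any independent set containing it only increases the total weight, and in any case the inequality above does not require $I_1\times I_2$ to be maximal. One could alternatively derive the statement through the blow‑up machinery, in the same style as the neighbouring Corollaries~\ref{lovmult} and the ones following it: reduce to rational and then (by homogeneity of $\alpha$ in the weights) to integer weights via Corollary~\ref{thetacont}, apply $\mathrm{Blup}(G_1\boxtimes G_2,p_1\otimes p_2)=\mathrm{Blup}(G_1,p_1)\boxtimes\mathrm{Blup}(G_2,p_2)$ together with $\alpha(\mathrm{Blup}(G,p))=\alpha(G,p)$ from Lemma~\ref{blup}, and invoke the unweighted Lemma~\ref{in-ineq}. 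However, the direct argument above is shorter, so I would present that and perhaps relegate the blow‑up route to a one‑line remark (``alternatively, combine Lemmas~\ref{in-ineq} and~\ref{blup}'').
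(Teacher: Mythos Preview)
Your direct argument is correct. The paper, however, takes precisely the route you relegate to a remark: the corollary appears in a block of results all proved uniformly by the blow-up technique (continuity to reduce to rational weights, rescale to integer weights, apply Lemma~\ref{blup} to reduce to the unweighted case, then invoke the unweighted Lemma~\ref{in-ineq}). So the paper's one-word proof ``Lemma~\ref{in-ineq}'' is shorthand for this blow-up reduction. Your primary argument is more elementary and self-contained for this particular invariant---the product of weight-maximizing independent sets does the job in one line with no approximation or continuity needed---whereas the paper's approach has the virtue of treating $\alpha$, $\Theta$, $\vartheta$, and $\alpha^*$ by a single uniform mechanism. Either is fine; your instinct to present the direct argument and mention the blow-up route parenthetically is reasonable, though it inverts the paper's emphasis.
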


\begin{proof}
Lemma~\ref{in-ineq}.
\end{proof}

\newpage
\section{\textbf{Results on graphs with Shannon capacity $=$ independence number}}
\label{uwmainconjs}

\setcounter{theo}{0}

Here we study graphs whose Shannon capacity coincides with their independence number, which arise in Section~\ref{LOsec}.

\begin{defn}
An unweighted graph $G$ is \emph{single-shot} if $\alpha(G) = \Theta(G)$.
\end{defn}

In this section, we will study single-shot graphs in some detail and exploit some of the results from the main text for deriving a stronger version of Theorem~\ref{haemersalon}, namely that the phenomena $\Theta(G_1\boxtimes G_2) > \Theta(G_1)\Theta(G_2)$ and $\Theta(G_1+G_2) > \Theta(G_1) + \Theta(G_2)$ do occur even within the class of single-shot graphs. As far as we know, these results are new.

Single-shot graphs are the \emphalt{Class 1} graphs of Berge~\cite{Berge}\footnote{We thank Andr{\'a}s Salamon for pointing out this reference.}. $G$ is single-shot precisely when the sequence $\left(\sqrt[n]{\alpha(G^{\boxtimes n})}\right)_{n\in\N}$ is constant. Our terminology is motivated by the information-theoretic interpretation alluded to in Appendix~\ref{appcap}: if a communication channel has a confusability graph which is single-shot, then there exists a zero-error code for this channel which operates on the single-shot level.

A well-known class of single-shot graphs are the \emph{perfect graphs} (see Section~\ref{perfectionsection}), which have the property that $\alpha(G)=\alpha^*(G)$~\cite{Knuth}. On the other hand, the Petersen graph is not perfect, but nevertheless single-shot since its Lov{\'a}sz number coincides with its independence number~\cite[p.~31]{Knuth}.

In order to use our theorem from the main text to deduce results about single-shot graphs, we need to consider weighted graphs first:

\begin{defn}
A weighted graph $(G,p)$ is \emph{single-shot} if $\alpha(G,p) = \Theta(G,p)$.
\end{defn}

For weighted graphs, our main result is this:

\begin{thm}
\label{wthm}
There exist weighted single-shot graphs $(G_1,p_1)$ and $(G_2,p_2)$ such that
\[
\Theta(G_1\boxtimes G_2, p_1\otimes p_2) > \Theta(G_1,p_1) \Theta(G_2,p_2) 
\]
and
\[
\Theta(G_1 + G_2, p_1 + p_2) > \Theta(G_1,p_1) + \Theta(G_2,p_2).
\]
It is even possible to have $\alpha(G_2,p_2) = \vartheta(G_2,p_2)$.
\end{thm}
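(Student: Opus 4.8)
### Proof proposal for Theorem~\ref{wthm}

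The plan is to build the weighted single-shot graphs $(G_1,p_1)$ and $(G_2,p_2)$ directly out of the data manufactured in the main text, in particular out of the scenario $J_{12}$ (or more generally any $J_n$ from the proof of Theorem~\ref{ejemplito}) together with its companion scenario $H_B$ from the activation construction in Section~\ref{se:act}. Concretely, I would take $G_1 \defin \mathrm{NO}(H_A)$ with weight function $p_1 \defin p_A$ (where $H_A = J_{12}$ and $p_A = p_J$ assigns uniform weight $\tfrac{1}{12}$), and $G_2 \defin \mathrm{NO}(H_B)$ with $p_2 \defin p_B$ the quantum model of Lemma~\ref{yanlemma}. The point is that these two weighted graphs already carry everything we need: $p_A\in\CE^\infty(H_A)$ gives $\alpha(\mathrm{NO}(H_A),p_A)=\Theta(\mathrm{NO}(H_A),p_A)=1$ by Lemma~\ref{LOchar}, so $(G_1,p_1)$ is single-shot; and $p_B\in\mathcal{Q}(H_B)\subseteq\mathcal{Q}_1(H_B)$ gives, via Proposition~\ref{Q1vsLov} and Corollary~\ref{monotone}, the chain $1 = \alpha(\mathrm{NO}(H_B),p_B) \leq \Theta(\mathrm{NO}(H_B),p_B) \leq \vartheta(\mathrm{NO}(H_B),p_B) \leq 1$, so $(G_2,p_2)$ is single-shot and moreover satisfies $\alpha(G_2,p_2)=\vartheta(G_2,p_2)=1$, which is the last sentence of the theorem.

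Next I would establish the strict product inequality. By Lemma~\ref{ortproduct} we have $\mathrm{NO}(H_A\otimes H_B) = \mathrm{NO}(H_A)\boxtimes\mathrm{NO}(H_B) = G_1\boxtimes G_2$, and the weight function on this product is exactly $p_1\otimes p_2 = p_A\otimes p_B$. The activation lemma (Yan's argument, reproduced just before Theorem~\ref{CEactivate}) shows that the diagonal $D=\{(v,v)\mid v\in V(H_A)\}$ is an independent set in $\mathrm{NO}(H_A\otimes H_B)$ and that $\sum_{v}(p_A\otimes p_B)(v,v) = \sum_v p_A(v)\,|\langle\Psi|\phi_v\rangle|^2 > 1$ by inequality~\eqref{yanidea}. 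Hence $\alpha(G_1\boxtimes G_2, p_1\otimes p_2) > 1 = \Theta(G_1,p_1)\cdot\Theta(G_2,p_2)$, and since $\alpha \leq \Theta$ (Corollary~\ref{monotone}), also $\Theta(G_1\boxtimes G_2,p_1\otimes p_2) > \Theta(G_1,p_1)\Theta(G_2,p_2)$. This is the statement $p_A\otimes p_B\not\in\CE^1(H_A\otimes H_B)$ re-expressed in pure graph-theoretic language.

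For the strict sum inequality I would run the disjoint-union argument from the proof of Theorem~\ref{nonconvexthm}. Form $H$ as the disjoint union of $H_A$ and $H_B$, so that $\mathrm{NO}(H) = G_1 + G_2$ with weight function $p_1 + p_2$ in the disjoint-union sense; then $\mathrm{NO}(H^{\otimes 2})$ decomposes into the four subgraphs $\mathrm{NO}(H_A\otimes H_A)$, $\mathrm{NO}(H_B\otimes H_B)$, $\mathrm{NO}(H_A\otimes H_B)$, $\mathrm{NO}(H_B\otimes H_A)$ with no edges between them, carrying the four summand weightings of $p^{\otimes 2}$ where $p = \tfrac12(p_1'+p_2')$. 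Applying the superadditivity Lemma~\ref{proptheta}\eqref{superadd} to this decomposition, and using $\Theta(\mathrm{NO}(H_A^{\otimes2}),p_A^{\otimes2}) = \Theta(\mathrm{NO}(H_B^{\otimes2}),p_B^{\otimes2}) = 1$ together with $\Theta(\mathrm{NO}(H_A\otimes H_B),p_A\otimes p_B) > 1$ from the previous paragraph, I get $\Theta(\mathrm{NO}(H^{\otimes2}),p^{\otimes2}) > 1$. But if the sum inequality in the theorem were an equality for $(G_1,p_1),(G_2,p_2)$ it would force, via multiplicativity-type bookkeeping on $\Theta(G_1+G_2,p_1+p_2)$ and the fact that $\Theta$ of each single-shot factor is $1$, a contradiction with $\Theta((G_1+G_2)^{\boxtimes 2},(p_1+p_2)^{\otimes2})>(\Theta(G_1,p_1)+\Theta(G_2,p_2))^2 = 4$; more directly, one argues that strict superadditivity at the level of $H^{\otimes 2}$ already witnesses $\Theta(G_1+G_2,p_1+p_2) > \Theta(G_1,p_1)+\Theta(G_2,p_2)$ after extracting the right sub-sum. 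I expect the main obstacle to be exactly this last bookkeeping step: carefully checking that the strict inequality genuinely propagates from $\Theta$ of the second strong power of the disjoint union down to $\Theta$ of the disjoint union itself, rather than being an artifact of passing to powers, and making sure all the ``$=1$'' normalizations for the single-shot pieces are invoked correctly (via $p_A\in\CE^\infty$, $p_B\in\mathcal{Q}$, and Lemma~\ref{LOchar}). Everything else is a direct translation of Section~\ref{se:act} into the language of Appendix~\ref{appcap}.
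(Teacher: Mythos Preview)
Your treatment of the product inequality and of the claim $\alpha(G_2,p_2)=\vartheta(G_2,p_2)$ is correct and coincides with the paper's proof: set $G_i=\mathrm{NO}(H_i)$, $p_i$ the probabilistic models from the activation construction, and read off $\Theta=\alpha=1$ from Lemma~\ref{LOchar} and $\vartheta=1$ from Proposition~\ref{Q1vsLov}; then Yan's diagonal gives $\alpha(G_1\boxtimes G_2,p_1\otimes p_2)>1$.

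For the sum inequality your plan is the right one, but you take an unnecessary detour through the convex combination $p=\tfrac12(p_A'+p_B')$ from the proof of Theorem~\ref{nonconvexthm}, and then you are left unable to finish the ``bookkeeping''. The clean route avoids the factor $\tfrac12$ altogether: work directly with the disjoint-union weight $p_1+p_2$ and note that
\[
(G_1+G_2)^{\boxtimes 2} \;=\; G_1^{\boxtimes 2}\;+\;(G_1\boxtimes G_2)\;+\;(G_2\boxtimes G_1)\;+\;G_2^{\boxtimes 2},
\]
with the weight $(p_1+p_2)^{\otimes 2}$ restricting to $p_1^{\otimes 2}$, $p_1\otimes p_2$, $p_2\otimes p_1$, $p_2^{\otimes 2}$ on the four pieces. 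Since $\alpha$ is exactly additive over disjoint unions and $\alpha(G_i^{\boxtimes 2},p_i^{\otimes 2})\geq \alpha(G_i,p_i)^2=1$ by Corollary~\ref{win-ineq}, your product result gives
\[
\alpha\big((G_1+G_2)^{\boxtimes 2},(p_1+p_2)^{\otimes 2}\big)\;\geq\;1+2\,\alpha(G_1\boxtimes G_2,p_1\otimes p_2)+1\;>\;4.
\]
Hence $\Theta(G_1+G_2,p_1+p_2)\geq \sqrt{\alpha\big((G_1+G_2)^{\boxtimes 2},(p_1+p_2)^{\otimes 2}\big)}>2=\Theta(G_1,p_1)+\Theta(G_2,p_2)$. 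If you prefer to stay with your convex combination, the missing facts that close your argument are homogeneity $\Theta(G,\lambda q)=\lambda\,\Theta(G,q)$ and $\Theta(G^{\boxtimes 2},q^{\otimes 2})=\Theta(G,q)^2$, both of which follow directly from the definition~\eqref{wcapdef}; with these, $\Theta\big((G_1+G_2)^{\boxtimes 2},p^{\otimes 2}\big)>1$ immediately yields $\Theta(G_1+G_2,p_1+p_2)>2$. This is also what the paper's terse ``analogous reasoning'' is pointing at.
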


\begin{proof}
The first part follows from Theorem~\ref{CEactivate}: we put $G_1\defin\mathrm{NO}(H_A)$ and $G_2\defin\mathrm{NO}(H_B)$ and equip these graphs with vertex weights coming from probabilistic models $p_1\in\CE^\infty(H_A)$ and $p_2\in\CE^\infty(H_B)$ with $p_1\otimes p_2\not\in\CE^1(H_A\otimes H_B)$. Then we have
\[
\Theta(G_1,p_1) = \alpha(G_1,p_1) = 1, \qquad \Theta(G_2,p_2) = \alpha(G_2,p_2) = 1,
\]
by Lemma~\ref{LOchar}, so that both $(G_1,p_1)$ and $(G_2,p_2)$ are single-shot. On the other hand, the assumption $p_1\otimes p_2\not\in\CE^1(H_A\otimes H_B)$ implies that
\[
\alpha(G_1\boxtimes G_2,p_1\otimes p_2) > 1 = \Theta(G_1,p_1)\Theta(G_2,p_2).
\]
This is stronger than what we have to prove.

The second claim follows from analogous reasoning, applied to the construction in the proof of Theorem~\ref{nonconvexthm}.

Finally, since we had $p_2\in\mathcal{Q}(H_B)\subseteq\mathcal{Q}_1(H_B)$, we actually also have $\vartheta(G_2,p_2)=1$ by Proposition~\ref{Q1vsLov}.
\end{proof}

This proof is completely constructive: as a particular example for the phenomenon of Theorem~\ref{CEactivate}, we may take the scenarios and probabilistic models discussed after its proof. So $G_1$ will have $220$ vertices corresponding to the $3$-element subsets of $\{1,\ldots,12\}$, and two such $v,w\in V(G_1)$ are adjacent if and only if $|v\cap w|=1$. We can rescale the weights so that they are all $1$ and $G_1$ is effectively unweighted. We have
\[
\alpha(G_1) = \Theta(G_1) = 12 < \frac{260}{11} = \vartheta(G_1).
\]
The other graph $G_2$ turns out to have $\numprint{12100}$ vertices of two kinds:
\begin{enumerate}
\item again the $3$-element subsets of $\{1,\ldots,12\}$, of which there are $220$.
\item unordered pairs of such $3$-element subsets having one element in common, of which there are $\numprint{11880}$.
\end{enumerate}
Two vertices $v$ and $w$ of the first kind are adjacent if and only if $|v\cap w| \in\{0,2\}$. A vertex $v$ of the first kind and a vertex $\{w,w'\}$ of the second kind are adjacent if and only if $v\neq w$ and $v\neq w'$. Finally, any two vertices $\{v,v'\}$ and $\{w,w'\}$ of the second kind are adjacent. Concerning the weights, all vertices of the first have a weight of $\tfrac{13}{121}$, while of the second kind carry the higher weight $\tfrac{95}{121}$.

Another result which is easily proved along the same lines as Theorem~\ref{wthm} is this:

\begin{thm}
\label{graphconvex}
There is a graph $G$ with weight functions $p_1$ and $p_2$ such that both $(G,p_1)$ and $(G,p_2)$ are single-shot, and
$$
\Theta(G,p_1 + p_2) >\Theta(G,p_1) + \Theta(G,p_2) ,
$$
meaning that $(G,p_1 + p_2)$ is not single-shot.
\end{thm}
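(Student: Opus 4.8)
The plan is to obtain Theorem~\ref{graphconvex} directly from the second part of Theorem~\ref{wthm} (equivalently, from the construction behind Theorem~\ref{nonconvexthm}) by merging the two weighted graphs produced there into a single graph carrying two weight functions. Concretely, I would take the weighted single-shot graphs $(G_1,p_1)$ and $(G_2,p_2)$ supplied by Theorem~\ref{wthm}, which by construction satisfy
\[
\Theta(G_1,p_1) = \alpha(G_1,p_1) = 1, \qquad \Theta(G_2,p_2) = \alpha(G_2,p_2) = 1, \qquad \Theta(G_1 + G_2, p_1 + p_2) > 2,
\]
and set $G \defin G_1 + G_2$. On $G$ I would define $\tilde p_1$ to agree with $p_1$ on the copy of $G_1$ and to vanish identically on the copy of $G_2$, and $\tilde p_2$ symmetrically.

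The first step is to check that $(G,\tilde p_1)$ and $(G,\tilde p_2)$ are single-shot. Since there are no edges between the two components, the induced subgraph of $G$ spanned by the vertices of positive $\tilde p_1$-weight is precisely $G_1$, and deleting vertices of weight zero changes none of the invariants $\alpha,\Theta,\vartheta,\alpha^*$ (immediate from the definition of the weighted independence number, hence of $\Theta$ via~\eqref{wcapdef}; compare also Lemma~\ref{addempty} for the case of isolated vertices). Therefore $\alpha(G,\tilde p_1) = \alpha(G_1,p_1) = \Theta(G_1,p_1) = \Theta(G,\tilde p_1)$, and likewise for $\tilde p_2$.

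The second step is the observation that the pointwise sum $\tilde p_1 + \tilde p_2$ on $G$ is exactly the disjoint-union weight function $p_1 + p_2$ on $G_1 + G_2$ in the sense of Appendix~\ref{appcap}. Consequently, by the defining property of $(G_1,p_1)$ and $(G_2,p_2)$,
\[
\Theta(G,\tilde p_1 + \tilde p_2) = \Theta(G_1+G_2, p_1+p_2) > 2 = \Theta(G,\tilde p_1) + \Theta(G,\tilde p_2),
\]
while a maximum-weight independent set in a disjoint union splits over its components, giving $\alpha(G,\tilde p_1 + \tilde p_2) = \alpha(G_1,p_1) + \alpha(G_2,p_2) = 2$. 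Hence $\Theta(G,\tilde p_1 + \tilde p_2) > \Theta(G,\tilde p_1) + \Theta(G,\tilde p_2)$ and $(G,\tilde p_1 + \tilde p_2)$ is not single-shot, which is the claim; the explicit $J_{12}$-based scenarios described after the proof of Theorem~\ref{CEactivate} make it completely concrete.

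I do not expect a real obstacle: the genuine work is already carried by Theorems~\ref{CEactivate}, \ref{nonconvexthm} and~\ref{wthm}. The only points needing (minor) care are keeping the two meanings of ``$+$'' apart --- disjoint union of graphs versus pointwise addition of weight functions on a fixed graph --- and the routine verification that zero-weight vertices are invisible to all invariants in play; the latter is the most ``obstacle-like'' ingredient and is a one-line argument from the definitions.
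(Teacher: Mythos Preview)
Your proposal is correct and follows exactly the approach the paper indicates: the paper's proof merely says to take $G:=G_1+G_2$ from Theorem~\ref{wthm} and translate the construction of Theorem~\ref{nonconvexthm}, which is precisely what you do by extending $p_1$ and $p_2$ by zero to the disjoint union. Your explicit verification that zero-weight vertices are invisible to $\alpha$ and $\Theta$, and your care in distinguishing the two meanings of ``$+$'', fill in details the paper leaves to the reader.
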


This time, the proof builds on Theorem~\ref{nonconvexthm} and its proof. Alternatively, one can also translate the proof of Theorem~\ref{nonconvexthm} into a derivation of the present result from Theorem~\ref{wthm}, so that taking $G:=G_1+G_2$ will work. In terms of our explicit example, this will be a graph on $220+\numprint{12100}=\numprint{12320}$ vertices.

In order to transfer Theorem~\ref{wthm} to the case of unweighted graphs, we need to apply the blow-up technique of Appendix~\ref{ap:invw}. However, doing this requires showing that the vertex weights of a single-shot graph can be turned into rational numbers by a small perturbation in such a way that the perturbed graph is still single-shot:

\begin{lem}
\label{rationalize}
Let $(G,p)$ be a weighted single-shot graph. Then for every $\eps>0$ there exist weights $p'(v)\in\Q$ with $|p(v) - p'(v)| < \eps$ and such that $(G,p')$ is still single-shot with $\alpha(G,p') = \alpha(G,p)$.

The same perturbation guarantees that if $\alpha(G,p)=\vartheta(G,p)$, then $\alpha(G,p')=\vartheta(G,p')=\alpha(G,p)$.
\end{lem}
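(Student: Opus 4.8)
The plan is to perturb $p$ to nearby rational weights and then rescale so as to restore the exact value of the independence number. Set $\alpha_0 \defin \alpha(G,p) = \Theta(G,p)$ (and, in the last clause, $\alpha_0 = \vartheta(G,p)$ as well). Since $\alpha(G,q)$ is rational whenever the weights $q$ are, the requirement $\alpha(G,p') = \alpha_0$ already forces $\alpha_0 \in \Q$; this is automatic in all applications of the lemma, where $\alpha_0 = 1$, so I will assume it. The first move is a reduction: it suffices to produce, for every $\eps' > 0$, rational weights $q$ with $\sup_v |p(v) - q(v)| < \eps'$ such that $(G,q)$ is single-shot, and moreover $\alpha(G,q) = \vartheta(G,q)$ in the case where $\alpha(G,p) = \vartheta(G,p)$. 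Indeed, given such a $q$, write $\alpha_1 \defin \alpha(G,q) \in \Q$; by continuity (Corollary~\ref{thetacont}) we may take $\eps'$ small enough that $\alpha_1 > 0$ and $\alpha_0/\alpha_1$ is as close to $1$ as we wish. Then $p' \defin (\alpha_0/\alpha_1)\, q$ has rational weights, lies within $\eps$ of $p$ once $\eps'$ is small, and by positive homogeneity of $\alpha,\Theta,\vartheta$ in the weights satisfies $\alpha(G,p') = (\alpha_0/\alpha_1)\alpha_1 = \alpha_0$, $\Theta(G,p') = (\alpha_0/\alpha_1)\Theta(G,q) = (\alpha_0/\alpha_1)\alpha_1 = \alpha_0$, and likewise $\vartheta(G,p') = \alpha_0$ when applicable; so $(G,p')$ is the desired rational single-shot approximant.

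To build the rational single-shot $q$, the routine half is monotonicity and continuity. By Lemma~\ref{incrweight} each of $\alpha, \Theta, \vartheta$ is non-decreasing in the weights, by Corollary~\ref{monotone} one always has $\alpha(G,q) \le \Theta(G,q) \le \vartheta(G,q)$, and by Corollary~\ref{thetacont} all three depend continuously on $q$; hence for $q$ near $p$ every one of these invariants is close to $\alpha_0$. The only thing that can go wrong is that the rationalization opens a gap where there currently is none, i.e.\ that $\alpha(G,q) < \Theta(G,q)$ (resp.\ $\alpha(G,q) < \vartheta(G,q)$). Everything reduces to ruling this out by a careful choice of the perturbation direction.

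The approach I would take for this is to show that $p$ lies in the relative interior of the ``single-shot locus'' inside an appropriate rational affine subspace, so that rational points of that subspace close to $p$ stay single-shot. The key structural fact is that $\alpha(G^{\boxtimes n}, q^{\otimes n}) = \max_J \langle \mathbbm{1}_J, q^{\otimes n}\rangle$ over independent sets $J$ of $G^{\boxtimes n}$, and that, since products of independent sets are independent (Lemma~\ref{prodin}) and the weighted independence number is super-multiplicative under strong products (Corollary~\ref{win-ineq}), $(G,p)$ is single-shot precisely when no $J$ ever beats $\alpha_0^n = \langle \mathbbm{1}_{I^* \times \cdots \times I^*}, p^{\otimes n}\rangle$, for all $n$ and a fixed maximum-weight independent set $I^*$. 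For $q$ in a small neighbourhood of $p$, only the independent sets that are tight at $p$ (at each level $n$) can attain $\alpha(G^{\boxtimes n}, q^{\otimes n})$, and since $\alpha_0 \in \Q$ these tightness conditions are polynomial (in)equalities with rational coefficients; fixing $\langle \mathbbm{1}_{I^*}, q\rangle = \alpha_0$ and decreasing the weights off $I^*$ keeps $q$ inside this tightness variety, which is rational and in which rational points are dense. In the $\vartheta$-case one has the extra handle that, by Proposition~\ref{thetaalt}, $\vartheta(G,\cdot)$ is a supremum of linear functionals and hence convex, so $\{q : \vartheta(G,q) \le \alpha_0\}$ is a convex body through $p$ and $\Theta$ is squeezed between $\alpha$ and $\vartheta$ with no further work.

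The genuinely delicate point is the relative-interior claim: one must check that decreasing the weights off $I^*$ never promotes a non-product independent set of some $G^{\boxtimes n}$ to the maximum, simultaneously for all $n$. If $p$ sat on a lower-dimensional piece of the tightness variety, one would instead pass to the minimal face of that variety containing $p$ and run the density argument there, so nothing breaks in principle, but assembling this bookkeeping uniformly across all strong powers is where the work lies. I would also note explicitly that in every concrete instance used in the paper --- the scenario $J_{12}$ of Theorem~\ref{ejemplito} and the disjoint-union scenario of Theorem~\ref{nonconvexthm} --- the weights are already rational, so there the lemma is invoked with $p' = p$ and this argument is not needed.
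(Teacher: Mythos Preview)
Your proposal has a genuine gap, which you yourself identify: the ``relative-interior claim'' that decreasing weights off a maximum-weight independent set $I^*$ cannot open a gap between $\alpha$ and $\Theta$ is exactly the content of the lemma, and you do not prove it. Your worry that one must control \emph{all} independent sets in \emph{all} strong powers $G^{\boxtimes n}$ simultaneously is well-founded; the ``tightness variety'' idea does not obviously yield a uniform-in-$n$ argument, and the single-shot locus need not be an algebraic set in any useful sense.

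The paper sidesteps this entirely via a one-line Lipschitz bound. Choose $p'$ so that on $I^*=\{v_1,\ldots,v_m\}$ the rational weights $p'(v_i)\in((1-\delta)p(v_i),(1+\delta)p(v_i))$ sum exactly to $\alpha_0$, and off $I^*$ take $p'(w)\in((1-2\delta)p(w),(1-\delta)p(w))$. Writing $q_i:=p'(v_i)-(1-\delta)p(v_i)>0$, one has $p'\leq (1-\delta)p+\sum_i q_i\mathbbm{1}_{v_i}$ componentwise, and then Lemma~\ref{incrweight} (the fact that adding weight $\mu$ at a single vertex raises $\Theta$ by at most $\mu$) together with positive homogeneity gives
\[
\Theta(G,p')\;\leq\;\Theta\big(G,(1-\delta)p\big)+\sum_i q_i\;=\;(1-\delta)\,\Theta(G,p)+\delta\,\alpha_0\;=\;\alpha_0.
\]
Since $I^*$ still has weight $\alpha_0$ under $p'$, this forces $\alpha(G,p')=\Theta(G,p')=\alpha_0$. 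The same inequality chain with $\vartheta$ in place of $\Theta$ handles the second clause. No rescaling step is needed, and no argument about strong powers is required: the Lipschitz bound of Lemma~\ref{incrweight} already encodes the uniform-in-$n$ control you were missing.
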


\begin{proof}
Let $p_\mathrm{max}$ be the largest weight of a vertex in $G$, and fix $\delta>0$ such that $2\delta \cdot p_\mathrm{max} \leq \eps$.
Fix any independent set $v_1,\ldots,v_n$ of maximal weight and choose rational numbers $p'(v_i)\in \big( (1-\delta )p(v_i), (1+\delta) p(v_i)\big)$ such that $\sum_i p'(v_i) = \sum_i p_i = \alpha(G,p)$. Furthermore, for vertices $w$ not in that set, choose arbitrary rational numbers $p'(w)\in\big(  (1-2\delta)  p(w), (1-\delta) p(w) \big)$. Then $2\delta\cdot p_\mathrm{max} \leq \eps$ guarantees $|p(v)-p'(v)|<\eps$ for all $v\in V(G)$.

Now we claim that $\alpha(G,p') = \Theta(G,p') = \alpha(G,p)$. Upon setting $q_i \defin p'(v_i) - (1-\delta) p(v_i)$, we estimate
$$
\alpha(G,p') \leq \Theta(G,p') \leq \Theta\left(G,(1-\delta) p\right) + \sum_i q_i ,
$$
where the last inequality follows from Lemma~\ref{addempty} and the fact that transporting some weight from some vertex to a new isolated vertex cannot decrease the capacity. Since $\sum_i q_i = \alpha(G,p) - (1-\delta) \alpha(G,p)$, we can further evaluate this to
$$
\alpha(G,p') \leq \Theta(G,p') \leq (1-\delta) \Theta(G,p) + \delta \alpha(G,p) = \alpha(G,p) .
$$
On the other hand, we have constructed $p'$ in such a way that there is an independent set of weight $\alpha(G,p)$, and hence all these inequalities are actually equalities.

The same argument can be applied with $\vartheta$ in place of $\Theta$.
\end{proof}

Now we can use our techniques to turn Theorem~\ref{wthm} into a statement about unweighted graphs:

\begin{thm}
\label{uwthm}
There exist single-shot graphs $G_1$ and $G_2$ such that
\[
\Theta(G_1\boxtimes G_2) > \Theta(G_1) \Theta(G_2) 
\]
and
\[
\Theta(G_1 + G_2) > \Theta(G_1) + \Theta(G_2).
\]
It is even possible to have $\alpha(G_2) = \vartheta(G_2)$.
\end{thm}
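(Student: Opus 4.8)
The plan is to derive Theorem~\ref{uwthm} from its weighted counterpart Theorem~\ref{wthm} by means of the blow-up technique of Appendix~\ref{ap:invw}. The only work is to arrange that the weights occurring in Theorem~\ref{wthm} are rational, apply $\mathrm{Blup}$, and then use Lemma~\ref{blup} to translate all the relevant quantities between the weighted and the unweighted worlds.

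First I would take the weighted single-shot graphs $(G_1,p_1)$ and $(G_2,p_2)$ furnished by Theorem~\ref{wthm}, so that $\Theta(G_1,p_1)=\alpha(G_1,p_1)$, $\Theta(G_2,p_2)=\alpha(G_2,p_2)=\vartheta(G_2,p_2)$, and both activation inequalities $\Theta(G_1\boxtimes G_2,p_1\otimes p_2)>\Theta(G_1,p_1)\Theta(G_2,p_2)$ and $\Theta(G_1+G_2,p_1+p_2)>\Theta(G_1,p_1)+\Theta(G_2,p_2)$ hold, each by a strictly positive margin. Next I would apply Lemma~\ref{rationalize} to $(G_1,p_1)$ and separately to $(G_2,p_2)$ with a small $\eps$, obtaining rational weights $p_1',p_2'$ with $(G_i,p_i')$ still single-shot and $\alpha(G_i,p_i')=\alpha(G_i,p_i)$, and additionally $\alpha(G_2,p_2')=\vartheta(G_2,p_2')$. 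Since $(G_i,p_i')$ is single-shot, $\Theta(G_i,p_i')=\alpha(G_i,p_i')=\alpha(G_i,p_i)$ is literally the same number as before, so the right-hand sides of the two activation inequalities do not move at all; and since $\alpha$ and $\Theta$ are continuous in the weights (Corollary~\ref{thetacont}), for $\eps$ small enough the left-hand sides $\Theta(G_1\boxtimes G_2,p_1'\otimes p_2')\geq\alpha(G_1\boxtimes G_2,p_1'\otimes p_2')$ and $\Theta(G_1+G_2,p_1'+p_2')$ stay strictly above those fixed values. Clearing a common denominator for all the weights, I may assume $p_1',p_2'$ take values in $\N$; this changes nothing, because all four invariants are homogeneous of degree one in the weights, so the strict inequalities, the single-shot property, and $\alpha=\vartheta$ for the second graph all persist under rescaling.

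Finally I would set $\tilde G_1\defin\mathrm{Blup}(G_1,p_1')$ and $\tilde G_2\defin\mathrm{Blup}(G_2,p_2')$. Lemma~\ref{blup}(c) gives $X(\tilde G_i)=X(G_i,p_i')$ for $X\in\{\alpha,\Theta,\vartheta\}$, hence $\tilde G_1$ and $\tilde G_2$ are single-shot and $\alpha(\tilde G_2)=\vartheta(\tilde G_2)$. Parts (a) and (b) of the same lemma give $\mathrm{Blup}(G_1\boxtimes G_2,p_1'\otimes p_2')=\tilde G_1\boxtimes\tilde G_2$ and $\mathrm{Blup}(G_1+G_2,p_1'+p_2')=\tilde G_1+\tilde G_2$; feeding these through part (c) and into the surviving strict inequalities yields $\Theta(\tilde G_1\boxtimes\tilde G_2)>\Theta(\tilde G_1)\Theta(\tilde G_2)$ and $\Theta(\tilde G_1+\tilde G_2)>\Theta(\tilde G_1)+\Theta(\tilde G_2)$, as required. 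I would also remark that for the explicit pair of scenarios constructed after the proof of Theorem~\ref{CEactivate}, the probabilistic models $p_A$ (constant $\tfrac1{12}$) and $p_B$ (values $\tfrac{13}{121}$ and $\tfrac{95}{121}$) are already rational, so the appeal to Lemma~\ref{rationalize} can be bypassed and one obtains concrete finite graphs with the vertex counts quoted in the surrounding discussion.

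The step I expect to be the main obstacle is checking that the two strict activation inequalities genuinely survive the rationalizing perturbation of the weights. The reason it works is that the single-shot property, which Lemma~\ref{rationalize} is tailored to preserve, pins the right-hand sides $\Theta(G_i,p_i')$ to the exact numbers $\alpha(G_i,p_i)$ rather than to merely nearby values, so only the left-hand sides need to be controlled, and continuity of $\alpha$ and $\Theta$ handles those. One must be slightly careful to choose the perturbations of $p_1$ and $p_2$ simultaneously small enough that the joint quantities $\alpha(G_1\boxtimes G_2,p_1'\otimes p_2')$ and $\Theta(G_1+G_2,p_1'+p_2')$ remain on the correct side of their thresholds; this is immediate from joint continuity, but is the place where a careful writeup should be explicit.
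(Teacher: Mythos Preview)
Your proposal is correct and follows essentially the same route as the paper: start from the weighted Theorem~\ref{wthm}, rationalize the weights via Lemma~\ref{rationalize} (which pins the right-hand sides exactly while continuity handles the left-hand sides), clear denominators, and apply the blow-up Lemma~\ref{blup}. Your observation that the explicit example already has rational weights, so that Lemma~\ref{rationalize} can be bypassed there, is also made in the paper immediately after the proof.
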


\begin{proof}
Take weighted graphs $(G_1,p_1)$ and $(G_2,p_2)$ as in Theorem~\ref{wthm} and apply Lemma~\ref{rationalize} to both $(G_j,p_j)$ with a certain $\eps>0$ and obtain $(G_j,p'_j)$. Then, the differences
\[
(p'_1\otimes p'_2)(v_1,v_2) - (p_1\otimes p_2)(v_1,v_2)
\]
can be bounded by a certain function of $\eps$ and the $\alpha(G_j,p_j)$'s which converges to $0$ as $\eps\to 0$. In particular, the continuity statement of Corollary~\ref{thetacont} guarantees that one can choose $\eps$ so small that the inequalities
\[
\Theta(G_1\boxtimes G_2,p_1\otimes p_2) > \Theta(G_1,p_1)\Theta(G_2,p_2),\qquad \Theta(G_1 + G_2, p_1 + p_2) > \Theta(G_1,p_1) + \Theta(G_2,p_2)
\]
are preserved in the sense that
\[
\Theta(G_1\boxtimes G_2,p'_1\otimes p'_2) > \Theta(G_1,p_1)\Theta(G_2,p_2),\qquad \Theta(G_1 + G_2, p'_1 + p'_2) > \Theta(G_1,p_1) + \Theta(G_2,p_2).
\]
The statement of Lemma~\ref{rationalize} guarantees that $\Theta(G_j,p_j)=\Theta(G_j,p'_j)$, and therefore we actually have
\[
\Theta(G_1\boxtimes G_2,p'_1\otimes p'_2) > \Theta(G_1,p'_1)\Theta(G_2,p'_2),\qquad \Theta(G_1 + G_2, p'_1 + p'_2) > \Theta(G_1,p'_1) + \Theta(G_2,p'_2).
\]
After multiplying each weight functions $p'_j$ by the respective common denominator, it becomes integer-valued, and the claim then follows from the blow-up Lemma~\ref{blup}.
\end{proof}

Again, this proof can be turned into an explicit construction, building upon the previous explicit example of $(G_1,p_1)$ and $(G_2,p_2)$. Since in this case, all weights are rational, the continuity considerations are redundant, and an application of the blow-up Lemma~\ref{blup} is sufficient. So the previous weights $p_1$ already were all $1$, no blow-up needs to be applied to $G_1$, and we have $\alpha(G_1)=\Theta(G_1)=12$. For $G_2$, every vertex of the first kind should be replaced by $13$ copies, while each vertex of the second kind turns into $95$ new vertices. Hence the resulting $G_2$ has
\[
13\cdot 220 + 95\cdot \numprint{11880} = \numprint{1131460}
\]
many vertices and satisfies $\alpha(G_2)=\vartheta(G_2)=121$. Translating the previous proofs into this language shows indeed that
\[
\Theta(G_1\boxtimes G_2) \geq \alpha(G_1\boxtimes G_2) \geq 121\cdot\vartheta(G_1) = 121\cdot \frac{260}{11} = 2860 > 12\cdot 121.
\]
It might be an interesting challenge to find smaller examples of the phenomenon of Theorem~\ref{uwthm}.

\newpage
\section{\textbf{Virtual edges, completeness, and product scenarios}}
\label{multiproducts}

\subsection{Virtual edges}
\label{virtualcomp}

Here, we introduce some background material relevant for the discussion of products of three or more contextuality scenarios in Section~\ref{higherprodssub}. The following material is based on the notion of \emph{perspectivity} of Foulis and Randall~\cite{FR81,FPR}, although the details are different. Readers not interested in the subtleties of products of more than two scenarios can safely ignore this section.

We start with an arbitrary contextuality scenario $H$. We write $2^{V(H)}$ for the power set of $V(H)$, i.e.~the set of all subsets of $V(H)$. The symbol `$\dot{\cup}$' stands for disjoint union, i.e.~$A=B \disjcup C$ means that $B\cap C = \emptyset$ and $B\cup C = A$.

\renewcommand{\labelenumi}{(\roman{enumi})} 
\renewcommand{\theenumi}{(\roman{enumi})} 

\begin{defn}
\label{simdef}
Let $\simeq$ be the smallest equivalence relation closed under the following rules:
\begin{enumerate}
\item\label{edgesim} If $A,B\in E(H)$, then $A\simeq B$.
\item\label{unionsim} If 
\[
A = A_1\disjcup\ldots\disjcup A_n, \qquad B = B_1\disjcup\ldots\disjcup B_n
\]
are partitions such that $A_i\simeq B_i$ for all $i$, then $A\simeq B$.
\item\label{decompsim} Conversely, if
\[
A=A'\disjcup C,\qquad B = B'\disjcup C,
\]
are partitions with $A\simeq B$, then also $A'\simeq B'$.
\end{enumerate}
\end{defn}

\renewcommand{\labelenumi}{(\alph{enumi})} 
\renewcommand{\theenumi}{(\alph{enumi})} 

In other words, we have $A\simeq B$ if and only if this can be derived from the rules~\ref{edgesim}--\ref{decompsim}, together with reflexivity, symmetry and transitivity of $\simeq$, in a finite number of steps. If $A\simeq B$, then we also say that $A$ and $B$ are \emph{equivalent}.

Instead of postulating rule~\ref{unionsim} for all $n$, it would be sufficient to do so for $n=2$, from which the general case follows.

\begin{ex}
Foulis and Randall~\cite{FR81,FPR} call two sets of vertices $A,B\subseteq V(H)$ \emph{in perspective} if there is $C\subseteq V(H)$ with $A\cap C = B\cap C = \emptyset$ such that $A\cup C\in E(H)$ and $B\cup C\in E(H)$. Rules~\ref{edgesim} and~\ref{decompsim} show that if $A$ and $B$ are in perspective, then $A\simeq B$. 
\end{ex}

\begin{ex}
In Figure~\ref{funnyvirt}, with the solid lines indicating the edges, we have $\{v_1,v_2,v_3\}\simeq\{w_1,w_2\}$. 
\end{ex}

The idea behind the previous definition is that $A\simeq B$ guarantees that $A$ and $B$ have equal probability under any probabilistic model:

\begin{prop}
\label{equalprobprop}
If $A\simeq B$ and $p\in\mathcal{G}(H)$, then
\beq
\label{equalprobeq}
\sum_{v\in A} p(v) = \sum_{v\in B} p(v).
\eeq
\end{prop}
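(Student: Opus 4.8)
The statement to prove is Proposition~\ref{equalprobprop}: if $A\simeq B$, then $\sum_{v\in A}p(v)=\sum_{v\in B}p(v)$ for every probabilistic model $p$.

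The natural approach is induction on the structure of the derivation witnessing $A\simeq B$. Since $\simeq$ is defined as the smallest equivalence relation closed under rules~\ref{edgesim}, \ref{unionsim}, \ref{decompsim}, it suffices to show that the set of pairs $(A,B)$ satisfying~\eqref{equalprobeq} (for a fixed but arbitrary $p\in\mathcal{G}(H)$) is an equivalence relation that is closed under those three rules; minimality of $\simeq$ then forces $\simeq$ to be contained in it. So first I would verify the three ``equivalence relation'' properties: reflexivity $\sum_{v\in A}p(v)=\sum_{v\in A}p(v)$ is trivial, symmetry is trivial, and transitivity follows from transitivity of equality. Then I would check closure under each of the three rules.

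For rule~\ref{edgesim}: if $A,B\in E(H)$, then by the normalization condition~\eqref{probnorm} we have $\sum_{v\in A}p(v)=1=\sum_{v\in B}p(v)$, so the pair $(A,B)$ satisfies~\eqref{equalprobeq}. For rule~\ref{unionsim}: given partitions $A=A_1\disjcup\cdots\disjcup A_n$ and $B=B_1\disjcup\cdots\disjcup B_n$ with each pair $(A_i,B_i)$ already known to satisfy~\eqref{equalprobeq}, I would write
\[
\sum_{v\in A}p(v)=\sum_{i=1}^n\sum_{v\in A_i}p(v)=\sum_{i=1}^n\sum_{v\in B_i}p(v)=\sum_{v\in B}p(v),
\]
where the first and last equalities use that the unions are disjoint. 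For rule~\ref{decompsim}: given $A=A'\disjcup C$ and $B=B'\disjcup C$ with $(A,B)$ satisfying~\eqref{equalprobeq}, disjointness gives $\sum_{v\in A'}p(v)=\sum_{v\in A}p(v)-\sum_{v\in C}p(v)=\sum_{v\in B}p(v)-\sum_{v\in C}p(v)=\sum_{v\in B'}p(v)$. This exhausts the generating rules, so the induction is complete.

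I do not expect any real obstacle here; the only subtlety worth being careful about is the logical framing, namely that one argues by minimality of $\simeq$ rather than by an ad hoc induction on ``derivation length'', although the latter also works if one is willing to make the notion of derivation precise. One should also note that the disjointness hypotheses in rules~\ref{unionsim} and~\ref{decompsim} are exactly what makes the sums additive, which is why the definition insists on disjoint unions rather than arbitrary ones.
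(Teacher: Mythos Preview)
Your proof is correct and follows essentially the same approach as the paper, which also argues by induction on the derivation of $A\simeq B$ and works out the case of rule~\ref{unionsim} explicitly while declaring the remaining cases straightforward. Your framing via minimality of $\simeq$ is in fact slightly cleaner than the paper's induction on derivation length, but the content is identical.
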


\begin{proof}
We use induction on the number of steps in which the given equivalence $A\simeq B$ is derived. The last step involved may be one of the rules~\ref{edgesim}--\ref{decompsim} or an application of the reflexivity, symmetry or transitivity of $\simeq$. In all these cases, the proof of~\eqref{equalprobeq} is a straightforward consequence of the induction assumption.

We exemplify this for rule~\ref{unionsim}. By the induction hypothesis, we assume that $\sum_{v\in A_i} p(v) = \sum_{v\in B_i} p(v)$ for all $i$. But then,
\[
\sum_{v\in A} p(v) = \sum_{i=1}^n \sum_{v\in A_i} p(v) = \sum_{i=1}^n \sum_{v\in B_i} p(v) = \sum_{v\in B} p(v),
\]
as was to be shown.
\end{proof}

\begin{figure}
\subfigure[]{
\label{funnyvirt}
\begin{centering}
\begin{tikzpicture}
\node[draw,shape=circle,fill,scale=.5] at (0,0) {} ;
\node at (-0.6,0) {$v_2$};
\node[draw,shape=circle,fill,scale=.5] at (2,0) {} ;
\node at (-1.4,0) {$v_1$};
\node[draw,shape=circle,fill,scale=.5] at (-2,0) {} ;
\node at (2.6,-0) {$v_3$};
\node[draw,shape=circle,fill,scale=.5] at (2,2) {} ;
\node[draw,shape=circle,fill,scale=.5] at (-1,2) {} ;
\node[draw,shape=circle,fill,scale=.5] at (-2,-2) {} ;
\node at (-1.5,-2.5) {$w_1$};
\node[draw,shape=circle,fill,scale=.5] at (1,-2) {} ;
\node at (1.5,-2.5) {$w_2$};
\draw[thick,blue] (2,1) ellipse (.4cm and 1.6cm) ;
\draw[thick,blue] (.5,2) ellipse (1.8cm and .4cm) ;
\draw[thick,blue] (-2,-1) ellipse (.4cm and 1.6cm) ;
\draw[thick,orange,dashed] (0,0) ellipse (2.8cm and 1cm) ;
\draw[thick,orange,dashed] (-.5,-2) ellipse (1.8cm and .4cm) ;
\draw[thick,blue] plot [smooth cycle,tension=.8] coordinates { (0.3,-.1) (-2.3,-.1) (-1,2.3) } ;
\draw[thick,blue] plot [smooth cycle,tension=.8] coordinates { (-0.3,.1) (2.3,.1) (1,-2.3) } ;
\end{tikzpicture}
\end{centering}}
\hspace{1cm}
\subfigure[]{
\label{pentagonvirt}
\begin{centering}
\begin{tikzpicture}
\node[draw,shape=circle,fill,scale=.5] (a) at (0:2cm) {} ;
\node at (72:2.6cm) {$u_1$};
\node at (288:2.6cm) {$u_2$};
\node[draw,shape=circle,fill,scale=.5] (b) at (72:2cm) {} ;
\node[draw,shape=circle,fill,scale=.5] (c) at (144:2cm) {} ;
\node[draw,shape=circle,fill,scale=.5] (d) at (216:2cm) {} ;
\node[draw,shape=circle,fill,scale=.5] (e) at (288:2cm) {} ;
\draw[thick,blue,rotate=36] (0:1.63cm) ellipse (.4cm and 1.6cm) ;
\draw[thick,blue,rotate=108] (0:1.63cm) ellipse (.4cm and 1.6cm) ;
\draw[thick,blue,rotate=180] (0:1.63cm) ellipse (.4cm and 1.6cm) ;
\draw[thick,blue,rotate=252] (0:1.63cm) ellipse (.4cm and 1.6cm) ;
\draw[thick,blue,rotate=324] (0:1.63cm) ellipse (.4cm and 1.6cm) ;
\draw[thick,orange,dashed] (.6cm,0) ellipse (.6cm and 2.2cm) ;
\end{tikzpicture}
\end{centering}}
\caption{Examples of contextuality scenarios with virtual edges (dashed).}
\label{virtualmeasfig}
\end{figure}
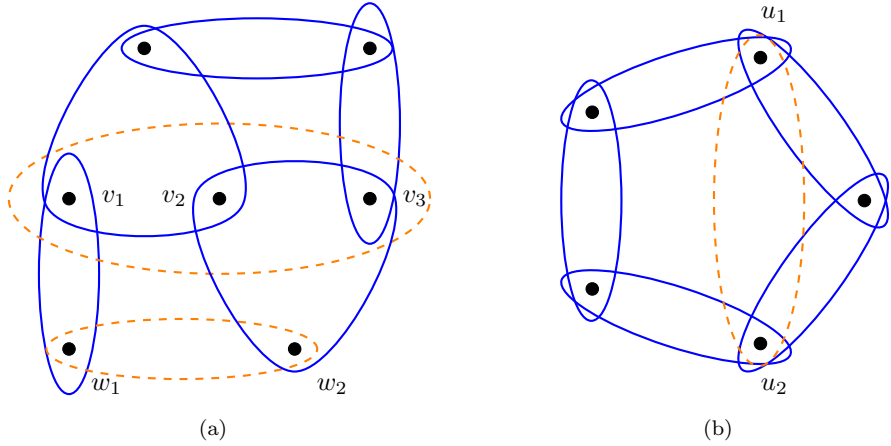

\begin{defn}
A \emph{virtual edge} is a set of outcomes $e'\subseteq V(H)$ equivalent, $e'\simeq e$, to some edge $e\in E(H)$.
\end{defn}

In fact, if $e'$ is equivalent to some edge, then it is automatically equivalent to any edge.

For example, any edge is also a virtual edge. Less trivially, the dashed regions in Figure~\ref{virtualmeasfig} form virtual edges.

The crucial point is that any virtual edge also satisfies the normalization of probability: by Proposition~\ref{equalprobprop}, being equivalent to an edge implies that the total probability under any probabilistic model is the same as that of an edge, which is $1$. In terms of the interpretation of vertices as outcomes and edges as measurements, a virtual edge is a set of outcomes which `wants to be' a measurement.

The virtual edges on $H$ form themselves a contextuality scenario which we denote by $\bar{H}$ and call the \emph{completion} of $H$. By construction, we have $V(H)=V(\bar{H})$ and $E(H)\subseteq E(\bar{H})$. We call $H$ \emph{complete} if $\bar{H}=H$. For any $H$, the completion $\bar{H}$ is complete.

These definitions imply that if $p$ is a probabilistic model on $H$, then it is also a probabilistic model on $\bar{H}$, and vice versa. In other words, we have $\mathcal{G}(\bar{H})=\mathcal{G}(H)$. Similar equalities hold for the sets of classical and quantum models, $\mathcal{C}(H)$ and $\mathcal{Q}(H)$, but not for $\CE^1(H)$:

\begin{prop}
\label{HHbar}
\begin{enumerate}
\item For any scenario $H$, we have
\[
\mathcal{C}(\bar{H}) = \mathcal{C}(H),\qquad \mathcal{Q}(\bar{H}) = \mathcal{Q}(H),\qquad \mathcal{G}(\bar{H}) = \mathcal{G}(H),
\]
and
\[
\mathcal{Q}_{n+1}(H)\subseteq \mathcal{Q}_n(\bar{H}) \subseteq \mathcal{Q}_n(H).
\]
\item\label{NOhierarchy} 
If $\mathrm{NO}(\bar{H}) = \mathrm{NO}(H)$, then
\[
\mathcal{Q}_n(\bar{H}) = \mathcal{Q}_n(H),\qquad \CE^n(\bar{H}) = \CE^n(H).
\]
\item\label{NOcompl} However, there is a scenario $H$ for which
\[
 \CE^1(\bar{H}) \subsetneq \CE^1(H).
\]
\end{enumerate}
\end{prop}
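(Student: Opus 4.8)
The plan is to build everything on an operator-level (and matrix-level) refinement of Proposition~\ref{equalprobprop}. First I would prove the projection analogue: if $A\simeq B$ in $V(H)$ and $(P_v)_{v\in V(H)}$ is any family of projections with $\sum_{v\in e}P_v=\mathbbm 1$ for all $e\in E(H)$, then $\sum_{v\in A}P_v=\sum_{v\in B}P_v$. This is an induction on the length of the derivation of $A\simeq B$ from Definition~\ref{simdef}, mirroring the proof of Proposition~\ref{equalprobprop} verbatim: rule~\ref{edgesim} makes both sides equal $\mathbbm 1$; rule~\ref{unionsim} uses additivity of $\sum P_v$ over disjoint unions; rule~\ref{decompsim} subtracts $\sum_{v\in C}P_v$ from both sides; reflexivity, symmetry and transitivity are immediate. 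The same induction, now using~\eqref{npalin} in the base case, yields what I will call extended linearity: $\sum_{x\in A}M_{\mathbf v x,\mathbf w}=\sum_{x\in B}M_{\mathbf v x,\mathbf w}$ whenever $A\simeq B$ and the strings involved are short enough to be admissible; in particular $\sum_{x\in e'}M_{\mathbf vx,\mathbf w}=M_{\mathbf v,\mathbf w}$ for every virtual edge $e'$.

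From the projection lemma, part~(i) for $\mathcal C,\mathcal Q,\mathcal G$ is immediate. Since $E(H)\subseteq E(\bar H)$, membership in $\mathcal C(\bar H),\mathcal Q(\bar H),\mathcal G(\bar H)$ trivially implies membership in the unbarred counterparts; conversely a probabilistic model on $H$ assigns total weight $1$ to each virtual edge by Proposition~\ref{equalprobprop}, so it lies in $\mathcal G(\bar H)$, and applying this to indicator functions (equivalently, to the exact-transversal description) and passing to convex hulls gives $\mathcal C(\bar H)=\mathcal C(H)$, while a quantum representation on $H$ automatically satisfies $\sum_{v\in e'}P_v=\mathbbm 1$ on virtual edges $e'$, hence represents a quantum model on $\bar H$. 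For the chain $\mathcal Q_{n+1}(H)\subseteq\mathcal Q_n(\bar H)\subseteq\mathcal Q_n(H)$, the right inclusion is trivial, because $\bar H$ has more edges and $\mathrm{NO}(\bar H)\subseteq\mathrm{NO}(H)$, so the level-$n$ axioms for $\bar H$ are stronger. For the left inclusion I would take a level-$(n+1)$ matrix $M$ for $H$ and restrict to strings of length $\le n$: conditions~\eqref{npanorm} and~\eqref{nparep} are untouched, and~\eqref{npalin} for the new edges $e'\in E(\bar H)$ is exactly extended linearity (which on strings of length $\le n$ only needs $M$ at level $n$). What remains — and this is the crux — is~\eqref{npaort} for $\bar H$: if $\{v,w\}\subseteq e'$ for some virtual edge $e'$ with $v\ne w$, then $M_{\mathbf av,\mathbf bw}=0$ for admissible $\mathbf av,\mathbf bw$ of length $\le n$. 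I would deduce this by upgrading the derived identities of Remark~\ref{npaproplist} to virtual edges: extended linearity gives the virtual-edge form of~\eqref{npalin2}, this together with~\eqref{npaip} gives the virtual-edge form of~\eqref{npaorth2} (consecutive indices lying in a common virtual edge annihilate the entry), and~\eqref{npahankel} then converts $M_{\mathbf av,\mathbf bw}$ into an entry with $v$ and $w$ adjacent in an index string. The single extra hierarchy level is consumed precisely here — deriving the extra orthogonality relations of $\bar H$ requires appending one auxiliary vertex of $e'$ to an index string — which is also why in general $\mathcal Q_{n+1}(H)\subseteq\mathcal Q_n(\bar H)$ cannot be improved to the same level.

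Part~(ii) then falls out with no extra work: if $\mathrm{NO}(\bar H)=\mathrm{NO}(H)$ there are no new orthogonality relations to derive, so extended linearity alone shows that $M$ is a level-$n$ certificate for $H$ if and only if it is one for $\bar H$, giving $\mathcal Q_n(\bar H)=\mathcal Q_n(H)$; and $\CE^n(\bar H)=\CE^n(H)$ is trivial, since by Lemmas~\ref{ortproduct} and~\ref{LOchar} membership in $\CE^n$ depends only on $p$ and on $\mathrm{NO}(H)$. For part~(iii) I would take $H=\pentagon$, the pentagon scenario of Figure~\ref{pentagonvirt}, i.e.\ $V(H)=\{1,\dots,5\}$ with the five edges being the consecutive pairs $\{i,i+1\}$ (indices mod $5$). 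From $\{i-1,i\}\simeq\{i,i+1\}$ and rule~\ref{decompsim} with $C=\{i\}$ one gets $\{i-1\}\simeq\{i+1\}$, and chaining around the cycle makes all singletons mutually equivalent; rule~\ref{unionsim} then shows every $2$-element subset is equivalent to an edge, hence a virtual edge. Thus $\mathrm{NO}(\bar H)$ is the edgeless graph on five vertices, while $\mathrm{NO}(H)=\overline{C_5}\cong C_5$. The unique probabilistic model of $H$ is $p\equiv\tfrac12$, and since every maximal independent set of $\mathrm{NO}(H)$ is an edge, $\CE^1(H)=\mathcal G(H)=\{p\}$; but in $\bar H$ the whole vertex set is independent, so $\CE^1(\bar H)$ demands $\sum_{i=1}^5 p(i)=\tfrac52\le 1$, which fails, whence $\CE^1(\bar H)=\emptyset\subsetneq\CE^1(H)$. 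The main obstacle, to reiterate, is the orthogonality half of $\mathcal Q_{n+1}(H)\subseteq\mathcal Q_n(\bar H)$: everything else is either a transcription of Proposition~\ref{equalprobprop} or a routine check, but making a level-$(n+1)$ certificate for $H$ "see" the extra orthogonality relations of $\bar H$, at the cost of exactly one level, requires carefully rerunning the combinatorics of Remark~\ref{npaproplist} with virtual edges in place of edges.
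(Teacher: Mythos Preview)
Your overall plan matches the paper's closely: the projection analogue of Proposition~\ref{equalprobprop}, the extended-linearity identity for $M$, and the treatment of parts~\ref{NOhierarchy} and~\ref{NOcompl} (including the pentagon example) are exactly as in the paper. The one genuine gap is in the orthogonality half of $\mathcal{Q}_{n+1}(H)\subseteq\mathcal{Q}_n(\bar H)$.

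Your claimed route --- ``virtual-edge~\eqref{npalin2} together with~\eqref{npaip} gives virtual-edge~\eqref{npaorth2}'' --- does not go through as stated. What those two identities yield is only
\[
\sum_{x\in e'\setminus\{v_j\}} M_{v_1\ldots v_j\, x\, v_{j+2}\ldots v_m,\,\mathbf w}=0,
\]
a vanishing \emph{sum}; for $|e'|\ge 3$ this does not force the individual term with $x=v_{j+1}$ to vanish. In the original (non-virtual) derivation of~\eqref{npaorth2} the remaining terms are killed one by one because each $x\neq v_j$ in $e$ satisfies $x\perp_H v_j$, so after a Hankel shift the axiom~\eqref{npaort} applies --- but that is precisely the axiom you no longer have for virtual edges, so the lift becomes circular. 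The paper closes this gap with a positivity argument you do not mention: first reduce via~\eqref{npahankel} and~\eqref{bounddiag} to showing the diagonal entry $M_{\mathbf v w_m,\mathbf v w_m}=0$; then expand
\[
M_{\mathbf v,\mathbf v}=\sum_{x\in e'}M_{\mathbf v x,\mathbf v}=\sum_{x\in e'}M_{\mathbf v x,\mathbf v x}
\]
using extended linearity (the second equality uses only $H$-edges), observe that the summand $x=v_k$ already equals $M_{\mathbf v,\mathbf v}$, and conclude from non-negativity of diagonal entries that every other summand --- in particular $M_{\mathbf v w_m,\mathbf v w_m}$ --- vanishes. Positive semidefiniteness of $M$ is essential here and cannot be replaced by the purely linear manipulations you outline.
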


The last statement also shows that if one assigns projections $P_v$ to the vertices satisfying the normalization condition $\sum_{v\in e}P_v=\mathbbm{1}$, then there may be orthogonality relations between these projections which cannot be read off from the non-orthogonality graph $\mathrm{NO}(H)$ alone.

We suspect that there also exist $\bar{H}$ with $\mathcal{Q}_1(\bar{H}) \subsetneq \mathcal{Q}_1(H)$, but we have not yet constructed any examples of this.

\begin{proof}
\begin{enumerate}
\item We already saw above that $\mathcal{G}(\bar{H}) = \mathcal{G}(H)$. Since one obtains the set of classical models by restricting to the deterministic models on each side and taking their convex hull, this immediately implies $\mathcal{C}(\bar{H})=\mathcal{C}(H)$.

Concerning quantum models, the equality follows from a quantum analogue of Proposition~\ref{equalprobprop}: for any assignment of a projection $P_v$ to each $v\in V(H)$ satisfying the normalization equation $\sum_{v\in e} P_v = \mathbbm{1}$ for all $e\in E(H)$, an equivalence $A\simeq B$ for $A,B\subseteq V(H)$ implies that
\beq
\label{projAB}
\sum_{v\in A} P_v = \sum_{v\in B} P_v .
\eeq
Again, this can be proven in the same way as Proposition~\ref{equalprobprop} by using induction on the proof of $A\simeq B$ and showing that each of the steps~\ref{simdef}\ref{edgesim}--\ref{decompsim} preserves equations of type~\eqref{projAB}. As a special case, we therefore have that $\sum_{v\in e'} P_v = \mathbbm{1}$ for every virtual edge $e'$, which implies the claim.

We now consider the sets $\mathcal{Q}_n$ of the semidefinite hierarchy, which display a more subtle behavior. The second inclusion $\mathcal{Q}_n(\bar{H}) \subseteq \mathcal{Q}_n(H)$ is clear since the definition of $\mathcal{Q}_n(H)$ can be regarded as a relaxation of $\mathcal{Q}_n(\bar{H})$ in the sense that there are (possibly) less constraints on the matrices $M$; the first inclusion $\mathcal{Q}_{2n}(H)\subseteq\mathcal{Q}_n(\bar{H})$ is much trickier to prove, and we need to get our hands dirty by working with the explicit properties of $M$ of Lemma~\ref{npan} and Remark~\ref{npaproplist}.

So assume that $p\in\mathcal{Q}_{n+1}(H)$, so that there exists a positive semidefinite matrix $M$ with entries $M_{\mathbf{v},\mathbf{w}}$ indexed by $\mathbf{v},\mathbf{w}\in V(H)^{*(n+1)}$ satisfying $p(v)=M_{\emptyset,v}$ and having the other properties discussed in Section~\ref{dfnhierarchy}. We then claim that $p\in\mathcal{Q}_n(\bar{H})$ is witnessed by the very same matrix, restricted to entries indexed by $\mathbf{v},\mathbf{w}\in V(H)^{*n}$. To this end, we need to show that this matrix has the required properties of Lemma~\ref{npan}: first, for $\mathbf{v}\in V(H)^{*n}$ and $\mathbf{w}\in V(H)^{*n}$ and any virtual edge $e'\simeq e\in E(H)$, we will prove that
\beq
\label{Mvirtual}
\sum_{x\in e'} M_{\mathbf{v}x,\mathbf{w}} = M_{\mathbf{v},\mathbf{w}}.
\eeq
With $e$ in place of $e'$, we know that this equation holds; the current one then follows upon using the fact that for any $A\simeq B$, we have
\[
\sum_{x\in A} M_{\mathbf{v}x,\mathbf{w}} = \sum_{x\in B} M_{\mathbf{v}x,\mathbf{w}}.
\]
This fact in turn can be shown using the exact same kind of computation as for $\mathcal{G}$ and $\mathcal{Q}$.

Second, we need to show that if $\mathbf{v}=v_1\ldots v_k$ and $\mathbf{w}=w_1\ldots w_m$ for $k,m\leq n$, then $v_k\perp w_m$ in $\bar{H}$ implies that $M_{\mathbf{v},\mathbf{w}}=0$. By~\eqref{npahankel}, this is equivalent to showing $M_{\mathbf{v}w_m,w_1\ldots w_{m-1}}=0$, which by~\eqref{bounddiag} follows from
\[
M_{\mathbf{v}w_m,\mathbf{v}w_m} = 0.
\]
In order to see this, we evaluate
\beq
\label{Mcomp}
M_{\mathbf{v},\mathbf{v}} \stackrel{\eqref{Mvirtual}}{=} \sum_{x\in e'} M_{\mathbf{v}x,\mathbf{v}} = \sum_{x\in e'} M_{\mathbf{v}x,\mathbf{v}x},
\eeq
where $e'$ is a virtual edge containing both $v_k$ and $w_m$; such an $e'$ exists since $v_k\perp w_m$. In the second step, we have used $M_{\mathbf{v}x,\mathbf{v}}=M_{\mathbf{v}x,\mathbf{v}x}$, which can be seen by choosing some edge $e\ni x$ in $H$, computing $M_{\mathbf{v}x,\mathbf{v}}=\sum_{y\in e} M_{\mathbf{v}x,\mathbf{v}y}$, and noticing that all terms drop out, except for $M_{\mathbf{v}x,\mathbf{v}x}$, due to the assumption~\eqref{npaort}. The sum on the right-hand side of~\eqref{Mcomp} contains the term $M_{\mathbf{v}v_k,\mathbf{v}v_k}$, which is actually equal to the left-hand side because of a similar argument. Since all other summands are non-negative, it follows that all these other summands must vanish, and so does $M_{\mathbf{v}w_m,\mathbf{v}w_m}$ in particular.

In this argument, we have never needed to consider any matrix entry indexed by a string of length greater than $n+1$. Hence the claim follows.

\item We already know $\mathcal{Q}_n(\bar{H})\subseteq\mathcal{Q}_n(H)$, so it remains to show the converse inclusion, which we do in the same way as in the previous paragraphs. The equation~\eqref{Mvirtual} follows as before. The equation $M_{\mathbf{v},\mathbf{w}}=0$ for $v_k\perp w_m$ in $\mathrm{NO}(\bar{H})$ follows from the assumption $\mathrm{NO}(\bar{H})=\mathrm{NO}(H)$ together with the assumptions on $M$. The claim for $\CE^n$ is also straightforward.

\item It is clear that $\CE^1(\bar{H})\subseteq\CE^1(H)$ for all $H$, since $\mathrm{NO}(\bar{H})$ is a subgraph of $\mathrm{NO}(H)$. To show that this inclusion is strict in general, it is sufficient to find a scenario $H$ together with a probabilistic model $p\in\CE^1(H)$ and a virtual edge $e'$ such that $\sum_{v\in e'} p(v') > 1$. The possibly simplest example is illustrated in Figure~\ref{pentagonvirt}: the only probabilistic model $p$ is the one which assigns a weight of $\tfrac{1}{2}$ to each vertex, and it clearly satisfies Consistent Exclusivity, so that $p\in\CE^1(H)$. On the other hand, if one takes the dashed virtual edge into account, Consistent Exclusivity is violated, so that $p\not\in\CE^1(\bar{H})$.
\qedhere
\end{enumerate}
\end{proof}

In this sense, every contextuality scenario is observationally equivalent to its completion as far as $\mathcal{C}$, $\mathcal{Q}$ and $\mathcal{G}$ are concerned, while the completion may put additional constraints with respect to Consistent Exclusivity and possibly also with respect to the semidefinite hierarchy. It follows that for many purposes, we can consider complete scenarios only without loss of generality. If we have two scenarios on the same underlying set of outcomes such that every edge in the first is a virtual edge in the second and vice versa, then the completions of these two scenarios coincide, and we consider these two scenarios observationally equivalent.

\subsection{Higher Foulis--Randall products}
\label{higherprodssub}

We now return to the products of more than two contextuality scenarios introduced in Section~\ref{higherprodsmain}, where we had encountered the unpleasant situation that there are different choices of products of more than two scenarios. Our goal here is to show that all these choices are equivalent in the sense that their completion does not depend on the particular choice of product. By the results of the previous section, this means that these different choices of product do not need to be distinguished for most of the purposes of this paper.

We fix scenarios $H_1,\ldots,H_n$ of which we want to study the product and start by deriving some auxiliary results about equivalence on $^{\min}\otimes_{i=1}^n H_i$.

\begin{lem}
\label{marginals}
For any party $k$, if $v\in V(H_k)$ is an arbitrary vertex and $e_i,e'_i\in E(H_i)$ are arbitrary edges for all $i\neq k$, then
\[
(\{v\}\times \vec{e}) \simeq (\{v\}\times \vec{e}'),
\]
where we write $\vec{e}:=\prod_{i\neq k} e_i$ and $\{v\}$ refers to the new component at party $k$.
\end{lem}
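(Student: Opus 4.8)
The plan is to reduce the statement, via transitivity of $\simeq$ and rule~\ref{unionsim} of Definition~\ref{simdef}, to a single application of the cancellation rule~\ref{decompsim}. First I would observe that it suffices to change one edge at a time: fix $j\neq k$ and show that replacing $e_j$ by some $e'_j\in E(H_j)$ while keeping all $e_i$ with $i\neq j$ unchanged yields an equivalent set; iterating over the finitely many $j\neq k$ and chaining the resulting equivalences by transitivity then gives $(\{v\}\times\vec e)\simeq(\{v\}\times\vec e')$ for arbitrary $\vec e,\vec e'$. Next, writing $\vec g$ for an element of $\prod_{i\neq k,j}e_i$, the set $\{v\}\times\prod_{i\neq k}e_i$ is the disjoint union over $\vec g$ of the slices $S_{\vec g}\defin\{v\}\times\{\vec g\}\times e_j$ (the party-$j$ coordinate ranging over $e_j$), and likewise the analogous set with $e_j$ replaced by $e'_j$ is the disjoint union of the $S'_{\vec g}$. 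By rule~\ref{unionsim} it is then enough to prove $S_{\vec g}\simeq S'_{\vec g}$ for each fixed $v$ and $\vec g$.

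For this slice equivalence I would construct two edges of $^{\min}\otimes_{i=1}^n H_i$ in the family $E_j$ (party $j$ in the role of the deciding party). Using Definition~\ref{defncs}, pick $g_k\in E(H_k)$ with $v\in g_k$ and $g_i\in E(H_i)$ containing the $i$-th component of $\vec g$ for $i\neq j,k$, and fix an arbitrary $e_0\in E(H_j)$. Define the function $f$ on joint outcomes of the parties $\neq j$ by $f(\vec u)\defin e_j$ if $\vec u=(v,\vec g)$ and $f(\vec u)\defin e_0$ otherwise, and let $f'$ be the same function with $e'_j$ in place of $e_j$. The corresponding sets
\[
E_f=\bigcup_{\vec u\in\prod_{i\neq j}g_i}\{\vec u\}\times f(\vec u),\qquad E_{f'}=\bigcup_{\vec u\in\prod_{i\neq j}g_i}\{\vec u\}\times f'(\vec u)
\]
are edges of $^{\min}\otimes_{i=1}^n H_i$ of the form~\eqref{nFRedge2}, hence $E_f\simeq E_{f'}$ by rule~\ref{edgesim} and transitivity. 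By construction $E_f=S_{\vec g}\disjcup R$ and $E_{f'}=S'_{\vec g}\disjcup R$ with the same remainder $R=\bigcup_{\vec u\neq(v,\vec g)}\{\vec u\}\times e_0$, so rule~\ref{decompsim} delivers $S_{\vec g}\simeq S'_{\vec g}$, and assembling the reductions above completes the proof.

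I do not anticipate a genuine obstacle; the whole argument is one ``cancel a common part'' move wrapped in bookkeeping. The points requiring care are the coordinate accounting when the party-$j$ component is reinserted into a tuple indexed by $\prod_{i\neq j}g_i$, and the (easy) verification that the slices $S_{\vec g}$, and separately the $S'_{\vec g}$, are pairwise disjoint as $\vec g$ ranges over $\prod_{i\neq k,j}e_i$---which holds because distinct $\vec g$ differ in some coordinate, so rule~\ref{unionsim} genuinely applies. The bipartite case $n=2$ is a good warm-up and sanity check: there $E_f$ and $E_{f'}$ lie in $E_{A\to B}$ and the construction collapses to prescribing $f,f'$ on a single edge of the deciding party.
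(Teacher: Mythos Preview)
Your proof is correct and follows essentially the same strategy as the paper: reduce by transitivity to changing a single $e_j$, then build two edges in $E_j$ that differ only on a conditional choice and cancel the common remainder via rule~\ref{decompsim}. The one difference is that you insert an extra layer: you slice $\{v\}\times\vec e$ over $\vec g\in\prod_{i\neq k,j}e_i$ and prove $S_{\vec g}\simeq S'_{\vec g}$ slice by slice, invoking rule~\ref{unionsim} to reassemble. The paper avoids this by letting party~$j$'s choice depend only on party~$k$'s outcome (measure $e'_j$ if $k$ obtained $v$, else $e_j$), so that a single application of rule~\ref{edgesim} followed by rule~\ref{decompsim} already yields $\{v\}\times\vec e\simeq\{v\}\times\vec e'$ in one shot, with common part $(e_k\setminus\{v\})\times\prod_{i\neq k,j}e_i\times e_j$. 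Your slicing is not wrong, just unnecessary; taking $g_i=e_i$ and letting $f$ depend only on the $k$-coordinate collapses your argument to the paper's.
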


Together with Proposition~\ref{equalprobprop}, this statement has the particular consequence that the marginal probabilities of party $i$ are well-defined, i.e.~independent of the measurements which are jointly conducted by the other parties, for any probabilistic model.

\begin{proof}
We assume that $e_i = e'_i$ for all $i\neq k$, with one possible exception at some party $j$; proving the claim in this case then implies the general case by transitivity of $\simeq$.

But then, since we consider $^{\min}\otimes_{i=1}^n H_i$, there is an edge representing the joint measurement in which party $j$ chooses their measurement as a function of $k$'s outcome such that $j$ measures $e'_j$ if this outcome is $v$, and measures $e_j$ otherwise. On the other hand, there is another edge in which $j$'s measurement is always $e_j$, independent of what the others' outcomes may be. The claim then follows from applying rule~\ref{edgesim} of Definition~\ref{simdef} to these two edges, and then rule~\ref{decompsim}.
\end{proof}

\begin{lem}
\label{timesv}
Let $S$ be any subset of the parties and $k\in S$. If $A\simeq B$ for $A,B\subseteq V\left(^{\min}\otimes_{i\in S\setminus\{k\}} H_i\right)$, then also
\beq
\label{timesveq}
(\{v\}\times A)\simeq (\{v\}\times B)
\eeq
with respect to $^{\min}\otimes_{i\in S} H_i$ for any $v\in H_k$. 
\end{lem}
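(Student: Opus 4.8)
The plan is to fix $k$ and $v\in V(H_k)$ and run an induction on the derivation of the equivalence $A\simeq B$. Concretely, I would introduce the relation
\[
R \defin \left\{ (A,B) \::\: A,B\subseteq V\left({}^{\min}\otimes_{i\in S\setminus\{k\}} H_i\right),\ (\{v\}\times A)\simeq(\{v\}\times B)\ \text{in}\ {}^{\min}\otimes_{i\in S} H_i \right\},
\]
where $\{v\}$ is always understood to be inserted into the $k$th position. Since $\simeq$ on ${}^{\min}\otimes_{i\in S\setminus\{k\}} H_i$ is, by definition, the \emph{smallest} equivalence relation closed under the three rules of Definition~\ref{simdef}, it suffices to verify that $R$ is an equivalence relation closed under those rules; then $\simeq\ \subseteq R$, which is precisely~\eqref{timesveq}. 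Reflexivity, symmetry and transitivity of $R$ are inherited from $\simeq$ on the larger product. Closure under rule~\ref{unionsim} and rule~\ref{decompsim} is also immediate, because $X\mapsto\{v\}\times X$ carries a partition $A = A_1\disjcup\ldots\disjcup A_n$ to the partition $\{v\}\times A = (\{v\}\times A_1)\disjcup\ldots\disjcup(\{v\}\times A_n)$, so the corresponding instances of rules~\ref{unionsim} and~\ref{decompsim} for $\simeq$ in ${}^{\min}\otimes_{i\in S} H_i$ give exactly what is required.

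The only substantial point is closure under rule~\ref{edgesim}: for any two edges $A,B$ of ${}^{\min}\otimes_{i\in S\setminus\{k\}} H_i$ we must show $(\{v\}\times A)\simeq(\{v\}\times B)$. By transitivity of $R$ it is enough to prove $\{v\}\times A\simeq\{v\}\times\vec e_0$ for one fixed Cartesian edge $\vec e_0 = \prod_{i\in S\setminus\{k\}} e_i^0$. For Cartesian edges this is precisely Lemma~\ref{marginals}, applied inside ${}^{\min}\otimes_{i\in S} H_i$ with distinguished party $k$. For a general edge $A$, which by~\eqref{nFRedge2} has the form $A = \bigcup_{\vec u}\{\vec u\}\times f(\vec u)$ with $\vec u$ ranging over $\prod_{i\in S\setminus\{k,l\}} e_i$ for some distinguished party $l\in S\setminus\{k\}$, edges $e_i\in E(H_i)$ and $f(\vec u)\in E(H_l)$, I would fix an auxiliary edge $e_l'\in E(H_l)$ and first show, for each single $\vec u$, that
\[
\{v\}\times\{\vec u\}\times f(\vec u)\ \simeq\ \{v\}\times\{\vec u\}\times e_l'
\]
in ${}^{\min}\otimes_{i\in S} H_i$ (last factor in the $l$th position). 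This follows by comparing the edge with distinguished party $l$ in which every party $i\neq l$ measures a fixed edge containing the corresponding component of $(v,\vec u)$ and whose function sends the joint outcome $(v,\vec u)$ to $f(\vec u)$ and every other outcome to $e_l'$, with the analogous edge whose function is constantly $e_l'$: these two edges are equivalent by rule~\ref{edgesim}, they share the common part $\bigsqcup_{\vec w\neq(v,\vec u)}\{\vec w\}\times e_l'$, and rule~\ref{decompsim} yields the displayed equivalence. Applying rule~\ref{unionsim} over all $\vec u$ then gives
\[
\{v\}\times A\ \simeq\ \{v\}\times\Bigl(\,\prod_{i\in S\setminus\{k,l\}} e_i \times e_l'\Bigr),
\]
and the right-hand side is $\{v\}$ times a Cartesian edge, so the Cartesian case above (via Lemma~\ref{marginals}) finishes the argument. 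Degenerate cases, namely $|S\setminus\{k\}|\le 1$ (so $A$ is already an edge of a single $H_l$, or the statement is vacuous), are trivial.

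The main obstacle is exactly this last step: a general edge of the \emph{minimal} Foulis--Randall product has the "staircase" shape~\eqref{nFRedge}, and to carry the extra factor $\{v\}$ through one must decompose it into single $l$-fibres, replace each fibre by a common Cartesian fibre using rules~\ref{edgesim} and~\ref{decompsim}, and reassemble with rule~\ref{unionsim}; the bookkeeping of the common parts in the applications of rule~\ref{decompsim} is where care is needed, after which everything reduces to Lemma~\ref{marginals}.
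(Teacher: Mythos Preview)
Your argument is correct and follows the same overall strategy as the paper: structural induction on the derivation of $A\simeq B$, with the only nontrivial case being rule~\ref{edgesim}, reduced via transitivity and Lemma~\ref{marginals} to showing $\{v\}\times A\simeq \{v\}\times(\text{Cartesian edge})$ for a general minimal-product edge $A$. The one difference is tactical: you handle this last step fibrewise, proving $\{v\}\times\{\vec u\}\times f(\vec u)\simeq \{v\}\times\{\vec u\}\times e_l'$ for each $\vec u$ separately (via rules~\ref{edgesim} and~\ref{decompsim}) and then reassembling with rule~\ref{unionsim}, whereas the paper does it in a single stroke by adjoining the common set $(e_k\setminus\{v\})\times\bigl(\prod_{i\in S\setminus\{k,l\}} e_i \times e_l'\bigr)$ to both $\{v\}\times A$ and $\{v\}\times(\text{Cartesian edge})$, turning each into an actual edge of ${}^{\min}\otimes_{i\in S} H_i$ with distinguished party $l$, and then applying rules~\ref{edgesim} and~\ref{decompsim} once. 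The paper's version is slightly more economical, but your fibrewise decomposition is equally valid and perhaps makes the bookkeeping more transparent.
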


\begin{proof}
Again, we use the same technique as in the proof of Proposition~\ref{equalprobprop}: induction over the length of the proof of the equivalence $A\simeq B$.

The base cases are when the proof of the equivalence is reflexivity, i.e.~simply $A\simeq A$, or if both $A$ and $B$ are edges. In the former case,~\eqref{timesveq} trivially follows also by reflexivity; the latter case requires a bit more work. $A$ must be of the form~\eqref{nFRedge2}, meaning that
\[
A = \bigcup_{\vec{v}'} \{\vec{v}'\} \times f(\vec{v}')
\]
where some party $j\in S\setminus\{k\}$ measures last and chooses their measurement as a function $f(\vec{v}')\in E(H_j)$ of the others' outcomes $\vec{v}'$, where $\vec{v}'$ ranges over $\prod_{i\in S\setminus\{j,k\}} e_i$. Now it is enough to show that
\beq
\label{toprove}
\left(\{v\}\times \left(\bigcup_{\vec{v}'} \{\vec{v}'\}\times f(\vec{v}') \right)\right)\simeq \left(\{v\}\times \left(\bigcup_{\vec{v}'} \{\vec{v}'\}\times \widehat{e}_j\right)\right),
\eeq
where $\widehat{e}_j$ is some fixed edge: the same can then be done for $B$, and then we can apply symmetry and transitivity of $\simeq$ to obtain the claim upon noting that the right-hand side is a product of the singleton set $\{v\}$ with a product of edges, as in Lemma~\ref{marginals}. But the equivalence~\eqref{toprove} follows from a similar trick as in the proof of Lemma~\ref{marginals}: choosing an arbitrary $e_k\in E(H_k)$ with $v\in e_k$ and adding the set
\[
(e_k\setminus\{v\}) \times \left(\bigcup_{\vec{v}'} \{\vec{v}'\} \times \widehat{e}_j\right),
\]
to both sides of the putative equivalence~\eqref{toprove} as a disjoint union proves this equivalence by rule~\ref{decompsim}, since both sides of~\eqref{toprove} then become themselves edges of $^{\min}\otimes_{i\in S} H_i$ in which party $j$ conducts their measurement as a function of the others' outcome, this time including party $k$.

The different induction steps comprising applications of rules~\ref{unionsim} and~\ref{decompsim} as well as symmetry and transitivity of $\simeq$ are again straightforward.
\end{proof}

\begin{thm}
Any edge in $^{\max}\otimes_{i=1}^n H_i$ is a virtual edge in $^{\min}\otimes_{i=1}^n H_i$.
\end{thm}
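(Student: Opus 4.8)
The plan is to prove the statement by induction on the number of parties $n$, peeling off the first move of a measurement protocol and exploiting the equivalence relation $\simeq$ of Definition~\ref{simdef} on the minimal product together with Lemma~\ref{timesv}. For the base case $n=1$, the only protocols for $\{1\}$ are of the form $\mathcal{P}=(1,e,f)$ with $f(v)=\emptyset$ for all $v\in e$, so $O(\mathcal{P})=\bigcup_{v\in e}\{v\}\times\{\ast\}$, which under the canonical identification $V\big({}^{\min}\!\otimes_{i=1}^{1}H_i\big)=V(H_1)$ is just the edge $e$ of $H_1={}^{\min}\!\otimes_{i=1}^{1}H_i$, and there is nothing to prove. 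For the inductive step, fix a protocol $\mathcal{P}=(k,e,f)$ for $\{1,\dots,n\}$ with $n>1$. For each $v\in e$ the datum $f(v)$ is a protocol for the $(n-1)$-element set $\{1,\dots,n\}\setminus\{k\}$, so by the induction hypothesis $O(f(v))$ is a virtual edge of ${}^{\min}\!\otimes_{i\neq k}H_i$.

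The one point that needs care is that the adaptive structure of $O(f(v))$ — in particular which party measures ``last'' — may depend on $v$, so the pieces $\{v\}\times O(f(v))$ do not obviously assemble into a recognizable edge of the big minimal product. The trick is to collapse all of them onto a single reference edge before reassembling. I would fix once and for all an edge $g^{\ast}$ of ${}^{\min}\!\otimes_{i\neq k}H_i$, say a product $\prod_{i\neq k}e_i$ of edges. Being a virtual edge, $O(f(v))$ is $\simeq$-equivalent (within ${}^{\min}\!\otimes_{i\neq k}H_i$) to some edge of that scenario, and by rule~\ref{edgesim} of Definition~\ref{simdef} any two edges of a scenario are equivalent; hence $O(f(v))\simeq g^{\ast}$ for every $v\in e$. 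Applying Lemma~\ref{timesv} with $S=\{1,\dots,n\}$, party $k$, and vertex $v$ lifts this to $\{v\}\times O(f(v))\simeq\{v\}\times g^{\ast}$ in ${}^{\min}\!\otimes_{i=1}^{n}H_i$; since for distinct $v\in e$ the sets $\{v\}\times O(f(v))$ (resp. $\{v\}\times g^{\ast}$) are disjoint, rule~\ref{unionsim} of Definition~\ref{simdef} gives
\[
O(\mathcal{P})=\bigcup_{v\in e}\{v\}\times O(f(v))\;\simeq\;\bigcup_{v\in e}\{v\}\times g^{\ast}\;=\;e\times g^{\ast}.
\]

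It then only remains to note that $e\times g^{\ast}$ is a genuine edge of ${}^{\min}\!\otimes_{i=1}^{n}H_i$: writing $g^{\ast}$ in the form~\eqref{nFRedge} with adaptive party $m\neq k$, fixed edges $e_i$ for $i\neq k,m$, and adapting function $h$, the set $e\times g^{\ast}$ is again of the form~\eqref{nFRedge} with adaptive party $m$ — party $k$ uses the fixed edge $e$, the parties $i\neq k,m$ use the same fixed edges $e_i$, and the adapting function is $h$ precomposed with the map that forgets the $H_k$-coordinate. Hence $O(\mathcal{P})\simeq e\times g^{\ast}\in E\big({}^{\min}\!\otimes_{i=1}^{n}H_i\big)$, so $O(\mathcal{P})$ is a virtual edge of the minimal product, which closes the induction. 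I expect the only real work to be the bookkeeping with Definition~\ref{simdef} and Lemma~\ref{timesv}; the conceptual step — replacing the $v$-dependent sub-edges by one fixed edge, legitimate precisely because rule~\ref{edgesim} makes all edges of a scenario interchangeable — is short.
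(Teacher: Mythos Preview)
Your proof is correct and follows essentially the same route as the paper: induction on the number of parties, using the induction hypothesis to conclude that each $O(f(v))$ is a virtual edge and hence $\simeq$-equivalent to a fixed reference edge, then applying Lemma~\ref{timesv} to lift this to $\{v\}\times O(f(v))\simeq\{v\}\times g^\ast$, assembling via rule~\ref{unionsim}, and finally observing that $e\times g^\ast$ is an actual edge of the minimal product. The paper frames the induction over subsets $S\subseteq\{1,\dots,n\}$ rather than over $n$, and is terser about why the assembled set is an edge, but the argument is the same.
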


\begin{proof}
We prove the slightly more general statement that for any non-empty subset of parties $S\subseteq\{1,\ldots,n\}$, any edge in $^{\max}\otimes_{i\in S} H_i$ is a virtual edge in $^{\min}\otimes_{i\in S} H_i$, using induction on $|S|$. For $|S|=1$, there is nothing to prove.

For $|S|>1$, we pick any edge in $^{\max}\otimes_{i\in S} H_i$ given as the outcome set $O(\mathcal{P})$ of a measurement protocol $\mathcal{P}=(k,e,f)$ for $S$. By the induction hypothesis, the edges associated to all subprotocols $f(v)$ for $S\setminus\{k\}$ are virtual edges in $^{\min}\otimes_{i\in S\setminus\{k\}} H_i$, so that $f(v)\simeq e'$ for any fixed edge $e'\in E\left(^{\min}\otimes_{i\in S\setminus\{k\}} H_i\right)$. By Lemma~\ref{timesv}, this implies that also
\[
(\{v\}\times f(v))\simeq (\{v\}\times e').
\]
Since this holds for all $v\in e$, we can apply $\bigcup_{v\in e}$ on both sides, and the claim then follows from rule~\ref{unionsim} and the fact that $e\times e'\in E\left(^{\min}\otimes_{i\in S} H_i\right)$.
\end{proof}

Together with Lemma~\ref{allintermed}, this immediately implies the main result of this section: 

\begin{thm}
\label{equalcomp}
\begin{enumerate}
\item The completions of $^{\min}\otimes_{i=1}^n H_i$ and $^{\max}\otimes_{i=1}^n H_i$ and of any iterated binary product $H_1\otimes\ldots\otimes H_n$ with any bracketing all coincide.
\item The sets of probabilistic models $\mathcal{C}$, $\mathcal{Q}$ and $\mathcal{G}$ for a product scenario do not depend on the particular choice of the product.
\end{enumerate}
\end{thm}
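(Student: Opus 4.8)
The plan is to deduce both parts of Theorem~\ref{equalcomp} from three ingredients already in hand: the theorem immediately above (every edge of ${}^{\max}\!\otimes_{i=1}^n H_i$ is a virtual edge of ${}^{\min}\!\otimes_{i=1}^n H_i$), the inclusion chain of Lemma~\ref{allintermed}, and Proposition~\ref{HHbar}. Beyond these, I only need two elementary facts about the completion operation $H\mapsto\overline{H}$: that it is \emph{monotone} under inclusion of edge sets, and that it is \emph{idempotent}, the latter being already recorded in the text.

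First I would establish monotonicity: if $H$ and $H'$ are scenarios on the same vertex set with $E(H)\subseteq E(H')$, then $E(\overline{H})\subseteq E(\overline{H'})$. The reason is that the relation $\simeq$ associated to $H'$ via Definition~\ref{simdef} satisfies all the closure rules relative to $E(H)$ as well --- enlarging the edge set only adds base cases for rule~\ref{edgesim} --- so $\simeq_H\subseteq\simeq_{H'}$, and hence any set of vertices equivalent in $H$ to some edge of $H$ is equivalent in $H'$ to an edge of $H'$. Idempotency, $\overline{\overline{H}}=\overline{H}$, follows from the already-noted fact that $\overline{H}$ is complete.

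For part~(1), let $P$ be any product of $H_1,\ldots,H_n$ obtained by some bracketing of iterated binary products (or ${}^{\max}\!\otimes$ itself). Lemma~\ref{allintermed} gives
\[
E\!\left({}^{\min}\!\otimes_{i=1}^n H_i\right)\subseteq E(P)\subseteq E\!\left({}^{\max}\!\otimes_{i=1}^n H_i\right)\subseteq E\!\left(\overline{{}^{\min}\!\otimes_{i=1}^n H_i}\right),
\]
where the last inclusion is the preceding theorem. Applying monotonicity of the completion to this chain, and then idempotency at the end, yields
\[
\overline{{}^{\min}\!\otimes_{i=1}^n H_i}\ \subseteq\ \overline{P}\ \subseteq\ \overline{{}^{\max}\!\otimes_{i=1}^n H_i}\ \subseteq\ \overline{\,\overline{{}^{\min}\!\otimes_{i=1}^n H_i}\,}\ =\ \overline{{}^{\min}\!\otimes_{i=1}^n H_i},
\]
so all of these completions coincide and in particular $\overline{P}$ is independent of the choice of $P$. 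Part~(2) is then immediate: by Proposition~\ref{HHbar} we have $\mathcal{C}(\overline{P})=\mathcal{C}(P)$, $\mathcal{Q}(\overline{P})=\mathcal{Q}(P)$ and $\mathcal{G}(\overline{P})=\mathcal{G}(P)$ for every scenario $P$, and since $\overline{P}$ is the same scenario for all products $P$ of $H_1,\ldots,H_n$, the sets $\mathcal{C}(P)$, $\mathcal{Q}(P)$, $\mathcal{G}(P)$ do not depend on the particular product.

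I do not expect any serious obstacle here: the genuine combinatorial content --- that maximal-product edges are virtual in the minimal product, via Lemmas~\ref{marginals} and~\ref{timesv} --- has already been discharged. The only point that warrants a careful sentence is the monotonicity of $\overline{(\,\cdot\,)}$ under edge inclusion; everything else is bookkeeping with the inclusion chain and a citation of Proposition~\ref{HHbar}.
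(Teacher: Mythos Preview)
Your proposal is correct and follows essentially the same approach as the paper, which treats the result as an immediate consequence of Lemma~\ref{allintermed} together with the preceding theorem that every edge of ${}^{\max}\!\otimes_{i=1}^n H_i$ is a virtual edge of ${}^{\min}\!\otimes_{i=1}^n H_i$. You have simply made explicit the monotonicity and idempotency of the completion operator and the appeal to Proposition~\ref{HHbar} for part~(2), all of which the paper leaves implicit.
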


Unfortunately, the analogous result does not hold for non-orthogonality graphs: 

\begin{prop}
\label{NOnoLO}
\begin{enumerate}
\item Two vertices $\vec{u} = (u_1,\ldots,u_n)$ and $\vec{v}=(v_1,\ldots,v_n)$ are orthogonal in any iterated binary product $H_1\otimes\ldots\otimes H_n$ if and only if they are orthogonal in $^{\max}\otimes_{i=1}^n H_i$ and if only if they are `locally orthogonal', i.e.~there exists an index $i$ such that $u_i \perp v_i$.
\item However, this does not apply to $^{\min}\otimes_{i=1}^n H_i$: there are scenarios $H_1$, $H_2$ and $H_3$ for which $\mathrm{NO}(^{\min}\otimes_{i=1}^3 H_i) \neq \mathrm{NO}(^{\max}\otimes_{i=1}^3 H_i)$.
\end{enumerate}
\end{prop}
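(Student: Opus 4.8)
The plan is to prove part~(i) by a short cycle of implications and part~(ii) by exhibiting explicit scenarios.

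For part~(i), observe first that by Lemma~\ref{ortproduct} together with associativity and commutativity of the strong product, the non-orthogonality graph of any iterated binary product of $H_1,\dots,H_n$ equals $\mathrm{NO}(H_1)\boxtimes\cdots\boxtimes\mathrm{NO}(H_n)$. In an iterated strong product, two distinct vertices $\vec u$ and $\vec v$ are adjacent precisely when $u_i$ and $v_i$ are adjacent-or-equal in $\mathrm{NO}(H_i)$ for every $i$ (this follows by unwinding the three cases in the definition of $\boxtimes$); hence $\vec u$ and $\vec v$ are \emph{non}-adjacent precisely when $u_i\perp v_i$ for some $i$, that is, when they are locally orthogonal. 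So orthogonality in any iterated binary product is equivalent to local orthogonality. Moreover, by Lemma~\ref{allintermed} every edge of an iterated binary product is also an edge of $^{\max}\otimes_{i=1}^n H_i$, so local orthogonality implies $\vec u\perp\vec v$ in the maximal product as well.

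It remains to show the converse for the maximal product: if $\vec u\perp\vec v$ in $^{\max}\otimes_{i=1}^n H_i$, then $\vec u$ and $\vec v$ are locally orthogonal. I would argue by induction on $n$, the case $n=1$ being the definition of $\perp$. For $n>1$, the assumption provides a measurement protocol $\mathcal{P}=(k,e,f)$ for $\{1,\dots,n\}$ with $\vec u,\vec v\in O(\mathcal{P})=\bigcup_{w\in e}\{w\}\times O(f(w))$ and $\vec u\neq\vec v$. Then the $k$-th components $u_k,v_k$ both lie in $e\in E(H_k)$. If $u_k\neq v_k$, they lie in a common edge, so $u_k\perp v_k$ and we are done. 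If $u_k=v_k$, then the restrictions of $\vec u$ and $\vec v$ to the parties other than $k$ both lie in the edge $O(f(u_k))$ of $^{\max}\otimes_{i\neq k}H_i$, and they are distinct since $\vec u\neq\vec v$; the induction hypothesis yields $u_i\perp v_i$ for some $i\neq k$. This closes the cycle of equivalences and proves~(i).

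For part~(ii), I would take $H_A$, $H_B$, $H_C$ exactly as in the proof of Proposition~\ref{nonass} (Figure~\ref{nonassfig}), so that $E(H_A)=\{\{a_1,a_2\}\}$, $E(H_B)=\{\{b_1,b_2\},\{b_2,b_3\}\}$ and $E(H_C)=\{\{c_1,c_2\},\{c_2,c_3\}\}$. Consider the distinct vertices $\vec u=(a_1,b_1,c_1)$ and $\vec v=(a_2,b_3,c_3)$ of $\otimes_{i=1}^3 H_i$. Since $a_1\perp a_2$ in $H_A$, these are locally orthogonal, hence $\vec u\perp\vec v$ in $^{\max}\otimes_{i=1}^3 H_i$ by part~(i). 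On the other hand, no edge of $^{\min}\otimes_{i=1}^3 H_i$ contains both: such an edge has the form~\eqref{nFRedge} for some distinguished party $k$, with the two parties other than $k$ performing fixed measurements. If $k$ is the $A$-party, the fixed $H_B$-measurement would have to contain both $b_1$ and $b_3$, which no edge of $H_B$ does; if $k$ is the $B$-party, the fixed $H_C$-measurement would have to contain both $c_1$ and $c_3$, impossible in $H_C$; and if $k$ is the $C$-party, the fixed $H_B$-measurement would again have to contain both $b_1$ and $b_3$, impossible. Therefore $\vec u$ and $\vec v$ are adjacent in $\mathrm{NO}(^{\min}\otimes_{i=1}^3 H_i)$ but orthogonal in $^{\max}\otimes_{i=1}^3 H_i$, so the two non-orthogonality graphs differ. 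The main obstacle is the bookkeeping in part~(i): one must keep track of the fact that adjacency in $\mathrm{NO}$ excludes the diagonal, so that the characterization of strong-product adjacency as ``adjacent-or-equal in every factor'' and the recursion over measurement protocols mesh correctly; the combinatorial check in part~(ii) is then immediate once the right pair of vertices has been singled out.
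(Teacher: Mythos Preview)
Your proof is correct and follows essentially the same route as the paper's: the induction over measurement protocols for the direction ``orthogonal in $^{\max}\otimes$ implies locally orthogonal'' and the explicit counterexample using the scenarios of Figure~\ref{nonassfig} with the pair $(a_1,b_1,c_1)$, $(a_2,b_3,c_3)$ are identical to the paper's argument. The one genuine difference is in the other direction of part~(i): the paper proves ``locally orthogonal implies orthogonal in any iterated binary product'' by a direct induction on the bracketing, explicitly constructing a protocol edge containing both vertices, whereas you shortcut this by invoking Lemma~\ref{ortproduct} and associativity of the strong product to identify $\mathrm{NO}(H_1\otimes\cdots\otimes H_n)$ with $\mathrm{NO}(H_1)\boxtimes\cdots\boxtimes\mathrm{NO}(H_n)$ and then reading off the adjacency condition; this is a legitimate and slightly more economical packaging of the same content.
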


\begin{proof}
\begin{enumerate}
\item We already know that $E\left(H_1\otimes\ldots\otimes H_n\right)\subseteq E(^{\max}\otimes_{i=1}^n H_i)$, so that $\vec{u}\perp\vec{v}$ with respect to $H_1\otimes\ldots\otimes H_n$ implies the same with respect to $^{\max}\otimes_{i=1}^n H_i$.

First, we show that $u_i\perp v_i$ implies that $\vec{u}\perp\vec{v}$ with respect to $H_1\otimes\ldots\otimes H_n$. After permuting the scenarios if necessary, our iterated binary product is of the form
\[
(H_1\otimes\ldots\otimes H_j)\otimes (H_{j+1}\otimes\ldots\otimes H_n),
\]
where each factor is again a (smaller) iterated binary product. So by induction, the claim can be reduced to the statement that local orthogonality implies orthogonality in the case of a binary Foulis--Randall product $H_A\otimes H_B$. Then if $\vec{u}=(u_A,u_B)$ and $\vec{v}=(v_A,v_B)$, $u_A\perp v_A$ means that there is an $e\in E(H_A)$ with $u_A,v_A\in e$. Now we can choose some $e_u\in E(H_B)$ and $e_v\in E(H_B)$ with $u_B\in e_u$ and $v_B\in e_v$. Then, the protocol in which Alice first measures $e$ and then Bob measures $e_u$ if Alice obtained $u_A$ and $e_v$ otherwise results in the edge
\[
\left(\{u_A\}\times e_u\right) \cup \left( e\setminus\{u_A\} \times e_v \right) \:\in E(H_A\otimes H_B).
\]
Since both $(u_A,u_B)$ and $(v_A,v_B)$ lie in this edge, we conclude that these two vertices are orthogonal in $H_A\otimes H_B$ as claimed.

Conversely, we show that if $\vec{u}\perp\vec{v}$ with respect to $^{\max}\otimes_{i=1}^n H_i$, then there is a party $i$ for which $u_i\perp v_i$. So let $\mathcal{P}$ be a measurement protocol with $\vec{u},\vec{v}\in O(\mathcal{P})$. We prove the claim by induction on $n$. For $n=1$, it is trivial. In general, we know that the protocol is of the form $\mathcal{P}=(k,e,f)$ for $k$ the party that measures first, $e\in E(H_k)$ the measurement which that party conducts, and $f$ assigning to each outcome of $e$ a subsequent protocol for the remaining parties. The assumption $\vec{u},\vec{v}\in O(\mathcal{P})$ implies that $u_k,v_k\in e$; so if $u_k\neq v_k$, then $u_k\perp v_k$ and we are done. Otherwise, if $u_k=v_k$, then both $(u_1,\ldots,\bcancel{u_k},\ldots,u_n)$ and $(v_1,\ldots,\bcancel{v_k},\ldots,v_n)$ are outcomes of the resulting subprotocol $f(u_k)$. In this case, the claim follows from the induction hypothesis.
\item We take the three scenarios $H_A$, $H_B$ and $H_C$ from Figure~\ref{nonassfig}. Then the two triples $(a_1,b_1,c_1)$ and $(a_2,b_3,c_3)$ are locally orthogonal and therefore orthogonal in the maximal product. On the other hand, for these two triples to belong to one common edge in the minimal tensor product, it would have to be the case that for at least two of the parties, there is an edge which contains the corresponding component of each triple. Since this fails for both parties $B$ and $C$, this is not the case, and therefore $(a_1,b_1,c_1)\not\perp (a_2,b_3,c_3)$ in the minimal product. \qedhere
\end{enumerate}
\end{proof}

In the sense of these results, the Foulis--Randall product is associative: while it is not associative `on the nose', as exemplified in Proposition~\ref{nonass}, it is associative for all practical purposes in the sense that the resulting product scenarios are observationally equivalent. In particular, it makes sense to speak of \textit{the} completion of the $n$-fold product scenario, without specifying \textit{which} product one refers to. In fact, we suspect that this completion can be computed directly in terms of the completions of the individual scenarios:

\begin{conj}
\[
\overline{H_1\otimes\ldots\otimes H_n} = {}^{\max}\otimes_{i=1}^n \bar{H}_i.
\]
\end{conj}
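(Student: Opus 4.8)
The plan is to prove the conjecture by two inclusions, using Theorem~\ref{equalcomp} throughout so that $\overline{H_1\otimes\ldots\otimes H_n}$ may be computed as $\overline{^{\min}\otimes_{i=1}^n H_i}$ (or via any other bracketing).

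\emph{The inclusion $^{\max}\otimes_{i=1}^n \bar H_i \subseteq \overline{H_1\otimes\ldots\otimes H_n}$.} I would prove by induction on $n$ that every edge of $^{\max}\otimes_{i=1}^n \bar H_i$ is a virtual edge of $^{\min}\otimes_{i=1}^n H_i$. Such an edge is the outcome set $O(\mathcal{P})$ of a measurement protocol $\mathcal{P}=(k,e,f)$ whose first measurement $e$ and (recursively) all measurements occurring in each $f(v)$ are \emph{virtual} edges of the respective factors. By the inductive hypothesis each $O(f(v))$ is a virtual edge of $^{\min}\otimes_{i\neq k} H_i$, hence $O(f(v))\simeq e'$ for any fixed $e'\in E(^{\min}\otimes_{i\neq k}H_i)$; Lemma~\ref{timesv} then gives $\{v\}\times O(f(v))\simeq\{v\}\times e'$ in $^{\min}\otimes_{i=1}^n H_i$, and rule~\ref{simdef}\ref{unionsim} applied to the partition $O(\mathcal{P})=\bigcup_{v\in e}\{v\}\times O(f(v))$ yields $O(\mathcal{P})\simeq e\times e'$. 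Since $e$ is a virtual edge of $H_k$, say $e\simeq\hat e\in E(H_k)$, it remains to record the auxiliary fact that $A\simeq B$ in $H_k$ implies $A\times e'\simeq B\times e'$ in $^{\min}\otimes_{i=1}^n H_i$; this is proved exactly like Lemma~\ref{timesv}, by induction on the derivation of $A\simeq B$, the base case being that products of edges are edges (rule~\ref{simdef}\ref{edgesim}) and the induction steps being that $({-})\times e'$ distributes over disjoint unions, so rules~\ref{simdef}\ref{unionsim} and~\ref{simdef}\ref{decompsim} lift. Then $O(\mathcal{P})\simeq\hat e\times e'\in E(^{\min}\otimes_{i=1}^n H_i)$, so $O(\mathcal{P})$ is a virtual edge, as required.

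\emph{Reducing the reverse inclusion to a completeness statement.} From the inclusion just established and the fact that completion is idempotent and monotone ($\overline{\overline{H}}=\bar H$, and $E(H)\subseteq E(H')\Rightarrow E(\bar H)\subseteq E(\bar{H'})$, the latter because the rules defining $\simeq$ use $E(H)$ only positively), one gets $\overline{^{\max}\otimes_{i=1}^n\bar H_i}\subseteq\overline{\overline{^{\min}\otimes_{i=1}^n H_i}}=\overline{^{\min}\otimes_{i=1}^n H_i}$. Conversely, every protocol for the $H_i$'s is a protocol for the $\bar H_i$'s, so $E(^{\max}\otimes_i H_i)\subseteq E(^{\max}\otimes_i\bar H_i)$; applying completion and using Theorem~\ref{equalcomp} gives $\overline{^{\min}\otimes_{i=1}^n H_i}=\overline{^{\max}\otimes_{i=1}^n H_i}\subseteq\overline{^{\max}\otimes_{i=1}^n\bar H_i}$. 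Hence $\overline{^{\max}\otimes_{i=1}^n\bar H_i}=\overline{H_1\otimes\ldots\otimes H_n}$, and the conjecture becomes equivalent to the assertion that $^{\max}\otimes_{i=1}^n\bar H_i$ is \emph{complete}: the maximal Foulis--Randall product of complete scenarios is complete.

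\emph{The main obstacle.} Proving that $^{\max}\otimes_{i=1}^n\bar H_i$ is complete is the hard part, and is presumably why this is only conjectured: a virtual edge of the product is defined through the equivalence relation $\simeq$ \emph{of the product scenario}, and a priori $\simeq$ might identify a set of outcomes with an edge without that set arising as the outcome set of any measurement protocol built from edges of the factors. The attack I would try is to put derivations of $A\simeq e$ in $^{\min}\otimes_{i=1}^n H_i$ into a normal form: push all uses of rule~\ref{simdef}\ref{decompsim} down toward the leaves and all uses of~\ref{simdef}\ref{unionsim} up toward the root, so that $A$ is exhibited as obtained from the edge $e$ by (i) replacing sub-measurements by perspective ones inside a fixed measurement context and (ii) branching the continuation according to the outcome of an already-performed partial measurement --- precisely the two moves encoded in the tree structure of a (virtual) measurement protocol. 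Matching a so-normalized derivation against a protocol tree would show that every virtual edge of the product is the outcome set of a protocol built from virtual edges of the factors, i.e.\ lies in $E(^{\max}\otimes_{i=1}^n\bar H_i)$, which together with the reduction above closes the argument. The delicate bookkeeping --- identifying which partial outcome sets appearing along a derivation correspond to nodes of a protocol tree, and showing the normalization terminates --- is where I expect the genuine difficulty, and possible surprises, to lie.
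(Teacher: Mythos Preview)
The paper does not prove this statement: it is explicitly labeled a \emph{Conjecture} and left open. So there is no ``paper's own proof'' to compare against.

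Your proposal is a careful and, as far as I can see, correct reduction of the conjecture to a cleaner statement. The first inclusion $E(^{\max}\otimes_i\bar H_i)\subseteq E(\overline{^{\min}\otimes_i H_i})$ is essentially a strengthening of the paper's own Theorem~C.0.8 to allow virtual edges in the factors, and your argument---using Lemma~\ref{timesv} for the protocol continuation and a straightforward analogue of it (``$A\simeq B$ in $H_k$ implies $A\times e'\simeq B\times e'$'') for the initial virtual edge---goes through. Your second step, reducing the reverse inclusion to the completeness of $^{\max}\otimes_i\bar H_i$, is also sound: monotonicity and idempotence of the bar operator together with Theorem~\ref{equalcomp} give exactly the chain of equalities you write down. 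So you have established that the conjecture is equivalent to: \emph{the maximal Foulis--Randall product of complete scenarios is complete}. That is a genuine contribution beyond what the paper records.

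Where you stop is where the paper stops too. Your proposed attack on the hard part---normalizing derivations of $\simeq$ so that they match the shape of a protocol tree---is a natural idea, but as you yourself note, the termination and bookkeeping are not worked out, and it is not clear that every $\simeq$-derivation in the product scenario can be confined to moves that respect the product structure. In particular, rule~\ref{decompsim} can subtract an \emph{arbitrary} common subset $C$, and nothing forces $C$ to factor nicely across parties; this is where a naive normalization procedure is likely to get stuck. So the proposal is not a proof, but it is an honest and well-organized partial result that isolates the genuine difficulty.
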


\newpage
\section{\textbf{Relation to the observable-based approach}}
\label{reltosheaf}

The observable-based approach to quantum contextuality and nonlocality has first been studied explicitly and in complete generality by Abramsky and Brandenburger~\cite{AB}. It was used much earlier in a different mathematical context by Vorob'ev~\cite{Vorob}. See also~\cite{LSW,FC}, where similar definitions have been used. In this section, our goal is to sketch how the observable-based approach can be embedded into our formalism. A converse construction should be possible upon augmenting the observable-based approach by additional constraints as in~\cite[Sec.~7]{AB}. In this sense, the two formalisms are essentially equivalent. We believe that both approaches have their merits; for example, in both cases, the relation to sophisticated mathematical methods can be exploited. In the observable-based approach, this has been done in~\cite{AMS}; for the hypergraph-based approach, this has been started in~\cite{CSW} and further developed in this paper.

\subsection{Definitions for the observable-based approach} The following definition blends the terminology of~\cite{AB} with the one of~\cite{FC}; the actual content is the same regardless.

\begin{defn}
A \emph{marginal scenario} $(X,O,\mathcal{M})$ is a finite set $X$, the elements of which we call \emph{observables}, together with a finite set $O$ of outcomes and a \emph{measurement cover} $\mathcal{M}$, which is a family of subsets $\mathcal{M}\subseteq 2^X$ such that
\begin{enumerate}
\item every element of $X$ occurs in some $C$, i.e.~$\bigcup_{C\in\mathcal{M}} C = X$.
\item $\mathcal{M}$ is an anti-chain: for any $C,C'\in\mathcal{M}$, if $C\subseteq C'$, then $C=C'$.
\end{enumerate}
The $C\in\mathcal{M}$ are called \emph{measurement contexts}.
\end{defn}

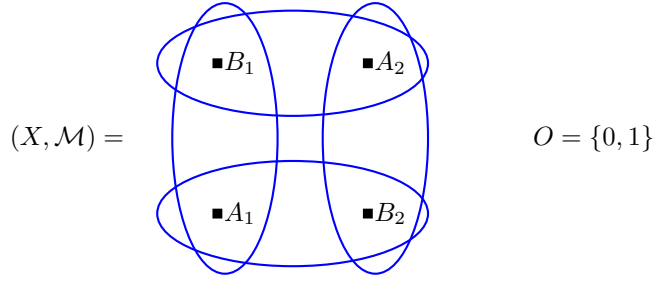
\begin{figure}
\begin{tikzpicture}[scale=1]
\node at (-2,1) {$(X,\mathcal{M})=$};
\node[draw,shape=rectangle,fill,scale=.5] (a1) at (0,0) {} ;
\node[draw,shape=rectangle,fill,scale=.5] (b1) at (0,2) {} ;
\node[draw,shape=rectangle,fill,scale=.5] (b2) at (2,0) {} ;
\node[draw,shape=rectangle,fill,scale=.5] (a2) at (2,2) {} ;
\node[right of=a1,node distance=3mm] {$A_1$} ;
\node[right of=b1,node distance=3mm] {$B_1$} ;
\node[right of=a2,node distance=3mm] {$A_2$} ;
\node[right of=b2,node distance=3mm] {$B_2$} ;
\foreach \x in {0,2} \draw[thick,blue] (\x+.1,1) ellipse (.7cm and 1.8cm) ;
\foreach \y in {0,2} \draw[thick,blue] (1,\y) ellipse (1.8cm and .7cm) ;
\node at (5,1) {$O = \{0,1\}$} ;
\end{tikzpicture}
\caption{The CHSH scenario as a marginal scenario. We now draw the vertices as squares in order to indicate that the interpretation differs from the one of all other illustrations of hypergraphs in this paper.}
\label{obsCHSH}
\end{figure}

From the mathematical point of view, the maximal sets of compatible observables are a hypergraph precisely as in Definition~\ref{defncs}, but the physical interpretation is quite different. A subset $C\subseteq X$ with $C\in\mathcal{M}$ is to be thought of as a maximal set of jointly measurable observables. See Figure~\ref{obsCHSH} for an example, which shows the CHSH scenario as a marginal scenario with observables $A_1,A_2,B_1,B_2$ where the four pairs
$$
\{A_1,B_1\},\quad \{A_1,B_2\},\quad \{A_2,B_1\},\quad \{A_2,B_2\}
$$
are jointly measurable, but no other pairs or triples of observables are jointly measurable. In particular, these four pairs also are the maximal sets of jointly measurable observables and thereby form the measurement cover
\[
\mathcal{M} = \{\{A_1,B_1\},\{A_1,B_2\},\{A_2,B_1\},\{A_2,B_2\}\}.
\]

As is common practice with many other mathematical structures, we denote a marginal scenario $(X,O,\mathcal{M})$ simply by $X$, at least when $O$ and $\mathcal{M}$ are clear from the context.

As noted in~\cite{AB}, it is not a substantial restriction to assume that all observables take values in the same set of outcomes $O$. We assume this mainly for convenience of notation and note that all of our considerations and results can easily be extended to the general case in which each observable $A\in X$ takes values in an associated finite set of outcomes $O_A$ depending on $A$.

In the following, we would like to associate a contextuality scenario $H[X]$ to a marginal scenario $X$. In order to do so, we need to consider measurements of compatible observables which are conducted in a certain temporal order. Assume that we have already measured some observable $A\in X$; then is it possible to define a marginal scenario which encodes all the possibilities for subsequent measurements compatible with $A$? The following notion achieves this:

\begin{defn}
Given an observable $A\in X$, the \emph{induced marginal scenario} $X\{A\}$ is the marginal scenario having observables
$$
X\{A\} \defin \left\{ A' \in X \:\big|\: A'\neq A,\: \exists C\in\mathcal{M} \textrm{ s.t. } \{A,A'\}\subseteq C \right\} 
$$
and measurement contexts all the $C\setminus\{A\}$ for those $C\in\mathcal{M}$ with $A\in C$.
\end{defn}

The idea in considering only subsequent measurements $A'$ that are compatible with $A$ is that measuring one of these refines the information obtained via the first measurement. By definition, any $X\{A\}$ has a smaller number of observables than the original $X$. In particular, iterating this construction by taking an induced marginal scenario of an induced marginal scenario etc., one eventually ends up with an empty scenario, and the process terminates.

This termination property allows us to make the following recursive definition:

\begin{defn}
A \emph{measurement protocol} $T$ on a marginal scenario $X$ is
\begin{enumerate}
\item $T=\emptyset$ if $X=\emptyset$;
\item otherwise, $T=(A,f)$, where $A\in X$ is an observable and $f:O\to \mathrm{MP}(X\{A\})$ is a function, where $ \mathrm{MP}(X\{A\})$ is the set of all measurement protocols on the scenario $X\{A\}$.
\end{enumerate}
\end{defn}

Intuitively, a measurement protocol consists of a choice of observable and an assignment of a new measurement protocol to each outcome of the observable, where the new measurement protocol lives on the induced marginal scenario. 

Upon unraveling the recursive structure of this definition, one finds that a measurement protocol specifies sequences of measurements which can be applied to the system, where the choices of subsequent measurements $f$ are allowed to depend on the outcomes of all earlier ones. These measurement sequences have the additional property that all measurements in a sequence are compatible and that no measurement can occur twice in the same sequence. Due to the allowed dependence of later measurements on earlier outcomes, a measurement protocol has a tree-like structure\footnote{The term `\href{https://en.wikipedia.org/wiki/Decision_tree}{decision tree}' comes to mind.}, and we denote a measurement protocol by the letter `$T$' in order to indicate this. Note that every measurement sequence is automatically maximal in the sense that it contains all observables of a certain measurement context, since the measurement protocol can end only at a stage at which the induced marginal scenario is empty.

The set of outcomes $\mathrm{Out}(T)$ of a measurement protocol $T$ is also defined recursively: if $T=\emptyset$, then there is only a single outcome which we denote by `$\ast$', so that $\mathrm{Out}(\emptyset) = \{\ast\}$. Otherwise, we have $T = (A,f)$, and then we put
$$
\mathrm{Out}(T) \defin \left\{\, (A,a,\alpha') \::\: a\in O,\: \alpha'\in \mathrm{Out}(f(a)) \,\right\} .
$$
In words: an outcome of a measurement protocol $T=(A,f)$, where $A$ is the initial measurement and $f:O\to\mathrm{MP}(X\{A\})$ assigns to each of its outcomes a subsequent measurement protocol, is a triple $\alpha=(A,a,\alpha')$ consisting of a record remembering the initial observable $A$, an outcome $a$ of $A$, and an outcome $\alpha'$ of the subsequent protocol $f(a)$. Upon unraveling this recursive definition, one finds that an outcome $\alpha\in\mathrm{Out}(T)$ corresponds to a measurement sequence in $T$ together with an associated sequence of outcomes for these measurements, such that applying the protocol to any outcome in the sequence results in the following measurement (except for the last outcome in the sequence, where the protocol ends).

One can use recursion as follows to assign to every protocol outcome $\alpha\in\mathrm{Out}(T)$ a unique measurement context $C(\alpha)\in\mathcal{M}$ in which the outcome lives: if $\alpha=(A,a,\alpha')$ as above, and $\alpha'$ lives in the context $C(\alpha')$ on $X\{A\}$, then $\alpha$ lives in the context $C(\alpha):=\{A\}\cup C(\alpha')$ on $X$. Furthermore, one can associate in the obvious way to every such $\alpha$ an assignment of outcomes $s(\alpha)\in O^{C(\alpha)}$. In total, the protocol outcome $\alpha$ has associated to it a context $C(\alpha)$ together with an assignment of outcomes $s(\alpha)\in O^{C(\alpha)}$ in this context. Constructing $C(\alpha)$ and $s(\alpha)$ remembers which observables were measured during the protocol realization and which outcomes were obtained, but it forgets the temporal order in which these measurements were conducted.

\begin{defn}
\label{HXcorr}
The contextuality scenario $H[X]$ associated to a marginal scenario $X$ has vertices
$$
V(H[X]) \defin \left\{ \: (C,s) \::\: C\in\mathcal{M},\: s\in O^C  \: \right\} 
$$
and every measurement protocol $T$ on $X$ defines an edge given as the set of vertices
$$
e_T \defin \left\{ \: (C(\alpha),s(\alpha)) \::\: \alpha\in\mathrm{Out}(T) \: \right\},
$$
so that $E(H[X]) \defin \{\: e_T \: :\: T\in \mathrm{MP}(X) \: \}$.
\end{defn}

In particular, every vertex of $H[X]$ belongs to exactly one measurement context $C$. This parallels and generalizes the situation for Bell scenarios $B_{n,k,m}$ from Section~\ref{bellscen}, since a context in a Bell scenario is a choice of setting for every party.

We write $P$ for an \emph{empirical model} on $X$~\cite{AB}. This means that for each measurement context $C\in\mathcal{M}$, we have a probability distribution $P_C$ over $O^C$, such that the \emph{sheaf condition}\footnote{Although used as such in~\cite{AB} and all follow-up works, the term `sheaf condition' is actually a misnomer, since~\eqref{sheafcond} is a \emph{presheaf} condition, which constitutes a significant difference~\cite{MM}.} holds:
\beq
\label{sheafcond}
P_{C|C\cap C'} = P_{C'|C\cap C'} \quad \forall C,C'\in\mathcal{M} ,
\eeq
where $P_{C|C\cap C'}$ stands for the marginal distribution of $P_C$ associated to the observables in $C\cap C'$. For an assignment of outcomes $s\in O^C$, the probability $P_C(s)$ is to be thought of as the probability of obtaining the joint outcome $s$ when jointly measuring all observables in $C$. The sheaf condition is a generalization of the no-signaling condition.

\subsection{Correspondence to our approach}

To an empirical model $P$ we associate a probabilistic model on the contextuality scenario $H[X]$ by setting, for each $C\in\mathcal{M}$ and each $s\in O^C$,
\beq
\label{pP}
p(C,s) \defin P_C(s) .
\eeq
It needs to be verified that this actually is a probabilistic model, i.e.~that these probabilities are suitably normalized for every edge in $E[X]$.

Conversely, given a probabilistic model $p$ on $H[X]$, we can read~(\ref{pP}) the other way around in order to define an empirical model $P$ on $X$ in terms of $p$.

\begin{thm}
This defines a linear bijection between empirical models on $X$ and probabilistic models on $H[X]$.
\end{thm}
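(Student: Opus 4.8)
The plan is to show that the assignment $P \mapsto p$ defined by~\eqref{pP}, together with its obvious inverse, establishes the claimed linear bijection. The linearity is immediate from the formula $p(C,s) := P_C(s)$: both the forward map and its candidate inverse are given by reshuffling coordinates, hence affine, and they manifestly preserve convex combinations. So the real content is to check that the two maps are well-defined and mutually inverse, and the crux of that is the equivalence between the sheaf condition~\eqref{sheafcond} for $P$ and the normalization conditions~\eqref{probnorm} for $p$ on all edges $e_T$ of $H[X]$.

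First I would verify that if $P$ is an empirical model, then $p$ as in~\eqref{pP} is a probabilistic model on $H[X]$. Fix a measurement protocol $T$ on $X$ and recall that $e_T = \{(C(\alpha),s(\alpha)) : \alpha \in \mathrm{Out}(T)\}$. One shows $\sum_{\alpha \in \mathrm{Out}(T)} p(C(\alpha),s(\alpha)) = 1$ by induction on the number of observables of $X$ (equivalently, on the depth of the protocol tree). The base case $X = \emptyset$ is trivial since $T = \emptyset$ has a single outcome $\ast$ with $p$-value $1$ (the empty context). For the inductive step, write $T = (A,f)$; partitioning $\mathrm{Out}(T)$ according to the outcome $a \in O$ of the initial observable $A$, one gets
\[
\sum_{\alpha \in \mathrm{Out}(T)} p(C(\alpha),s(\alpha)) = \sum_{a \in O} \;\sum_{\alpha' \in \mathrm{Out}(f(a))} p\bigl(\{A\} \cup C(\alpha'),\, (a \mapsto a)\cup s(\alpha')\bigr).
\]
Here the key point is that, by the sheaf condition, the marginal of $P_{C}$ onto $A$ is the same for every context $C \ni A$, so there is a well-defined distribution $P_{\{A\}}$ over $O$; each inner sum over $\alpha'$ equals $P_{\{A\}}(a)$ by applying the inductive hypothesis to the empirical model induced on $X\{A\}$ by conditioning on the outcome $a$ of $A$ (this induced model again satisfies the sheaf condition, which must be checked — it follows from compatibility of conditioning with marginalization). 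Summing $P_{\{A\}}(a)$ over $a \in O$ gives $1$.

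Next I would go in the reverse direction: given a probabilistic model $p$ on $H[X]$, define $P_C(s) := p(C,s)$ and check that $P$ is an empirical model, i.e.\ each $P_C$ is a probability distribution on $O^C$ and the sheaf condition~\eqref{sheafcond} holds. Nonnegativity is clear. Normalization of $P_C$ follows by choosing, for the fixed context $C$, any measurement protocol $T$ that measures the observables of $C$ in some fixed order with no branching (all $f$'s constant) — then $e_T$ is exactly $\{(C,s) : s \in O^C\}$, so $\sum_{s} P_C(s) = \sum_{s} p(C,s) = 1$. For the sheaf condition, given $C, C' \in \mathcal{M}$ with $C \cap C' = D$, I would exhibit two protocols that first measure the observables of $D$ (in a common fixed order) and then branch into the remaining observables of $C$, respectively $C'$; comparing the two normalization equations, after summing out the non-$D$ observables, yields $P_{C|D} = P_{C'|D}$. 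This is the analogue of the argument in the proof of Proposition~\ref{Bnosig} deriving no-signaling from normalization of certain joint measurements, and in fact it is cleanest to phrase it through the virtual-edge machinery of Appendix~\ref{multiproducts}: the set $\{(C,s):s\in O^C\}$ restricted to fixing the $D$-part is a virtual edge, and Proposition~\ref{equalprobprop} gives the desired equality of marginals directly. Finally, the two maps are inverse to each other essentially by construction, since each is literally the identity on the shared coordinate family $\{(C,s)\}_{C\in\mathcal{M}, s\in O^C}$; one only needs that every vertex of $H[X]$ is of the form $(C,s)$ for a \emph{unique} $C$, which is noted right after Definition~\ref{HXcorr}.

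The main obstacle I anticipate is bookkeeping: making the induction on protocols rigorous requires being careful about what the "induced empirical model" on $X\{A\}$ after observing outcome $a$ of $A$ is, and checking that it indeed satisfies the sheaf condition (so that the inductive hypothesis applies). Conditioning can make probabilities ill-defined when $P_{\{A\}}(a) = 0$, so one should either restrict attention to outcomes of positive probability (the $p$-value of any vertex below such a branch is then $0$ anyway, so it contributes nothing to the sum) or phrase the whole computation additively without dividing. The reverse direction, establishing the sheaf condition, is conceptually the subtler half but is handled uniformly once one observes that the relevant subsets of vertices are virtual edges, so that Proposition~\ref{equalprobprop} applies; spelling out exactly which protocols produce these virtual edges is the one place where some genuine combinatorial care is needed.
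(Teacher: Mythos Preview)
Your forward direction is essentially the paper's argument: induction on $|X|$ via a post-measurement (conditioned) empirical model on $X\{A\}$, with the $P_{\{A\}}(a)=0$ case handled by noting those branches contribute zero. Good.

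The reverse direction has a real imprecision. Your first description---``exhibit two protocols that first measure the observables of $D$ \ldots\ and then branch into the remaining observables of $C$, respectively $C'$; comparing the two normalization equations''---does not work as written. If each of the two protocols simply measures $D$ and then the rest of $C$ (resp.\ $C'$), the resulting edges are just $\{(C,s):s\in O^C\}$ and $\{(C',s'):s'\in O^{C'}\}$, and comparing their normalizations gives $1=1$. The actual mechanism, as in the paper and as in the proof of Proposition~\ref{Bnosig} you cite, uses a \emph{single} branching protocol: measure $D$ first, and if the joint outcome equals a fixed $s_0$ continue into $C\setminus D$, otherwise continue into $C'\setminus D$. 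Comparing the normalization for this edge with the normalization for the non-branching $C'$-protocol and cancelling the common $s\neq s_0$ part yields $\sum_{t} p(C,s_0\cup t)=\sum_{t'} p(C',s_0\cup t')$, which is exactly $P_{C|D}(s_0)=P_{C'|D}(s_0)$.

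Your pivot to Appendix~\ref{multiproducts} is the right instinct but the terminology is off: the set $\{(C,s):s|_D=s_0\}$ is \emph{not} a virtual edge (its total probability is $P_{C|D}(s_0)$, not $1$). What you want is the equivalence $\{(C,s):s|_D=s_0\}\simeq\{(C',s'):s'|_D=s_0\}$ in the sense of Definition~\ref{simdef}, which follows from rule~\ref{decompsim} applied to the branching edge above and the $C'$-only edge; then Proposition~\ref{equalprobprop} gives the equality of marginals. Once phrased this way, your approach and the paper's coincide.
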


This bijective correspondence generalizes Proposition~\ref{Bnosig}: every Bell scenario is a marginal scenario in the obvious way~\cite{AB}, and it can be checked that applying Definition~\ref{HXcorr} in this case recovers the contextuality scenarios $B_{n,k,m}$ of Section~\ref{bellscen}.

\begin{proof}
We first verify that~(\ref{pP}) turns an empirical model $P$ into a probabilistic model $p$. It needs to be shown that
\beq
\label{pPnorm}
\sum_{\alpha\in\mathrm{Out}(T)} P_{C(\alpha)}(s(\alpha)) = 1
\eeq
for any measurement protocol $T$. In order to prove this, we introduce the notion of \emph{post-measurement} empirical model. Suppose that a measurement of an observable $A\in X$ has resulted in an outcome $a\in O$. Then all subsequent measurements compatible with $A$ live in the scenario $X\{A\}$. For such subsequent measurements, we expect the posterior probabilities
$$
P^{\mathrm{post}(a)}_{C\setminus\{A\}}(s) = \frac{P_C(s\cup\{a\})}{P_{\{A\}}(a)}.
$$
for all outcome assignments $s\in O^{C\setminus\{A\}}$, and we write $s\cup\{a\}$ for the outcome assignment in $O^C$ which maps $A\mapsto a$ and behaves like $s$ otherwise. It is straightforward to check that this defines an empirical model on $X\{A\}$.

We now use induction on the size of $X$ in order to prove~(\ref{pPnorm}). The base case is $X=\emptyset$, in which there is nothing to prove. For the induction step, we decompose $T = (A,f)$ and use the induction hypothesis on each $P^{\mathrm{post}(a)}$ for those $a\in O$ with $P_{\{A\}}(a) \neq 0$. Then
$$
\sum_{\alpha\in\mathrm{Out}(T)} P_{C(\alpha)}(s(\alpha)) = \sum_{a} \sum_{\alpha'\in\mathrm{Out}(f(a))} P_{\{A\}}(a)\, P^{\mathrm{post}(a)}_{C(\alpha')}(s(\alpha')) = \sum_{a} P_{\{A\}}(a) = 1,
$$
where the second to last step uses the induction hypothesis.

Conversely, we need to prove that if $p$ is a probabilistic model on $H[X]$, then the associated $P$ is an empirical model, i.e.~that it satisfies~(\ref{sheafcond}). In the case that $C\cap C' = \emptyset$, this follows from the normalization of probability $\sum_{s\in O^C} P_C(s)$, which in turn is an easy consequence of the normalization relation $\sum_{(C',s)\in e_T} p(C',s)=1$, where the measurement protocol $T$ consists of measuring all observables in the original context $C$ in a fixed but arbitrary order.

Now consider the case $C\cap C'\neq \emptyset$, and let $s_0\in O^{C\cap C'}$ be an arbitrary assignment of outcomes to the observables in $C\cap C'$. Then we consider a measurement protocol $T$ given by conducting the measurements in $C\cap C'$ in an arbitrary order, and then conducting the measurements $C\setminus C'$ if the joint outcome was $s_0$, and conducting the measurements $C'\setminus C$ otherwise. Then the normalization equation associated to this measurement protocol reads
$$
\sum_{t\in O^{C\setminus C'}} p(C,s_0\cup t) + \sum_{s_0\neq s\in O^{C\cap C'}} \sum_{t\in O^{C'\setminus C}} p(C',s\cup t) = 1 .
$$
Comparing this with the normalization equation associated to the measurement protocol which simply measures all observables in $C'$ and outputs their joint outcome,
$$
\sum_{s\in O^{C\cap C'}} \sum_{t\in O^{C'\setminus C}} p(C',s\cup t) = 1 ,
$$
gives, upon splitting the latter equation into an $s=s_0$ part and an $s\neq s_0$ part,
$$
\sum_{t\in O^{C\setminus C'}} p(C,s_0\cup t) = \sum_{t\in O^{C'\setminus C}} p(C',s_0\cup t) ,
$$
which is the equation that was to be shown.
\end{proof}

It is instructive to analyze how this generalizes the proof of Proposition~\ref{Bnosig}.

There are analogous correspondence theorems for quantum models and classical models. Since these are perfectly analogous both in the statement and in the proof, we do not discuss them further.

\newpage

\clearpage

\bibliography{AFLS}

\end{document}